\numberwithin{equation}{section}
\theoremstyle{plain}
\newtheorem{thm}{Theorem}
\newtheorem{prop}{Proposition}[section]
\newtheorem{lem}[prop]{Lemma}
\newtheorem*{thm*}{Theorem}
\newtheorem{cor}{Corollary}
\newtheorem*{conj*}{Conjecture}
\newtheorem{conj}{Conjecture}
\theoremstyle{definition}
\newtheorem{defn}[prop]{Definition}
\newtheorem{ass}[prop]{Assumption}
\theoremstyle{remark}
\newtheorem{rk}[prop]{Remark}
\crefname{thm}{Theorem}{Theorems} 
\crefname{lem}{Lemma}{Lemmas} 
\crefname{conj}{Conjecture}{Conjectures}
\crefname{prop}{Proposition}{Propositions}
\renewcommand{\Bbb}{\mathbb}
\newcommand{\ve}{\varepsilon}
\newcommand{\les}{\lesssim}
\newcommand{\sign}{\operatorname{sign}}
\newcommand{\id}{\operatorname{id}}
\newcommand{\tr}{\operatorname{tr}}
\newcommand{\Ric}{\mathrm{Ric}}
\renewcommand{\d}{\mathrm{d}}
\newcommand{\loc}{\mathrm{loc}}
\newcommand{\T}{\mathbf T}
\newcommand{\spt}{\operatorname{spt}}
\newcommand{\R}{\Bbb R}
\newcommand{\Div}{\operatorname{div}}
\renewcommand{\b}{\mathrm{b}}
\renewcommand{\paragraph}{%
  \@startsection{paragraph}{4}%
  {\z@}{1.25ex \@plus 1ex \@minus .2ex}{-1em}%
  {\normalfont\normalsize\bfseries}%
}
\begin{document}

\title{Extremal black hole formation as a critical phenomenon}

\author[1]{Christoph~Kehle\thanks{christoph.kehle@eth-its.ethz.ch}}
\author[2]{Ryan Unger\thanks{runger@math.princeton.edu}}
\affil[1]{\small  Institute~for~Theoretical~Studies \& Department of Mathematics,~ETH~Zürich,

Clausiusstrasse~47,~8092~Zürich,~Switzerland \vskip.1pc \ 
}
\affil[2]{\small  Department of Mathematics, Princeton University, 
	Washington~Road,~Princeton~NJ~08544,~United~States~of~America \vskip.1pc \  
	}

 \date{February 15, 2024}
\maketitle

\begin{abstract}
In this paper, we prove that extremal black holes arise on the threshold of gravitational collapse. More precisely, we construct smooth one-parameter families of smooth, spherically symmetric solutions to the Einstein--Maxwell--Vlasov system which interpolate between dispersion and collapse and for which the critical solution is an extremal black hole. Physically, these solutions can be understood as beams of gravitationally self-interacting collisionless charged particles fired into Minkowski space from past infinity. Depending on the precise value of the parameter, we show that the Vlasov matter either disperses due to the combined effects of angular momentum and electromagnetic repulsion, or undergoes gravitational collapse. At the critical value of the parameter, an extremal Reissner--Nordstr\"om black hole is formed. No naked singularities occur as the extremal threshold is crossed. We call this critical phenomenon \emph{extremal critical collapse} and the present work constitutes the first rigorous result on the black hole formation threshold in general relativity. 
\end{abstract}

\thispagestyle{empty}
\newpage 
\tableofcontents
\thispagestyle{empty}
\newpage

\section{Introduction}

One of the most spectacular predictions of general relativity is the existence and formation of black holes. Solutions of the Einstein field equations,
\begin{equation}
    \Ric(g)-\tfrac 12 R(g)g=2\mathbf T,\label{eq:EFE}
\end{equation}
 can undergo \emph{gravitational collapse} to form a black hole dynamically, starting from regular, one-ended Cauchy data. Building on earlier work of Lema\^itre \cite{Lem}, Oppenheimer and Snyder \cite{OS39} produced the first mathematical example of gravitational collapse with the collapse of a homogeneous dust cloud. In contrast, for reasonable matter models, solutions with small initial data \emph{disperse} without a black hole forming.
It is a fundamental problem in classical general relativity to understand how these different classes of spacetimes---collapsing and dispersing---fit together in the moduli space of solutions. The interface between collapse and dispersion is known as the \emph{black hole formation threshold} and families of solutions crossing this threshold are said to exhibit \emph{critical collapse}. Spacetimes lying on the threshold are called \emph{critical solutions}.

Critical collapse has been extensively studied numerically, starting with the influential work of Choptuik \cite{choptuik1993universality} on the spherically symmetric Einstein-scalar field model, in a regime where the critical solutions are believed to be naked singularities. The Einstein--Vlasov system is believed to have static ``star-like'' critical solutions \cite{RRS98,OC02}, but critical collapse involving naked singularities has so far not been observed. These numerical studies on critical collapse (see also the survey \cite{Gundlach2007}) have yet to be made rigorous.

At first glance, the \emph{Reissner--Nordstr\"om} family of metrics (indexed by the \emph{mass} $M>0$ and \emph{charge}~$e$) appears to exhibit a type of critical behavior: the solution contains a black hole when $|e|<M$ (\emph{subextremal}) or $|e|=M$ (\emph{extremal}) and does not contain a black hole when $|e|>M$ (\emph{superextremal}). However, the Reissner--Nordstr\"om black holes are eternal and arise from two-ended Cauchy data, while the superextremal variants contain an eternal ``naked singularity'' that has historically caused much confusion. Moreover, it was long thought that extremal black holes could not form dynamically (a consideration closely related to the \emph{third law of black hole thermodynamics} \cite{BCH,Israel-third-law}). Were this true, it would seem to rule out extremal Reissner--Nordstr\"om as the late-time behavior of any critical solution. In \cite{KU22}, the present authors \emph{disproved} the third law in the Einstein--Maxwell-charged scalar field model and showed that an exactly extremal Reissner--Nordstr\"om domain of outer communication can indeed form in gravitational collapse.

In this paper, we continue our investigation of extremal black hole formation by showing that extremal Reissner--Nordstr\"om \emph{does} arise as a critical solution in gravitational collapse for the Einstein--Maxwell--Vlasov model, giving an example of a new phenomenon that we call \emph{extremal~critical~collapse}. 

\begin{thm} \label{thm:main} 
There exist extremal black holes on the threshold between collapsing and dispersing smooth configurations of charged matter. More precisely, for any mass $M>0$, fundamental charge $\mathfrak e\ne 0$, and particle mass $0\le \mathfrak m\le \mathfrak m_0$, where $0<\mathfrak m_0\ll 1$ depends only on $M$ and $\mathfrak e$, there exists a \underline{smooth one-parameter family} of smooth, spherically symmetric, one-ended asymptotically flat Cauchy data $\{\Psi_\lambda\}_{\lambda\in[0,1]}$ for the Einstein--Maxwell--Vlasov system for particles of fundamental charge $\mathfrak e$ and mass $\mathfrak m$, such that the maximal globally hyperbolic development of $\Psi_\lambda$, denoted by $\mathcal D_\lambda$, has the following properties.
\begin{enumerate}
\item $\mathcal D_0$ is  isometric to Minkowski space and there exists $\lambda_*\in(0,1)$ such that for $\lambda < \lambda_*$, $\mathcal D_\lambda$ is future causally geodesically complete and \ul{disperses towards Minkowski space}. 
In particular, $\mathcal D_\lambda$ \ul{does not contain a black hole or naked singularity}. If $\lambda<\lambda_*$ is sufficiently close to $\lambda_*$, then for sufficiently large advanced times and sufficiently small retarded times, $\mathcal D_\lambda$ is isometric to an appropriate causal diamond in a \underline{superextremal} Reissner--Nordstr\"om solution. 

\item $\mathcal D_{\lambda_*}$ contains a nonempty black hole region $\mathcal{BH}\doteq\mathcal M\setminus J^-(\mathcal I^+)$ and for sufficiently large advanced times, the domain of outer communication, including the event horizon $\mathcal H^+\doteq \partial(\mathcal{BH})$, is isometric to that of an \underline{extremal} Reissner--Nordstr\"om solution of mass $M$. The spacetime contains \ul{no trapped surfaces}.

\item For $\lambda>\lambda_*$, $\mathcal D_\lambda$ contains a nonempty black hole region $\mathcal{BH}$ and for sufficiently large advanced times, the domain of outer communication, including the event horizon $\mathcal H^+$, is isometric to that of a  \underline{subextremal} Reissner--Nordstr\"om solution. The spacetime contains an \ul{open set of trapped surfaces}. 
     
   \end{enumerate}
In addition, for every $\lambda\in [0,1]$, $\mathcal D_\lambda$ is past causally geodesically complete, possesses complete null infinities $\mathcal I^+$ and $\mathcal I^-$, and is isometric to Minkowski space near the center $\{r=0\}$ for all time. 
  \end{thm}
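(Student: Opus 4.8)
The plan is to realize each $\Psi_\lambda$ as the Cauchy data induced on a spacelike slice of a spacetime $\mathcal D_\lambda$ built by firing a \emph{Vlasov beam}---a spherically symmetric swarm of charged collisionless particles all carrying angular momentum bounded away from zero---into Minkowski space from past null infinity, and then to check that $\mathcal D_\lambda$ so constructed is the maximal globally hyperbolic development. Working in a double-null gauge $(u,v)$ I would reduce the Einstein--Maxwell--Vlasov system to the standard first-order system for the area radius $r$, the conformal factor, the renormalized Hawking mass $\varpi$, the charge aspect $Q$, and the distribution function $f$ transported along the charged geodesic flow on the mass shell, with the conserved particle energy $E$ and angular momentum $\ell$ as fibre coordinates. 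The seed data is Minkowski near the center and for all sufficiently early $v$, followed by a smooth beam with compact support in phase space, switched on over a bounded $v$-window and carrying a prescribed total charge; the parameter $\lambda$ multiplies the number density, so that $\lambda=0$ reproduces Minkowski exactly. One first solves the characteristic problem semiglobally: by the domain-of-dependence property the region $\{v\le v_0\}$ to the past of the beam and a neighborhood $\{r\le r_0\}$ of the center remain exactly Minkowski, and since every particle has $\ell>0$ none reaches $r=0$; hence the region lying outside all the matter is a spherically symmetric vacuum Einstein--Maxwell solution, so by Birkhoff's theorem a piece of Reissner--Nordstr\"om of some mass $M_\lambda$ and charge $e_\lambda$ computed from the beam's mass and charge fluxes. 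Arranging the beam's energy and charge profiles so that $\lambda\mapsto(M_\lambda,|e_\lambda|)$ moves continuously and monotonically from the superextremal region $|e_\lambda|>M_\lambda$ at small $\lambda$ into the subextremal region $|e_\lambda|<M_\lambda$ at $\lambda$ near $1$, the intermediate value theorem produces $\lambda_*\in(0,1)$ with $|e_{\lambda_*}|=M_{\lambda_*}$; this is exactly the extremal case.

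The fate of the beam is dictated by the effective potential its particles see in the very Reissner--Nordstr\"om exterior they generate. For $\lambda<\lambda_*$ (superextremal, so that $1-2M_\lambda/r+e_\lambda^2/r^2>0$ for all $r>0$) the centrifugal barrier together with electromagnetic repulsion turns every beam particle around at some radius $r_{\min}>0$, strictly before any sphere can become trapped; a continuity argument of Dafermos--Rendall type---propagating a priori bounds from monotonicity of $\varpi$, finiteness of $Q$, and the $\ell$-barrier in order to exclude concentration---then yields that $\mathcal D_\lambda$ is future causally geodesically complete with complete $\mathcal I^+$ and that the matter disperses, a superextremal Reissner--Nordstr\"om causal diamond appearing at large advanced and small retarded times, to the future of the reflected beam. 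For $\lambda>\lambda_*$ (subextremal) part of the beam instead penetrates to $r\le r_+(M_\lambda,e_\lambda)$, a trapped surface forms, and one shows $\mathcal D_\lambda$ contains a black hole whose domain of outer communication, event horizon included, is subextremal Reissner--Nordstr\"om, with the trapped region $r_-<r<r_+$ supplying an open family of trapped surfaces. The dispersing parameters form an open set by the stability half of the continuity argument, the collapsing parameters form an open set because a trapped surface once present persists, and therefore $\lambda_*$ is exactly the threshold.

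The crux is the critical case $\lambda=\lambda_*$. Since $\nabla r$ is everywhere spacelike or null---never timelike---in extremal Reissner--Nordstr\"om, the exterior region it contributes carries \emph{no} trapped surface to begin with; the content is to show that (i)~$\mathcal D_{\lambda_*}$ is global, with complete $\mathcal I^+$ and $J^-(\mathcal I^+)$ equal to the extremal exterior $\{r>M\}$ together with its degenerate horizon $\{r=M\}$, which is then precisely the event horizon $\mathcal H^+$; and (ii)~the black hole interior behind $\mathcal H^+$---governed by the continued Vlasov evolution, the beam particles oscillating at radii bounded below and the center staying Minkowski---likewise admits no trapped surface. The essential ingredient is a sharp ``borderline reflection'' estimate for the beam in the near-extremal regime: as $\lambda\uparrow\lambda_*$ the particles still just manage to turn around, and in the limit the dividing characteristic of the beam asymptotes to $\{r=M\}$ without $\nabla r$ ever becoming timelike---equivalently, one must rule out a naked singularity forming exactly at threshold, which is where scalar-field critical collapse is numerically expected to produce one but where the combined centrifugal and electromagnetic barriers here do not. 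Finally, past causal geodesic completeness, completeness of $\mathcal I^-$, and isometry to Minkowski near $\{r=0\}$ for all time are immediate, since by construction the region $\{v\le v_0\}$ and a full neighborhood of the center are isometric to Minkowski; and smoothness of $\lambda\mapsto\Psi_\lambda$, including its degeneration at $\lambda=0$, reduces to smoothness of the beam profile in $\lambda$ and smooth dependence of the characteristic evolution on its data.

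The main obstacle is the critical case: establishing that at $\lambda_*$ one obtains \emph{exactly} extremal Reissner--Nordstr\"om as the domain of outer communication, with no trapped surfaces anywhere, together with the quantitative exclusion of a naked singularity there. This demands simultaneously sharp control of the beam dynamics as extremality is approached and a carefully matched, trapped-surface-free construction of the black hole interior, and it is here---rather than in the by-now-standard super- and subextremal regimes---that the real work of the theorem lies.
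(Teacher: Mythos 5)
Your proposal captures the broad shape of the result (a one-parameter family of charged Vlasov beams, Birkhoff's theorem for the exterior, an intermediate-value argument for the threshold), but it has a genuine gap at exactly the point you yourself flag as ``the crux'': you never prove statement 2, and the route you sketch for it is not viable. In your picture the beam dynamics themselves are what distinguish collapse from dispersion --- for $\lambda>\lambda_*$ part of the beam ``penetrates to $r\le r_+$'' and gets trapped, while at $\lambda_*$ the particles ``just barely'' turn around, with the dividing characteristic asymptoting to $\{r=M\}$. Making that borderline-reflection limit rigorous, and showing the limiting spacetime is \emph{exactly} extremal Reissner--Nordstr\"om in the domain of outer communication with no trapped surfaces and no naked singularity, is precisely the hard analytical content, and you offer no argument for it. Worse, tying extremality of the exterior to a marginal turning point of the matter couples the two mechanisms in a way that forces a delicate degeneration analysis as $\lambda\uparrow\lambda_*$; nothing in your sketch (Dafermos--Rendall continuity arguments, centrifugal barriers for $\ell$ bounded below) controls that limit. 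There is also a secondary gap: you construct the spacetime by ``firing'' the beam in from past infinity, which is a scattering-type construction for which no theory is available (and which is awkward for massive particles, which emanate from past timelike rather than null infinity), and only afterwards extract Cauchy data.

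The paper avoids the borderline scenario entirely by decoupling the bounce from the extremality. Time-symmetric Cauchy data are posed directly on the bounce hypersurface, with the beam supported at radii $r\le r_2<r_-(4M,2M)<M$, i.e.\ strictly inside the inner-horizon radius of every black hole in the family; the outgoing half of the solution is constructed by a bootstrap (near region, auxiliary region, far region with Hawking-mass energy estimates), and the full spacetime is obtained by time-reflection and gluing in an exact Reissner--Nordstr\"om piece. The main beam is dust-like with angular momentum $\ell\sim\ve$ (desingularizing Ori's bouncing charged null dust) and bounces by \emph{electromagnetic repulsion}, stabilized by an auxiliary beam of amplitude $\eta\gg\ve$ with $\ell\sim 1$ that guarantees $Q\gtrsim\eta$ on the main beam; the bounce is therefore robust and uniform in $\lambda$ --- the beam turns around in every case, including subextremal. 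Whether a black hole forms is then read off purely from the final Reissner--Nordstr\"om parameters $(\tilde M,\tilde e)$ on the initial outgoing cone, which are tuned by a degree/modulation argument so that the critical member is exactly $(M,M)$; the critical DOC is exactly extremal RN by Birkhoff because the matter never exits $\{r\le r_2\}\subset\{r<M\}$, the absence of trapped surfaces is immediate from the structure of extremal RN plus the dispersive bounce, the trapped surfaces for $\lambda>\lambda_*$ are simply the RN spheres with $r_-<r<r_+$ in the electrovacuum region, and past completeness and past dispersion come for free from time symmetry. If you want to salvage your approach, the essential change is to put the critical behavior into the \emph{initial data} (exterior parameters) rather than into a marginal feature of the beam dynamics.
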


\begin{figure}[ht]
\centering{
\def\svgwidth{30pc}
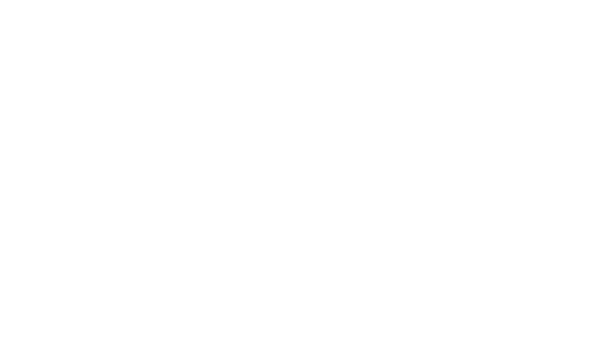}
\caption{Penrose diagrams of the one-parameter family $\{\mathcal D_\lambda\}$ from \cref{thm:main} in the case of massive particles. The dark gray region depicts the physical space support of the Vlasov matter beam. The region of spacetime to the left of the beam is exactly Minkowski space and the region to the right of the beam is exactly Reissner--Nordstr\"om with the parameter ratio as depicted. In every case, the beam ``bounces'' before it hits the center $\{r=0\}$ due to the repulsive effects of angular momentum and the electromagnetic field. When $\lambda<\lambda_*$, the beam bounces before a black hole is formed.  We note already that the beams actually have more structure than is depicted here in these ``zoomed out'' pictures. See already \cref{fig:zoomed-1}.}
\label{fig:bounce-1}
\end{figure}

\begin{figure}[ht]
\centering{
\def\svgwidth{30pc}
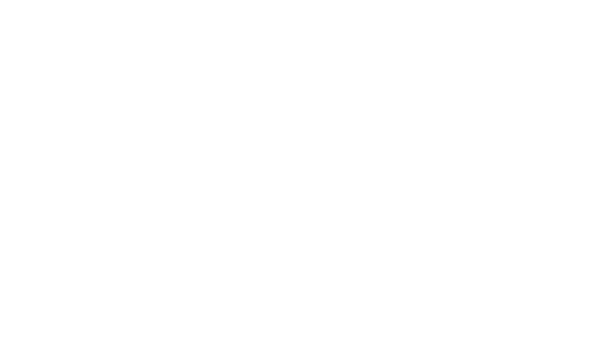}
\caption{Penrose diagrams of the one-parameter family $\{\mathcal D_\lambda\}$ from \cref{thm:main} in the massless case. In the ingoing (resp., outgoing) phases, the massless beams are entirely contained in slabs of finite advanced (resp., retarded) time. Therefore, for sufficiently early advanced times and sufficiently late retarded times, the solutions are vacuum and isometric to Minkowski space.}
\label{fig:bounce-2}
\end{figure}

In the proof of \cref{thm:main}, we construct one-parameter families of charged Vlasov beams coming in from past timelike infinity (if $\mathfrak m>0$, cf.~\cref{fig:bounce-1}) or from past null infinity (if $\mathfrak m=0$, cf.~\cref{fig:bounce-2}).
In the dispersive case $\lambda <\lambda_*$, the area-radius $r$ of the beam grows linearly in time as the matter expands towards the future. Moreover, the macroscopic observables of the Vlasov matter (the particle current $N$ and energy momentum tensor $T$) decay at the sharp $t^{-3}$ rate in the massive case and at the sharp $t^{-2}$ rate in the massless case (with faster decay for certain null components), see already \eqref{eq:massive-decay}--\eqref{eq:massless-decay-3} in \cref{prop:approx-1}. In fact, this same dispersive behavior happens in the \emph{past} for every $\lambda\in[0,1]$.  

As a direct consequence of \cref{thm:main}, we obtain
\begin{cor}\label{cor:blackholeness-instab}
    The very ``black hole-ness'' of an extremal black hole arising in gravitational collapse can be unstable: There exist one-ended asymptotically flat Cauchy data for the Einstein--Maxwell--Vlasov system, leading to the formation of an extremal black hole, such that an arbitrarily small smooth perturbation of the data leads to a future causally geodesically complete, dispersive spacetime. 
\end{cor}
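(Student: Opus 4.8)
The plan is to extract the corollary directly from \cref{thm:main}, taking the extremal threshold parameter $\lambda_*\in(0,1)$ as the pivot. First, I would take the reference Cauchy datum to be $\Psi_{\lambda_*}$: by part (2) of \cref{thm:main} it is a smooth, spherically symmetric, one-ended asymptotically flat datum for the Einstein--Maxwell--Vlasov system whose maximal globally hyperbolic development $\mathcal D_{\lambda_*}$ contains a nonempty black hole region with domain of outer communication eventually isometric to that of an extremal Reissner--Nordström solution of mass $M$. Thus $\Psi_{\lambda_*}$ is precisely a datum ``leading to the formation of an extremal black hole'' in the sense demanded by the corollary.

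Next I would use the regularity of the family in the parameter. Since $\lambda\mapsto\Psi_\lambda$ is a \emph{smooth} one-parameter family of smooth data on a fixed manifold, $\Psi_\lambda\to\Psi_{\lambda_*}$ as $\lambda\uparrow\lambda_*$ in $C^\infty$ --- equivalently, in every weighted $C^k$ norm adapted to the asymptotically flat end, with respect to which the family depends continuously. Hence, given any such norm $\|\cdot\|$ and any $\ve>0$, one chooses $\lambda<\lambda_*$ with $\lambda_*-\lambda$ small enough that $\|\Psi_\lambda-\Psi_{\lambda_*}\|<\ve$; the difference $\Psi_{\lambda_*}-\Psi_\lambda$ is then a smooth perturbation, arbitrarily small in the chosen topology, and $\Psi_\lambda$ is again a one-ended asymptotically flat Cauchy datum for the same Einstein--Maxwell--Vlasov system.

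Finally I would invoke part (1) of \cref{thm:main}: for every $\lambda<\lambda_*$ the development $\mathcal D_\lambda$ is future causally geodesically complete and disperses towards Minkowski space, containing neither a black hole nor a naked singularity (and, for $\lambda$ close to $\lambda_*$, is even isometric to a causal diamond in a superextremal Reissner--Nordström solution at sufficiently late retarded and early advanced times). Combining these three points, $\Psi_\lambda$ with $\lambda<\lambda_*$ sufficiently close to $\lambda_*$ is the desired arbitrarily small smooth perturbation of the extremal-black-hole-forming datum $\Psi_{\lambda_*}$ whose development is future causally geodesically complete and dispersive, which proves the corollary. I do not expect a genuine obstacle here: all the analytic content is already packaged into \cref{thm:main}, and the only point requiring (minor) care is to fix the precise topology in which ``arbitrarily small'' is meant, which is supplied by the asserted smoothness of the family $\lambda\mapsto\Psi_\lambda$.
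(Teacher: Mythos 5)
Your proposal is correct and is essentially the paper's own argument: the corollary is stated there as a direct consequence of \cref{thm:main}, obtained exactly by perturbing the critical datum $\Psi_{\lambda_*}$ to $\Psi_\lambda$ with $\lambda<\lambda_*$ close to $\lambda_*$ and using parts (1) and (2) of the theorem. Your only extra care, the choice of topology, matches the paper's remark that the data are electrovacuum outside a fixed compact set, so smallness holds in any reasonable topology respecting asymptotic flatness.
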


This is in stark contrast to the subextremal case, where formation of trapped surfaces behind the event horizon---and hence \emph{stable} geodesic incompleteness \cite{Penrose}---is expected. Despite this inherent instability of the critical solution, we expect extremal critical collapse itself to be a stable phenomenon: We conjecture that there exists a telelologically determined ``hypersurface'' $\mathfrak B_\mathrm{crit}$ in moduli space which consists of asymptotically extremal black holes, contains $\mathcal D_{\lambda_*}$, and locally delimits the boundary in moduli space between future complete and collapsing spacetimes. This ``codimension-one'' property is expected to hold for other variants of critical collapse and will be discussed in detail in \cref{subsec:stability-of-ecc}.

We further expect extremal critical collapse to be a more general phenomenon: we conjecture it to occur in the spherically symmetric Einstein--Maxwell-charged~scalar~field model and also for the Einstein vacuum equations, where extremal Kerr is the model critical solution. In this paper, we also prove (see already \cref{thm:ECC-Ori} in \cref{sec:Ori-ECC}) that extremal critical collapse already occurs in the simpler---but singular---\emph{bouncing charged null dust model}, which was first introduced by Ori in \cite{Ori91}. The proof of \cref{thm:main}, which will be outlined in \cref{sec:proof-outline}, can be viewed as a global-in-time desingularization of these extremal critical collapse families in dust. 

Besides the Einstein--Maxwell--Vlasov and bouncing charged null dust models, it turns out that the \emph{thin charged shell model} \cite{israel1966singular, Delacruz-Israel} also exhibits extremal critical collapse: Pr\'oszy\'nski observed in \cite{Proszynski} that if a thin charged shell is injected into Minkowski space (so the interior of the shell is always flat), the parameters can be continuously varied so that the exterior of the shell goes from forming a subextremal Reissner--Nordstr\"om black hole, to forming an extremal Reissner--Nordstr\"om black hole, to forming no black hole or naked singularity at all: the shell ``bounces'' off to future timelike infinity. Because the thin shell model is quite singular (the energy-momentum tensor is merely a distribution and the metric can fail to be $C^1$ across the shell), it seems to have been discounted as a serious matter model. We refer to the discussion in \cite[Section 1.4.1]{KU22} in reference to the thin charged shell counterexample to the third law by Farrugia and Hajicek \cite{Farrugia-Hajicek}. \cref{thm:main} can be viewed as a vindication of \cite{Proszynski}, since our smooth Einstein--Maxwell--massive Vlasov solutions exhibit all of the qualitative features of Pr\'oszy\'nski's dust shells. In particular, \cref{fig:time-symmetry} below is strikingly similar to Fig.~3 in \cite{Proszynski}.

\subsection{The Einstein--Maxwell--Vlasov system}\label{sec:intro-general}

In this paper, we consider the \emph{Einstein--Maxwell--Vlasov} system, which models a distribution of collisionless, self-gravitating charged particles with \emph{mass} $\mathfrak m\ge 0$ and \emph{fundamental charge} $\mathfrak e\in\Bbb R\setminus\{0\}$. The model consists of a quadruple $(\mathcal M^4,g,F,f)$, where $(\mathcal M^4,g)$ is a spacetime, $F$ is a closed 2-form representing the electromagnetic field strength, and $f=f(x,p)$, called the \emph{distribution function} of the Vlasov matter, is  a smooth nonnegative function defined on the \emph{mass shell} 
\begin{equation*}
    P^\mathfrak m\doteq \{(x,p)\in T\mathcal M:\text{$p$ is future-directed causal and }g(p,p)=-\mathfrak m^2\}.
\end{equation*} 
The equations of motion are
\begin{align}
 \label{eq:intro-EMV}    R_{\mu\nu}-\tfrac 12 Rg_{\mu\nu}&=2\left(T^{\mathrm{EM}}_{\mu\nu}+T_{\mu\nu}\right),\\   
\label{eq:intro-Max}  \nabla_\mu F^{\mu\nu}&=-\mathfrak eN^\nu,\\
\label{eq:intro-MV} Xf&=0,
\end{align} where $T^\mathrm{EM}_{\mu\nu}\doteq F_\mu{}^\alpha F_{\nu\alpha}-\frac 14g_{\mu\nu}F_{\alpha\beta}F^{\alpha\beta}$ is the energy-momentum tensor of the electromagnetic field, $N$ and $T$ are the number current and energy-momentum tensor of the Vlasov matter, defined by
\begin{equation}\label{eq:T-and-N-intro}
     N^\mu[f](x)\doteq \int_{P^\mathfrak m_x}p^\mu f(x,p)\,d\mu_x^\mathfrak m(p),\quad T^{\mu\nu}[f](x)\doteq \int_{P^\mathfrak m_x}p^\mu p^\nu f(x,p)\,d\mu_x^\mathfrak m(p),
\end{equation}
and $X\in\Gamma(TT\mathcal M)$ is the \emph{electromagnetic geodesic spray} vector field, defined relative to canonical coordinates $(x^\mu,p^\mu)$ on $T\mathcal M$ by
    \begin{equation}\label{eq:Lor-spray}
    X\doteq  p^\mu \frac{\partial}{\partial x^\mu} -\left( \Gamma^\mu_{\alpha\beta}p^\alpha p^\beta-\mathfrak e F^\mu{}_\alpha p^\alpha\right) \frac{\partial}{\partial p^\mu}.
\end{equation}
For the definition of the family of measures $d\mu_x^\mathfrak m$ on $P^\mathfrak m$ and a proof of the consistency of the system \eqref{eq:intro-EMV}--\eqref{eq:intro-MV}, we refer to \cref{app:A}. 

The integral curves of the vector field $X$ consist of curves of the form $s\mapsto (\gamma(s),p(s))\in T\mathcal M$, where $p=d\gamma/ds$ and $p$ satisfies the \emph{Lorentz force equation}
\begin{equation*}
    \frac{Dp^\mu}{ds}=\mathfrak eF^\mu{}_\nu p^\nu.
\end{equation*}
We refer to such curves $\gamma$ as \emph{electromagnetic geodesics}. The vector field $X$ is tangent to $P^\mathfrak m$ for any $\mathfrak m\ge 0$, and the Vlasov equation \eqref{eq:intro-MV} implies that $f$ is conserved along electromagnetic geodesics. Since $f\ge 0$, $N$ is a future-directed causal vector field on $\mathcal M$ and the model satisfies the dominant energy condition. 

When $\mathfrak m>0$, the system \eqref{eq:intro-EMV}--\eqref{eq:intro-MV} is locally well-posed outside of symmetry, which can be seen as a special case of results in \cite{Blancel} or by applying the general methods of \cite{Ringstrom-topology}. Well-posedness when $\mathfrak m=0$ is conditional and is a delicate issue that we will return to in \cref{sec:local-WP}. We emphasize at this point that \cref{thm:main} produces examples of extremal critical collapse for any sufficiently small positive particle mass, where well-posedness is unconditional and valid outside of spherical symmetry.

 \subsection{The problem of critical collapse}

We would like to place \cref{thm:main} into the larger picture of \emph{critical collapse}, the general study of the black hole formation threshold. In particular, we conjecture that our examples in \cref{thm:main} have a suitable \emph{codimension-one} property as is expected to hold for other, so far only numerically observed, critical phenomena in gravitational collapse.

 In order to discuss the general concept of critical collapse, it is very helpful to have a notion of ``phase space'' or \emph{moduli space} for initial data (or maximal Cauchy developments) for the Einstein equations. Consider, formally, the set $\mathfrak M$ of one-ended asymptotically flat Cauchy data for the Einstein equations with a fixed matter model (or vacuum) and perhaps with an additional symmetry assumption.
We will be intentionally vague about what regularity elements of $\mathfrak M$ have, what decay conditions to impose, or what topology to endow $\mathfrak M$ with. We will also not discuss gauge conditions, which could be viewed as taking specific quotients of $\mathfrak M$. These questions are related to several fundamental issues in general relativity, see for instance \cite{Cnaked,C-WCC,christodoulou2002global,dafermos2018rough,LukOhI,diophantine,RSR,LeonhardPhD,KM24,Jay-linear}.\footnote{In particular, it would actually be most natural to define $\mathfrak M$ in terms of (perhaps a quotient space of) \emph{scattering data} on past infinity (past null infinity $\mathcal I^-$ in the case of massless fields). However, since a nonlinear scattering theory for the full Einstein equations has not yet been developed in any regime, we limit ourselves to the Cauchy problem for now.} Indeed, it seems likely that there is no single ``correct'' definition---it is doubtful that a single moduli space will capture every interesting phenomenon.

Nevertheless, we will pretend in this section that a ``reasonable'' definition of $\mathfrak M$ exists. At the very least, $\mathfrak M$ ought to consist of initial data possessing a well-posed initial value problem. For each element $\Psi=(\bar g,\bar k,\dotsc)\in\mathfrak M$ (where $\bar g$ is a Riemannian metric on $\Bbb R^3$, $\bar k$ the induced second fundamental form, and $\dotsc$ denotes possible matter fields), we have a unique maximal globally hyperbolic development $\mathcal D=(\mathcal M,g,\dotsc)$ of $\Psi$, where $\mathcal M\cong \Bbb R^4$ \cite{MR53338, CBG69, Zorn-slayed}.\footnote{By an abuse of terminology, we will interchangeably refer to either $\Psi$ or its development $\mathcal D$, which is of course only unique up to isometry.} We assume that $(\mathcal M,g)$ is asymptotically flat. In particular, we assume that we have a well-defined notion of future null infinity $\mathcal I^+$ and past null infinity $\mathcal I^-$. 

\begin{rk}
    In the proof of Theorem 1, we define a ``naive moduli space'' $\mathfrak M_\infty$ consisting of all smooth solutions of the Einstein--Maxwell--Vlasov constraint equations on $\Bbb R^3$, equipped with the $C^\infty_\mathrm{loc}$ topology, and with no identifications made. See already \cref{def:moduli-space-coarse}.
    This topology is inadequate for addressing asymptotic stability questions but since our families are electrovacuum outside a fixed large compact set anyway, they will be continuous in any ``reasonable'' topology that respects asymptotic flatness. 
\end{rk}

Let $\mathfrak C\subset\mathfrak M$ denote the subset of initial data with \emph{future causally geodesically complete} developments. We also highlight the special class $\mathfrak D\subset\mathfrak C$ of initial data with \emph{dispersive} developments, i.e, those solutions whose geometry asymptotically converges to Minkowski space in the far future and matter fields decay suitably.\footnote{Again, we are being intentionally vague here.} Nontrivial stationary states, if they exist, lie in $\mathfrak C\setminus\mathfrak D$ since they do not decay.\footnote{According to a famous theorem of Lichnerowicz, the Einstein vacuum equations do not admit nontrivial asymptotically flat stationary solutions on $\Bbb R^3\times \Bbb R$ (with an everywhere timelike Killing field) \cite{lichnerowicz1955theories}. On the other hand, the Einstein--Vlasov and Einstein--Maxwell--Vlasov systems, for example, have many asymptotically flat stationary solutions \cite{rein1993smooth,Thaller-thin-shell,Thaller-rotating}.}  Let $\mathfrak B\subset\mathfrak M$ denote the set of initial data leading to the formation of a nonempty \emph{black hole region}, i.e., $\mathcal{BH}\doteq \mathcal M\setminus J^-(\mathcal I^+)\ne\emptyset$. The question of \emph{critical collapse} is concerned with the study of phase transitions between $\mathfrak C$, $\mathfrak D$, and $\mathfrak B$, that is, the structure of the boundaries $\partial \mathfrak C$, $\partial\mathfrak D$, and $\partial \mathfrak B$, how they interact, and characterizing solutions lying on the threshold.  

A natural way of exploring this phase transition is by studying continuous paths of initial data interpolating between future complete and black hole forming solutions.
\begin{defn} An \emph{interpolating family} is a continuous one-parameter family $\{\Psi_\lambda\}_{\lambda\in[-1,1]}\subset\mathfrak M$ such that $\Psi_{0}\in\mathfrak C$ and $\Psi_1\in\mathfrak B$. Given such a family, we may define the \emph{critical parameter} $\lambda_*$ and the \emph{critical solution} $\mathcal D_{\lambda_*}$ (the development of $\Psi_{\lambda_*}$) by \begin{equation*}\lambda_*\doteq \sup\{\lambda\in[0,1]: \Psi_\lambda\in\mathfrak C\}.\end{equation*}
\end{defn}

The prototypical critical collapse scenario consists of a spherically symmetric self-gravitating massless scalar field pulse with fixed profile and ``total energy'' $\sim \lambda$. At $\lambda=0$, the solution is Minkowski space and for $\lambda$ very close to $0$, the solution disperses and is future complete \cite{Christ-stab-mink}. As $\lambda$ approaches $1$, a trapped surface forms in evolution, signaling the formation of a black hole \cite{C-ss-formation}. This is precisely the scenario first studied numerically by Christodoulou in his thesis \cite{Chr71} and then later by Choptuik in the influential work \cite{choptuik1993universality}. Based on numerical evidence, it is believed that the critical solutions for these types of families are \emph{naked singularities} that form a codimension-one ``submanifold'' in moduli space. For discussion of Choptuik's results we refer to the survey \cite{Gundlach2007}. 

\begin{rk}
    A codimension-one submanifold of naked singularities is nongeneric and therefore compatible with the weak cosmic censorship conjecture, which has been proved in this model by Christodoulou \cite{C-WCC}.
\end{rk}

\begin{rk}
 A rigorous understanding of Choptuik's critical collapse scenario would in particular give a construction of naked singularities in the Einstein-scalar field system 
 starting from \emph{smooth} initial data, in contrast to Christodoulou's examples in \cite{Cnaked}. It already follows from work of Christodoulou \cite{C-ss-formation} that a critical solution cannot be a black hole in this model and from work of Luk and Oh that a critical solution cannot ``scatter in BV norm'' \cite{luk2015quantitative}. This leaves the possibility of either a first singularity along the center not hidden behind an event horizon\footnote{See \cite[Page 10]{Kommemi13} for a catalog of the possible Penrose diagrams in this case.} or a solution in $\mathfrak C\setminus\mathfrak D$ which ``blows up at infinity.'' Ruling out this latter case is an interesting open problem.
\end{rk}

When massive fields are introduced, such as in the spherically symmetric Einstein--massive Klein--Gordon or Einstein--massive Vlasov systems, then static ``star-like'' critical solutions can be observed numerically \cite{brady1997phases,RRS98,OC02,AR06,AAR21}.  These static solutions are nonsingular and lie in $\mathfrak C\setminus\mathfrak D$. It is interesting to note that while Einstein--Klein--Gordon also displays Choptuik-like naked singularity critical solutions, there is no numerical evidence for the existence of naked singularities in the Einstein--Vlasov system. We again refer to \cite{Gundlach2007} for references and would also like to point out the new development \cite{Baumgarte2023-ad} on numerical critical collapse in vacuum. 

\subsection{Extremal critical collapse} 

So far, all numerically observed critical solutions are believed to be either naked singularities or complete and nondispersive. It follows at once from Penrose's incompleteness theorem \cite{Penrose} and Cauchy stability that a critical solution cannot contain a trapped surface. While a generic black hole is expected to contain trapped surfaces,\footnote{By the celebrated redshift effect, one expects a spacetime asymptoting to a subextremal Kerr--Newman black hole to contain trapped surfaces asymptoting to future timelike infinity $i^+$. See \cite{dafermos2005interior,Price-law,van2018stability,An2023-zt}.} members of the extremal Kerr--Newman family do not. In view of this, we raise the question of whether extremal black holes can arise on the black hole formation threshold:

\begin{defn}\label{def:ECC}
    An interpolating family $\{\Psi_\lambda\}_{\lambda\in[0,1]}$ exhibits \emph{extremal critical collapse} if the critical solution $\mathcal D_{\lambda_*}$ asymptotically settles down to an extremal black hole.
\end{defn}

Our main result, \cref{thm:main}, proves that the Einstein--Maxwell--Vlasov system exhibits extremal critical collapse, with critical solution $\mathcal D_{\lambda_*}$ exactly isometric to extremal Reissner--Nordstr\"om in the domain of outer communication at late advanced times. As shown by Pr\'oszy\'nski \cite{Proszynski} and the present authors in \cref{thm:ECC-Ori}, the fundamentally singular thin charged shell and charged null dust models, respectively, exhibit extremal critical collapse, also with extremal Reissner--Nordstr\"om as the critical solution. We expect this phenomenon to also occur in the spherically symmetric Einstein--Maxwell-charged scalar field system and even for the Einstein vacuum equations, where the critical solution is expected to be based on the extremal Kerr solution. Note that we only require the \emph{asymptotic geometry} of the critical solution to be an extremal black hole in \cref{def:ECC}, which is a much weaker condition than being exactly extremal as in \cref{thm:main}.

\begin{rk}
    Because black holes in the spherically symmetric Einstein-scalar field model always contain trapped surfaces \cite{C-ss-formation}, this model \emph{does not} exhibit extremal critical collapse. In particular, since the presence of a trapped surface in this model already implies completeness of null infinity and the existence of a black hole \cite{dafermos-trapped-surface}, \emph{$\mathfrak B$ is open in the spherically symmetric Einstein-scalar field model}.
\end{rk}

\begin{rk}\label{rk:axisymmetry}
It is not possible for a Kerr solution with nonzero angular momentum (i.e., not Schwarzschild) to appear as the asymptotic state in axisymmetric vacuum gravitational collapse. This is because the \emph{Komar angular momentum} $(16\pi)^{-1}\int_{S}\star dZ^\flat$, where $Z$ is the axial Killing vector field, is independent of the sphere $S$, which is nullhomologous.
Similarly, it is not possible for a Kerr--Newman solution with nonzero charge (i.e., not Kerr) to appear as the asymptotic state in gravitational collapse for the Einstein--Maxwell system. This is because the charge $(4\pi)^{-1}\int_S\star F$ is independent of the sphere $S$, which is nullhomologous. The presence of charged matter is essential in \cref{thm:main}. 
\end{rk}

\begin{rk}[Stationary solutions and the extremal limit] In the 1960s and '70s, it was suggested that astrophysical black holes could form through quasistationary accretion processes. In a landmark work, Bardeen and Wagoner \cite{bardeen1970kerr,bardeen1971relativistic} numerically studied axisymmetric stationary states of the Einstein-dust system (modeling accretion disks) and found that a ``black hole limit'' was only possible in the ``extremal limit'' of the dust configuration.\footnote{Recall that the classical Buchdahl inequality states that a spherically symmetric stationary fluid ball is always ``far away'' from being a black hole in the sense that $\frac{2m}{r}<\frac 89$, which quantitatively forbids (even marginally) trapped surfaces \cite{buchdahl1959general}. This bound is relaxed outside of spherical symmetry or in the presence of charge. In particular, the sharp charged Buchdahl inequality in \cite{andreasson2009sharp} is consistent with becoming arbitrarily close to extremality and forming a marginally trapped surface.} In this limit, the exterior metric of the disk converges, in a certain sense, to the metric of the domain of outer communication of extremal Kerr. 

However, the event horizon of a stationary black hole is necessarily a Killing horizon and therefore an exactly stationary black hole solution cannot admit a one-ended asymptotically flat Cauchy hypersurface. It follows that a sequence of one-ended stationary states cannot actually smoothly converge to a black hole spacetime up to and including the event horizon, and that the black hole threshold cannot be directly probed by studying limits of stationary states---black hole formation is a fundamentally dynamical process. 

Nevertheless, there is a substantial body of numerical and heuristic literature exploring ``extremal black hole limits'' of stationary solutions in dust models 
\cite{neugebauer1995general,meinel2006black,meinel2008relativistic,meinel2011black,kleinwachter2011black} and using Einstein--Yang--Mills--Higgs magnetic monopoles \cite{lue1999magnetic,lue2000gravitational}; see also references therein. In particular, we refer the reader to \cite{meinel2011black} for a cogent explanation of the exact nature of the convergence of these stationary states to extremal Reissner--Nordstr\"om/Kerr exteriors and throats. It would be interesting to see if perturbing these ``near-extremal'' non-black hole stationary states can provide another route to extremal critical collapse (and also perhaps to new examples of third law violating solutions), but this seems to be a difficult and fully dynamical problem.
\end{rk}

\begin{rk}[Overcharging and overspinning] Extremal critical collapse should not be confused with the attempt to overcharge or overspin a black hole, i.e., the attempt to destroy the event horizon and create a ``superextremal naked singularity'' by throwing charged or spinning matter into a (near-)extremal black hole. The fear of forming such a naked singularity provided some impetus for the original formulation of the third law in \cite{BCH}\footnote{With this in mind, the formulation of the third law in \cite{BCH} can be thought of as simply outright forbidding the formation of extremal black holes. The formulation in Israel's work \cite{Israel-third-law} is more refined and specifically refers to subextremal black holes ``becoming'' extremal in a dynamical process. In any case, both formulations are false as shown in \cite{KU22} and again in the present paper.} and many arguments for and against have appeared in the literature, see \cite{Wald-Gedanken-I,hubeny1999overcharging,jacobson2009overspinning,Wald-Gedanken-II} and references therein. Overcharging has been definitively disproved in spherical symmetry for the class of ``weakly tame'' matter models \cite{dafermos-trapped-surface,Kommemi13}, which includes the Einstein--Maxwell--Vlasov system considered in this paper. We expect overcharging and overspinning to be definitively disproved with a positive resolution of the black hole stability problem for extremal black holes, to be discussed in \cref{subsec:stability-of-ecc} below. 
\end{rk}

\subsection{Stability of extremal critical collapse} \label{subsec:stability-of-ecc}

Before discussing the stability of our interpolating families in \cref{thm:main}, we must first address the expected notion of stability for the domain of outer communication of the extremal Reissner--Nordstr\"om solution. 

Firstly, since the asymptotic parameter ratio of the black hole is inherently unstable, we can at most expect a \emph{positive codimension} stability statement for extremal Reissner--Nordstr\"om. This should be compared with the codimension-three nonlinear stability theorem of the Schwarzschild solution by Dafermos, Holzegel, Rodnianski, and Taylor \cite{DHRT}: Only a codimension-three ``submanifold'' of moduli space can be expected to asymptote to Schwarzschild, which has codimension three in the Kerr family (parametrized by the mass and specific angular momentum vector). In the case of Reissner--Nordstr\"om, the set of extremal solutions has \emph{codimension one} in the full family. Indeed, any fixed parameter ratio subfamily of the Reissner--Nordstr\"om family has codimension one. See already \cref{rk:subextremal-foliation}.

Secondly, and far less trivially, the stability problem for extremal black holes is complicated by the absence of the celebrated \emph{redshift effect}, which acts as a stabilizing mechanism for the event horizon of subextremal black holes. The event horizon of extremal Reissner--Nordstr\"om (and axisymmetric extremal black holes in general) suffers from a linear instability known as the \emph{Aretakis instability} \cite{Aretakis-instability-1,Aretakis-instability-2,Aretakis-instability-3,apetroaie}, which causes ingoing translation invariant null derivatives of solutions to the linear wave equation to (generically) either not decay, or to blow up polynomially along the event horizon as $v\to\infty$. Weissenbacher has recently shown that a similar instability (non-decay of the first derivative of the energy-momentum tensor) occurs for the linear massless Vlasov equation on extremal Reissner--Nordstr\"om \cite{W23}. 

However, the Aretakis instability is weak and does not preclude asymptotic stability and decay \emph{away from the event horizon}. Including the horizon, we expect a degenerate type of stability, with decay in directions tangent to it, and possible non-decay and growth transverse to it (so-called \emph{horizon hair}). This behavior has been shown rigorously for a semilinear model problem on a fixed background \cite{A16, AAG20} and numerically for the coupled spherically symmetric nonlinear Einstein--Maxwell-(massless and neutral) scalar field system \cite{Reall-numerical}.

To further complicate matters, the massive and massless Vlasov equations behave fundamentally differently and we state two separate conjectures. In these statements, we consider characteristic data posed on a bifurcate null hypersurface $ C_\mathrm{out}\cup \underline C_\mathrm{in}$, where $C_\mathrm{out}$ is complete and $\underline C_\mathrm{in}$ penetrates the event horizon in the case of trivial data. Solutions of the linear massless Vlasov equation decay exponentially on subextremal Reissner--Nordstr\"om black holes \cite{Bigorgne-Schw,W23} and Velozo Ruiz has proved nonlinear asymptotic stability of Schwarzschild for the spherically symmetric Einstein--massless Vlasov system \cite{VelozoPhD}. Based on this, \cite{Reall-numerical,A16,AAG20}, and \cite[Conjecture IV.2]{DHRT}, we make the

\begin{conj}\label{conj:massless-stab}
    The extremal Reissner--Nordstr\"om solution is nonlinearly asymptotically stable to spherically symmetric perturbations in the Einstein--Maxwell--massless Vlasov model in the following sense: Given sufficiently small characteristic data posed on a bifurcate null hypersurface $ C_\mathrm{out}\cup \underline C_\mathrm{in}$ and lying on a ``codimension-one submanifold'' $\mathfrak M_\mathrm{stab}$ (which contains the trivial solution) of the moduli space of such initial data, the maximal Cauchy development contains a black hole which asymptotically settles down to the domain of outer communication of an extremal Reissner--Nordstr\"om solution, away from the event horizon $\mathcal H^+$. Moreover, along the horizon, the solution decays towards extremal Reissner--Nordstr\"om in tangential directions, with possibly growing ``Vlasov hair'' transverse to the horizon. 
\end{conj}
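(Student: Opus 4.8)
The natural setting for \cref{conj:massless-stab} is the double-null gauge for spherically symmetric spacetimes, with unknowns the area-radius $r$, conformal factor $\Omega^2$, renormalized Hawking mass $m$, charge function $Q$ (slaved to the Vlasov number current through the reduced Maxwell equation), and distribution function $f$, conserved along the electromagnetic geodesic flow on the null mass shell. One would write the reference extremal Reissner--Nordstr\"om solution of mass $M$ in coordinates $(u,v)$ regular across $\mathcal H^+$, with $\underline C_\mathrm{in}$ a fixed ingoing cone penetrating $\{r=M\}$ and $C_\mathrm{out}$ a complete outgoing cone reaching $\mathcal I^+$, and treat the characteristic data as an $\epsilon$-small perturbation in a weighted Banach space $\mathcal X$ tuned to the expected decay. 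The first step is to split $\mathcal X\cong\mathcal X_0\oplus\R$, the $\R$-factor being an ``extremality modulus''---a parameter governing the charge-to-energy ratio of the injected beam, e.g.\ the relative proportion of co- and counter-charged particles in the support of $f$---and to set up a teleological scheme that will eventually solve for this modulus as a function of the $\mathcal X_0$-data. It would be natural to first establish decay for the \emph{linear} massless Vlasov equation on the fixed extremal background away from $\mathcal H^+$, extending \cite{W23}, before attacking the coupled problem.

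The heart of the argument would be a bootstrap/continuity argument on a region $\mathcal R$ foliated by $\{u=\mathrm{const}\}$ and $\{v=\mathrm{const}\}$ cones, with bootstrap assumptions: (i) the geometry stays $\epsilon$-close to extremal RN, with $\partial_v r\ge 0$, $\partial_u r<0$, and $Q^2/r^2$ close to $1$ but with no marginally trapped surfaces; (ii) the matter observables decay in the wave zone at the sharp dispersive rates of \cref{prop:approx-1} ($|T|\lesssim v^{-2}$, with better decay of the good null components) and fall off in $u$ towards $i^+$; (iii) near $\mathcal H^+$ the transverse derivatives of $T$ and of the geometry are controlled by the (generically saturated) Aretakis--Weissenbacher growth rate. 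The improvements would be driven by the monotonicity of the particle energy along electromagnetic null geodesics, measured against the stationary Killing field of the background; by the structure of the effective radial potential on near-extremal RN, which forces every geodesic in $\spt f$ either to cross $\mathcal H^+$ in bounded $v$ or to escape to $\mathcal I^+$, so that no matter lingers in the domain of outer communication; by the ordinary redshift, available \emph{away} from $\mathcal H^+$, which gives exponential-in-$v$ decay there; and, across $\mathcal H^+$, by the fact that $\underline C_\mathrm{in}$ injects only a small finite total matter flux, which through the reduced equations can move $(r,Q,m)$ only a bounded distance from their extremal values. Together these should yield global existence in $\mathcal R$---here one must also propagate the conditional well-posedness hypotheses for massless Vlasov, namely compact momentum support and the requisite regularity of $f$---together with convergence of $(r,\Omega^2,Q,m)$ in $J^-(\mathcal I^+)\setminus\mathcal H^+$ to a Reissner--Nordstr\"om solution with parameters $(M_\infty,e_\infty)$ depending continuously on the data.

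With this decay in hand, the teleological condition becomes accessible: by the spherically symmetric Birkhoff theorem the matter-free region settles to Reissner--Nordstr\"om, and $\mathfrak M_\mathrm{stab}$ is cut out by the single equation $|e_\infty|=M_\infty$. To show this is a genuine codimension-one submanifold rather than merely a closed set, one computes the variation of the extremality deficit $M_\infty-|e_\infty|$ with respect to the extremality modulus and checks it is nonzero---which follows from the monotone dependence of the beam's contribution to $(M_\infty,e_\infty)$ on its charge content, exactly paralleling the sub/extremal/super-extremal trichotomy of \cref{thm:main}---and then invokes the implicit function theorem in $\mathcal X$ to realize $\mathfrak M_\mathrm{stab}$ as a graph over $\mathcal X_0$. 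Restricting the convergence to $\mathcal H^+$ gives the horizon statement: the induced data converge in directions tangent to $\mathcal H^+$ to those of extremal RN, while the transverse derivatives obey the inhomogeneous Aretakis conservation law along $\mathcal H^+$ whose source is the (decaying but non-integrable-derivative) matter flux, producing the ``Vlasov hair.''

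The main obstacle is the absence of the redshift at the extremal horizon: near $\mathcal H^+$ there is no exponential damping, the Aretakis instability is active, and one must simultaneously (a) show the event horizon survives---$r\to M$ along $\mathcal H^+$, $\partial_v r\to 0^+$, and no marginally trapped surface ever forms, so that the black hole region is precisely that of extremal RN and not of a subextremal one (note that $\partial_v r\ge 0$ degenerates exactly at $\mathcal H^+$, so this is borderline)---and (b) show the ``Vlasov hair'' growth is at most polynomial and does not feed back to break the bootstrap. Controlling this degenerate near-horizon regime, and in particular excluding dynamical ``re-subextremalization'' toward $i^+$ while permitting horizon hair, is the analogue here of the delicate near-horizon analysis in the semilinear models of \cite{A16,AAG20} and the coupled numerics of \cite{Reall-numerical}, and should be the hardest part; the teleological, non-open character of $\mathfrak M_\mathrm{stab}$ (which precludes a soft intermediate-value argument and forces quantitative control of the extremality deficit along the evolution) and the conditional well-posedness of massless Vlasov are further significant complications.
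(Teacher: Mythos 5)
The statement you are addressing is Conjecture~\ref{conj:massless-stab} of the paper: it is stated as an open conjecture, and the paper contains no proof of it (nor claims one). The authors explicitly position it among expectations motivated by \cite{A16,AAG20,Reall-numerical,W23,VelozoPhD} and by \cite[Conjecture IV.2]{DHRT}; the theorems actually proved in the paper (\cref{thm:main}, \cref{thm:third-law-Vlasov}, \cref{thm:it-jumps}) construct particular solutions that are \emph{exactly} extremal Reissner--Nordstr\"om at late advanced times, and do not address asymptotic stability of the extremal exterior under general small perturbations. So there is no paper proof against which your attempt can be matched.

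As for the attempt itself: what you have written is a research programme, not a proof, and the places where it would have to become a proof are exactly the places you leave open. (i) The decisive difficulty --- the degenerate analysis at $\mathcal H^+$ in the absence of the redshift, including showing that no marginally trapped sphere forms, that $r\to M$ and $\partial_v r\to 0^+$ along the horizon, and that the Aretakis-type transverse growth (``Vlasov hair'') stays polynomial and does not destroy the bootstrap --- is named but not carried out; no bootstrap norms, no hierarchy of estimates, and no mechanism replacing the redshift are specified, and this is precisely what makes the conjecture open rather than a corollary of known techniques. (ii) The codimension-one statement is asserted via an implicit function theorem applied to the map data $\mapsto M_\infty-|e_\infty|$, but this presupposes the very asymptotic control (existence, continuity and differentiability of the final parameters as functions of the data, nondegeneracy of the variation in the ``extremality modulus'') that the missing global estimates would have to provide; the teleological character of $\mathfrak M_\mathrm{stab}$ cannot be dispatched by a formal transversality computation. (iii) Even the linear ingredient --- decay for massless Vlasov on the fixed extremal background away from $\mathcal H^+$, beyond the non-decay/instability results of \cite{W23} --- is invoked as ``extending \cite{W23}'' without argument. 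In short, your outline is a sensible articulation of how one might attack the conjecture, broadly consistent with the heuristics the authors themselves cite, but it neither proves the statement nor could it be compared to a proof in the paper, since none exists.
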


\begin{rk}
 There exist nontrivial spherically symmetric static solutions of the Einstein--massless Vlasov system containing a black hole which are isometric to a Schwarzschild solution in a neighborhood of the event horizon   \cite{Hakan-massless-steady}. However, these are not small perturbations of Schwarzschild (as the structure of trapping for null geodesics has to be significantly modified) and their existence is therefore consistent with \cite{VelozoPhD} and \cref{conj:massless-stab}.
\end{rk}

The massive Vlasov equation admits many nontrivial stationary states on black hole backgrounds, which is an obstruction to decay and we do not expect a general asymptotic stability statement to hold, even in the subextremal case. In fact, it has been shown that there exist spherically symmetric static solutions of Einstein--massive Vlasov bifurcating off of Schwarzschild \cite{Rein-static,Schw-bifurcation}. We refer to \cite{VelozoPhD} for a characterization of the ``largest'' region of phase space on which one can expect decay for the massive Vlasov energy-momentum tensor on a Schwarzschild background. However, one might still hope for \emph{orbital} stability of the exterior, with a non-decaying Vlasov matter \emph{atmosphere}, and that the horizon itself decays to that of extremal Reissner--Nordstr\"om:

\begin{conj}\label{conj:massive-stab}
    The extremal Reissner--Nordstr\"om solution is nonlinearly orbitally stable to spherically symmetric perturbations in the Einstein--Maxwell--massive Vlasov model in the following sense: Given
    sufficiently small characteristic data posed on a bifurcate null hypersurface $ C_\mathrm{out}\cup \underline C_\mathrm{in}$ and lying on a ``codimension-one submanifold'' $\mathfrak M_\mathrm{stab}$ of the moduli space of such initial data, the maximal Cauchy development contains a black hole which remains close to an extremal Reissner--Nordstr\"om solution in the domain of outer communication and asymptotically settles down to extremal Reissner--Nordstr\"om tangentially along the horizon, with possibly growing ``Vlasov hair'' transverse to the horizon. 
\end{conj}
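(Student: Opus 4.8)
The plan is to run a characteristic bootstrap argument in a double-null gauge, coupled to a teleological modulation of a single scalar parameter pinning the asymptotic state to extremality. First I would work in coordinates $(u,v)$ with $g = -\Omega^2\,du\,dv + r^2\gamma$, where $\gamma$ is the unit round metric on $S^2$, and track the area radius $r$, the renormalized Hawking mass $m$, the charge function $Q = (4\pi)^{-1}\int_{S_{u,v}}\star F$, and the distribution function $f$, whose physical- and momentum-space support is governed by the conserved energy, angular momentum, and mass-shell relation along electromagnetic geodesics. The reference solution is extremal Reissner--Nordström, which is electrovacuum ($f\equiv 0$) with $|Q| = m = M$; unlike the Kerr and Kerr--Newman obstructions in \cref{rk:axisymmetry}, the charge function is allowed to evolve here precisely because the perturbing matter is charged, so the extremality condition is only asymptotic: the teleological limits $M_\infty \doteq \lim m$ and $Q_\infty \doteq \lim_{v\to\infty} Q$ along $\mathcal H^+$ should satisfy $|Q_\infty| = M_\infty$. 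I would single out the modulation functional $\mathfrak a \doteq |Q_\infty| - M_\infty$ --- a quantity defined only once global existence in the exterior is established --- and define $\mathfrak M_{\mathrm{stab}}$ as its zero set; continuity of $\mathfrak a$ on the space of small characteristic data, together with a transversality/intermediate-value argument along one-parameter subfamilies in the spirit of the modulation theory of \cite{DHRT,KU22}, would identify $\mathfrak M_{\mathrm{stab}}$ as the desired ``codimension-one submanifold.''

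Second, the core is a continuity argument on an exterior region $\mathcal R$ foliated by $(u,v)$, with bootstrap assumptions of the schematic form: $(i)$ $r,\Omega^2,m,Q$ are $\ve$-close, in weighted norms degenerating at the horizon, to their extremal Reissner--Nordström values throughout the domain of outer communication; $(ii)$ the matter moments $N^\mu[f]$ and $T^{\mu\nu}[f]$ are pointwise $\ve$-small, with suitable $v$-decay \emph{away from} $\mathcal H^+$ but only a uniform-in-$v$ bound \emph{near} $\mathcal H^+$; $(iii)$ the support of $f$ in phase space stays within an $\ve$-neighbourhood of a union of bound and unbound extremal Reissner--Nordström orbits. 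One improves these by integrating the spherically symmetric Einstein--Maxwell--Vlasov system: the Raychaudhuri equations for the ingoing and outgoing expansions (used, via the absence of trapped surfaces forced by the asymptotically extremal geometry, to propagate the signs of the expansions), the evolution equations for $m$, the Maxwell equation expressing $\partial_v Q$ in terms of the charge flux, and the Vlasov transport equation integrated along electromagnetic geodesics. Away from $\mathcal H^+$ these are essentially perturbative estimates off a subextremal-type region, and the transport estimate uses that near-threshold massive geodesics shadow the explicitly understood timelike geodesics of extremal Reissner--Nordström.

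The hard part --- and the reason one expects only \emph{orbital} rather than asymptotic stability --- is the interplay between the degenerate horizon (no redshift) and the non-dispersive nature of massive Vlasov matter: massive particles can occupy stable bound orbits around the black hole, so $T^{\mu\nu}[f]$ cannot be expected to decay, only to remain small. I would handle $\mathcal H^+$ by replacing the redshift with an Aretakis-type conservation law --- the transversal derivative of $r^2 N_v$ restricted to $\mathcal H^+$ is generically non-decaying or linearly growing, which is exactly the ``Vlasov hair'' in the conjecture (cf.\ \cite{W23}) --- and must show this does not destabilise the tangential dynamics. Concretely this requires: $(a)$ an $r^p$-type weighted energy hierarchy for the matter moments adapted to the extremal throat, yielding a degenerate but still coercive flux controlling $T^{\mu\nu}[f]$ in the exterior, in the spirit of the model-problem analysis of \cite{A16,AAG20} and the numerics of \cite{Reall-numerical}; $(b)$ a weaker, decay-free estimate in a collar $\{r\le r_0\}$ of $\mathcal H^+$ establishing that the trapped massive matter neither collapses --- which, producing a trapped surface, would contradict the asymptotically extremal geometry --- nor needs to disperse, and here a sharp structural understanding of spherically symmetric massive Vlasov quasi-stationary configurations on extremal Reissner--Nordström, in the vein of \cite{VelozoPhD,Rein-static,Schw-bifurcation}, seems indispensable; and $(c)$ closing the modulation by showing $\mathfrak a = 0$ is achieved precisely by a codimension-one tuning of the data. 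I expect step $(b)$ to be the principal obstacle: controlling a genuinely non-decaying matter cloud sitting on a horizon with no stabilising redshift, uniformly in $v$, without either forcing collapse or imposing artificial decay, is exactly the feature that makes the massive case qualitatively harder than the massless \cref{conj:massless-stab}.
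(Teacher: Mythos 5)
The statement you are trying to prove is \cref{conj:massive-stab}, which the paper states as a \emph{conjecture} and does not prove; indeed the authors explicitly remark that this type of nonlinear orbital stability for massive Vlasov ``has not yet been proven even in the subextremal case.'' So there is no proof in the paper to compare against, and what you have written is a research programme rather than a proof. Your outline is a reasonable articulation of the expected strategy (double-null bootstrap in the exterior, teleological modulation of the asymptotic charge-to-mass ratio to carve out a codimension-one set, Aretakis-type non-decay transverse to $\mathcal H^+$ as the ``Vlasov hair''), but every load-bearing step is itself an open problem: (a) your ``essentially perturbative estimates off a subextremal-type region'' away from the horizon presuppose a subextremal orbital stability theory for massive Vlasov that does not exist; (b) the collar estimate near the degenerate horizon --- uniform-in-$v$ control of a non-decaying bound matter cloud with no redshift --- is, as you yourself concede, the principal obstacle, and no mechanism is supplied beyond naming it; (c) the construction of $\mathfrak M_{\mathrm{stab}}$ as the zero set of the teleological functional $\mathfrak a=|Q_\infty|-M_\infty$ requires global existence and convergence of the limits for an open set of data \emph{before} the codimension-one restriction is imposed, which is precisely what the bootstrap is supposed to deliver only on $\mathfrak M_{\mathrm{stab}}$; breaking this circularity is the content of a genuine modulation argument, not a corollary of continuity plus the intermediate value theorem.

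One concrete incorrect step worth flagging: in (b) you argue that collapse of the trapped matter is excluded because a trapped surface ``would contradict the asymptotically extremal geometry.'' That implication is not available. What follows softly from Penrose incompleteness plus Cauchy stability is that a \emph{critical} solution contains no trapped surfaces; that an \emph{asymptotically extremal} black hole spacetime contains none behind its horizon is a nontrivial statement about the black hole interior. The paper proves a statement of this kind only in the Einstein--Maxwell--(neutral, massless) scalar field model, and notes that the argument uses crucially the constancy of the charge and the absence of $T_{uv}$ --- both of which fail for charged Vlasov matter, where $Q$ is dynamical and $T_{uv}\neq 0$. So this exclusion would itself need a new argument, and as written your step (b) assumes the conclusion of (a nontrivial part of) \cref{conj:ecc-stab}.
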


\begin{rk}
  We emphasize that this type of nonlinear orbital stability for massive Vlasov has not yet been proven even in the subextremal case, where we do not expect horizon hair to occur.    
\end{rk}

With the conjectured description of the stability properties of the \emph{exterior} of the critical solution at hand, we are now ready to state our conjecture for the \emph{global} stability of the extremal critical collapse families in \cref{thm:main}. Refer to \cref{fig:ecc-stability} for a schematic depiction of this conjecture. 

\begin{figure}
\centering{
\def\svgwidth{15pc}
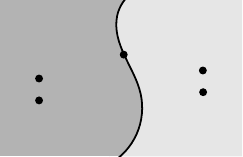}
\caption{A cartoon depiction of the conjectured structure of a neighborhood of moduli space near an interpolating family $\{\Psi_\lambda\}$ from \cref{thm:main}. We have suppressed infinitely many dimensions and emphasize the codimension-one property of the critical ``submanifold'' $\mathfrak B_\mathrm{crit}$ which consists of asymptotically extremal black holes in accordance with \cref{conj:massless-stab,conj:massive-stab}. The interpolating family $\{\Psi_\lambda'\}$ is a small perturbation of $\{\Psi_\lambda\}$ which also crosses $\mathfrak B_\mathrm{crit}$ and exhibits extremal critical collapse. Locally, $\mathfrak B$ is foliated by ``hypersurfaces'' $\mathfrak B(\mathfrak r)$ consisting of black hole spacetimes with asymptotic parameter ratio $\mathfrak r$ close to $1$.}\label{fig:ecc-stability} 
\end{figure}

\begin{conj}\label{conj:ecc-stab} Extremal critical collapse is stable in the following sense:
    Consider the moduli space $\mathfrak M$ of the spherically symmetric Einstein--Maxwell--Vlasov system for particles of mass $\mathfrak m$. Let $\{\Psi_\lambda\}$ be one of the interpolating families given by \cref{thm:main}. Then there exists a ``codimension-one submanifold'' $\mathfrak B_\mathrm{crit}$ of $\mathfrak M$ such that $\Psi_0\in\mathfrak B_\mathrm{crit}\subset\mathfrak B$, which has the following properties:
    \begin{enumerate}
        \item $\mathfrak B_\mathrm{crit}$  is critical in the sense that $\mathfrak B$ and $\mathfrak C$ locally lie on opposite sides of $\mathfrak B_\mathrm{crit}$.  
        \item If $\mathfrak m=0$ and $\Psi\in \mathfrak B_\mathrm{crit}$, the domain of outer communication of the maximal Cauchy development of $\Psi$ asymptotically settles down to an extremal Reissner--Nordstr\"om black hole as in \cref{conj:massless-stab}. 
        \item If $\mathfrak m>0$ and $\Psi\in \mathfrak B_\mathrm{crit}$, the domain of outer communication of the maximal Cauchy development of $\Psi$ remains close to an extremal Reissner--Nordstr\"om black hole and the event horizon asymptotically settles down to an extremal Reissner--Nordstr\"om event horizon as in \cref{conj:massive-stab}.
    \end{enumerate}
    Therefore, any nearby interpolating family $\{\Psi_\lambda'\}$ also intersects $\mathfrak B_\mathrm{crit}$ and exhibits extremal critical collapse. 
\end{conj}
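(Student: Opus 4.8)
The plan is to reduce \cref{conj:ecc-stab} to \cref{conj:massless-stab,conj:massive-stab}, suitably strengthened, together with a ``Cauchy-to-characteristic'' transfer argument. The first and by far the hardest step is to upgrade \cref{conj:massless-stab,conj:massive-stab} from one-sided statements to a description of an \emph{entire} neighborhood of the trivial (exact extremal Reissner--Nordström) characteristic data on a bifurcate null hypersurface $C_\mathrm{out}\cup\underline C_\mathrm{in}$. Concretely, one wants a ``codimension-one submanifold'' $\mathfrak M_\mathrm{stab}$---the local stable set of the ``extremal Reissner--Nordström endpoint''---whose developments settle down to extremal Reissner--Nordström as in \cref{conj:massless-stab} (for $\mathfrak m=0$) or \cref{conj:massive-stab} (for $\mathfrak m>0$); \emph{and}, crucially, the assertion that data on one side of $\mathfrak M_\mathrm{stab}$ develop an honest trapped surface and settle down to a \emph{subextremal} Reissner--Nordström domain of outer communication, while data on the other side are future causally geodesically complete and, in their far region, isometric to a causal diamond of a \emph{superextremal} Reissner--Nordström solution (dispersing as in \cref{thm:main} when $\mathfrak m>0$). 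This is the nonlinear ``saddle'' picture at the threshold, of which \cref{thm:main} exhibits a single transversal slice, and it is here that the genuine difficulty lies: the degeneracy of the event horizon, the absence of the redshift, the Aretakis and Weissenbacher instabilities \cite{Aretakis-instability-1,W23}, the massive Vlasov stationary states, and the need to isolate and control the single unstable mode (the near-extremal parameter ratio) must all be handled at once---and in the massive case even the subextremal orbital stability underlying \cref{conj:massive-stab} is presently unknown.

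Granting such a statement, one transfers it to the Cauchy moduli space $\mathfrak M$ as follows. In the critical development $\mathcal D_{\lambda_*}$ of \cref{thm:main} the late-advanced-time domain of outer communication is \emph{exactly} extremal Reissner--Nordström and the spacetime is electrovacuum outside a fixed compact set, so one may choose a bifurcate null hypersurface $C_\mathrm{out}\cup\underline C_\mathrm{in}$ in that far region on which the induced characteristic data is precisely the trivial datum of $\mathfrak M_\mathrm{stab}$. By Cauchy stability for Einstein--Maxwell--Vlasov (cf.~\cref{sec:local-WP}; using $\mathfrak m>0$ for unconditional well-posedness) together with the global structure inherited from \cref{thm:main}, every $\Psi$ in a small neighborhood of $\Psi_{\lambda_*}$, in a topology respecting asymptotic flatness, has a development still containing such a bifurcate null hypersurface in the corresponding far region, with induced characteristic data depending continuously---indeed, one expects smoothly---on $\Psi$; one further needs this restriction map to be a submersion near $\Psi_{\lambda_*}$ transverse to $\mathfrak M_\mathrm{stab}$, which is a question about the spherically symmetric constraint map. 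One then \emph{defines} $\mathfrak B_\mathrm{crit}$ locally as the preimage of $\mathfrak M_\mathrm{stab}$: it inherits the ``codimension-one'' property, contains $\Psi_{\lambda_*}$, and its developments settle down to extremal Reissner--Nordström. That $\mathfrak B_\mathrm{crit}\subset\mathfrak B$ and that $\mathfrak B,\mathfrak C$ locally lie on opposite sides of it is then immediate from the two-sided part of the strengthened statement, together with the fact that the presence of a trapped surface already forces membership in $\mathfrak B$ for weakly tame matter \cite{dafermos-trapped-surface,Kommemi13,Penrose}.

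Finally, the ``therefore'' clause follows by connectedness. For any interpolating family $\{\Psi_\lambda'\}$ sufficiently $C^0$-close to $\{\Psi_\lambda\}$, the endpoint $\Psi_0'$ is close to Minkowski, hence in $\mathfrak C$ (by the dispersive stability that is part of the strengthened package), while $\Psi_1'$ is close to $\Psi_1$, which contains a trapped surface, hence $\Psi_1'\in\mathfrak B$. Since $\mathfrak B_\mathrm{crit}$ locally separates $\mathfrak C$ from $\mathfrak B$, the continuous path must meet $\mathfrak B_\mathrm{crit}$; as the two sides are locally open---the $\mathfrak B$-side because having a trapped surface is an open condition, the $\mathfrak C$-side by the dispersive stability above---the value $\lambda_*'=\sup\{\lambda:\Psi_\lambda'\in\mathfrak C\}$ lands on $\mathfrak B_\mathrm{crit}$, so $\mathcal D_{\lambda_*'}$ settles down to an extremal black hole and $\{\Psi_\lambda'\}$ exhibits extremal critical collapse. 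In short, the whole scheme is conditional, with essentially all the difficulty concentrated in the quantitative nonlinear stability analysis of the extremal Reissner--Nordström exterior and its degenerate horizon, including the identification and control of the unstable mode; by contrast, the Cauchy-to-characteristic transfer and the topological arguments, while requiring care with function spaces and with the global structure inherited from \cref{thm:main}, should be comparatively routine once the strengthened stability theorem is in hand.
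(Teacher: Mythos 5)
The statement you are addressing is \cref{conj:ecc-stab}, which the paper presents as a \emph{conjecture}: there is no proof of it in the paper, only a heuristic discussion (the picture of $\mathfrak B_\mathrm{crit}$ as the stable ``hypersurface'' corresponding to $\mathfrak M_\mathrm{stab}$ of \cref{conj:massless-stab,conj:massive-stab}, the foliation $\mathfrak B(\mathfrak r)$, and the transversality statement \eqref{eq:parameter-ratio-5}). Your proposal does not close this gap: it is a conditional reduction whose entire mathematical content is delegated to a \emph{strengthened, two-sided} version of \cref{conj:massless-stab,conj:massive-stab} --- namely a full saddle-type description of a neighborhood of exact extremal Reissner--Nordstr\"om characteristic data, with a codimension-one stable set, trapped-surface formation and subextremal asymptotics on one side, and dispersion with superextremal far-field on the other. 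That strengthened statement is strictly stronger than the conjectures the paper states, and the paper itself records obstructions to proving even the weaker versions (degenerate horizon, Aretakis/Weissenbacher instabilities, and, in the massive case, the fact that the subextremal orbital stability underlying \cref{conj:massive-stab} is itself open). Assuming it, and then transferring to $\mathfrak M$, is not a proof of \cref{conj:ecc-stab}; it is a restatement of the conjecture in characteristic-data language.

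Beyond the central missing ingredient, two of your ``comparatively routine'' steps also conceal unproved claims. First, the Cauchy-to-characteristic transfer requires not just Cauchy stability but that the map $\Psi\mapsto(\text{induced data on }C_\mathrm{out}\cup\underline C_\mathrm{in})$ be (in a suitable topology respecting asymptotic flatness) regular and transverse to $\mathfrak M_\mathrm{stab}$ near $\Psi_{\lambda_*}$; since neither $\mathfrak M$ nor $\mathfrak M_\mathrm{stab}$ has been given a manifold structure, ``preimage of a codimension-one submanifold under a submersion'' is not yet a meaningful construction, and this is precisely the kind of moduli-space foundational issue the paper deliberately leaves vague. Second, openness of the $\mathfrak C$-side in your connectedness argument is not a soft fact: future causal geodesic completeness is not an open condition per se, and the dispersive stability you invoke for data near (but not at) the threshold is again part of the conjectural two-sided package, not something available from \cref{thm:main} (which only controls the specific family $\{\Psi_\lambda\}$). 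Your outline is consistent with the paper's heuristic picture, and honestly flags where the difficulty lies, but as it stands it establishes nothing that the paper does not already conjecture.
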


\begin{rk}\label{rk:subextremal-foliation}
   We further conjecture that given $\mathfrak r\in [1-\ve,1]$ for some $\ve>0$, there exists a one-parameter family of disjoint ``codimension-one submanifolds''  $\mathfrak B(\mathfrak r)\subset\mathfrak B$, varying ``continuously'' in $\mathfrak r$, such that $\mathfrak B(1)=\mathfrak B_\mathrm{crit}$ and if $\Psi\in \mathfrak B(\mathfrak r)$, then the maximal Cauchy development of $\Psi$ contains a black hole which asymptotes to a Reissner--Nordstr\"om black hole with parameter ratio $\mathfrak r=e_f/M_f$, where $M_f$ is the final renormalized Hawking mass and $e_f$ is the final charge. One can then interpret equation \eqref{eq:parameter-ratio-5} below as saying that the families $\{\Psi_\lambda\}$ in \cref{thm:main} intersect the foliation $\{\mathfrak B(\mathfrak r)\}$ \emph{transversally}, as depicted in \cref{fig:ecc-stability}.
\end{rk}

While one should think that $\mathfrak B_\mathrm{crit}$ in \cref{conj:ecc-stab} corresponds to $\mathfrak M_\mathrm{stab}$ in \cref{conj:massless-stab,conj:massive-stab}, Part~1 of \cref{conj:ecc-stab} is also a highly nontrivial statement about the \emph{interiors} of the black holes arising from $\mathfrak B_\mathrm{crit}$. In particular, by the incompleteness theorem, it would imply that there are no trapped surfaces in the maximal developments of any member of $\mathfrak B_\mathrm{crit}$; see \cite[Remark 1.8]{gajic-luk} and the following remark.

\begin{rk}
  Using arguments from \cite[Appendix A]{LukOhI}, one can show the following statement in the spherically symmetric Einstein--Maxwell-(neutral and massless) scalar field model: If the maximal Cauchy development of a partial Cauchy hypersurface\footnote{By this, we mean an asymptotically flat spacelike hypersurface which terminates at a symmetric sphere with positive area-radius. If the charge is nonzero and nondynamical (as in the neutral scalar field model), one cannot have a regular center.} with $\partial_ur<0$ contains a black hole with asymptotically extremal parameter ratio, then the development does not contain trapped symmetry spheres. The argument uses crucially the constancy of charge and absence of $T_{uv}$ in this model.
\end{rk}

\subsection{Extremal critical collapse of a charged scalar field and in vacuum} 

It is natural to conjecture the analog of \cref{thm:main} for a massless charged scalar field in spherical symmetry:

\begin{conj}
    Extremal critical collapse occurs in the spherically symmetric Einstein--Maxwell-charged scalar field model and there exist critical solutions which are isometric to extremal Reissner--Nordstr\"om in the domain of outer communication after sufficiently large advanced time. 
\end{conj}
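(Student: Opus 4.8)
The plan is to follow the template of \cref{thm:main}. One would construct a smooth one-parameter family $\{\Psi_\lambda\}_{\lambda\in[0,1]}$ of smooth, spherically symmetric, one-ended asymptotically flat Cauchy data by prescribing an incoming charged scalar field pulse from past null infinity $\mathcal I^-$, whose amplitude and phase profile depend on $\lambda$ and which must be chosen so that the charge is \emph{dynamically significant} (i.e.\ not in the regime where one expects the Choptuik-type naked-singularity critical behaviour that this model also exhibits). The critical solution $\mathcal D_{\lambda_*}$ should be built with an exactly extremal Reissner--Nordström domain of outer communication at late advanced times, by adapting the construction of \cite{KU22}: one starts from an exact extremal Reissner--Nordström exterior and, through the spacetime region swept out by the pulse, attaches a scalar field transitioning smoothly from identically zero to nontrivial, continuing inward to a regular center. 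The three branches would then be handled by the three analytical pillars already used for Vlasov: (i) small-data global existence and decay for the spherically symmetric Einstein--Maxwell--charged scalar field system, for the dispersive branch $\lambda<\lambda_*$; (ii) a Christodoulou--Dafermos-type trapped surface formation argument for charged scalar pulses, for the collapsing branch $\lambda>\lambda_*$; and (iii) a teleological ``event horizon from the future'' argument for $\mathcal D_{\lambda_*}$. One then verifies that the family is smooth in $\lambda$ and electrovacuum outside a fixed compact set, so that it is an interpolating family in the sense of \cref{def:ECC}.

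The conceptual core, exactly as in \cref{thm:main}, is the mechanism distinguishing dispersion from collapse. For Vlasov this is the ``bounce'' produced by angular momentum together with the electromagnetic field; a spherically symmetric scalar field carries no angular momentum, so the repulsion must be \emph{purely electromagnetic}---the direct Lorentz force on the charged field together with the geometric effect of the accumulated interior charge---and it is this electromagnetic repulsion that must hold the pulse back against gravitational collapse, keeping the region it sweeps out free of trapped surfaces at criticality. The quantitative input should be inherited from Ori's bouncing charged null dust model (\cref{thm:ECC-Ori}), which \cref{thm:main} already desingularizes: the map $\lambda\mapsto(M_f(\lambda),e_f(\lambda))$ from the incoming profile to the final renormalized Hawking mass and charge, together with the relation linking the innermost radius reached by the pulse to the would-be horizon radius $r_+(M_f,e_f)$, should be arranged so that the transition between ``the pulse escapes and the development is future complete'' and ``the pulse is trapped'' occurs precisely where $e_f/M_f=1$. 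An intermediate-value argument in $\lambda$, mirroring the dust computation, would then isolate $\lambda_*$ and identify $\mathcal D_{\lambda_*}$ as an extremal black hole with no trapped surfaces, exactly as in Parts~1--3 of \cref{thm:main}, with a superextremal Reissner--Nordström region appearing exterior to the pulse for $\lambda$ slightly below $\lambda_*$.

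The hard part---and the reason this remains a conjecture---is that a massless charged scalar field cannot be treated by geometric optics, so the ``desingularization of dust'' viewpoint that drives the proof of \cref{thm:main} is unavailable. The obstruction is structural: for a complex scalar field the energy density is quadratic in the phase gradient while the charge density is only linear in it, so a WKB wave packet $\phi\sim a\,e^{i\psi/\varepsilon}$ has charge-to-mass ratio of order $\mathfrak e\varepsilon$, which degenerates as $\varepsilon\to 0$; an order-one ratio---essential for near-extremality---forces the pulse to have order-one frequency, and hence to be a genuine nonlinear wave rather than a beam of particles moving along electromagnetic geodesics. One therefore cannot reduce the interior dynamics to ODEs along bicharacteristics and must instead control the gauge-covariant wave equation directly through the near-extremal ``throat'', exactly the regime where the Aretakis-type degeneracies of the extremal horizon (cf.\ \cite{W23}) are felt; moreover, the gluing in step (iii) must be performed near an extremal rather than a subextremal horizon, which is itself delicate. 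By contrast, once the critical solution and the detailed structure of the transition are understood, the dispersive and collapsing branches should follow from existing technology for the spherically symmetric charged scalar field, and assembling a smooth interpolating family is then chiefly a matter of carefully arranging the $\lambda$-dependence of the incoming profile.
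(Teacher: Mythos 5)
This statement is a conjecture in the paper: the authors offer no proof of it, and indeed they explicitly note that the characteristic-gluing construction of \cite{KU22} produces an extremal Reissner--Nordstr\"om exterior in collapse for the charged scalar field but ``is inadequate to address whether the solution \ldots is critical.'' So there is no paper argument to measure your proposal against, and your text is, by your own admission, a strategy outline rather than a proof. As such it cannot be accepted as establishing the statement.

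The genuine gap is the one you name but do not fill, and it is worth being precise about why it is fatal to the outline as written. The entire engine of \cref{thm:main} is the dust model of \cref{thm:ECC-Ori}: the bouncing charged null dust family tells you, at the level of explicit initial data, exactly where the dispersion/collapse threshold sits and that it coincides with $e_f/M_f=1$, and the Vlasov construction is a quantitative desingularization of that family with estimates uniform in the beam parameters. For a charged scalar field there is no analogous singular model whose threshold you already understand: the geometric-optics scaling you describe degenerates the charge-to-mass ratio, so there is no limiting ``beam'' regime in which the pulse follows electromagnetic geodesics and the interior dynamics reduce to ODEs. Consequently your step in which the map $\lambda\mapsto(M_f,e_f)$ and the bounce radius are ``arranged so that'' the transition occurs precisely at extremality is not an argument but a restatement of the conjecture: without the dust scaffolding, nothing rules out that the threshold of your family is instead a Choptuik-type naked singularity, a non-dispersive complete solution, or a subextremal black hole reached discontinuously --- possibilities the paper itself emphasizes for this model. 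Likewise, item (iii) (a teleological construction of the critical solution with an exactly extremal exterior, regular center, and no trapped surfaces, embedded \emph{smoothly} in the interpolating family) is precisely what is missing from \cite{KU22} and is not supplied here; and the claimed trapped-surface-formation and small-data-dispersion inputs for the charged field, while plausible, are invoked rather than established in the regime you need (near-extremal charge, data tuned to the threshold). In short, the proposal correctly identifies the shape a future proof might take and the principal obstruction, but the central mechanism --- a substitute for the dust limit that pins the collapse threshold to extremality and controls the field through the near-extremal region --- is absent, so the statement remains, as in the paper, a conjecture.
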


In \cite{KU22}, the present authors showed that a black hole with an extremal Reissner--Nordstr\"om domain of outer communication and containing no trapped surfaces can arise from regular one-ended Cauchy data in the spherically symmetric charged scalar field model (see Corollary 3 of \cite{KU22}). The proof is based on a \emph{characteristic gluing} argument, in which we glue a late ingoing cone in the interior of extremal Reissner--Nordstr\"om to an ingoing cone in Minkowski space. The desired properties of the spacetime are obtained softly by Cauchy stability arguments. In particular, the method is inadequate to address whether the solution constructed in \cite[Corollary 3]{KU22} is critical. 

It is also natural to conjecture the analog of \cref{thm:main} for the Einstein vacuum equations,
\begin{equation}
    \Ric(g)=0,
\end{equation}
where the role of extremal Reissner--Nordstr\"om is played by the rotating \emph{extremal Kerr} solution.\footnote{Recall also \cref{rk:axisymmetry}: replacing ``vacuum'' with ``electrovacuum'' and ``Kerr'' with ``Kerr--Newman with nonzero charge'' in \cref{conj:Kerr} is not possible!}
\begin{conj}\label{conj:Kerr}
    Extremal critical collapse occurs in vacuum gravitational collapse and there exist critical solutions which are isometric to extremal Kerr in the domain of outer communication after sufficiently large advanced time.
\end{conj}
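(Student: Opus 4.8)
\medskip
\noindent\textbf{Towards a proof of \cref{conj:Kerr}.} The plan is to mirror the strategy behind \cref{thm:main} and \cite[Corollary 3]{KU22}, replacing extremal Reissner--Nordstr\"om by extremal Kerr. One would construct a one-parameter family $\{\Psi_\lambda\}$ of smooth, one-ended asymptotically flat vacuum Cauchy data which, at sufficiently late times and outside a fixed compact set, is \emph{exactly} an extremal Kerr exterior, but which carries an incoming pulse of gravitational radiation realizing, via characteristic gluing, the transition from ``looks like Kerr'' to ``looks like Minkowski'' as one moves inward; the glued-in Kerr should have dimensionless parameter ratio $\mathfrak r_\lambda\doteq |J_\lambda|/M_\lambda^2$ depending continuously and monotonically on $\lambda$ and crossing the value $1$ at the expected critical parameter $\lambda_*$. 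A point worth stressing at the outset: by \cref{rk:axisymmetry}, such a family \emph{cannot} be axisymmetric, since the ADM angular momentum is conserved under axisymmetric vacuum evolution; injecting net angular momentum into a region that was initially Minkowskian is only possible via radiation with nonzero angular-momentum flux through $\mathcal I$, which in turn requires non-axisymmetry.

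Concretely, first I would perform the \emph{characteristic gluing}: glue a late ingoing null cone in the exterior of a (near-)extremal Kerr solution to an ingoing null cone in Minkowski space, using a non-axisymmetric version of the vacuum characteristic gluing of Aretakis--Czimek--Rodnianski, with careful treatment of the finite-dimensional space of gluing obstructions---the ADM energy--momentum, angular momentum, and center of mass---matched across the two ends. Because Kerr cones are not spherically symmetric (in contrast to the Reissner--Nordstr\"om cones of \cite{KU22}), both the gluing and this bookkeeping are substantially harder, and go beyond the perturbative characteristic gluing results currently available; the ``bounce'' of the Vlasov beam in \cref{thm:main} is here replaced by the gluing region itself. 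Next one would take $\Psi_0$ trivial and $\Psi_1$ to glue in a subextremal Kerr, and define $\lambda_*$ as the critical parameter. For $\lambda<\lambda_*$, where the glued-in region is a ``would-be superextremal Kerr'' cone, the task is to prove that $\mathcal D_\lambda$ is future causally geodesically complete and disperses; since the data are not small, this requires a stability-of-Minkowski argument (in the spirit of Christodoulou--Klainerman, \cite{Christ-stab-mink}, Lindblad--Rodnianski, or Bieri) exploiting that after the ``bounce'' the spacetime is vacuum with weak, purely outgoing radiation. For $\lambda>\lambda_*$, one would show that $\mathcal D_\lambda$ contains a trapped surface behind an event horizon asymptoting to a subextremal Kerr, using Christodoulou-type short-pulse/focusing methods or the glued-in subextremal Kerr exterior directly, together with Penrose's incompleteness theorem \cite{Penrose}. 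Exactly as in \cite{KU22}, the exact-isometry statement for $\mathcal D_{\lambda_*}$ in the domain of outer communication at late advanced times would then follow from a domain-of-dependence argument applied to the glued extremal Kerr cone, with completeness of $\mathcal I^\pm$ and triviality near $\{r=0\}$ coming from Cauchy stability; ruling out a naked singularity at the threshold would require showing that the engineered near-Kerr structure \emph{forces} $\mathcal D_{\lambda_*}$ to settle to exactly extremal Kerr rather than to a Choptuik-like critical solution, after which $\mathfrak r_{\lambda_*}=1$ pins down the asymptotic state and the interior is, to the future of the gluing cone, the extremal Kerr interior with its Cauchy horizon $\mathcal{CH}^+$.

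The hard part will be establishing \emph{criticality}: proving that subcritical data disperse and that supercritical data collapse to a subextremal Kerr. Both are large-data global problems for the vacuum Einstein equations \emph{without any symmetry} (\cref{rk:axisymmetry} rules out even axisymmetry), hence with no reduction to a $1+1$ system, no monotone Hawking mass, and none of the quantitative leverage these provide---in sharp contrast to the spherically symmetric Einstein--Maxwell--Vlasov setting of \cref{thm:main}. Compounding this, there is no ``bouncing vacuum dust'' or rotating thin-shell model exhibiting extremal critical collapse to \emph{desingularize} (the matter mechanism behind \cref{thm:main}, Ori's dust, and Pr\'oszy\'nski's shell has no vacuum analogue), so the construction must be carried out directly; and the non-axisymmetric characteristic gluing to an \emph{exact} Kerr cone, with all obstruction charges matched, is itself a serious undertaking. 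Even the much weaker conclusion that $\mathcal D_{\lambda_*}$ merely settles \emph{asymptotically} to extremal Kerr (as in \cref{def:ECC}) would require major progress on the open problem of nonlinear stability of extremal Kerr, where the Aretakis instability on the degenerate horizon and the analysis of the Teukolsky equation on extremal Kerr are the principal analytic obstacles.
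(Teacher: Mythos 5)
The statement you are addressing is \cref{conj:Kerr}, which the paper states as a \emph{conjecture} and does not prove; there is no proof in the paper to compare against. The paper's own surrounding discussion only records what is known: in \cite{KU23} the authors produced vacuum gravitational collapse to Kerr with $0\le |a|/M\le \mathfrak a_0$ small via characteristic gluing (using \cite{ACR1,Czimek2022-cl} as a black box), explicitly noting that the method fails for large $a$ and that even the \emph{formation} of an extremal Kerr black hole in collapse remains open, let alone its appearance as a critical solution; \cref{rk:axisymmetry} rules out axisymmetric families, and the subsequent paragraph points to the unresolved stability properties of extremal Kerr (Aretakis-type and azimuthal instabilities) as the obstacle even to the weaker asymptotic formulation of \cref{def:ECC}.

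Your proposal is therefore not a proof and cannot be graded as one, but as a roadmap it is faithful to the paper's own perspective: the gluing strategy modeled on \cite{KU22,KU23}, the correct use of \cref{rk:axisymmetry} to exclude axisymmetry, and the identification of the genuinely open ingredients --- non-axisymmetric characteristic gluing to an exact extremal Kerr cone with the obstruction charges matched, large-data dispersion for subcritical $\lambda$ without any symmetry reduction (no monotone Hawking mass, no $1+1$ structure), trapped-surface formation and convergence to subextremal Kerr for supercritical $\lambda$, ruling out alternative (e.g.\ Choptuik-like) threshold behavior, and the nonlinear stability problem for extremal Kerr. You concede all of these, which is the honest assessment; the one caution is that your framing occasionally reads as if the exact-isometry statement at $\lambda_*$ would follow ``exactly as in \cite{KU22}'' by domain-of-dependence and Cauchy stability, whereas even locating the critical parameter and showing the critical solution is a black hole at all (rather than dispersive or singular) is precisely the part with no current vacuum analogue of the bouncing-beam mechanism of \cref{thm:main}. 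In short: the paper leaves \cref{conj:Kerr} open, and so does your proposal.
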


In \cite{KU23}, the present authors constructed examples of vacuum gravitational collapse which are isometric to Kerr black holes with prescribed mass $M$ and specific angular momentum $a$, where $M$ and $a$ are any Kerr parameters satisfying $0\le |a|/M\le \mathfrak a_0$ for some small positive constant $\mathfrak a_0$. Like in \cite{KU22}, the proof is by characteristic gluing and uses the perturbative and obstruction-free gluing results of Aretakis--Czimek--Rodnianski \cite{ACR1} and Czimek--Rodnianski \cite{Czimek2022-cl} as a black box. In particular, this provided a fundamentally new proof of Christodoulou's seminal theorem on black hole formation in vacuum \cite{Christo09}. Our proof does not work for large values of $a$ and whether extremal Kerr black holes can form in gravitational collapse remains open.

\begin{rk} In \cite{KU23}, the Cauchy data $(\bar g,\bar k)$ are constructed with regularity $H^{7/2-}_\loc\times H^{5/2-}_\loc$, which is well above the threshold for classical existence and uniqueness for the Einstein vacuum equations \cite{HKM,Planchon-Rodnianski,C-low-reg}. This limited regularity is because the characteristic gluing results \cite{ACR1,Czimek2022-cl} which we use as a black box are limited to $C^2$ regularity of transverse derivatives in the non-bifurcate case. Using the more recent \emph{spacelike} gluing results of Mao--Oh--Tao \cite{Mao2023-cm}, it is possible to construct suitable Cauchy data in $H^s_\loc\times H^{s-1}_\loc$ for any $s$. 
\end{rk}

If extremal critical collapse involving the Kerr solution does occur, then one may also ask about stability as in \cref{subsec:stability-of-ecc}. In this case, the question hangs on the stability properties of extremal Kerr, which are more delicate than for extremal Reissner--Nordstr\"om. While extremal Kerr is mode-stable \cite{Rita-mode-stability}, axisymmetric scalar perturbations have been shown to exhibit the same non-decay and growth hierarchy as general scalar perturbations of extremal Reissner--Nordstr\"om \cite{Aretakis-Kerr,Aretakis-instability-3}. In light of the newly discovered \emph{azimuthal instabilities} of extremal Kerr by Gajic \cite{Gajic23}, in which growth of scalar perturbations already occurs at first order of differentiability, the full (in)stability picture of extremal Kerr may be one of spectacular complexity!

\subsection{The third law of black hole thermodynamics and event horizon jumping at extremality}\label{sec:EMV-third-law}

 \begin{figure}
\centering{
\def\svgwidth{14pc}
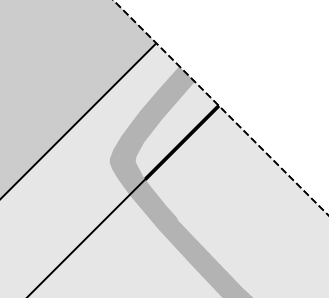}
\caption{Penrose diagram of a counterexample to the third law of black hole thermodynamics in the Einstein--Maxwell--Vlasov model from \cref{thm:third-law-Vlasov}. The broken curve $\mathcal A'$ is the outermost apparent horizon of the spacetime. This view is zoomed in on the Vlasov beam that charges up the subextremal black hole to extremality. We refer to \cref{fig:third-law-Vlasov-proof} in \cref{sec:Vlasov-third-law} for diagrams of the entire spacetime.}
\label{fig:third-law-Vlasov}
\end{figure}

The techniques used to prove \cref{thm:main} can also be immediately used to disprove the third law in the Einstein--Maxwell--Vlasov model, which complements our previous disproof in the Einstein--Maxwell-charged scalar field model \cite{KU22}. The present method has the advantage of constructing counterexamples which are past causally geodesically complete, like the spacetimes in \cref{thm:main}.

\begin{thm}\label{thm:third-law-Vlasov}
There exist smooth solutions of the Einstein--Maxwell--Vlasov system for either massless or massive particles that violate the third law of black hole thermodynamics: a subextremal Reissner--Nordstr\"om apparent horizon can evolve into an extremal Reissner--Nordstr\"om event horizon in finite advanced time due to the incidence of charged Vlasov matter. 

More precisely, there exist smooth, spherically symmetric, one-ended asymptotically flat Cauchy data for the Einstein--Maxwell--Vlasov system for either massive or massless particles such that the maximal globally hyperbolic development $\mathcal D$ has the following properties. 
\begin{enumerate}
    \item $\mathcal D$ contains a nonempty black hole region and for sufficiently large advanced times, the domain of outer communication, including the event horizon $\mathcal H^+$, is isometric to that of an extremal Reissner--Nordstr\"om solution.

    \item $\mathcal D$ contains a causal diamond which is isometric to a causal diamond in a subextremal Reissner--Nordstr\"om black hole, including an appropriate portion of the subextremal apparent horizon. This subextremal region contains an open set of trapped surfaces.

    \item The outermost apparent horizon $\mathcal A'$ of $\mathcal D$ has at least two connected components. One component of $\mathcal A'$ coincides in part with the subextremal apparent horizon and the last component (with respect to $v$) coincides with the extremal event horizon.
    
    \item $\mathcal D$ is past causally goedesically complete, possesses complete null infinities $\mathcal I^+$ and $\mathcal I^-$, and is isometric to Minkowski space near the center $\{r=0\}$ for all time.
\end{enumerate}
\end{thm}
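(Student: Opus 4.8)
The plan is to re-run the ingoing-beam construction underlying \cref{thm:main}, but now to build a \emph{single} spacetime rather than a family, by letting one charged Vlasov beam act in two successive stages: a first stage that produces a subextremal Reissner--Nordstr\"om black hole together with an apparent horizon and a trapped region exactly as in the $\lambda>\lambda_*$ branch of \cref{thm:main}, and a second stage in which further charged matter falls into the black hole and charges it up precisely to the extremal threshold. As in \cref{thm:main} one works in spherical symmetry in a double-null gauge, where the Einstein--Maxwell--Vlasov system reduces to transport equations for the distribution function $f$ coupled to first-order relations for the area-radius $r$, the renormalized Hawking mass $\varpi$, and the charge $e$, and where every region with $f\equiv 0$ is an exact Reissner--Nordstr\"om solution determined by the locally constant pair $(\varpi,e)$. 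The beam is supported in a thin neighborhood of an ingoing null geodesic from $\mathcal I^-$ (when $\mathfrak m=0$) or an ingoing timelike geodesic from $i^-$ (when $\mathfrak m>0$), so that its causal past is isometric to Minkowski space --- in particular near $\{r=0\}$ --- and the spacetime is automatically past causally geodesically complete with complete $\mathcal I^-$.

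First I would build the first stage so that it reproduces item (2) of this theorem and the subextremal branch of \cref{thm:main}: as the beam falls in, $\partial_v r$ turns negative, a subextremal apparent horizon $\{\partial_v r=0\}$ forms bounding an open set of trapped symmetry spheres, and to the future of this stage the exterior is Reissner--Nordstr\"om with parameters $(M_1,e_1)$, $|e_1|<M_1$; the slab between the two stages is then exactly this subextremal solution, furnishing the required causal diamond and the piece of the subextremal apparent horizon $r_+\doteq M_1+\sqrt{M_1^2-e_1^2}$. For the second stage I would prescribe an additional burst of charged Vlasov matter and tune its profile so that, using the same ODE analysis of $(\varpi,e)$ along the beam already developed for \cref{thm:main} (and for the charged null dust model in \cref{thm:ECC-Ori}), $\varpi$ and $|e|$ increase --- with $|e|$ increasing relatively faster, which is arranged by taking $|\mathfrak e|$ large compared with $\mathfrak m$, the case $\mathfrak m=0$ being the most favorable --- and land \emph{exactly} on the extremal line, $(\varpi,|e|)\to(M_f,M_f)$. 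To the future of the second stage I would glue in exact extremal Reissner--Nordstr\"om of mass $M_f$; this region contains $i^+$, the event horizon $\mathcal H^+=\{r=M_f\}$, a piece of $\mathcal I^+$, and, being extremal, no trapped surfaces at all, which yields item (1). The event horizon $\mathcal H^+=\partial\mathcal{BH}$ is identified teleologically from this exactly-extremal future end and propagated backwards; the outermost apparent horizon $\mathcal A'$ is then read off from the sign of $\partial_v r$, and the analysis of the charging-up region --- following the ``event horizon jumping at extremality'' phenomenon of \cite{KU22} --- shows that $\mathcal A'$ has the subextremal component at $r=r_+$ together with, as its last component in $v$, the marginally trapped extremal horizon $\{r=M_f\}$, with these two separated by the transition region, giving item (3).

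The hard part will be the second stage: one must exhibit a smooth, spherically symmetric distribution function whose backreaction steers $(\varpi,e)$ onto the extremal line \emph{exactly}, while simultaneously keeping the beam uniformly away from $\{r=0\}$ and from the emerging Cauchy horizon --- so that the metric stays $C^\infty$ and the gluings to Minkowski, to subextremal Reissner--Nordstr\"om, and to extremal Reissner--Nordstr\"om are smooth --- and while controlling the global causal structure finely enough to see that the connected event horizon asymptotes to $\{r=M_f\}$ whereas an apparent-horizon component sits at the distinct radius $r_+$. This demands a careful study of the causal geometry in the region where the subextremal black hole is charged toward extremality, in direct parallel with the third-law counterexample of \cite{KU22}; the extra difficulty relative to \cite{KU22} is that the matter is Vlasov, so one must control the transport of $f$ --- and, when $\mathfrak m>0$, the angular-momentum barrier that keeps the beam away from the center --- along the entire ingoing phase, exactly as in the proof of \cref{thm:main}. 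Once this is in place, past causal geodesic completeness, completeness of $\mathcal I^\pm$, and flatness near $\{r=0\}$ follow as there.
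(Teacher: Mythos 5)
Your global architecture matches the paper's: two beams (or two ``stages''), one forming a subextremal black hole from regular one-ended data and one charging it up to exact extremality, with exact Reissner--Nordstr\"om and Minkowski regions read off from Birkhoff, and the horizons identified afterwards. But the key idea that makes your self-declared ``hard part'' tractable is missing, and what you propose in its place would not go through as described. The paper does \emph{not} prescribe a tuned ingoing burst near past infinity and evolve it forward into the strong-field region: the charging-up beam is itself a \emph{time-symmetric bouncing} beam, obtained by desingularizing the third-law-violating bouncing charged null dust spacetime of \cref{thm:third-law-dust} (parameters with $Q_1<\varpi_1$, $Q_2=\varpi_2$ and, crucially, $r_2<\varpi_2$), i.e.\ by posing Cauchy data for the Vlasov field on a totally geodesic, untrapped bounce hypersurface that lies strictly \emph{between} the subextremal apparent horizon and the eventual extremal event horizon, and then running the outgoing estimates of \cref{prop:approx-1} (no auxiliary beam is needed there because the charge on the inner edge is bounded below); the ingoing phase comes for free by time reflection, and $\mathcal D_\mathrm{sub}$ and $\mathcal D_\mathrm{ext}$ are then glued along a double null rectangle (with an extra argument in the massive case, where the two beams can interact only in the dispersive past). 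Your appeal to ``control the transport of $f$ along the entire ingoing phase, exactly as in the proof of \cref{thm:main}'' is therefore not a routine citation: the machinery of \cref{thm:main} is anchored to data on the bounce hypersurface, not to scattering-type data at $\mathcal I^-$/$i^-$, and without the bounce ansatz nothing in the paper controls a prescribed ingoing beam all the way behind the horizon.

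Two further points. First, you misidentify the mechanism keeping the matter away from $\{r=0\}$: for the main (dust-like) beam it is electromagnetic repulsion, with angular momentum only of size $\sim\ve$; the angular-momentum barrier is what drives the \emph{auxiliary} beam of \cref{thm:main}, which is not even needed in the third-law construction. Second, exact extremality is not arranged by playing $|\mathfrak e|$ against $\mathfrak m$; the fundamental charge is fixed, and the paper hits $|e|=M$ exactly by modulating the underlying dust parameters (all inequalities defining $\mathcal P$ are strict, and the final parameters depend smoothly on the seed and are $O(\eta)$-close to the dust values by \eqref{eq:final-parameter-estimate}), exactly as in the proof of \cref{thm:main}. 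As written, your outline leaves both the construction and the estimates for the charging-up region — which is the entire content of the theorem — unproved.
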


Refer to \cref{fig:third-law-Vlasov} for a Penrose diagram of one of these solutions. Note the disconnectedness of the outermost apparent horizon $\mathcal A'$, which is necessary in third law violating spacetimes---see the discussion in Section 1.4.3 of \cite{KU22}. It is striking that the Vlasov beams we construct in the proof of \cref{thm:third-law-Vlasov} do not even touch the subextremal apparent horizon, which should be compared with the hypothetical situation depicted in Fig.~1 of \cite{Israel-third-law}. As with \cref{thm:main}, \cref{thm:third-law-Vlasov} is proved by desingularizing suitable bouncing charged null dust spacetimes which we construct in \cref{sec:third-law-dust}.

It is now very natural to ask if some critical behavior can be seen in the examples from \cref{thm:third-law-Vlasov}. They are clearly not candidates for critical collapse because they contain trapped surfaces. Nevertheless, by tuning the final charge to mass ratio of the outermost beam in \cref{thm:third-law-Vlasov} (subextremal to superextremal as in \cref{thm:main}), we construct one-parameter families of solutions satisfying the following 

\begin{thm}\label{thm:it-jumps} There exist smooth one-parameter families of smooth, spherically symmetric, one-ended asymptotically flat Cauchy data $\{\Psi_\lambda\}_{\lambda\in[-1,1]}$ for the Einstein--Maxwell--Vlasov system for either massive or massless particles with the following properties. Let $\mathcal D_\lambda$ be a choice of maximal globally hyperbolic development\footnote{Typically, one refers to ``the'' maximal globally hyperbolic development \cite{CBG69}, which is an equivalence class of isometric developments. In this statement, however, it is crucial that the development comes equipped with a fixed coordinate system.} of $\Psi_\lambda$ for which the double null gauge $(u,v)$ is continuously synchronized as a function of $\lambda$. (See already \cref{def:sync} and \cref{def:sync-fam} for the definition of continuous synchronization.) Then the following holds:
    \begin{enumerate}
         \item For $\lambda\ne 0$, $\mathcal D_\lambda$ contains a black hole whose domain of outer communication is isometric to that of a subextremal Reissner--Nordstr\"om black hole with mass $M_\lambda$ and charge $|e_\lambda|<M_\lambda$ for sufficiently large advanced times. 
         \item $\mathcal D_0$ contains a black hole whose domain of outer communication is isometric to that of an extremal Reissner--Nordstr\"om black hole with mass $M_0$ and charge $|e_0|=M_0$ for sufficiently large advanced times.
         \item The location of the event horizon is \ul{discontinuous} as a function of $\lambda$: Let $u_{\lambda,\mathcal H^+}$ denote the retarded time coordinate of the event horizon $\mathcal H^+_\lambda$ of $\mathcal D_\lambda$ with respect to the continuously synchronized gauge $(u,v)$. Then $\lambda\mapsto u_{\lambda,\mathcal H^+}$ is \ul{continuous from the left} but  \ul{discontinuous from the right}, and 
         \begin{equation}
             \lim_{\lambda\to 0^+}u_{\lambda,\mathcal H^+}> \lim_{\lambda\to 0^-}u_{\lambda,\mathcal H^+}.\label{eq:u-jump-intro}
         \end{equation}
         \item  The functions $\lambda\mapsto M_{\lambda}$ and $\lambda\mapsto e_{\lambda}$ are \ul{continuous from the left} but  \ul{discontinuous from the right}, and 
         \begin{equation}
             \lim_{\lambda\to 0^+}M_\lambda < \lim_{\lambda\to 0^-}M_\lambda, \quad \lim_{\lambda\to 0^+}|e_\lambda| < \lim_{\lambda\to 0^-}|e_\lambda|,
         \end{equation}
         \begin{equation}
       \lim_{\lambda\to 0^+}\frac{|e_{\lambda}|}{M_\lambda}< \lim_{\lambda\to 0^-}\frac{|e_{\lambda}|}{M_\lambda}=1,\quad  \lim_{\lambda\to 0^+}r_{\lambda,\mathcal H^+}< \lim_{\lambda\to 0^-}r_{\lambda,\mathcal H^+},
         \end{equation}
         where $r_{\lambda,\mathcal H^+}\doteq M_\lambda + \sqrt{M_\lambda^2-e_\lambda^2}$.
    \end{enumerate}
    In addition, for every $\lambda\in [-1,1]$, $\mathcal D_\lambda$ is past causally geodesically complete, possesses complete null infinities $\mathcal I^+$ and $\mathcal I^-$, and is isometric to Minkowski space near the center $\{r=0\}$ for all time. 
\end{thm}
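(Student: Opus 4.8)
The plan is to prove \cref{thm:it-jumps} as a ``relative'' variant of \cref{thm:main}, interpolating not between dispersion and collapse on a Minkowskian background but between \emph{absorption into} and \emph{bouncing off of} a fixed, pre-existing subextremal Reissner--Nordstr\"om black hole. I would begin from the spacetime produced in the proof of \cref{thm:third-law-Vlasov}: a fixed family of ``inner'' charged Vlasov beams whose development contains a subextremal Reissner--Nordstr\"om black hole of mass $M_\ing$ and charge $e_\ing$, $|e_\ing|<M_\ing$, and beyond which (in the causal future, inside the domain of outer communication) the geometry is exactly subextremal Reissner--Nordstr\"om. To this I would append an \emph{outermost} charged Vlasov beam, built as in the proof of \cref{thm:main} (and \cref{sec:third-law-dust}) by global-in-time desingularization of a bouncing charged null dust shell, whose terminal mass and charge content $(\delta M_\lambda,\delta e_\lambda)$ depend monotonically and continuously on $\lambda$, with the combined configuration $(M_\ing+\delta M_\lambda,\,e_\ing+\delta e_\lambda)$ comfortably subextremal for $\lambda$ very negative. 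The normalization $\lambda=0$ is then \emph{defined} by the saturation of extremality, $|e_\ing+\delta e_0|=M_\ing+\delta M_0=:M_0$; this is precisely the tuning argument of \cref{thm:main}, now carried out relative to the inner black hole and with the particle mass $\mathfrak m\le\mathfrak m_0$ small enough (or zero) for the beam's electromagnetic geodesics to have the requisite repulsive behaviour.

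This produces a trichotomy parallel to the dispersion/extremal/collapse trichotomy of \cref{thm:main}. For $\lambda<0$ the outermost beam is subextremal and is absorbed by the inner black hole; gluing the region exterior to it to an exact subextremal Reissner--Nordstr\"om solution with parameters $(M_\lambda,e_\lambda)=(M_\ing+\delta M_\lambda,\,e_\ing+\delta e_\lambda)$ as in \cite{KU22,KU23} and invoking Cauchy stability, the domain of outer communication of $\mathcal D_\lambda$ settles down to that subextremal solution; at $\lambda=0$ it settles down instead to extremal Reissner--Nordstr\"om of mass $M_0$. For $\lambda>0$ the combined configuration would be superextremal, so --- exactly as in the superextremal case of \cref{thm:main}, now with the additional, strongly repulsive electromagnetic field of the like-charged inner black hole --- the outermost beam ``bounces'' and escapes to $\mathcal I^+$, carrying away all of its mass and charge; since nothing is added, the domain of outer communication of $\mathcal D_\lambda$ settles back down to the \emph{inner} subextremal Reissner--Nordstr\"om solution with parameters $(M_\ing,e_\ing)$, independently of $\lambda>0$. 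In every case the conclusions that $\mathcal D_\lambda$ is past causally geodesically complete, has complete null infinities $\mathcal I^\pm$, and is isometric to Minkowski space near $\{r=0\}$ follow, as in \cref{thm:main}, from the fact that the beams are fired in from past infinity and the center is flat throughout.

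It remains to track the event horizon and the final parameters as functions of $\lambda$, using the continuously synchronized double null gauge of \cref{def:sync,def:sync-fam}. Here one exploits that the inner beam region, together with its entire past development and a Minkowskian neighbourhood of the center, is \emph{the same} for every $\lambda$, since its development is unaffected by the outermost beam, which is supported at later advanced times; fixing $(u,v)$ there and propagating gives the required synchronization. For $\lambda\le 0$ the event horizon $\mathcal H^+_\lambda$ is the Reissner--Nordstr\"om event horizon of the final parameters, of radius $r_{\lambda,\mathcal H^+}=M_\lambda+\sqrt{M_\lambda^2-e_\lambda^2}$; tracing it back through the (absorbed) outermost beam into the common region yields a retarded time $u_{\lambda,\mathcal H^+}$ that varies continuously with $\lambda$, and $M_\lambda\to M_0$, $e_\lambda\to e_0$, $|e_\lambda|/M_\lambda\to 1$, $r_{\lambda,\mathcal H^+}\to M_0$ as $\lambda\to 0^-$, which is the left-continuity. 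For $\lambda>0$, by contrast, $\mathcal H^+_\lambda$ coincides with the event horizon of the inner black hole: the outermost beam, which one arranges to bounce at an area-radius strictly larger than $M_0$, never enters $\{r<M_0\}$, so this region remains exactly subextremal Reissner--Nordstr\"om $(M_\ing,e_\ing)$, giving $r_{\lambda,\mathcal H^+}=M_\ing+\sqrt{M_\ing^2-e_\ing^2}<M_0$ with $u_{\lambda,\mathcal H^+}$ independent of $\lambda>0$. Finally, the hypersurface $\{r=M_\ing+\sqrt{M_\ing^2-e_\ing^2}\}$, viewed inside $\mathcal D_0$, lies strictly inside the extremal black hole, whose horizon sits at $r=M_0$; hence $u_{\lambda,\mathcal H^+}$ for $\lambda>0$ is \emph{strictly} larger than $\lim_{\lambda\to 0^-}u_{\lambda,\mathcal H^+}=u_{0,\mathcal H^+}$, which together with the above establishes the right-discontinuity and all the displayed inequalities.

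The step I expect to be the main obstacle is establishing the sharp dichotomy at exactly $\lambda=0$ together with the \emph{teleological} identification of the event horizon for $\lambda>0$: one must show that for every $\lambda>0$ the bouncing outermost beam leaves behind no new black hole region and does not drag the event horizon off the inner black hole's horizon --- the relative analog of the fact, used in \cref{thm:main}, that a superextremal beam leaves nothing behind, now against a charged Reissner--Nordstr\"om background rather than vacuum. Controlling this requires a careful analysis of the effective potential for the beam's electromagnetic geodesics in the field of a subextremal charged black hole, and in particular arranging --- using the smallness of $\mathfrak m$, the available angular momentum, and, if needed, the freedom in \cref{thm:third-law-Vlasov} to take the inner black hole near-extremal --- that the bounce occurs uniformly outside $\{r=M_0\}$. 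The remaining ingredient, setting up the continuously synchronized gauge so that ``the location of the event horizon'' becomes a genuine real-valued function of $\lambda$ whose one-sided limits at $0$ can be compared, should follow routinely from the $\lambda$-independence of the common inner region, as in \cref{def:sync-fam}.
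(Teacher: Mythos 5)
Your overall strategy coincides with the paper's: the proof given there is precisely to take the third-law configuration of \cref{thm:third-law-Vlasov} (an inner beam collapsing to a subextremal Reissner--Nordstr\"om black hole, plus an outer bouncing beam obtained by desingularizing the dust of \cref{sec:third-law-dust}) and to vary the final parameters of the outer beam $\mathcal D_\mathrm{ext}$ across extremality exactly as in \cref{thm:main}; for $\lambda\le 0$ the enlarged (near-)extremal horizon forms and both the parameters and the horizon location vary continuously up to the extremal limit, while for $\lambda>0$ the outer beam escapes, the event horizon reverts to the inner horizon, and the jump follows from $r_+(M_\ing,e_\ing)<M_0$.

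The step you yourself flag as the main obstacle is, however, handled incorrectly. You propose to arrange that for $\lambda>0$ the outermost beam ``bounces at an area-radius strictly larger than $M_0$'' and ``never enters $\{r<M_0\}$,'' to be established by an effective-potential analysis in the field of the subextremal black hole. This is incompatible with the rest of your construction: the bounce hypersurface is part of the (time-symmetric) data and, within a smooth family whose $\lambda=0$ member settles down to an extremal black hole of mass $M_0$, it must satisfy $r_2<M_0$, i.e.\ it lies \emph{behind} the would-be extremal horizon --- this is exactly the third-law mechanism (cf.\ \cref{fig:third-law-dust} and the untrapped/spacelike bounce conditions \eqref{eq:regular-bounce}, \eqref{eq:P-4-intro}). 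If the beam bounced outside $\{r=M_0\}$ it would lie in the domain of outer communication of the final geometry and escape, so the $\lambda=0$ solution could not have an extremal exterior at all; and making the bounce radius exceed $M_0$ only for $\lambda>0$ would destroy smoothness of the family at $\lambda=0$. For $\lambda>0$ the beam \emph{does} enter $\{r<M_0\}$; what actually matters is (i) its support stays at $r\ge r_1>r_+(M_\ing,e_\ing)$, strictly outside the inner horizon, which is built into the admissible parameters (it is why the bounce hypersurface is untrapped), and (ii) since the final parameters are superextremal, Birkhoff's theorem identifies the region outside the outgoing beam with horizon-free superextremal Reissner--Nordstr\"om and the region between the beam and the inner horizon with the exact RN$(M_\ing,e_\ing)$ solution, so the outgoing beam lies in the domain of outer communication and escapes. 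In other words, no dynamical effective-potential analysis against the charged background is needed: as in \cref{prop:approx-1} and \cref{prop:global-structure}, whether a new horizon forms is read off from the prescribed data on the bounce hypersurface together with the exact Reissner--Nordstr\"om structure of the electrovacuum regions. Relatedly, the appeal to the characteristic gluing of \cite{KU22,KU23} plus Cauchy stability is out of place here --- the exterior is exactly Reissner--Nordstr\"om by Birkhoff once the beams are constructed, and the paper's point is precisely that this construction, unlike the gluing method, controls the critical behaviour.
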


From the perspective of the dynamical extremal black hole $\mathcal D_0$, an arbitrarily small perturbation to $\mathcal D_\lambda$ with $\lambda>0$ causes the event horizon to \emph{jump} by a definite amount in $u$ (i.e., not $o(1)$ in $\lambda$) and the parameter ratio to drop by a definite amount. The proof of \cref{thm:it-jumps} relies crucially on the absence of trapped surfaces in a double null neighborhood of the horizon in the solutions of \cref{thm:third-law-Vlasov}, cf.~\cref{fig:third-law-Vlasov}. In the asymptotically subextremal case, trapped surfaces are expected to asymptote towards future timelike infinity $i^+$. In this case, we prove in \cref{prop:continuity} below that the \emph{location of the event horizon is continuous} as a function of initial data, under very general assumptions in spherical symmetry. Therefore, \eqref{eq:u-jump-intro} is a characteristic feature of extremal black hole formation. 

We expect this ``local critical behavior'' to be \emph{stable} in the sense of \cref{subsec:stability-of-ecc} and to play a key role in the general stability problem for extremal black holes. 

\begin{rk}
By a suitable modification of the characteristic gluing techniques in \cite{KU22}, \cref{thm:it-jumps} can be proved for the spherically symmetric Einstein--Maxwell-charged scalar field model, but past completeness of the solutions does not follow immediately from our methods. It is also natural to conjecture analogs for \cref{thm:it-jumps} in (electro)vacuum; see in particular \cite[Section IV.2]{DHRT}.\footnote{In fact, the statement of \cref{thm:it-jumps} is not actually reliant on the black holes forming in gravitational collapse and can be made sense of in terms of characteristic data as in \cref{conj:massless-stab,conj:massive-stab}. In this case, one can study the local critical behavior of extremal Reissner--Nordstr\"om in electrovacuum since  \cref{rk:axisymmetry} no longer applies. Indeed, this is precisely the context of the discussion in \cite[Section IV.2]{DHRT}.}
\end{rk}

\subsection{Outline of the paper}
\paragraph{\cref{sec:spherical-symmetry}.} We introduce basic definitions and properties of general Einstein-matter systems and electromagnetic fields in spherical symmetry.  

\paragraph{\cref{sec:EMV-system}.} We first recall the Einstein--Maxwell--Vlasov system in \cref{app:A} and derive its spherically symmetric formulation in double null coordinates in \cref{sec:EMV-SS}. We also prove local well-posedness in \cref{sec:local-WP} (the main iteration argument being deferred to \cref{app:B}) and the generalized extension principle in \cref{sec:ext}. Finally, we define time-symmetric seed data sets and their developments in \cref{sec:general-data} which will play an important role in the paper.

\paragraph{\cref{sec:dust-intro}.} Before turning to the proof of our main result \cref{thm:main} in \cref{sec:Vlasov-beams}, we show in \cref{sec:dust-intro}  that a singular toy model---Ori's bouncing charged null dust model---exhibits extremal critical collapse. We first recall the definition of the model in \cref{sec:ori-bouncing}.  We then introduce a radial parametrization of bouncing charged null dust spacetimes in \cref{sec:radial} in which we teleologically prescribe a regular, spacelike, totally geodesic bounce hypersurface. These spacetimes consist of an explicit ingoing charged Vaidya metric pasted along the radially parametrized bounce hypersurface to an outgoing charged Vaidya metric through a physically motivated surgery procedure. In \cref{sec:Ori-ECC,sec:third-law-dust}, we use the radial parametrization to construct new examples of bouncing charged null dust spacetimes. In \cref{sec:Ori-ECC}, we show that Ori's model exhibits extremal critical collapse (\cref{thm:ECC-Ori}) and in \cref{sec:third-law-dust}, we show that the third law of black hole thermodynamics is false in Ori's model (\cref{thm:third-law-dust}). We then discuss the fundamental flaws of Ori's model in \cref{sec:shortcomings}: the ill-posedness across the bounce hypersurface,  the singular nature of the solutions, and the ill-posedness near the center. In \cref{sec:outgoing-dust-equations}, we conclude \cref{sec:dust-intro} with the formal radial charged null dust system in double null gauge which will be important for the setup of our initial data in \cref{sec:Vlasov-beams}.

\paragraph{\cref{sec:Vlasov-beams}.}
This section is devoted to the proof of our main result, \cref{thm:main}. The proof relies crucially on a very specific teleological choice of Cauchy data which aims at \emph{desingularizing} the dust examples of extremal critical collapse in \cref{thm:ECC-Ori}, globally in time. In \cref{sec:proof-outline}, we give a detailed guide to the proof of \cref{thm:main}.  In \cref{sec:beam-data}, we define the hierarchy of beam parameters and state the key ingredient in the proof of \cref{thm:main}: the existence and global structure of outgoing charged Vlasov beams (\cref{prop:approx-1}). These beams arise from data posed on a Cauchy hypersurface which is analogous to the ``bounce hypersurface'' associated to Ori's model in \cref{sec:dust-intro}. In \cref{sec:data-estimates}, we solve the constraint equations and prove estimates for the solution along the initial data hypersurface. \cref{sec:near-main} is devoted to estimates for the ``near region'' establishing the bouncing character of our Vlasov beams. To overcome certain difficulties associated with low momenta, our construction features an ``auxiliary beam'' which is treated in \cref{sec:near-aux}.  \cref{sec:far-massive} is concerned with the ``far region'' and in \cref{sec:massive-dispersion} we prove the sharp dispersive estimates in the case of massive particles. In \cref{sec:proof-of-prop} we conclude the proof of \cref{prop:approx-1}. This proposition is then used to prove \cref{thm:main} in \cref{sec:patching-ingoing-outgoing}. Finally, in \cref{sec:weak-proof} we show that in a certain hydrodynamic limit of our parameters, the family of solutions constructed in \cref{thm:main} converge in a weak* sense to the family constructed in the charged null dust model in \cref{sec:dust-intro}. This result rigorously justifies Ori's bouncing charged null dust construction from \cite{Ori91}.

\paragraph{\cref{sec:third-law-and-h-jumping}.} In this final section, we disprove the third law of black hole thermodynamics for the Einstein--Maxwell--Vlasov model (\cref{thm:third-law-Vlasov}) in \cref{sec:Vlasov-third-law}. \cref{sec:horizon-jumping} is concerned with the phenomenon of event horizon jumping at extremality. We first show in \cref{prop:continuity} that for a general class of (so-called \emph{weakly tame}) spherically symmetric Einstein-matter systems the retarded time coordinate of the event horizon is lower semicontinuous as a function of initial data.
Secondly, we show by example (\cref{thm:it-jumps}) that event horizon jumping can occur in the Einstein--Maxwell--Vlasov system for extremal horizons, which proves the sharpness of semicontinuity in \cref{prop:continuity}.

\subsection*{Acknowledgments}
The authors would like to express their gratitude to Mihalis Dafermos for many stimulating discussions. They would also like to thank Carsten Gundlach, Istv\'an K\'ad\'ar, Georgios Moschidis, Amos Ori, Frans Pretorius, Harvey Reall, Igor Rodnianski, Andrew Strominger, and Robert Wald for their interest, helpful discussions, and comments. C.K.~acknowledges support by Dr.~Max R\"ossler, the Walter Haefner Foundation, and the ETH Z\"urich Foundation.  R.U.~thanks the University of Cambridge and ETH Z\"urich for hospitality as this work was being carried out.

\section{Spherically symmetric charged spacetimes}\label{sec:spherical-symmetry}

\subsection{Double null gauge}\label{sec:double-null-gauge}

Let $(\mathcal M,g)$ be a smooth, connected, time-oriented, four-dimensional Lorentzian manifold. We say that $(\mathcal M,g)$ is \emph{spherically symmetric} with (possibly empty) \emph{center} of symmetry $\Gamma\subset\mathcal M$ if $\mathcal M \setminus \Gamma $ splits diffeomorphically as $\mathring{\mathcal Q}\times S^2$ with metric 
\begin{align*}
    g = g_\mathcal{Q} + r^2 \gamma,
\end{align*}
where $(\mathcal Q,g_\mathcal Q)$, $\mathcal  Q = \mathring{\mathcal Q} \cup \Gamma$, is a (1+1)-dimensional Lorentzian spacetime with boundary $\Gamma$ (possibly empty), $\gamma\doteq d\vartheta^2+\sin^2\vartheta\,d\varphi^2$ is the round metric on the unit sphere, and $r$ is a nonnegative function on $\mathcal Q$ which can be geometrically interpreted as the area-radius of the orbits of the isometric $\mathrm{SO}(3)$ action on  $(\mathcal M,g)$. In a mild abuse of notation, we denote by $\Gamma$ both the center of symmetry in $\mathcal M$ and its projection to $\mathcal Q$.  Moreover, if  $\Gamma$ is non-empty, we assume that the  $\mathrm{SO}(3)$ action fixes $\Gamma$ and that $\Gamma$ consists of one timelike geodesic along which $r=0$. We further assume that $(\mathcal Q,g_\mathcal Q)$ admits a \emph{global double-null coordinate system} $(u,v)$ such that the metric $g$ takes the form
\begin{equation}
g = -\Omega^2\,dudv + r^2  \gamma\label{eq:dn}
\end{equation}
for a positive function $\Omega^2\doteq -2g_\mathcal{Q}(\partial_u,\partial_v)$ on $\mathcal Q$ and such that $\partial_u$ and $\partial_v$ are future-directed. Our conventions are so that $u=t-r$ and $v=t+r$ give a double null coordinate system on $(3+1)$-dimensional Minkowski space, with $r=\frac 12(v-u)$ and  $\Omega^2\equiv 1$. We will also use the notation $\slashed g\doteq r^2\gamma$. The constant $u$ and $v$ curves are null in $(\mathcal Q,g_\mathcal Q)$ and correspond to null hypersurfaces ``upstairs'' in the full spacetime $(\mathcal M,g)$. We further assume that along the center $\Gamma$, the coordinate $v$ is outgoing and $u$ is ingoing, i.e.,  $\partial_v r \vert_\Gamma >0$, $\partial_u r\vert_\Gamma <0$. We will often refer interchangeably to  $(\mathcal M,g)$ and the reduced spacetime $(\mathcal Q,r,\Omega^2)$.

Recall the \emph{Hawking mass} $m : \mathcal M  \to \mathbb R$, which is defined by \begin{equation*}
    m \doteq \frac r 2 (1 - g (\nabla r, \nabla r))
\end{equation*}
and can be viewed as a function on $\mathcal Q$ according to 
\begin{equation}\label{eq:Hawking-mass}
    m = \frac r2 \left( 1+  \frac{4\partial_ur\partial_vr}{\Omega^2}\right).
\end{equation}
We will frequently use the formula
\begin{equation}
    \Omega^2= \frac{4(-\partial_ur)\partial_vr}{1-\frac{2m}{r}}\label{eq:Omega-formula}
\end{equation}
to estimate $\Omega^2$ when $1-\frac{2m}{r}>0$.

The isometric action of $\mathrm{SO}(3)$ on $(\mathcal M,g)$ extends to the tangent bundle $T\mathcal M$ as follows: Let $\varrho:\mathrm{SO}(3)\to\mathrm{Diff}(\mathcal M)$ be the representation of $\mathrm{SO}(3)$ given by the spherically symmetric ansatz, so that the group action is given by 
\[R\cdot x\doteq \varrho(R)(x)\]
for $R\in\mathrm{SO}(3)$ and $x\in\mathcal M$. For $(x,p)\in T\mathcal M$, we define
\begin{equation}
    R\cdot(x,p)\doteq (\varrho(R)(x),\varrho(R)_*p),\label{eq:TM-action}
\end{equation}
where of course $\varrho(R)_*p$ lies in $T_{\varrho(R)(x)}\mathcal M$.

Finally, we note that the double null coordinates $(u,v)$ above are not uniquely defined and for any strictly increasing smooth functions $U,V\colon \R \to \R$, we obtain new global double null coordinates $(\tilde u,\tilde v) =(U(u),V(v))$ such that $g= -\tilde \Omega^2 \,d \tilde u\, d \tilde v + \slashed g$, where $\tilde \Omega^2(\tilde u , \tilde v) = (U' V')^{-1} \Omega^2(U^{-1}(\tilde u), V^{-1}(\tilde v)) $ and $r (\tilde u, \tilde v) = r(U^{-1} (\tilde u), V^{-1} (\tilde v))$. 

\subsection{Canonical coordinates on the tangent bundle}\label{sec:angular-momentum}

Given local coordinates $(\vartheta^1,\vartheta^2)$ on a (proper open subset of) $S^2$, the quadruple $(u,v,\vartheta^1,\vartheta^2)$ defines a local coordinate system on the spherically symmetric spacetime $(\mathcal M,g)$. Given $p\in T_x\mathcal M$, we may expand 
\[p=p^u\partial_u|_x+p^v\partial_v|_x+p^1\partial_{\vartheta^1}|_x+p^2\partial_{\vartheta^2}|_x.\]
The octuple $(u,v,\vartheta^1,\vartheta^2,p^u,p^v,p^1,p^2)$ defines a local coordinate system on $T\mathcal M$, and is called a \emph{canonical coordinate system} on $T\mathcal M$ dual to $(u,v,\vartheta^1,\vartheta^2)$. One is to think of $p$ as the ``momentum coordinate'' and $x$ as the ``position coordinate.'' The tangent bundle of $\mathcal Q$ trivializes globally as $\mathcal Q\times\Bbb R^2$, with coordinates $p^u$ and $p^v$ on the second factor. We let
\begin{equation*}
    \pi:T\mathcal Q\to \mathcal Q
\end{equation*}
denote the canonical projection.

On a spherically symmetric spacetime, we define the \emph{angular momentum} function by 
\begin{align*}
    \ell:T\mathcal M&\to [0,\infty)\\
    (x,p)&\mapsto \sqrt{r^2\slashed g{}_{AB}p^Ap^B},
\end{align*}
where summation over $A,B\in\{1,2\}$ is implied. This function is independent of the angular coordinate system chosen and is itself spherically symmetric as a function on $T\mathcal M$. 

Given a double null gauge $(u,v)$, it will be convenient to define a ``coordinate time'' function 
\begin{equation*}
    \tau\doteq\tfrac 12(v+u).
\end{equation*}
Associated to this time function is the $\tau$-momentum coordinate 
\begin{equation*}
    p^\tau\doteq\tfrac 12 (p^v+p^u).
\end{equation*}

\subsection{The Einstein equations and helpful identities in double null gauge}\label{sec:formulas}

A tensor field on a spherically symmetric spacetime is said to be spherically symmetric if it is itself invariant under the $\mathrm{SO}(3)$-action of the spacetime. If $(\mathcal M,g)$ is a spherically symmetric spacetime satisfying the Einstein equations \eqref{eq:EFE}, then the energy-momentum tensor $\mathbf T$ is a spherically symmetric, symmetric $(0,2)$-tensor field. We may decompose
\begin{equation*}
    \T= \T_{uu}\,du^2+\T_{uv}(du\otimes dv+dv\otimes du)+\T_{vv}\,dv^2+\mathbf S\slashed g,
\end{equation*}
where
\begin{equation*}
       \mathbf S \doteq \frac 12 \slashed g{}^{AB}\T_{AB} = \frac 12 \tr_g \T + \frac{2}{\Omega^2}\T_{uv}.
\end{equation*}
It will be convenient to work with the contravariant energy momentum tensor, which takes the form
\begin{equation*}
    \T^{\sharp\sharp}= \T^{uu}\partial_u\otimes \partial_u + \T^{uv}(\partial_u\otimes\partial_v+\partial_v\otimes\partial_u)+ \T^{vv}\partial_v\otimes \partial_v + \mathbf S\slashed g^{-1},
\end{equation*}
where
\begin{equation*}
    \T_{uu}=\tfrac 14\Omega^4 \T^{vv},\quad \T_{uv}=\tfrac 14\Omega^4 \T^{uv},\quad  \T_{vv}=\tfrac 14 \Omega^4 \T^{uu}.
\end{equation*}

The Christoffel symbols involving null coordinates are given by
\begin{align*}
    \Gamma^u_{uu}&= \partial_ u\!\log\Omega^2,  & \Gamma^v_{vv}&= \partial_ v\!\log\Omega^2, \\
     \Gamma^u_{AB}&= \frac{2\partial_v r}{\Omega^2 r}\slashed g{}_{AB},& \Gamma^v_{AB}&= \frac{2\partial_u r}{\Omega^2r}\slashed g{}_{AB},\\
     \Gamma^A_{Bu}&= \frac{\partial_u r}{r} \delta^A_B,& \Gamma^A_{Bv}&= \frac{\partial_v r}{r} \delta^A_B,
\end{align*}
and the totally spatial Christoffel symbols $\Gamma^A_{BC}$ are the same as for $\gamma$ in the coordinates $(\vartheta^1,\vartheta^2)$.

For a spherically symmetric metric $g$ written in double null gauge \eqref{eq:dn}, the Einstein equations \eqref{eq:EFE} separate into the wave equations for the geometry, 
\begin{align}
 \label{eq:EE-wave-r}   \partial_u\partial_v r&= -\frac{\Omega^2}{2r^2}m+\tfrac 14r\Omega^4\T^{uv},\\
  \label{eq:EE-wave-Omega}  \partial_u \partial_v\!\log \Omega^2&=\frac{\Omega^2m}{r^3}-\tfrac 12\Omega^4 \T^{uv}-\Omega^2\mathbf S,
\end{align}
and Raychaudhuri's equations 
\begin{align}
\label{eq:EE-Ray-u}\partial_u\left(\frac{\partial_u r}{\Omega^2}\right) &= -\tfrac 14 r\Omega^2  \T^{vv},\\
\label{eq:EE-Ray-v}\partial_v\left(\frac{\partial_v r}{\Omega^2}\right) &= -\tfrac 14 r\Omega^2 \T^{uu}.
\end{align}
The Hawking mass \eqref{eq:Hawking-mass} satisfies the equations 
\begin{align}
   \label{eq:mass-1-1} \partial_u m&=\tfrac 12 r^2\Omega^2( \T^{uv}\partial_u r -\T^{vv}\partial_vr),\\
 \label{eq:mass-2-1}   \partial_v m&=\tfrac 12 r^2\Omega^2(\T^{uv}\partial_v r - \T^{uu}\partial_ur).
\end{align}

If $X$ is a spherically symmetric vector field, then 
\begin{equation*}
    X=X^u\partial_u+X^v\partial_v
\end{equation*}
and $X$ satisfies $\Div_g X=0$ if and only if
\begin{equation}
  \label{eq:Div-SS}  \partial_u(r^2\Omega^2X^u)+\partial_v(r^2\Omega^2 X^v)=0.
\end{equation}
The contracted Bianchi identity,
\begin{equation*}
    \Div_g \T=0,
\end{equation*}
which follows from the Einstein equations \eqref{eq:EFE}, implies the following pair of identities:
\begin{align}
 \label{eq:Bianchi-general-1}\partial_u(r^2\Omega^4 \T^{uu})+\partial_v(r^2\Omega^4 \T^{uv})   &= \partial_v{\log\Omega^2}\,r^2\Omega^4 \T^{uv}-4r\partial_vr \Omega^2\mathbf S,\\
\label{eq:Bianchi-general-2}\partial_v(r^2\Omega^4 \T^{vv})+\partial_u(r^2\Omega^4 \T^{uv})   &= \partial_u{\log\Omega^2}\,r^2\Omega^4 \T^{uv}-4r\partial_ur \Omega^2\mathbf S.
\end{align}
If $\alpha$ is a spherically symmetric two-form which annihilates $TS^2$, then it may be written as
\[\alpha=-\frac{\Omega^2 f}{2r^2}\,du\wedge dv,\]
where $f:\mathcal Q\to\Bbb R$ is a smooth function. We then have 
\begin{equation}\label{eq:2-form-div}
    \nabla^\mu \alpha_{u\mu}=\frac{\partial_u f}{r^2},\quad \nabla^\mu \alpha_{v\mu}=-\frac{\partial_v f}{r^2}.
\end{equation}

\subsection{Spherically symmetric electromagnetic fields}\label{def:charged-spacetime}

We will additionally assume that our spherically symmetric spacetime $(\mathcal M,g)$ carries a spherically symmetric electromagnetic field with no magnetic charge. The electromagnetic field is represented by a closed two-form $F$, which takes the Coulomb form 
\begin{equation}
    F =-\frac{\Omega^2Q}{2r^2}\, du \wedge dv, \label{eq:Q-defn}
\end{equation} for a function $Q:\mathcal Q\to \Bbb R$. The number $Q(u,v)$ is the total electric charge enclosed in the 
$(u,v)$-symmetry sphere $S_{u,v}\subset\mathcal M$, which can be seen from the gauge-invariant formula
\begin{equation}
    Q(u,v)=\frac{1}{4\pi}\int_{S_{u,v}}\star F,\label{eq:Q-hodge-star}
\end{equation}
where $\star$ is the Hodge star operator and we orient $\mathcal M$ by $du\wedge dv\wedge d\vartheta\wedge d\varphi$.

The electromagnetic energy momentum tensor is defined by 
\begin{equation}T^\mathrm{EM}_{\mu\nu}\doteq  F_\mu{}^\alpha F_{\nu\alpha}-\tfrac 14 g_{\mu\nu }F_{\alpha\beta} F^{\alpha\beta}\label{eq:EMEM},\end{equation}
and relative to a double null gauge is given by 
\begin{equation*}
    T^\mathrm{EM}=\frac{\Omega^2 Q^2}{4r^4}(du\otimes dv+dv\otimes du) + \frac{Q^2}{2r^4}\slashed g
\end{equation*} in spherical symmetry. 
If $F$ satisfies Maxwell's equation 
\begin{equation*}
    \nabla^\alpha F_{\mu\alpha}=J_\mu
\end{equation*}
for a charge current $J$, then the divergence of the electromagnetic energy momentum tensor satisfies 
\begin{equation}
    \nabla^\mu T_{\mu\nu}^\mathrm{EM}= - F_{\nu\alpha}J^\alpha.\label{eq:EMEM-div}
\end{equation}
In spherical symmetry, Maxwell's equations read (see \eqref{eq:2-form-div})
\begin{equation*}
    \partial_u Q=- \tfrac 12r^2\Omega^2 J^v,\quad \partial_v Q=\tfrac 12r^2\Omega^2J^u.
\end{equation*}

Finally, we will utilize the \emph{renormalized Hawking mass}
\begin{equation}
    \varpi\doteq m+\frac{Q^2}{2r}\label{eq:varpi-definition}
\end{equation}
to account for the contribution of the electromagnetic field to the Hawking mass $m$. 

\subsection{The Lorentz force} 

We next briefly recall the \emph{Lorentz force law} for the motion of a charged particle. Let $(\mathcal M,g,F)$ be a charged spacetime, where $F$ is a closed 2-form representing the electromagnetic field.  If $\gamma$ is the worldline of a particle of mass $\mathfrak m>0$ and charge $\mathfrak e\in\Bbb R$, then it satisfies the Lorentz force equation
\begin{equation*}
    \mathfrak m \frac{Du^\mu}{d\tau} =\mathfrak e F^\mu{}_\nu u^\nu,
\end{equation*}
where $\tau$ is proper time and $u\doteq d\gamma/d\tau$ (so that $g(u,u)=-1$). Defining the momentum of $\gamma$ by $p\doteq \mathfrak m u$ and rescaling proper time to $s=\mathfrak m^{-1}\tau$ (so that $p=d\gamma/ds$), we can rewrite the Lorentz force equation as
\begin{equation}
    \frac{Dp^\mu}{ds}=\mathfrak eF^\mu{}_\nu p^\nu.\label{eq:Lor-intro}
\end{equation}
This equation, which we call the \emph{electromagnetic geodesic equation}, makes sense for null curves as well, and can be taken as the equation of motion for all charged particles \cite{Ori91}.

 \begin{rk}
     Because the Lorentz force equation \eqref{eq:Lor-intro} is not quadratic in $p$, $s$ is \emph{not} an affine parameter, which has fundamental repercussions for the dynamics of the electromagnetic geodesic flow. In particular, trajectories of the Lorentz force with parallel, but not equal, initial velocities will in general be distinct, even up to reparametrization.
 \end{rk}

The electromagnetic geodesic flow has two fundamental conserved quantities which will feature prominently in this work.

\begin{lem}\label{lem:mass-constant} Let $\gamma:I\to \mathcal M$ be a causal electromagnetic geodesic in a spherically symmetric charged spacetime, where $I\subset\Bbb R$ is an interval. Let $p=d\gamma/ds$. Then the rest mass
  \begin{equation*}
      \mathfrak m[\gamma]\doteq \sqrt{-g_\gamma(p,p)}
  \end{equation*}
  and the angular momentum
  \begin{equation*}
      \ell[\gamma]\doteq \ell(\gamma,p)
  \end{equation*}
  are conserved quantities along $\gamma$.
\end{lem}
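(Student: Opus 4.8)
The plan is to obtain both conservation laws by differentiating the relevant scalar along $\gamma$, using only that the Levi-Civita connection is metric and that $F$ is spherically symmetric; the non-affine nature of the parameter $s$ in \eqref{eq:Lor-intro} plays no role. For the rest mass, write $\mathfrak m[\gamma]^2 = -g_\gamma(p,p)$ and compute, using $\nabla g = 0$, the electromagnetic geodesic equation \eqref{eq:Lor-intro}, and the antisymmetry of $F$,
\[
\frac{d}{ds}\,g_\gamma(p,p) = 2\,g_\gamma\!\left(\tfrac{Dp}{ds},\,p\right) = 2\mathfrak e\,F_{\mu\nu}p^\mu p^\nu = 0 .
\]
Hence $g_\gamma(p,p)$, and therefore $\mathfrak m[\gamma]$, is constant along $\gamma$ — in the timelike case $\mathfrak m[\gamma]>0$ and in the null case $\mathfrak m[\gamma]=0$ (so that causal electromagnetic geodesics do not change causal type). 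This step uses nothing about spherical symmetry.

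For the angular momentum, I would pass to an invariant description of $\ell$. Let $\Omega_1,\Omega_2,\Omega_3$ be the Killing vector fields on $(\mathcal M,g)$ generating the isometric $\mathrm{SO}(3)$-action; on $\mathcal M\setminus\Gamma\cong\mathring{\mathcal Q}\times S^2$ they are tangent to the orbit spheres and they vanish along $\Gamma$. Because $F$ takes the Coulomb form \eqref{eq:Q-defn} — it annihilates $TS^2$ and involves only $du\wedge dv$ — the contraction $F(\Omega_i,\,\cdot\,)$ vanishes identically (trivially so on $\Gamma$). Therefore, along $\gamma$,
\[
\frac{d}{ds}\,g_\gamma(p,\Omega_i) = g_\gamma\!\left(\tfrac{Dp}{ds},\,\Omega_i\right) + g_\gamma\!\left(p,\,\nabla_{\dot\gamma}\Omega_i\right) = \mathfrak e\,F(\Omega_i,p) + 0 = 0 ,
\]
where the middle term vanishes by the Killing equation contracted with the symmetric tensor $p\otimes p$. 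Thus each $g_\gamma(p,\Omega_i)$ is conserved. Finally, the pointwise identity $\sum_{i}\gamma(w,\Omega_i)^2 = |w|_\gamma^2$ for $w$ tangent to the unit round sphere (verified at one point and propagated by $\mathrm{SO}(3)$-invariance), combined with $g = g_{\mathcal Q}+r^2\gamma$, yields
\[
\sum_{i=1}^{3} g_\gamma(p,\Omega_i)^2 = r^4\,\gamma_{AB}p^Ap^B = r^2\,\slashed g{}_{AB}p^Ap^B = \ell[\gamma]^2 ,
\]
so that $\ell[\gamma]$ is conserved as well.

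The only point requiring care is the behaviour at the center $\Gamma$, where the splitting $\mathcal M\setminus\Gamma\cong\mathring{\mathcal Q}\times S^2$ and the coordinate formula for $\ell$ degenerate: one checks that $\Omega_i$, the functions $s\mapsto g_\gamma(p,\Omega_i)$, and $\ell^2$ all extend smoothly across $\Gamma$ and vanish there, so the conservation of each $g_\gamma(p,\Omega_i)$ and the identity $\ell^2 = \sum_i g_\gamma(p,\Omega_i)^2$ hold on all of $\mathcal M$; in particular, if $\gamma$ meets $\Gamma$ then $\ell[\gamma]=0$. Alternatively — avoiding Killing fields entirely — one can argue directly in canonical coordinates on $T\mathcal M$: since $F^A{}_\nu = 0$ the spherical momentum components obey the ordinary geodesic equation, and a short computation with the Christoffel symbols of \cref{sec:formulas} gives $\frac{d}{ds}\bigl(\gamma_{AB}p^Ap^B\bigr) = -\tfrac{4}{r}\tfrac{dr}{ds}\,\gamma_{AB}p^Ap^B$, whence $\frac{d}{ds}\bigl(r^4\gamma_{AB}p^Ap^B\bigr)=0$, i.e.\ $X(\ell^2)=0$. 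I expect no real obstacle here: the proof is a routine differentiation, and the substantive content is just recognising the right invariant objects $g_\gamma(p,\Omega_i)$ together with the harmless facts that the Lorentz force is $g$-orthogonal to $p$ and that $F$ does not couple to the rotational directions.
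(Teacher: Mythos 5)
Your proof is correct. The rest-mass computation is word-for-word the paper's: $\frac{d}{ds}g(p,p)=2g(\tfrac{Dp}{ds},p)=2\mathfrak e F(p,p)=0$ by antisymmetry. For the angular momentum, however, your primary argument takes a genuinely different route. The paper fixes normal coordinates on $S^2$ at a point and directly differentiates $r^4\gamma_{AB}p^Ap^B$ along $\gamma$, using the Christoffel symbols $\Gamma^A_{Bu}=\frac{\partial_ur}{r}\delta^A_B$, $\Gamma^A_{Bv}=\frac{\partial_vr}{r}\delta^A_B$ from \cref{sec:formulas} and the fact that the Lorentz force has no angular component; this is exactly the computation you relegate to your closing ``alternatively'' remark. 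Your main argument instead invokes the three rotational Killing fields $\Omega_i$, observes that the Coulomb form \eqref{eq:Q-defn} gives $F(\Omega_i,\cdot)=0$ and that the Killing equation kills $g(p,\nabla_{\dot\gamma}\Omega_i)$, so each component $g(p,\Omega_i)$ is conserved, and then recovers $\ell^2=\sum_i g(p,\Omega_i)^2$ as the Casimir. This buys a coordinate-free proof, a slightly stronger conclusion (conservation of all three angular momenta, not just their magnitude), and an explicit check of what happens at the center $\Gamma$, which the paper does not discuss; the paper's computation is shorter and needs no Killing-field machinery beyond the Christoffel symbols it has already recorded. Both arguments are sound.
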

\begin{proof}
   To show that $\mathfrak m$ is constant, we compute
    \begin{equation*}
      \frac{d}{ds} g(p,p)= 2g\left(\frac{Dp}{ds},p\right)= 2\mathfrak e F(p,p)=0,
    \end{equation*}
    where the final equality follows from the antisymmetry of $F$.

     Since $\ell$ is independent of the coordinates chosen on $S^2$, we may assume that $(\vartheta^1,\vartheta^2)$ are normal coordinates on $S^2$ at the point $(\gamma^1(s_0),\gamma^2(s_0))$. To show that $\ell$ is constant, we then compute 
    \begin{equation*}
        \left.\frac{d}{ds}r^4 \gamma_{AB}p^Ap^B\right|_{s=s_0} = 4r^3 \frac{dr}{ds} \gamma_{AB}\dot\gamma^A\dot\gamma^B - 2r^4 \gamma_{AB}\left(-2\Gamma^A_{Cu}\dot\gamma^u-2\Gamma^A_{Cv}\dot\gamma^v\right)\dot\gamma^B\dot\gamma^C=0,
    \end{equation*}
    where we used the formulas for the Christoffel symbols in spherical symmetry from \cref{sec:formulas}.
\end{proof}

For an electromagnetic geodesic $\gamma(s)$ with angular momentum $\ell=\ell[\gamma]$ and mass $\mathfrak m=\mathfrak m[\gamma]$, the Lorentz force law can be written as
\begin{align}
  \label{eq:SS-Lor-1}  \frac{d}{ds}p^u&=-\partial_u\!\log\Omega^2 (p^u)^2 - \frac{2\partial_vr}{r\Omega^2 }  \frac{\ell^2}{r^2} - \mathfrak e  \frac{Q}{r^2} p^u,\\
  \label{eq:SS-Lor-2}     \frac{d}{ds}p^v&=-\partial_v\!\log\Omega^2(p^v)^2 - \frac{2\partial_ur}{r \Omega^2 }  \frac{\ell^2}{r^2} + \mathfrak e  \frac{Q}{r^2}p^v,\\
     \label{eq:SS-Lor-3}   \Omega^2 p^up^v&= \frac{\ell^2}{r^2}+\mathfrak m^2,
\end{align} where $p^u\doteq d\gamma^u/ds$, $p^v\doteq d\gamma^v/ds$, and the third equation, known as the \emph{mass shell relation}, is directly equivalent to the definition of mass and angular momentum. In this work, it will not be necessary to follow the angular position of the electromagnetic geodesics in the $(3+1)$-dimensional spacetime. It is very convenient to rewrite \eqref{eq:SS-Lor-1} and \eqref{eq:SS-Lor-2} as
\begin{align}
   \label{eq:SS-Lor-4}  \frac{d}{ds} (\Omega^2 p^u) &=  \left(\partial_v\!\log\Omega^2 - \frac{2\partial_vr}{r}  \right)\frac{\ell^2}{r^2} - \mathfrak e  \frac{Q}{r^2} (\Omega^2 p^u), \\
  \label{eq:SS-Lor-5}   \frac{d}{ds} (\Omega^2 p^v) &=  \left(\partial_u\!\log\Omega^2 - \frac{2\partial_ur}{r}  \right)\frac{\ell^2}{r^2} + \mathfrak e  \frac{Q}{r^2} (\Omega^2 p^v).
\end{align}

\section{The Einstein--Maxwell--Vlasov system}
\label{sec:EMV-system}

In this section, we introduce the main matter model considered in this paper, the \emph{Einstein--Maxwell--Vlasov} system. This model describes an ensemble of collisionless self-gravitating charged particles which are either massive or massless. We begin by defining the general system outside of symmetry in \cref{app:A} and then specialize to the spherically symmetric case in \cref{sec:EMV-SS}. In \cref{sec:local-WP,sec:ext}, we formulate the fundamental local theory for this model, local existence and a robust continuation criterion known as the \emph{generalized extension principle}. Finally, in \cref{sec:general-data}, we define a procedure for solving the constraint equations for the spherically symmetric Einstein--Maxwell--Vlasov system.

\subsection{The general system}\label{app:A}

\subsubsection{The volume form on the mass shell}\label{sec:kinematics}

Let $(\mathcal M^4,g)$ be a spacetime. For $x\in \mathcal M$ and $\mathfrak m\ge 0$, we define the (future) \emph{mass shell} at $x$ by
\begin{equation*}
    P_x^\mathfrak m\doteq \{p\in T_x\mathcal M:p\text{ is future directed and } g_x(p,p)=-\mathfrak m^2, p\ne 0\}
\end{equation*}
and the associated smooth fiber bundle
\begin{equation*}
    P^\mathfrak m\doteq\bigcup_{x\in \mathcal M}P_x^\mathfrak m,
\end{equation*}
with projection maps $\pi_\mathfrak m:P^\mathfrak m\to \mathcal M$. 

Fix $x\in \mathcal M$ and let $(x^\mu)$ be normal coordinates at $x$ so that 
\[g_x=-(dx^0)^2+(dx^1)^2+(dx^2)^2+(dx^3)^2.\]
Let $p^\mu$ be dual coordinates to $x^\mu$ on $T\mathcal M$, defined by $p^\mu=dx^\mu(p)$ for $p\in T_x\mathcal M$. For $\mathfrak m\ge 0$, the restrictions of $p^1,p^2,p^3$ to $P^\mathfrak m_x$, denoted by $\bar p^1,\bar p^2,\bar p^3$, define a global coordinate system on $P^\mathfrak m_x$, with $p^0$ determined by
\begin{equation}
    p^0=\sqrt{\mathfrak m^2 +|\bar p^1|^2+|\bar p^2|^2 + |\bar p^3|^2}.\label{eq:p0}
\end{equation}
\begin{defn}
  Let $\mathfrak m\ge 0$ and $x\in \mathcal M$.  The \emph{canonical volume form} $\mu_x^\mathfrak m\in \Omega^3(P_x^\mathfrak m)$ is defined by 
  \begin{equation*}
      \mu_x^\mathfrak m = (p^0)^{-1}\,d\bar p^1\wedge d\bar p^2\wedge d\bar p^3,
  \end{equation*} in normal coordinates at $x$, where $p^0$ is given by \eqref{eq:p0}.
\end{defn}
One can show that this form is independent of the choice of normal coordinates. When $\mathfrak m>0$, $P^\mathfrak m_x$ is a spacelike hypersurface in $T_x\mathcal M$ if it is equipped with the Lorentzian metric $g_x$. In this case, $\mu^\mathfrak m_x=\mathfrak m^{-1}\omega^\mathfrak m_x$, where $\omega^\mathfrak m_x$ is the induced Riemannian volume form on $P^\mathfrak m_x$. The factor of $\mathfrak m^{-1}$ is needed to account for the degeneration of $\omega^\mathfrak m_x$ as $\mathfrak m\to 0$, since $P^0_x$ is a null hypersurface. For more information about the volume form on the mass shell, see \cite{Sachs-Wu,Ringstrom-topology,sarbach2014tangent}. The canonical volume form is uniquely characterized by the following property, which can be found in \cite[Corollary 5.6.2]{Sachs-Wu}.

\begin{lem}\label{lem:SW}
    The form $\mu_x^\mathfrak m$ defined above does not depend on the choice of local coordinates on $\mathcal M$. Moreover, it is uniquely characterized by the following property. Let $H(p)\doteq\tfrac 12 g_x(p,p)$. If $\alpha$ is a 3-form in $T_x\mathcal M$ defined along $P_x^\mathfrak m$ such that 
    \begin{equation*}
        dH\wedge\alpha =\sqrt{-\det g(x)}\,dp^0\wedge dp^1\wedge dp^2\wedge dp^3,
    \end{equation*}
    then $i_\mathfrak m^*\alpha=\mu_x^\mathfrak m$, where $i_\mathfrak m:P^\mathfrak m_x\to T_x\mathcal M$ denotes the inclusion map.
\end{lem}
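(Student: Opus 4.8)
The plan is to reduce both assertions of the lemma to a single short linear-algebra computation carried out in one normal coordinate system, after first recording that the \emph{characterizing property} is itself coordinate-free: $H(p)=\tfrac12 g_x(p,p)$ depends only on the inner-product space $(T_x\mathcal M, g_x)$, and $\sqrt{-\det g(x)}\,dp^0\wedge dp^1\wedge dp^2\wedge dp^3$ is precisely the Lorentzian volume form of $(T_x\mathcal M, g_x)$ fixed by the orientation of $\mathcal M$, so it too is intrinsic. Hence it suffices to establish (i) that the identity $dH\wedge\alpha=\sqrt{-\det g(x)}\,dp^0\wedge\dots\wedge dp^3$ admits a solution $\alpha$ and that $i_\mathfrak m^*\alpha$ is the \emph{same} $3$-form on $P_x^\mathfrak m$ for every such $\alpha$; and (ii) that in normal coordinates one such $\alpha$ pulls back to $(p^0)^{-1}d\bar p^1\wedge d\bar p^2\wedge d\bar p^3$. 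Granting these, the characterization is immediate, and coordinate-independence of $\mu_x^\mathfrak m$ follows because the intrinsic property in (i) pins the form down uniquely, so it cannot depend on which normal coordinates were used in the definition.

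For (i), the point I would make is that $dH$ is nowhere vanishing along $P_x^\mathfrak m$: in normal coordinates $dH = -p^0\,dp^0 + \sum_i p^i\,dp^i = g_x(p,\cdot)$, which vanishes only at $p=0$, and $p=0$ is excluded from $P_x^\mathfrak m$ by definition --- this is the one place the exclusion of the zero vector is used, and it is what lets the massless case $\mathfrak m=0$ be handled on the same footing as $\mathfrak m>0$. Working pointwise on $P_x^\mathfrak m$, the map $\Lambda^3 T_x^*\mathcal M\to\Lambda^4 T_x^*\mathcal M$, $\alpha\mapsto dH\wedge\alpha$, is then surjective with kernel exactly $dH\wedge\Lambda^2 T_x^*\mathcal M$; so a solution $\alpha$ exists, any two solutions differ by $dH\wedge\gamma$ for some $2$-form $\gamma$, and since $i_\mathfrak m^*dH = d(H\circ i_\mathfrak m) = d(-\tfrac12\mathfrak m^2) = 0$ we get $i_\mathfrak m^*(dH\wedge\gamma)=0$. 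Thus $i_\mathfrak m^*\alpha$ is well defined, i.e.\ the property determines a unique $3$-form on $P_x^\mathfrak m$.

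For (ii), I would work in normal coordinates, where $\sqrt{-\det g(x)}=1$ and $dH$ is as above, and take $\alpha$ to be the appropriate constant multiple of $(p^0)^{-1}\,dp^1\wedge dp^2\wedge dp^3$ --- this is smooth along $P_x^\mathfrak m$, since there $p^0 = \sqrt{\mathfrak m^2 + |\bar p|^2}>0$, and the overall sign is dictated by the orientation convention on $T_x\mathcal M$ and the ordering of the $dp^\mu$. Then the terms $p^i\,dp^i\wedge dp^1\wedge dp^2\wedge dp^3$ vanish and one checks $dH\wedge\alpha = dp^0\wedge dp^1\wedge dp^2\wedge dp^3$; restricting to $P_x^\mathfrak m$ and using \eqref{eq:p0} gives $i_\mathfrak m^*\alpha = (p^0)^{-1}\,d\bar p^1\wedge d\bar p^2\wedge d\bar p^3 = \mu_x^\mathfrak m$. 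Together with (i) this proves the lemma.

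This statement is essentially routine, so I do not expect a genuine obstacle; the only points needing care are the sign/orientation bookkeeping in (ii) --- which is exactly why one defers to \cite[Corollary 5.6.2]{Sachs-Wu} for the precise normalization --- and verifying $dH\ne0$ along the mass shell uniformly in $\mathfrak m\ge0$, for which the exclusion of $p=0$ from $P_x^\mathfrak m$ is precisely what is needed.
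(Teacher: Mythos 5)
Your argument is correct, and in fact it supplies a proof where the paper gives none: the paper simply quotes this statement from \cite[Corollary 5.6.2]{Sachs-Wu} without proof, and what you wrote is precisely the standard argument behind that corollary (nonvanishing of $dH$ along $P^\mathfrak m_x$ because $p\neq 0$, so solutions $\alpha$ exist and any two differ by $dH\wedge\gamma$, whose pullback dies since $H\circ i_\mathfrak m$ is constant, plus the explicit normal-coordinate computation). The only loose end is the one you flag yourself: with the signature and orientation conventions as literally stated, the normal-coordinate computation gives $dH\wedge\bigl((p^0)^{-1}dp^1\wedge dp^2\wedge dp^3\bigr)=-\,dp^0\wedge dp^1\wedge dp^2\wedge dp^3$, so the solution of the displayed identity is $-(p^0)^{-1}dp^1\wedge dp^2\wedge dp^3$ and the overall sign must be absorbed into the orientation convention (or by reading $\mu^\mathfrak m_x$ as a measure, as the paper effectively does when it applies \cref{lem:SW} in the proof of \cref{prop:SS-equiv}); likewise, coordinate-independence as you argue it holds for oriented (and time-oriented) normal coordinates, which is the intended reading. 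Neither point is a gap in substance, only in normalization bookkeeping, and your deferral to the cited reference for that is appropriate.
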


We denote the integration measure associated to $\mu_x^\mathfrak m$ by $d\mu_x^\mathfrak m$. A \emph{distribution function} is a nonnegative function $f\in C^\infty(P^\mathfrak m)$ which decays sufficiently quickly on the fibers so that the relevant integrals are well-defined and are smooth functions of $x$.  Given a distribution function $f$ we may now define the \emph{number current} $N$ and the \emph{energy momentum tensor} $T$ of $f$ by
\begin{equation}
    N^\mu(x)\doteq \int_{P_x^\mathfrak m}p^\mu f(x,p)\,d\mu_x^\mathfrak m,\quad 
    T^{\mu\nu}(x)\doteq \int_{P_x^\mathfrak m}p^\mu p^\nu f(x,p)\,d\mu_x^\mathfrak m.\label{eq:Vlasov-EM}
\end{equation}
These are readily verified to be tensor fields on $\mathcal M$. Taking divergences, we have \cite[Appendix D]{Ringstrom-topology}
\begin{equation}
     \nabla_\mu N^\mu = \int_{P_x^\mathfrak m} X_0(f)\,d\mu_x^\mathfrak m,\quad
      \nabla_\mu T^{\mu\nu} = \int_{P_x^\mathfrak m} p^\nu X_0(f)\,d\mu_x^\mathfrak m, \label{eq:T-N-div}
\end{equation}
where $X_0\doteq p^\mu\partial_{x^\mu}-\Gamma^\mu_{\nu\rho}p^\nu p^\rho \partial_{p^\mu}\in \Gamma(TT\mathcal M)$ is the \emph{geodesic spray} vector field. 

\subsubsection{The equations}

\begin{defn}\label{defn:general-Vlasov}
The \emph{Einstein--Maxwell--Vlasov} system for particles of \emph{mass} $\mathfrak m\in\Bbb R_{\ge 0}$ and \emph{fundamental charge} $\mathfrak e\in\Bbb R\setminus\{0\}$ consists of a charged spacetime $(\mathcal M,g,F)$ and a distribution function $f:P^\mathfrak m\to [0,\infty)$ satisfying the following equations:
\begin{align}
 \label{eq:EMmV-1}   R_{\mu\nu}-\tfrac 12 Rg_{\mu\nu}&= 2\left(T^\mathrm{EM}_{\mu\nu}+T_{\mu\nu}\right),\\
 \label{eq:EMmV-2}     \nabla^\alpha F_{\mu\alpha}&=\mathfrak e N_\mu,\\
 \label{eq:EMmV-3}     Xf&=0,
\end{align}
where $T^\mathrm{EM}$ is the electromagnetic energy momentum tensor defined in \eqref{eq:EMEM}, $T_{\mu\nu}$ and $N_\mu$ are the Vlasov energy-momentum tensor and number current, respectively, defined in \eqref{eq:Vlasov-EM}, and
\begin{equation}\label{eq:spray-1}
    X\doteq p^\mu \frac{\partial}{\partial x^\mu} - \left(\Gamma^\mu_{\alpha\beta}p^\alpha p^\beta -\mathfrak e F^\mu{}_\alpha p^\alpha\right)\frac{\partial}{\partial p^\mu}\in\Gamma(TT\mathcal M)
\end{equation}
is the \emph{electromagnetic geodesic spray} vector field.
\end{defn}

The vector field $X$ is easily seen to be tangent to the mass shell $P^\mathfrak m$, which means that the Vlasov equation \eqref{eq:EMmV-3} is indeed a transport equation on $P^\mathfrak m$. The vector field $F^\mu{}_\alpha p^\alpha\partial_{p^\mu}$ is itself tangent to $P^\mathfrak m$ and we have the integration by parts formulas
\begin{equation*}
    \int_{P^\mathfrak m_x}F^\mu{}_\alpha p^\alpha\partial_{p^\mu}f\,d\mu^\mathfrak m_x=0,\quad  \int_{P^\mathfrak m_x}F^\mu{}_\alpha p^\nu p^\alpha\partial_{p^\mu}f\,d\mu^\mathfrak m_x=-F^\nu{}_\alpha N^\alpha,
\end{equation*}
which are easily verified in normal coordinates. Combined with \eqref{eq:T-N-div} and the transport equation \eqref{eq:EMmV-3}, we obtain the fundamental conservation laws 
\begin{align}
  \label{eq:N-div}  \nabla_\mu N^\mu&=0,\\
   \label{eq:T-div} \nabla^\mu T_{\mu\nu}&=\mathfrak eN^\alpha F_{\nu\alpha}. 
\end{align}
We now see that the Einstein--Maxwell--Vlasov system is consistent: \eqref{eq:N-div} implies that Maxwell's equation \eqref{eq:EMmV-2} is consistent with antisymmetry of $F$ and \eqref{eq:T-div} implies (using also \eqref{eq:EMEM-div}) the contracted Bianchi identity 
\begin{equation}
    \label{eq:EMV-4-Bianchi}\nabla^\mu\left(T^\mathrm{EM}_{\mu\nu}+T_{\mu\nu}\right)=0.
\end{equation}
for the total energy-momentum tensor of the system. 

\subsubsection{Relation to the relativistic Maxwell--Vlasov system}

The system \eqref{eq:intro-EMV}--\eqref{eq:intro-MV} includes gravity and thus generalizes the special relativistic \emph{Maxwell--Vlasov system} which is typically written in the form\footnote{The subscript K stands for ``kinetic theory literature.''} (cf. \cite{glassey1996cauchy})
\begin{align*}
    &\partial_t f_\mathrm{K}+ \hat v\cdot \partial_x f_\mathrm{K}+\mathfrak e(E+\hat v\times B)\cdot\partial_v f_\mathrm{K}=0,\\
    &\partial_t E-\nabla\times B=-j_\mathrm{K}, \quad
    \partial_tB+\nabla\times E=0,\\
    &\nabla\cdot E =\rho_\mathrm{K}, \quad \nabla\cdot B=0,
\end{align*}
where $f_\mathrm{K}(t,x,v)\ge 0$ is the distribution function, $(t,x,v)\in \Bbb R\times\Bbb R^3\times\Bbb R^3$, $E$ is the electric field, $B$ is the magnetic field, $ \hat v\doteq (\mathfrak m^2+|v|^2)^{-1/2}v$
is the ``relativistic velocity'' and has modulus smaller than unity, and 
\begin{equation*}
    \rho_\mathrm K(t,x)\doteq \mathfrak e\int_{\Bbb R^3}f(t,x,v)\,dv,\quad j_\mathrm K(t,x)\doteq \mathfrak e\int_{\Bbb R^3}\hat v f(t,x,v)\,dv.
\end{equation*}
This system is equivalent to the covariant equations \eqref{eq:EMmV-2} and \eqref{eq:EMmV-3} in Minkowski space under the identifications $(\sqrt{\mathfrak m^2+|v|^2},v)=p$, $f_\mathrm K(t,x,v)=f(t,x,\sqrt{\mathfrak m^2+|v|^2},v)$, $E_i=F_{i0}$, $B_i=\frac 12 \varepsilon_i{}^{jk}F_{jk}$, $\rho_\mathrm K=\mathfrak eN^0$, and $(j_\mathrm K)^i= \mathfrak eN^i$.

\subsection{Spherically symmetric definitions and equations}\label{sec:EMV-SS}

 Let $(\mathcal Q,r,\Omega^2)$ be the $(1+1)$-dimensional reduced spacetime associated to a spherically symmetric spacetime. For $\mathfrak m\ge 0$, we define the \emph{reduced mass shell} by
\begin{equation}
P_\mathrm{red}^\mathfrak m\doteq \left\{(u,v,p^u,p^v)\in T\mathcal Q:\Omega^2(u,v)p^up^v\ge\mathfrak m^2, p^\tau> 0\right\},\label{def:Pred}
\end{equation}
where the second condition in the definition forbids $\Omega^2p^up^v=0$ in the $\mathfrak m=0$ case. The angular momentum function from \cref{sec:angular-momentum} can be defined on the reduced mass shell by
\begin{align*}
   \ell:P^\mathfrak m_\mathrm{red} &\to \Bbb R_{\ge 0} \\
   (u,v,p^u,p^v) & \mapsto r\sqrt{\Omega^2 p^up^v-\mathfrak m^2}.
\end{align*}
Note that $\ell>0$ on $P^0_\mathrm{red}$. The definition of $\ell$ can be rewritten as the fundamental \emph{mass shell relation}
\begin{equation}
    \Omega^2p^up^v=\frac{\ell^2}{r^2}+\mathfrak m^2.\label{eq:mass-shell}
\end{equation}

\begin{defn}\label{def:f}
    A \emph{spherically symmetric distribution function} of \emph{massive} (if $\mathfrak m>0$) or \emph{massless} (if $\mathfrak m=0$) \emph{particles} is a smooth function \begin{equation*}f:P_\mathrm{red}^\mathfrak m\to \Bbb R_{\ge 0}.\end{equation*}   
     We say that $f$ has \emph{locally compact support in $p$} if for every compact set $K\subset\mathcal Q$ there exists a compact set $K'\subset \Bbb R^2$ such that $\spt(f)\cap P^\mathfrak m_\mathrm{red}|_K\subset K\times K'$. We say that $f$ has \emph{locally positive angular momentum} if for every $K\subset\mathcal Q$ compact there exists a constant $c_K>0$ such that $\ell\ge c_K$ on $\spt(f)\cap P^\mathfrak m_\mathrm{red}|_K$.
\end{defn}

In order to define appropriate moments of a distribution function $f$ on a spherically symmetric spacetime $(\mathcal Q,r,\Omega^2)$, we require that $f$ decays in the momentum variables $p^u$ and $p^v$. For $\sigma>0$, $k\ge 0$ an integer, and $K\subset\mathcal Q$ compact, we define the norm
\begin{equation}
    \|f\|_{C^k_\sigma(P^\mathfrak m_\mathrm{red}|_K)}\doteq \sum_{0\le i_1+i_2\le k}\sup_{P^\mathrm m_\mathrm{red}|_K} {\langle p^\tau\rangle^{\sigma+i_2}} |\partial^{i_1}_x\partial^{i_2}_p f|,\label{eq:f-norm-defn}
\end{equation}
where $\partial^{i_1}_x\partial^{i_2}_p f$ ranges over all expressions involving $i_1$ derivatives in the $(u,v)$-variables and $i_2$ derivatives in the $(p^u,p^v)$-variables. If the norm \eqref{eq:f-norm-defn} is finite for all compact sets $K\subset\mathcal Q$, we say that $f\in C^\infty_{\sigma,\loc}(P^\mathfrak m_\mathrm{red})$. If $f$ has locally compact support in $p$, then it lies in  $C^\infty_{\sigma,\loc}(P^\mathfrak m_\mathrm{red})$.

\begin{rk}
    Our well posedness theory for the Einstein--Maxwell--Vlasov system requires that $p$-derivatives of $f$ decay faster, which is the reason for the $i_2$ weight in \eqref{eq:f-norm-defn}.
\end{rk}

Given a spherically symmetric spacetime $(\mathcal Q,r,\Omega^2)$ with distribution function $f$, we define the \emph{Vlasov number current} by
\begin{align}
  \label{eq:Nu} N^u(u,v) &\doteq \pi\Omega^2\int_{\Omega^2p^up^v\ge \mathfrak m^2} p^uf(u,v,p^u,p^v)\,dp^udp^v,\\
 \label{eq:Nv}  N^v(u,v) &\doteq\pi\Omega^2\int_{\Omega^2p^up^v\ge \mathfrak m^2} p^vf(u,v,p^u,p^v)\,dp^udp^v
\end{align}
and the \emph{Vlasov energy momentum tensor} by
\begin{align}
\label{eq:Tuu}    T^{uu}(u,v)&\doteq\pi\Omega^2\int_{\Omega^2p^up^v\ge \mathfrak m^2} (p^u)^2f(u,v,p^u,p^v)\,dp^udp^v,\\
\label{eq:Tuv}   T^{uv} (u,v)&\doteq\pi\Omega^2\int_{\Omega^2p^up^v\ge \mathfrak m^2}p^up^vf(u,v,p^u,p^v)\,dp^udp^v,\\
\label{eq:Tvv}  T^{vv}  (u,v)&\doteq\pi\Omega^2\int_{\Omega^2p^up^v\ge \mathfrak m^2}(p^v)^2f(u,v,p^u,p^v)\,dp^udp^v,\\
\label{eq:S}  S(u,v)&\doteq\frac{\pi}{2}\Omega^2\int_{\Omega^2p^up^v\ge \mathfrak m^2}(\Omega^2p^up^v-\mathfrak m^2) f(u,v,p^u,p^v)\,dp^udp^v\le \frac{\Omega^2}{2}T^{uv}(u,v).
\end{align}
If $f\in C^\infty_{\sigma,\loc}(P^\mathfrak m_\mathrm{red})$ with $\sigma>4$, then these moments are well defined smooth functions of $u$ and $v$.

\begin{defn}\label{def:sph-sym-EMV}
The \emph{spherically symmetric Einstein--Maxwell--Vlasov} model for \emph{particles of mass} $\mathfrak m\in\Bbb R_{\ge0}$ and \emph{fundamental charge} $\mathfrak e\in\Bbb R\setminus\{0\}$ consists of a smooth spherically symmetric charged spacetime $(\mathcal Q,r,\Omega^2,Q)$ and a smooth distribution function $f\in C^\infty_{\sigma,\loc}(P^\mathfrak m_\mathrm{red})$ for a decay rate $\sigma>4$ fixed. When $\mathfrak m=0$, we require that $f$ also has locally positive angular momentum. To emphasize that the distribution functions we consider in this paper have these regularity properties in $p$, we say that such a solution has \emph{admissible momentum}.

The system satisfies the wave equations
    \begin{align}
 \label{eq:r-wave} \partial_u\partial_v r&= - \frac{\Omega^2}{2r^2}\left(m-\frac{Q^2}{2r}\right) +\tfrac 14 r\Omega^4 T^{uv}, \\
\label{eq:Omega-wave}    \partial_u\partial_v \!\log\Omega^2&=\frac{\Omega^2m}{r^3}-\frac{\Omega^2Q^2}{r^4}-\tfrac 12\Omega^4T^{uv}-\Omega^2 S,
\end{align}
the Raychaudhuri equations 
\begin{align}
\label{eq:Ray-u}  \partial_u\left(\frac{\partial_ur}{\Omega^2}\right)  &=-\tfrac 14r\Omega^2T^{vv},\\
\label{eq:Ray-v}  \partial_v\left(\frac{\partial_vr}{\Omega^2}\right)  &=-\tfrac 14r\Omega^2T^{uu},
\end{align}
and the Maxwell equations
\begin{align}
\label{eq:Max-u}\partial_u Q&=-\tfrac 12\mathfrak e r^2\Omega^2N^v,\\
\label{eq:Max-v}\partial_v Q&=+\tfrac 12\mathfrak e r^2\Omega^2 N^u,
\end{align}
where $N^u,N^v,T^{uu},T^{uv}$, $T^{vv}$, and $S$ are defined by equations \eqref{eq:Nu}--\eqref{eq:S}. Finally, $f$ satisfies the \emph{spherically symmetric Vlasov equation}
\begin{equation}
    Xf=0 ,\label{eq:SS-MV}
\end{equation}
where $X\in \Gamma (TP^\mathfrak m_\mathrm{red})$ is the \emph{reduced electromagnetic geodesic spray}
\begin{multline}\label{eq:spray-2}
     X\doteq  p^u\partial_u +p^v\partial_v-\left(\partial_u{\log\Omega^2}(p^u)^2+\frac{2\partial_vr}{r\Omega^2}(\Omega^2 p^up^v-\mathfrak m^2)+\mathfrak e\frac{Q}{r^2}p^u\right)\partial_{p^u}\\-\left(\partial_v{\log\Omega^2}(p^v)^2+\frac{2\partial_ur}{r\Omega^2}(\Omega^2 p^up^v-\mathfrak m^2)-\mathfrak e\frac{Q}{r^2}p^v\right)\partial_{p^v}.
\end{multline}
\end{defn}

\begin{rk}
Since $f\ge 0$ for a solution of the Einstein--Maxwell--Vlasov system, the components $N^u$ and $N^v$ of the number current are nonnegative. It follows from the Maxwell equations \eqref{eq:Max-u} and \eqref{eq:Max-v} that $\mathfrak eQ$ is decreasing in $u$ and increasing in $v$, unconditionally. This monotonicity property is a fundamental feature of the spherically symmetric Einstein--Maxwell--Vlasov system and will be exploited several times in this paper.
\end{rk}

\begin{rk}\label{rk:DEC}
     Both the electromagnetic energy-momentum tensor $T^\mathrm{EM}$ and the Vlasov energy-momentum tensor $T$ of the Einstein--Maxwell--Vlasov system satisfy the dominant energy condition.
\end{rk}

\begin{rk}
    As an abuse of notation, we have denoted the spray \eqref{eq:spray-1} on $T\mathcal M$ and the spray \eqref{eq:spray-2} on $T\mathcal Q$ by the same letter $X$. It will always be clear from the context which vector field we are referring to. They are related by the pushforward along the natural projection map $P^\mathfrak m\to P^\mathfrak m_\mathrm{red}$.
\end{rk}

For a solution of the Einstein--Maxwell--Vlasov system, the Hawking mass $m$ satisfies the equations
\begin{align}
  \label{eq:Hawking-u}  \partial_u m&=\tfrac 12 r^2\Omega^2( T^{uv}\partial_u r - T^{vv}\partial_vr)+\frac{Q^2}{2r^2}\partial_u r,\\
   \label{eq:Hawking-v} \partial_v m&=\tfrac 12 r^2\Omega^2( T^{uv}\partial_v r - T^{uu}\partial_ur)+\frac{Q^2}{2r^2}\partial_v r,
\end{align} which can be derived from \eqref{eq:mass-1-1} and \eqref{eq:mass-2-1}. The modified Hawking mass $\varpi$ can then be seen to satisfy 
\begin{align}
  \label{eq:mod-Hawking-u}  \partial_u\varpi&=\tfrac 12 r^2\Omega^2( T^{uv}\partial_u r - T^{vv}\partial_vr)-\tfrac 12\mathfrak er\Omega^2QN^v,\\
 \label{eq:mod-Hawking-v}  \partial_v\varpi &=\tfrac 12 r^2\Omega^2( T^{uv}\partial_v r - T^{uu}\partial_ur)+\tfrac 12\mathfrak er\Omega^2 QN^u.
\end{align}
The particle current $N$ satisfies the conservation law
\begin{equation} \partial_u(r^2\Omega^2N^u)+\partial_v(r^2\Omega^2N^v)=0\label{eq:SS-N-div}
\end{equation}
by \eqref{eq:Div-SS} and \eqref{eq:N-div}. Alternatively, it can be directly derived from the spherically symmetric Vlasov equation, which we will do the proof of \cref{prop:char-IVP-Vlasov} in \cref{app:B}.
Finally, for the Einstein--Maxwell--Vlasov system, the Bianchi identities \eqref{eq:Bianchi-general-1} and \eqref{eq:Bianchi-general-2} read
\begin{align}
 \label{eq:Bianchi-1}  \partial_u(r^2\Omega^4 T^{uu})+\partial_v(r^2\Omega^4 T^{uv})   &= \partial_v{\log\Omega^2}\,r^2\Omega^4 T^{uv}-4r\partial_vr \Omega^2 S-\mathfrak e\Omega^4 QN^u,\\
\label{eq:Bianchi-2}\partial_v(r^2\Omega^4 T^{vv})+\partial_u(r^2\Omega^4 T^{uv})   &= \partial_u{\log\Omega^2}\,r^2\Omega^4 T^{uv}-4r\partial_ur \Omega^2S+\mathfrak e\Omega^4 QN^v. 
\end{align}
Again, this follows either from \eqref{eq:EMV-4-Bianchi} or directly from the spherically symmetric equations.

\subsubsection{The spherically symmetric reduction}

\begin{prop}\label{prop:SS-equiv}
    Let $(\mathcal Q,r,\Omega^2,Q,f_\mathrm{sph})$ be a solution of the spherically symmetric Einstein--Maxwell--Vlasov system as defined by \cref{def:sph-sym-EMV}. Then $(\mathcal M,g,F,f)$ solves the Einstein--Maxwell--Vlasov system if we lift the solution according to
    \begin{align}
\nonumber        \mathcal M&\doteq \mathcal Q\times S^2,\\
        g&\doteq -\frac{\Omega^2}{2}(du\otimes dv+dv\otimes du)+r^2\gamma,\\
        F&\doteq-\frac{Q}{2r^2}du\wedge dv,\\
        f(u,v,\vartheta^1,\vartheta^2,p^u,p^v,p^1,p^2)&\doteq f_\mathrm{sph}(u,v,p^u,p^v),\label{eq:f-app-B}
    \end{align}
    where $(\vartheta^1,\vartheta^2)$ is a local coordinate system on $S^2$.
\end{prop}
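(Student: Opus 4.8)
The plan is to verify directly that the lifted quadruple $(\mathcal M,g,F,f)$ satisfies the three equations of \cref{defn:general-Vlasov}, organizing the check into four steps: (i) identifying the number current $N$ and the energy-momentum tensor $T$ of the lifted distribution function with the spherically symmetric matter quantities of \cref{def:sph-sym-EMV}; (ii) the Einstein equations; (iii) the Maxwell equations; (iv) the Vlasov equation. First I would record that the lift of $f$ is admissible: writing $\Pi\colon P^\mathfrak m\to P^\mathfrak m_\mathrm{red}$ for the natural projection $(u,v,\vartheta^1,\vartheta^2,p^u,p^v,p^1,p^2)\mapsto(u,v,p^u,p^v)$---which indeed lands in $P^\mathfrak m_\mathrm{red}$ because $\Omega^2p^up^v=\mathfrak m^2+r^2\slashed g_{AB}p^Ap^B\ge\mathfrak m^2$ on $P^\mathfrak m$---we have $f=f_\mathrm{sph}\circ\Pi$ by \eqref{eq:f-app-B}. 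Since $\Pi$ is a smooth submersion and $f_\mathrm{sph}\in C^\infty_{\sigma,\loc}(P^\mathfrak m_\mathrm{red})$ is nonnegative, $f$ is a genuine nonnegative smooth distribution function on $P^\mathfrak m$ with enough fiber decay for all its moments to be well-defined and smooth; moreover $f$, hence $N$ and $T$, is $\mathrm{SO}(3)$-invariant in the sense of \eqref{eq:TM-action}, so $N$ and $T$ are spherically symmetric tensor fields.

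For Step~(i), fix $x\in\mathcal M$ projecting to $(u,v)\in\mathcal Q$. Using \cref{lem:SW} (or, equivalently, changing variables from normal coordinates to the canonical double null momentum coordinates $(p^u,p^v,p^1,p^2)$ on $T_x\mathcal M$ adapted to \eqref{eq:dn}), I would compute the pushforward of the canonical measure $d\mu^\mathfrak m_x$ under the fibration $P^\mathfrak m_x\to\{(p^u,p^v):\Omega^2p^up^v\ge\mathfrak m^2\}$ whose fibers are the circles $\{r^2\slashed g_{AB}p^Ap^B=\ell^2/r^2\}$, with $\ell$ fixed by \eqref{eq:mass-shell}. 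The upshot should be that $d\mu^\mathfrak m_x$ pushes forward to $\pi\Omega^2\,dp^udp^v$, so that integrating against $d\mu^\mathfrak m_x$ any function that does not depend on the direction of $(p^1,p^2)$ reproduces exactly the normalization in \eqref{eq:Nu}--\eqref{eq:S}. The components $N^A$, $T^{uA}$, $T^{vA}$ and the trace-free part of $T^{AB}$ then vanish by the parity of the integrand in the angular-momentum directions (equivalently, by $\mathrm{SO}(3)$-invariance), while $\tfrac12\slashed g_{AB}T^{AB}=\mathbf S$ inserts the factor $\tfrac12\Omega^2(\Omega^2p^up^v-\mathfrak m^2)$ into the integrand and thus coincides with the function $S$ of \eqref{eq:S}. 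Hence $N^u,N^v$ and $T^{uu},T^{uv},T^{vv},\mathbf S$ coincide with the quantities of \cref{def:sph-sym-EMV}, and $T^\mathrm{EM}$ computed from \eqref{eq:EMEM} is the Coulomb form recorded in \cref{def:charged-spacetime}.

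With the matter fields identified, Steps~(ii)--(iv) become direct verifications against the formulas of \cref{sec:formulas}. For (ii): for a metric of the form \eqref{eq:dn} the Einstein tensor has only $(u,v)$-block components together with a pure-trace angular part, so the Einstein equations $\Ric(g)-\tfrac12R(g)g=2(T^\mathrm{EM}+T)$ are equivalent to the wave equations \eqref{eq:EE-wave-r}--\eqref{eq:EE-wave-Omega} and the Raychaudhuri equations \eqref{eq:EE-Ray-u}--\eqref{eq:EE-Ray-v} with $\mathbf T\doteq T^\mathrm{EM}+T$; substituting the Coulomb form and Step~(i) turns these into \eqref{eq:r-wave}--\eqref{eq:Ray-v}, which hold by hypothesis (the identity $\Div_g(T^\mathrm{EM}+T)=0$ is then automatic and need not be imposed). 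For (iii): since $F$ is of the Coulomb form \eqref{eq:Q-defn}, \eqref{eq:2-form-div} gives $\nabla^\mu F_{u\mu}=\partial_uQ/r^2$ and $\nabla^\mu F_{v\mu}=-\partial_vQ/r^2$, with all other components of $\nabla^\mu F_{\nu\mu}$ vanishing by symmetry; lowering an index on $N$ with $g$ gives $N_u=-\tfrac12\Omega^2N^v$ and $N_v=-\tfrac12\Omega^2N^u$, so $\nabla^\alpha F_{\mu\alpha}=\mathfrak eN_\mu$ reduces to \eqref{eq:Max-u}--\eqref{eq:Max-v}. For (iv): writing the full spray \eqref{eq:spray-1} in canonical double null coordinates on $T\mathcal M$, inserting the Christoffel symbols and the Coulomb form of $F$, and eliminating $r^2\slashed g_{AB}p^Ap^B$ in favour of $\Omega^2p^up^v-\mathfrak m^2$ via \eqref{eq:mass-shell}, its $\partial_{p^u}$ and $\partial_{p^v}$ components become those of the reduced spray \eqref{eq:spray-2}, while its $\partial_{\vartheta^A}$ and $\partial_{p^A}$ components project to zero under $\Pi$; that is, the full spray is $\Pi$-related to the reduced spray. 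Combined with $f=f_\mathrm{sph}\circ\Pi$ and the hypothesis \eqref{eq:SS-MV} that $f_\mathrm{sph}$ is annihilated by the reduced spray, this gives $Xf=0$.

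The one genuinely delicate point is the measure computation in Step~(i): one must get the pushforward of $d\mu^\mathfrak m_x$ along the angular fibration exactly right---both the power of $\Omega^2$ and the numerical constant $\pi$ that appear in the definitions \eqref{eq:Nu}--\eqref{eq:S}---and one must check that the degeneration of the angular fibers at $\ell=0$ (where $\Omega^2p^up^v=\mathfrak m^2$) spoils neither the identification nor the smoothness of the moments. Everything else amounts to matching the spherically symmetric Einstein, Maxwell, and Vlasov equations against the identities already collected in \cref{sec:formulas}.
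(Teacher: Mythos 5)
Your proposal is correct and follows essentially the same route as the paper: the Einstein and Maxwell equations are dispatched by noting the spherically symmetric system is their reduction, the Vlasov equation is handled by checking the full electromagnetic geodesic spray descends to the reduced spray (the paper phrases this as constancy of $f$ along electromagnetic geodesics via the spherically symmetric Lorentz force and mass shell relation, which is the same computation as your $\Pi$-relatedness), and the key step is the computation of $d\mu^\mathfrak m_x$ via \cref{lem:SW}, yielding exactly the pushforward $\pi\Omega^2\,dp^udp^v$ you identify and hence the formulas \eqref{eq:Nu}--\eqref{eq:S}. No gaps.
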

Note that $f$ in \eqref{eq:f-app-B} is $\mathrm{SO}(3)$-invariant as a function on $T\mathcal M$ as defined in \cref{sec:double-null-gauge}.
\begin{proof}
    As the equations \eqref{eq:r-wave}--\eqref{eq:Max-v} are equivalent to the Einstein equations and Maxwell equations, we must only check that $f$, defined by \eqref{eq:f-app-B}, satisfies $Xf=0$, where $X$ is given by \eqref{eq:spray-1}, and that the spherically symmetric formulas \eqref{eq:Nu}--\eqref{eq:S} appropriately reconstruct the $(3+1)$-dimensional number current and energy-momentum tensor. 

    Let $\gamma(s)$ be an electromagnetic geodesic. Then $Xf=0$ at $(\gamma(s_0),\dot\gamma(s_0))\in P^\mathfrak m$ is equivalent to 
    \begin{equation}
       \left. \frac{d}{ds}\right|_{s=s_0}f(\gamma(s),\dot\gamma(s))=0.\label{eq:app-B-1}
    \end{equation}
    Using the chain rule
    \begin{align}\nonumber
       \left. \frac{d}{ds}\right|_{s=s_0}f_\mathrm{sph}(\gamma^u(s),\gamma^v(s),p^u(s),p^v(s))&= p^u\partial_u f_\mathrm{sph}+p^v\partial_vf_\mathrm{sph}+\frac{dp^u}{ds}\partial_{p^u}f_\mathrm{sph}+\frac{dp^v}{ds}\partial_{p^v}f_\mathrm{sph},
    \end{align}
    the spherically symmetric Lorentz force equations \eqref{eq:SS-Lor-1} and \eqref{eq:SS-Lor-2}, the mass shell relation \eqref{eq:SS-Lor-3}, and the spherically symmetric Vlasov equation \eqref{eq:SS-MV}, we see that \eqref{eq:app-B-1} holds. 

     To compute the energy-momentum tensor in spherical symmetry, we use \cref{lem:SW}. Let $x\in \mathcal M$ and take $(\vartheta^1,\vartheta^2)$ to be local coordinates on $S^2$ which are normal at the spherical component of $x$, so that $\gamma_{AB}=\delta_{AB}$. We have 
    \begin{equation*}
       dH=-\frac{\Omega^2}{2}p^vdp^u-\frac{\Omega^2}{2}p^udp^v+r^2p^1dp^1+r^2p^2dp^2.
    \end{equation*}
Therefore, if we define
\begin{equation*}
    \alpha\doteq - r^2(p^v)^{-1}dp^v\wedge dp^1\wedge dp^2,  
\end{equation*} then
\begin{equation*}
    dH\wedge\alpha = \tfrac 12\Omega^2r^2\, dp^u\wedge dp^v\wedge dp^1\wedge dp^2.
\end{equation*}
Therefore, by \cref{lem:SW}, 
\begin{equation*}
    d\mu^\mathfrak m_x=r^2(p^v)^{-1}\,dp^vdp^1dp^2
\end{equation*}
as measures on $(0,\infty)\times \Bbb R^2_{(p^1,p^2)}$. The remaining momentum variable $p^u$ is obtained from $p^v$, $p^1$, and $p^2$ via the mass shell relation \eqref{eq:mass-shell}. If we set $\tan\beta=p^2/p^1$ and use that $\ell^2=r^4((p^1)^2+(p^2)^2)$, then 
\begin{equation*}
    d\mu^\mathfrak m_x= r^{-2}(p^v)^{-1}dp^v \,\ell\,d\ell \,d\beta.
\end{equation*}
For any weight function $w=w(p^u,p^v)$, we therefore have 
\begin{equation}
    \int_{(0,\infty)\times \Bbb R^2} w f\,d\mu_x^\mathfrak m =r^{-2}\int_0^{2\pi} \int_0^\infty \int_0^\infty w\left(\frac{\ell^2+\mathfrak m^2r^2}{r^2\Omega^2p^v},p^v\right)f_\mathrm{sph}\left(u,v,\frac{\ell^2+\mathfrak m^2r^2}{r^2\Omega^2p^v},p^v\right)\frac{dp^v}{p^v} \,\ell\,d\ell\,d\beta.\label{eq:ingoing-1}
\end{equation}
Integrating out $\beta$ and applying the coordinate transformation $(p^v,\ell)\mapsto (p^u,p^v)$ reproduces the formulas \eqref{eq:Nu}--\eqref{eq:S} for $N$ and $T$.
\end{proof}

\begin{rk}
  Other works on the spherically symmetric Einstein--Vlasov system in double null gauge, such as \cite{DR16,Mosx-Vlasov-WP, Mosx-AdS-Vlasov}, represent the distribution function $f$ differently, opting to (at least implicitly) eliminate either $p^u$ or $p^v$ in terms of $\ell$ using the mass shell relation \eqref{eq:mass-shell}. This leads to different formulas for $N^\mu$ and $T^{\mu\nu}$, as these will then involve an integral over $\ell$, as in \eqref{eq:ingoing-1}. To make this precise, we can define the \emph{outgoing representation}\footnote{Of course, we may also define an \emph{ingoing representation} $f_\nwarrow(u,v,p^u,\ell)$ by interchanging $p^u$ and $p^v$.} of the spherically symmetric $f$ by
\begin{equation}\label{eq:f-o}
    f_{\nearrow}(u,v,p^v,\ell)\doteq f\left(u,v,\frac{\ell^2+r^2\mathfrak m^2}{r^2\Omega^2p^v},p^v\right),
\end{equation}
and \eqref{eq:ingoing-1} implies, for instance,
\begin{equation*}
  T^{uv}=\frac{2\pi}{r^4\Omega^2}\int_0^\infty\int_0^\infty \frac{\ell^2+r^2\mathfrak m^2}{p^v} f_\nearrow(u,v,p^v,\ell)\,dp^v\,\ell\,d\ell.
\end{equation*}
The outgoing representation may be taken as an alternative \emph{definition} of the spherically symmetric Vlasov system. We have chosen the formulation here in terms of $p^u$ and $p^v$ because of its explicit symmetry, which is key for constructing time-symmetric initial data in the proof of \cref{thm:main}. We have also chosen to always write $N^\mu$ and $T^{\mu\nu}$ in contravariant form, so that $N^u$ is associated with $p^u$, etc. This causes extra factors of $g_{uv}=-\frac 12\Omega^2$ to appear in various formulas, compared to  \cite{DR16,Mosx-Vlasov-WP, Mosx-AdS-Vlasov}.
\end{rk}

\subsubsection{Previous work on the Einstein--Maxwell--Vlasov system}

Besides the general local existence result of \cite{Blancel}, the Einstein--Maxwell--Vlasov model does not seem to have been extensively studied. Dispersion for small data solutions of the Einstein--Maxwell--massive Vlasov model (stability of Minkowski space) in spherical symmetry was proved by Noundjeu \cite{EMV-stability}. See also \cite{EMV-local-existence} for local well-posedness in Schwarzschild coordinates and \cite{noundjeu2004existence} for the existence of nontrivial solutions of the constraints. Many static solutions are known to exist for the massive system, first studied numerically by Andr\'easson--Eklund--Rein \cite{andreasson2009numerical} and proved rigorously by Thaller \cite{Thaller-thin-shell} in spherical symmetry. Thaller has also shown that stationary and axisymmetric (but not spherically symmetric) solutions exist \cite{Thaller-rotating}.

\subsection{Local well-posedness in spherical symmetry}
\label{sec:local-WP}

Electromagnetic geodesics, in contrast to ordinary geodesics, can have limit points in $\mathcal M$. By standard ODE theory, this can only happen if $p(s)\to 0$ as $s\to \pm\infty$.\footnote{This does not happen for ordinary geodesics because of the following homogeneity property: If $s\mapsto\gamma(s)$ is a geodesic, then so is $s\mapsto \gamma(as)$ for any $a>0$. See \cite[Lemma 5.8]{Oneill}, where homogeneity is used to identify radial geodesics emanating from the same point with parallel velocity.} On a fixed spherically symmetric background, one can show that an electromagnetic geodesic $\gamma$ cannot have a limit point if either $\mathfrak m[\gamma]>0$ or $\ell[\gamma]>0$. However, even in the massless case, an electromagnetic geodesic with initially positive momentum will still have positive momentum for a short (coordinate) time. Therefore, one can show that local well-posedness in double null gauge holds in the case of massive particles or massless particles with momentum initially supported away from zero. 

\begin{rk}
    Bigorgne has shown that the relativistic Maxwell--massless Vlasov system is classically ill-posed if the initial data are allowed to be supported near zero momentum \cite{Bigorgne}. We expect a similar result to hold for the Einstein--Maxwell--massless Vlasov system. 
\end{rk}

We now state our fundamental local well-posedness result for the spherically symmetric Einstein--Maxwell--Vlasov system. We formulate this in terms of the \emph{characteristic initial value problem}, though the techniques used apply to the Cauchy problem as well. Note that we work in function spaces that allow for noncompact support in the momentum variables, although this is not needed for the applications in this paper (but is useful in the context of cosmic censorship \cite{DR16}). The proof of local existence is deferred to \cref{app:B}.

Given $U_0<U_1$ and $V_0<V_1$, let
\begin{align*}
  \mathcal C(U_0,U_1,V_0,V_1)  &\doteq (\{U_0\}\times[V_0,V_1])\cup([U_0,U_1]\times\{V_0\}), \\
   \mathcal R(U_0,U_1,V_0,V_1) &\doteq [U_0,U_1]\times [V_0,V_1].
\end{align*} We will consistently omit $(U_0,U_1,V_0,V_1)$ from the notation for these sets when the meaning is clear. A function $\phi:\mathcal C\to \Bbb R$ is said to be \emph{smooth} if it is continuous and $\phi|_{\{U_0\}\times [V_0,V_1]}$ and $\phi|_{[U_0,U_1]\times\{V_0\}}$ are $C^\infty$ single-variable functions. This definition extends naturally to functions $f:P^\mathfrak m_\mathrm{red}|_\mathcal C\to \Bbb R_{\ge 0}$.

\begin{defn}
    A smooth \emph{(bifurcate) characteristic initial data set} for the spherically symmetric Einstein--Maxwell--Vlasov system with parameters $\mathfrak m, \mathfrak e$, and $\sigma$ consists of smooth functions $\mathring r,\mathring \Omega^2,\mathring Q:\mathcal C\to \Bbb R$ with $\mathring r$ and $\mathring\Omega^2$ positive, and a smooth function $\mathring f:P^\mathfrak m_\mathrm{red}|_\mathcal C\to\Bbb R_{\ge 0}$, where $P^\mathfrak m_\mathrm{red}|_\mathcal C$ is defined using $\mathring\Omega^2$. Moreover, we assume that the norms 
    \begin{equation*}
  \|\mathring f\|_{C^k_\sigma(P|_\mathcal C)}\doteq \sum_{0\le i_1+i_2\le k}  \left(\sup_{P^\kappa|_{\{U_0\}\times[V_0,V_1]}}\langle p^\tau\rangle^{\sigma+i_2}|\partial_v^{i_1}\partial_{p}^{i_2}\mathring f|+\sup_{P^\kappa|_{[U_0,U_1]\times\{V_0\}}}\langle p^\tau\rangle^{\sigma+i_2} |\partial_u^{i_1}\partial_{p}^{i_2}\mathring f|\right)
\end{equation*}
are finite for every $k\ge 0$. In the case $\mathfrak m=0$, we also assume that $\mathring f$ has locally positive angular momentum. Finally, we assume that Raychaudhuri's equations \eqref{eq:Ray-u} and \eqref{eq:Ray-v}, together with Maxwell's equations \eqref{eq:Max-u} and \eqref{eq:Max-v} are satisfied to all orders in directions tangent to $\mathcal C$. 
\end{defn}

\begin{prop}\label{prop:char-IVP-Vlasov}
For any $\mathfrak m\ge 0$, $\mathfrak e\in\Bbb R$, $\sigma>4$, $B>0$, and $c_\ell>0$, there exists a constant $\ve_\loc>0$ with the following property. Let $(\mathring r,\mathring\Omega^2,\mathring Q,\mathring f)$ be a characteristic initial data set for the spherically symmetric Einstein--Maxwell--Vlasov system on $\mathcal C(U_0,U_1,V_0,V_1)$. If $U_1-U_0<\ve_\loc$, $V_1-V_0<\ve_\loc$, 
\begin{equation*}
    \|{\log\mathring r}\|_{C^2(\mathcal C)}+ \|{\log\mathring \Omega^2}\|_{C^2(\mathcal C)}+\|\mathring Q\|_{C^1(\mathcal C)}+\|\mathring f\|_{C^1_\sigma(P^\mathfrak m_\mathrm{red}|_\mathcal C)}\le B,
\end{equation*}
and in the case $\mathfrak m=0$, $\ell\ge c_\ell$ on $\spt(\mathring f)$, then there exists a unique smooth solution $(r,\Omega^2,Q,f)$ of the spherically symmetric Einstein--Maxwell--Vlasov system on $\mathcal R(U_0,U_1,V_0,V_1)$ which extends the initial data. If $\mathring f$ has locally compact support in $p$, then so does $f$. Moreover, the norms 
    \begin{equation*}
        \|{\log r}\|_{C^k(\mathcal R)}, \|{\log\Omega^2}\|_{C^k(\mathcal R)}, \|Q\|_{C^k(\mathcal R)}, \|f\|_{C^k_\sigma(P^\mathfrak m_\mathrm{red}|_\mathcal R)}
    \end{equation*} are finite for any $k$ and 
    can be bounded in terms of appropriate higher order initial data norms.
\end{prop}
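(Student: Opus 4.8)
The plan is to prove \cref{prop:char-IVP-Vlasov} by a contraction-mapping argument in double null gauge, with the Vlasov field handled by the method of characteristics; this is carried out in \cref{app:B}. Concretely, one fixes a small $\ve_\loc$ and iterates the map $\Phi$ which sends an approximate solution $(r_{(n)},\Omega^2_{(n)},Q_{(n)},f_{(n)})$ on $\mathcal R$, agreeing with $(\mathring r,\mathring\Omega^2,\mathring Q,\mathring f)$ on $\mathcal C$, to a new tuple as follows: first form the moments $N_{(n)},T_{(n)},S_{(n)}$ of $f_{(n)}$ via \eqref{eq:Nu}--\eqref{eq:S}; then integrate the wave equations \eqref{eq:r-wave}--\eqref{eq:Omega-wave} and the Maxwell equations \eqref{eq:Max-u}--\eqref{eq:Max-v} from $\mathcal C$ into $\mathcal R$ with these sources to obtain $(r_{(n+1)},\Omega^2_{(n+1)},Q_{(n+1)})$; and finally solve the linear transport equation \eqref{eq:SS-MV}, with the spray \eqref{eq:spray-2} built out of the new geometry and data $\mathring f$ on $P^\mathfrak m_\mathrm{red}|_\mathcal C$, by integrating $f$ along the integral curves of the reduced electromagnetic geodesic spray, i.e.\ along electromagnetic geodesics, to obtain $f_{(n+1)}$.

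First I would check that $\Phi$ preserves a suitable ball: after shrinking $\ve_\loc$, all geometric quantities stay comparable to their values on $\mathcal C$, so $r$ and $\Omega^2$ remain bounded above and below on $\mathcal R$. The delicate point is to control the electromagnetic geodesic flow. Since future-directed causal momenta satisfy $p^u,p^v\ge 0$ with $p^\tau>0$, the functions $s\mapsto\gamma^u(s)$ and $s\mapsto\gamma^v(s)$ are monotone, so running the flow backward from any point of $\mathcal R$ reaches $\mathcal C$; hence $f_{(n+1)}$ is well-defined on all of $P^\mathfrak m_\mathrm{red}|_\mathcal R$ and equals the pullback of $\mathring f$ under the flow. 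To rule out blow-up of the flow --- recall that, unlike ordinary geodesics, electromagnetic geodesics lack the reparametrization symmetry $s\mapsto as$ and can accumulate when momenta degenerate --- I would run a continuity argument on \eqref{eq:SS-Lor-4}--\eqref{eq:SS-Lor-5}: their right-hand sides are bounded using the mass-shell relation \eqref{eq:mass-shell}, which gives $\Omega^2p^up^v\ge\mathfrak m^2$ directly when $\mathfrak m>0$, and --- when $\mathfrak m=0$ --- gives $\Omega^2p^up^v\ge c_\ell^2/r^2>0$ once one uses that $\ell$ is conserved along the flow (\cref{lem:mass-constant}) together with the hypothesis $\ell\ge c_\ell$ on $\spt(\mathring f)$. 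Either way $p^u$ and $p^v$ stay bounded away from $0$ and $\infty$ on $\spt(f)\cap P^\mathfrak m_\mathrm{red}|_\mathcal R$, the flow map and its $(x,p)$-derivatives obey Grönwall bounds, and the weight $\langle p^\tau\rangle$ is comparable along the flow, so the norms \eqref{eq:f-norm-defn} propagate. The extra $i_2$-weight there is exactly what the transport equation forces: since the momentum components of \eqref{eq:spray-2} are quadratic in $p$, solving $Xf=0$ for an $x$-derivative of $f$ expresses it in terms of $p^u\partial_{p^u}f$ and $p^v\partial_{p^v}f$ plus milder terms, so an $x$-derivative costs one momentum power and a $p$-derivative must decay one power faster. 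Local compact support in $p$, when assumed, is preserved because $\mathcal R$ is compact and the momenta are bounded there.

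Next I would show $\Phi$ is a contraction in a norm one derivative weaker than the one being propagated: differences of consecutive iterates satisfy the linearized equations, whose coefficients each carry a factor of the domain size, so $\ve_\loc$ small makes $\Phi$ contract, and the fixed point $(r,\Omega^2,Q,f)$ solves \eqref{eq:r-wave}--\eqref{eq:Omega-wave}, \eqref{eq:Max-u}--\eqref{eq:Max-v} and \eqref{eq:SS-MV}. It then remains to recover the constraint equations \eqref{eq:Ray-u}--\eqref{eq:Ray-v} in the interior (and to check the second Maxwell equation of the pair \eqref{eq:Max-u}--\eqref{eq:Max-v} if only one was used as an evolution equation). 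For Raychaudhuri I would introduce the defects $\mathcal E_u\doteq\partial_u(\partial_ur/\Omega^2)+\tfrac14 r\Omega^2 T^{vv}$ and $\mathcal E_v\doteq\partial_v(\partial_vr/\Omega^2)+\tfrac14 r\Omega^2 T^{uu}$, which vanish on $\mathcal C$ by hypothesis, and show --- using the wave equations together with the Bianchi identities \eqref{eq:Bianchi-1}--\eqref{eq:Bianchi-2}, which hold identically once $f$ solves \eqref{eq:SS-MV} and $N,T$ are its moments --- that $\partial_v\mathcal E_u$ and $\partial_u\mathcal E_v$ are linear in $(\mathcal E_u,\mathcal E_v)$, whence both defects vanish on $\mathcal R$ by ODE uniqueness; the Maxwell constraint propagates similarly using the conservation law \eqref{eq:SS-N-div}. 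Uniqueness follows from the same contraction estimate (or a direct Grönwall estimate on the difference of two solutions), and the higher-order bounds on $\|\log r\|_{C^k(\mathcal R)}$, $\|\log\Omega^2\|_{C^k(\mathcal R)}$, $\|Q\|_{C^k(\mathcal R)}$, $\|f\|_{C^k_\sigma(P^\mathfrak m_\mathrm{red}|_\mathcal R)}$ are obtained by commuting derivatives through the system and bootstrapping, each commutation producing equations linear in the top-order unknowns with coefficients controlled by the already-bounded lower-order norms.

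I expect the main obstacle to be precisely this interaction between the transport equation and the geometry through the weighted spaces: electromagnetic geodesics can develop limit points when momenta tend to zero, so the argument genuinely relies on the mass-shell lower bound --- automatic for $\mathfrak m>0$ and supplied by the conserved angular momentum $\ell\ge c_\ell$ for $\mathfrak m=0$ --- and on carefully tracking both the momentum weights and the $(x,p)$-derivatives of the flow map so that the moments $N$, $T$, $S$ and all their derivatives close at the claimed regularity. The remaining ingredients --- integrating the wave and Maxwell equations, the contraction estimate, and constraint propagation --- are routine once the flow is under control.
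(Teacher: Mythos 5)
Your proposal is correct and follows essentially the same route as the paper: a Picard-type iteration/fixed-point argument in which the geometry is obtained by integrating the wave and (ingoing) Maxwell equations and the Vlasov field is transported along electromagnetic geodesics, with the flow controlled via the mass-shell lower bound ($\mathfrak m>0$ or conserved $\ell\ge c_\ell$) and the weighted norms propagated by Gr\"onwall, followed by propagation of the remaining constraints (Raychaudhuri via the Bianchi identities, the outgoing Maxwell equation via the conservation law \eqref{eq:SS-N-div}). The paper's only organizational differences are that it proves the iteration for an abstract wave--transport system in \cref{app:B} after renormalizing the momenta ($\tilde p=\Omega p$) so the mass shell is fixed, and it smooths the corner of the bifurcate hypersurface when solving the transport equation, but these are technical packaging rather than a different argument.
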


The proof of the proposition is given in \cref{sec:proof-of-LWP}.

\subsection{The generalized extension principle}\label{sec:ext}

Recall that a spherically symmetric Einstein-matter model is said to satisfy the \emph{generalized extension principle} if any ``first singularity'' either emanates from a point on the spacetime boundary with $r=0$, or its causal past has infinite spacetime volume. This property has been shown to hold for the Einstein-massless scalar field system by Christodoulou \cite{TheBVPaper}, for the Einstein--massive Vlasov system by Dafermos and Rendall \cite{DR16}, and for the Einstein--Maxwell--charged Klein--Gordon system by Kommemi \cite{Kommemi13}. In this paper, we extend the generalized extension principle of Dafermos--Rendall to the Einstein--Maxwell--Vlasov system: 

\begin{prop}[The generalized extension principle]\label{prop:ext}
Let $(\mathcal Q,r,\Omega^2,Q,f)$ be a smooth solution of the spherically symmetric Einstein--Maxwell--Vlasov system with admissible momentum, which is defined on an open set $\mathcal Q\subset \Bbb R^2_{u,v}$. If $\mathcal Q$ contains the set $\mathcal R'\doteq \mathcal R(U_0,U_1,V_0,V_1)\setminus\{(U_1,V_1)\}$ and the following two conditions are satisfied:
\begin{enumerate}
    \item $\mathcal R'$ has finite Lorentzian volume, i.e.,   \begin{equation}
     \label{eq:ext-hyp-1} \iint_{\mathcal R'}\Omega^2\,dudv <\infty,
    \end{equation}
    \item the area-radius is bounded above and below, i.e.,  \begin{equation}
     \label{eq:ext-hyp-2}   \sup_{\mathcal R'}|{\log r}|<\infty,
    \end{equation}
\end{enumerate}
then the solution extends smoothly, with admissible momentum, to a neighborhood of $(U_1,V_1)$.
\end{prop}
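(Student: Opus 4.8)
The plan is to follow the strategy of Dafermos–Rendall \cite{DR16} for the Einstein–massive Vlasov system, with the modifications needed to accommodate the electromagnetic field and the charged spray. The first step is to extract uniform a priori bounds on all the geometric quantities on $\mathcal{R}'$. From hypothesis \eqref{eq:ext-hyp-2} we have $0 < r_{\min} \le r \le r_{\max}$ on $\mathcal R'$. Raychaudhuri's equations \eqref{eq:Ray-u}–\eqref{eq:Ray-v} show that $\partial_u r/\Omega^2$ is monotone in $u$ and $\partial_v r/\Omega^2$ is monotone in $v$; combined with the volume bound \eqref{eq:ext-hyp-1} and the area-radius bounds, one integrates the wave equation \eqref{eq:r-wave} and the mass formulas \eqref{eq:Hawking-u}–\eqref{eq:Hawking-v} to control the Hawking mass $m$ (hence $\varpi$), and then \eqref{eq:Omega-formula} to control $\Omega^2$ from above and below away from zero (using that $1 - 2m/r$ stays bounded away from $0$, or arguing directly as in \cite{DR16,Kommemi13} when it does not). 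The charge $Q$ is monotone in $u$ and in $v$ by \eqref{eq:Max-u}–\eqref{eq:Max-v}, with total variation controlled by $\iint r^2 \Omega^2 (N^u + N^v)$, which in turn is controlled by the mass via \eqref{eq:Hawking-u}–\eqref{eq:Hawking-v} and the dominant energy condition (Remark 2.8); so $Q$ extends continuously to $(U_1,V_1)$ and is uniformly bounded. This gives $C^0$ (indeed $C^1$ in characteristic directions) control of $r,\Omega^2,Q$ on all of $\mathcal R'$.

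The second and main step is to propagate control of the distribution function $f$, i.e.\ to show that $\spt(f)$ stays in a fixed compact set in momentum space over $\mathcal R'$ and that the norms $\|f\|_{C^k_\sigma(P^\mathfrak m_{\mathrm{red}}|_{\mathcal R'})}$ remain finite. Since $f$ is constant along electromagnetic geodesics, this amounts to controlling the spray \eqref{eq:spray-2}: one must bound $p^u, p^v$ from above and, crucially, \emph{from below away from zero} along any electromagnetic geodesic segment contained in $\mathcal R'$. Here the conserved quantities of \cref{lem:mass-constant} are essential. The mass shell relation \eqref{eq:SS-Lor-3}, $\Omega^2 p^u p^v = \ell^2/r^2 + \mathfrak m^2$, together with the two-sided bounds on $\Omega^2$ and $r$, shows that $p^u p^v$ is bounded above and below (below, using either $\mathfrak m > 0$ or, in the massless case, the \emph{locally positive angular momentum} assumption which forces $\ell \ge c_\ell > 0$ on $\spt f$, with $\ell$ conserved). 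It then remains to bound the \emph{ratio} $p^v/p^u$: rewriting \eqref{eq:SS-Lor-4}–\eqref{eq:SS-Lor-5} and using the already-established $C^1$ bounds on $\log\Omega^2$, $\log r$, and $Q/r^2$, one derives a Grönwall-type estimate for $\log(\Omega^2 p^u)$ and $\log(\Omega^2 p^v)$ along the geodesic, with the affine-parameter length of the geodesic segment controlled by the volume bound (as in \cite{DR16}). This closes the $C^0$ bound on momenta; higher $p$- and $x$-derivatives of $f$, and the weighted norms, are then propagated by differentiating the spray and bootstrapping, exactly as in the local well-posedness argument of \cref{prop:char-IVP-Vlasov}.

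The third step is to conclude. With uniform bounds on $\log r, \log\Omega^2, Q$ in $C^1$ along characteristics and on $f$ in every $C^k_\sigma$ norm over $\mathcal R'$, and with $N^\mu, T^{\mu\nu}, S$ therefore bounded, one shows that $r, \Omega^2, Q$ and $f$ extend continuously (in fact smoothly, by repeatedly commuting the equations with $\partial_u, \partial_v$ and the spray) to the characteristic rectangle $\mathcal C(U_1, V_1; \delta)$ emanating backward from a neighborhood of $(U_1,V_1)$; equivalently one obtains a smooth characteristic data set on two null segments through $(U_1, V_1)$. Applying the local existence statement \cref{prop:char-IVP-Vlasov} to this data produces a smooth solution with admissible momentum on a full neighborhood of $(U_1,V_1)$, which agrees with the original solution on the overlap by uniqueness; this is the desired extension.

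I expect the main obstacle to be the momentum lower bound in the massless case and, more generally, ruling out that an electromagnetic geodesic in $\mathcal R'$ degenerates ($p \to 0$) near the corner $(U_1,V_1)$. Unlike geodesics, electromagnetic geodesics are not scale-invariant and \emph{can} have limit points precisely when $p \to 0$ (cf.\ \cref{sec:local-WP}); the argument must use the locally positive angular momentum hypothesis together with conservation of $\ell$ and $\mathfrak m$ to exclude this, and must carefully track that the Grönwall exponents depend only on the a priori bounds and the finite volume, not on how close one is to $(U_1,V_1)$.
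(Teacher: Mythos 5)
Your overall architecture (geometric a priori bounds, then control of the electromagnetic geodesic flow via the conserved quantities $\mathfrak m$, $\ell$ and the mass shell relation, then extension by \cref{prop:char-IVP-Vlasov}) matches the paper, and your closing paragraph correctly identifies the massless lower bound on momentum as the place where admissible momentum enters. But there is a genuine circularity in how you propose to get the estimates, and it is exactly the point the paper's proof (via \cref{lem:fundamental-local-estimate}) is designed to circumvent. In your Step 1 you claim pointwise control of $m$ and then of $\Omega^2$ through \eqref{eq:Omega-formula}, and you end the step asserting $C^1$ control of $r,\Omega^2,Q$ ``in characteristic directions.'' None of this is available from \eqref{eq:ext-hyp-1}--\eqref{eq:ext-hyp-2} alone: \eqref{eq:Omega-formula} requires $1-2m/r>0$ (the rectangle may well contain trapped spheres), and pointwise bounds on $\partial_u r$, $\partial_v r$, $m$ require integrating \eqref{eq:r-wave}, whose source $T^{uv}$ is only controllable after the momentum support of $f$ is controlled --- which in your Step 2 you propose to do by a pointwise Gr\"onwall estimate using precisely those $C^1$ bounds on $\log\Omega^2$ and $\log r$. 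The paper breaks this loop by never using pointwise derivative bounds at that stage: it derives only \emph{area-integral} estimates ($\iint\Omega^4T^{uv}$ from the wave equation for $r^2$, $\iint|\partial_ur\,\partial_vr|$ from the Raychaudhuri sign-change argument plus an $r\partial_vr$ flux bound, $\iint\Omega^2|m|$, and only a sup bound on $|\log\Omega^2|$), and then controls $\log(\Omega^2p^u)$ along a geodesic by writing the Lorentz-force integral as a $v$-integral along the curve and converting it, via the fundamental theorem of calculus and the wave equations \eqref{eq:r-wave}, \eqref{eq:Omega-wave}, into spacetime area integrals of exactly those quantities (the Moschidis-style representation formula, \eqref{eq:ext-15}); the parameter length $S$ is bounded by the lower bound on $p^\tau$ from the mass shell relation, not by the volume bound per se. Without this device your Gr\"onwall step has no input.

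A second, smaller but still structural, gap is your claim that first and higher derivatives of $f$ are propagated ``exactly as in the local well-posedness argument.'' Naive commutation of \eqref{eq:SS-MV} with $\partial_u$ produces the top-order term $\partial_u^2\Omega^2\,\partial_{p^u}f$, and $\partial_u^2\Omega^2$ can only be estimated from the commuted wave equation after $\partial_uf$ (hence $\partial_uT^{uv}$) is bounded --- another loop. In local well-posedness this is harmless because the small time interval absorbs the bad term, but on a rectangle of fixed size it is not; the paper closes the system at the regularity $C^2\times C^2\times C^1\times C^1$ by commuting with the \emph{horizontal lifts} $\hat\partial_u f$, $\hat\partial_v f$ (the Dafermos--Rendall observation, \eqref{eq:commuted-1}--\eqref{eq:commuted-2}), whose evolution equations contain no second derivatives of $\Omega^2$. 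With these two ingredients added --- the area-integral/representation-formula scheme for the zeroth-order momentum bound and the horizontal-lift commutation for the first-order bounds --- your outline becomes the paper's proof; as written, the estimates in Steps 1 and 2 cannot be obtained in the order you propose.
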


Therefore, since this system satisfies the dominant energy condition (\cref{rk:DEC}), the Einstein--Maxwell--Vlasov system is \emph{strongly tame} in Kommemi's terminology \cite{Kommemi13}, under the admissible momentum assumption. This is an important ``validation'' of the Einstein--Maxwell--Vlasov model over the charged null dust model and means the model enjoys Kommemi's a priori boundary characterization \cite{Kommemi13}, which will be used in \cref{sec:general-continuity} below. 

  \cref{prop:ext} is also used crucially in the proof of \cref{thm:main} because it provides a continuation criterion at zeroth order. This allows us to avoid commutation when treating the singular ``main beam'' in the construction of bouncing charged Vlasov beams.

We now give the proof of \cref{prop:ext}, assuming the ``fundamental local spacetime estimate'' to be stated and proved in \cref{sec:FLSE} below. The proof of the local estimate is based on a streamlining of the ideas already present in \cite{DR16} together with the monotonicity of charge inherent to the Einstein--Maxwell--Vlasov system and a quantitative lower bound on the ``coordinate time momentum'' $p^u+p^v$ obtained from the mass shell relation. 

\begin{proof}[Proof of \cref{prop:ext}]
    By \cref{lem:fundamental-local-estimate} below, \eqref{eq:ext-hyp-1} and \eqref{eq:ext-hyp-2} imply that 
      \begin{equation*}
        B\doteq \|{\log r}\|_{C^2(\mathcal R')}+ \|{\log \Omega^2}\|_{C^2(\mathcal R')}+\|Q\|_{C^1(\mathcal R')}+\| f\|_{C^1_\sigma(P^\mathfrak m_\mathrm{red}|_{\mathcal R'})}<\infty.
    \end{equation*}
    Let $U_1'>U_1$ and $V_1'>V_1$ be such that the segments $[U_1,U_1']\times\{V_0\}$ and $\{U_0\}\times[V_1,V_1']$ lie inside of $\mathcal Q$ and let $c_\ell>0$ be a lower bound for $\ell$ on $\spt(f)\cap P^\mathfrak m_\mathrm{red}|_{\mathcal C(U_0,U_1',V_0,V_1')}$ if $\mathfrak m=0$. Let $\ve_\loc>0$ be the local existence time for the spherically symmetric Einstein--Maxwell--Vlasov system with parameters $(\mathfrak m,\mathfrak e,\sigma,2B, c_\ell)$  given by \cref{prop:char-IVP-Vlasov}. Fix $(\tilde U,\tilde V)\in \mathcal R'$ with $U_1-\tilde U<\ve_\loc$ and $V_1-\tilde V<\ve_\loc$. 
    
    Observe that if $U_2>U_1$ is sufficiently close to $U_1$ and $V_2>V_1$ is sufficiently close to $V_1$, then 
     \begin{equation*}
        B\doteq \|{\log r}\|_{C^2(\tilde{\mathcal C})}+ \|{\log \Omega^2}\|_{C^2(\tilde{\mathcal C})}+\|Q\|_{C^1(\tilde{\mathcal C})}+\| f\|_{C^1_\sigma(P^\mathfrak m_\mathrm{red}|_{\tilde{\mathcal C}})}\le 2B
    \end{equation*} and $\ell \ge  c_\ell$ on $\spt(f)\cap P^\mathfrak m_\mathrm{red}|_{\tilde{\mathcal C}}$ if $\mathfrak m=0$, where $\tilde{\mathcal C}\doteq \mathcal C(\tilde U,U_2,\tilde V,V_2)$. Indeed, this is clear for $\log r$, $\log\Omega^2$, and $Q$ by smoothness of these functions on $\mathcal Q$. For $f$, we can also easily show this using the mean value theorem and the finiteness of $\|f\|_{C^2_\sigma(P^\mathfrak m_\mathrm{red}|_K)}$ on compact sets $K\subset \mathcal Q$. For $\ell$, this follows immediately from conservation of angular momentum and the domain of dependence property if $U_2\le U_1'$ and $V_2\le V_1'$.

    Therefore, by \cref{prop:char-IVP-Vlasov}, the solution extends smoothly, with admissible momentum, to the rectangle $\mathcal R(\tilde U,U_2,\tilde V,V_2)$, which contains $(U_1,V_1)$. This completes the proof.
\end{proof}

\subsubsection{Horizontal lifts and the commuted Vlasov equation}\label{sec:commuted}

Local well-posedness for $(r,\Omega^2,Q,f)$ takes place in the space $C^2\times C^2\times C^1\times C^1$, since the Christoffel symbols and electromagnetic field need to be Lipschitz regular to obtain a unique classical solution of the Vlasov equation \eqref{eq:SS-MV}. In order to estimate $\partial_u^2\Omega^2$ and $\partial_v^2\Omega^2$, one has to commute the wave equation for $\Omega^2$, \eqref{eq:Omega-wave}, with $\partial_u$ and $\partial_v$. This commuted equation contains terms such as $\partial_u T^{uv}$, which can only be estimated by first estimating $\partial_u f$ and $\partial_vf$. On the other hand, naively commuting the spherically symmetric Vlasov equation, \eqref{eq:SS-MV}, with spatial derivatives introduces highest order nonlinear error terms such as $\partial_u^2\Omega^2\,\partial_{p^u}f$.\footnote{This is clearly not an issue for local well-posedness since the ``time interval'' of the solution can be taken sufficiently small to absorb (the time integral of) this term.} Therefore, it would appear that the system does not close at this level of regularity. 

However, as was observed by Dafermos and Rendall in  \cite{DR-Vlasov-extension} in the case of Einstein--Vlasov (see also the erratum of \cite{RR92}), the \emph{horizontal lifts} 
\begin{align*}
    \hat\partial_u f&\doteq \partial_uf -p^u\partial_u{\log\Omega^2}\partial_{p^u}f,\\
    \hat\partial_v f&\doteq \partial_vf -p^v\partial_v{\log\Omega^2}\partial_{p^v}f
\end{align*}
 of $\partial_uf$ and $\partial_vf$ with respect to the Levi--Civita connection satisfy a better system of equations without these highest order errors. In the case of Einstein--Maxwell--Vlasov, we directly commute \eqref{eq:SS-MV} with $\{\hat\partial_u,\hat\partial_v,\partial_u,\partial_v\}$ to obtain 
 \begin{align}
    X(\hat\partial_uf)&= p^u\partial_u{\log\Omega^2}\hat\partial_u f+\left(p^u\partial_u{\log\Omega^2}\partial_{p^u}\zeta^u-\partial_u\zeta^u-\partial_u{\log\Omega^2}\zeta^u\right)\partial_{p^u}f \nonumber\\
    &\quad+\left(\partial_u\partial_v{\log\Omega^2}(p^u)^2-\partial_u\zeta^v+p^u\partial_u{\log\Omega^2}\partial_{p^u}\zeta^v\right)\partial_{p^v}f,\label{eq:commuted-1}\\
    X(\hat\partial_vf)&= p^v\partial_v{\log\Omega^2}\hat\partial_v f+\left(\partial_u\partial_v{\log\Omega^2}(p^v)^2-\partial_v\zeta^u+p^v\partial_v{\log\Omega^2}\partial_{p^v}\zeta^u\right)\partial_{p^u}f \nonumber\\
    &\quad+\left(p^v\partial_v{\log\Omega^2}\partial_{p^v}\zeta^v-\partial_v\zeta^v-\partial_v{\log\Omega^2}\zeta^v\right)\partial_{p^v}f,\\
     X(\partial_{p^u}f)&=-\hat\partial_uf+(3p^u\partial_u{\log\Omega^2}-\partial_{p^u}\zeta^u)\partial_{p^u}f-\partial_{p^u}\zeta^v\partial_{p^v}f,\\
     X(\partial_{p^v}f)&=-\hat\partial_vf-\partial_{p^v}\zeta^u\partial_{p^u}f+(3p^v\partial_v{\log\Omega^2}-\partial_{p^v}\zeta^v)\partial_{p^v}f,\label{eq:commuted-2}
\end{align}
where
\begin{equation*}
      \zeta^u\doteq\frac{2\partial_vr}{r\Omega^2}(\Omega^2p^up^v-\mathfrak m^2)+\mathfrak e\frac{Q}{r^2}p^u,\quad 
    \zeta^v\doteq\frac{2\partial_ur}{r\Omega^2}(\Omega^2p^up^v-\mathfrak m^2)-\mathfrak e\frac{Q}{r^2}p^v.
\end{equation*}
Upon using the wave equation \eqref{eq:Omega-wave}, we see that the right-hand sides of \eqref{eq:commuted-1}--\eqref{eq:commuted-2} do not contain second derivatives of $\Omega^2$.

\subsubsection{The fundamental local spacetime estimate}\label{sec:FLSE}

\begin{lem}\label{lem:fundamental-local-estimate}
    For any $\mathfrak m\ge 0$, $\mathfrak e\in\Bbb R$, $\sigma>4$, and $C_0>0$, there exists a constant $C_*<\infty$ with the following property. Let $(r,\Omega^2,Q,f)$ be a solution of the spherically symmetric Einstein--Maxwell--Vlasov system with admissible momentum for particles of charge $\mathfrak e$, mass $\mathfrak m$, and momentum decay rate $\sigma$ defined on  $\mathcal R'\doteq \mathcal R(U_0,U_1,V_0,V_1)\setminus\{(U,V)\}$. Assume $U_1-U_0\le C_0$, $V_1-V_0\le C_0$, the initial data estimates
        \begin{equation}
    \|{\log r}\|_{C^2(\mathcal C)}+ \|{\log \Omega^2}\|_{C^2(\mathcal C)}+\|Q\|_{C^1(\mathcal C)}+\| f\|_{C^1_\sigma(P^\mathfrak m_\mathrm{red}|_\mathcal C)}\le C_0,\label{eq:ext-1}
    \end{equation}
     the global estimates
    \begin{equation}
        \iint_{\mathcal R'}\Omega^2\,dudv\le C_0,\label{eq:ext-5}
    \end{equation}
    \begin{equation}
       \sup_{\mathcal R'} |{\log r}|\le C_0,\label{eq:ext-4}
    \end{equation}
    and in the case $\mathfrak m=0$, assume also that
    \begin{equation}
        \inf_{\spt(f)\cap P^\mathfrak m_\mathrm{red}|_\mathcal C}\ell \ge C_0^{-1}.\label{eq:ext-6}
    \end{equation}
    Then we have the estimate
    \begin{equation*}
         \|{\log r}\|_{C^2(\mathcal R')}+ \|{\log \Omega^2}\|_{C^2(\mathcal R')}+\|Q\|_{C^1(\mathcal R')}+\| f\|_{C^1_\sigma(P^\mathfrak m_\mathrm{red}|_{\mathcal R'})}\le C_*.
    \end{equation*}
\end{lem}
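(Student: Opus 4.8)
The plan is to prove the estimate by a hierarchy of a priori bounds, established in a fixed order on genuine subrectangles $\mathcal R(U_0,\tilde u,V_0,\tilde v)\Subset\mathcal R'$ and propagated up to the corner $(U_1,V_1)$ by a continuity argument in $(\tilde u,\tilde v)$; since $(r,\Omega^2,Q,f)$ is already smooth on $\mathcal R'$ there is no issue of ``leaving the good region,'' only of uniformity of the constants, and the base of the continuity argument near $(U_0,V_0)$ is supplied by \cref{prop:char-IVP-Vlasov}. All quantities on the initial cone $\mathcal C$, together with the transversal derivatives obtained by restricting the evolution and constraint equations to $\mathcal C$, are bounded in terms of $C_0$ and will be used freely. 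Compared with the uncharged treatment of Dafermos--Rendall \cite{DR16}, there are two genuinely new ingredients: (i) the unconditional monotonicity of $\mathfrak e Q$ (from Maxwell's equations and $f\ge 0$), which replaces the now-unavailable monotonicity of the Hawking mass, and (ii) a quantitative lower bound on $p^u+p^v$ along electromagnetic geodesics, read off from the mass-shell relation \eqref{eq:mass-shell}.

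I would begin with the bounds requiring no iteration (Step~1): $\|{\log r}\|_{C^0(\mathcal R')}\le C_0$ is a hypothesis, and since $N^u,N^v\ge 0$ the Maxwell equations \eqref{eq:Max-u}--\eqref{eq:Max-v} make $\mathfrak e Q$ non-increasing in $u$ and non-decreasing in $v$, so $|Q|$ is sandwiched by its values on $\mathcal C$ (and $Q$ is constant if $\mathfrak e=0$); in the massless case, conservation of $\ell$ along electromagnetic geodesics and \eqref{eq:ext-6} give $\ell\ge C_0^{-1}$ on $\spt f|_{\mathcal R'}$. Then comes the core bootstrap for the geometry (Step~2). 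Raychaudhuri's equations \eqref{eq:Ray-u}--\eqref{eq:Ray-v} with $T^{uu},T^{vv}\ge 0$ give monotonicity, hence one-sided control, of $\partial_u r/\Omega^2$ and $\partial_v r/\Omega^2$; integrating the mass equations \eqref{eq:mod-Hawking-u}--\eqref{eq:mod-Hawking-v} --- where the non-sign-definite terms $\pm\tfrac12\mathfrak e r\Omega^2 QN^{v/u}$ are absorbed because $\int r^2\Omega^2 N^v\,du$ and $\int r^2\Omega^2 N^u\,dv$ are controlled directly by \eqref{eq:Max-u}--\eqref{eq:Max-v} and the charge bound, which is precisely where (i) is used --- bounds the renormalized Hawking mass $\varpi$; and feeding $\varpi$, the charge bound, and the volume bound \eqref{eq:ext-5} into the twice-integrated wave equations \eqref{eq:r-wave} and \eqref{eq:Omega-wave} yields two-sided bounds on $\Omega^2$ and $C^1$ bounds on $r$. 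The matter terms appearing here are harmless because $\sigma>4$ makes the fibre integrals in \eqref{eq:Nu}--\eqref{eq:S} converge, so a weighted $C^0$ bound on $f$ controls $T^{\mu\nu}$ and $N^\mu$ pointwise in terms of $\Omega^2$.

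The main obstacle is the Vlasov estimate (Step~3). Since $f$ is constant along electromagnetic geodesics, everything reduces to showing that the weight $\langle p^\tau\rangle$, $p^\tau=\tfrac12(p^u+p^v)$, changes by only a bounded multiplicative factor along any electromagnetic geodesic meeting $\mathcal R'$, so that the weighted bound for $\mathring f$ on $\mathcal C$ transfers to $\mathcal R'$. The difficulty is that the Lorentz force is not homogeneous of degree two in $p$, so the parameter $s$ is not comparable to coordinate time a priori and $p^u,p^v$ could degenerate. Ingredient (ii) resolves this: the mass-shell relation \eqref{eq:mass-shell} together with the Step-2 bounds on $\Omega^2,r$ and $\ell\ge C_0^{-1}$ (massless case; $\mathfrak m>0$ otherwise) forces $\Omega^2 p^u p^v\ge c>0$, hence $p^\tau\ge c'>0$ on $\spt f$, and more quantitatively $p^\tau\gtrsim\langle\ell\rangle$, which bounds the affine-like parameter length of a geodesic crossing $\mathcal R'$ by $\lesssim C_0/p^\tau$; then the reformulated Lorentz force equations \eqref{eq:SS-Lor-4}--\eqref{eq:SS-Lor-5}, being linear ODEs for $\Omega^2 p^u$ and $\Omega^2 p^v$ with coefficients bounded in terms of the $C^1$ geometry and $C^0$ charge from Step~2, are closed by Grönwall to give $\langle p^\tau(s)\rangle\approx\langle p^\tau(0)\rangle$. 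First derivatives of $f$ are handled identically using the commuted system \eqref{eq:commuted-1}--\eqref{eq:commuted-2} for the horizontal lifts $\hat\partial_u f,\hat\partial_v f$ and $\partial_{p^u}f,\partial_{p^v}f$ of Dafermos--Rendall \cite{DR-Vlasov-extension}, whose right-hand sides, after substituting \eqref{eq:Omega-wave}, contain no second derivatives of $\Omega^2$ and hence close at the present regularity.

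Finally (Step~4), with $f$ and its first derivatives in hand, the moments $T^{\mu\nu},N^\mu$ and their first derivatives are bounded, and commuting \eqref{eq:r-wave}, \eqref{eq:Omega-wave}, \eqref{eq:Ray-u}--\eqref{eq:Ray-v}, \eqref{eq:Max-u}--\eqref{eq:Max-v} with $\partial_u$ and $\partial_v$ gives the remaining $C^2$ bounds on $\log r,\log\Omega^2$ and the $C^1$ bound on $Q$; choosing the bootstrap constants appropriately large in terms of $C_0$ closes the continuity argument and yields $C_*=C_*(\mathfrak m,\mathfrak e,\sigma,C_0)$. Besides the Vlasov step, the delicate point is keeping the ordering in Step~2 genuinely self-consistent, so that the geometric bootstrap closes with a constant depending only on $C_0$ rather than merely giving qualitative finiteness --- both issues have their precedents in \cite{DR16}, the charge monotonicity and the non-affine electromagnetic flow being the new features here.
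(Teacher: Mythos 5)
Your high-level ordering (charge monotonicity, a lower bound on $p^\tau$ to control the parameter length, horizontal lifts for first derivatives, geometry before matter before second derivatives) matches the paper, but the core of your Step~2/Step~3 is circular and misses the mechanism that actually makes the lemma work. The paper's proof is a \emph{direct} a priori estimate whose engine is a chain of spacetime \emph{area} estimates extracted from the hypotheses \eqref{eq:ext-5} and \eqref{eq:ext-4} alone: rewriting \eqref{eq:r-wave} as $r^2\Omega^4 T^{uv}=2\partial_u\partial_v r^2+\Omega^2(1-Q^2/r^2)$ and integrating over $\mathcal R'$ gives $\iint\Omega^4T^{uv}\lesssim 1$ with no information about $f$ beyond positivity; from this one gets $\iint|\partial_ur\,\partial_vr|\lesssim1$, $\iint\Omega^2|m|\lesssim 1$, then $\sup|\log\Omega^2|\lesssim 1$, and only then the momentum estimate, proved not by Gr\"onwall with pointwise-bounded coefficients but by writing $\log(\Omega^2p^u)$ as a line integral along the geodesic, converting it via the fundamental theorem of calculus and the wave equations \eqref{eq:r-wave}, \eqref{eq:Omega-wave} into double integrals controlled by exactly those area estimates. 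Your proposal has no substitute for this: you bound $T^{\mu\nu}$ pointwise by a bootstrapped weighted $C^0$ bound on $f$, use that to get pointwise $C^1$ geometry bounds, and then run Gr\"onwall on \eqref{eq:SS-Lor-4}--\eqref{eq:SS-Lor-5} using those pointwise coefficients to recover the bound on $f$. Since the rectangle has size up to $C_0$ (not small), every link in this loop feeds the bootstrap constant back into itself with no smallness to absorb it, so the continuity argument does not close; the analogous bootstrap in the paper (\cref{prop:strip}) only works because the strip width supplies the smallness, and it \emph{uses} the present lemma rather than proving it.

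Two subsidiary steps are also unjustified as stated. First, your bound on $\varpi$ by integrating \eqref{eq:mod-Hawking-u}--\eqref{eq:mod-Hawking-v} treats only the charge terms as problematic; but the terms $\tfrac12 r^2\Omega^2(T^{uv}\partial_ur-T^{vv}\partial_vr)$ are sign-definite only if one knows $\partial_ur\le 0$, $\partial_vr\ge 0$, and no such untrapped-type hypothesis appears in the lemma (the extension principle must apply in regions that may contain trapped spheres); Raychaudhuri gives only one-sided monotonicity of $\partial_ur/\Omega^2$, $\partial_vr/\Omega^2$, not signs. Absent signs, controlling these terms again requires matter bounds you do not yet have, which is why the paper never estimates $\varpi$ or $m$ pointwise at this stage and instead only uses $\iint\Omega^2|m|\lesssim 1$, read off from the definition \eqref{eq:Hawking-mass} and the area bound on $|\partial_ur\,\partial_vr|$. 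Second, even granting your Step~2, pointwise bounds on $\partial_v\log\Omega^2$ require integrating \eqref{eq:Omega-wave} in $u$ against $\Omega^4T^{uv}+\Omega^2S$, i.e.\ precisely the matter control in question. To repair the proposal you would need to insert the area-estimate scheme (or an equivalent way of converting \eqref{eq:ext-5} and \eqref{eq:ext-4} into integrated bounds on $T^{uv}$ and on $\partial_v\log\Omega^2-2\partial_vr/r$ along geodesics) before any estimate on the Vlasov field; your ingredient (ii) and the commuted system \eqref{eq:commuted-1}--\eqref{eq:commuted-2} are then used exactly as in the paper.
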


\begin{proof} In this proof, we use the notation $A\les 1$ to mean that $A\le C$, where $C$ is a constant depending only on $\mathfrak m$, $\mathfrak e$, and $\sigma$, and $C_0$. When writing area integrals, we will also make no distinction between $\mathcal R$ and $\mathcal R'$, though the integrands are strictly speaking only assumed to be defined on $\mathcal R'$.

From \eqref{eq:ext-1} and the monotonicity properties of Maxwell's equations \eqref{eq:Max-u} and \eqref{eq:Max-v}, it follows that
\begin{equation}
    \sup_{\mathcal R'} |Q|\les 1.\label{eq:ext-8}
\end{equation}
Rewriting \eqref{eq:r-wave}, we obtain
    \begin{equation}
         r^2\Omega^4 T^{uv}=2\partial_u\partial_v r^2 +\Omega^2\left(1-\frac{Q^2}{r^2}\right).\label{eq:ext-9}
    \end{equation}
Integrating over $\mathcal R'$ and using \eqref{eq:ext-1}, \eqref{eq:ext-5}, and \eqref{eq:ext-4} yields
\begin{equation}\label{eq:T-area-est}
    \iint_{\mathcal R}\Omega^4 T^{uv}\,dudv \les \iint_{\mathcal R} r^2\Omega^4 T^{uv}\,dudv \les \iint_{\mathcal R}\partial_u\partial_v r^2\,dudv+ \iint_{\mathcal R} \Omega^2\,dudv\les 1.
\end{equation}
Rewriting \eqref{eq:ext-9} slightly, we obtain
    \begin{equation}
        \partial_u(r\partial_vr)= -\frac{\Omega^2}{4}\left(1-\frac{Q^2}{r^2}\right)+\tfrac 14 r^2\Omega^4 T^{uv}.\label{eq:ext-11}
    \end{equation}
        Integrating this in $u$ and using \eqref{eq:ext-1}, \eqref{eq:ext-4}, and \eqref{eq:ext-8}, we have 
        \begin{equation*}
            \sup_{[U_0,U_1]\times\{v\}}|r\partial_vr|\les 1+\int_{U_0}^{U_1}\Omega^2(u,v)\,du+\int_{U_0}^{U_1}\Omega^4 T^{uv}(u,v)\,du
        \end{equation*}
        for any $v\in[0,V]$.
        Integrating this estimate in $v$ and using \eqref{eq:ext-4}, \eqref{eq:ext-5}, and \eqref{eq:T-area-est} yields 
        \begin{equation}
            \int_{V_0}^{V_1}  \sup_{[U_0,U_1]\times\{v\}}|\partial_vr|\,dv\les   \int_{V_0}^{V_1}  \sup_{[U_0,U_1]\times\{v\}}|r\partial_vr|\,dv\les 1.\label{eq:ext-12}
        \end{equation}
        By Raychaudhuri's equation \eqref{eq:Ray-u}, $\partial_u r$ changes signs at most once on each ingoing null cone. Therefore, by the fundamental theorem of calculus and \eqref{eq:ext-4},
        \begin{equation}
               \sup_{v\in[V_0,V_1]} \int_{U_0}^{U_1}|\partial_ur|(u,v)\,du \le 2\left(\sup_{\mathcal R'}r-\inf_{\mathcal R'}r\right)\les 1.\label{eq:ext-13}
        \end{equation}
        Combining \eqref{eq:ext-12} and \eqref{eq:ext-13} yields 
        \begin{equation}
            \iint_\mathcal R|\partial_ur\partial_vr|\,dudv\le \left(\int_{V_0}^{V_1}  \sup_{[U_0,U_1]\times\{v\}}|\partial_vr|\,dv\right)\left(  \sup_{v\in[V_0,V_1]} \int_{U_0}^{U_1}|\partial_ur|(u,v)\,du\right)\les 1.\label{eq:ext-17}
        \end{equation}
        Using the definition of the Hawking mass \eqref{eq:Hawking-mass}, \eqref{eq:ext-4}, and \eqref{eq:ext-5}, we readily infer 
        \begin{equation}
           \iint_{\mathcal R}\Omega^2 |m|\,dudv\les 1\label{eq:ext-14}
        \end{equation}

By the wave equation \eqref{eq:Omega-wave}, the fundamental theorem of calculus, and the estimates \eqref{eq:S}, \eqref{eq:ext-5}, \eqref{eq:ext-4}, \eqref{eq:ext-8}, \eqref{eq:T-area-est}, and \eqref{eq:ext-14}, we have  
\begin{align*}
    \sup_{\mathcal R'}|\!\log\Omega^2|&\les 1+\left|\iint_\mathcal R\partial_u\partial_v\!\log\Omega^2\,dudv\right|\\
   & \les 1+\iint_\mathcal R\left(\Omega^2+\Omega^2|m|+\Omega^4 T^{uv}\right)dudv\\
   &\les 1.
\end{align*}

We now prove estimates for the electromagnetic geodesic flow. Let $\gamma:[0,S]\to \mathcal R'$ be an electromagnetic geodesic such that $(\gamma(0),p(0))\in \spt(f)\cap P^\mathfrak m_\mathrm{red}|_\mathcal C$. We aim to prove that
    \begin{equation}\label{eq:ext-p-est-1}
\left|\log\left(\frac{\Omega^2p^u(s)}{\Omega^2p^u(0)}\right)\right|+\left|\log\left(\frac{\Omega^2p^v(s)}{\Omega^2p^v(0)}\right)\right|\les 1
    \end{equation} for $s\in[0,S]$,
    where the implied constant does not depend on $\gamma$. 

   It suffices to prove this estimate for $p^u$, as the proof of the estimate for $p^v$ differs only in notation. Following \cite{Mosx-Vlasov-WP} (see also \cite{Dafermos-Vlasov}), we write an integral formula for $\log(\Omega^2p^u)$, which can then be estimated using our previous area estimates. Rewriting the mass shell relation \eqref{eq:mass-shell} as 
\begin{equation*}
    \frac{\ell^2}{r^2}=\left(\frac{\ell^2}{\ell^2+\mathfrak m^2 r^2}\right)\Omega^2p^up^v,
\end{equation*}
we deduce from the Lorentz force equation \eqref{eq:SS-Lor-4} that
\begin{equation*}
    \frac{d}{ds}\log(\Omega^2p^u) = \left(\partial_v\!\log\Omega^2-\frac{2\partial_vr}{r}\right)\left(\frac{\ell^2}{\ell^2+\mathfrak m^2 r^2}\right)p^v-\mathfrak e\frac{Q}{r^2}.
\end{equation*}
Integrating in $s$ and changing variables yields 
\begin{equation*}
    \log\left(\frac{\Omega^2p^u(s)}{\Omega^2p^u(0)}\right)=\int_{\gamma([0,s])}\left(\partial_v{\log\Omega^2}-\frac{2\partial_vr}{r}\right)\left(\frac{\ell^2}{\ell^2+\mathfrak m^2 r^2}\right)dv-\int_0^s\left.\mathfrak e\frac{Q}{r^2}\right|_{\gamma(s')}ds'.
\end{equation*}
We use the fundamental theorem of calculus on the first integral to obtain 
\begin{align*}
    \int_{\gamma([0,s])}\left(\partial_v\!\log\Omega^2-\frac{2\partial_vr}{r}\right)\left(\frac{\ell^2}{\ell^2+\mathfrak m^2 r^2}\right)dv&=\int_{\gamma^v(0)}^{\gamma^v(s)}\int_0^{\gamma^u(s_v)}\partial_u\left[\left(\partial_v{\log\Omega^2}-\frac{2\partial_vr}{r}\right)\left(\frac{\ell^2}{\ell^2+\mathfrak m^2 r^2}\right)\right]dudv\\
    &\quad + \int_{\{0\}\times[0,\gamma^v(s)]}\left(\partial_v{\log\Omega^2}-\frac{2\partial_vr}{r}\right)\left(\frac{\ell^2}{\ell^2+\mathfrak m^2 r^2}\right)dv,
\end{align*}
where $s_v\in[0,S]$ is defined by $\gamma^v(s_v)=v$. Using now the wave equations \eqref{eq:r-wave} and \eqref{eq:Omega-wave}, we arrive at 
\begin{align}
        \log\left(\frac{\Omega^2p^u(s)}{\Omega^2p^u(0)}\right)&=\int_{\gamma^v(0)}^{\gamma^v(s)}\int^{\gamma^u(s_v)}_{0} \left(\frac{3\Omega^2m}{r^3}-\frac{3\Omega^2Q^2}{2r^4}-\frac{\Omega^2}{2r^2}-\Omega^4 T^{uv}-\Omega^2S\right)\left(\frac{\ell^2}{\ell^2+\mathfrak m^2r^2}\right) dudv\nonumber\\
        &\quad + \int_{\gamma^v(0)}^{\gamma^v(s)}\int^{\gamma^u(s_v)}_{0} \left(2\partial_ur\partial_vr-\partial_ur\partial_v\!\log\Omega^2\right)\frac{2\mathfrak m^2\ell^2}{(\ell^2+\mathfrak m^2r^2)^2}\,dudv\nonumber\\
        &\quad +\int_{\{0\}\times[0,\gamma^v(s)]}\left(\partial_v{\log\Omega^2}-\frac{2\partial_vr}{r}\right) dv -\int_0^s\left. \mathfrak e\frac{Q}{r^2}\right|_{\gamma(s')}ds'.\label{eq:ext-15}
    \end{align}

    We now bound each of the four terms in \eqref{eq:ext-15}. Using  \eqref{eq:ext-5}, \eqref{eq:ext-4}, \eqref{eq:ext-8}, \eqref{eq:T-area-est}, and \eqref{eq:ext-14}, the first double integral in \eqref{eq:ext-15} is readily seen to be uniformly bounded. To estimate the second double integral, we integrate the wave equation \eqref{eq:Omega-wave} in $u$ to obtain
    \begin{equation*}
        \sup_{[0,U]\times\{v\}}|\partial_v\!\log\Omega^2|\les 1+\int_0^U\left(\Omega^2|m|+\Omega^2+\Omega^4T^{uv}\right)(u,v)\,du
    \end{equation*}
    for any $v\in[0,V]$. Integrating this estimate in $v$ and using the previous area estimates yields 
    \begin{equation*}
    \int_0^V\sup_{[0,U]\times\{v\}}|\partial_v\!\log\Omega^2|\,dv \les 1
    \end{equation*}
   which when combined with \eqref{eq:ext-13} gives
    \begin{equation}
        \iint_\mathcal R |\partial_ur||\partial_v\!\log\Omega^2|\,dudv\les 1.\label{eq:ext-16}
    \end{equation}
    Combined with \eqref{eq:ext-17}, we now readily see that the second double integral in \eqref{eq:ext-15} is uniformly bounded. The integral along initial data is clearly also bounded by assumption. 

    Using \eqref{eq:ext-4} and \eqref{eq:ext-8}, we estimate
    \begin{equation}
        \int_0^s\mathfrak e\frac{|Q|}{r^2}\,ds\les S.\label{eq:ext-18}
    \end{equation}
    To estimate $S$, define the function $\tau(s)\doteq \tau|_{\gamma(s)}$, which is strictly increasing and satisfies $0\le \tau(s)\les 1$ for every $s\in[0,S]$. Using the mass shell relation \eqref{eq:mass-shell}, we have
    \begin{equation}\label{eq:ptau-AM-GM}
        \sqrt{\frac{4}{\Omega^2}\left(\frac{\ell^2}{r^2}+\mathfrak m^2\right)}\le p^\tau= \frac{d\tau}{ds}.
    \end{equation}
    Since either $\mathfrak m>0$ or \eqref{eq:ext-6} holds, it follows that $d\tau/ds$ is uniformly bounded away from zero, which implies $S\les 1$ for any electromagnetic geodesic in the support of $f$. Combined with \eqref{eq:ext-18}, this uniformly bounds the final term in \eqref{eq:ext-15} and completes the proof of \eqref{eq:ext-p-est-1}.

    Since $f$ is constant along $(\gamma(s),p(s))$, we therefore have $\langle p^\tau(s)\rangle^\sigma f(\gamma(s),p(s))\les  \langle p^\tau(0)\rangle^\sigma f(\gamma(0),p(0))$, which implies
    \begin{equation*}
        \|f\|_{C^0_\sigma(P^\mathfrak m_\mathrm{red}|_{\mathcal R'})}+\sup_{\mathcal R'}\left(N^u+N^v+T^{uu}+T^{uv}+T^{vv}+S\right)\les 1.
    \end{equation*} By \eqref{eq:Max-u} and \eqref{eq:Max-v},
    \begin{equation*}
        \sup_{\mathcal R'}\left(|\partial_uQ|+|\partial_v Q|\right)\les 1.
    \end{equation*}
    By integrating \eqref{eq:ext-11} and also using that   $\partial_u(r\partial_vr)=\partial_v(r\partial_ur)$, we now readily estimate
    \begin{equation*}
        \sup_{\mathcal R'}\left(|\partial_ur|+|\partial_vr|\right)\les 1.
    \end{equation*}
    As this bounds the Hawking mass pointwise, we can now estimate
\begin{equation*}
     \sup_{\mathcal R'}\left(|\partial_u{\log\Omega^2}|+|\partial_v{\log\Omega^2}|+|\partial_u\partial_v{\log\Omega^2}|\right)\les 1
\end{equation*}
    using \eqref{eq:Omega-wave}.  It then follows immediately from \eqref{eq:r-wave}, \eqref{eq:Ray-u}, and \eqref{eq:Ray-v} that
    \begin{equation*}
        \sup_{\mathcal R'}\left(|\partial_u^2r|+|\partial_u\partial_vr|+|\partial_v^2r|\right)\les 1.
    \end{equation*}

Along an electromagnetic geodesic $\gamma$ lying in the support of $f$, we have that
\begin{align*}
    |X(\hat\partial_uf)|&\les p^\tau |\hat\partial_uf|+(p^\tau)^2   |X(\partial_{p^u}f)|+(p^\tau)^2    |X(\partial_{p^v}f)|,\\
    |X(\hat\partial_vf)|&\les p^\tau |\hat\partial_vf|+(p^\tau)^2    |X(\partial_{p^u}f)|+(p^\tau)^2|X(\partial_{p^v}f)| ,  \\
    |X(\partial_{p^u}f)|&\les  |\hat\partial_uf|+p^\tau   |X(\partial_{p^u}f)|+p^\tau    |X(\partial_{p^v}f)|,\\
    |X(\partial_{p^v}f)|&\les  |\hat\partial_vf|+p^\tau   |X(\partial_{p^u}f)|+p^\tau    |X(\partial_{p^v}f)|
\end{align*} by \eqref{eq:commuted-1}--\eqref{eq:commuted-2} and all of the estimates obtained so far. It follows that, defining 
\begin{equation*}
    \mathcal F(s)\doteq \left((p^\tau)^{\sigma}\hat\partial_uf,(p^\tau)^{\sigma}\hat\partial_vf,(p^\tau)^{\sigma+1}\partial_{p^u}f,(p^\tau)^{\sigma+1}\partial_{p^v}f\right)(s)
\end{equation*}
along $\gamma(s)$, we have
\begin{equation*}
    \left|\frac{d}{ds}\mathcal F\right|\les p^\tau |\mathcal F|.
\end{equation*}
By Gr\"onwall's inequality and \eqref{eq:ptau-AM-GM}, it follows that $|\mathcal F|\les 1$ along $\gamma$. Recovering $\partial_uf$ and $\partial_vf$ from $\hat\partial_uf$ and $\hat\partial_vf$, we now readily bound 
\begin{equation*}
    \|f\|_{C^1_\sigma(P^\mathfrak m_\mathrm{red}|_{\mathcal R'})}+\sup_{\mathcal R'}\left(|\partial_uT^{uv}|+|\partial_vT^{uv}|+|\partial_uS|+|\partial_vS|\right)\les 1.
\end{equation*}
Commuting the wave equation \eqref{eq:Omega-wave} with $\partial_u$ and $\partial_v$, we obtain the final estimates
\begin{equation*}
 \sup_{\mathcal R'}\left(|\partial_u^2{\log\Omega^2}|+|\partial_v^2{\log\Omega^2}|\right)\les 1   ,
\end{equation*}
which completes the proof.\end{proof}

\subsubsection{Local existence in characteristic slabs}

The spacetime estimate \cref{lem:fundamental-local-estimate} can also be used to improve \cref{prop:char-IVP-Vlasov} to local existence in a full double null neighborhood of a bifurcate characteristic hypersurface of arbitrary length:

\begin{prop}\label{prop:strip}
    For any $\mathfrak m\ge 0$, $\mathfrak e\in\Bbb R$, $\sigma>4$, $B>0$, and $c_\ell>0$, there exists a constant $\ve_\mathrm{slab}>0$ with the following property. Let $(\mathring r,\mathring\Omega^2,\mathring Q,\mathring f)$ be a characteristic initial data set for the spherically symmetric Einstein--Maxwell--Vlasov system on $\mathcal C(U_0,U_1,V_0,V_1)$ with admissible momentum. If \begin{equation*}
    \|{\log\mathring r}\|_{C^2(\mathcal C)}+ \|{\log\mathring \Omega^2}\|_{C^2(\mathcal C)}+\|\mathring Q\|_{C^1(\mathcal C)}+\|\mathring f\|_{C^1_\sigma(P^\mathfrak m_\mathrm{red}|_\mathcal C)}\le B
\end{equation*} and either $\mathfrak m>0$ or $\mathfrak m=0$ and $\ell\ge c_\ell$ on $\spt(\mathring f)$, then there exists a unique smooth solution $(r,\Omega^2,Q,f)$ of the spherically symmetric Einstein--Maxwell--Vlasov system with admissible momentum on   \begin{equation*}
    \mathcal R(U_0,U_0+\min\{\ve_\mathrm{slab},U_1-U_0\},V_0,V_1)\cup  \mathcal R(U_0,U_1,V_0,V_0+\min\{\ve_\mathrm{slab},V_1-V_0\}) 
\end{equation*} which extends the initial data.
\end{prop}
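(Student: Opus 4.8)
\emph{Strategy and reduction.} The plan is to upgrade the small-rectangle existence theory of \cref{prop:char-IVP-Vlasov} to a full thin slab by a continuity argument, using the a priori estimate of \cref{lem:fundamental-local-estimate} (in the quantitative form in which it is proved) to keep the local existence time from degenerating. By the $u\leftrightarrow v$ symmetry of the spherically symmetric Einstein--Maxwell--Vlasov system it suffices to construct the solution on the ``tall--thin'' slab $\mathcal R_\delta\doteq\mathcal R(U_0,U_0+\delta,V_0,V_1)$ with $\delta\doteq\min\{\ve_\mathrm{slab},U_1-U_0\}$; the reflected construction gives the ``short--wide'' slab, the two solutions agree on the overlapping small rectangle by the uniqueness clause of \cref{prop:char-IVP-Vlasov} and a connectedness argument, and together they constitute the asserted $L$-shaped region (with uniqueness inherited from \cref{prop:char-IVP-Vlasov}).

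\emph{Continuity in the thin direction.} Let $U_*\in(U_0,U_0+\delta]$ be the supremum of the set of $U$ for which a unique smooth solution with admissible momentum, restricting to the prescribed data on $\mathcal C$, exists on $\mathcal R(U_0,U,V_0,V_1)$. Partitioning $[V_0,V_1]$ into finitely many subintervals of length below the local existence time $\ve_\loc$ furnished by \cref{prop:char-IVP-Vlasov} and iterating that proposition shows $U_*>U_0$, and openness of existence under extension is clear by smoothness and compactness of $[V_0,V_1]$. It remains to establish (i) an a priori bound on the relevant norms over $\mathcal R(U_0,U,V_0,V_1)$ for all $U<U_*$ with constant depending only on $\mathfrak m,\mathfrak e,\sigma,B,c_\ell$, and (ii) that such a bound allows one to extend the solution smoothly to $\{U_*\}\times[V_0,V_1]$ and then, applying \cref{prop:char-IVP-Vlasov} along that edge, past $U_*$. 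Together (i) and (ii) force $U_*=\delta$ and finish the proof.

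\emph{The a priori estimate.} Step (i) is the heart of the matter and runs parallel to the proof of \cref{lem:fundamental-local-estimate}, the new difficulty being that all estimates must be uniform in the length $V_1-V_0$ of the incoming cone, the role of the bounded rectangle dimensions there being replaced here by the smallness $\le\delta$ of the $u$-width. One bootstraps a transversal norm $\mathcal E(U)$ measuring $\|\log r\|_{C^2}+\|\log\Omega^2\|_{C^2}+\|Q\|_{C^1}+\|f\|_{C^1_\sigma}$ of the restriction of the solution to the ray $\{U\}\times[V_0,V_1]$, together with the horizontal-lift quantities of \cref{sec:commuted} along the electromagnetic geodesics meeting that ray. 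Reading the wave equations \eqref{eq:r-wave}, \eqref{eq:Omega-wave}, the Raychaudhuri equations \eqref{eq:Ray-u}, \eqref{eq:Ray-v}, and the Maxwell equations \eqref{eq:Max-u}, \eqref{eq:Max-v} as $\partial_u$-transport equations along constant-$v$ rays and integrating in $u$ over $[U_0,U]\subset[U_0,U_0+\delta]$, the contributions that are total $v$-derivatives of $\log\Omega^2$ and $\log r$ (as in \eqref{eq:ext-11} and in the manipulations behind \eqref{eq:ext-15}) telescope to differences of those quantities between $\{U_0\}\times[V_0,V_1]\subset\mathcal C$ (bounded by $B$) and $\{U\}\times[V_0,V_1]$ (bounded by the bootstrap), while the remaining error terms are quadratic-or-higher in $\mathcal E$ and carry powers of the $u$-width. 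For the Vlasov sector one argues exactly as in \cref{lem:fundamental-local-estimate}: the monotonicity of $\mathfrak eQ$ built into \eqref{eq:Max-u}--\eqref{eq:Max-v} forces $|Q|\le B$ on $\mathcal R_\delta$; the mass-shell relation \eqref{eq:mass-shell} with $\mathfrak m>0$ (or $\ell\ge c_\ell$ if $\mathfrak m=0$) yields $p^up^v\gtrsim 1$ and, through \eqref{eq:ptau-AM-GM}, a uniform positive lower bound for $p^\tau$ along geodesics in $\spt(f)$; the integral representation \eqref{eq:ext-15} for $\log(\Omega^2p^u)$ and its analogue for $\log(\Omega^2p^v)$ are then controlled, propagating the $\langle p^\tau\rangle^{-\sigma}$ decay of $f$, bounding $N$, $T$, $S$, and finally---via the commuted system \eqref{eq:commuted-1}--\eqref{eq:commuted-2} and Gr\"onwall's inequality---bounding $\|f\|_{C^1_\sigma}$. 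Choosing $\ve_\mathrm{slab}$ small in terms of $\mathfrak m,\mathfrak e,\sigma,B,c_\ell$ closes the bootstrap with $\mathcal E\le 2B$ on $\mathcal R_\delta$; step (ii) is then routine.

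\emph{Main obstacle.} The delicate point is precisely this uniformity in $V_1-V_0$: the naive bounds for the electromagnetic geodesic flow (the momenta $\Omega^2p^u$, $\Omega^2p^v$) and for the spacetime-volume integrals $\iint_{\mathcal R_\delta}\Omega^2(\cdots)$ entering \eqref{eq:ext-15} scale with the length of the incoming cone, and extracting slab-length-independent estimates requires the bookkeeping above---isolating the telescoping $[\log\Omega^2]$, $[\log r]$ contributions across $\mathcal C$, combining the charge monotonicity $|Q|\le B$ with affine-length bounds for electromagnetic geodesics traversing the thin slab derived from the mass-shell lower bound on $p^up^v$, and exploiting the monotonicity of $\partial_v r/\Omega^2$ and $\partial_u r/\Omega^2$ along \eqref{eq:Ray-u}--\eqref{eq:Ray-v} to control the geometry transversally. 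Everything else---smoothness and uniqueness of the extension across $\{U_*\}\times[V_0,V_1]$, and the gluing of the two slabs---is standard once \cref{prop:char-IVP-Vlasov} and \cref{lem:fundamental-local-estimate} are in hand.
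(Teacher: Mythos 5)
Your overall architecture (symmetry reduction to the thin-in-$u$ slab, a continuity argument, an a priori estimate, re-application of \cref{prop:char-IVP-Vlasov}) is in the right spirit, but it is not the paper's argument, and the step you yourself call the heart is where the proposal has a genuine gap. The paper runs the continuity parameter in the \emph{long} $v$-direction: the bootstrap set consists of those $\tilde V$ for which the solution exists on the full-width slab $\mathcal R(U_0,U_0+\ve_\mathrm{slab},V_0,\tilde V)$ subject only to the \emph{zeroth-order} bounds $|\log r|,|\log\Omega^2|\le C_0=10B$; \cref{lem:fundamental-local-estimate} is then invoked as a black box on the region built so far to obtain the full $C^2\times C^2\times C^1\times C^1_\sigma$ bound $C_*$, the zeroth-order bootstrap is improved by a one-line fundamental-theorem-of-calculus estimate of the form $|\log r|\le 3B+\ve_\mathrm{slab}\cdot(\text{long dimension})\cdot C_*$ exploiting the thin width, and the slab is then advanced in $v$ by a single further application of \cref{prop:char-IVP-Vlasov} with existence time $\ve_\loc(C_*)$. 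No new a priori estimate is proved, and the paper's $\ve_\mathrm{slab}$ is explicitly allowed to shrink like $C_*^{-1}$ times the reciprocal of the slab's long dimension.

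By contrast, you run the continuity parameter in the thin $u$-direction over the full-length strip, which forces you to prove a new slab analogue of \cref{lem:fundamental-local-estimate} with constants \emph{uniform in $V_1-V_0$}; you flag this as the crux but do not supply the mechanism. In \cref{lem:fundamental-local-estimate} the transversal quantities along the long direction---$\partial_u\log\Omega^2$, $\partial_ur$, and the momentum bound \eqref{eq:ext-p-est-1} for the electromagnetic geodesic flow---are obtained by integrating in $v$ over the whole region, and the proof leans on the finite-volume hypothesis \eqref{eq:ext-hyp-1} and on the bounded dimensions $U_1-U_0,V_1-V_0\le C_0$ (see \eqref{eq:T-area-est}, \eqref{eq:ext-12}, \eqref{eq:ext-16}). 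On a slab of unbounded $v$-length these inputs are absent: the characteristic data control only tangential derivatives on the cones, so $\partial_u\log\Omega^2(u,v)$ must be recovered by integrating \eqref{eq:Omega-wave} in $v$ from $V_0$, and the resulting integral of terms such as $\Omega^2m/r^3$ or $\Omega^2S$ over an interval of length $V_1-V_0$ is not an exact $v$-derivative of data-controlled quantities, so your ``telescoping'' does not bound it, nor do charge monotonicity or the mass-shell lower bound on $p^\tau$; the geodesic-flow estimate via \eqref{eq:ext-15} then inherits the same problem. Note, moreover, that the uniformity you aim for is \emph{stronger} than what the paper's own proof delivers (its $\ve_\mathrm{slab}$ depends on the slab's long dimension through $C_*$), so you are attempting a harder estimate without new ideas to pay for it. The repair is to follow the paper's structure: bootstrap only $|\log r|,|\log\Omega^2|$ while marching in $v$, let \cref{lem:fundamental-local-estimate} do all higher-order work, and improve the zeroth-order bounds using the thin width. (Your preliminary claim that $U_*>U_0$ by partitioning $[V_0,V_1]$ and iterating \cref{prop:char-IVP-Vlasov} is fine for mere positivity, but the data norms on successive ingoing edges can grow, so the width obtained that way already degenerates with $V_1-V_0$---a symptom of the same issue.)
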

\begin{proof}
    We prove existence in the slab which is thin in the $u$-direction, the proof in the other slab being identical. Let $C_0=10B$ and let $C_*$ be the constant obtained from the fundamental local spacetime estimate \cref{lem:fundamental-local-estimate} with this choice. Let $\mathcal A\subset[V_0,V_1]$ denote the set of $\tilde V$ such that the solution exists on $\mathcal R(U_0,U_1',V_0,\tilde V)$,
    where $U_1'\doteq U_0+\min\{\ve_\mathrm{slab},U_1-U_0\}$ and satisfies the estimates
    \begin{equation}
        \sup_{\mathcal R(U_0,U_1',V_0,\tilde V)}|{\log r}|+\sup_{\mathcal R(U_0,U_1',V_0,\tilde V)}|{\log\Omega^2}|\le C_0.
    \end{equation}
    We will show that if $\ve_\mathrm{slab}=\min\{\ve_\mathrm{loc}(C_*),B(U_1-U_0)^{-1}C_*^{-1}\}$, then $\mathcal A$ is nonempty, closed, and open. 

    Nonemptyness follows from \cref{prop:char-IVP-Vlasov} and closedness by continuity of the bootstrap assumptions. Let now $\tilde V\in\mathcal A$. To improve the bootstrap assumptions, we note that $|\partial_u\partial_v{\log r}|\le C_*$ on $\mathcal R(U_0,U_1',V_0,\tilde V)$ by \cref{lem:fundamental-local-estimate}, whence
    $|{\log r}|\le \ve_\mathrm{slab}(U_1-U_0)C_*+3B\le \frac 12 C_0$ by the fundamental theorem of calculus. A similar argument applies for $\log\Omega^2$. Therefore, by applying \cref{prop:char-IVP-Vlasov} again, a simple continuity argument shows that $\tilde V+\eta\in \mathcal A$ for $\eta>0$ sufficiently small. 
\end{proof}

\subsection{Time-symmetric seed data and their normalized developments}\label{sec:general-data}

In the proof of \cref{thm:main}, we will pose data for the Einstein--Maxwell--Vlasov system on a mixed spacelike-null hypersurface, with the Vlasov field $f$ supported initially on the spacelike hypersurface and away from the center. The initial data is given by a compactly supported distribution function $\mathring f$ on the spacelike hypersurface, a numerical parameter that fixes the location of the initial outgoing null cone, together with the mass and charge of the particles. As we will only consider \emph{time-symmetric} initial configurations, these data are sufficient to uniquely determine the solution.

\begin{defn}\label{def:seed}
     A \emph{time-symmetric seed data set} $\mathcal S\doteq (\mathring f,r_2,\mathfrak m,\mathfrak e)$ for the spherically symmetric Einstein--Maxwell--Vlasov system consists of a real numbers $r_2\in\Bbb R_{>0}$, $\mathfrak m\in\Bbb R_{\ge 0}$, and $\mathfrak e\in\Bbb R$, together with a compactly supported nonnegative function $\mathring f \in C^\infty\big((0,\infty)_r\times (0,\infty)_{p^u}\times (0,\infty)_{p^v}\big)$ which is symmetric in the second and third variables,
$     \mathring f(\,\cdot\,,p^u,p^v)=\mathring  f(\,\cdot\,,p^v,p^u),$ and satisfies $ \spt \big(\mathring f(\,\cdot\,,p^u,p^v)\big)\subset(0,r_2]$ for every $p^u,p^v\in (0,\infty)$.
\end{defn}

Given a seed data set $\mathcal S=(\mathring f,r_2,\mathfrak m,\mathfrak e)$ and $r\in [0,r_2]$, we define
\begin{align}
\mathring\varrho(r)&\doteq \pi \int_0^\infty\int_0^\infty \mathring f(v,p^u,p^v)\,dp^udp^v,\\
\label{eq:seed-1}  \mathring{\mathcal N}{}^u(r) \doteq \mathring{\mathcal N}{}^v(r) &\doteq \pi \int_0^\infty \int_0^\infty p^u\mathring f(r,p^u,p^v)\,dp^udp^v, \\
\label{eq:seed-2}   \mathring{\mathcal T}{}^{uu}(r)\doteq \mathring{\mathcal T}{}^{vv}(r) &\doteq  \pi \int_0^\infty \int_0^\infty (p^u)^2\mathring f(r,p^u,p^v)\,dp^udp^v,\\
 \label{eq:seed-3}  \mathring{\mathcal T}{}^{uv}(r) &\doteq \pi \int_0^\infty \int_0^\infty p^up^v\mathring f(r,p^u,p^v)\,dp^udp^v.
\end{align}

\begin{rk}
    These formulas are missing a factor of $\Omega^2$ compared to \eqref{eq:Nu}--\eqref{eq:S}. This is because $\Omega^2$ is not explicitly known on the initial data hypersurface and is accounted for by extra factors of $\Omega^2$ in the constraint system \eqref{eq:constraint-1}--\eqref{eq:constraint-2} below.
\end{rk}

Let the functions $\mathring m=\mathring m(r)$ and $\mathring Q=\mathring Q(r)$ be the unique solutions of the first order system
\begin{align}
 \label{eq:constraint-1}   \frac{d}{dr}\mathring m&= \frac{r^2}{4}\left(1-\frac{2\mathring m}{r}\right)^{-2}\left(\mathring{\mathcal T}{}^{uu}+2\mathring{\mathcal T}{}^{uv}+\mathring{\mathcal T}{}^{vv}\right)+\frac{\mathring Q^2}{2r^2},\\
\label{eq:constraint-2}    \frac{d}{dr}\mathring Q&= \frac 12\mathfrak er^2\left(1-\frac{2\mathring m}{r}\right)^{-2}\left(\mathring{\mathcal N}{}^u+\mathring{\mathcal N}{}^v\right)
\end{align}
with initial conditions $\mathring m(0)=0$ and $\mathring Q(0)=0$. If \begin{equation*}
    \sup_{r\in [0,r_2]} \frac{2\mathring m}{r}<1,
\end{equation*}
then $\mathring m$ and $\mathring Q$ exist on the entire interval $[0,r_2]$ and we say that $\mathcal S$ is \emph{untrapped}.  We also define
\begin{equation*}
    \mathring \Omega{}^2\doteq\left(1-\frac{2\mathring m}{r}\right)^{-1},\quad\mathring \varpi\doteq\mathring m+\frac{\mathring Q{}^2}{2r}.
\end{equation*}
Finally, we say that $\mathcal S$ is \emph{consistent with particles of mass $\mathfrak m$} if $\mathring\Omega^2(r)p^up^v\ge \mathfrak m^2$ for every $(r,p^u,p^v)\in \spt \mathring f$.

\begin{rk}
    We have not attempted to formulate the most general notion of seed data for the spherically symmetric Einstein--Maxwell--Vlasov Cauchy problem here as it is not needed for our purposes.
\end{rk}

 \begin{figure}
\centering{
\def\svgwidth{16pc}
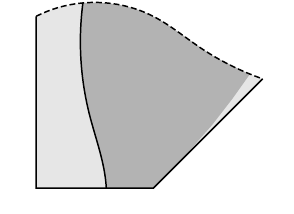}
\caption{Penrose diagram of a normalized development $\mathcal U$ of a time symmetric seed $\mathcal S$. The spacelike hypersurface $\{\tau=0\}$ is totally geodesic, i.e., time symmetric, and the outgoing cone $\{u=-r_2\}$ has $f=0$. To the left of the support of $f$, the spacetime is vacuum: both the Hawking mass $m$ and charge $Q$ vanish identically.
 For the significance of the cone $\{v=r_0\}$, see already \cref{rk:Mink-corner}.}\label{fig:development}
\end{figure}

Associated with time-symmetric seed data as in \cref{def:seed}, we will introduce normalized developments of such data in the following. 
For $r_2>0$, let
\begin{equation*}
    \mathcal C_{r_2}\doteq \{\tau\ge 0\}\cap\{v\ge u\}\cap\{u\ge -r_2\}\subset\Bbb R^2_{u,v}
\end{equation*}
and let $\mathfrak U_{r_2}$ denote the collection of connected relatively open subsets $\mathcal U\subset\{v\ge u\}\subset\Bbb R^2_{u,v}$ for which there exists a (possibly empty) achronal curve  $\zeta\subset\mathcal C_{r_2}$, extending from the center $\{u=v\}$ and reaching the cone $\{u= -r_2\}$, such that  $\mathcal U=\mathcal C_{r_2}\cap\{u+v <\zeta^u+\zeta^v\}$  and $\{\tau=0\}\cap\{0\le v\le r_2\}\subset\mathcal U$. See \cref{fig:development}.

We also define the cones
\begin{equation*}
    C_{u_0}\doteq \mathcal U \cap \{u=u_0\},\quad \underline C_{v_0}\doteq \mathcal U \cap \{v=v_0\}.
\end{equation*}

\begin{defn}\label{def:normalized-defn}
    Let $\mathcal S=(\mathring f,r_2,\mathfrak m,\mathfrak e)$ be an untrapped time-symmetric seed data set which is consistent with particles of mass $\mathfrak m$. A \emph{normalized development} of $\mathcal S$ consists of a domain $\mathcal U\in\mathfrak U_{r_2}$ and a spherically symmetric solution $(r,\Omega^2,Q,f)$ of the Einstein--Maxwell--Vlasov system for particles of mass $\mathfrak m$ and fundamental charge $\mathfrak e$ defined over $\mathcal U\setminus\{u=v\}$ such that the following holds.
\begin{enumerate}
    \item For every $(v,p^u,p^v)\in (0,r_2]\times(0,\infty)\times(0,\infty)$,
    \begin{align}
\label{eq:development-1}  r(-v,v)&= v,\\
  \partial_vr(-v,v)&=\tfrac 12,\\
\label{eq:time-symmetry-r}  \partial_ur(-v,v)&=-\partial_vr(-v,v),\\
    \Omega^2(-v,v)&=\mathring\Omega{}^2(v),\\
    \partial_v\Omega^2(-v,v)&=\frac 12 \left(\frac{d}{dr}\mathring\Omega^2\right)(v),\label{eq:Omega-derivative-data}\\
    \label{eq:time-symmetry-Omega}    \partial_u{\log\Omega^2}(-v,v)&=-\partial_v{\log\Omega^2}(-v,v),\\
f(-v,v,p^u,p^v)&=\mathring f(v,p^u,p^v).\label{eq:f-data-defn}
\end{align}
\item Along the initial outgoing null cone $C_{-r_2}$, 
\begin{align}
\label{eq:development-3} r(-r_2,v)&=\tfrac 12 r_2+\tfrac 12 v,\\
    \Omega^2&=\mathring \Omega{}^2(r_2),\\
 \label{eq:development-4}   Q&=\mathring Q(r_2),\\
  \label{eq:development-2}  f&=0.
\end{align}
\item The functions $r$, $\Omega^2$, $Q$, and $m$ extend smoothly to the center $\Gamma\doteq \mathcal U\cap\{u=v\}$ and satisfy there the boundary conditions
\begin{equation}
    r=m=Q=0,\label{eq:boundary-1}
\end{equation}
\begin{equation}
    \partial_ur<0,\quad\partial_vr>0.
\end{equation}
\item Let $\gamma:[0,S)\to \mathcal U\setminus\Gamma$ be a future-directed electromagnetic geodesic such that $r(\gamma(s))\to 0$ as $s\to S$.\footnote{Such a curve necessarily has $\ell=0$.} Then $(\gamma^u(s),\gamma^v(s),p^u(s),p^v(s))$ attains a limit on $\Gamma$, say $(u_*,v_*,p^u_*,p^v_*)$, and there exists a unique electromagnetic geodesic $\gamma':(S,S+\ve)\to \mathcal U\setminus\Gamma$ for some $\ve>0$ such that $(\gamma'^u(s),\gamma'^v(s),p'^u(s),p'^v(s))\to (u_*,v_*,p^u_*,p^v_*)$ as $s\to S$. We then require that 
\begin{equation*}
  \lim_{s\nearrow S}  f(\gamma^u(s),\gamma^v(s),p^u(s),p^v(s))=\lim_{s\searrow S}f(\gamma'^u(s),\gamma'^v(s),p'^u(s),p'^v(s)).
\end{equation*}
\end{enumerate}
 We use the adjective ``normalized'' to emphasize the choice of a development with double null gauge anchored to the data as in points 1. and 2.~above.
\end{defn}

\begin{rk}
    The ``time-symmetric'' aspect of the development is captured by the first equalities in \eqref{eq:seed-2} and \eqref{eq:seed-3}, and the equations \eqref{eq:time-symmetry-r} and \eqref{eq:time-symmetry-Omega}. One can moreover easily verify, using \eqref{eq:time-symmetry-r}, \eqref{eq:time-symmetry-Omega}, and the formulas for the Christoffel symbols in \cref{sec:formulas}, that $\{\tau=0\}\cap\mathcal U$ is a totally geodesic spacelike hypersurface with respect to the $(3+1)$-dimensional metric \eqref{eq:dn}.
\end{rk}

 For a normalized development of seed data, we clearly have
\begin{align*}
  N^u  &=\mathring\Omega{}^2\mathring{\mathcal N}{}^u, & N^v&=\mathring \Omega{}^{2}\mathring{\mathcal N}{}^v,\\
  T^{uu}   &=\mathring\Omega{}^2 \mathring{\mathcal T}{}^{uu}, &   T^{vv}&=\mathring\Omega{}^2\mathring{\mathcal T}{}^{vv},\\
    T^{uv}&=\mathring\Omega{}^{2}\mathring{\mathcal T}{}^{uv}, & S&=\frac{\mathring\Omega{}^4}{2}\mathring{\mathcal T}^{uv}-\mathring\Omega^2\mathfrak m^2\mathring \varrho,\\
    Q&= \mathring Q, & \varpi &=\mathring\varpi
\end{align*}
along $\{\tau=0\}\cap\mathcal U$. 

\begin{prop}\label{prop:development-1}
    Let $\mathcal S$ be an untrapped time-symmetric seed data set which is consistent with particles of mass $\mathfrak m$. Then there exists a $\delta>0$ and a unique normalized development $(r,\Omega^2,Q,f)$ of $\mathcal S$ defined on $\{0\le \tau<\delta\}\cap\mathcal C_{r_2}$. 
\end{prop}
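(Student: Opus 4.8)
The plan is to realise the normalized development as a patchwork of three local solutions glued by uniqueness: an explicit Minkowskian piece near the centre, an explicit Reissner--Nordstr\"om piece near and beyond the initial outgoing cone $C_{-r_2}$, and a genuine local solution of the Cauchy problem in the region in between.

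First I would dispose of the constraints and of consistency of the data. Since $\mathcal S$ is untrapped, the system \eqref{eq:constraint-1}--\eqref{eq:constraint-2} has a unique smooth solution $(\mathring m,\mathring Q)$ on $[0,r_2]$ with $\mathring m(0)=\mathring Q(0)=0$, and $\mathring\Omega^2=(1-2\mathring m/r)^{-1}$ is smooth and positive there. Because $\mathring f$ has compact support in $(0,\infty)_r\times(0,\infty)_{p^u}\times(0,\infty)_{p^v}$ it vanishes for $r$ near $0$, and --- as is automatic once $\spt\mathring f$ is a compact subset of $(0,r_2)\times(0,\infty)^2$, which holds in all applications in this paper --- also for $r$ near $r_2$; hence $\mathring m\equiv\mathring Q\equiv 0$ on some interval $[0,a]$ and $\mathring\varpi,\mathring Q$ are constant on some interval $[b,r_2]$ with $0<a<b<r_2$. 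Consequently the Cauchy data on $\{\tau=0\}$ prescribed by part~1 of \cref{def:normalized-defn} is Minkowskian for $0\le v\le a$ and coincides with a constant-parameter Reissner--Nordstr\"om slice (renormalized mass $\mathring\varpi(r_2)$, charge $\mathring Q(r_2)$) for $b\le v\le r_2$. Using \eqref{eq:time-symmetry-r}, \eqref{eq:time-symmetry-Omega}, \eqref{eq:Omega-derivative-data} and the Christoffel symbols of \cref{sec:formulas} one checks that $\{\tau=0\}$ is totally geodesic, so that the momentum constraint is trivial, the Hamiltonian constraint is encoded in \eqref{eq:constraint-1}, and the electromagnetic Gauss law in \eqref{eq:constraint-2}; the characteristic data along $C_{-r_2}$ from part~2 satisfies Raychaudhuri's equation \eqref{eq:Ray-v} and Maxwell's equation \eqref{eq:Max-v} (both reducing to $0=0$ since there $\partial_v r\equiv\tfrac12$ and $\Omega^2,Q$ are constant while $f\equiv0$); and since the data on $\{\tau=0\}\cap\{b\le v\le r_2\}$ and on $C_{-r_2}$ are cuts of one and the same smooth Reissner--Nordstr\"om spacetime meeting at the sphere $r=r_2$, they are automatically compatible to infinite order at the corner $(-r_2,r_2)$, so the full initial data is smooth.

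Next I would set $\mathcal U\doteq\{0\le\tau<\delta\}\cap\mathcal C_{r_2}$ for a small $\delta>0$; taking $\zeta\doteq\{\tau=\delta\}\cap\mathcal C_{r_2}$, which is spacelike and runs from the centre to $C_{-r_2}$, gives $\mathcal U\in\mathfrak U_{r_2}$. On $\mathcal U$ I would assemble the solution from three overlapping pieces. On the future domain of dependence of $\{\tau=0\}\cap\{0\le v\le a\}$ the data is Minkowskian, so I take the explicit Minkowski solution $r=\tfrac12(v-u)$, $\Omega^2\equiv1$, $Q\equiv0$, $f\equiv0$, which solves the system, satisfies the centre conditions of part~3 of \cref{def:normalized-defn}, and --- since for $\delta$ small no electromagnetic geodesic has time to reach the centre from $\spt f$ --- satisfies part~4 vacuously. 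Applying the Cauchy-problem version of \cref{prop:char-IVP-Vlasov} (whose proof, carried out in \cref{sec:local-WP} and \cref{app:B}, applies equally to spacelike data) to the compact segment $\{\tau=0\}\cap\{a/2\le v\le b'\}$ with $b<b'<r_2$ yields, for some $\delta_1>0$, a unique smooth solution on a neighbourhood of that segment. Finally, on the future domain of dependence of $\big(\{\tau=0\}\cap\{b\le v\le r_2\}\big)\cup C_{-r_2}$ the data is Reissner--Nordstr\"om, so I take the corresponding piece of that spacetime in the double null gauge anchored as in parts~1--2; it exists for all $v$ and so covers the part of $\mathcal U$ adjacent to the unbounded cone $C_{-r_2}$. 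Choosing $\delta\le\delta_1$ small enough, the three regions cover $\mathcal U$, and on each pairwise overlap --- a future domain of dependence of a common piece of initial data --- the two local solutions agree by the uniqueness assertions of \cref{prop:char-IVP-Vlasov} and its Cauchy analog. Uniqueness of the normalized development then follows from the same local uniqueness by a standard connectedness argument: two normalized developments agree near $\{\tau=0\}$, and the supremal $\tau$-level up to which they agree must be the top of the common domain, since otherwise re-running the three-piece construction from that spacelike level would force agreement slightly above it.

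I expect the only real obstacle to be bookkeeping: verifying that the data on the mixed hypersurface $\{\tau=0\}\cup C_{-r_2}$ is smooth and compatible to infinite order at the corner $(-r_2,r_2)$ --- circumvented by observing that $\mathring f$ vanishes near $r=r_2$, so the corner lies inside an exact Reissner--Nordstr\"om region --- and invoking the Cauchy version of the local well-posedness result \cref{prop:char-IVP-Vlasov}, whose detailed proof runs parallel to the characteristic case of \cref{sec:local-WP} and \cref{app:B} and whose statement is already licensed in the remark following that proposition.
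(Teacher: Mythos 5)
There is a genuine gap, and it sits exactly where you try to avoid the hard part. Your three-piece patchwork hinges on the claim that $\mathring f$ vanishes for $r$ near $r_2$, so that a neighbourhood of the corner $(-r_2,r_2)$ — and the whole strip adjacent to $C_{-r_2}$ — can be covered by an exact Reissner--Nordstr\"om piece. But \cref{def:seed} only requires $\spt\big(\mathring f(\,\cdot\,,p^u,p^v)\big)\subset(0,r_2]$, and in the actual applications this endpoint is attained: in \cref{def:S} the main beam is proportional to $\check Q'(r)$ with $\spt(\check Q')=[r_1,r_2]$ by \eqref{eq:V-1-intro}, so $\mathring f$ is supported up to $r=r_2$ and $\mathring m,\mathring Q$ are not constant on any interval $[b,r_2]$. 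Hence the data on $\{\tau=0\}$ near $v=r_2$ is \emph{not} a Reissner--Nordstr\"om slice, there is no explicit electrovacuum piece containing the corner, and your parenthetical ``which holds in all applications in this paper'' is false. What is then left unproved is precisely the content of the paper's proof: local existence and uniqueness for the \emph{mixed} spacelike--characteristic initial value problem with Vlasov matter present up to the corner, which the paper obtains by re-running the iteration of \cref{app:B} (``essentially the same methods as the proof of \cref{prop:char-IVP-Vlasov}'') for the data \eqref{eq:development-1}--\eqref{eq:development-2}, rather than by gluing explicit solutions around it.

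Relatedly, your proposal treats constraint satisfaction as a property of the $(3+1)$ data (``the momentum constraint is trivial, the Hamiltonian constraint is encoded in \eqref{eq:constraint-1}\dots''), but the local existence machinery of \cref{app:B} only solves the reduced subsystem \eqref{eq:r-wave}, \eqref{eq:Omega-wave}, \eqref{eq:Max-u}, \eqref{eq:SS-MV}; one must then show that \eqref{eq:Ray-u}, \eqref{eq:Ray-v} and \eqref{eq:Max-v} hold in the development. In the paper this is done by deriving the conservation law \eqref{eq:SS-N-div} from the Vlasov equation to propagate \eqref{eq:Max-v}, and by an explicit computation — differentiating $\partial_vr(-v,v)=\tfrac12$ to get $\partial_v^2r=\partial_u\partial_vr$ on $\{\tau=0\}$ and combining with \eqref{eq:r-wave}, \eqref{eq:constraint-1} and \eqref{eq:Omega-derivative-data} — showing that \eqref{eq:Ray-v} holds on the data, after which it propagates as in \cref{prop:char-IVP-Vlasov}. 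Your sketch gestures at total geodesy of $\{\tau=0\}$ but never carries out this verification, which is the second substantive ingredient of the proof. The Minkowski piece near the centre and the final uniqueness-by-overlap argument are fine, but as written the proposal does not establish the proposition in the stated (and needed) generality.
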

\begin{proof}
    Using essentially the same methods as the proof of \cref{prop:char-IVP-Vlasov} in \cref{app:B}, we obtain a unique local smooth solution $(r,\Omega^2,Q,f)$ to the system of equations \eqref{eq:r-wave}, \eqref{eq:Omega-wave}, \eqref{eq:Max-u}, and \eqref{eq:SS-MV}, with initial data given by \eqref{eq:development-1}--\eqref{eq:development-2}. It remains to show that the constraints \eqref{eq:Ray-u}, \eqref{eq:Ray-v}, and \eqref{eq:Max-v} hold.
    
    By the same calculation as in the proof of \cref{prop:char-IVP-Vlasov}, equation \eqref{eq:SS-MV} implies the conservation law \eqref{eq:SS-N-div} for $N$. Let $v\in (0,r_2)$. By integration of \eqref{eq:Max-u}, 
    \begin{equation*}
        Q(u,v)=\mathring Q(v)-\int_{-v}^u \tfrac 12 \mathfrak er^2\Omega^2N^v\,du'
    \end{equation*}
    for $u\ge -v$. Differentiating in $v$, using \eqref{eq:constraint-2}, \eqref{eq:SS-N-div}, and the fundamental theorem of calculus yields \eqref{eq:Max-v} at $(u,v)$.

    To prove that \eqref{eq:Ray-u} and \eqref{eq:Ray-v} hold, we argue as in the proof of \cref{prop:char-IVP-Vlasov}. Therefore, it suffices to show that \eqref{eq:Ray-v} holds on initial data (the corresponding argument for \eqref{eq:Ray-u} being the same). By \eqref{eq:development-3} and \eqref{eq:development-2}, \eqref{eq:Ray-v} clearly holds on the initial outgoing cone. By taking the absolute $v$-derivative of $\partial_vr(-v,v)=\frac 12$, we obtain $\partial_v^2r(-v,v)=\partial_u\partial_vr(-v,v)$. Therefore, using \eqref{eq:r-wave}, \eqref{eq:constraint-1}, and \eqref{eq:Omega-derivative-data}, we readily compute
    \begin{align*}
        \partial_v^2r-\partial_vr\partial_v{\log\Omega^2}+\frac 14 r\Omega^4T^{vv}&=\partial_u\partial_vr- \frac{1}{4\Omega^2}\frac{d}{dr}\left(1-\frac{2\mathring m}{r}\right)^{-1}+\frac 14r\Omega^4 T^{vv}\\
        &=-\frac{\Omega^2m}{2r^2}+\frac{\Omega^2Q^2}{4r^3}+\frac 14r\Omega^4T^{uv}\\&\quad-\frac{1}{4\Omega^2}\left(-\frac{2\Omega^4\mathring m}{r^2}+\frac{r\Omega^6}{2}\left(T^{uu}+2T^{uv}+T^{vv}\right)+\frac{\Omega^4 Q^2}{r^3}\right)+\frac 14r\Omega^4 T^{vv}\\
        &=0,
    \end{align*}
    where every function is  being evaluated at $(-v,v)$. This is equivalent to \eqref{eq:Ray-v} and completes the proof.
\end{proof}

\begin{rk}\label{rk:Mink-corner}
    Let $r_0\in (0,r_2)$ be such that $\mathring f(r,p^u,p^v)=0$ if $r\in (0,r_0]$. Since $\mathring f$ is assumed to be compactly supported, such an $r_0$ necessarily exists. Then if $(\mathcal U,r,\Omega^2,Q,f)$ is a normalized development of $\mathcal S$, the portion of the triangle $\{v\le r_0\}$ inside of $\mathcal U$ is identically Minkowskian in the sense that
    \begin{gather*}
        r=\tfrac 12(v-u),\\
        \Omega^2=1,\\
        Q=f=0
    \end{gather*}
    on $\mathcal U\cap \{v\le r_0\}$. In fact, we may therefore assume that any normalized development of $\mathcal S$ contains the full corner $\mathcal C_{r_2}\cap\{v\le r_0\}$.
\end{rk}

\begin{rk}
 One can verify that a normalized development as in \cref{def:normalized-defn} defines a solution of the constraint equations associated to the $(3+1)$-dimensional Einstein--Maxwell--Vlasov system after applying the correspondence of \cref{prop:SS-equiv}. In particular, the lift of $\tau=0$ will be totally geodesic in the $(3+1)$-dimensional spacetime. 
\end{rk}

\begin{rk}
One can ``maximalize'' \cref{prop:development-1} to show the existence of a maximal globally hyperbolic development of $\mathcal S$, but this requires treating the local existence and uniqueness problem for the spherically symmetric Einstein--Maxwell--Vlasov system at the center of symmetry, which we do not address in this paper.\footnote{In the case $\mathfrak m>0$ one could directly appeal to \cite{Blancel} to get local well-posedness near $r=0$.} Indeed, since our charged Vlasov beams spacetimes will always be vacuum near $r=0$, existence and uniqueness near the center will be completely trivial in our specific construction and is established in the following lemma.
\end{rk}

\begin{lem}\label{lem:Minkowski}
    Let $u_0\le v_0<v_1$, $r_0\ge 0$, $\lambda_0>0$, and $\alpha:[u_0,v_0]\to\Bbb R_{>0}$ and $\beta:[v_0,v_1]:\Bbb R_{>0}$ be smooth functions satisfying the relations
\begin{equation*}
    \alpha(u_0)=\beta(v_0),\quad \alpha(v_0)=4\lambda_0^2,\quad r_0=\frac{1}{4\lambda_0}\int_{u_0}^{v_0}\alpha(u')\,du'.
\end{equation*}
    Then there exists a unique smooth solution $(r,\Omega^2,Q,f)$ of the spherically symmetric Einstein--Maxwell--Vlasov system on 
    \begin{equation*}
        [u_0,v_0]\times [v_0,v_1]\cup\left(\{v\ge u\}\cap\{u\ge v_0\}\cap\{v\le v_1\}\right)
    \end{equation*} with $Q$ and $f$ identically vanishing, satisfying the boundary conditions of \cref{def:normalized-defn} along $\{u=v\}$, together with
    \begin{equation*}
         r(u_0,v_0)=r_0,\quad\partial_vr(u_0,v_0)=\lambda_0,\quad\Omega^2|_{[u_0,v_0]\times\{v_0\}}=\alpha,\quad \Omega^2|_{\{u_0\}\times [v_0,v_1]}=\beta.
    \end{equation*}
  The solution is given by the explicit formulas
  \begin{equation*}
      r(u,v)=r_0+\frac{\lambda_0}{\beta(v_0)}\int_{v_0}^v\beta(v')\,dv'-\frac{1}{4\lambda_0}\int_{u_0}^u\alpha(u')\,du',\quad
      \Omega^2(u,v)=\frac{\alpha(u)\beta(v)}{\beta(v_0)}
  \end{equation*}
  for $(u,v)\in [u_0,v_0]\times[v_0,v_1]$ and
\begin{equation*}
    r(u,v)=\frac{\lambda_0}{\beta(v_0)}\int_{u}^v\beta(v')\,dv',\quad
      \Omega^2(u,v)= \frac{4\lambda_0}{\beta(v_0)^2}\beta(u)\beta(v)
\end{equation*}  
for $(u,v)\in \{v\ge u\}\cap\{u\ge v_0\}\cap\{v\le v_1\}$. 
\end{lem}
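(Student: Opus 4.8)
The plan is to verify directly that the explicitly given formulas solve the spherically symmetric Einstein--Maxwell--Vlasov system with $Q\equiv f\equiv 0$, check that they match the prescribed boundary and initial data, and then invoke the uniqueness part of the local theory (essentially the uniqueness statement within \cref{prop:char-IVP-Vlasov} together with the boundary-condition arguments underlying \cref{prop:development-1}) to conclude the solution is the unique one. The key observation is that when $Q=f=0$, equations \eqref{eq:Max-u}--\eqref{eq:Max-v} are trivially satisfied, the Vlasov equation \eqref{eq:SS-MV} holds trivially, and the remaining equations \eqref{eq:r-wave}, \eqref{eq:Omega-wave}, \eqref{eq:Ray-u}, \eqref{eq:Ray-v} reduce to their vacuum forms $\partial_u\partial_v r = -\Omega^2 m/(2r^2)$, $\partial_u\partial_v\log\Omega^2 = \Omega^2 m/r^3$, $\partial_u(\partial_u r/\Omega^2)=0$, $\partial_v(\partial_v r/\Omega^2)=0$. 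So the task is purely to check these four equations for the two pieces of the formula, plus the matching conditions and the center boundary conditions.

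First I would treat the ``corner'' region $[u_0,v_0]\times[v_0,v_1]$. Here $\Omega^2(u,v)=\alpha(u)\beta(v)/\beta(v_0)$ and $r(u,v)=r_0+\frac{\lambda_0}{\beta(v_0)}\int_{v_0}^v\beta - \frac{1}{4\lambda_0}\int_{u_0}^u\alpha$. One computes $\partial_u r = -\alpha(u)/(4\lambda_0)$, $\partial_v r = \lambda_0\beta(v)/\beta(v_0)$, so $\partial_u\partial_v r = 0$; also $\partial_u r/\Omega^2 = -\beta(v_0)/(4\lambda_0\beta(v))$ is independent of $u$ (giving \eqref{eq:Ray-u}) and $\partial_v r/\Omega^2 = \lambda_0/\alpha(u)$ is independent of $v$ (giving \eqref{eq:Ray-v}); moreover $\log\Omega^2$ is a sum of a function of $u$ and a function of $v$, so $\partial_u\partial_v\log\Omega^2=0$. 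To close the vacuum equations \eqref{eq:r-wave} and \eqref{eq:Omega-wave} it then suffices to verify that the Hawking mass $m$ vanishes identically, which follows from \eqref{eq:Hawking-mass}: with $4\partial_u r\,\partial_v r = -\alpha(u)\beta(v)/\beta(v_0) = -\Omega^2$, we get $m=\frac{r}{2}(1+4\partial_u r\partial_v r/\Omega^2)=0$. Then \eqref{eq:r-wave} and \eqref{eq:Omega-wave} both read $0=0$. For the ``triangle'' region $\{v\ge u\ge v_0\}\cap\{v\le v_1\}$ one proceeds identically: $r=\frac{\lambda_0}{\beta(v_0)}\int_u^v\beta$ gives $\partial_u r = -\lambda_0\beta(u)/\beta(v_0)$, $\partial_v r = \lambda_0\beta(v)/\beta(v_0)$, $\partial_u\partial_v r=0$; $\Omega^2 = 4\lambda_0^2\beta(u)\beta(v)/\beta(v_0)^2$ so again $4\partial_u r\partial_v r = -\Omega^2$, hence $m=0$, and $\log\Omega^2$ splits as a $u$-function plus a $v$-function so $\partial_u\partial_v\log\Omega^2=0$; the Raychaudhuri quotients $\partial_u r/\Omega^2 = -\beta(v_0)^2/(4\lambda_0\beta(u)\beta(v))$… wait, more carefully $\partial_v r/\Omega^2 = \beta(v_0)^2/(4\lambda_0\beta(u)\cdot\beta(v_0))$ is independent of $v$ and symmetrically for $\partial_u r/\Omega^2$, so \eqref{eq:Ray-u}--\eqref{eq:Ray-v} hold.

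Next I would check consistency at the interfaces and boundaries. At the corner/triangle matching edge $\{u=v_0\}$ (with $v\ge v_0$): from the corner formula $r(v_0,v) = r_0 + \frac{\lambda_0}{\beta(v_0)}\int_{v_0}^v\beta - \frac{1}{4\lambda_0}\int_{u_0}^{v_0}\alpha = r_0 + \frac{\lambda_0}{\beta(v_0)}\int_{v_0}^v\beta - r_0 = \frac{\lambda_0}{\beta(v_0)}\int_{v_0}^v\beta$, which agrees with the triangle formula at $u=v_0$; and $\Omega^2(v_0,v) = \alpha(v_0)\beta(v)/\beta(v_0) = 4\lambda_0^2\beta(v)/\beta(v_0)$, which agrees with the triangle value $4\lambda_0^2\beta(v_0)\beta(v)/\beta(v_0)^2$. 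One also checks that first derivatives match across this null line, so the glued function is smooth. Along the center $\{u=v\}$ (inside the triangle), $r=\frac{\lambda_0}{\beta(v_0)}\int_v^v\beta = 0$, and $m=Q=0$ there, and $\partial_u r = -\lambda_0\beta(v)/\beta(v_0)<0$, $\partial_v r = \lambda_0\beta(v)/\beta(v_0)>0$: precisely the boundary conditions of \cref{def:normalized-defn}. The prescribed data are then read off: $r(u_0,v_0)=r_0+0-0=r_0$; $\partial_v r(u_0,v_0) = \lambda_0\beta(v_0)/\beta(v_0)=\lambda_0$; $\Omega^2|_{[u_0,v_0]\times\{v_0\}} = \alpha(u)\beta(v_0)/\beta(v_0)=\alpha(u)$; $\Omega^2|_{\{u_0\}\times[v_0,v_1]} = \alpha(u_0)\beta(v)/\beta(v_0) = \beta(v_0)\beta(v)/\beta(v_0)=\beta(v)$ using $\alpha(u_0)=\beta(v_0)$. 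The relation $\alpha(v_0)=4\lambda_0^2$ is what makes the two $\Omega^2$ formulas agree at the matching edge, and $r_0 = \frac{1}{4\lambda_0}\int_{u_0}^{v_0}\alpha$ is exactly the cancellation used above; I would note that these two compatibility hypotheses are therefore also necessary for a consistent normalized development. Finally, since $f\equiv 0$ the regularity hypotheses on the distribution function and the center-crossing condition (point 4 of \cref{def:normalized-defn}) are vacuous, and uniqueness follows because the characteristic initial value problem for the reduced system—with data on the bifurcate cone $(\{u_0\}\times[v_0,v_1])\cup([u_0,v_0]\times\{v_0\})$ for the corner region, and with the center boundary condition plus the cone $\{u=v_0\}$ as data for the triangle region—has a unique smooth solution by the argument of \cref{prop:char-IVP-Vlasov} (the center being handled as in the vacuum case, where existence and uniqueness near $r=0$ is elementary), so any solution with the stated data must coincide with the explicit one.

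There is essentially no main obstacle here—the lemma is a direct verification—but the one point requiring a little care is the smooth matching across the null segment $\{u=v_0\}$ joining the two coordinate patches (checking that not just the values but all derivatives agree, so that the glued $(r,\Omega^2)$ is genuinely $C^\infty$ and hence a bona fide solution of the wave system across that line), and, relatedly, articulating the uniqueness statement at the center of symmetry, which the paper has deliberately not developed in full generality but which is trivial in the vacuum regime at hand.
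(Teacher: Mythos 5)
Your overall strategy---verify directly that the explicit formulas solve the $Q\equiv f\equiv 0$ reduction of the system (via $m\equiv 0$ from \eqref{eq:Hawking-mass} and the Raychaudhuri equations \eqref{eq:Ray-u}--\eqref{eq:Ray-v}), check the prescribed data and the center conditions of \cref{def:normalized-defn}, and then argue uniqueness---is exactly the computation the paper has in mind; no proof is printed for \cref{lem:Minkowski}, so there is nothing more elaborate to compare against, and your region-by-region verifications are correct. Note that you have in fact worked with $\Omega^2=\frac{4\lambda_0^2}{\beta(v_0)^2}\beta(u)\beta(v)$ in the triangle, which is what $m\equiv 0$ and continuity at $\{u=v_0\}$ force, whereas the statement prints a factor $4\lambda_0$; your version is the internally consistent one.

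There is, however, a genuine gap at precisely the step you single out as ``requiring a little care.'' You assert that ``first derivatives match across this null line, so the glued function is smooth,'' but neither half of this can be checked from the stated hypotheses. Across $\{u=v_0\}$ the quantities $r$, $\partial_u r$, $\partial_v r$, $\Omega^2$ and all tangential ($\partial_v$) derivatives do match, but the transversal derivative matches only under an extra corner condition: $\partial_u\Omega^2$ is continuous iff $\alpha'(v_0)\beta(v_0)=4\lambda_0^2\beta'(v_0)$, and smoothness of the glued pair $(r,\Omega^2)$ requires $\alpha^{(k)}(v_0)=\frac{4\lambda_0^2}{\beta(v_0)}\beta^{(k)}(v_0)$ for every $k\ge 1$ (equivalently, the jets of $\alpha$ and of $\frac{4\lambda_0^2}{\beta(v_0)}\beta$ agree at $v_0$). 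None of this follows from the three relations assumed on $\alpha,\beta,\lambda_0,r_0$. Concretely, take $\beta\equiv 1$, $\lambda_0=\tfrac12$, and $\alpha$ nonconstant with $\alpha(u_0)=\alpha(v_0)=1$, $\alpha'(v_0)\neq 0$: in the rectangle the data force $\Omega^2(u,v)=\alpha(u)$, while in the triangle any vacuum solution with regular center on $\{u=v\}$ has $m\equiv 0$, hence by \eqref{eq:Ray-u}--\eqref{eq:Ray-v} is of the form $r=R_1(u)+R_2(v)$, $\Omega^2=-4R_1'(u)R_2'(v)$ with $R_1=-R_2$ on $[v_0,v_1]$, which pins $\Omega^2\equiv 1$ there; so $\partial_u\Omega^2$ necessarily jumps and no solution smooth across $\{u=v_0\}$ with these data exists. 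In the paper's applications the data $\alpha,\beta$ are induced from an ambient smooth vacuum solution (or $u_0=v_0$, so the rectangle degenerates), and the jet compatibility holds automatically; but a self-contained proof of the lemma must either add these compatibility conditions as hypotheses or explicitly restrict to that situation, and your matching step, as written, does not close this---nor would matching of first derivatives alone imply smoothness even if it held.
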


\begin{rk}
    From the last formula, it follows that 
\begin{equation}
    \partial_u\Omega^2(u,u)=\partial_v\Omega^2(u,u).\label{eq:Omega-center}
\end{equation}
\end{rk}

\section{A singular toy model: bouncing charged null dust}\label{sec:dust-intro}

In this section, we introduce Ori's \emph{bouncing charged null dust model} \cite{Ori91}. We then show that Ori's model exhibits extremal critical collapse and can be used to construct counterexamples to the third law of black hole thermodynamics. Later, in \cref{sec:Vlasov-beams}, we will then show that these dust solutions can be (in an appropriate sense) globally desingularized by passing to smooth solutions of the Einstein--Maxwell--Vlasov system. The constructions in this section are crucial to motivate the choice of initial data in the proof of \cref{thm:main} in \cref{sec:Vlasov-beams}.

    \subsection{Ori's bouncing charged null dust model}
\label{sec:ori-bouncing}
We begin by recalling the general notion of charged null dust from \cite{Ori91}:

\begin{defn}\label{def:charged-null-dust}
     The \emph{Einstein--Maxwell--charged null dust model} for particles of \emph{fundamental charge} $\mathfrak e\in\Bbb R\setminus\{0\}$ consists of a charged spacetime $(\mathcal M,g,F)$, a future-directed null vector field $k$ representing the momentum of the dust particles, and a nonnegative function $\rho$ which describes the energy density of the dust. The equations of motion are
    \begin{align}
    \label{eq:intro-dust-eq-1}    R_{\mu\nu}-\tfrac 12Rg_{\mu\nu} &=2\left(T_{\mu\nu}^\mathrm{EM}+T_{\mu\nu}\right),\\
        \label{eq:intro-dust-eq-2}     \nabla^\alpha F_{\mu\alpha}&=\mathfrak e\rho k_\mu,\\
   \label{eq:intro-dust-eq-3}      k^\nu\nabla_\nu k^\mu&=\mathfrak eF^\mu{}_\nu k^\nu,\\
   \label{eq:intro-dust-eq-4}     \nabla_\mu(\rho k^\mu)&=0,
    \end{align}
    where $T_{\mu\nu}^\mathrm{EM}$ was defined in \eqref{eq:EMEM} and $T_{\mu\nu}\doteq \rho k_\mu k_\nu$ is the energy-momentum tensor of a pressureless perfect fluid. By the forced Euler equation \eqref{eq:intro-dust-eq-3}, the integral curves of $k$ are electromagnetic null geodesics.
\end{defn}

    Any two functions $\varpi_\mathrm{in},Q_\mathrm{in}\in C^\infty(\Bbb R)$ determine a spherically symmetric solution to the system \eqref{eq:intro-dust-eq-1}--\eqref{eq:intro-dust-eq-4} by the formulas 
    \begin{gather}
          g_\mathrm{in}[\varpi_\mathrm{in}, Q_\mathrm{in}]\doteq -D(V,r)\,dV^2+2\,dVdr+r^2\gamma,\label{eq:g-in}\\
         \label{eq:F-Vaidya} F\doteq -\frac{Q_\mathrm{in}}{r^2}\,dV\wedge dr,\\
        k\doteq \frac{\mathfrak e}{\dot Q_\mathrm{in}}\left(\dot \varpi_\mathrm{in}-\frac{ Q_\mathrm{in}\dot Q_\mathrm{in}}{r}\right)(-\partial_r),\quad  \rho \doteq \frac{(\dot Q_\mathrm{in})^2}{\mathfrak e^2r^2}\left( \dot\varpi_\mathrm{in}-\frac{Q_\mathrm{in}\dot Q_\mathrm{in}}{r}\right)^{-1},\label{eq:Ori-1}
    \end{gather}
   where $\cdot$ denotes differentiation with respect to $V$ and 
    \begin{equation*}
    D(V,r)\doteq 1-\frac{2\varpi_\mathrm{in}(V)}{r}+\frac{Q_\mathrm{in}(V)^2}{r^2}.
\end{equation*} 
The metric \eqref{eq:g-in} is known as the \emph{ingoing charged Vaidya metric} \cite{PS68,BVai70} and describes a ``time dependent'' Reissner--Nordstr\"om spacetime in ingoing Eddington--Finkelstein-type coordinates $(V,r,\vartheta,\varphi)$. The spacetime is time oriented by $-\partial_r$. The metric \eqref{eq:g-in} and Maxwell field \eqref{eq:F-Vaidya} are spherically symmetric and may therefore be considered as a spherically symmetric charged spacetime in the framework of \cref{sec:spherical-symmetry}. One easily sees that $D=1-\frac{2m}{r}$, $Q=Q_\mathrm{in}$, and $\varpi=\varpi_\mathrm{in}$.

We will always make the assumption $\dot\varpi_\mathrm{in}\ge 0$ so that $T^{\mu\nu}=\rho k^\mu k^\nu$ satisfies the weak energy condition for $r$ sufficiently large. We also assume that $\mathfrak e> 0$ and impose the condition  $ \dot Q_\mathrm{in}\ge 0$ on the seed function $Q_\mathrm{in}$, which just means that positively charged particles increase the charge of the spacetime. (If $\mathfrak e<0$, we would instead assume $\dot Q_\mathrm{in}\le 0$ and the discussion would otherwise remain unchanged.) 

We define a function $r_\mathrm{b}=r_\mathrm{b}(V)$, called the \emph{bounce radius}, by 
\begin{equation*}
    r_\mathrm{b}\doteq \frac{Q_\mathrm{in}\dot Q_\mathrm{in}}{\dot\varpi_\mathrm{in}}
\end{equation*}
whenever $\dot \varpi_\mathrm{in}(V)>0$. The reason for this terminology will become clear shortly. By inspection of \eqref{eq:Ori-1}, we observe the following: For $r>r_\b(V)$, $(g_\mathrm{in},F,k,\rho)$ defines a solution of the Einstein--Maxwell-charged null dust system, $k$ is future-directed null, and $\rho \ge 0$. If $r_\b(V)>0$ and $r\searrow r_\b(V)$, then $k$ and $T^\mathrm{dust}$ vanish. If also $\dot Q_\mathrm{in}(V)>0$, then $\rho$ blows up at $r=r_\b(V)$, but $\rho k$ is nonzero and bounded. Finally, for $r<r_\b(V)$, $k$ is \emph{past-directed} null and $\rho<0$, so $T^\mathrm{dust}$ violates the weak energy condition.

Physically, the ingoing Vaidya metric and \eqref{eq:Ori-1} describe an ingoing congruence of radial charged massless dust particles which interact with the electromagnetic field that they generate. One can interpret the vanishing of $k$ as the dust being ``stopped'' by the resulting repulsive Lorentz force. Integral curves of $k$ are ingoing radial electromagnetic null geodesics $\gamma(s)$ with limit points on the \emph{bounce hypersurface} $\Sigma_\mathrm{b}\doteq\{r=r_\mathrm{b}\}$ as $s\to \infty$. The charged null dust system is actually \emph{ill-posed} across $\Sigma_\mathrm{b}$ since the transport equation \eqref{eq:intro-dust-eq-3} breaks down there. Because of this, Ori argued in \cite{Ori91} that the ingoing charged Vaidya metric \eqref{eq:g-in} (and the associated formulas in \eqref{eq:Ori-1}) should only be viewed as physical to the \emph{past} of $\Sigma_\b$ and must be modified if we wish to continue the solution beyond $\Sigma_\b$.

  \begin{rk}\label{rk:Mink-blowup}
      The divergence of $\rho$ along $\Sigma_\b$ does not seem to have been explicitly mentioned by Ori, but it is one of the fundamentally singular features of charged null dust. One can also see that $\rho$ can blow up if $\dot Q_\mathrm{in}/Q_\mathrm{in}$ blows up as a function of $V$, which occurs if the dust is injected into Minkowski space. 
  \end{rk}

  \begin{rk}
      Before Ori's paper \cite{Ori91}, the ``standard interpretation'' \cite{Sullivan-Israel,lake1991structure} of the ingoing Vaidya metric \eqref{eq:g-in} did not actually involve Maxwell's equation and the fluid equation was simply taken to be the standard geodesic equation. The set $\{r<r_\b\}$ was included in the ingoing solution and the dust was thought to violate the weak energy condition in this region. We refer to \cite{Ori91} for discussion. 
  \end{rk}

In order to continue the dust solution across $\Sigma_\b$, we must make some further (nontrivial!) assumptions on the seed functions $\varpi_\mathrm{in}$ and $Q_\mathrm{in}$. In order to not trivially violate causality, we must demand that $\Sigma_\mathrm{b}$ is \emph{spacelike}, so that the ``other side'' $\{r<r_\b\}$ of $\Sigma_\mathrm{b}$ does not intersect the past of $\Sigma_\b$. This is equivalent to
\begin{equation}\label{eq:spacelike-bounce}
    D-2\dot r_{\mathrm b}<0\quad\text{on }\Sigma_\b.
\end{equation}
We further assume that $\Sigma_\mathrm b$ does not contain trapped symmetry spheres, which is equivalent to 
\begin{equation}\label{eq:regular-bounce}
 D >0\quad\text{on }\Sigma_\b.
    \end{equation}

    By examining the behavior of almost-radial electromagnetic null geodesics in Reissner--Nordstr\"om, Ori proposed the following \emph{bouncing} continuation of the solution through $\Sigma_\mathrm b$: it should be as an \emph{outgoing} charged Vaidya metric. This metric takes the form 
\begin{equation}
    g_\mathrm{out}[\varpi_\mathrm{out},Q_\mathrm{out}] \doteq -\underline D(U,r)dU^2-2\,dUdr+r^2\gamma,\label{eq:Vaidya-out}
\end{equation}
where
\begin{equation*}
    \underline D(U,r)\doteq 1-\frac{2\varpi_\mathrm{out}(U)}{r}+\frac{Q_\mathrm{out}(U)^2}{r^2},
\end{equation*}
for free functions $\varpi_\mathrm{out}$ and $Q_\mathrm{out}$. The coordinates $(U,r,\vartheta,\varphi)$ are now \emph{outgoing} Eddington--Finkelstein-like. Ori defined a procedure for gluing an outgoing Vaidya metric to the ingoing Vaidya metric along $\Sigma_\b$ by demanding continuity of the second fundamental form of $\Sigma_\mathrm b$ from both sides. One sets
\begin{equation*}
    (\varpi_\mathrm{out},Q_\mathrm{out})(U)= (\varpi_\mathrm{in},Q_\mathrm{in})\circ\mathcal G^{-1}(U),
\end{equation*}
where the \emph{gluing map} $\mathcal G=\mathcal G(V)$ is determined by
\begin{equation}\label{eq:Vaidya-gluing}
    \frac{d\mathcal G}{dV}= \frac{D(V,r_\mathrm b(V))-2\dot r_\mathrm b(V)}{D(V,r_\mathrm b(V))},
\end{equation} 
up to specification of the (unimportant) initial condition. 
Notice that $\mathcal G$ is strictly monotone decreasing on account of \eqref{eq:spacelike-bounce} and \eqref{eq:regular-bounce}. It turns out that this continuation \emph{preserves the weak energy condition} through $\Sigma_\mathrm b$. We formalize this choice of extension of the Vaidya metric with the following

\begin{defn}\label{def:Oris-model} Let $\varpi_\mathrm{in}$ and $Q_\mathrm{in}$ be nondecreasing charged Vaidya seed functions such that $\spt(\dot\varpi_\mathrm{in})=\spt(\dot Q_\mathrm{in})=[V_1,V_2]$ and $r_\b$ is well-defined and positive on $[V_1,V_2]$. Assume also the conditions \eqref{eq:spacelike-bounce} and \eqref{eq:regular-bounce}. \emph{Ori's bouncing charged null dust model} consists of the ingoing charged Vaidya metric $g_\mathrm{in}[\varpi_\mathrm{in},Q_\mathrm{in}]$ on $\mathcal M_{\mathrm{in}}\doteq \{ V\in \spt (\dot \varpi_\mathrm{in} ), r \geq r_\b(V) \}\times S^2$ with spacelike, untrapped bounce hypersurface $\Sigma_{\mathrm b}^{\mathrm{in}}\doteq\{  V\in \spt (\dot \varpi_\mathrm{in} ), r = r_\b(V) \}\times S^2$ glued to the outgoing charged Vaidya metric  $ g_\mathrm{out}[ \varpi_\mathrm{in}\circ\mathcal G^{-1}, Q_\mathrm{in}\circ\mathcal G^{-1}] $ on $\mathcal M_{\mathrm{out}}\doteq \{ U\in \spt (\dot \varpi_\mathrm{out} ), r \geq r_\b\circ \mathcal G^{-1}(U) \}\times S^2$ with spacelike, untrapped bounce hypersurface $\Sigma^{\mathrm{out}}_{\mathrm b}  \doteq\{  U\in \spt(\dot \varpi_\mathrm{out} ), r = r_\b( \mathcal G^{-1}(U)) \}\times S^2$ along the map $\mathcal G\times\id_r : \Sigma_{\mathrm b}^{\mathrm{in}} \to \Sigma^{\mathrm{out}}_{\mathrm b}$ defined by \eqref{eq:Vaidya-gluing}. Outside the support of the dust, Ori's bouncing charged null dust model extends by attaching two Reissner--Nordström solutions with parameters $(\varpi_1,Q_1)\doteq(\varpi_\mathrm{in},Q_\mathrm{in})(V_1)$ and $(\varpi_2,Q_2)\doteq (\varpi_\mathrm{in},Q_\mathrm{in})(V_2)$ as depicted in \cref{fig:Ori-dust}.
\end{defn}

The model can be generalized to allow for multiple beams of dust by iterating the above definition in the obvious manner. 

 \begin{figure}
\centering{
\def\svgwidth{16pc}
\begingroup%
  \makeatletter%
  \providecommand\color[2][]{%
    \errmessage{(Inkscape) Color is used for the text in Inkscape, but the package 'color.sty' is not loaded}%
    \renewcommand\color[2][]{}%
  }%
  \providecommand\transparent[1]{%
    \errmessage{(Inkscape) Transparency is used (non-zero) for the text in Inkscape, but the package 'transparent.sty' is not loaded}%
    \renewcommand\transparent[1]{}%
  }%
  \providecommand\rotatebox[2]{#2}%
  \newcommand*\fsize{\dimexpr\f@size pt\relax}%
  \newcommand*\lineheight[1]{\fontsize{\fsize}{#1\fsize}\selectfont}%
  \ifx\svgwidth\undefined%
    \setlength{\unitlength}{150.30424704bp}%
    \ifx\svgscale\undefined%
      \relax%
    \else%
      \setlength{\unitlength}{\unitlength * \real{\svgscale}}%
    \fi%
  \else%
    \setlength{\unitlength}{\svgwidth}%
  \fi%
  \global\let\svgwidth\undefined%
  \global\let\svgscale\undefined%
  \makeatother%
  \begin{picture}(1,0.87322879)%
    \lineheight{1}%
    \setlength\tabcolsep{0pt}%
    \put(0,0){\includegraphics[width=\unitlength,page=1]{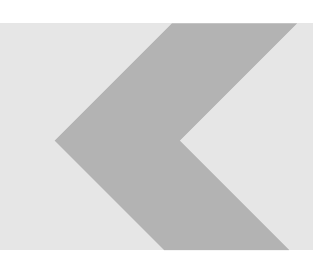}}%
    \put(0.10022879,0.61470247){\color[rgb]{0,0,0}\makebox(0,0)[lt]{\lineheight{1.25}\smash{\begin{tabular}[t]{l}$\Sigma_\mathrm b$\end{tabular}}}}%
    \put(0.0390614,0.16174226){\color[rgb]{0,0,0}\makebox(0,0)[lt]{\lineheight{1.25}\smash{\begin{tabular}[t]{l}RN $(\varpi_1,Q_1)$\end{tabular}}}}%
    \put(0.70271537,0.45115523){\color[rgb]{0,0,0}\makebox(0,0)[lt]{\lineheight{1.25}\smash{\begin{tabular}[t]{l}RN $(\varpi_2,Q_2)$\end{tabular}}}}%
    \put(0,0){\includegraphics[width=\unitlength,page=2]{Ori-2.pdf}}%
    \put(0.61874526,0.01496965){\color[rgb]{0,0,0}\makebox(0,0)[lt]{\lineheight{1.25}\smash{\begin{tabular}[t]{l}$g_\mathrm{in}[\varpi_\mathrm{in},Q_\mathrm{in}]$\end{tabular}}}}%
    \put(0,0){\includegraphics[width=\unitlength,page=3]{Ori-2.pdf}}%
    \put(0.5788259,0.82831962){\color[rgb]{0,0,0}\makebox(0,0)[lt]{\lineheight{1.25}\smash{\begin{tabular}[t]{l}$g_\mathrm{out}[\varpi_\mathrm{out},Q_\mathrm{out}]$\end{tabular}}}}%
  \end{picture}%
\endgroup%
}
\caption{Penrose diagram of Ori's bouncing charged null dust model. The geometry of the beams is described by the ingoing and outgoing Vaidya metrics, $g_\mathrm{in}$ and $g_\mathrm{out}$, which are related by the gluing map $\mathcal G$. The spacetime to the left and right of the bouncing beam is described by the Reissner--Nordstr\"om solution, with parameters $(\varpi_1,Q_1)$ and $(\varpi_2,Q_2)$ with $\varpi_1<\varpi_2$ and $Q_1<Q_2$. The endpoints of $\Sigma_\mathrm b$ correspond to symmetry spheres in these Reissner--Nordstr\"om spacetimes with radii $r_1<r_2$. In this diagram, the $V$ coordinate is normalized according to the ingoing solution. We have depicted here the case of a \emph{totally geodesic} bounce hypersurface $\Sigma_\b$ and the outgoing beam is exactly the time-reflection of the ingoing beam.}
\label{fig:Ori-dust}
\end{figure}

\subsection{The radial parametrization of bouncing charged null dust spacetimes}\label{sec:radial}

It is not immediately clear that interesting seed functions $\varpi_\mathrm{in}$ and $Q_\mathrm{in}$ satisfying the requirements of \cref{def:Oris-model} exist. Therefore, it is helpful to \emph{directly prescribe} the geometry of $\Sigma_\mathrm b$ and the dust along it in the following sense. Given a spacetime as in \cref{fig:Ori-dust}, we can parametrize $\Sigma_\mathrm b$ by the area-radius function $r$. Then the renormalized Hawking mass $\varpi$ and charge $Q$, which are gauge-invariant quantities, can be viewed as functions of $r$ on $\Sigma_\mathrm b$, and we wish to prescribe these functions. We will also prescribe $\Sigma_\b$ to be \emph{totally geodesic}. While not essential, this condition greatly simplifies \cref{prop:radial-parametrization} below and will later play a key role in our Vlasov construction in \cref{sec:Vlasov-beams}.

\begin{defn} \label{defn:radial-parametrization}
Let $\mathcal P$ denote the set of points $(r_1,r_2,\varpi_1,\varpi_2,Q_1,Q_2)\in \Bbb R^6_{\ge 0}$
 subject to the conditions
 \begin{gather}
     \label{eq:P-1-intro}0<r_{1}< r_{2}, \quad 
    Q_1 < Q_2,\quad  \varpi_1 < \varpi_2\\
    \label{eq:P-2-intro}   2r_1\varpi_1\ge Q_1^2,\\
    \label{eq:P-3-intro}     2 r_1 (\varpi_2 - \varpi_1) < Q_2^2 - Q_1^2
         < 2 r_2 (\varpi_2 - \varpi_1), 
         \\
     \label{eq:P-4-intro}   \min_{r\in[r_1,r_2]} (r_1 r^2 - 2 \varpi_1  r_1 r + Q_1^2 r - Q_2^2 r + Q_2^2 r_1) >0.
 \end{gather}
Elements of $\mathcal P$ will typically be denoted by the letter $\alpha$ and are called \emph{admissible parameters}. Let $\mathfrak V$ denote the set of triples $(\alpha,\check\varpi,\check Q)\in \mathcal P\times C^\infty\big([0,\infty)\big)\times C^\infty\big([0,\infty)\big)$ such that the functions $\check\varpi=\check\varpi(r)$ and $\check Q=\check Q(r)$ are monotone increasing and satisfy
\begin{gather}
  \label{eq:V-1-intro}  \operatorname{spt}(\check\varpi') = \operatorname{spt}(\check Q') = [r_1,r_2],\\
 \label{eq:V-2-intro} \frac{d}{dr}\check Q^2(r)=2r\frac{d}{dr}\check\varpi(r),\\
    \label{eq:V-3-intro}    \check \varpi(r_1) = \varpi_1,\quad\check Q(r_1) = Q_1, \quad  \check\varpi(r_2) = \varpi_2,\quad\check Q(r_2)=Q_2,
     \end{gather} where $'$ denotes differentiation with respect to $r$. 
\end{defn}

\begin{rk}
    In the proof of \cref{thm:main} we will employ the \emph{regular center} parameter space $\mathcal P_\Gamma$, consisting of those $\alpha\in \mathcal P$ with $\varpi_1=Q_1=0$.
\end{rk}

\begin{prop}[Radial parametrization of bouncing charged null dust]\label{prop:radial-parametrization}
    Let $(\alpha,\check \varpi,\check Q)\in \mathfrak V$, and define strictly monotone functions $\mathcal V, \mathcal U:[r_1,r_2]\to\Bbb R$ by 
\begin{equation*}
      \mathcal V(r)=-\mathcal U(r)=\int_{r_1}^r D(r')^{-1}\, dr',
\end{equation*} where $D(r)\doteq 1-\frac{2\check\varpi(r)}{r}+\frac{\check Q^2(r)}{r^2}$. Then:
\begin{enumerate}
        \item The seed functions $(\varpi_\mathrm{in},Q_\mathrm{in})\doteq (\check \varpi,\check Q)\circ\mathcal V^{-1}$ and $(\varpi_\mathrm{out},Q_\mathrm{out})\doteq (\check \varpi,\check Q)\circ \mathcal U^{-1}$ define a bouncing charged null dust spacetime as in \cref{def:Oris-model} with  gluing map $\mathcal G(V)=-V$ and bounce radius $r_\b(V) = \mathcal V^{-1}(V)$. 
        \item The bounce hypersurface $\Sigma_\mathrm b$ is spacelike and untrapped. With the setup as in \cref{fig:Ori-dust}, the left edge of $\Sigma_\b$ has area-radius $r_1$ and Reissner--Nordstr\"om parameters $(\varpi_1,Q_1)$ and the right edge has area-radius $r_2$ and Reissner--Nordstr\"om parameters $(\varpi_2,Q_2)$. The Hawking mass $m$ is nonnegative on $\Sigma_\mathrm b$.

\item The bounce hypersurface $\Sigma_\b$ is totally geodesic with respect to $g_\mathrm{in}$ and $g_\mathrm{out}$. 
      \end{enumerate} 
    \end{prop}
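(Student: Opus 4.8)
## Proof proposal

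\textbf{Overall strategy.} The statement packages three assertions about the bouncing charged null dust spacetime determined by the radial data $(\alpha,\check\varpi,\check Q)\in\mathfrak V$: that it fits the format of \cref{def:Oris-model} with the claimed gluing map and bounce radius; that the bounce hypersurface is spacelike, untrapped, has the prescribed endpoint data, and nonnegative Hawking mass; and that $\Sigma_\b$ is totally geodesic. The plan is to verify each in turn, the main point being that the conditions \eqref{eq:V-1-intro}--\eqref{eq:V-3-intro} defining $\mathfrak V$ have been reverse-engineered precisely so that the hypotheses of \cref{def:Oris-model} (namely \eqref{eq:spacelike-bounce}, \eqref{eq:regular-bounce}, positivity of $r_\b$, and the matching conditions on $\varpi_\mathrm{in},Q_\mathrm{in}$) all come out automatically from the inequalities \eqref{eq:P-1-intro}--\eqref{eq:P-4-intro} in $\mathcal P$.

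\textbf{Step 1: identifying the Vaidya seed functions and the bounce radius.} First I would check that with $(\varpi_\mathrm{in},Q_\mathrm{in})=(\check\varpi,\check Q)\circ\mathcal V^{-1}$ the bounce radius $r_\b=Q_\mathrm{in}\dot Q_\mathrm{in}/\dot\varpi_\mathrm{in}$ (dots in $V$) equals $\mathcal V^{-1}(V)$. By the chain rule $\dot\varpi_\mathrm{in}=\check\varpi'(r)/\mathcal V'(r)=\check\varpi'(r)D(r)$ and similarly $\dot Q_\mathrm{in}=\tfrac{1}{2Q}\,(\check Q^2)'(r)D(r)=\tfrac{1}{2Q}\cdot 2r\check\varpi'(r)\,D(r)$, using \eqref{eq:V-2-intro}; hence $Q_\mathrm{in}\dot Q_\mathrm{in}/\dot\varpi_\mathrm{in}=r$, i.e.\ $r_\b(V)=\mathcal V^{-1}(V)$. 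Monotonicity and positivity of $r_\b$ then follow from $D>0$ on $[r_1,r_2]$, which must be extracted from \eqref{eq:P-2-intro}--\eqref{eq:P-4-intro}: at $r=r_1$, $r_1^2D(r_1)=r_1^2-2\varpi_1 r_1+Q_1^2\ge Q_1^2>0$ using \eqref{eq:P-2-intro}, and on $[r_1,r_2]$ the displayed polynomial in \eqref{eq:P-4-intro} is, after grouping, exactly $r_1 r^2 D$ along the affine interpolation forced between the endpoints — more precisely one shows $r^2 D(r)\ge \tfrac{1}{r_1}\bigl(r_1 r^2-2\varpi_1 r_1 r + (Q_1^2-Q_2^2)r+Q_2^2 r_1\bigr)$ once one checks $\check\varpi(r)\le \varpi_1+\tfrac{Q_2^2-Q_1^2}{2 r_1}$ from \eqref{eq:V-2-intro} and \eqref{eq:P-3-intro}, so \eqref{eq:P-4-intro} gives $D>0$ throughout. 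Next I would verify $\spt(\dot\varpi_\mathrm{in})=\spt(\dot Q_\mathrm{in})=[\mathcal V(r_1),\mathcal V(r_2)]=[0,\mathcal V(r_2)]$ from \eqref{eq:V-1-intro}, and that $\dot\varpi_\mathrm{in},\dot Q_\mathrm{in}\ge 0$ from monotonicity of $\check\varpi,\check Q$, so these are legitimate nondecreasing Vaidya seeds. Finally, the gluing map: plug $r_\b=\mathcal V^{-1}(V)$ into \eqref{eq:Vaidya-gluing}. Since $D(V,r_\b(V))=D(r_\b(V))$ (the function $D$ of one variable) and $\dot r_\b(V)=\tfrac{d}{dV}\mathcal V^{-1}(V)=1/\mathcal V'(r)=D(r)$, the numerator is $D-2D=-D$ and $d\mathcal G/dV=-1$, so $\mathcal G(V)=-V$ with the natural initial condition; this identifies $\mathcal U=-\mathcal V$ and hence $(\varpi_\mathrm{out},Q_\mathrm{out})=(\check\varpi,\check Q)\circ\mathcal U^{-1}$, completing part 1.

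\textbf{Step 2: geometry of $\Sigma_\b$ (part 2).} The conditions \eqref{eq:spacelike-bounce} and \eqref{eq:regular-bounce} read $D-2\dot r_\b<0$ and $D>0$ on $\Sigma_\b$; the second is Step 1, and the first is $D-2D=-D<0$, again by Step 1. The endpoint data is immediate from \eqref{eq:V-3-intro}: at $r=r_1$, $\varpi=\check\varpi(r_1)=\varpi_1$, $Q=\check Q(r_1)=Q_1$ and $r=r_1$; likewise at $r_2$. Nonnegativity of the Hawking mass on $\Sigma_\b$: $m=\varpi-\tfrac{Q^2}{2r}=\check\varpi(r)-\tfrac{\check Q^2(r)}{2r}$; differentiating and using \eqref{eq:V-2-intro}, $\tfrac{d}{dr}m=\check\varpi'-\tfrac{(\check Q^2)'}{2r}+\tfrac{\check Q^2}{2r^2}=\check\varpi'-\check\varpi'+\tfrac{\check Q^2}{2r^2}=\tfrac{\check Q^2}{2r^2}\ge 0$, so $m$ is nondecreasing in $r$ along $\Sigma_\b$, and $m(r_1)=\varpi_1-\tfrac{Q_1^2}{2r_1}\ge 0$ by \eqref{eq:P-2-intro}; hence $m\ge 0$ on all of $\Sigma_\b$. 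This is exactly the same computation that shows $r\mapsto \check Q^2/2r$ is controlled, which is why \eqref{eq:V-2-intro} was imposed.

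\textbf{Step 3: total geodesy (part 3) — the main obstacle.} The substantive point is that $\Sigma_\b=\{r=r_\b(V)\}$ is totally geodesic for $g_\mathrm{in}$ (and by the time-reflection symmetry $\mathcal G(V)=-V$, automatically also for $g_\mathrm{out}$). I would compute the second fundamental form of $\Sigma_\b$ directly in the coordinates $(V,r)$ of \eqref{eq:g-in}. Parametrize $\Sigma_\b$ by $V\mapsto(V,r_\b(V))$; its tangent in the $\mathcal Q$-factor is $T=\partial_V+\dot r_\b\partial_r=\partial_V+D\partial_r$, with $g_\mathrm{in}(T,T)=-D+2D-0=D>0$ (spacelike, consistent with Step 2). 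A unit normal $n$ in $\mathcal Q$ is then a combination of $\partial_V,\partial_r$ orthogonal to $T$; one finds $n\propto \partial_r$ has $g(n,T)=1\ne 0$, so rather $n\propto \partial_V+\tfrac{D}{2}\partial_r$ works, with appropriate normalization $\|n\|^2=-D$ (timelike), i.e.\ $n=\tfrac{1}{\sqrt D}(\partial_V+\tfrac D2\partial_r)$ up to sign. The second fundamental form has a purely $\mathcal Q$ part $\mathrm{II}(T,T)=g(\nabla_T T,n)$ and an angular part $\mathrm{II}(e_A,e_B)=-g(\nabla_{e_A}e_B,n)$; the latter is proportional to $n(r)\slashed g_{AB}$, and since $n(r)=\tfrac{1}{\sqrt D}\cdot \tfrac D2=\tfrac{\sqrt D}{2}\ne 0$ in general, \emph{this does not vanish}. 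The resolution is that "totally geodesic" here must be interpreted correctly: I expect the intended claim is about the $\mathcal Q$-quotient $2$-metric, or — more likely — that the full $(3+1)$-hypersurface $\Sigma_\b$ is totally geodesic, which forces reconsideration. Here is the key computation that should save it: the outgoing solution is obtained from the ingoing one by the reflection $V\mapsto -V$ through $\Sigma_\b$, and a hypersurface fixed by an isometry is totally geodesic. Concretely, with $\mathcal G(V)=-V$ from Step 1, the map $\Phi$ sending the ingoing chart to the outgoing chart via $U=-V$ (and $r\mapsto r$) is an isometry $g_\mathrm{in}\mapsto g_\mathrm{out}$ fixing $\Sigma_\b$ pointwise and reversing the transverse direction; its differential is $-\mathrm{id}$ on the normal line, so the second fundamental form of $\Sigma_\b$ must equal its own negative, hence vanishes. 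Thus the plan for part 3 is: (i) observe $\mathcal G(V)=-V$ realizes $\Sigma_\b$ as the fixed-point set of the gluing isometry between the two Vaidya pieces, and (ii) invoke the standard fact that such a fixed-point set is totally geodesic. I anticipate the bookkeeping of signs and the precise sense of "isometry across $\Sigma_\b$" (one must check the induced metrics and normals match, which is exactly the Israel junction condition \eqref{eq:Vaidya-gluing} specialized to $d\mathcal G/dV=-1$) to be the only genuinely delicate part; everything else is the chain-rule verification of Steps 1--2.
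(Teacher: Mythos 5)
Your Steps 1 and 2 follow essentially the paper's own route: the chain rule together with \eqref{eq:V-2-intro} gives $r_\b(V)=\mathcal V^{-1}(V)$ and $\dot r_\b=D$, hence $d\mathcal G/dV=-1$, and $D>0$ comes from the conditions defining $\mathcal P$. Two small slips there: \eqref{eq:P-2-intro} gives $r_1^2-2\varpi_1r_1+Q_1^2\le r_1^2$, not $\ge Q_1^2$ (and $Q_1$ may vanish), and the constant bound $\check\varpi(r)\le\varpi_1+\tfrac{Q_2^2-Q_1^2}{2r_1}$ by itself only yields the polynomial of \eqref{eq:P-4-intro} with $Q_1^2r_1$ in place of $Q_2^2r_1$, which is weaker; the clean derivation (the paper's) integrates \eqref{eq:V-2-intro} by parts to get the exact identity \eqref{eq:defn-varpi-proof} and only then uses $\check Q\le Q_2$ on the integral term, after which your claimed inequality is correct.

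The genuine gap is in Step 3, and it starts with the normal vector. The vector $\partial_V+\tfrac D2\partial_r$ you propose is the outgoing null direction: its $g_\mathrm{in}$-norm is $-D+D=0$, not $-D$, and its inner product with $T=\partial_V+D\partial_r$ is $D/2\neq 0$, so it is neither timelike nor normal to $\Sigma_\b$. The correct unit normal is $n=D^{-1/2}\partial_V$, and the orthogonality $g_\mathrm{in}(\partial_V,T)=-D+\dot r_\b=0$ is exactly \eqref{eq:r-b-thm}. With this normal your objection evaporates: the angular components of the second fundamental form are proportional to $g(\nabla_{e_A}e_B,\partial_V)=\Gamma^V_{AB}\,g_{VV}+\Gamma^r_{AB}\,g_{Vr}=(-r\gamma_{AB})(-D)+(-Dr\gamma_{AB})=0$, and the $(T,T)$ component is proportional to $\partial_VD|_{\Sigma_\b}=\tfrac{2}{r_\b^2}\bigl(Q_\mathrm{in}\dot Q_\mathrm{in}-r_\b\,\dot\varpi_\mathrm{in}\bigr)=0$ by the very definition of the bounce radius. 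So no reinterpretation of ``totally geodesic'' is needed; the intended statement is about the full $(3+1)$-dimensional hypersurface, and the direct computation — which is exactly the paper's proof, ``compute $\mathrm{II}$ and use \eqref{eq:r-b-thm}'' — verifies it, with the $g_\mathrm{out}$ case following by the reflection.

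Your fallback reflection argument does not close this gap on its own. The map $U=-V$ is indeed an isometry between the two halves fixing $\Sigma_\b$ and reversing the normal, but that only gives $K_\mathrm{in}=-K_\mathrm{out}$ (with consistently oriented normals), a relation that holds for \emph{any} time-symmetric doubling along \emph{any} spacelike hypersurface, totally geodesic or not. To conclude $K=0$ you must additionally know $K_\mathrm{in}=K_\mathrm{out}$, i.e.\ the Israel junction condition across the gluing; but \cref{def:Oris-model} defines the glued spacetime only through the map \eqref{eq:Vaidya-gluing}, and the continuity of the second fundamental form is motivational prose attributed to Ori rather than an available hypothesis. As you yourself note, verifying it is the ``delicate part'' — and verifying it is tantamount to the direct computation you abandoned, which in fact goes through once the normal is chosen correctly.
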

    \begin{proof}
        We must check that $\varpi_\mathrm{in}\doteq \check \varpi \circ\mathcal V^{-1}$ and $Q_\mathrm{in}\doteq \check Q\circ\mathcal V^{-1}$ satisfy the assumptions of \cref{def:Oris-model}. Using the chain rule and \eqref{eq:V-2-intro}, we compute
        \begin{equation*}
            r_\b(V)=\frac{\check Q(\mathcal V^{-1}(V))\check Q'(\mathcal V^{-1}(V))}{\check\varpi'(\mathcal V^{-1}(V))}= \mathcal V^{-1}(V).
        \end{equation*}
        Differentiating, we obtain
        \begin{equation}
            \dot r_\b(V)= D(\mathcal V^{-1}(V))=D(V,r_\b(V)),\label{eq:r-b-thm}
        \end{equation}
        which implies that $d\mathcal G/dV=-1$. To prove \eqref{eq:regular-bounce}, we show that $D(r)>0$ for $r\in[r_1,r_2]$. Integrating \eqref{eq:V-2-intro} in $r$ and integrating by parts yields
        \begin{equation}
            \check\varpi(r) = \varpi_1+\frac 12\int_{r_1}^r\frac{1}{r'}\frac{d}{dr'}\check Q^2(r')\,dr'=\varpi_1+\frac{\check Q^2(r)}{2r}-\frac{Q_1^2}{2r_1}+\frac{1}{2}\int_{r_1}^r \frac{\check Q^2(r')}{r'^2}\,dr'.\label{eq:defn-varpi-proof}
        \end{equation}
        Using condition \eqref{eq:P-4-intro} and $\check Q\le Q_2$, we then find
        \begin{align}
        \nonumber   D(r)=1 - \frac{2\check\varpi(r)}{r} + \frac{\check Q^2(r)}{r^2}   &=  1 - \frac{2\varpi_1}{r} + \frac{Q_1^2}{r_1 r} - \frac 1r \int_{r_1}^{r} \frac{\check Q^2(r')}{r'^2} \,dr'\\
           &> \frac{1}{r_1r^2} \left( r_1 r^2 - 2 \varpi_1 r r_1 + Q_1^2 r - Q_2^2 r + Q_2^2 r_1\right) >0\label{eq:dust-untrapped}
        \end{align}
        for $r\in [r_1,r_2]$. This proves \eqref{eq:regular-bounce} and since $D-2\dot r_\b=-D$, also proves \eqref{eq:spacelike-bounce}. Condition \eqref{eq:P-2-intro} implies that the Hawking mass is nonnegative at $r_1$. Finally, that $\Sigma_\b$ is a totally geodesic hypersurface is shown by directly computing its second fundamental form and using \eqref{eq:r-b-thm}. 
    \end{proof}

     The definition of $\mathfrak V$ involves many more conditions than just \eqref{eq:spacelike-bounce} and \eqref{eq:regular-bounce} alone, but it turns out that these are relatively easy to satisfy. In particular, we have:
       
     \begin{prop}\label{prop:KU23c-existence}
               The natural projection map $\mathfrak V\to \mathcal P$ admits a smooth section $\varsigma:\mathcal P\to \mathfrak V$. In other words, given any smooth family of parameters in $\mathcal P$ we may associate a smooth family of bouncing charged null dust spacetimes attaining those parameters, with totally geodesic bounce hypersurfaces. 
     \end{prop}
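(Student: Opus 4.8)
The plan is to construct the section $\varsigma$ explicitly by exhibiting, for each $\alpha = (r_1, r_2, \varpi_1, \varpi_2, Q_1, Q_2) \in \mathcal P$, a canonical choice of the pair $(\check\varpi, \check Q)$ satisfying \eqref{eq:V-1-intro}--\eqref{eq:V-3-intro}, and then checking that this choice depends smoothly on $\alpha$. The key observation is that the compatibility relation \eqref{eq:V-2-intro}, namely $\frac{d}{dr}\check Q^2 = 2r\, \frac{d}{dr}\check\varpi$, means that $\check\varpi$ is \emph{determined} by $\check Q$ via the formula already derived in the proof of \cref{prop:radial-parametrization}, equation \eqref{eq:defn-varpi-proof}:
\begin{equation*}
    \check\varpi(r) = \varpi_1 + \frac{\check Q^2(r)}{2r} - \frac{Q_1^2}{2r_1} + \frac 12 \int_{r_1}^r \frac{\check Q^2(r')}{r'^2}\, dr'.
\end{equation*}
So the only real freedom is the choice of the monotone function $\check Q\colon [0,\infty) \to \Bbb R$ interpolating from $Q_1$ (for $r \le r_1$) to $Q_2$ (for $r \ge r_2$) with $\operatorname{spt}(\check Q') = [r_1, r_2]$. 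The constraints on $\check\varpi$ then need to be matched: we need $\check\varpi(r_2) = \varpi_2$, $\operatorname{spt}(\check\varpi') = [r_1,r_2]$, and $\check\varpi$ monotone increasing.

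First I would fix a smooth ``bump template.'' Choose once and for all a smooth nondecreasing function $\chi\colon \Bbb R \to [0,1]$ with $\chi \equiv 0$ on $(-\infty, 0]$, $\chi \equiv 1$ on $[1,\infty)$, and $\operatorname{spt}(\chi') = [0,1]$ (e.g.\ built from the standard mollifier). Set $\chi_\alpha(r) \doteq \chi\!\left(\frac{r - r_1}{r_2 - r_1}\right)$. The naive guess $\check Q = Q_1 + (Q_2 - Q_1)\chi_\alpha$ satisfies the endpoint and support conditions for $\check Q$, but plugging into the displayed formula for $\check\varpi$ will generically \emph{not} give $\check\varpi(r_2) = \varpi_2$. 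To fix the mass endpoint, introduce a one-parameter correction: look for $\check Q$ of the form
\begin{equation*}
    \check Q(r) = \sqrt{\, Q_1^2 + (Q_2^2 - Q_1^2)\,\chi_\alpha(r) + \theta\, \psi_\alpha(r)\,},
\end{equation*}
where $\psi_\alpha$ is a fixed smooth function supported in $(r_1, r_2)$ with $\int \frac{\psi_\alpha(r)}{?}\,$-type normalization chosen so that varying the scalar $\theta$ moves $\check\varpi(r_2)$ monotonically; one then solves for the unique $\theta = \theta(\alpha)$ making $\check\varpi(r_2) = \varpi_2$. That $\check Q^2$ (hence $\check Q$, since it stays positive by the smallness of the admissible region) is monotone, and that $\check\varpi$ is monotone, follows from choosing $\psi_\alpha$ small relative to $(Q_2^2 - Q_1^2)\chi_\alpha'$ — which is possible precisely because \eqref{eq:P-3-intro} gives the two-sided bound $2r_1(\varpi_2 - \varpi_1) < Q_2^2 - Q_1^2 < 2r_2(\varpi_2-\varpi_1)$ with strict inequalities, leaving room to absorb a small perturbation. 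Monotonicity of $\check\varpi$ is then immediate from $\frac{d}{dr}\check\varpi = \frac{1}{2r}\frac{d}{dr}\check Q^2 \ge 0$ once $\check Q^2$ is monotone.

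The smoothness of $\alpha \mapsto (\check\varpi, \check Q)$ as a map $\mathcal P \to \mathfrak V$ (with $\mathfrak V$ given the natural topology inherited from $\mathcal P \times C^\infty_{\mathrm{loc}} \times C^\infty_{\mathrm{loc}}$) reduces to smooth dependence of the scalar $\theta(\alpha)$ on $\alpha$, which follows from the implicit function theorem: the map $(\alpha, \theta) \mapsto \check\varpi_{\alpha,\theta}(r_2) - \varpi_2$ is smooth, and its $\theta$-derivative is nonzero by the monotonicity built into the choice of $\psi_\alpha$. Everything else in the construction ($\chi_\alpha$, $\psi_\alpha$, the square root, the integral formula for $\check\varpi$) is manifestly smooth in $\alpha$ since $r_1, r_2$ stay bounded away from each other and $\check Q$ stays bounded away from $0$ on $[r_1, r_2]$. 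Finally one checks that the image lands in $\mathfrak V$, i.e.\ that \eqref{eq:V-1-intro}--\eqref{eq:V-3-intro} all hold — \eqref{eq:V-2-intro} holds by construction (it is how $\check\varpi$ was defined from $\check Q$), \eqref{eq:V-3-intro} holds by the endpoint normalizations and the choice of $\theta$, and \eqref{eq:V-1-intro} holds because $\operatorname{spt}(\chi_\alpha') = \operatorname{spt}(\psi_\alpha) \cup \ldots = [r_1,r_2]$ forces both $\check Q'$ and $\check\varpi' = \frac{1}{2r}(\check Q^2)'$ to be supported exactly on $[r_1,r_2]$ (here one should take $\psi_\alpha$ with $\operatorname{spt}(\psi_\alpha') \subset [r_1,r_2]$ as well, or simply note $\operatorname{spt}(\psi_\alpha) \subset (r_1,r_2)$).

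\textbf{Main obstacle.} The delicate point — and the only place the specific inequalities \eqref{eq:P-2-intro}--\eqref{eq:P-4-intro} are really used — is arranging that the correction $\theta\psi_\alpha$ can always be chosen \emph{simultaneously} small enough to preserve monotonicity of both $\check Q$ and $\check\varpi$ \emph{and} large enough in effect to hit the prescribed $\varpi_2$. This is exactly a quantitative statement that the ``naive'' profile $Q_1 + (Q_2-Q_1)\chi_\alpha$ already produces a $\check\varpi(r_2)$ lying strictly between the extreme values achievable by monotone interpolants, which is the content of the strict two-sided bound \eqref{eq:P-3-intro}; one has to verify that the gap does not degenerate as $\alpha$ ranges over (a compact exhaustion of) $\mathcal P$. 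I expect this to be a short but somewhat fiddly estimate comparing $\int_{r_1}^{r_2}\frac{\check Q^2}{r^2}\,dr$ against its values for the two extreme (step-like) profiles, using $Q_1 \le \check Q \le Q_2$ and the admissibility constraints; condition \eqref{eq:P-4-intro} then guarantees $D > 0$ throughout, so the resulting datum genuinely defines a bouncing charged null dust spacetime via \cref{prop:radial-parametrization}.
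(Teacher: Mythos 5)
Your overall skeleton is the same as the paper's: parametrize the charge profile, let the integral identity \eqref{eq:defn-varpi-proof} determine $\check\varpi$ from $\check Q$, tune one real parameter so that $\check\varpi(r_2)=\varpi_2$, and get smooth dependence on $\alpha$ from the implicit function theorem. However, the specific ``fixed template plus small correction'' ansatz has a genuine gap at precisely the point you flag as the main obstacle, and it is not just a fiddly estimate: it fails. If the correction $\theta\psi_\alpha$ must stay small enough to preserve monotonicity of $\check Q^2$, then the achievable values of $\check\varpi(r_2)$ form only a bounded neighborhood of the value produced by the uncorrected template $Q_1^2+(Q_2^2-Q_1^2)\chi_\alpha$. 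But \eqref{eq:P-3-intro} only places $\varpi_2$ in the open interval between $\varpi_1+\tfrac{Q_2^2-Q_1^2}{2r_2}$ and $\varpi_1+\tfrac{Q_2^2-Q_1^2}{2r_1}$, which are the values corresponding to the degenerate step profiles that concentrate the charge jump at $r_2$, respectively at $r_1$. For admissible $\alpha$ with $\varpi_2$ near either endpoint, hitting the target forces $\check Q^2$ to be close to $Q_2^2$ (resp.\ $Q_1^2$) on almost all of $[r_1,r_2]$ --- a large deformation of any fixed template --- so the required $\theta$ leaves the perturbative regime in which monotonicity (and nonnegativity of the expression under your square root) is preserved, and the IFT step does not produce a solution for all $\alpha\in\mathcal P$. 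A secondary issue: parametrizing $\check Q^2$ and taking a square root is delicate for smoothness when $Q_1=0$, which is exactly the regular-center case $\mathcal P_\Gamma$ used later in the paper.

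The repair --- and this is what the paper actually does --- is to replace ``template plus small correction'' by a one-parameter family of monotone profiles that sweeps out the \emph{entire} admissible range: $\check Q(r,\xi,\alpha)=Q_1+(Q_2-Q_1)\,\psi\bigl(\log\tfrac{r-r_1}{r_2-r},\xi\bigr)$ with a sigmoid $\psi(\cdot,\xi)$ whose $\xi\to\pm\infty$ limits degenerate to the two step profiles. Then $\Pi(\xi,\alpha)\doteq\check\varpi_{\xi,\alpha}(r_2)$ is strictly decreasing in $\xi$ with limits exactly the two endpoint values appearing in \eqref{eq:P-3-intro}, so the intermediate value theorem gives a unique $\xi(\alpha)$ with $\Pi(\xi(\alpha),\alpha)=\varpi_2$, smooth in $\alpha$ by the implicit function theorem; every member of the family is monotone in $r$ by construction (and $\check\varpi$ is then monotone via \eqref{eq:V-2-intro}), so there is never any tension between monotonicity and reaching the target. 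If you want to keep your setup, the correction family must likewise exhaust the full interval of \eqref{eq:P-3-intro} rather than be treated perturbatively.
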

\begin{rk}\label{rk:varsigma}
In the remainder of the paper, we fix the choice of section to be the one constructed in the proof below.
\end{rk}
     
 \begin{proof}
Define a smooth, surjective function $\psi:\Bbb R^2\to (0,1)$ by
\begin{align*}
    \psi(x, \xi)=\frac{1}{1 + \exp\left[ -   (x-\xi )  e^{(x-\xi)^2}  \right]}.
\end{align*}
Note that for each fixed $\xi\in\Bbb R$ the function $x\mapsto \psi(x,\xi)$ is strictly monotone increasing and surjective. Moreover, for  $x\in \mathbb R$ we have $\psi(x,\xi) \to 0$ as $\xi \to \infty$ and $\psi(x,\xi) \to 1 $ as $\xi \to -\infty$. 

We now define the   function    $ \check Q: (r_1,r_2)\times \mathbb R \times \mathcal P \to \mathbb R$ by
\begin{equation}\label{eq:seed-data-lemma}
    \check Q(r,\xi,\alpha) \doteq Q_1+ (Q_2-Q_1) \, \psi \left( \log \left(\frac{r-r_1}{r_2-r}\right), \xi\right).
    \end{equation}  
    By construction of $\psi$, the function $  \check Q$  extends smoothly to $[0,\infty)\times \mathbb R \times \mathcal P$   by setting  $  \check Q(r,\xi,\alpha) =Q_1$ for $ 0 \leq r\leq r_1$ and $ \check Q(r,\xi,\alpha) =Q_2$ for $r\geq r_2$. 

With our family of candidate $\check Q$'s at hand, we aim to satisfy the constraint $\check \varpi (r_2) = \varpi_2$, where $\check \varpi(r)$ is defined by \eqref{eq:defn-varpi-proof}. Consider the smooth map $\Pi : (\xi, \alpha)  \in \mathbb R  \times \mathcal P\to \mathbb R$ defined by  
\begin{equation*}
    \Pi (\xi,\alpha ) \doteq \varpi_1+
        \frac{ Q_{2}^2}{2r_2} -\frac{Q_1^2}{2r_1} +  \int_{r_1}^{r_2} \frac{\check Q^2(r',\xi,\alpha)}{2r'^2}\,dr'   .   
\end{equation*}
Since $\psi$ satisfies $\frac{\partial \psi}{\partial \xi} <0$ on $\mathbb R^2$,  we have that $\frac{\partial \Pi}{\partial \xi} < 0$ on $\mathbb R  \times  \mathcal P$.
Moreover, using the pointwise limits of $\psi$, a direct computation gives \begin{equation*}
\lim_{\xi \to \infty} \Pi(\alpha, \xi) =\varpi_1+\frac{Q_2^2}{2 r_2} -\frac{Q_1^2}{2r_2}   ,\quad \lim_{\xi \to -\infty} \Pi(\alpha, \xi) =\varpi_1+\frac{Q_2^2}{2 r_1} -\frac{Q_1^2}{2r_1}. \end{equation*}
By condition \eqref{eq:P-3-intro}, this implies that 
   \begin{equation*}\lim_{\xi \to \infty} \Pi(\alpha, \xi) <  \varpi_2 < \lim_{\xi \to -\infty} \Pi(\alpha, \xi).\end{equation*} Thus, the intermediate value theorem and the fact that $\frac{\partial \Pi}{\partial \xi} < 0$ show that there exists a unique $\xi(\alpha)$ such that $\Pi(\alpha,\xi(\alpha))= \varpi_2$. 
Moreover, a direct consequence of the implicit function theorem is that the assignment $\mathcal P \ni \alpha \mapsto \xi(\alpha) \in \mathbb R$ is smooth. The above construction shows that the functions $\check Q(r,\xi(\alpha),\alpha)$ and $\check \varpi$ satisfy all required properties.
 \end{proof}

The set $\mathcal P$ is defined by simple polynomial relations and includes many interesting examples as we will see in the next two sections.

\subsection{Extremal critical collapse in Ori's model}\label{sec:Ori-ECC}

The first application of \cref{prop:radial-parametrization,prop:KU23c-existence} is the construction of one-parameter families of bouncing charged null dust spacetimes exhibiting \emph{extremal critical collapse}. We first show that the regular center parameter space $\mathcal P_\Gamma$ contains elements with arbitrary final Reissner--Nordstr\"om parameters:

\begin{lem}\label{lem:existence-P}
    Let $\varpi_2,Q_2>0$. Then there exist $0<r_1<r_2$ such that $(r_1,r_2,0,\varpi_2,0,Q_2)\in\mathcal P_\Gamma$. If $\varpi_2\ge Q_2$, then $r_2$ can moreover be chosen so that $r_2<\varpi_2-\sqrt{\varpi_2^2-Q_2^2}$.
\end{lem}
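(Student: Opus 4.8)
The plan is to verify the four defining conditions of $\mathcal P_\Gamma$ directly from \cref{defn:radial-parametrization} in the special case $\varpi_1 = Q_1 = 0$. With this substitution, the conditions simplify dramatically: \eqref{eq:P-2-intro} becomes $0 \ge 0$, which is automatic; \eqref{eq:P-1-intro} requires $0 < r_1 < r_2$, $0 < Q_2$, $0 < \varpi_2$, of which the latter two are given; \eqref{eq:P-3-intro} becomes $2r_1 \varpi_2 < Q_2^2 < 2 r_2 \varpi_2$; and \eqref{eq:P-4-intro} becomes $\min_{r \in [r_1,r_2]}(r_1 r^2 - Q_2^2 r + Q_2^2 r_1) > 0$. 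So the task reduces to choosing $0 < r_1 < r_2$ simultaneously satisfying these two remaining inequalities (plus the optional size constraint on $r_2$).

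First I would dispose of \eqref{eq:P-3-intro}: given $\varpi_2, Q_2 > 0$, the condition $2 r_1 \varpi_2 < Q_2^2 < 2 r_2 \varpi_2$ just says $r_1 < \tfrac{Q_2^2}{2\varpi_2} < r_2$. So it suffices to pick $r_1, r_2$ straddling $\tfrac{Q_2^2}{2\varpi_2}$, with additional room to arrange \eqref{eq:P-4-intro}. Next I would analyze \eqref{eq:P-4-intro}. Write $P(r) \doteq r_1 r^2 - Q_2^2 r + Q_2^2 r_1 = r_1(r^2 + Q_2^2) - Q_2^2 r$. Since $P$ is a upward-opening parabola in $r$ with $P(0) = Q_2^2 r_1 > 0$ and $P(r_1) = r_1(r_1^2 + Q_2^2) - Q_2^2 r_1 = r_1^3 > 0$, one natural strategy is to ensure $P > 0$ on all of $[r_1, r_2]$ by taking $r_1$ small enough that the parabola's minimum stays positive — but the minimum value is $Q_2^2 r_1 - \tfrac{Q_2^4}{4 r_1}$, which is \emph{negative} for small $r_1$. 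So instead I would keep $r_2$ controlled: the minimum of $P$ over $[r_1, r_2]$ is attained either at an endpoint or at the vertex $r = \tfrac{Q_2^2}{2 r_1}$. If we arrange $r_2 \le \tfrac{Q_2^2}{2 r_1}$, then $P$ is decreasing on $[r_1, r_2]$ and $\min_{[r_1,r_2]} P = P(r_2) = r_1(r_2^2 + Q_2^2) - Q_2^2 r_2$; this is positive precisely when $r_1 > \tfrac{Q_2^2 r_2}{r_2^2 + Q_2^2}$. Combined with $r_1 < \tfrac{Q_2^2}{2\varpi_2}$, I need $\tfrac{Q_2^2 r_2}{r_2^2 + Q_2^2} < \tfrac{Q_2^2}{2\varpi_2}$, i.e. $2\varpi_2 r_2 < r_2^2 + Q_2^2$; and I also need $r_2 > \tfrac{Q_2^2}{2\varpi_2}$ from \eqref{eq:P-3-intro} and $r_2 \le \tfrac{Q_2^2}{2r_1}$. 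The cleanest route is: fix $r_2$ first so that $r_2 > \tfrac{Q_2^2}{2\varpi_2}$ \emph{and} $r_2^2 - 2\varpi_2 r_2 + Q_2^2 > 0$ — the latter holds automatically if $\varpi_2 < Q_2$ (no real roots) and, if $\varpi_2 \ge Q_2$, holds for $r_2 > \varpi_2 + \sqrt{\varpi_2^2 - Q_2^2}$ or $r_2 < \varpi_2 - \sqrt{\varpi_2^2 - Q_2^2}$. Then choose $r_1$ in the nonempty open interval $\bigl(\tfrac{Q_2^2 r_2}{r_2^2 + Q_2^2},\ \tfrac{Q_2^2}{2\varpi_2}\bigr)$, shrinking it further if needed to also guarantee $r_1 \le \tfrac{Q_2^2}{2 r_2}$ (equivalently $r_2 \le \tfrac{Q_2^2}{2 r_1}$, securing monotonicity of $P$); this is possible because $\tfrac{Q_2^2}{2 r_2}$ is a fixed positive number and the interval can be taken arbitrarily close to its left endpoint.

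For the final clause: when $\varpi_2 \ge Q_2$, the inner horizon radius is $r_- = \varpi_2 - \sqrt{\varpi_2^2 - Q_2^2} \le \varpi_2$, and one checks $r_-^2 - 2\varpi_2 r_- + Q_2^2 = 0$ so $r_-$ is exactly a root of the quadratic above; moreover $\tfrac{Q_2^2}{2\varpi_2} \le \tfrac{r_-^2 + Q_2^2}{2 r_-}\cdot\tfrac{?}{}$ — more simply, since $r_- r_+ = Q_2^2$ with $r_+ \ge \varpi_2 \ge r_-$, we get $\tfrac{Q_2^2}{2\varpi_2} = \tfrac{r_- r_+}{2\varpi_2} \le \tfrac{r_- r_+}{2 r_-}\cdot\tfrac{r_-}{\varpi_2}\le$ ... the key inequality I actually need is just $\tfrac{Q_2^2}{2\varpi_2} < r_-$, i.e. $Q_2^2 < 2 \varpi_2 r_-$, i.e. $r_+ < 2\varpi_2$, which holds since $r_+ = \varpi_2 + \sqrt{\varpi_2^2 - Q_2^2} < 2\varpi_2$ whenever $Q_2 \ne 0$. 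Hence the whole interval $\bigl(\tfrac{Q_2^2}{2\varpi_2}, r_-\bigr)$ is available for $r_2$, and repeating the construction above with $r_2 \in \bigl(\tfrac{Q_2^2}{2\varpi_2}, r_-\bigr)$ yields $r_2 < r_- = \varpi_2 - \sqrt{\varpi_2^2 - Q_2^2}$ as required; note that on this interval $r_2^2 - 2\varpi_2 r_2 + Q_2^2 > 0$ automatically since $r_2 < r_-$, so the earlier sign condition is satisfied. I expect the main (though still modest) obstacle to be the bookkeeping in \eqref{eq:P-4-intro}: one must be careful that the three constraints on $r_1$ — the lower bound from \eqref{eq:P-4-intro}, the upper bound from \eqref{eq:P-3-intro}, and the upper bound $r_1 \le \tfrac{Q_2^2}{2r_2}$ ensuring monotonicity of $P$ — are mutually compatible, which is why I pin down $r_2$ first and only then select $r_1$ in a sufficiently small left-neighborhood of $\tfrac{Q_2^2}{2\varpi_2}$.
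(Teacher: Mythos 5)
Your reduction of the defining conditions of $\mathcal P_\Gamma$ and your treatment of the clause $\varpi_2\ge Q_2$ are sound, but the general existence argument has a genuine gap at the selection of $r_1$. You need $r_1$ to satisfy simultaneously $r_1>\frac{Q_2^2 r_2}{r_2^2+Q_2^2}$ (positivity of $P(r_2)$), $r_1<\frac{Q_2^2}{2\varpi_2}$ (from \eqref{eq:P-3-intro}), and $r_1\le\frac{Q_2^2}{2r_2}$ (so that the vertex of $P$ lies at or beyond $r_2$). Your justification for the third constraint---``the interval can be taken arbitrarily close to its left endpoint''---only works if that left endpoint lies below $\frac{Q_2^2}{2r_2}$, and the inequality $\frac{Q_2^2 r_2}{r_2^2+Q_2^2}<\frac{Q_2^2}{2r_2}$ is equivalent to $r_2<Q_2$, a condition you never impose. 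Your recipe for $r_2$ (only $r_2>\frac{Q_2^2}{2\varpi_2}$ and $r_2^2-2\varpi_2 r_2+Q_2^2>0$) permits $r_2\ge Q_2$ whenever $\varpi_2<Q_2$, and when $Q_2\ge 2\varpi_2$ it \emph{forces} $r_2>\frac{Q_2^2}{2\varpi_2}\ge Q_2$, so in that regime the three constraints on $r_1$ are incompatible and your construction produces nothing. This is not merely a failure of the monotonicity device: since $\frac{Q_2^2r_2}{r_2^2+Q_2^2}\le\frac{Q_2}{2}$ by AM--GM, taking $r_1$ near the left endpoint with $r_2>Q_2$ places the vertex $\frac{Q_2^2}{2r_1}$ inside $[r_1,r_2]$, where $P$ takes the value $Q_2^2r_1-\frac{Q_2^4}{4r_1}<0$, so \eqref{eq:P-4-intro} genuinely fails for such parameters. (Your second clause survives because there $r_2<r_-\le Q_2$, even though you do not check this; the garbled intermediate display in that part should be removed, the correct point being simply $r_+<2\varpi_2$.)

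The paper avoids this case analysis entirely by collapsing the interval rather than fixing $r_2$ first: it sets $r_1=Q_2(\frac{Q_2}{2\varpi_2}-\varepsilon)$ and $r_2=Q_2(\frac{Q_2}{2\varpi_2}+\varepsilon)$, so \eqref{eq:P-3-intro} is immediate, and since $[r_1,r_2]$ shrinks to the single point $\frac{Q_2^2}{2\varpi_2}$ where the limiting polynomial equals $\frac{Q_2^6}{8\varpi_2^3}>0$, \eqref{eq:P-4-intro} holds for $\varepsilon$ small by continuity; the bound $r_2<\varpi_2-\sqrt{\varpi_2^2-Q_2^2}$ then follows from $(2x)^{-1}<x-\sqrt{x^2-1}$ for $x=\varpi_2/Q_2\ge 1$. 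If you prefer to keep your architecture, you must additionally require $r_2<Q_2$ (possible precisely when $Q_2<2\varpi_2$) and handle the remaining regime $Q_2\ge 2\varpi_2$ separately, e.g.\ by choosing $r_1\in\bigl(\frac{Q_2}{2},\frac{Q_2^2}{2\varpi_2}\bigr)$ (nonempty since $\varpi_2<Q_2$ there), for which even the global minimum $Q_2^2r_1-\frac{Q_2^4}{4r_1}$ of $P$ is positive, and then any $r_2>\frac{Q_2^2}{2\varpi_2}$ works.
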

\begin{proof}
    Let 
    \begin{equation*}
        r_1\doteq Q_2\left(\frac{Q_2}{2\varpi_2}-\varepsilon\right),\quad r_2\doteq Q_2\left(\frac{Q_2}{2\varpi_2}+\varepsilon\right),
    \end{equation*}
    where $\epsilon>0$ is a small parameter to be determined. With this choice, \eqref{eq:P-3-intro} is clearly satisfied. Let $p(r)\doteq r_1r^2-Q_2^2r+Q_2^2r_1$ and observe that 
    \begin{equation*}
        \lim_{\varepsilon\to 0}p\left(\frac{Q_2^2}{2\varpi_2}\right)= \frac{Q_2^6}{8\varpi_2^3}>0.
    \end{equation*}
    It follows that \eqref{eq:P-4-intro} is satisfied for $\varepsilon$ sufficiently small. If $x\ge 1$, then $(2x)^{-1}<x-\sqrt{x^2-1}$,
    so taking $\varepsilon$ perhaps smaller ensures that $r_2<\varpi_2-\sqrt{\varpi_2^2-Q_2^2}$.
\end{proof}

Using this, we can show that Ori's model exhibits extremal critical collapse. Compare the following theorem with \cref{thm:main} and refer to \cref{fig:bounce-3} for Penrose diagrams.

\begin{figure}
\centering{
\def\svgwidth{30pc}
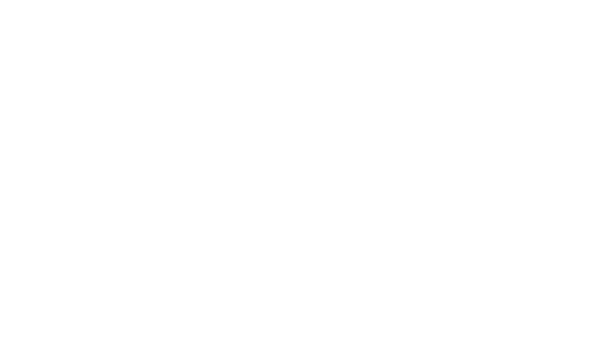}
\caption{Penrose diagrams of extremal critical collapse in Ori's bouncing charged null dust model. Compare with \cref{fig:bounce-2}. In \cref{thm:ECC-Ori}, $\lambda_*=1$.}\label{fig:bounce-3}
\end{figure}

\begin{thm}
\label{thm:ECC-Ori}
For any $M>0$ and fundamental charge $\mathfrak e\in\Bbb R\setminus\{0\}$, there exist a small parameter $
\delta  >0$ and a smooth two-parameter family of regular center parameters $\{\alpha_{\lambda,M'}\}\subset\mathcal P_\Gamma$ for $\lambda\in(0,2]$ and $M'\in[M-\delta,M+\delta]$
such that the two-parameter family of bouncing charged null dust spacetimes $\{ \mathcal D_{\lambda,M'}\}$, obtained by applying \cref{prop:radial-parametrization} to $\varsigma(\alpha_{\lambda,M'})$, has the following properties:

\begin{enumerate}
    
\item For $0<\lambda<1$, $\mathcal D_{\lambda,M'}$ is isometric to Minkowski space for all sufficiently late retarded times $u$ and hence future causally geodesically complete. In particular, it \ul{does not contain a black hole or naked singularity}, and for $\lambda<1$ sufficiently close to $1$, sufficiently large advanced times $v\ge v_0$ and sufficiently small retarded times $u \leq u_0$, the spacetime is isometric to an appropriate causal diamond in a \underline{superextremal} Reissner--Nordstr\"om solution. Moreover,  $\mathcal D_{\lambda,M'}$ converges smoothly to Minkowski space as $\lambda \to 0$.

     \item $\lambda=1$ is critical: $\mathcal D_{1,M'}$ contains a nonempty black hole region $\mathcal{BH}$ and for sufficiently large advanced times $v\ge v_0$, the domain of outer communication, including the event horizon $\mathcal H^+$, is isometric to that of an \underline{extremal} Reissner--Nordstr\"om solution of mass $M'$. The spacetime contains \ul{no trapped surfaces}. 
     
\item For $1<\lambda\leq 2$, $\mathcal D_{\lambda,M'}$ contains a nonempty black hole region $\mathcal{BH}$ and for sufficiently large advanced times $v\ge v_0$, the domain of outer communication, including the event horizon $\mathcal H^+$, is isometric to that of a  \underline{subextremal} Reissner--Nordstr\"om solution. The spacetime contains an \ul{open set of trapped surfaces}. 
   \end{enumerate}
In addition, for every $\lambda\in [0,2]$, $\mathcal D_{\lambda,M'}$ is isometric to Minkowski space for sufficiently early advanced time and near the center $\{r=0\}$ for all time, and possesses complete null infinities $\mathcal I^+$ and $\mathcal I^-$.
\end{thm}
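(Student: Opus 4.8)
The plan is to directly prescribe the terminal Reissner--Nordström parameters of the bouncing dust spacetime as explicit smooth functions of $(\lambda,M')$ and then invoke the machinery of \cref{sec:radial}. Fix $\mathfrak e>0$ (the case $\mathfrak e<0$ being identical after reversing signs) and a constant $\delta\in(0,M)$. Choose smooth functions $\varpi_2=\varpi_2(\lambda,M')$ and $Q_2=Q_2(\lambda,M')$ on $(0,2]\times[M-\delta,M+\delta]$ with $\varpi_2(1,M')=Q_2(1,M')=M'$, with the parameter ratio $Q_2/\varpi_2$ strictly decreasing in $\lambda$, equal to $1$ at $\lambda=1$, and with $\varpi_2,Q_2\to 0$ as $\lambda\to 0^+$ --- for instance $\varpi_2=\lambda M'$ and $Q_2=\tfrac12\lambda(3-\lambda)M'$. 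Applying \cref{lem:existence-P} (whose auxiliary parameter $\varepsilon$ may be taken as a smooth function of $(\varpi_2,Q_2)$, and which moreover permits the choice $r_2<\varpi_2-\sqrt{\varpi_2^2-Q_2^2}$ when $Q_2\le\varpi_2$) produces radii $0<r_1<r_2$, smooth in $(\lambda,M')$, so that $\alpha_{\lambda,M'}\doteq(r_1,r_2,0,\varpi_2,0,Q_2)\in\mathcal P_\Gamma$.

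Feeding $\varsigma(\alpha_{\lambda,M'})\in\mathfrak V$ (the section from \cref{prop:KU23c-existence}) into \cref{prop:radial-parametrization} yields, smoothly in $(\lambda,M')$, the bouncing charged null dust spacetime $\mathcal D_{\lambda,M'}$. By construction it is assembled from: a Minkowski region to the past and ``left'' of the beam, carrying a regular center $\{r=0\}$ which the beam never reaches since the bounce radius satisfies $r_\b\ge r_1>0$; an ingoing charged Vaidya piece carrying the dust from $\mathcal I^-$ over a finite window of advanced time; the spacelike, untrapped, totally geodesic bounce hypersurface $\Sigma_\b$ with edges at radii $r_1$ (Minkowski side) and $r_2$ (Reissner--Nordström side); an outgoing charged Vaidya piece (the time-reflection of the ingoing one) reaching $\mathcal I^+$ over a finite window of retarded time; and a Reissner--Nordström region with parameters $(\varpi_2,Q_2)$ to the future and ``right.'' Completeness of $\mathcal I^\pm$, the Minkowskian structure at early advanced time and permanently near the center, and the smooth degeneration $\mathcal D_{\lambda,M'}\to$ Minkowski as $\lambda\to 0$ are immediate consequences of this explicit description, with global structure as in \cref{fig:bounce-3}.

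It then remains to read off properties 1--3 from the causal structure of the terminal Reissner--Nordström piece, since $\partial_vr>0$, $\partial_ur<0$ throughout the beam (untrappedness being \eqref{eq:dust-untrapped}) and the ``left'' region is Minkowski, so trapped symmetry spheres can occur only inside a trapped region of that piece. For $\lambda<1$ the piece is superextremal ($\underline D>0$, no horizon), hence $J^-(\mathcal I^+)=\mathcal D_{\lambda,M'}$, the development is future causally geodesically complete and free of trapped surfaces, the causal diamond between the two beam segments is a superextremal Reissner--Nordström diamond (genuinely near-extremal when $\lambda$ is near $1$), and for sufficiently late retarded time --- beyond the outgoing beam --- the spacetime is Minkowski. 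For $\lambda=1$ the piece is extremal Reissner--Nordström of mass $M'$: for $v\ge v_0$ its domain of outer communication, including $\mathcal H^+=\{r=M'\}$, appears, there are no trapped surfaces, and $\mathcal{BH}\ne\emptyset$. For $\lambda>1$ the piece is subextremal Reissner--Nordström, whose region $\{r_-<r<r_+\}$ between the inner and outer horizon radii (nonempty since $r_2<r_-$) contributes an open set of trapped surfaces and a nonempty black hole region, with domain of outer communication and $\mathcal H^+$ appearing for $v\ge v_0$. In particular $\lambda_*=\sup\{\lambda:\mathcal D_{\lambda,M'}\in\mathfrak C\}=1$.

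The main obstacle is the global causal analysis underlying the previous two paragraphs: checking that the explicitly patched pieces assemble into spacetimes with the Penrose diagrams of \cref{fig:bounce-3} uniformly across the three regimes of the parameter ratio --- in particular that the outgoing Vaidya beam emerges into, and the event horizon lies within, the Reissner--Nordström region at late $v$, that $\mathcal I^\pm$ are complete, and that trapped surfaces appear precisely in the subextremal case. This is ``soft'' in that every constituent metric is explicit, but the matching across $\Sigma_\b$ and across the boundaries of the dust support must be tracked carefully. The known ill-posedness of charged null dust across $\Sigma_\b$ (cf.\ \cref{sec:shortcomings}) is not a real obstruction here, since the totally geodesic $\Sigma_\b$ provided by $\varsigma$ renders the outgoing continuation canonical; this is also why the construction is tailored to motivate the Vlasov data of \cref{sec:Vlasov-beams}.
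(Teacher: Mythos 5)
There is a genuine gap, and it lies in your choice of the curve $\lambda\mapsto(\varpi_2,Q_2)$ together with the $\lambda$-dependent radii. With $\varpi_2=\lambda M'$ and $Q_2=\tfrac12\lambda(3-\lambda)M'$, the ratio $Q_2^2/\varpi_2=\tfrac14\lambda(3-\lambda)^2M'\to 0$ as $\lambda\to 0^+$, so condition \eqref{eq:P-3-intro} (which with $\varpi_1=Q_1=0$ reads $2r_1<Q_2^2/\varpi_2<2r_2$) cannot hold for any \emph{fixed} $0<r_1<r_2$ down to $\lambda=0$; you are therefore forced to let $r_1,r_2$ vary, and the construction of \cref{lem:existence-P} gives $r_{1,2}=Q_2\bigl(\tfrac{Q_2}{2\varpi_2}\pm\varepsilon\bigr)\sim Q_2\to 0$. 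The beam then concentrates at the center as $\lambda\to 0$: on the shrinking interval $[r_1,r_2]$ of length $2\varepsilon Q_2$ one has $\check Q'\sim\varepsilon^{-1}$ and $\check\varpi'=\check Q\check Q'/r\sim(\varepsilon q)^{-1}$ with $q=(3-\lambda)/4$, both of order one uniformly in $\lambda$. Consequently $\dot\varpi_{\mathrm{in}}=\check\varpi'\,D=O(1)$ on the dust support, and at any fixed radius $R$ the Vaidya energy density and $\partial_V g_{VV}$ are of size $R^{-2}$ and $R^{-1}$ respectively on an advanced-time slab of width $O(Q_2)$ adjacent to $V=0$. So $\mathcal D_{\lambda,M'}$ converges to Minkowski at best uniformly, \emph{not} in $C^1$ (let alone $C^\infty$) on any fixed neighborhood meeting the beam, contradicting the assertion in item 1 of the theorem, which you declare to be an ``immediate consequence of the explicit description.'' This smooth degeneration as $\lambda\to 0$ is not decorative: it is exactly what the Vlasov construction later needs (fixed $r_1,r_2$ and seed functions turning off smoothly). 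A secondary, more minor, gap is that the smooth dependence of your $(r_1,r_2)$ on $(\lambda,M')$, including the uniform choice of $\varepsilon$ and of $r_2<r_-$ across $\lambda\in[1,2]$, is asserted rather than argued.

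The paper's proof avoids all of this by a different scaling: it sets $\alpha_{\lambda,M'}=(r_1,r_2,0,\lambda^2M',0,\lambda M')$ with $r_1<r_2<r_-(4M,2M)$ \emph{fixed once and for all} via \cref{lem:existence-P} at $(\lambda,M')=(2,M)$. Because $Q_2^2/\varpi_2=M'$ is then independent of $\lambda$, condition \eqref{eq:P-3-intro} is scale-invariant in $\lambda$ and \eqref{eq:P-4-intro} only improves as $\lambda$ decreases, so the fixed radii lie in $\mathcal P_\Gamma$ for all $\lambda\in(0,2]$ and $|M'-M|\le\delta$; the ratio is $Q_2/\varpi_2=1/\lambda$, giving the three regimes with $\lambda_*=1$. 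Smooth convergence to Minkowski is then not automatic either, but follows from inspecting the section $\varsigma$ of \cref{prop:KU23c-existence}: the parameter $\xi(\alpha_{\lambda,M'})$ solving $\Pi(\xi,\alpha)=\varpi_2$ is independent of $\lambda$ (the $\lambda^2$ cancels), so $\check Q(r)=\lambda M'\psi(\cdot,\xi)$ and hence $\check\varpi$ converge smoothly to zero on the fixed interval, and the Vaidya pieces converge smoothly to Minkowski. If you want to keep your scheme, you must (i) choose $(\varpi_2,Q_2)$ with $Q_2^2\asymp\varpi_2$ uniformly in $\lambda$ (e.g.\ the paper's homogeneous choice) so that fixed radii work, and (ii) actually verify the smooth $\lambda\to 0$ behaviour of the output of $\varsigma$ rather than asserting it; the remainder of your reading of the global structure from \cref{prop:radial-parametrization} and the attached Reissner--Nordstr\"om pieces is in line with the paper.
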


\begin{proof} Using \cref{lem:existence-P}, choose $0<r_1<r_2<r_-(4M,2M)$ such that $(r_1,r_2,0,4M,0,2M)\in\mathcal P_\Gamma$. We  consider  
\begin{equation}\alpha_{\lambda,M'}\doteq (r_1,r_2,0,\lambda^2 M',0,\lambda M')\label{eq:alpha-lambda-M'}\end{equation}
and note that $\alpha_{\lambda,M'}$ lies in $\mathcal P_\Gamma$ for $|\lambda -2|$ sufficiently small and $|M-M'|\leq \delta $ sufficiently small by the openness of the conditions defining $\mathcal P_\Gamma$.  Moreover, from  the scaling properties of \eqref{eq:P-3-intro} and the monotonicity of \eqref{eq:P-4-intro}, we observe that $\alpha_{\lambda,M'}\in \mathcal P_{\Gamma}$ for all $0<\lambda \leq2$ and $|M-M'|\leq \delta $. 

After applying \cref{prop:radial-parametrization} to $\varsigma(\alpha_{\lambda,M'})$ for $\lambda>0$, it remains only to show that $\mathcal D_{\lambda, M'}$ extends smoothly to Minkowski space as $\lambda \to 0$. Indeed, a direct inspection of the proof of \cref{prop:KU23c-existence} shows that $\xi(\alpha_{\lambda,M'})$ is independent of $\lambda$, so that the function $r\mapsto \check Q(r,\xi(\alpha_{\lambda,M'}),\alpha_{\lambda,M'})$ defined in \eqref{eq:seed-data-lemma} converges smoothly to the function $\check Q\equiv 0$ as $\lambda\to 0$. Therefore, $\check\varpi$ also converges smoothly to the zero function and hence $\mathcal D_{\lambda, M'}$ converges smoothly to Minkowski space as $\lambda\to 0$. \end{proof}

The construction in the proof of \cref{thm:main} can be thought of as a global-in-time desingularization of this family of dust solutions. In fact, we will make essential use of the one-parameter family $\{\varsigma(\alpha_{\lambda,M'})\}$ when constructing initial data for the Einstein--Maxwell--Vlasov system.

\subsection{A counterexample to the third law of black hole thermodynamics in Ori's model}\label{sec:third-law-dust}

 \begin{figure}
\centering{
\def\svgwidth{14pc}
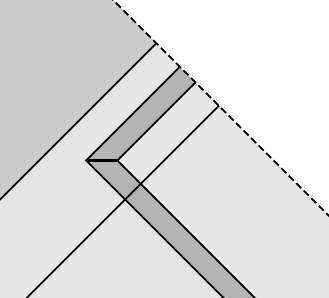}
\caption{Penrose diagram of a counterexample to the third law of black hole thermodynamics in Ori's charged null dust model, \cref{thm:third-law-dust}. Note that the bounce $\Sigma_\mathrm{b}$ lies behind the extremal event horizon $\mathcal H^+$ since $r_2<\varpi_2$ and $\varpi_2$ is the area-radius of the extremal horizon. The broken curve $\mathcal A'$ is the outermost apparent horizon of the spacetime. The disconnectedness of $\mathcal A'$ is necessary in third law violating spacetimes---see the discussion in Section 1.4.3 of \cite{KU22}. A crucial feature of this counterexample is that $\Sigma_\mathrm{b}$ lies strictly between the (initially) subextremal apparent horizon and the (eventually) extremal event horizon. Compare with \cref{fig:third-law-Vlasov}.}
\label{fig:third-law-dust}
\end{figure}

Using the radial parametrization, we can now give a very simple disproof of the third law:

\begin{thm}
\label{thm:third-law-dust}
There exist bouncing charged null dust spacetimes that violate the third law of black hole thermodynamics: a subextremal Reissner--Nordstr\"om apparent horizon can evolve into an extremal Reissner--Nordstr\"om event horizon in finite advanced time due to the incidence of charged null dust. 
\end{thm}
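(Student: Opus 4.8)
\emph{Proof proposal.} The plan is to realize the desired spacetime directly as a single bouncing charged null dust solution produced by \cref{prop:radial-parametrization}, choosing the admissible parameters $\alpha=(r_1,r_2,\varpi_1,\varpi_2,Q_1,Q_2)\in\mathcal P$ so that the Reissner--Nordström region to the causal past of the dust beam is \emph{subextremal} while the region to its future is \emph{extremal}. Recall from \cref{prop:radial-parametrization} that the left edge of the bounce hypersurface $\Sigma_{\mathrm b}$ has area-radius $r_1$ and Reissner--Nordström parameters $(\varpi_1,Q_1)$, the right edge has area-radius $r_2$ and parameters $(\varpi_2,Q_2)$, the Hawking mass is nonnegative on $\Sigma_{\mathrm b}$, and $\Sigma_{\mathrm b}$ is spacelike, untrapped and totally geodesic. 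A subextremal Reissner--Nordström black hole of parameters $(\varpi_1,Q_1)$ has apparent horizon at area-radius $r_+(\varpi_1,Q_1)\doteq\varpi_1+\sqrt{\varpi_1^2-Q_1^2}$ and an open trapped region $\{r_-(\varpi_1,Q_1)<r<r_+(\varpi_1,Q_1)\}$, whereas the extremal Reissner--Nordström black hole of parameters $(\varpi_2,\varpi_2)$ has event horizon at area-radius $\varpi_2$. Hence the configuration of \cref{fig:third-law-dust} is obtained precisely when
\[
 r_+(\varpi_1,Q_1)\;<\;r_1\;<\;r_2\;<\;\varpi_2,\qquad 0<Q_1<\varpi_1,\qquad Q_2=\varpi_2,
\]
the first chain of inequalities placing $\Sigma_{\mathrm b}$ strictly between the subextremal apparent horizon and the extremal event horizon; this strict separation — and the resulting disconnectedness of the outermost apparent horizon it will produce — is exactly the feature that is unavoidable in any third-law-violating spacetime, cf.\ \cite[Section~1.4.3]{KU22}.

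Such parameters are easy to produce. Starting from the regular-center parameters $(r_1,r_2,0,M,0,M)\in\mathcal P_\Gamma$ given by \cref{lem:existence-P} (which already satisfy $r_2<\varpi_2=M$; these are the critical parameters $\alpha_{1,M}$ of \cref{thm:ECC-Ori}), I would deform the two ``mass'' slots to small positive values, say $(\varpi_1,Q_1)=(2\epsilon,\epsilon)$. For $\epsilon>0$ small the strict, open conditions \eqref{eq:P-1-intro}, \eqref{eq:P-3-intro}, \eqref{eq:P-4-intro} persist, \eqref{eq:P-2-intro} becomes the strict inequality $2r_1\varpi_1>Q_1^2$, one has $r_+(\varpi_1,Q_1)=(2+\sqrt3)\epsilon\to 0<r_1$, and $(\varpi_1,Q_1)$ is subextremal while $(\varpi_2,Q_2)=(M,M)$ is extremal. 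Applying the section $\varsigma$ of \cref{prop:KU23c-existence} and then \cref{prop:radial-parametrization} to $\varsigma(\alpha)$ produces a bouncing charged null dust spacetime $\mathcal D$ with a totally geodesic, spacelike, untrapped bounce $\Sigma_{\mathrm b}$, subextremal Reissner--Nordström $(\varpi_1,Q_1)$ to the causal past/left of the dust beam, and extremal Reissner--Nordström $(\varpi_2,\varpi_2)$ to its future/right. (Should a regular center be wanted, one simply precedes this beam by a further bouncing beam of regular-center type, produced by \cref{lem:existence-P} with subextremal final parameters $(\varpi_1,Q_1)$, using the multi-beam generalization of \cref{def:Oris-model}; this is not required for the statement.)

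It then remains to read off the global causal structure, which is completely explicit because $\mathcal D$ is assembled from ingoing and outgoing charged Vaidya metrics glued to Reissner--Nordström exteriors. The subextremal region contains a portion of its apparent horizon $\{r=r_+(\varpi_1,Q_1)\}$ and the open trapped set behind it; since $r_2<\varpi_2$, all of $\Sigma_{\mathrm b}$ and all of the dust lie strictly inside the black hole region $\mathcal{BH}=\mathcal M\setminus J^-(\mathcal I^+)$; and for all sufficiently late advanced times the domain of outer communication, including the event horizon $\mathcal H^+=\{r=\varpi_2\}$, is isometric to that of extremal Reissner--Nordström of mass $\varpi_2$. Tracking the locus of marginally trapped spheres across the ingoing Vaidya region — where it is governed by $r_+(\check\varpi(r),\check Q(r))$ together with the cutoff $r\ge r_{\mathrm b}$ built into Ori's model — shows that the outermost apparent horizon $\mathcal A'$ has at least two components, one coinciding in part with the subextremal apparent horizon and the last coinciding with the extremal event horizon, separated by the untrapped neighborhood of $\Sigma_{\mathrm b}$. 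Finally, since the dust occupies a bounded interval of advanced time, of length $\int_{r_1}^{r_2}D(r)^{-1}\,dr<\infty$ with $D(r)=1-\tfrac{2\check\varpi(r)}{r}+\tfrac{\check Q^2(r)}{r^2}>0$ on $[r_1,r_2]$ by \eqref{eq:P-4-intro}, the transition from the subextremal apparent horizon to the extremal event horizon happens in finite advanced time, and completeness of $\mathcal I^{\pm}$ is inherited from the Reissner--Nordström exteriors.

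The only genuine constraint — and so the ``hard part'' — is arranging the two separations $r_+(\varpi_1,Q_1)<r_1$ and $r_2<\varpi_2$ simultaneously with $\alpha\in\mathcal P$, i.e.\ fitting the bounce strictly between the two horizons; here this reduces to a one-line continuity argument off the regular-center case of \cref{lem:existence-P}, but it is precisely the structural mechanism that disconnects the outermost apparent horizon and thereby circumvents the obstruction to third-law violation. Everything else is a bookkeeping exercise with the explicitly glued Vaidya/Reissner--Nordström pieces.
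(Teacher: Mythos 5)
Your proposal is correct and follows essentially the same route as the paper: one simply picks admissible parameters $\alpha\in\mathcal P$ with $Q_1<\varpi_1$, $Q_2=\varpi_2$ and $r_2<\varpi_2$, and applies \cref{prop:KU23c-existence,prop:radial-parametrization}. The only cosmetic difference is that the paper certifies the existence of such $\alpha$ by exhibiting the explicit point $(0.85,0.88,0.56,1,0.5,1)\in\mathcal P$, whereas you obtain one by perturbing the regular-center parameters of \cref{lem:existence-P} to $(\varpi_1,Q_1)=(2\epsilon,\epsilon)$; both work, since the defining conditions of $\mathcal P$ that must remain valid are open (or become strict) under your perturbation.
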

\begin{proof}
 Apply \cref{prop:radial-parametrization,prop:KU23c-existence} to $(r_1,r_2,\varpi_1,\varpi_2,Q_1,Q_2)\in\mathcal P$ satisfying $r_2<\varpi_2$, $Q_1<\varpi_1$, and $Q_2=\varpi_2$. For example, one may take $(0.85,0.88,0.56,1,0.5,1)\in\mathcal P$. See \cref{fig:third-law-dust}.
\end{proof}

\begin{rk} Since the energy-momentum tensor remains bounded in Ori's model and the weak energy condition is satisfied, this is indeed a counterexample to Israel's formulation of the third law \cite{Israel-third-law}.
\end{rk}

The counterexample in \cref{thm:third-law-dust} explicitly displays the disconnectedness of the outermost apparent horizon which is also present in our charged scalar field counterexamples to the third law \cite{KU22}. Note that the bouncing dust beam does not cross the subextremal apparent horizon, as is required by \eqref{eq:regular-bounce}.

\begin{rk}
    In the example depicted in \cref{fig:third-law-dust}, the parameters $\check \varpi$ and $\check Q$ satisfy $\check\varpi(r)<\check Q(r)$ for $r\in (r_2-\ve,r_2)$ and some $\ve>0$. Indeed, the ODE \eqref{eq:V-2-intro} implies 
    \begin{equation*}
       \check Q'(r)=\frac{r}{\check Q(r)}\check \varpi'(r)<\check \varpi'(r)
    \end{equation*}
    near $r_2$, where we have used $r_2<Q_2$. The possibility (in fact, apparent inevitability) of the Vaidya parameters being superextremal right before extremality is reached seems to have been overlooked in the literature \cite{Sullivan-Israel, fairoos2017massless}.\footnote{The paper \cite{fairoos2017massless} reexamines the third law in light of Ori's paper \cite{Ori91}, but always makes the assumption that the parameters satisfy $Q(V)<\varpi(V)$ right before extremality. Therefore, they seemingly reaffirm the third law!}
\end{rk}

\begin{rk}
    If one applies the old ``standard interpretation'' of the ingoing Vaidya metric from \cite{Sullivan-Israel,lake1991structure} to the seed functions $\varpi_\mathrm{in}(V)$ and $Q_\mathrm{in}(V)$ constructed in the proof of \cref{thm:third-law-dust}, one sees that the beam will hit the subextremal apparent horizon with a negative energy density, which is consistent with \cite{Sullivan-Israel}.
\end{rk}

    Using methods from the proof of \cref{thm:main}, the dust spacetimes in \cref{thm:third-law-dust} can be ``desingularized'' to smooth Einstein--Maxwell--Vlasov solutions. The desingularized solutions can also be chosen to have the property that the matter remains strictly between the subextremal apparent horizon and the event horizon and we refer back to \cref{sec:EMV-third-law}.

\subsection{Issues with the bouncing charged null dust model}\label{sec:shortcomings}

While \cref{prop:radial-parametrization} allows us to construct these interesting examples, the bouncing charged null dust model is unsatisfactory and we should seek to replace it for several reasons:
\begin{enumerate}
    \item The model does not arise as a well-posed initial value problem for a system of PDEs. Pasting the ingoing and outgoing Vaidya solutions together is a deliberate surgery procedure that only works for seed functions $\varpi_\mathrm{in}$ and $Q_\mathrm{in}$ satisfying several nontrivial and nongeneric conditions. 
    \item The solutions are generally not smooth along $\Sigma_\mathrm b$, nor along any cone where $Q=0$ (recall \cref{rk:Mink-blowup}). The fluid density $\rho$ is unbounded along $\Sigma_\b$ and the number current $N=\rho k$ is discontinuous across $\Sigma_\mathrm b$.
      \item Null dust is ill-posed once the dust reaches the center of symmetry \cite{Moschidis2017-lx}.
\end{enumerate}

 Nevertheless, we will show in the course of this paper that the bouncing charged null dust model can be well-approximated (near $\Sigma_\mathrm b$), in a precise manner, by smooth solutions of the Einstein--Maxwell--Vlasov system. See already \cref{sec:weak-proof}.

\subsection{The formal radial charged null dust system in double null gauge}\label{sec:outgoing-dust-equations}

In order to precisely phrase the manner in which Einstein--Maxwell--Vlasov approximates bouncing charged null dust, as well as to motivate the choice of Vlasov initial data, we now reformulate Ori's  model in double null gauge. Following Moschidis \cite{Moschidis2017-lx,moschidis2020proof}, we reformulate the system by treating $N$ and $T$ as the fundamental variables. By eliminating the fluid variables $k$ and $\rho$, we can view the ingoing and outgoing phases as two separate well-posed initial value problems, with data posed along the bounce hypersurface. This helpfully suppresses the issue of blowup of $\rho$ on $\Sigma_\b$.

\begin{defn}\label{def:outgoing-dust-equations}
The \emph{spherically symmetric formal outgoing charged null dust} model for particles of fundamental charge $\mathfrak e\in\Bbb R\setminus\{0\}$ consists of a smooth spherically symmetric charged spacetime $(\mathcal Q,r,\Omega^2,Q)$ and two nonnegative smooth functions $N^v$ and $T^{vv}$ on $\mathcal Q.$

The system satisfies the wave equations
    \begin{align}
 \label{eq:r-wave-dust} \partial_u\partial_v r&= - \frac{\Omega^2}{2r^2}\left(m-\frac{Q^2}{2r}\right), \\
\label{eq:Omega-wave-dust}    \partial_u\partial_v \!\log\Omega^2&=\frac{\Omega^2m}{r^3}-\frac{\Omega^2Q^2}{r^4},
\end{align}
the Raychaudhuri equations 
\begin{align}
\label{eq:Ray-u-dust}  \partial_u\left(\frac{\partial_ur}{\Omega^2}\right)  &=-\tfrac 14r\Omega^2T^{vv},\\
\label{eq:Ray-v-dust}  \partial_v\left(\frac{\partial_vr}{\Omega^2}\right)  &=0,
\end{align}
and the Maxwell equations
\begin{align}
\label{eq:Max-u-dust}\partial_u Q&=-\tfrac 12\mathfrak e r^2\Omega^2N^v,\\
\label{eq:Max-v-dust}\partial_v Q&=0.
\end{align}
The number current satisfies the conservation law
\begin{equation}
    \label{eq:dust-conservation-1} \partial_v(r^2\Omega^2N^v)=0
\end{equation}
and the energy-momentum tensor satisfies the Bianchi equation
\begin{equation}
    \label{eq:dust-conservation-2} \partial_v(r^2\Omega^4T^{vv})= +\mathfrak e\Omega^4 QN^v.
\end{equation}
\end{defn}

In the outgoing model, we may think of $N^u$, $T^{uu}$, and $T^{uv}$ to just be defined as identically zero. From \eqref{eq:r-wave-dust} and \eqref{eq:Ray-u-dust}--\eqref{eq:Max-v-dust} one easily derives 
\begin{equation}
    \partial_um=-\tfrac 12 r^2\Omega^2 T^{vv} \partial_vr+\frac{Q^2}{2r^2}\partial_ur,\quad \partial_vm=\frac{Q^2}{2r^2}\partial_vr,
\end{equation}
\begin{equation}
    \partial_u\varpi = -\tfrac 12 r^2\Omega^2T^{vv}\partial_vr-\tfrac 12\mathfrak e r\Omega^2QN^v,\quad\partial_v\varpi=0.
\end{equation}
Furthermore, if we set $k^v\doteq T^{vv}/N^v$, then
\begin{equation}
   k^v \partial_v k^v +\partial_v{\log\Omega^2}(k^v)^2 = +\mathfrak e\frac{Q}{r^2}k^v,\label{eq:Euler-2}
\end{equation}
which is the spherically symmetric version of \eqref{eq:intro-dust-eq-3}  for the vector field $k\doteq k^v\partial_v$. The energy density of the fluid is defined by $\rho\doteq (N^v)^2/T^{vv}$ whenever the denominator is nonvanishing. 

\begin{defn}
    The \emph{spherically symmetric formal ingoing charged null dust} model for particles of fundamental charge $\mathfrak e\in\Bbb R\setminus\{0\}$ consists of a smooth spherically symmetric charged spacetime $(\mathcal Q,r,\Omega^2,Q)$ and two nonnegative smooth functions $N^u$ and $T^{uu}$ on $\mathcal Q$. The system satisfies the same equations as the ingoing system with $u\leftrightarrow v$ and the opposite sign in front of $N^u$.
\end{defn}

In the ingoing case, $k^u\doteq T^{uu}/N^u$ and $\rho\doteq (N^u)^2/T^{uu}$.

\begin{rk}
   By \eqref{eq:Euler-2}, these formal systems define solutions of the Einstein--Maxwell--charged null dust system (see \cref{def:charged-null-dust}) whenever $k$ and $\rho$ are well-defined. 
\end{rk}

\begin{rk}
    Inspection of \eqref{eq:dust-conservation-2} reveals that $T^{vv}$ can reach zero in finite backwards time. If one were to continue the solution further, $T^{vv}$ could become negative, which shows that the formal system actually reproduces the old ``standard interpretation'' of the charged Vaidya metric discussed in \cite{Ori91}. As we will see, because the dominant energy condition holds in the Einstein--Maxwell--Vlasov model, only dust solutions with $T^{uu},T^{vv}\ge 0$ will arise as limiting spacetimes, confirming Ori's heuristic picture discussed in \cite{Ori91}.
\end{rk}

\subsubsection{The Cauchy problem for outgoing formal charged null dust}

Mirroring the treatment of time-symmetric\footnote{In the Vlasov case, time symmetry referred to both the geometry of the spacelike part of the initial data hypersurface and the matter configuration. Since purely outgoing dust is clearly not time symmetric, it refers here only to the geometry of the spacelike part of the initial data hypersurface.} seed data for the Einstein--Maxwell--Vlasov system in \cref{sec:general-data}, we make the following definition:
\begin{defn}
    A \emph{time-symmetric seed data set} $\mathcal S_\d\doteq (\mathring{\mathcal N}^v,\mathring{\mathcal T}^{vv},r_2,\mathfrak e)$ for the spherically symmetric formal outgoing charged null dust system consists of real numbers $r_2\in \Bbb R_{>0}$ and $\mathfrak e\in\Bbb R\setminus\{0\}$, together with nonnegative compactly supported smooth functions $\mathring{\mathcal N}^v$ and $\mathring{\mathcal T}^{vv}$ with support contained in $(0,r_2]$.
\end{defn}

In the dust case, we define $\mathring m$ and $\mathring Q$ on $[0,r_2]$ with $\mathring m (0) = \mathring Q (0) =0$ by solving 
\begin{align}
 \label{eq:dust-constraint-1}   \frac{d}{dr}\mathring m&=\frac{r^2}{4}\left(1-\frac{2\mathring m}{r}\right)^{-2}\mathring{\mathcal T}^{vv}+\frac{\mathring Q^2}{2r^2},\\
  \frac{d}{dr}\mathring Q  &=\frac 12\mathfrak er^2\left(1-\frac{2\mathring m}{r}\right)^{-2}\mathring{\mathcal N}^v,\label{eq:dust-constraint-2}
\end{align}
provided $2\mathring m<r$ on $[0,r_2]$. The remaining definitions from the Vlasov case, in particular \cref{def:normalized-defn}, can be carried over to dust with the obvious modification that $N^v=T^{vv}=0$ along $\Gamma$.\footnote{Since the dust here is purely outgoing, we do not have to be concerned about dust going into $\Gamma$.}

\begin{prop}\label{prop:outgoing-dust-existence}
    Let $\mathcal S_\d$ be an untrapped time-symmetric seed data set for outgoing dust. Then there exists a unique global smooth solution of the formal outgoing charged null dust system $(r,\Omega^2,Q,N^v,T^{vv})$ on $\mathcal C_{r_2}$ attaining the seed data. 
\end{prop}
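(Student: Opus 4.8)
The strategy is to recognize that the formal outgoing charged null dust system, after eliminating the fluid variables $k^v$ and $\rho$ as in \cref{def:outgoing-dust-equations}, is a \emph{global} characteristic-type problem: because the system is purely outgoing, $N^v$ satisfies a pure transport equation in $v$ (via \eqref{eq:dust-conservation-1}), $T^{vv}$ satisfies \eqref{eq:dust-conservation-2}, and $\partial_vr/\Omega^2$, $Q$ are constant along the $v$-direction by \eqref{eq:Ray-v-dust} and \eqref{eq:Max-v-dust}. So the only genuinely dynamical objects in the $u$-direction are $r$, $\Omega^2$, $Q$, $T^{vv}$ through \eqref{eq:r-wave-dust}, \eqref{eq:Omega-wave-dust}, \eqref{eq:Ray-u-dust}, \eqref{eq:Max-u-dust}, and these couple to the matter only through the already-determined $v$-behavior. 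I would first set up the normalized development exactly as in \cref{def:normalized-defn} (with the obvious modifications noted after \eqref{eq:dust-constraint-2}): prescribe $r(-v,v)=v$, $\partial_vr(-v,v)=\tfrac12$, $\partial_ur(-v,v)=-\tfrac12$, $\Omega^2(-v,v)=\mathring\Omega^2(v)=(1-2\mathring m/r)^{-1}$, $Q(-v,v)=\mathring Q(v)$, together with the data on the initial outgoing cone $C_{-r_2}$: $r(-r_2,v)=\tfrac12 r_2+\tfrac12 v$, $\Omega^2=\mathring\Omega^2(r_2)$, $Q=\mathring Q(r_2)$, $N^v=T^{vv}=0$, and impose the regular-center boundary conditions along $\Gamma=\{u=v\}$.

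\textbf{Key steps.} (i) \emph{Local existence and propagation of constraints.} Run the same iteration/fixed-point argument as in the proof of \cref{prop:char-IVP-Vlasov} (deferred to \cref{app:B}) — in fact substantially simpler since there is no momentum-space transport, only the $v$-transport equations \eqref{eq:dust-conservation-1}–\eqref{eq:dust-conservation-2} — to obtain a local smooth solution near the spacelike data segment, and verify that the constraints \eqref{eq:Ray-v-dust} and \eqref{eq:Max-v-dust} propagate off the data using the computation at the end of the proof of \cref{prop:development-1} (taking the $v$-derivative of $\partial_vr(-v,v)=\tfrac12$ and using \eqref{eq:dust-constraint-1}, the analog of \eqref{eq:Omega-derivative-data}). (ii) \emph{The Minkowskian corner.} Exactly as in \cref{rk:Mink-corner} and \cref{lem:Minkowski}, the region $\{v\le r_0\}$ where $\mathring{\mathcal N}^v=\mathring{\mathcal T}^{vv}=0$ is explicitly Minkowski, and also the region near $\Gamma$ is vacuum and handled by the explicit formulas of \cref{lem:Minkowski}; this gives existence and smooth extension up to and including the center. (iii) \emph{Global continuation.} This is where most of the work is. One shows a priori bounds on $\mathcal C_{r_2}$: monotonicity of $Q$ (in fact $Q$ is $v$-independent), boundedness of $\varpi$ (since $\partial_v\varpi=0$ and $\partial_u\varpi\le 0$ because $T^{vv},\,QN^v\ge 0$), and the crucial \emph{non-trapping} estimate. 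Here the explicit structure of the radial parametrization is what one leverages: just as in \eqref{eq:dust-untrapped} of \cref{prop:radial-parametrization}, the conditions built into the untrapped seed guarantee $1-\tfrac{2m}{r}>0$ throughout, so \eqref{eq:Omega-formula} controls $\Omega^2$ in terms of $\partial_ur,\partial_vr$, which in turn are controlled by Raychaudhuri \eqref{eq:Ray-u-dust}–\eqref{eq:Ray-v-dust} and the mass bounds. One then invokes a continuation criterion — the dust analog of the generalized extension principle, but here it is elementary because $N^v$, $T^{vv}$ have an explicit transport representation along outgoing cones and $r$ is bounded below away from $\Gamma$ by Raychaudhuri monotonicity — to extend the solution to all of $\mathcal C_{r_2}$.

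\textbf{Main obstacle.} The delicate point is the global-in-$u$ control, i.e., showing the solution does not trap or that $\Omega^2$ does not degenerate before one exhausts $\mathcal C_{r_2}$. In the Vlasov setting this required the full strength of \cref{prop:ext} and \cref{lem:fundamental-local-estimate}; here the dust equations are simpler, but one must be careful that $T^{vv}$, which by \eqref{eq:dust-conservation-2} can in principle decrease, stays nonnegative on the physical domain — this is exactly the issue flagged in the remark after \cref{def:outgoing-dust-equations}, and it is handled by noting that the seed data are nonnegative and that $\partial_v(r^2\Omega^4T^{vv})=\mathfrak e\Omega^4QN^v\ge 0$ so $T^{vv}$ stays nonnegative in the increasing-$v$ direction along each outgoing cone, which is the direction of the development. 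Given the scalar-field and Vlasov templates already in the paper, all of this is routine but not entirely trivial; I expect the proof to consist of (1) a short local-existence statement citing the method of \cref{app:B}, (2) the constraint-propagation computation copied from \cref{prop:development-1}, and (3) a global continuation argument using the a priori bounds above together with the untrapped hypothesis, closing off at the center via \cref{lem:Minkowski}.
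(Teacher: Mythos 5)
Your proposal is correct in substance but follows a genuinely different route from the paper. You essentially transplant the Vlasov well-posedness scheme: treat $N^v$ and $T^{vv}$ as unknowns evolved by the transport equations \eqref{eq:dust-conservation-1}--\eqref{eq:dust-conservation-2} alongside the wave equations, propagate the constraints \eqref{eq:Ray-v-dust}, \eqref{eq:Max-v-dust} as in \cref{prop:development-1}, and then globalize via a priori bounds and a continuation criterion. The paper instead exploits the special structure of purely outgoing dust: since no matter terms appear in \eqref{eq:r-wave-dust}--\eqref{eq:Omega-wave-dust} and \eqref{eq:Max-v-dust} forces $Q(u,v)=\mathring Q(-u)$, one solves a \emph{closed} system for $(r,\Omega^2,Q)$ alone by a standard iteration, notes that global existence for this reduced system is strictly easier than the Vlasov argument of \cref{prop:approx-1}, and only afterwards \emph{defines} $N^v\doteq -\tfrac{2}{\mathfrak e r^2\Omega^2}\partial_uQ$ and $T^{vv}\doteq -\tfrac{4}{r\Omega^2}\partial_u\bigl(\tfrac{\partial_ur}{\Omega^2}\bigr)$, verifying \eqref{eq:dust-conservation-1}, \eqref{eq:Ray-v-dust} (via the identity $\partial_u(r\partial_v^2r-r\partial_vr\,\partial_v\!\log\Omega^2)=0$), consistency with the seed through \eqref{eq:dust-constraint-1}--\eqref{eq:dust-constraint-2}, and \eqref{eq:dust-conservation-2} as propagated identities. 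The paper's route buys a much lighter argument (no well-posedness theory for the matter transport, no extension principle), which is why the global statement can be dispensed with in one sentence; your route is more systematic and parallels \cref{prop:char-IVP-Vlasov}, at the cost of heavier bookkeeping about which equations are evolution versus constraints. One caution on your globalization sketch: the seed inequality alone does not "guarantee $1-\tfrac{2m}{r}>0$ throughout"; in evolution this follows from the dynamical sign argument ($\partial_vr/\Omega^2$ conserved in $v$ by \eqref{eq:Ray-v-dust} and $\partial_ur/\Omega^2$ monotone by \eqref{eq:Ray-u-dust}, so $\partial_vr>0$, $\partial_ur<0$ persist and $1-\tfrac{2m}{r}=-4\Omega^{-2}\partial_ur\,\partial_vr>0$), which is within reach of your Raychaudhuri remarks but should be stated as such rather than attributed to the seed condition \eqref{eq:dust-untrapped}.
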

\begin{rk}
    Let $r_1\doteq \inf(\spt \mathring{\mathcal N}^{v}\cup \spt \mathring{\mathcal T}^{vv})$. Then $(r,\Omega^2,Q)$ is isometric to Minkowski space for $u\ge -r_1$.
\end{rk}
\begin{proof}
This can be proved by applying a suitable coordinate transformation to a suitable outgoing charged Vaidya metric. However, it is instructive to give a direct proof using the evolution equations.

 We pose initial data
\begin{equation*}
    \mathring r(r)=r, \quad \mathring\Omega^2(r)=\left(1-\frac{2\mathring m}{r}\right)^{-1},\quad  \mathring Q(r)=\int_0^{r}\frac 12\mathfrak er^2\left(1-\frac{2\mathring m}{r}\right)^{-2}\mathring{\mathcal N}^v\,dr',
\end{equation*}
and for derivatives according to \cref{def:normalized-defn}, for the equations  \eqref{eq:r-wave-dust}, \eqref{eq:Omega-wave-dust}, and \eqref{eq:Max-v-dust}. By a standard iteration argument, this determines the functions $(r,\Omega^2,Q)$ uniquely. The existence of a \emph{global} development is strictly easier than the corresponding proof in \cref{prop:approx-1} once the rest of the system has been derived and is omitted.  We now \emph{define}
\begin{equation*}
      N^v\doteq -\frac{2}{\mathfrak e r^2\Omega^2}\partial_uQ ,\quad
    T^{vv}\doteq -\frac{4}{r\Omega^2}\partial_u\left(\frac{\partial_ur}{\Omega^2}\right)
\end{equation*}
and aim to show that the rest of the equations in \cref{def:outgoing-dust-equations} are satisfied. 

To prove \eqref{eq:dust-conservation-1}, simply rearrange the definition of $N^v$ and use \eqref{eq:Max-v-dust}. Note that the definition of $N^v$ is consistent with $\mathring{\mathcal N}^v=\mathring\Omega^{-2}\mathring N^v$ by \eqref{eq:dust-constraint-2}.

Using \eqref{eq:r-wave-dust}, \eqref{eq:Omega-wave-dust}, and \eqref{eq:Max-v-dust}, a tedious calculation yields
\begin{equation}
    \partial_u(r\partial_v^2r-r\partial_vr\partial_v{\log\Omega^2})=0.\label{eq:5-1}
\end{equation}
Arguing as in \cref{prop:development-1}, we see that \eqref{eq:Ray-v-dust} holds on initial data and is therefore propagated by \eqref{eq:5-1}. This proves the evolution equation $\partial_vm= \partial_vrQ^2/(2r^2)$ and by using \eqref{eq:r-wave-dust} once more, we see that
\begin{equation*}
    \partial_um= -2r\partial_vr \partial_u\left(\frac{\partial_ur}{\Omega^2}\right)+\frac{Q^2}{2r^2}\partial_ur.
\end{equation*}
Comparing this with \eqref{eq:dust-constraint-1} and the definition of $T^{vv}$ yields $\mathring{\mathcal T}^{vv}=\mathring\Omega^{-2}\mathring T^{vv}$, as desired. Finally, \eqref{eq:dust-conservation-2} is proved by directly differentiating the definition of $T^{vv}$ and using \eqref{eq:r-wave-dust}, \eqref{eq:Omega-wave-dust}, and \eqref{eq:Max-u-dust}. 
\end{proof}

\subsubsection{Outgoing charged Vaidya as formal outgoing dust}\label{sec:outgoing-Cauchy}

\begin{figure}
\centering{
\def\svgwidth{10pc}
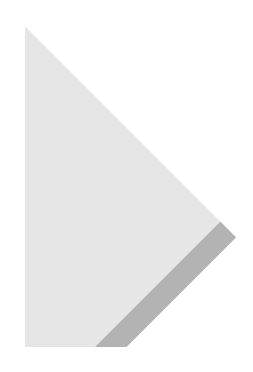}
\caption{An outgoing charged null dust beam obtained by applying \cref{prop:outgoing-dust-existence} to the seed $\mathcal S_{\d,\alpha}$ for parameters $\alpha=(r_1,r_2,0,\varpi_2,0,Q_2)$. The electrovacuum boundary $C_\star$ can be attached to a Reissner--Nordstr\"om spacetime with parameters $\varpi_2$ and $Q_2$.}\label{fig:outgoing-dust}
\end{figure}

We now want to represent the outgoing portion of a regular center bouncing charged null dust beam given by \cref{prop:radial-parametrization} in terms of the outgoing formal system. Let $\alpha\in\mathcal P_\Gamma$, $\varsigma(\alpha)=(\alpha,\check\varpi,\check Q)$ be given by \cref{prop:KU23c-existence}, and consider the time-symmetric dust seed data $\mathcal S_{\mathrm d,\alpha}\doteq (\mathring{\mathcal N}^v_\d,0,r_2,\mathfrak e)$, where\begin{equation}\label{eq:weak-7}
  \mathring{\mathcal N}^v_\d\doteq  \frac{2}{\mathfrak er^2}\left(1-\frac{2\check\varpi}{r}+\frac{\check Q^2}{r^2}\right)^{2}\check Q'.
\end{equation}
For this choice of seed, the constraints \eqref{eq:dust-constraint-1}--\eqref{eq:dust-constraint-2} read
\begin{align}
  \label{eq:dust-constraint-3}  \frac{d}{dr}\mathring m&=\frac{\mathring Q^2}{2r^2},\\
  \label{eq:dust-constraint-4}    \frac{d}{dr}\mathring Q&=\left(1-\frac{2\mathring m}{r}\right)^{-2}\left(1-\frac{2\check m}{r}\right)^2\check Q'.
\end{align}
Therefore, by \eqref{eq:V-2-intro}, $\mathring m = \check m$ and $\mathring Q=\check Q$, where $\check m\doteq \check\varpi - \check Q^2/(2r)$. It follows that the outgoing formal dust solution $(r_\d,\Omega^2_\d,Q_\d,N^v_\d,T^{vv}_\d)$ provided by \cref{prop:outgoing-dust-existence} on $\mathcal C_{r_2}$ is indeed the same as the outgoing charged Vaidya metric provided by the radial parametrization method, \cref{prop:radial-parametrization}. See \cref{fig:outgoing-dust}.

\section{The construction of bouncing charged Vlasov beams and the proof of the main theorem}
\label{sec:Vlasov-beams}

In this section, we prove \cref{thm:main} by constructing \emph{bouncing charged Vlasov beams} as in \cref{fig:bounce-1} and \cref{fig:bounce-2} with prescribed final parameters. This is achieved by a very specific choice of time-symmetric Vlasov seed data and global estimates for the resulting developments. We give a detailed outline of the proof in \cref{sec:proof-outline} and the proof itself occupies \cref{sec:beam-data,sec:data-estimates,sec:near-main,sec:near-aux,sec:far-massive,sec:massive-dispersion,sec:proof-of-prop,sec:patching-ingoing-outgoing}. Finally, in \cref{sec:weak-proof}, we show that these bouncing charged Vlasov beams weak* converge to the bouncing charged null dust spacetimes of \cref{prop:radial-parametrization} in a hydrodynamic limit of the beam parameters.

\subsection{A guide to the proof of \texorpdfstring{\cref{thm:main}}{Theorem 1}}\label{sec:proof-outline}

\subsubsection{The heuristic picture}

The essential idea in the proof of \cref{thm:main} is to ``approximate'' the bouncing radial charged null dust solutions from \cref{thm:ECC-Ori} and \cref{fig:bounce-3} by smooth families of smooth Einstein--Maxwell--Vlasov solutions. Indeed, at least formally, dust can be viewed as Vlasov matter $f(x,p)$ concentrated on a single momentum $p=k(x)$ at each spacetime point $x$. One is faced with having to perform a \emph{global-in-time} desingularization of families of dust solutions which are singular in both the space and momentum variables.

Assuming that this can be done, the heuristic picture is that of a focusing beam of Vlasov matter coming in from infinity with particles of mass $\mathfrak m=0$ or $\mathfrak m\ll 1$ (so that the particles look almost massless for very large time scales) and very small angular momentum $0<\ell\ll 1$, which are decelerated by the electromagnetic field that they generate. Then, along some ``approximate bounce hypersurface,'' the congruence smoothly ``turns around'' and becomes outgoing, escaping to infinity if a black hole has not yet formed. Along the way, the particles do not hit the center of symmetry. By appropriately varying the beam parameters, we can construct families of spacetimes as depicted in \cref{fig:bounce-1} or \cref{fig:bounce-2}. 

As should be apparent from the treatment of the Cauchy problem for the Einstein--Maxwell--Vlasov system in \cref{sec:general-data} and for charged null dust in \cref{sec:outgoing-dust-equations}, we want to pose Cauchy data on (what will be) the approximate bounce hypersurface for the desingularized Vlasov solutions. We will choose the initial data for $f$ to be supported on small angular momenta $\ell\sim\ve$ and so that the charge $\mathring Q$ and Hawking mass $\mathring m$ profiles closely approximate the initial data for dust as described in \cref{sec:outgoing-Cauchy}. The Vlasov beam which is intended to approximate charged null dust is called the \emph{main beam}.

As we will see, desingularizing bouncing charged null dust requires an ansatz for $\mathring f$ which necessarily degenerates in $\ve$. Closing estimates in the region of spacetime where $Q\les \ve$ is then a fundamental issue because the repulsive effect of the electromagnetic field is relatively weak there. We overcome this issue by adding an \emph{auxiliary beam} to the construction, which stabilizes the main beam by adding a small amount of charge on the order of $\eta\gg \ve$. This beam is not dust-like, consists of particles with angular momentum $\sim 1$, and is repelled away from the center by the centrifugal force. 

The goal will be to construct a smooth family of Vlasov seeds $\lambda\mapsto \mathcal S_\lambda$ for $\lambda\in [-1,2]$ such that $\mathcal S_{-1}$ is trivial (i.e., evolves into Minkowski),  $\mathcal S_2$ forms a subextremal Reissner--Nordstr\"om black hole with charge to mass ratio $\approx \tfrac 12$, and $\lambda_*\approx 1$ is the critical parameter for which an extremal Reissner--Nordstr\"om black hole with mass $M$ forms. For $\lambda\in [0,2]$, the Vlasov development $\mathcal D_\lambda$ closely approximates the dust developments from \cref{thm:ECC-Ori} (in a sense to be made precise in \cref{sec:weak-proof} below) and $\lambda\in [-1,0]$ smoothly ``turns on'' the auxiliary beam. At the very end of the proof, $\lambda$ is simply rescaled to have range $[0,1]$.

In fact, our methods allow us to desingularize any bouncing charged null dust beam given by \cref{prop:KU23c-existence}. Adding dependence on $\lambda$ is then essentially only a notational hurdle. We now highlight specific aspects of the construction in more detail. 

\subsubsection{Time symmetry and reduction to the outgoing case}\label{sec:intro-outline-2}

\begin{figure}[ht]
\centering{
\def\svgwidth{33pc}
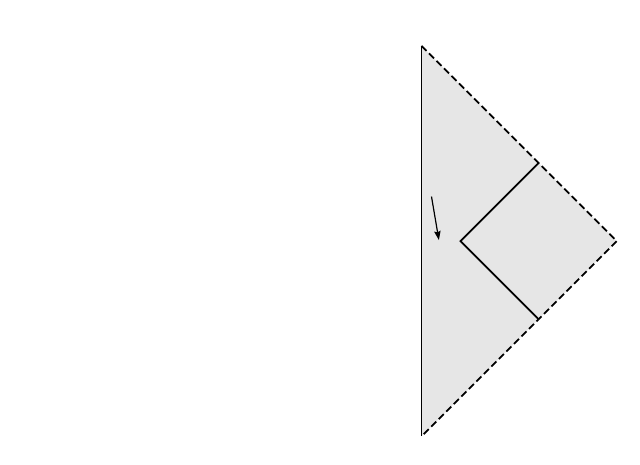}
\caption{Penrose diagrams of the ``maximal time-symmetric doubled spacetimes'' used in the proof of \cref{thm:main} when $\mathfrak m>0$. When $\lambda\le 0$, these spacetimes are evidently not globally hyperbolic, but one can easily observe that the globally hyperbolic spacetimes depicted in \cref{fig:bounce-1} when $\lambda\le 0$ are simply the above spacetimes restricted to the past of $\mathcal{CH^+}\cup\{i^+\}\cup \mathcal I^+$. The exterior region is isometric to a subset of the maximally extended Reissner--Nordstr\"om solution with parameters depending on $\lambda$. }\label{fig:time-symmetry}
\end{figure}

The starting point of the construction of bouncing charged Vlasov beams is the prescription of Cauchy data on an approximate bounce hypersurface $\Sigma_\b$, using the radial parametrization of bouncing charged null dust as a guide. We can now see the utility of the time-symmetric ansatz in \cref{sec:general-data}: it reduces the problem to constructing an \emph{outgoing} beam, which is then reflected and glued to maximally extended Reissner--Nordstr\"om to construct a \emph{time symmetric} spacetime as depicted in \cref{fig:time-symmetry} below. These ``maximal time-symmetric spacetimes'' are constructed in \cref{sec:doubled}. The globally hyperbolic developments in \cref{thm:main} are obtained by taking appropriate subsets and identifying suitable Cauchy hypersurfaces.

The problem now reduces to constructing the region bounded to the past by $C_\star$, $\Sigma_\mathrm{b}$, and the center in \cref{fig:time-symmetry}. In this region, \emph{the solution is always dispersive}. Therefore, we can actually treat the subextremal, extremal, and superextremal cases at once. Detection of whether a black hole forms in the doubled spacetime takes place \emph{on the level of initial data} and we heavily exploit the global structure of the Reissner--Nordstr\"om family itself in this process. Note that while we prescribe data in the black hole interior when $\lambda\le 0$, there clearly exist Cauchy surfaces lying entirely in the domain of outer communication. In fact, the solutions are always past complete and disperse to the past. See already \cref{prop:global-structure}.

\subsubsection{The choice of seed data}

We now describe our desingularization procedure for bouncing charged null dust on the level of initial data. Consider the outgoing portion of a charged null dust beam $(r_\d,\Omega^2_\d,Q_\d,N^v_\d,T^{vv}_\d)$ as in \cref{sec:outgoing-Cauchy}, with Cauchy data posed along the bounce hypersurface $\Sigma_\b$. The geometry of the outgoing dust beam is entirely driven by the choice of renormalized number current $\mathring{\mathcal N}^v_\d$ in \eqref{eq:weak-7}. Importantly, the energy-momentum tensor of dust vanishes identically along $\Sigma_\b$.

Since radial charged null dust has $\ell=0$, we wish to approximate dust with Vlasov matter consisting of particles with angular momentum $\ell\sim \ve$, where $0<\ve\ll1$ is a small parameter to be chosen. We want to choose the initial distribution function $\mathring f$ so that 
\begin{equation}
\mathring{\mathcal N}^u+\mathring{\mathcal N}^v= \mathring{\mathcal N}^v_\mathrm d = \frac{1}{\mathfrak er^2}\left(1-\frac{2\check\varpi}{r}+\frac{\check Q^2}{r^2}\right)^2\frac{d\check Q}{dr}, \label{eq:inro-N-f-approx} 
 \end{equation}
 \begin{equation}
     \mathring Q\approx\check Q,\quad    \mathring \varpi\approx\check\varpi,\quad 
     \mathring {\mathcal T}^{uu},\mathring {\mathcal T}^{uv},\mathring {\mathcal T}^{vv}\approx 0
 \end{equation}
on $\Sigma_\b$, as $\ve\to 0$. These conditions are satisfied if we choose
\begin{equation}
    \mathring f_\mathrm{main}^{\,\alpha,\ve}(r,p^u,p^v)\doteq \frac{c}{\mathfrak er^2\ve}\left(1-\frac{2\check\varpi}{r}+\frac{\check Q^2}{r^2}\right)^2 \frac{d\check Q}{dr} \delta_\ve(p^u)\delta_\ve(p^v)\label{eq:intro-f-main}
\end{equation}
for $r\in [r_1,r_2]$, where $\delta_\ve$ are approximations of the identity with support $[\ve,2\ve]$ and $c$ is a normalization constant that depends on the precise choice of the family $\delta_\ve$. In order for the mass shell inequality $\Omega^2p^up^v\ge \mathfrak m^2$ to hold on the support of $ \mathring f_\mathrm{main}^{\,\alpha,\ve}$, \eqref{eq:intro-f-main} forces us to choose $\mathfrak m\in[0,\mathfrak m_0]$ with $0<\mathfrak m_0\ll \ve$. 

\begin{rk}
    In the full bouncing null dust model, $N$ is discontinuous across $\Sigma_\b$. Indeed, to the past of $\Sigma_\mathrm b$, $N$ points in the $u$-direction and has a nonzero limit along $\Sigma_\mathrm b$, but to the future points in the $v$-direction and also has a nonzero limit. In the Vlasov case, time symmetry demands $N$ be smooth across, and orthogonal to, $\Sigma_\b$. By comparing \eqref{eq:constraint-2} with \eqref{eq:dust-constraint-2}, we see that $\mathring{\mathcal N}^u+\mathring{\mathcal N}^v$ in Vlasov takes the role of $\mathring{\mathcal N}^v$ in dust.
\end{rk}

Observe directly from \eqref{eq:intro-f-main} that $\mathring f_\mathrm{main}^{\,\alpha,\ve}$ behaves pointwise like $\ve^{-3}$ and therefore pointwise estimates for $N$ and $T$ in evolution must utilize precise estimates of the electromagnetic flow to cancel factors of $\ve$. Closing estimates independently of $\ve$ is the main challenge of this scheme and we directly exploit the null structure of the spherically symmetric Einstein--Maxwell--Vlasov system in the proof. The main mechanisms ensuring boundedness of $N$ and $T$ in the main beam are:
\begin{enumerate}
    \item The angular momentum $\ell$ is conserved, so that $\ell\sim\ve$ throughout the main beam. 
    \item If $\gamma$ is an electromagnetic geodesic arising from the support of $ \mathring f_\mathrm{main}^{\,\alpha,\ve}$, then $p^v$ should rapidly increase due to electromagnetic repulsion. Dually, $p^u$ should rapidly decrease, which ought to suppress the ingoing moments $N^u$, $T^{uu}$, and $T^{uv}$. This should be compared with the vanishing of $N^u$, $T^{uu}$, and $T^{uv}$ in outgoing null dust. We say that the main beam \emph{bounces due to electromagnetic repulsion}.
\end{enumerate}

\begin{figure}
\centering{
\def\svgwidth{27pc}
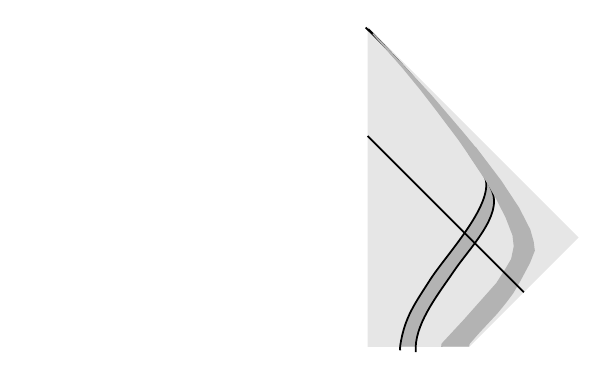}
\caption{Penrose diagram of outgoing charged Vlasov beams (evolution of the seed data $\mathcal S_{\alpha,\eta,\ve}$). Note that the beams do not intersect when $\mathfrak m=0$. When $\mathfrak m>0$, one can show that they do, but it is not necessary to do so for our purposes here.}
\label{fig:zoomed-1}
\end{figure}

As is apparent from \eqref{eq:SS-Lor-2}, the magnitude of the repulsive effect is proportional to $Q$. If we were to evolve the seed $\mathring f_\mathrm{main}^{\,\alpha,\ve}$ on its own, the inner edge of the beam would experience less electromagnetic repulsion since $Q$ is potentially quite small in the inner region. 

In order to reinforce the repulsive effect of the electric field in the main beam and get a consistent hierarchy of powers of $\ve$, we introduce an \emph{auxiliary beam} on the inside of the main beam which \emph{bounces due to the centrifugal force} associated to electromagnetic geodesics with large angular momentum. The initial data for the auxiliary beam is chosen to be 
\begin{equation}
    \mathring f_\mathrm{aux}^{\,r_1,\eta}(r,p^u,p^v)\doteq \eta\, \varphi (r,p^u,p^v),\label{eq:intro-f-aux}
\end{equation}
where $\eta\gg \ve$ is a constant determining the amplitude, $\varphi$ is a cutoff function supported on the set  $[\tfrac 13r_1,\tfrac 23r_1]\times[\Lambda-1,\Lambda+1]\times[\Lambda-1,\Lambda+1]$, and $\Lambda$ is a fixed large constant that determines the strength of the centrifugal force felt by the auxiliary beam. The auxiliary beam ensures that the main beam always interacts with an electric field of amplitude $\gtrsim \eta$, which acts as a crucial stabilizing mechanism.

\subsubsection{The near and far regions and the hierarchy of scales}

 The total seed for an outgoing  Vlasov beam is taken to be $\mathcal S_{\alpha,\eta,\ve}\doteq(\mathring f_\mathrm{tot}^{\,\alpha,\eta,\ve},r_2,\mathfrak m,\mathfrak e)$, where
\begin{equation}
    \mathring f_\mathrm{tot}^{\,\alpha,\eta,\ve}\doteq \mathring f_\mathrm{aux}^{\,r_1,\eta}+ \mathring f_\mathrm{main}^{\,\alpha,\ve},\label{eq:intro-f-tot}
\end{equation} the fundamental charge $\mathfrak e>0$ is fixed, the mass $\mathfrak m$ lies in the interval $[0,\mathfrak m_0]$, and $\eta,\ve$, and $\mathfrak m_0$ need to be chosen appropriately small.

To study the evolution of $\mathcal S_{\alpha,\eta,\ve}$, depicted in \cref{fig:zoomed-1}, we distinguish between the \emph{near region} $\{v\le \breve v\}$ and the \emph{far region} $\mathcal R_\mathrm{far}^{\breve v,\infty}=\{v\ge \breve v\}$, where $\breve v$ is a large advanced time to be determined. Roughly, the ingoing cone $\{v=\breve v\}$ is chosen so that the geometry is very close to Minkowskian and the Vlasov field is ``strongly outgoing'' and supported far away from the center, i.e., 
\begin{equation}
   \frac{p^u}{p^v}\les r^{-2}\ll 1\label{eq:intro-outgoingness}
\end{equation}
for every $p^u$ and $p^v$ such that $f(\,\cdot\,,\breve v,p^u,p^v)\ne 0$. The near region is further divided into the \emph{main} and \emph{auxiliary} regions, corresponding to the physical space support of the main and auxiliary beams and denoted by $\mathcal R^{\breve v}_\mathrm{main}$ and $\mathcal R^{\breve v}_\mathrm{aux}$, respectively.\footnote{For reasons of convenience, $\mathcal R_\mathrm{main}$ is defined slightly differently in the actual proof than the region depicted in \cref{fig:zoomed-1}, but this is inconsequential at this level of discussion.}

The beam parameters $\eta$, $\ve, \mathfrak m_0$ and the auxiliary parameter $\breve v$ satisfy the hierarchy
\begin{equation}
    0<\mathfrak m_0\ll \ve\ll\eta\ll\breve v^{-1}\ll 1.\label{eq:intro-hierarchy}
\end{equation}
 To prove the sharp rate of dispersion when $\mathfrak m>0$, we augment this hierarchy with 
\begin{equation*}
    0<v_\#^{-1}\ll \mathfrak m,
\end{equation*}
where $v_\#$ is a very large time after which the additional dispersion associated to massive particles kicks in. 

The proof of \cref{thm:main} proceeds by showing that if \eqref{eq:intro-hierarchy} holds, then the solution exists, with certain properties, in the regions $\mathcal R_\mathrm{main}^{\breve v}$, $\mathcal R^{\breve v}_\mathrm{aux}$, and $\mathcal R^{\breve v,\infty}_\mathrm{far}$, in that order. The sharp decay rates of $N$ and $T$ are then shown a posteriori by re-analyzing the electromagnetic geodesic flow in the far region.

\begin{rk}
   The final Reissner--Nordstr\"om parameters of the total Vlasov seed \eqref{eq:intro-f-tot} depend on the approximation parameters $\eta$ and $\ve$, but are $O(\eta)$-close to $(\varpi_2,Q_2)$. Therefore, in order to reach any fixed set of parameters, the background dust seed has to be appropriately modulated. See already \cref{sec:proof-main}.
\end{rk}

\subsubsection{Outline of the main estimates}\label{sec:intro-near} 

\ul{The main beam in the near region:} In this region, the main goal is proving smallness (in $\ve$) of $N^u$, $T^{uu}$, and $T^{uv}$, which are identically zero in the background dust solution. We define the \emph{phase space volume function} $\mathscr V:\mathcal Q\to \Bbb R_{\ge 0}$ by
\begin{equation}\label{eq:phase-space-volume}
    \mathscr V(u,v)\doteq \Omega^2(u,v)|\{(p^u,p^v):f(u,v,p^u,p^v)\ne 0\}|,
\end{equation}
where $|\,\cdot\,|$ is the Lebesgue measure on $\Bbb R^2_{p^u,p^v}$. The function $\mathscr V$ is invariant under gauge transformations of $u$ and $v$. Using the mass shell relation \eqref{eq:mass-shell} and the change of variables formula, we find
\begin{align}
  \label{eq:V-v}  \mathscr V(u,v)= \frac{2}{r^2}\int_0^\infty\int_{\{p^v:f(u,v,p^u,p^v)\ne 0\}} \frac{dp^v}{p^v}\,\ell\,d\ell,
\end{align}
where we view $p^u$ as a function of $p^v$ and $\ell$. Because of the addition of $\mathring f^{\,r_1,\eta}_\mathrm{aux}$ to the seed data and the good monotonicity properties of \eqref{eq:Max-u} and \eqref{eq:Max-v}, it holds that $Q\gtrsim \eta$ in $\mathcal R^{\breve v}_\mathrm{main}$. Under relatively mild bootstrap assumptions, any electromagnetic geodesic $\gamma$ in the main beam is accelerated outwards at a rate $\gtrsim \eta$, i.e., 
\begin{equation*}
     p^v\gtrsim \ve+\eta\min\{\tau,1\},\quad p^u\les\frac{\ve^2}{r^2(\ve+\eta\min\{\tau,1\})},
\end{equation*}
where $\tau\doteq \frac 12(u+v)$ is a ``coordinate time.'' We also show that if $\gamma_1$ and $\gamma_2$ are two electromagnetic geodesics in the main beam which reach the same point $(u,v)\in\mathcal R_\mathrm{main}^{\breve v}$, then
\begin{equation*}
    |p^v_1-p^v_2|\les \frac{\ve}{\eta^2}
\end{equation*}
at $(u,v)$. Using these estimates, conservation of angular momentum, and the hierarchy \eqref{eq:intro-hierarchy}, we show that 
\begin{equation*}
    \mathscr V(u,v)\les_\eta \frac{\ve^3}{\ve+\eta\min\{\tau,1\}},\label{eq:volume-est-intro}
\end{equation*}
where the notation $A\les_\eta B$ means $A\le CB$, where $C$ is a constant depending on $\eta$. Then, simply using the transport nature of the Vlasov equation, we obtain the estimates
\begin{equation*}\label{eq:intro-main-est-1}
    T^{uu}(u,v)\les \ve^{1/2},\quad T^{uv}(u,v)\les \ve^{1/2},\quad \int_{-u}^v N^u(u,v')\,dv'\les \ve^{1/2},
\end{equation*}
which capture the fundamental characteristic of outgoing null dust. These estimates allow us to control the geometry at $C^1$ order, which is more than enough to use the generalized extension principle, \cref{prop:ext}, to extend the solution. For details, see \cref{sec:near-main}. When $\lambda\in[-1,0]$ and the main beam has not yet been turned on, constructing the solution in this region is trivial since the solution is electrovacuum.

\ul{The auxiliary beam in the near region:} Since the auxiliary beam is genuinely weak ($\mathring f^{\,r_1,\eta}_\mathrm{aux}\les \eta$ pointwise), the bootstrap argument in $\mathcal R^{\breve v}_\mathrm{aux}$ is a standard Cauchy stability argument, perturbing off of Minkowski space. We use explicit knowledge of the impact parameter and asymptotics of \emph{null geodesics} with angular momentum $\sim\Lambda$ on Minkowski space and treat the charge as an error term in this region. For details, see \cref{sec:near-aux}.

\ul{Existence in the far region:} The argument in this region is a refinement of Dafermos' proof of the stability of Minkowski space for the spherically symmetric Einstein--massless Vlasov system \cite{Dafermos-Vlasov} (see also \cite[Chapter 4]{TaylorPhD}). Because of the singular nature of $f_\mathrm{main}$ in powers of $\ve$, it seems difficult to obtain uniform in $\ve$ pointwise decay estimates for $T^{uv}$ by the usual method of estimating decay of the phase space volume of the support of $f$ at late times. Fortunately, we are able to exploit the a priori energy estimates 
\begin{equation}
     \int r^2\Omega^2 T^{uv}\partial_ur \,du'  \les 1,\quad \int r^2\Omega^2 T^{uv}\partial_vr  \,dv'\les 1\label{eq:energy-est-intro}
\end{equation}
coming from the monotonicity of the Hawking mass when $\partial_vr>0$ and $\partial_ur<0$ (see \cite{dafermos-trapped-surface}). It is important to note that these energy estimates are independent of initial data and are a fundamental feature of the spherically symmetric Einstein equations. Under the bootstrap assumption that the electromagnetic geodesics making up the support of $f$ are ``outgoing'' for $v\ge \breve v$, the energy estimates \eqref{eq:energy-est-intro} imply decay for the unweighted fluxes of $T^{uv}$. This shows that the geometry remains close to Minkowski in $C^1$ and recovering the bootstrap assumption on the support of $f$ follows from good monotonicity properties of the electromagnetic geodesic flow when close to Minkowski. We also note that this approach using energy estimates allows us to treat the cases $\mathfrak m=0$ and $\mathfrak m>0$ simultaneously. For details, see \cref{sec:far-massive}.  

\ul{Dispersion in the far region:} Once the solution has been shown to exist globally, we prove sharp (in coordinate time $\tau$) pointwise decay statements for $N$ and $T$ (see \cite{RR92,EMV-stability,TaylorPhD}). As the decay rates differ when $\mathfrak m=0$ or $\mathfrak m>0$, these two cases are treated separately. 

\emph{The massless case.} It follows immediately from the mass shell relation \eqref{eq:mass-shell} that $p^u\les r^{-2}$ in the far region. Since this is integrable, the beams are confined to null slabs and can even be shown to be disjoint as depicted in \cref{fig:zoomed-1}. Since each $p^u$ contributes a factor of $r^{-2}$ and our solutions have bounded angular momentum, we obtain the sharp dispersive hierarchy 
\begin{equation*}
    N^v+T^{vv}\le C(1+\tau)^{-2},\quad N^u+T^{uv}\le C(1+\tau)^{-4},\quad T^{uu}\le C(1+\tau)^{-6},
\end{equation*}
where the constant $C$ depends on $\alpha,\eta$, and $\ve$. For details, see \cref{sec:massive-dispersion}.

\emph{The massive case.} When $\mathfrak m>0$, $p^u$ does not decay asymptotically. After a very late time $v_\#\gg \mathfrak m^{-1}$, $p^u\sim_\eta \mathfrak m^2$, which drives additional decay of the phase space volume. We prove this by a change of variables argument, turning volume in $p^u$ at later times $v\ge v_\#$ into physical space volume of the support of $f$ at time $v=v_\#$. This leads to the sharp isotropic decay rate
\begin{equation*}
    \mathcal M\le C(1+\tau)^{-3}
\end{equation*}
for any moment $\mathcal M$ of $f$, where $C$ depends on $\alpha,\eta,\ve$, and a lower bound for $\mathfrak m$. For details, see \cref{sec:proof-of-prop}.

\subsection{Outgoing charged Vlasov beams}\label{sec:beam-data}

\subsubsection{The beam parameters, fixed constants, and conventions}\label{sec:parameters}

First, we fix once and for all the fundamental charge $\mathfrak e\in\Bbb R\setminus\{0\}$. Without loss of generality, we may take $\mathfrak e>0$, as all of the arguments and definitions in the remainder of \cref{sec:Vlasov-beams} require only minor cosmetic modifications to handle the case $\mathfrak e<0$. Next, we fix an even function $\varphi\in C^\infty_c(\Bbb R)$ satisfying $\spt\varphi=[-1,1]$, $\varphi\ge 0$, and 
\begin{equation*}
    \int_{-1}^1\varphi\,dx=1.
\end{equation*}
 Let $\theta\in C^\infty(\Bbb R)$ be a nondecreasing function such that $\theta(\lambda)=0$ for $\lambda \le -1$ and $\theta(\lambda)=1$ for $\lambda\ge 0$. Let $\zeta\in C^\infty(\Bbb R)$ be a nondecreasing function such that $\zeta(\lambda)=0$ for $\lambda\le 0$, $\zeta(\lambda)=\lambda$ for $\lambda\ge \frac12$, and $\zeta'(\lambda)>0$ for $\lambda\in(0,\tfrac 12]$. Finally, we fix a large\footnote{Large relative to the other beam parameters. For instance, $\Lambda=20$ suffices.} number $\Lambda\ge 1$, such that 
\begin{equation}\label{eq:Lambda}
    \min_{p_1,p_2\in[\Lambda-1,\Lambda+1]}\frac{p_1p_2}{(p_1+p_2)^2}\ge\tfrac{81}{400}.
\end{equation}
We emphasize that:
\begin{center}
    \emph{The quintuple $(\mathfrak e,\varphi,\theta,\zeta,\Lambda)$ is fixed for the remainder of the paper.}
\end{center}

Recall the set $\mathcal P_\Gamma$ of regular center admissible parameters of the form $\alpha=(r_1,r_2,0,\varpi_2,0,Q_2)$ which was defined in \cref{sec:radial}. Let $\eta,\ve$, and $\mathfrak m_0$ be positive real numbers. In the course of the proofs below, the particle mass $\mathfrak m$ will be restricted to satisfy $0\le \mathfrak m\le \mathfrak m_0$.

\begin{defn}\label{def:S}
    Let $\alpha=(r_1,r_2,0,\varpi_2,0,Q_2)\in\mathcal P_\Gamma$, $\eta>0$, and $\ve>0$. The \emph{time-symmetric outgoing charged Vlasov beam seed} $\mathcal S_{\alpha,\eta,\ve}$ is given by $(\mathring f_\mathrm{aux}^{\,r_1,\eta}+\mathring f^{\,\alpha,\ve}_\mathrm{main},r_2,\mathfrak m,\mathfrak e)$, where 
\begin{equation}
    \mathring f_{\mathrm{aux}}^{\,r_1,\eta}(r,p^u,p^v)\doteq\eta\,\varphi\!\left(\frac{6}{r_1}r-3\right)\varphi(p^u-\Lambda)\varphi(p^v-\Lambda),\label{eq:aux-def}
\end{equation}
and 
    \begin{equation}\label{eq:f-main}
        \mathring f_\mathrm{main}^{\,\alpha,\ve}(r,p^u,p^v)\doteq \frac{8}{3\pi\mathfrak e\ve^3r^2} \left(1 - \frac{2\check\varpi(r)}{r} + \frac{\check Q^2(r)}{r^2}\right)^2 \check Q'(r)\,\varphi\!\left(\frac{2p^u}{\ve}-3\right)\varphi\!\left(\frac{2p^v}{\ve}-3\right),
    \end{equation}
    where $\check\varpi$ and $\check Q$ are taken from $\varsigma(\alpha)$, where the map $\varsigma$ was defined in \cref{prop:KU23c-existence} (cf. \cref{rk:varsigma}).
\end{defn}

\begin{defn}\label{def:S-interp}
    Let $M>0$, let $0<r_1<r_2$, and let $\{\alpha_{\lambda,M'}\}$ (with $|M-M'|\le \delta$) be as in \eqref{eq:alpha-lambda-M'} in the proof of \cref{thm:ECC-Ori}. For $\lambda\in [-1,2]$, $\eta>0$, $\ve>0$, $\mathfrak m_0$, we define
    \begin{equation}
        \mathcal S_{\lambda,M',\eta,\ve}\doteq\mathcal S_{\alpha_{\zeta(\lambda),M'},\theta(\lambda)\eta,\ve}
    \end{equation}
    for particles of mass $0\le \mathfrak m\le \mathfrak m_0$. For $\lambda \le 0$, the $\check\varpi$ and $\check Q$ components of  $\varsigma(\alpha_{\zeta(\lambda),M'})$ are interpreted as identically zero, in correspondence with the proof of \cref{thm:ECC-Ori}.
\end{defn}

Throughout \cref{sec:Vlasov-beams}, the notation $A\les B$ means that there exists a constant $C>0$, which only depends on  $\mathfrak e$, $\varphi$, $\theta$, $\zeta$, $\Lambda$, $M$, $r_1$, and $r_2$ such that $A\le CB$. The notation $A\gtrsim B$ is defined similarly and $A\sim B$ means $A\les B$ and $A\gtrsim B$. Moreover, we use the convention that all small (large) constants in ``sufficiently small (large)'' may depend on $\mathfrak e$, $\varphi$, $\theta$, $\zeta$, $\Lambda$, $M$, $r_1$, and $r_2$. In \cref{sec:massive-dispersion}, we will also use the notation $A\les_{\eta} B$ (resp., $A\les_{\eta,\ve}B$), in which we allow the constants to also depend on $\eta$ (resp., $\eta$ and $\ve$). The relations $A\sim_\eta B$ and $A\sim_{\eta,\ve}B$ are defined in the obvious way.

For the evolution problem, we will introduce a large parameter $\breve v$ to separate $\mathcal C_{r_2}$ into the ``near'' and ``far'' regions. We will always assume that the parameter hierarchy
\begin{equation}
    0<\mathfrak m_0\ll \ve\ll \eta\ll\breve v^{-1}\ll 1\label{eq:parameter-smallness}
\end{equation}
holds, by which we mean that any given statement holds for $\breve v$ sufficiently large, $\eta$ sufficiently small depending on $\breve v$, $\ve$ sufficiently small depending on $\breve v$ and $\eta$, and $\mathfrak m_0$ sufficiently small depending on $\breve v$, $\eta$, and $\ve$. To prove dispersion in the massive case, we introduce an even larger parameter $v_\#$ satisfying
\begin{equation*}
    0<v^{-1}_\#\ll \mathfrak m \le \mathfrak m_0,
\end{equation*}
so that $v_\#$ is chosen sufficiently large depending on $\breve v, \eta,\ve$, and $\mathfrak m$.

\subsubsection{The global structure of outgoing charged Vlasov beams}\label{sec:global-structure}

\begin{prop}\label{prop:approx-1}
    Fix a fundamental charge $\mathfrak e>0$, cutoff functions $\varphi,\theta$, and $\zeta$ as in \cref{sec:parameters}, a number $\Lambda\ge 1$ satisfying \eqref{eq:Lambda}, $M>0$, and $0<r_1<r_2<r_-(4M,2M)$. Let $\delta>0$ be as in the statement of \cref{thm:ECC-Ori} and define $\mathcal S_{\lambda,M',\eta,\ve}$ as in \cref{def:S-interp} for $\lambda\in[-1,2]$, $|M'-M|\le\delta$, $\eta>0$, $\ve>0$, and for particles of mass $0\le\mathfrak m\le \mathfrak m_0$, where $\mathfrak m_0>0$.

    If $\eta$ is sufficiently small, $\ve$ is sufficiently small depending on $\eta$, and $\mathfrak m_0$ is sufficiently small depending on $\eta$ and $\ve$, then for any $\lambda\in[-1,2]$ and $|M'-M|\le \delta$, the seed $\mathcal S_{\lambda,M',\eta,\ve}$ is untrapped and consistent with particles of mass $\mathfrak m$. There exists a unique maximal normalized development $(\mathcal U,r,\Omega^2,Q,f)$ of $\mathcal S_{\lambda,M',\eta,\ve}$ for particles of charge $\mathfrak e$ and mass $\mathfrak m$ with the following properties.\footnote{Here, uniqueness is in the class of normalized developments as in  \cref{def:normalized-defn}. We have not shown an unconditional existence and uniqueness statement for maximal developments for the Einstein--Maxwell--Vlasov model in this paper (although this can be done) and will therefore infer uniqueness directly in the course of the construction.} If $\mathfrak m>0$:

     \begin{enumerate}     
    \item The development is global in the normalized double null gauge, i.e., $\mathcal U=\mathcal C_{r_2}$. 
        \item The $(3+1)$-dimensional spacetime obtained by lifting $(\mathcal U,r,\Omega^2)$ is future causally geodesically complete and satisfies globally the estimates
\begin{equation}
  \partial_vr\sim 1,\quad   |\partial_ur|\les 1,\quad   \Omega^2\sim 1,\quad (1+u^2)|\partial_u\Omega^2|+(1+v^2)|\partial_v\Omega^2|\les 1.   \label{eq:prop-approx-1} 
\end{equation}

        \item   Define the \emph{final Reissner--Nordstr\"om parameters} $\tilde M$ and $\tilde e$ of $\mathcal S_{\lambda,M',\eta,\ve}$ to be the constant values of $\varpi$ and $Q$, respectively, on the cone $C_{-r_2}$. Then $\tilde M$ and $\tilde e$ are smooth functions of $(\lambda,M',\eta,\ve,\mathfrak m)$, satisfy the estimate \begin{equation}\label{eq:final-parameter-estimate}
            |\tilde M-\zeta(\lambda)^2M'|+|\tilde e-\zeta(\lambda)M'|\les \eta,
        \end{equation} and extend smoothly to $\eta=\ve=0$, where they equal $\zeta(\lambda)^2M'$ and $\zeta(\lambda)M'$, respectively. The spacetime $(\mathcal U,r,\Omega^2)$ contains antitrapped surfaces (symmetry spheres where $\partial_ur\ge 0$) if and only if $\tilde e\le \tilde M$ and $r_2<r_-$, where $r_\pm\doteq\tilde  M\pm \sqrt{\tilde M^2-\tilde e^2}$.
         In this case, we nevertheless have $ \partial_ur\sim -1$
 for $v$ sufficiently large and the antitrapped surfaces are restricted to lie in the slab $\{2r_--r_2\le v\le 2r_+-r_2\}$.
        \item The Vlasov distribution function $f$ is quantitatively supported away from the center,
        \begin{equation}
            \inf_{\pi(\spt f)}r \ge \tfrac 16r_1,\label{eq:r-min-statement}
        \end{equation}
        and the beam asymptotes to future timelike infinity $i^+$ in the sense that
 \begin{equation}
           \pi( \spt f)\subset \{C_1 v\le u\le C_2 v\},\label{eq:spt-f-statement}
        \end{equation}
        where $C_1$ and $C_2$ are positive constants that may additionally depend on $\eta,\ve,\mathfrak m$, and $\lambda$.  The connected component of $\mathcal U\setminus \pi(\spt f)$ containing the center is isometric to Minkowski space. The connected component of $\mathcal U\setminus\pi(\spt f)$ containing future null infinity $\mathcal I^+$ is isometric to an appropriate neighborhood of future null infinity in the Reissner--Nordstr\"om solution with parameters $M$ and $e$. 
        \item  The Vlasov matter disperses in the sense that the macroscopic observables decay pointwise:
        \begin{equation}
           \mathcal M\le C(1+\tau)^{-3},\label{eq:massive-decay}
        \end{equation} 
        where $\mathcal M\in\{N^u,N^v,T^{uu},T^{uv},T^{vv},S\}$ and the constant $C$ may additionally depend on $\eta,\ve,\mathfrak m$, and $\lambda$. 
    \end{enumerate}
    The same conclusions hold if $\mathfrak m=0$, but points 4.~and 5.~are improved to:
    \begin{itemize}
        \item [4.$'$]  The estimate \eqref{eq:r-min-statement} still holds but \eqref{eq:spt-f-statement} is improved to 
        \begin{equation}
            \pi(\spt f)\subset\{-r_2\le u\le \tfrac 13r_1\},\label{eq:spt-f-massless}
        \end{equation}
        i.e., the beam is confined to a null slab. The spacetime is isometric to Minkowski space for $u\ge r_1$.  
  
        \item [5.$'$] The Vlasov matter disperses in the sense that the macroscopic observables decay pointwise:
        \begin{align}
      \label{eq:massless-decay-1}  N^v+T^{vv}&\le C (1+\tau)^{-2},\\
            N^u+T^{uv}&\le C(1+\tau)^{-4} ,\\
        \label{eq:massless-decay-3}     T^{uu}&\le C (1+\tau)^{-6},
        \end{align}
        where the constant $C$ may additionally depend on $\eta,\ve$, and $\lambda$.
    \end{itemize}
  \end{prop}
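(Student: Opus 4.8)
The plan is to construct the seed data as prescribed in \cref{def:S-interp}, verify the untrapped and mass-consistency conditions by direct computation from the constraint ODEs \eqref{eq:constraint-1}--\eqref{eq:constraint-2}, and then establish the global existence and estimates by a region-by-region bootstrap following the hierarchy \eqref{eq:parameter-smallness}. First I would check that for $\eta,\ve,\mathfrak m_0$ small, the solution of \eqref{eq:constraint-1}--\eqref{eq:constraint-2} stays far from being trapped: since $\mathring f$ is $O(\eta)$ as a measure once integrated (the $\ve^{-3}$ singularity of $\mathring f_\mathrm{main}$ being compensated by the $\ve^{2}$ volume of its momentum support and one further $\ve$ from the normalization), the profiles $\mathring m,\mathring Q$ are $O(\eta)$-close to the dust profiles $\check m,\check Q$, which are themselves untrapped by \cref{prop:radial-parametrization}; the mass-shell inequality $\mathring\Omega^2 p^up^v\ge\mathfrak m^2$ on $\spt\mathring f$ holds because $p^up^v\gtrsim\ve^2$ there while $\mathfrak m\le\mathfrak m_0\ll\ve$. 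Smoothness of $\tilde M,\tilde e$ in all parameters and the limit $\eta=\ve=0$ follows from smooth dependence of $\varsigma$ on $\alpha$ (\cref{prop:KU23c-existence}) together with the observation, as in the proof of \cref{thm:ECC-Ori}, that $\xi(\alpha_{\zeta(\lambda),M'})$ is $\lambda$-independent, so $\check Q\to 0$ smoothly; estimate \eqref{eq:final-parameter-estimate} then records the $O(\eta)$ perturbation from the auxiliary beam.

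Next I would run the near-region analysis in the order $\mathcal R_\mathrm{main}^{\breve v}$, $\mathcal R_\mathrm{aux}^{\breve v}$. In the main region the key is to bound $N^u,T^{uu},T^{uv}$ — identically zero for outgoing dust — by a small power of $\ve$. Under mild bootstrap assumptions on the geometry ($\Omega^2\sim1$, $\partial_vr\sim1$, $|\partial_ur|\lesssim1$) and using that $Q\gtrsim\eta$ in this region (a consequence of the monotonicity in \eqref{eq:Max-u}, \eqref{eq:Max-v} and the auxiliary beam's contribution), I would integrate the Lorentz force equations \eqref{eq:SS-Lor-4}--\eqref{eq:SS-Lor-5} to get $p^v\gtrsim\ve+\eta\min\{\tau,1\}$ and $p^u\lesssim\ve^2 r^{-2}(\ve+\eta\min\{\tau,1\})^{-1}$ along each electromagnetic geodesic, plus a Lipschitz-in-data estimate $|p^v_1-p^v_2|\lesssim\ve/\eta^2$ for geodesics meeting at a common point. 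Feeding these into the phase-space volume formula \eqref{eq:V-v} with $\ell\sim\ve$ conserved gives $\mathscr V\lesssim_\eta \ve^3/(\ve+\eta\min\{\tau,1\})$, whence the transport representation of the moments yields $T^{uu},T^{uv}\lesssim\ve^{1/2}$ and $\int N^u\,dv'\lesssim\ve^{1/2}$; \cref{prop:ext} then continues the solution past the main region. The auxiliary region is a Cauchy-stability perturbation off Minkowski, using explicit impact-parameter asymptotics for null geodesics of angular momentum $\sim\Lambda$ and treating the $O(\eta)$ charge as an error.

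Then I would treat the far region $\mathcal R_\mathrm{far}^{\breve v,\infty}=\{v\ge\breve v\}$, where the solution is shown to exist globally. Rather than chasing uniform-in-$\ve$ decay of $\mathscr V$, I would exploit the monotonicity-of-Hawking-mass energy estimates \eqref{eq:energy-est-intro} (valid whenever $\partial_vr>0$, $\partial_ur<0$, and independent of the data), which under the bootstrap that all electromagnetic geodesics are ``outgoing'' ($p^u/p^v\lesssim r^{-2}$) for $v\ge\breve v$ give decay of the unweighted $T^{uv}$ fluxes; this keeps the geometry $C^1$-close to Minkowski and hence, via the good monotonicity of the flow near Minkowski, recovers the outgoingness bootstrap and proves future geodesic completeness and \eqref{eq:prop-approx-1}. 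The support statements \eqref{eq:spt-f-statement}, \eqref{eq:spt-f-massless}, \eqref{eq:r-min-statement} follow from conservation of $\ell$ and the geodesic estimates: when $\mathfrak m=0$, $p^u\lesssim r^{-2}$ is integrable so each beam is confined to a null slab and is even disjoint from the auxiliary beam, giving the Minkowskian region $\{u\ge r_1\}$; when $\mathfrak m>0$, geodesics asymptote to $i^+$ along $\{C_1v\le u\le C_2v\}$. Finally the sharp dispersive rates: in the massless case each factor of $p^u\lesssim r^{-2}$ contributes $(1+\tau)^{-2}$, yielding the hierarchy \eqref{eq:massless-decay-1}--\eqref{eq:massless-decay-3}; in the massive case, after $v_\#\gg\mathfrak m^{-1}$ one has $p^u\sim_\eta\mathfrak m^2$, and a change-of-variables argument converting late-time $p^u$-volume into physical-space volume of $\spt f|_{v=v_\#}$ produces the isotropic rate $(1+\tau)^{-3}$ in \eqref{eq:massive-decay}. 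The main obstacle I anticipate is the main-beam estimate: closing the phase-space volume bound and the moment bounds \emph{uniformly in $\ve$} as $\ve\to0$, since $\mathring f_\mathrm{main}$ blows up like $\ve^{-3}$ pointwise, so every factor of $\ve$ must be recovered from precise control of the electromagnetic geodesic flow and its data-dependence, and it is exactly here that the auxiliary beam (forcing $Q\gtrsim\eta$) is indispensable.
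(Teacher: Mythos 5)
Your proposal follows essentially the same route as the paper's own proof: constraint estimates on the time-symmetric seed, a bootstrap in the main-beam near region built on the phase-space volume bound $\mathscr V\lesssim_\eta \ve^3/(\ve+\eta\min\{\tau,1\})$ and the resulting $\ve^{1/2}$ smallness of $T^{uu},T^{uv},\int N^u$, a Cauchy-stability argument for the auxiliary beam, the Hawking-mass energy estimates with an outgoingness bootstrap in the far region, and the $v_\#$/change-of-variables argument for the sharp massive decay rate, exactly as in \cref{sec:data-estimates}--\cref{sec:proof-of-prop}. The argument is correct and no genuinely different ingredient is introduced.
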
 

\begin{rk}\label{rk:limiting-case}
An analogous version of  \cref{prop:approx-1} may be proved for any $\alpha=(r_1,r_2,0,\varpi_2,0,Q_2)\in\mathcal P_\Gamma$ or $\alpha=(r_1,r_2,0,0,0,0)$ by evolving the seed data $\mathcal S_{\alpha,\eta,\ve}$ given by \cref{def:S}. In that case, \eqref{eq:final-parameter-estimate} becomes
\begin{equation}
    |\tilde M-\varpi_2|+|\tilde e-Q_2|\les \eta.
\end{equation}
\end{rk}

  \begin{rk}
      The decay rate $\tau^{-3}$ in \eqref{eq:massive-decay} is sharp for massive particles \cite{RR92,EMV-stability}. The hierarchy of decay rates in \eqref{eq:massless-decay-1}--\eqref{eq:massless-decay-3} is sharp for massless particles \cite{TaylorPhD}.
  \end{rk}

\subsection{Estimates on the initial data}\label{sec:data-estimates}

For the remainder of \cref{sec:Vlasov-beams}, we assume the notation and hypotheses of \cref{prop:approx-1}.

\begin{lem}\label{lem:est-dat-1}
    For $\eta$, $\ve$, and $\mathfrak m_0$ sufficiently small and any $\lambda\in[-1,2]$ and $|M'-M|\le \delta$, $\mathcal S_{\lambda,M',\eta,\ve}$ is untrapped and consistent with particles of mass $0\le \mathfrak m\le \mathfrak m_0$. Let $(\mathcal U,r,\Omega^2,Q,f)$ be a development of $\mathcal S_{\lambda,M',\eta,\ve}$, such as the one obtained from \cref{prop:development-1}. Then the following holds:

  \begin{enumerate}
      \item The set $\mathcal U$ can be assumed to contain the corner $\mathcal C_{r_2}\cap\{v\le \frac 13r_1\}$. The solution is equal to Minkowski space in this region in the sense that
    \begin{gather*}
        r=\tfrac 12(v-u),\\
        \Omega^2=1,\\
        Q=f=0
    \end{gather*}
    on $\mathcal C_{r_2}\cap\{v\le \frac 13r_1\}$.

      \item Estimates on the initial data for the auxiliary beam:         
       \begin{equation}
        \mathring f_\mathrm{aux}^{\,r_1,\theta(\lambda)\eta}\les \theta(\lambda)\eta\mathbf 1_{[\frac 13 r_1,\frac 23r_1]\times[\Lambda-1,\Lambda+1]\times[\Lambda-1,\Lambda+1]}\label{eq:aux-1},
    \end{equation}
    \begin{equation}
        \sup_{v\in[\frac 13r_1,r_1]} |(\Omega^2-1,\partial_u{\log\Omega^2},\partial_v{\log\Omega^2},Q,m)(-v,v)|\les\theta(\lambda)\eta,\label{eq:aux-boundedness}
    \end{equation}
    \begin{equation}
        Q(-\tfrac 23r_1,\tfrac 23r_1)\gtrsim \theta(\lambda)\eta,\label{eq:charge-eta}
    \end{equation}
    \begin{equation}
        \ell(-v,v,p^u,p^v)\approx 1 \quad{\text{for every }(v,p^u,p^v)\in \spt(\mathring f_\mathrm{aux}^{\,r_1,\theta(\lambda)\eta})}.
        \label{eq:ell-aux}
    \end{equation}

      \item Estimates on the initial data for the main beam:
      \begin{equation}
  \label{eq:f-main-est}  \mathring f_\mathrm{main}^{\,\alpha_{\zeta(\lambda),M'},\theta(\lambda)\eta,\ve}\les \ve^{-3}\mathbf 1_{[r_1,r_2]\times[\ve,2\ve]\times[\ve,2\ve]}\cdot \begin{cases}
      1 & \text{if }\lambda >0 \\
      0 & \text{if }\lambda \le 0
  \end{cases},
\end{equation}
\begin{equation}
  \label{eq:main-boundedness-data}   \sup_{v\in[\frac 23 r_1,r_2]}  |(\Omega^2,\partial_u{\log\Omega^2},\partial_v{\log\Omega^2}, Q,m)(-v,v)-(\check\Omega^2,\check{\underline\omega},\check\omega,\check Q,\check m)(v)|\les \theta(\lambda)\eta,
\end{equation}
\begin{equation}
    \inf_{v\in [\frac 23r_1]}Q(-v,v)\gtrsim \theta(\lambda)\eta,\label{eq:Q-est-data-main}
\end{equation}
\begin{equation}
    \label{eq:ell-estimate-data} \ell(-v,v,p^u,p^v)\approx \ve\quad\text{for every }(v,p^u,p^v)\in \spt(\mathring f_\mathrm{main}^{\,\alpha_{\zeta(\lambda),M'},\theta(\lambda)\eta,\ve}),
\end{equation}
where 
\begin{equation}\label{eq:check-m-defn}
    \check m(v)\doteq \check \varpi(v)-\frac{\check Q^2(v)}{2v},\quad \check\Omega^2(v)\doteq \left(1-\frac{2\check m(v) }{v}\right)^{-1},\quad \check\omega\doteq- \check{\underline\omega}\doteq \frac{1}{2}\frac{d}{dv}\log \check\Omega^2(v).
\end{equation}
\item Estimates on the initial outgoing cone $C_{-r_2}$:
\begin{gather}
    \label{eq:cone-1}     \varpi(-r_2,v)=\varpi(-r_2,r_2),\\
\label{eq:cone-2}      Q(-r_2,v)=Q(-r_2,r_2),\\
      \label{eq:cone-3} 0\le    m(-r_2,v)\le 10M,\\
        \label{eq:cone-4}  r(-r_2,v)=\tfrac 12 v+\tfrac 12 r_2
\end{gather}
for $v\ge r_2$.
  \end{enumerate}
\end{lem}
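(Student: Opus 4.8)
\textbf{Proof plan for \cref{lem:est-dat-1}.}

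The plan is to verify the four groups of estimates essentially by direct computation from \cref{def:S}, \cref{def:S-interp}, the constraint ODEs \eqref{eq:constraint-1}--\eqref{eq:constraint-2}, and the boundary/gauge conditions in \cref{def:normalized-defn}, organizing the argument around a continuity (bootstrap) estimate in the amplitude parameter $\eta$. First I would establish point 1 directly: since $\mathring f^{\,\alpha_{\zeta(\lambda),M'},\theta(\lambda)\eta,\ve}_{\mathrm{main}}$ is supported in $\{r\ge r_1\}$ and $\mathring f^{\,r_1,\theta(\lambda)\eta}_{\mathrm{aux}}$ is supported in $\{r\ge \frac13 r_1\}$ by \eqref{eq:aux-def}--\eqref{eq:f-main}, we have $\mathring m=\mathring Q=0$ on $[0,\frac13 r_1]$ by \eqref{eq:constraint-1}--\eqref{eq:constraint-2}, hence $\mathring\Omega^2\equiv 1$ there, and by \cref{rk:Mink-corner} the normalized development is exactly Minkowskian on $\mathcal C_{r_2}\cap\{v\le\frac13 r_1\}$; this is the case $r_0=\tfrac13 r_1$ of that remark.

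The core of the argument is the estimate \eqref{eq:aux-boundedness}--\eqref{eq:charge-eta} and its analogue \eqref{eq:main-boundedness-data}--\eqref{eq:Q-est-data-main}. Here I would run a bootstrap on the interval $r\in[\frac13 r_1, r_2]$: assume $2\mathring m(r)/r\le \frac12$ (say), so that $\mathring\Omega^2(r)=(1-2\mathring m/r)^{-1}$ stays bounded. Under this assumption the source terms $\mathring{\mathcal T}^{uu},\mathring{\mathcal T}^{uv},\mathring{\mathcal T}^{vv}$ and $\mathring{\mathcal N}^u,\mathring{\mathcal N}^v$ computed via \eqref{eq:seed-1}--\eqref{eq:seed-3} from $\mathring f=\mathring f_{\mathrm{aux}}+\mathring f_{\mathrm{main}}$ are $O(\theta(\lambda)\eta)$: the auxiliary part contributes $O(\theta(\lambda)\eta)$ by \eqref{eq:aux-def} and boundedness of the momentum support near $\Lambda$, and the main part, although pointwise of size $\ve^{-3}$, integrates over the $O(\ve^2)$ momentum box $[\ve,2\ve]^2$ and over $r\in[r_1,r_2]$ to give a contribution controlled by the $\ve$-independent quantity $\|\check Q'\|_{C^0}\lesssim 1$ times the $O(\ve^{-1})$ weight — wait, more carefully: the main beam's contribution to $\mathring{\mathcal N}^u+\mathring{\mathcal N}^v$ is exactly $O(1)$ (it is designed to reproduce $\mathring{\mathcal N}^v_{\mathrm d}$ up to normalization constants, cf. \eqref{eq:inro-N-f-approx}), whereas its contribution to $\mathring{\mathcal T}^{\bullet\bullet}$ carries an extra factor $p\sim\ve$ and is therefore $O(\ve)$, hence negligible compared to $\theta(\lambda)\eta$ once $\ve\ll\eta$. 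Feeding these bounds into \eqref{eq:constraint-1}--\eqref{eq:constraint-2} and Grönwalling over the fixed interval $[\frac13 r_1,r_2]$ gives $\mathring m,\mathring Q=O(\theta(\lambda)\eta+\ve)=O(\theta(\lambda)\eta)$, which closes the bootstrap (improving $\frac12$ to $O(\theta(\lambda)\eta)$) and simultaneously proves untrappedness and consistency with mass $\mathfrak m$ — the latter since on $\spt\mathring f$ one has $\mathring\Omega^2 p^u p^v\gtrsim \ve^2$ or $\gtrsim \Lambda^2$, both of which dominate $\mathfrak m^2$ once $\mathfrak m_0\ll\ve$. The pointwise comparison \eqref{eq:main-boundedness-data} with $(\check\Omega^2,\check\omega,\ldots)$ then follows because when $\lambda\le 0$ the reference functions vanish and $\mathring m,\mathring Q=O(\theta(\lambda)\eta)=0$, while for $\lambda>0$ subtracting the $\check{}$-constraint system \eqref{eq:dust-constraint-3}--\eqref{eq:dust-constraint-4} from \eqref{eq:constraint-1}--\eqref{eq:constraint-2} leaves a difference driven only by the $O(\theta(\lambda)\eta)$ auxiliary source and the $O(\ve)$ error in the main source, again closed by Grönwall; the derivative statements on $\partial_u\Omega^2,\partial_v\Omega^2$ use \eqref{eq:Omega-derivative-data} and \eqref{eq:time-symmetry-Omega}. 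The lower bounds \eqref{eq:charge-eta}, \eqref{eq:Q-est-data-main} come from integrating \eqref{eq:constraint-2} over $[\frac13 r_1,\frac23 r_1]$ (resp. up to $v$), where $\mathring{\mathcal N}^u+\mathring{\mathcal N}^v\gtrsim\theta(\lambda)\eta$ because the auxiliary beam deposits a definite amount of charge there.

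For point 2's remaining claims: \eqref{eq:aux-1} is immediate from \eqref{eq:aux-def} with the stated support box, and \eqref{eq:f-main-est} is immediate from \eqref{eq:f-main} using $\|\check Q'\|_{C^0}\lesssim 1$ and the $\theta(\lambda)$-switch built into \cref{def:S-interp}. The angular momentum estimates \eqref{eq:ell-aux} and \eqref{eq:ell-estimate-data} follow from the formula $\ell(-v,v,p^u,p^v)=r\sqrt{\mathring\Omega^2 p^u p^v-\mathfrak m^2}$ with $r\sim 1$, $\mathring\Omega^2\sim 1$ (by the just-proven bounds), $\mathfrak m_0\ll\ve$, and the momentum supports $p^u,p^v\sim\Lambda$ resp. $p^u,p^v\sim\ve$. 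Finally, point 4 (the estimates on $C_{-r_2}$) is pure bookkeeping: \eqref{eq:cone-1}--\eqref{eq:cone-2} hold because $f=0$ and $Q$ is constant there by \eqref{eq:development-4} and Maxwell's equation \eqref{eq:Max-v} with $N^u=0$ on the cone (the matter support obeys \eqref{eq:r-min-statement}-type confinement away from $C_{-r_2}$, established a posteriori but at the data level simply: $\mathring f$ vanishes for $r>r_2$); \eqref{eq:cone-4} is \eqref{eq:development-3}; and \eqref{eq:cone-3} follows from $\varpi(-r_2,r_2)=\mathring\varpi(r_2)=\check\varpi(r_2)+O(\theta(\lambda)\eta)=\zeta(\lambda)^2 M'+O(\eta)\le 10M$ together with $m=\varpi-Q^2/(2r)\le\varpi$ and $m\ge 0$ on data, which is where I would cite the nonnegativity of $\mathring m$ from integrating \eqref{eq:constraint-1} with nonnegative right-hand side.

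\textbf{Main obstacle.} The one genuinely delicate point is making the $\ve$-dependence of the \emph{main} beam's moments uniform: the distribution function is pointwise $\sim\ve^{-3}$, so one must see clearly that $\mathring{\mathcal N}^u+\mathring{\mathcal N}^v$ comes out exactly $O(1)$ (matching the dust number current, by design of the $\ve^{-3}$ normalization in \eqref{eq:f-main} against the $O(\ve^2)$ momentum volume and the remaining $\ve^{-1}$ absorbed correctly) while $\mathring{\mathcal T}^{\bullet\bullet}$ comes out $O(\ve)$. This is a short explicit integral but it is the step where the hierarchy $\mathfrak m_0\ll\ve\ll\eta$ is actually used, and it is what guarantees that the Grönwall arguments above close with constants independent of $\ve$ (and hence that the development of \cref{prop:development-1} exists and these data estimates propagate into the evolutionary analysis of \cref{sec:near-main} onward).
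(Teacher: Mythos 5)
Your proposal follows essentially the same route as the paper's proof: compute the moments of the explicit ansätze \eqref{eq:aux-def}, \eqref{eq:f-main}, feed them into the constraint ODEs \eqref{eq:constraint-1}--\eqref{eq:constraint-2}, and close by a Gr\"onwall comparison with the dust pair $(\check m,\check Q)$, with the remaining items (Part 1 via \cref{rk:Mink-corner}, the angular momentum bounds via the mass shell relation and $\mathfrak m_0\ll\ve$, the cone estimates via $f=0$ on $C_{-r_2}$) read off from the definitions. One intermediate claim in your write-up is wrong, however: for $\lambda>0$ the main beam's number current $\mathring{\mathcal N}^u+\mathring{\mathcal N}^v$ is of size $O(1)$ on $[r_1,r_2]$, so $\mathring Q$ climbs to $\approx Q_2=\zeta(\lambda)M'$ and $\mathring m$ to $\approx \check m(r_2)$; hence the assertion ``$\mathring m,\mathring Q=O(\theta(\lambda)\eta+\ve)$'' on all of $[\tfrac13 r_1,r_2]$, and the closure ``improving $\tfrac12$ to $O(\theta(\lambda)\eta)$'', fail there (they are only valid on the auxiliary interval $[\tfrac13 r_1,r_1]$). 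The bootstrap and the untrapped property on $[r_1,r_2]$ must instead be closed exactly as your subsequent sentence (and the paper) does: Gr\"onwall the \emph{differences} against the dust constraints \eqref{eq:dust-constraint-3}--\eqref{eq:dust-constraint-4} to get $|\mathring m-\check m|+|\mathring Q-\check Q|\les\theta(\lambda)\eta$, and then deduce $1-\tfrac{2\mathring m}{r}>0$ from the quantitative positivity of $1-\tfrac{2\check m}{r}$ recorded in \eqref{eq:dust-untrapped}, not from smallness of $\mathring m$ itself. With that correction, the rest of your outline --- the $O(\ve)$ size of the main-beam $\mathring{\mathcal T}$-moments versus the $O(1)$ number current, the derivative estimates via \eqref{eq:Omega-derivative-data} and \eqref{eq:time-symmetry-Omega}, and $0\le m\le\varpi$ along the cone --- matches the paper's argument.
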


\begin{proof} Consistency with particles of mass $\mathfrak m\le\mathfrak m_0$ follows immediately from \cref{def:S} and the estimates \eqref{eq:aux-boundedness} and \eqref{eq:main-boundedness-data} by taking $\mathfrak m_0$ sufficiently small. We therefore focus on proving the estimates and as a byproduct infer the untrapped property of $\mathcal S_{\lambda,M',\eta,\ve}$.

\ul{Part 1.} This is a restatement of \cref{rk:Mink-corner}.

 \ul{Part 2.}   The estimate \eqref{eq:aux-1} follows immediately from the definition \eqref{eq:aux-def}. Inserting the ansatz \eqref{eq:aux-1} into \eqref{eq:seed-1}--\eqref{eq:seed-3}, we find
\begin{align*}
   \mathring{\mathcal N}^u(r)=\mathring{\mathcal N}^v(r)&=\pi \theta(\lambda)\eta\Lambda \varphi\!\left(\frac{6}{r_1}r-3\right) ,\\
   \mathring {\mathcal T}^{uu}(r)=\mathring {\mathcal{T}}^{vv}(r)&=\pi \theta(\lambda)\eta\left(\Lambda^2+\int_{-1}^1 x^2\varphi(x)\,dx\right) \varphi\!\left(\frac{6}{r_1}r-3\right),\\
   \mathring{\mathcal T}^{uv}(r)&=\pi \theta(\lambda)\eta\Lambda^2 \varphi\!\left(\frac{6}{r_1}r-3\right)
\end{align*}
for $r\in[\frac 13r_1,\frac 23r_1]$. For $\eta$ sufficiently small, it then follows readily from the system \eqref{eq:constraint-1} and \eqref{eq:constraint-2} that $\mathring Q$ and $\mathring m$ are nonnegative, nondecreasing functions, and 
    \begin{equation*}
  0\le   \mathring Q(r)+\mathring m(r)\les \theta(\lambda) \eta
    \end{equation*}
    for $r\in [\frac 13r_1,\frac 23r_1]$ and
    \begin{equation*}
    \mathring Q(\tfrac 23 r_1)\gtrsim \theta(\lambda) \eta.
    \end{equation*}
    Using the definition of $\mathring\Omega{}^2$, we infer $|\mathring\Omega{}^2-1|\les \theta(\lambda)\eta$, and to estimate $|\partial_v{\log\Omega^2}(-v,v)|$, we observe that 
\begin{align*}
     | \partial_v{\log\Omega^2}(-v,v)|&= \left|\frac{d}{dv}\log\mathring\Omega{}^2(v)\right| \\
    &=\left| \mathring\Omega{}^{-2}\left(\frac{2\mathring m(v)}{v^2}-\frac{2}{v}\frac{d}{dv}\mathring m(v)\right)\right|\\
&=\left|\frac{2\mathring m}{\mathring \Omega{}^2v^2}-\frac{2}{\mathring\Omega{}^2 v}\left(\frac{\mathring\Omega{}^4 v^2}{4}\left(\mathring T{}^{uu}+2\mathring T{}^{uv}+\mathring T{}^{vv}\right)+\frac{\mathring Q{}^2}{2v^2}\right)\right|\\
&\les \theta(\lambda)\eta.
\end{align*}
The estimate for $\partial_u{\log\Omega^2}(-v,v)$ follows from \eqref{eq:time-symmetry-Omega}. This establishes \eqref{eq:aux-boundedness} and \eqref{eq:charge-eta}. Finally, \eqref{eq:ell-aux} follows from the mass shell relation and \eqref{eq:aux-boundedness}, provided $\mathfrak m_0$ is chosen sufficiently small.

\ul{Part 3.} The estimate \eqref{eq:f-main-est} follows immediately from the definition \eqref{eq:f-main}. Inserting the ansatz \eqref{eq:f-main} into \eqref{eq:seed-1}--\eqref{eq:seed-3}, we find 
\begin{align}
   \label{eq:N-main-est-data}   \mathring{\mathcal N}{}^u(r)=\mathring{\mathcal N}{}^v(r)&=\frac{1}{\mathfrak e r^2}\left(1 - \frac{2\check\varpi(r)}{r} + \frac{\check Q^2(r)}{r^2}\right)^2 \check Q'(r),\\
  \label{eq:Tuu-main-est-data}   \mathring{\mathcal T}^{uu}(r)=\mathring {\mathcal T}^{vv}(r)&=\frac{\ve}{6\mathfrak er^2}\left(9+\int_{-1}^1x^2\varphi(x)\,dx\right)\left(1 - \frac{2\check\varpi(r)}{r} + \frac{\check Q^2(r)}{r^2}\right)^2 \check Q'(r),\\
  \label{eq:Tuv-main-est-data}   \mathring{\mathcal T}^{uv}(r)&= \frac{3\ve}{4\mathfrak er^2}\left(1 - \frac{2\check\varpi(r)}{r} + \frac{\check Q^2(r)}{r^2}\right)^2 \check Q'(r),
\end{align} where $\check\varpi$ and $\check Q$ are obtained from $\varsigma(\alpha_{\lambda,M'})$.
 Inserting \eqref{eq:N-main-est-data}--\eqref{eq:Tuv-main-est-data} into \eqref{eq:constraint-1} and \eqref{eq:constraint-2} yields
    \begin{align*}
        \frac{d}{dr}\mathring m&=\frac{\mathring Q{}^2}{2r^2}+\mathrm{Err},\\
        \frac{d}{dr}\mathring Q&=\left(1-\frac{2\mathring m}{r}\right)^{-2}\left(1-\frac{2\check m}{r}\right)^2 \check Q',
    \end{align*}
where \[|\mathrm{Err}|\les \left(1-\frac{2\mathring m}{r}\right)^{-2}\ve\]
    and $\mathring m(r_1),\mathring Q(r_1)\les \theta(\lambda)\eta$.
    Therefore, by \eqref{eq:V-2-intro}, \eqref{eq:dust-untrapped}, and a simple Gr\"onwall and bootstrap argument, $\mathring m$ and $\mathring Q$ exist on $[r_1,r_2]$ and satisfy
    \begin{equation*}
      \sup_{r\in[r_1,r_2]}  |(\mathring m-\check m,\mathring Q-\check Q)(r)|\les \theta(\lambda)\eta.
    \end{equation*}
      This implies the same estimate for $|\mathring\Omega^2-\check\Omega^2|$ by definition. To estimate the other quantities, we may now argue as in the proof of Part 2.

      \ul{Part 4.} Equations \eqref{eq:cone-1}, \eqref{eq:cone-2}, and \eqref{eq:cone-4} follow immediately from the definitions. Inequality \eqref{eq:cone-3} follows from \eqref{eq:main-boundedness-data} provided $\eta$ is chosen sufficiently small.
\end{proof}

\subsection{The main beam in the near region}\label{sec:near-main}

For $\tau_0>0$, let
  \begin{equation}
      \mathcal R_\mathrm{main}^{\tau_0}\doteq\{0\le \tau\le \tau_0\}\cap\{-\tfrac 23r_1\le u\le -r_2\}\subset\mathcal C_{r_2}.\label{eq:def-R-main}
  \end{equation}

 \begin{figure}
\centering{
\def\svgwidth{19pc}
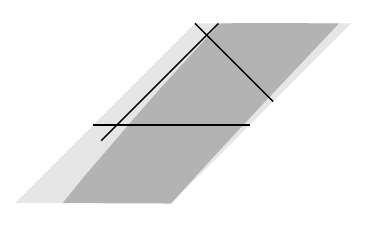}
\caption{Penrose diagram of the bootstrap region $\mathcal R^{\tau_0}_\mathrm{main}$ used in the proof of \cref{lem:approx-1}.}\label{fig:R-main}
\end{figure}

\begin{lem}\label{lem:approx-1}   For any $\breve v$, $\eta$, $\ve$, and $\mathfrak m_0$ satisfying \eqref{eq:parameter-smallness}, the following holds. Any normalized development $(\mathcal U,r,\Omega^2,Q,f)$ of $\mathcal S_{\lambda,M',\eta,\ve}$, such as the one obtained from \cref{prop:development-1}, can be uniquely extended to $\mathcal R_\mathrm{main}^{\frac 12 \breve v-\frac 13r_1}$. Moreover, the solution satisfies the estimates
\begin{gather}
  \label{eq:lem-1-1}  0\le m\le 10M,\quad  0\le Q\le 6M,\\
  \label{eq:lem-1-2}   r\sim v,\quad \Omega^2\sim 1,\\
    \label{eq:lem-1-3} \partial_vr \sim1,\quad   |\partial_ur|\les 1,\\
  |\partial_u\Omega^2|\les 1,\quad  \label{eq:lem-1-4}|\partial_v\Omega^2|\les v^{-3}
\end{gather}
on $\mathcal R_\mathrm{main}^{\frac 12 \breve v-\frac 13r_1}$ and 
\begin{equation}
 \label{eq:lem-1-5}    \tfrac 13\le \partial_vr \le \tfrac 23,\quad \partial_ur \sim -1
\end{equation}
on $\mathcal R_\mathrm{main}^{\frac 12 \breve v-\frac 13r_1}\cap \underline C_{\breve v}$. Finally, the support of the distribution function satisfies 
\begin{equation}
   \label{eq:lem-1-6}  \pi(\spt f)\cap \mathcal R^{\frac 12\breve v-\frac 13r_1}_\mathrm{main}\subset\{-\tfrac 56r_1\le u\le -r_2\}
\end{equation}
and if $u\in[-r_2,-\frac 56r_1]$, $p^u$, and $p^v$ are such that $f(u,\breve v,p^u,p^v)\ne 0$, then \begin{equation}
    \frac{p^u}{p^v}\les \breve v^{-2},\quad \frac{\ell^2}{p^v}\les 1\label{eq:main-final-pv}. 
\end{equation} When $\lambda\le 0$, \eqref{eq:lem-1-1} reads instead
\begin{equation*}
    0\le m\les \theta(\lambda)\eta,\quad 0\le Q\les \theta(\lambda)\eta.
\end{equation*}
 \end{lem}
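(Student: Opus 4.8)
\textbf{Proof proposal for \cref{lem:approx-1}.}

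The plan is to run a bootstrap argument in the region $\mathcal R_\mathrm{main}^{\tau_0}$ with $\tau_0$ increasing up to $\frac 12\breve v-\frac 13 r_1$, using the generalized extension principle (\cref{prop:ext}) to promote $C^1$ bounds into an actual smooth extension of the solution. Concretely, I would let $\mathcal A\subset(0,\frac 12\breve v-\frac 13r_1]$ be the set of $\tau_0$ such that a normalized development exists on $\mathcal R_\mathrm{main}^{\tau_0}$ and satisfies a slightly weakened version of \eqref{eq:lem-1-1}--\eqref{eq:lem-1-4} (e.g.\ with doubled constants) together with a bootstrap version of the support estimate \eqref{eq:lem-1-6}. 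Nonemptiness follows from \cref{prop:development-1} and \cref{lem:est-dat-1}; closedness follows by continuity. The bulk of the work is openness, i.e.\ recovering the estimates with their stated constants on any $\mathcal R_\mathrm{main}^{\tau_0}$ with $\tau_0\in\mathcal A$, after which \cref{prop:ext} (whose hypotheses \eqref{eq:ext-hyp-1}, \eqref{eq:ext-hyp-2} follow immediately from $\Omega^2\sim 1$ and $r\sim v$ on the precompact region) and \cref{prop:strip} let us extend a bit further. Since the main beam is quantitatively supported away from the center by \eqref{eq:lem-1-6}, the solution is vacuum near $\Gamma$ and \cref{lem:Minkowski} handles existence and the Minkowskian identity there; when $\lambda\le 0$ the whole region $\mathcal R_\mathrm{main}$ is electrovacuum and the estimates are immediate from monotonicity of $Q$ and \cref{lem:est-dat-1}, so I would treat $\lambda>0$ from now on.

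The analytic heart is the control of the electromagnetic geodesic flow generated by $\spt(\mathring f_\mathrm{main}^{\,\alpha,\eta,\ve})$, and then the resulting control of the Vlasov moments. First, from the Maxwell equations \eqref{eq:Max-u}, \eqref{eq:Max-v}, nonnegativity of $N^u,N^v$, and the initial lower bound \eqref{eq:charge-eta}, one gets $Q\gtrsim\theta(\lambda)\eta$ throughout $\mathcal R_\mathrm{main}^{\tau_0}$, together with the upper bounds $0\le Q\le 6M$, $0\le m\le 10M$ from the initial data estimates and the monotonicity/smallness of the Vlasov fluxes (to be bootstrapped). Given $Q\gtrsim\eta$ and the near-Minkowskian geometry, I would integrate the Lorentz force equations in the form \eqref{eq:SS-Lor-4}, \eqref{eq:SS-Lor-5}: the charge term $+\mathfrak e Q r^{-2}(\Omega^2 p^v)$ forces $\Omega^2 p^v$ to grow, giving $p^v\gtrsim\ve+\eta\min\{\tau,1\}$, while the mass-shell relation \eqref{eq:mass-shell} with $\ell\sim\ve$ (conserved, by \cref{lem:mass-constant}, from \eqref{eq:ell-estimate-data}) forces $p^u=(\ell^2/r^2+\mathfrak m^2)/(\Omega^2 p^v)\les\ve^2/(r^2(\ve+\eta\min\{\tau,1\}))$. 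A comparison estimate for two geodesics reaching the same $(u,v)$ (differentiating the difference of their $\log(\Omega^2 p^v)$ equations and Gr\"onwall) yields $|p_1^v-p_2^v|\les\ve/\eta^2$ at that point. Feeding these into the phase-space volume formula \eqref{eq:V-v} (using $\ell\,d\ell\sim\ve\,d\ell$ over an interval of length $\sim\ve$, and $dp^v$ over an interval of length $\sim\ve/\eta^2$, divided by $p^v\gtrsim\ve+\eta\min\{\tau,1\}$) gives $\mathscr V\les_\eta\ve^3/(\ve+\eta\min\{\tau,1\})$; combined with $f\les\ve^{-3}$ pointwise (propagated along the flow since $f$ is conserved and $\langle p^\tau\rangle$ is comparable along the flow) and the bound on $p^u$, this produces $T^{uu},T^{uv}\les\ve^{1/2}$ and $\int N^u\,dv'\les\ve^{1/2}$, the decisive smallness. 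These bounds close the wave equations \eqref{eq:r-wave}, \eqref{eq:Omega-wave}, Raychaudhuri \eqref{eq:Ray-u}, \eqref{eq:Ray-v}, and the mass/charge equations to recover \eqref{eq:lem-1-1}--\eqref{eq:lem-1-4} with the stated constants (the $v^{-3}$ decay of $\partial_v\Omega^2$ coming from integrating \eqref{eq:Omega-wave} in $u$, using $\Omega^2|m|\les v\cdot$(small) and $r^2\Omega^4 T^{uv}\les\ve^{1/2}$, over the $u$-interval of length $\les r_1$). The support estimate \eqref{eq:lem-1-6} follows from $p^u/p^v\ll 1$: geodesics move essentially in the $v$-direction, so $u$ can only increase by $O(\ve^{1/2})$ from its initial range $[-r_2,-\frac23 r_1]$, staying inside $[-\frac56 r_1,-r_2]$; and on $\underline C_{\breve v}$ the bounds \eqref{eq:lem-1-5} and \eqref{eq:main-final-pv} follow from the same flow estimates evaluated at $v=\breve v$ (where $\tau\gtrsim\breve v$, so $p^v\gtrsim\eta$ and $\ell^2/p^v\les\ve^2/\eta\les 1$, $p^u/p^v\les\ve^2/(\breve v^2\eta^2)\les\breve v^{-2}$ by the hierarchy), together with integrating Raychaudhuri to see $\partial_u r\sim-1$ once the ingoing flux $T^{vv}$ is small.

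The main obstacle I anticipate is obtaining the volume bound $\mathscr V\les_\eta\ve^3/(\ve+\eta\min\{\tau,1\})$ \emph{uniformly in $\ve$}: the seed $\mathring f_\mathrm{main}$ is pointwise of size $\ve^{-3}$, so one genuinely needs the three independent factors of $\ve$ (two from $\ell\,d\ell$ being localized to an interval of length $\ve$, or rather one factor $\ve$ from $\ell$ and one from $d\ell$; one from the $p^v$-spread) to all be captured with constants depending only on $\eta$ and the fixed data — and the $p^v$-spread estimate $\ve/\eta^2$ is where the auxiliary beam's stabilizing role ($Q\gtrsim\eta$, not just $Q\gtrsim\ve$) is essential. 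A secondary subtlety is that the bootstrap for the geometry and the bootstrap for the flow are coupled: the flow estimates need $\Omega^2\sim 1$, $r\sim v$, and $Q\gtrsim\eta$, while those geometric facts need the smallness of the moments which comes from the flow. One resolves this in the usual way by putting all of these into a single bootstrap set $\mathcal A$ with generous constants and improving them together, using the hierarchy \eqref{eq:parameter-smallness} to absorb all error terms (in particular $\breve v$ is large enough that $\breve v\cdot\ve^{1/2}$-type terms — coming from integrating small fluxes over a $v$-interval of length $\les\breve v$ — are still controlled, because $\ve$ is chosen small \emph{after} $\breve v$). Finally, \cref{prop:ext} only yields a zeroth-order-driven extension, but since our $C^1$ bounds are uniform on the closed precompact region $\overline{\mathcal R_\mathrm{main}^{\tau_0}}$, a standard continuity/maximality argument combined with \cref{prop:strip} upgrades this to the full slab $\mathcal R_\mathrm{main}^{\frac12\breve v-\frac13 r_1}$, and uniqueness is inherited from the uniqueness clause of \cref{prop:char-IVP-Vlasov}.
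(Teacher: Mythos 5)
Your proposal is correct and follows essentially the same route as the paper: a bootstrap on $\mathcal R^{\tau_0}_\mathrm{main}$ in which $Q\gtrsim\eta$ (from Maxwell monotonicity plus the auxiliary beam), conservation of $\ell\sim\ve$, and the Lorentz force equations give $p^v\gtrsim\ve+\eta\min\{\tau,1\}$, a $p^v$-spread of size $\sim\ve/\eta^{2}$, and hence the phase-space volume bound via \eqref{eq:V-v}, which yields $T^{uu},T^{uv}\les\ve^{1/2}$ and $\int N^u\,dv'\les\ve^{1/2}$ and closes the geometric estimates through the wave and Raychaudhuri equations before invoking \cref{prop:ext}. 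The only cosmetic difference is that the paper makes the coupling you flag explicit by bootstrapping $N^v\le Ae^{B\tau}$ (needed to control $\partial_uQ$ in the $p^v$-comparison step), which is exactly the "single bootstrap set with generous constants" you describe.
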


The proof of \cref{lem:approx-1} will be given on \cpageref{pf-of-lemma}. We will make use of a bootstrap argument in the regions $\mathcal R^{\tau_0}_\mathrm{main}$, where $\tau_0$ ranges over $[0,\tfrac 12\breve v-\tfrac 13 r_1]$. For the basic geometric setup of the lemma and its proof, refer to \cref{fig:R-main}. As the proof is much simpler when $\lambda\le 0$ (the main beam is absent), we focus only on the case $\lambda >0$, in which case $\theta(\lambda)\eta=\eta$. 

We first make some definitions that will be used to define the bootstrap assumptions. Let $C_1>0$ be a constant such that 
\begin{equation*}
    C_1^{-1}\le \left(1-\frac{2\check m(v)}{v}\right)^{-1}\le C_1
\end{equation*}
for $v\in [\frac 23 r_1,r_2]$, where $\check m$ is given by \eqref{eq:check-m-defn}. (Recall that $\check m(v)=0$ for $v\le r_1$.) We then define 
\begin{gather*}
         C_2 \doteq 8C_1\left(\frac{3r_2}{2r_1}-1\right)\left(5M+\frac{18M^2}{r_1}\right), \\
   C_3 \doteq 2\max_{v\in[\frac 23r_1,r_2]}|\check{\underline\omega}(v)| + 100C_1e^{C_2}\left(5M+\frac{27M^2}{r_1}\right)\int_{\frac 23r_1}^\infty \frac{dv}{(\frac 23(1-\frac 16e^{-C_2})r_1+\frac 16 e^{-C_2}v)^3},
\end{gather*}
The constants $C_1,C_2$, and $C_3$ do not depend on $\eta$, $\ve$, or $\mathfrak m_0$. 

The quantitative bootstrap assumptions for the proof of \cref{lem:approx-1} are
\begin{gather}
 \label{eq:boot-1}   \tfrac{1}{6}e^{-C_2}\le \partial_vr\le \tfrac{3}{2}e^{C_2},\\
 \label{eq:boot-2} \tfrac 18 C_1^{-1} \le \Omega^{-2}\partial_vr\le  C_1,\\
 \label{eq:boot-3}   |\partial_u{\log\Omega^2}|\le C_3,\\
\label{eq:boot-5}    \varpi\le 5M,\\
\label{eq:boot-6}  N^v\le Ae^{B\tau},
\end{gather} on $\mathcal R_\mathrm{main}^{\tau_0}$
where $A\ge 1$ and $B\ge 1$ are constants to be determined which may depend on $\breve v$ and $\eta$, but not on $\ve$. We now derive some consequences of the bootstrap assumptions for the geometry of the solution.

\begin{lem}\label{lem:geometry-near-main}
If \eqref{eq:parameter-smallness} holds, $\tau_0\in[0,\frac 12 \breve v-\frac 13 r_1]$, $\mathcal R^{\tau_0}_\mathrm{main}\subset\mathcal U$, and the bootstrap assumptions \eqref{eq:boot-1}--\eqref{eq:boot-5} hold on $\mathcal R^{\tau_0}_\mathrm{main}$, then 
\begin{equation}
   \eta\les Q\le 6M,\label{eq:boot-9}
\end{equation}
\begin{equation}
   \tfrac 23(1-\tfrac 16 e^{-C_2}) r_1+\tfrac 16 e^{-C_2}v\le r\le r_2+\tfrac 32 e^{C_2}v,\label{eq:boot-10}
\end{equation}
\begin{equation}
   \Omega^2\approx 1,\label{eq:boot-11}
\end{equation}
\begin{equation}
    |\partial_ur|\les 1\label{eq:boot-12}
\end{equation}
on $\mathcal R^{\tau_0}_\mathrm{main}$.
\end{lem}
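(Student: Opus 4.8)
\textbf{Proof proposal for \cref{lem:geometry-near-main}.}

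The plan is to close all five estimates by ``direct integration'' from the initial data along $\mathcal R^{\tau_0}_\mathrm{main}$, using the bootstrap assumptions \eqref{eq:boot-1}--\eqref{eq:boot-5} and the data estimates from \cref{lem:est-dat-1} (in particular \eqref{eq:main-boundedness-data}, \eqref{eq:Q-est-data-main}, and the fact that the solution is Minkowskian for $v\le \tfrac13 r_1$). The main point is that in this region the solution is still extremely close to a Reissner--Nordstr\"om/Minkowski configuration, so the bootstrap assumptions feed back on themselves with room to spare once $\eta,\ve,\mathfrak m_0$ are taken small.

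First I would prove the bounds on $Q$ in \eqref{eq:boot-9}. The upper bound $Q\le 6M$ follows from integrating the Maxwell equation \eqref{eq:Max-v} in $v$ (or \eqref{eq:Max-u} in $u$): by the monotonicity remark after \cref{def:sph-sym-EMV}, $Q$ is increasing in $v$ and decreasing in $u$, so $Q$ along $\mathcal R^{\tau_0}_\mathrm{main}$ is controlled by its values on the initial data hypersurface $\{\tau=0\}$, where $Q\le \check Q + O(\eta)\le Q_2 + O(\eta)\le 6M$ by \eqref{eq:main-boundedness-data} and the definition of $\mathcal P_\Gamma$; here I also use the bootstrap $\varpi\le 5M$ to bound $r^2\Omega^2 N^v$ crudely via $N^v\le Ae^{B\tau}$ over the short $u$-length $r_2-\tfrac23 r_1\lesssim 1$, and smallness of $\ve$ to make the resulting error negligible (this is where the parameter hierarchy \eqref{eq:parameter-smallness} is used). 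For the lower bound $\eta\lesssim Q$: on the cone $\underline C_{v}$ with $v$ in the support of $\mathring f^{\,r_1,\eta}_\mathrm{aux}$ the data estimate \eqref{eq:Q-est-data-main} gives $Q(-v,v)\gtrsim \eta$, and then monotonicity of $\mathfrak e Q$ in $u$ and $v$ (again using $N^u,N^v\ge 0$ since $f\ge 0$) propagates $Q\gtrsim\eta$ to all of $\mathcal R^{\tau_0}_\mathrm{main}$ lying to the causal future; the piece of $\mathcal R^{\tau_0}_\mathrm{main}$ not in this future is handled because there the auxiliary charge has already been ``deposited'' along $\Sigma_\b$.

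Next, the bound \eqref{eq:boot-10} on $r$ comes from integrating $\partial_v r$ and $\partial_u r$. Integrating \eqref{eq:boot-1} in $v$ from the center/initial data gives the stated two-sided bound on $r$ directly, once one notes that along $\{\tau=0\}$ one has $r(-v,v)=v$ by \eqref{eq:development-1} and $r=\tfrac12(v-u)$ on the Minkowskian corner; the lower bound also uses that the $u$-width of the region is at most $r_2-\tfrac23 r_1$ so the loss from integrating $\partial_u r$ (bounded in magnitude by $\tfrac32 e^{C_2}$, say, via Raychaudhuri and the data) is absorbed into the constants defining the region. Then \eqref{eq:boot-11}, $\Omega^2\approx 1$: combine \eqref{eq:boot-1} and \eqref{eq:boot-2} to get $\Omega^2\sim \partial_v r\sim 1$. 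Finally \eqref{eq:boot-12}, $|\partial_u r|\lesssim 1$: rewrite Raychaudhuri \eqref{eq:Ray-u} as $\partial_u(\partial_u r/\Omega^2) = -\tfrac14 r\Omega^2 T^{vv}\le 0$, so $-\partial_u r/\Omega^2$ is increasing in $u$; integrating from $u=-r_2$ (where on $\{\tau=0\}$, \eqref{eq:time-symmetry-r} and $\partial_v r = \tfrac12$ give $\partial_u r = -\tfrac12$, hence $-\partial_u r/\Omega^2\lesssim 1$) over the bounded $u$-interval, using $\int T^{vv}$ controlled by the monotonicity of the Hawking mass (i.e.\ the a priori energy estimate $\int r^2\Omega^2 T^{vv}\partial_v r\, du \lesssim \varpi \lesssim M$ from \eqref{eq:mod-Hawking-u}, valid since $\partial_v r>0$), gives $|\partial_u r/\Omega^2|\lesssim 1$, and then \eqref{eq:boot-11} converts this to $|\partial_u r|\lesssim 1$.

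The main obstacle, and the only genuinely delicate point, is obtaining the $u$-integral bound on $T^{vv}$ (and similarly on the Maxwell source $N^v$) \emph{independently of the singular $\ve^{-3}$ size of $\mathring f_\mathrm{main}$}. This is precisely what the monotonicity/energy identities buy us: although $T^{vv}$ itself is large in $\ve^{-1}$ at a point, its flux $\int r^2\Omega^2 T^{vv}\partial_v r\,du$ is bounded by the change in $\varpi$ across the ingoing cone, which is $O(M)$ by bootstrap \eqref{eq:boot-5}, uniformly in $\ve$. One must be careful that the energy estimate is applied with $\partial_v r>0$ (guaranteed by \eqref{eq:boot-1}) so that the relevant mass difference has a sign, and that the crude pointwise bootstrap $N^v\le Ae^{B\tau}$ is only ever used on short $\ve$-independent $u$- or $v$-intervals multiplied by an explicit power of $\ve$, so it contributes only a lower-order error. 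Everything else is a routine integration of the evolution equations \eqref{eq:r-wave}, \eqref{eq:Ray-u}, \eqref{eq:Max-u}, \eqref{eq:Max-v} against the data bounds of \cref{lem:est-dat-1}.
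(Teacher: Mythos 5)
Your treatment of \eqref{eq:boot-10} and \eqref{eq:boot-11} is the paper's argument (integrate the bootstrap bound \eqref{eq:boot-1} in $v$ from $\{\tau=0\}$, where $r(-v,v)=v$, and multiply \eqref{eq:boot-1} by \eqref{eq:boot-2}); the extra bookkeeping about the $u$-width and a bound on $\partial_ur$ is not needed there. For \eqref{eq:boot-9}, however, two things in your write-up go wrong. First, you invoke the bootstrap $N^v\le Ae^{B\tau}$, i.e.\ \eqref{eq:boot-6}, which is \emph{not} among the hypotheses of this lemma (only \eqref{eq:boot-1}--\eqref{eq:boot-5} are assumed; \eqref{eq:boot-6} only enters in \cref{lem:pv-spread}). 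It is also unnecessary: $\partial_uQ\le 0$ by \eqref{eq:Max-u} and $N^v\ge0$ already gives the upper bound by tracing back along constant $v$ to $\{\tau=0\}$ or to the data cone $C_{-r_2}$, with no error term to absorb. Second, your lower-bound argument is backwards: $\mathfrak eQ$ \emph{decreases} in $u$, so a lower bound at a point of the auxiliary beam does not propagate to its causal future, and in any case the causal future of that sphere barely meets $\mathcal R^{\tau_0}_\mathrm{main}$. The correct (and simpler) argument, which is what the paper does, is that every $(u,v)\in\mathcal R^{\tau_0}_\mathrm{main}$ has $-u\ge\tfrac23 r_1$, so $\partial_vQ\ge0$ and \eqref{eq:Q-est-data-main} give $Q(u,v)\ge Q(u,-u)=\mathring Q(-u)\gtrsim\eta$; this is essentially the remark you relegate to the vague ``charge already deposited along $\Sigma_\b$'' aside.

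The genuine gap is in \eqref{eq:boot-12}. The paper neither integrates Raychaudhuri nor uses any flux bound on $T^{vv}$: it solves the definition of the Hawking mass \eqref{eq:Hawking-mass}, \eqref{eq:varpi-definition} for the transverse derivative, $\partial_ur=-\tfrac14\bigl(1-\tfrac{2\varpi}{r}+\tfrac{Q^2}{r^2}\bigr)\tfrac{\Omega^2}{\partial_vr}$ (this is \eqref{eq:solved-for-nu}), and reads off $|\partial_ur|\les1$ from \eqref{eq:boot-2}, \eqref{eq:boot-5}, \eqref{eq:boot-9} and the lower bound on $r$ in \eqref{eq:boot-10} — pure algebra, valid regardless of the sign of $\partial_ur$. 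Your energy-estimate route does not close under the stated hypotheses: to get $\partial_u\varpi\le0$ from \eqref{eq:mod-Hawking-u} you need the sign $\partial_ur\le0$ to control the $T^{uv}\partial_ur$ term, and to convert monotonicity into the flux bound $\int r^2\Omega^2T^{vv}\partial_vr\,du\les M$ you need a lower bound on $\varpi$ (nonnegativity of the Hawking mass); neither is assumed here, and the paper only establishes $m\ge0$ \emph{after} the bootstrap closes, in the proof of \cref{lem:approx-1}. You would also need the transverse datum $\partial_ur$ along all of $C_{-r_2}$, which is not free data (it is recoverable, but via exactly the Hawking-mass identity you are trying to avoid). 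Relatedly, your closing claim that the ``only delicate point'' is an $\ve$-uniform $u$-flux bound on $T^{vv}$ and $N^v$ misidentifies the structure of this step: no such bound is needed anywhere in this lemma, and the singular $\ve^{-3}$ size of $\mathring f_\mathrm{main}$ only has to be confronted later, in \cref{lem:pv-spread} and in the proof of \cref{lem:approx-1}.
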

We will frequently use that \eqref{eq:boot-10} implies
\begin{equation*}
    r\sim v
\end{equation*} on $\mathcal R^{\tau_0}_\mathrm{main}$ without further comment.
\begin{proof}
For $\eta$ and $\ve$ sufficiently small, $\eta\les Q\le 6M$ on $\{\tau=0\}\cap\{-r_2\le u\le -\tfrac 23r_1\}$ and $\{\tau\ge 0\}\cap\{v=r_2\}$ by \cref{lem:est-dat-1}. Since $N^v\ge 0$ by definition, Maxwell's equation \eqref{eq:Max-u} implies the upper bound in \eqref{eq:boot-9}. The lower bound also follows from Maxwell's equation \eqref{eq:Max-v} and $N^u\ge 0$. The inequality \eqref{eq:boot-10} follows from integrating the bootstrap assumption \eqref{eq:boot-1}. The inequality \eqref{eq:boot-11} follows directly by multiplying the bootstrap assumptions \eqref{eq:boot-1} and \eqref{eq:boot-2}. To estimate $\partial_u r$, we rewrite the definition of the Hawking mass \eqref{eq:Hawking-mass} and the renormalized Hawking mass \eqref{eq:varpi-definition} as
\begin{equation}
  \label{eq:solved-for-nu}  \partial_ur=-\frac{1}{4}\left(1-\frac{2\varpi}{r}+\frac{Q^2}{r^2}\right)\frac{\Omega^2}{\partial_vr}.
\end{equation}
Now \eqref{eq:boot-12} follows immediately from \eqref{eq:boot-2}, \eqref{eq:boot-5}, and \eqref{eq:boot-10}.
\end{proof}

 We now use the basic geometric control obtained in \cref{lem:geometry-near-main} to obtain crucial control of the electromagnetic geodesic flow. It is convenient to first introduce some notation. Let $\Gamma_f$ denote the set of maximally extended electromagnetic geodesics $\gamma:I\to \mathcal U$, where $I$ is an interval, such that $(\gamma,p)(I)\subset\spt f$, where $p=d\gamma/ds$. If $\gamma$ passes through the point $(u,v)$, we denote by $s_{u,v}$ the parameter value such that $\gamma(s_{u,v})=(u,v)$. Let $\Gamma_f(u,v)$ denote the subset of $\Gamma_f$ consisting of curves passing through $(u,v)$. Note that every curve in $\Gamma_f$ intersects $\mathcal C_{r_2}\cap\{\tau=0\}$.

\begin{lem}\label{lem:pv-spread}
    If \eqref{eq:parameter-smallness} holds, $A$ is sufficiently large depending only on $\alpha$, $\tau_0\in[0,\frac 12 \breve v-\frac 13 r_1]$, $\mathcal R^{\tau_0}_\mathrm{main}\subset\mathcal U$, the bootstrap assumptions \eqref{eq:boot-1}--\eqref{eq:boot-6} hold on $\mathcal R^{\tau_0}_\mathrm{main}$, and $(u,v)\in\mathcal R^{\tau_0}_\mathrm{main}$, then
    \begin{equation}
        \mathscr V(u,v)\les \frac{\ve^3}{\eta(\ve+\eta\min\{\tau,1\})}\left(1+\frac{A}{B}e^{B\tau}\right),\label{eq:volume-main}
    \end{equation}
    where $\mathscr V$ is the phase space volume function defined by \eqref{eq:V-v}. Furthermore, if $\gamma\in\Gamma_f(u,v)$, then
    \begin{equation}
         0<u-u_0\les \eta^{-1}\ve,\label{eq:u-spread-1}
    \end{equation}
    \begin{equation}
       \ve+\eta\min\{\tau,1\} \les p^v(s_{u,v})\les \ve+\min\{\tau,1\}.\label{eq:p-v-main-1}
    \end{equation}
    where $u_0$ is the retarded time coordinate of the intersection of $\gamma$ with $\{\tau=0\}$.
\end{lem}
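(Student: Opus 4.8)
The plan is to prove the three estimates of \cref{lem:pv-spread} in the order \eqref{eq:p-v-main-1}, then \eqref{eq:u-spread-1}, then \eqref{eq:volume-main}, since the volume bound is a consequence of how individual electromagnetic geodesics are dispersed in momentum and retarded time. First I would analyze a single electromagnetic geodesic $\gamma\in\Gamma_f(u,v)$ using the rewritten Lorentz force equation \eqref{eq:SS-Lor-5}. By \cref{lem:geometry-near-main}, along $\gamma$ we have $Q\sim\eta$ (using $Q\gtrsim\eta$ from \eqref{eq:boot-9}), $\Omega^2\sim 1$, $r\sim v$, $|\partial_u{\log\Omega^2}|\les 1$, $|\partial_ur|\les 1$, and $\ell\sim\ve$ by conservation of angular momentum and \eqref{eq:ell-estimate-data}. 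Feeding these into \eqref{eq:SS-Lor-5} gives, schematically,
\[
\frac{d}{ds}(\Omega^2p^v) = \underbrace{\mathfrak e\frac{Q}{r^2}}_{\sim\,\eta/v^2}(\Omega^2p^v) + O\!\left(\frac{\ell^2}{r^2}\right),
\]
where the error is $O(\ve^2/v^2)$, much smaller than the driving term once $\eta\gg\ve$. Since $\tau$ is strictly increasing along $\gamma$ with $p^\tau=d\tau/ds\sim p^v$ (outgoingness forces $p^u\ll p^v$; this is part of the bootstrap and must be tracked carefully), one converts $s$-integration to $\tau$-integration: $ds\approx d\tau/p^v$. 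On the initial data hypersurface $p^v(s_{u_0})\sim\ve$ by \eqref{eq:f-main-est}. Integrating the (almost) linear ODE, the accumulated electromagnetic impulse over the $\tau$-interval $[0,\tau]$ is $\int\mathfrak eQ/r^2\,ds\gtrsim\int_0^{\min\{\tau,1\}}\eta/v^2\cdot(v\,d\tau')$—here I would use $r\sim v$ and that in the near region $v$ ranges over a bounded interval until $\tau\sim 1$, so the impulse is $\gtrsim\eta\min\{\tau,1\}$ in $\Omega^2p^v$ \emph{additively}, not multiplicatively, because $p^v$ itself is small. A clean way: from $\frac{d}{ds}(\Omega^2p^v)\geq \mathfrak e\frac{Q}{r^2}(\Omega^2p^v)\geq 0$ one gets monotonicity and a lower bound $\Omega^2p^v\gtrsim\ve$; to get the $\eta\min\{\tau,1\}$ gain one instead integrates $\frac{d}{d\tau}(\Omega^2p^v) = \frac{1}{p^\tau}\frac{d}{ds}(\Omega^2p^v)\gtrsim \frac{\mathfrak eQ}{r^2}\cdot\frac{\Omega^2p^v}{p^\tau}\sim \frac{\eta}{v^2}\cdot v = \frac{\eta}{v}\gtrsim\eta$ (using $v\les 1$ in the near region and $p^\tau\sim p^v$), yielding $\Omega^2p^v(\tau)\gtrsim\ve+\eta\min\{\tau,1\}$. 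The upper bound $p^v\les\ve+\min\{\tau,1\}$ comes from the same ODE with $Q\le 6M$ and a Grönwall estimate, noting the exponential factor $\exp(\int\mathfrak eQ/r^2\,ds)\les 1$ in the near region since the integral is $O(1)$. This establishes \eqref{eq:p-v-main-1}.

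Next, for \eqref{eq:u-spread-1}, I would estimate the total change in the retarded coordinate $u$ along $\gamma$. We have $\frac{du}{ds}=p^u$, so $u-u_0=\int_{s_{u_0}}^{s_{u,v}}p^u\,ds = \int_0^\tau p^u\frac{d\tau'}{p^\tau}$. Using the mass shell relation \eqref{eq:mass-shell}, $\Omega^2p^up^v=\ell^2/r^2+\mathfrak m^2$, so $p^u = (\ell^2/r^2+\mathfrak m^2)/(\Omega^2p^v)\les (\ve^2/v^2+\mathfrak m_0^2)/(\ve+\eta\min\{\tau,1\})$ by the lower bound just proved. Since $p^\tau\sim p^v\gtrsim\ve+\eta\min\{\tau,1\}$, we get $p^u/p^\tau\les (\ve^2/v^2+\mathfrak m_0^2)/(\ve+\eta\min\{\tau,1\})^2$. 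Integrating in $\tau$ over $[0,\tau_0]$ with $\tau_0\les\breve v$, and crucially using $\mathfrak m_0\ll\ve$ so the mass term is negligible and $v\gtrsim r_1$ so $\ve^2/v^2\les\ve^2$: the integral $\int_0^1\ve^2/(\ve+\eta\tau')^2\,d\tau'\les\ve^2/(\eta\ve)=\ve/\eta$, plus the contribution from $\tau'\in[1,\tau_0]$ which is $\les\ve^2\tau_0/\eta^2$—this last piece is where I must be careful; actually in the near region the relevant bound should collapse to $\les\eta^{-1}\ve$, which works if the $\tau'\geq 1$ contribution is handled using better decay of $\ve^2/v^2$ as $v$ grows, but within $\mathcal R_\mathrm{main}^{\frac12\breve v-\frac13 r_1}$ we may only know $v\les\breve v$, so I would restrict attention to where the estimate is actually needed and use the $\breve v^{-1}$-smallness in the hierarchy. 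The positivity $u-u_0>0$ is immediate since $p^u>0$. This gives \eqref{eq:u-spread-1}.

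Finally, for the volume bound \eqref{eq:volume-main}, I would use the representation \eqref{eq:V-v} of $\mathscr V$ as an integral over $(p^v,\ell)$ of $dp^v\,\ell\,d\ell/(r^2p^v)$ restricted to the support of $f$ at $(u,v)$. By conservation of angular momentum and \eqref{eq:ell-estimate-data}, the $\ell$-support has measure $\les\ve$ and $\ell\sim\ve$, contributing $\ell\,d\ell\les\ve^2$ and $1/p^v\les 1/(\ve+\eta\min\{\tau,1\})$ from \eqref{eq:p-v-main-1}; together with $1/r^2\les 1/r_1^2\les 1$, it remains to bound the \emph{$p^v$-extent} of $\spt f|_{(u,v)}$. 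This is the heart of the matter: I would show that two electromagnetic geodesics $\gamma_1,\gamma_2\in\Gamma_f(u,v)$ have $|p_1^v-p_2^v|\les\ve/\eta^2$ at $(u,v)$ (this is the ``$|p^v_1-p^v_2|\les\ve/\eta^2$'' estimate advertised in the proof outline in \cref{sec:intro-near}), by writing a difference equation for $\delta(p^v)\doteq\Omega^2p_1^v-\Omega^2p_2^v$ from \eqref{eq:SS-Lor-5} along a common parametrization, controlling it by Grönwall with the exponential rate $\int\mathfrak eQ/r^2$ being $O(1)$ in the near region, and noting the data spread is $|\Delta p^v|\les\ve$ on $\{\tau=0\}$ while the difference in the $\ell^2/r^2$ forcing contributes a further $O(\ve^2)$-type term that after division by the smallest value $\gtrsim\eta^2\min\{\tau,1\}^2$ or similar accumulates to the stated $\ve/\eta^2$. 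Combining $\ell\,d\ell\les\ve^2$, $p^v$-extent $\les\ve/\eta^2$ (or $\les\ve/\eta$—I'd keep the sharper power needed), and $1/p^v\les1/(\ve+\eta\min\{\tau,1\})$, and finally inserting the bootstrap bound \eqref{eq:boot-6} on $N^v$ to account for the factor $(1+\frac{A}{B}e^{B\tau})$ coming from the possibility that additional curves have entered (via $\spt f$ being transported, more precisely from bounding the $p^v$-support using the flux of $N^v$), I arrive at \eqref{eq:volume-main}. \textbf{The main obstacle} I anticipate is bookkeeping the $p^v$-extent of the support: one must simultaneously use that $f$ is transported (so the \emph{measure} of the momentum support evolves by the Jacobian of the flow map, not arbitrarily), that the flow contracts in $p^u$ and expands controllably in $p^v$, and that the charge $Q$ — which drives everything — stays $\gtrsim\eta$ uniformly; keeping the powers of $\eta$ and $\ve$ consistent through the Grönwall and change-of-variables steps, and correctly identifying where the $A/B\,e^{B\tau}$ factor must appear versus where it can be absorbed, is the delicate part, and is exactly why the auxiliary beam (guaranteeing $Q\gtrsim\eta$) was introduced.
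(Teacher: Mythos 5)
Your overall architecture (lower bound on $p^v$ from the repulsive term, then the $u$-deflection, then the momentum spread feeding into \eqref{eq:V-v}) matches the paper's, but there is a genuine gap at the step on which everything else rests: you repeatedly invoke a Grönwall argument in the parameter $s$ (or $\tau$) whose exponential factor is $\exp\bigl(\int \mathfrak e Q/r^2\,ds\bigr)$, and you assert this integral is $O(1)$ ``in the near region.'' It is not. Along a main-beam geodesic one has $ds\sim dv/p^v$ with $p^v\gtrsim \ve+\eta\min\{1,v-v_0\}$, so $\int \mathfrak e Q r^{-2}\,ds\sim \int_{v_0}^{v}\mathfrak e Q\Omega^2 r^{-2}(p^v)^{-1}dv'\gtrsim \eta^{-1}\log(\eta/\ve)$ (and is bounded above only by a quantity of the same size), which diverges in the hierarchy $\ve\ll\eta$. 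Consequently your proposed proof of the upper bound in \eqref{eq:p-v-main-1}, and more seriously your difference/Grönwall argument for the $p^v$-spread of $\Gamma_f(u,v)$ that underlies \eqref{eq:volume-main}, do not close. The missing idea is a null-structure observation: reparametrize the Lorentz force law by the advanced time $v$, so that the equation becomes $\frac{d}{dv}(\Omega^2p^v)=\frac{1}{p^v}\bigl(\partial_u\log\Omega^2-\tfrac{2\partial_ur}{r}\bigr)\tfrac{\ell^2}{r^2}+\mathfrak e\tfrac{Q}{r^2}\Omega^2$; the driving term is then independent of $p^v$, the angular-momentum term is $O(\ve/v'^2)$ after dividing by $p^v\gtrsim\ve$, and one obtains an additive representation formula $\Omega^2p^v=\int_{v_0}^{v}\mathfrak e Q\Omega^2r^{-2}\,dv'+O(\ve)$ with no exponential factor at all. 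Both bounds in \eqref{eq:p-v-main-1} (after converting $v-v_0$ to $\tau$ via \eqref{eq:u-spread-1}) and the spread estimate then follow by comparing this integral for two geodesics through $(u,v)$.

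Two further points. First, your account of where the factor $1+\tfrac{A}{B}e^{B\tau}$ enters (``additional curves entering,'' a Jacobian of the flow map) is not the actual mechanism and is too vague to certify: it arises because, when you replace the integrand $Q\Omega^2r^{-2}$ evaluated along $\gamma^u(v')$ by its value along the fixed ray $\{u\}\times[v_0,v]$, the error is controlled by $\sup|\partial_uQ|\cdot|\gamma^u(v')-u|\les A e^{B(u+v')/2}\cdot\eta^{-1}\ve$, using Maxwell's equation \eqref{eq:Max-u} together with the bootstrap bound \eqref{eq:boot-6} and the deflection estimate \eqref{eq:u-spread-1}; integrating in $v'$ produces exactly $\tfrac{\ve}{\eta}\bigl(1+\tfrac{A}{B}e^{B\tau}\bigr)$. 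Second, the spread you hedge between, $\ve/\eta^2$ versus $\ve/\eta$, matters: a spread of only $\ve/\eta^2$ yields $\mathscr V\les \ve^3/\bigl(\eta^2(\ve+\eta\min\{\tau,1\})\bigr)$, which is \emph{larger} than the right-hand side of \eqref{eq:volume-main} by a factor $\eta^{-1}$ and hence does not prove the lemma as stated; the representation-formula argument gives the needed $\tfrac{\ve}{\eta}\bigl(1+\tfrac{A}{B}e^{B\tau}\bigr)$. (Minor, repairable issues: your lower-bound computation contains a spurious factor of $v$ and the blanket claim ``$v\les1$ in the near region'' — valid only for $\tau\les1$, which is all that is needed; and $p^\tau\sim p^v$ is not a bootstrap assumption but does follow from the mass shell relation once $p^v\gtrsim\ve$ is known.)
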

\begin{proof}
    Let $\gamma\in \Gamma_f(u,v)$. We will use the Lorentz force written in the form of equation \eqref{eq:SS-Lor-5} to estimate $p^v$. Since $(u,v)\in\mathcal R_\mathrm{main}^{\tau_0}$, $\gamma$ intersects $\{\tau=0\}$ in $\spt(\mathring f_\mathrm{main}^{\,\alpha,\ve})$ and therefore has angular momentum $\ell\sim \ve$. By the bootstrap assumptions and \cref{lem:geometry-near-main}, it holds that 
\begin{equation}
\left|\left(\partial_u{\log\Omega^2} - \frac{2\partial_ur}{r} \right)\frac{\ell^2}{r^2} \right|\les \frac{\ve^2}{v^2}\label{eq:2a1}
\end{equation} along the entire length of $\gamma$. Let $(u_0,v_0)$ be the coordinate of the intersection of $\gamma$ with $\{\tau=0\}$. Using \eqref{eq:boot-9}--\eqref{eq:boot-11} and the fact that $p^v(s_{u_0,v_0})\in[\ve,2\ve]$, we have
\begin{equation*}
 \left.   \mathfrak e \frac{Q}{r^2}(\Omega^2p^v)\right|_{s=0}\gtrsim  \eta\ve.
\end{equation*}
If $\ve$ is sufficiently small (independent of $\gamma$, but depending on $\eta$), these estimates show that 
\begin{equation*}
    \frac{d}{ds}(\Omega^2p^v)\gtrsim \eta\ve>0
\end{equation*}
along $\gamma$. Using \eqref{eq:boot-11}, we see that 
\begin{equation}
    p^v\gtrsim \ve\label{eq:2a2}
\end{equation}
along  $\gamma$. 

It is now convenient to parametrize $\gamma$ by the advanced time coordinate $v$ of the spacetime. The Lorentz force equation then becomes 
 \begin{align}
 \label{eq:gamma-u}\frac{d\gamma^u}{dv}&=\frac{p^u}{p^v}=\frac{\ell^2r^{-2}+\mathfrak m^2}{\Omega^2(p^v)^2},\\
  \label{eq:p-u-v} \frac{d}{dv} (\Omega^2 p^v) &=  \frac{1}{p^v}\left(\partial_u{\log\Omega^2} - \frac{2\partial_ur}{r}  \right)\frac{\ell^2}{r^2} + \mathfrak e  \frac{Q}{r^2}\Omega^2,
\end{align} where however $p^v$ is still given by $d\gamma^v/ds$ and we have used the mass shell relation in \eqref{eq:gamma-u}.
By \eqref{eq:2a1} and \eqref{eq:2a2}, 
\begin{equation*}
   \left|\frac{1}{p^v} \left(\partial_u{\log\Omega^2} - \frac{2\partial_ur}{r}  \right)\frac{\ell^2}{r^2}\right|\les \frac{\ve}{v^2}
\end{equation*}
along $\gamma$. This implies, using $v_0\ge r_1$ and hence $\int_{v_0}^\infty v'^{-2}\,dv'\les 1$, that
\begin{equation}\label{eq:pv-formula}
\left|   \left.\Omega^2p^v\right|_{(\gamma^u(v),v)} -\int_{v_0}^v \left. \mathfrak e \frac{Q}{r^2}\Omega^2\right|_{(\gamma^u(v'),v')} dv'\right|\les \ve.
\end{equation}
Using \cref{lem:geometry-near-main}, we readily deduce that 
\begin{equation}
    \int_{v_0}^v \left. \mathfrak e \frac{Q}{r^2}\Omega^2\right|_{(\gamma^u(v'),v')}dv'\les 1\label{eq:pv-upper-bound-proof}
\end{equation}
and 
\begin{equation*}
    \int_{v_0}^v \left. \mathfrak e \frac{Q}{r^2}\Omega^2\right|_{(\gamma^u(v'),v')}dv'\gtrsim \eta\left(\frac{1}{v_0}-\frac{1}{v}\right)\gtrsim \eta\min\{1,v-v_0\}.
\end{equation*}
Combining this with \eqref{eq:2a2} and \eqref{eq:pv-formula}, we deduce
\begin{equation}
\ve+\eta\min\{1,v-v_0\}  \les  p^v(v)\les 1\label{eq:p-1}
\end{equation}
along $\gamma$.
    
We are now able to prove \eqref{eq:u-spread-1}. Since $r\les v\les\breve v$, $r^2\mathfrak m^2\les \ve^2\les \ell^2$ for $\mathfrak m_0$ sufficiently small while respecting the hierarchy \eqref{eq:parameter-smallness}. Therefore, using also \cref{lem:geometry-near-main} and \eqref{eq:p-1}, we find
\begin{equation}\label{eq:gamma-u-2}
    \frac{d\gamma^u}{dv} \les \frac{\ve^2}{v^2(\ve+\eta\min\{1,v-v_0\})^2}.
\end{equation}
If $v\in [v_0,v_0+1]$, we compute
\begin{equation*}
    \int_{v_0}^v \frac{\ve^2}{v'^2(\ve+\eta(v'-v_0))^2} \,dv'\les \frac{\ve(v-v_0)}{\ve+\eta(v-v_0)}\les \eta^{-1}\ve
\end{equation*}
and if $v\in[v_0+1,\infty)$, we compute 
\begin{equation*}
    \int_{v_0+1}^v \frac{\ve^2}{v'^2(\ve+\eta)^2} \,dv'\les \eta^{-2}\ve^2\les \eta^{-1}\ve,
\end{equation*}
for $\ve\le \eta$. Therefore, integrating \eqref{eq:gamma-u-2}, we find 
\begin{equation}
   u-u_0= \gamma^u(v)-\gamma^u(v_0)=\int_{v_0}^v \frac{d\gamma^u}{dv}\,dv' \les \eta^{-1}\ve,\label{eq:d1}
\end{equation} which proves \eqref{eq:u-spread-1}. 

Now $u_0=-v_0$, so \eqref{eq:d1} implies $u + v_0\les \eta^{-1}\ve$.
Therefore, we have
\begin{equation*}
   2\tau= v+u = ( v-v_0)+(u + v_0)\les \eta^{-1}\ve +v-v_0
\end{equation*}
and 
\begin{equation*}
    \ve+\eta\tau \les  \ve+\eta(v-v_0) \les p^v(v)
\end{equation*}
for $\tau\les 1$. This, together with \eqref{eq:pv-formula} and \eqref{eq:pv-upper-bound-proof}, proves \eqref{eq:p-v-main-1}.

To prove \eqref{eq:volume-main}, we use the approximate representation formula \eqref{eq:pv-formula} for $p^v$ and the change of variables formula \eqref{eq:V-v}. Using the bootstrap assumptions, \cref{lem:geometry-near-main}, the mean value theorem, Maxwell's equation \eqref{eq:Max-u}, and the estimate \eqref{eq:u-spread-1},  we have 
    \begin{equation*}
      \left|\left.\frac{Q}{r^2}\Omega^2\right|_{(\gamma^u(v'),v')}-\left.\frac{Q}{r^2}\Omega^2\right|_{(u,v')}\right|\les \left(1+\sup_{[\gamma^u(v'),u]\times \{v'\}}|\partial_uQ|\right)\left(\gamma^u(v')-u\right)\les A\eta^{-1}\ve e^{B(u+v')/2}
    \end{equation*}
    for every $v'\in [v_0,v]$ and $A$ sufficiently large depending only on $\alpha$. Using this and \eqref{eq:pv-formula}, we find
\begin{align}
\nonumber  \left|   \left. \Omega^2 p^v\right|_{(\gamma^u(v),v)}-\int_{v_0}^v\left.\mathfrak e\frac{Q}{r^2}\Omega^2\right|_{(u,v')} \,dv'\right|   &\les \left|   \left.\Omega^2p^v\right|_{(\gamma^u(v),v)} -\int_{v_0}^v \left.\mathfrak e\frac{Q}{r^2}\Omega^2\right|_{(\gamma^u(v'),v')} \,dv'\right|\\
   \nonumber &\quad\quad +\int_{v_0}^v\left|\left.\frac{Q}{r^2}\Omega^2\right|_{(\gamma^u(v'),v')}-\left.\frac{Q}{r^2}\Omega^2\right|_{(u,v')}\right|dv'\\
\nonumber    &\les \ve+ \int_{v_0}^vA\eta^{-1}\ve e^{B(u+v')/2}\,dv'\\
    &\le\ve\left(1+\frac{A}{B\eta} e^{B(u+v)/2}\right).\label{eq:pv-est-2}
\end{align}
 Next, we estimate
    \begin{equation}
  0\le  \int_{-u}^{v_0} \left.\mathfrak e\frac{Q}{r^2}\Omega^2\right|_{(u,v')}\,dv'\les v_0+u=u-u_0\les \eta^{-1}\ve.\label{eq:pv-est-3}
    \end{equation}
    Combining \eqref{eq:pv-est-2} and \eqref{eq:pv-est-3} yields
    \begin{equation}
      \left|   \left. \Omega^2 p^v\right|_{(u,v)}-\int_{-u}^v \left.\mathfrak e\frac{Q}{r^2}\Omega^2\right|_{(u,v')} \,dv'\right|\les \frac\ve\eta\left(1+\frac{A}{B} e^{B(u+v)/2}\right).\label{eq:pv-est-4}
    \end{equation}
Therefore, if $\gamma_1,\gamma_2\in \Gamma_f(u,v)$ and we parametrize both by advanced time, and denote the $v$-momentum of $\gamma_i$ by $p^v_i$ for $i=1,2$, we find
\begin{align*}
|p^v_1(v)-p^v_2(v)|&\le  \left|   p^v_1(v)-\frac{1}{\Omega^{2}(u,v)}\int_{-u}^v \left.\mathfrak e\frac{Q}{r^2}\Omega^2\right|_{(u,v')} \,dv'\right|+ \left|   p^v_2(v)-\frac{1}{\Omega^{2}(u,v)}\int_{-u}^v \left.\mathfrak e\frac{Q}{r^2}\Omega^2\right|_{(u,v')} \,dv'\right|\\
&\les\frac\ve\eta\left(1+\frac{A}{B} e^{B(u+v)/2}\right).
\end{align*}
Inserting this estimate, \eqref{eq:p-v-main-1}, and $\ell\sim \ve$ in \eqref{eq:V-v} yields \eqref{eq:volume-main}, as desired.\end{proof}

\begin{proof}[Proof of \cref{lem:approx-1}] \label{pf-of-lemma}
    The proof is a bootstrap argument based on the bootstrap assumptions \eqref{eq:boot-1}--\eqref{eq:boot-6} and continuation criterion given by the extension principle \cref{prop:ext}. Let 
   \begin{equation*}
        \mathcal A\doteq \{\tau_0\in[0,\tfrac 12 \breve v-\tfrac 13 r_1]:\text{the solution extends uniquely to $\mathcal R^{\tau_0}_\mathrm{main}$ and \eqref{eq:boot-1}--\eqref{eq:boot-6} hold on $\mathcal R^{\tau_0}_\mathrm{main}$}\}. 
    \end{equation*}
    The set $\mathcal A$ is nonempty  by \cref{prop:development-1} if $A$ is chosen sufficiently large and $\eta$, $\ve$, and $\mathfrak m_0$ are sufficiently small. It is also manifestly connected and closed by continuity of the bootstrap assumptions. We now show that if $A$ and $B$ are sufficiently large depending on $\eta$ and \eqref{eq:parameter-smallness} holds, then $\mathcal A$ is also open. 
    
Let $\tau_0\in\mathcal A$.    First, we use \cref{lem:geometry-near-main,lem:pv-spread} to estimate $N^v$ and improve \eqref{eq:boot-6}. Since $f$ is transported along electromagnetic geodesics, we have $f(u,v,p^u,p^v)\les\ve^{-3}$ for $(u,v)\in \mathcal R^{\tau_0}_\mathrm{main}$. Using \eqref{eq:volume-main} and \eqref{eq:p-v-main-1}, we infer directly from the definition of $N^v$ that
\begin{equation*}
    N^v(u,v)\les (\ve+\min\{\tau,1\}) \ve^{-3}\mathscr V(u,v)\les \frac{\ve+\min\{\tau,1\}}{\eta(\ve+\eta\min\{\tau,1\})}\left(1+\frac{A}{B}\right)e^{B\tau}\les \eta^{-2}\left(1+\frac{A}{B}\right)e^{B\tau}.
\end{equation*}
Letting $C_*= C_*(\mathfrak e,\varphi, \Lambda,\alpha)$ denote the implicit constant in this inequality, and choosing $A= 4C_*\eta^{-2}$ and $B\ge A$, we see that 
\begin{equation*}
    N^v(u,v)\le \tfrac 12 Ae^{B\tau},
\end{equation*}
which improves \eqref{eq:boot-6}. 

    To continue, we now estimate $N^u$, $T^{uu}$, and $T^{uv}$ in the same fashion, making use now of the strong decay of $p^u$. If $\gamma\in \Gamma_f(u,v)$, then
    \begin{equation*}
        p^u(s_{u,v})=\frac{\ell^2r^{-2}+\mathfrak m^2}{\Omega^2p^v(s_{u,v})}\les \frac{\ve^2}{\ve+\eta\min\{\tau,1\}}
    \end{equation*}
    by \eqref{eq:p-v-main-1}. Using \eqref{eq:volume-main}, we therefore find 
    \begin{align}
    \label{eq:Nu-est} N^u&\les \frac{\ve^2}{\eta}\frac{1}{(\ve+\eta\min\{1,\tau\})^2}Ae^{B\tau},\\
    \label{eq:Tuv-est}      T^{uv}  &\les \frac{\ve^2}{\eta}\frac{\ve+\min\{\tau,1\}}{\ve+\eta\min\{1,\tau\}}Ae^{B\tau},\\
  \label{eq:Tuu-est} T^{uu} &\les \frac{\ve^4}{\eta}\frac{1}{(\ve+\eta\min\{1,\tau\})^3}Ae^{B\tau}.
    \end{align}
    Using the hierarchy \eqref{eq:parameter-smallness}, these estimates imply  
    \begin{equation}\label{eq:lem-1-total}
      r^2T^{uv}+r^2T^{uu}+  \int^v_{-u}r^2N^u(u,v') \,dv'\les \ve^{1/2}
    \end{equation} for any $(u,v)\in \mathcal R^{\tau_0}_\mathrm{main}$. With these final estimates in hand, we may begin to improve the remaining bootstrap assumptions \eqref{eq:boot-1}--\eqref{eq:boot-5}. We will then carry out the rest of the continuity argument and prove all of the stated conclusions of the lemma.

 Improving \eqref{eq:boot-1}: The wave equation \eqref{eq:r-wave} can be rewritten as
 \begin{equation*}
     \partial_u\partial_vr = -\frac{1}{2r^2}\frac{\Omega^2}{\partial_vr}\left(\varpi-\frac{Q^2}{r}\right)\partial_vr+\frac{r}{4}\Omega^4 T^{uv}.
 \end{equation*}
 Using an integrating factor, we find
 \begin{equation}
\partial_u\left[\exp\left(\int_{-r_2}^u\frac{1}{2r}\frac{\Omega^2}{\partial_vr}\left(\varpi-\frac{Q^2}{r}\right)du'\right)\partial_vr\right]= \frac r4\Omega^4 T^{uv}\exp\left(\int_{-r_2}^u\frac{1}{2r}\frac{\Omega^2}{\partial_vr}\left(\varpi-\frac{Q^2}{r}\right)du'\right),\label{eq:step-1-proof-1}
 \end{equation}where the integral is taken over fixed $v$. The bootstrap assumptions imply
 \begin{equation*}
     \int_{-r_2}^{-\frac 23r_1}\frac{1}{2r}\frac{\Omega^2}{\partial_vr}\left|\varpi-\frac{Q^2}{r}\right|du'\le C_2.
 \end{equation*}
 For $\ve$ sufficiently small, the right-hand side of \eqref{eq:step-1-proof-1} is pointwise $\le \frac {1}{10}$ on $\mathcal R^{\tau_0}_\mathrm{main}$, so integrating this equation yields 
 \begin{equation*}
     \tfrac{2}{5}e^{-C_2}\le \partial_vr(u,v)\le \tfrac 35 e^{C_2}
 \end{equation*}
 for any $(u,v)\in\mathcal R^{\tau_0}_\mathrm{main}$, which improves \eqref{eq:boot-1}.

Improving \eqref{eq:boot-2}: Raychaudhuri's equation \eqref{eq:Ray-v} can be rewritten as
\begin{equation}
    \partial_v\log\left(\frac{\partial_vr}{\Omega^2}\right)=-\frac{r}{4}\partial_vr T^{uu}. \label{eq:step-1-proof-2}
\end{equation}
To improve the upper bound in \eqref{eq:boot-2}, note that the right-hand side of \eqref{eq:step-1-proof-2} is nonpositive by \eqref{eq:boot-1} and hence 
\begin{equation*}
    \frac{\partial_vr}{\Omega^2}(u,v)\le \frac{\partial_vr}{\Omega^2}(-v,v)=\frac 12\left(1-\frac{2\check m(v)}{v}\right) \le \frac{C_1}{2}.
\end{equation*}
To improve the lower bound, note that the right-hand side of \eqref{eq:step-1-proof-2} is bounded by $\log 2$ in absolute value for $\ve$ sufficiently small and hence 
\begin{equation*}
    \frac{\partial_vr}{\Omega^2}(u,v)\ge \frac 12 \frac{\partial_vr}{\Omega^2}(-v,v)\ge \frac{1}{4C_1},
\end{equation*}
which improves \eqref{eq:boot-2}. 

Improving \eqref{eq:boot-3}: The wave equation \eqref{eq:Omega-wave}  can be rewritten as
\begin{equation}
    \partial_v\partial_u{\log\Omega^2}= \frac{\Omega^2}{r^3}\left(\varpi-\frac{3Q^2}{2r}\right) -\frac 12 \Omega^4T^{uv}-\Omega^2 S.\label{eq:step-1-proof-3}
\end{equation}
Integrating this equation in $v$ and using the bootstrap assumptions, \eqref{eq:main-boundedness-data},  and \eqref{eq:lem-1-total} yields 
\begin{equation*}
    |\partial_u{\log\Omega}^2|\le \frac{3}{4}C_3
\end{equation*}
for $\eta$ and $\ve$ sufficiently small, which improves \eqref{eq:boot-3}.

Improving \eqref{eq:boot-5}: Integrating the evolution equation for the renormalized Hawking mass \eqref{eq:mod-Hawking-v} and using \eqref{eq:lem-1-total}, we have
\begin{equation*}
      |\varpi(u,v)-\varpi(-v,v)|\les \ve^{1/2},
\end{equation*}
which improves \eqref{eq:boot-5}.

We have thus improved the constants in all of the bootstrap assumptions \eqref{eq:boot-1}--\eqref{eq:boot-6}. Using the local existence theory \cref{prop:char-IVP-Vlasov} and generalized extension principle \cref{prop:ext}, there exists a $\tau_0'>\tau_0$ such that $\mathcal U\subset\mathcal R^{\tau_0'}_\mathrm{main}$. Choosing $\tau_0'>\tau_0$ perhaps smaller, the bootstrap assumptions \eqref{eq:boot-1}--\eqref{eq:boot-6} extend to $\mathcal R^{\tau_0'}_\mathrm{main}$ by continuity. Therefore, $\mathcal A$ is open and the bootstrap argument is complete. 

We now prove the remaining conclusions of the lemma. First, $m(u,v)\ge 0$ for every $(u,v)\in\mathcal R_\mathrm{main}^{\frac 12\breve v-\frac 13r_1}$ because either $\partial_ur(u,v)<0$ (and by Raychaudhuri \eqref{eq:Ray-u} also for any $u'> u$) or $\partial_ur(u,v)\ge 0$ and then $m(u,v)\ge 0$ directly from the definition \eqref{eq:Hawking-mass}. In the first case, the evolution equation \eqref{eq:Hawking-v} implies $m$ is nondecreasing along the outgoing cone terminating at $(u,v)$. Since this cone either intersects $\{\tau=0\}$, where $m\ge 0$, or a sphere where $\partial_ur\le 0$ (and hence $m\ge 0$), we conclude $m(u,v)\ge 0$. Now integrating the wave equation \eqref{eq:Omega-wave} in $u$ and using \eqref{eq:lem-1-total}, we see that $|\partial_v{\log\Omega^2}|\les v^{-3}$. Together with the bootstrap assumptions and \cref{lem:geometry-near-main}, this proves \eqref{eq:lem-1-1}--\eqref{eq:lem-1-4}. Next, \eqref{eq:lem-1-5} follows from integrating the wave equation \eqref{eq:r-wave} in $u$ along $\underline C_{\breve v}$ and taking $\breve v\sim r$ sufficiently large and similarly in \eqref{eq:solved-for-nu}. The inclusion \eqref{eq:lem-1-6} follows immediately from the $u$-deflection estimate \eqref{eq:u-spread-1} for electromagnetic geodesics in $\Gamma_f$ and the hierarchy \eqref{eq:parameter-smallness}. Finally, to prove \eqref{eq:main-final-pv} we use the mass shell relation, \eqref{eq:p-v-main-1}, and the parameter hierarchy to estimate
\begin{align*}
  r^2\frac{p^u}{p^v} &= \frac{\ell^2+r^2\mathfrak m^2}{\Omega^2(p^v)^2}\les \frac{\ve^2}{\eta^2}\les 1,\\
    \frac{\ell^2}{p^v}&\les \frac{\ve^2}{\eta}\les 1,
\end{align*} which completes the proof. \end{proof}

\subsection{The auxiliary beam in the near region}\label{sec:near-aux}

For $v_0>0$, let 
\begin{align}
     \mathcal R_\mathrm{aux}^{v_0}&\doteq\{v\ge u\} \cap\{\tau\ge 0\}\cap\{ -\tfrac 23r_1\le u\le \tfrac 13 r_1\}\cap\{\tfrac 13r_1\le v\le v_0\},\label{eq:def-R-aux}\\
     \tilde{\mathcal R}_\mathrm{aux}^{v_0}&\doteq \{v\ge u\}\cap \{\tau\ge 0\}\cap\{u\ge -\tfrac 23r_1\}\cap\{v\le v_0\}.
\end{align}

  \begin{lem}\label{lem:approx-2}
      For any $\breve v$, $\eta$, $\ve$, and $\mathfrak m_0$ satisfying \eqref{eq:parameter-smallness}, the following holds. The development of $\mathcal S_{\lambda,M',\eta,\ve}$ obtained in \cref{lem:approx-1} can be uniquely extended to $\tilde{\mathcal R}^{\breve v}_\mathrm{aux}$. The spacetime is vacuum for $u\ge \frac 13r_1$ and $v\le \breve v$. Moreover, the solution satisfies the estimates 
    \begin{gather*}
        0\le m\les \theta(\lambda)\eta,\quad 0\le Q\les\theta(\lambda)\eta,\\
        \Omega^2\sim 1,\quad
        \partial_vr\sim -\partial_ur \sim 1,\\
        (1+u^2)|\partial_u\Omega^2|+(1+v^2)|\partial_v\Omega|^2\les 1
    \end{gather*}
    on $\tilde{\mathcal R}^{\breve v}_\mathrm{aux}$ and 
    \begin{equation*}
   \tfrac 14   \le  \partial_vr\le \tfrac 34
    \end{equation*}
on $\tilde{\mathcal R}^{\breve v}_\mathrm{aux}\cap \underline C_{\breve v}$. Finally, the support of the distribution function satisfies 
\begin{equation*}
    \pi(\spt f) \cap \tilde{\mathcal R}^{\breve v}_\mathrm{aux}\subset\{-\tfrac 23r_1\le u\le \tfrac 16r_1\},
\end{equation*}
\begin{equation*}
    \inf_{\spt(f)\cap  \tilde{\mathcal R}^{\breve v}_\mathrm{aux}} r\ge \tfrac 16r_1,
\end{equation*}
and if $u\in[-\tfrac 23r_1,\tfrac 13r_1]$, $p^u$, and $p^v$ are such that $f(u,\breve v,p^u,p^v)\ne 0$, then 
\begin{equation}
    \frac{p^u}{p^v}\les \breve v^{-2},\quad \frac{\ell^2}{p^v}\les 1.\label{eq:aux-final-pv}
\end{equation}  \end{lem}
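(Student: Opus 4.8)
\textbf{Proof proposal for \cref{lem:approx-2}.}

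The plan is to run a Cauchy stability bootstrap argument in $\tilde{\mathcal R}^{v_0}_\mathrm{aux}$, treating the solution as a small perturbation of Minkowski space parametrized by $\theta(\lambda)\eta$, and extending the region step by step in $v_0\in[\tfrac 13 r_1,\breve v]$. The key point is that the auxiliary distribution function $\mathring f^{\,r_1,\theta(\lambda)\eta}_\mathrm{aux}$ is genuinely weak: by \eqref{eq:aux-1} it is pointwise $\les \theta(\lambda)\eta$ and supported on momenta $p^u,p^v\in[\Lambda-1,\Lambda+1]$, hence on angular momenta $\ell\sim 1$ (by \eqref{eq:ell-aux}). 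So unlike the main beam, there is no singular $\ve^{-3}$ weight to fight against, and all the matter moments $N^u,N^v,T^{uu},T^{uv},T^{vv},S$ will be $\les \theta(\lambda)\eta$ wherever the support of $f$ has bounded phase-space volume. The bootstrap assumptions should be: $|\Omega^2-1|\le C\theta(\lambda)\eta$, $(1+u^2)|\partial_u\Omega^2|+(1+v^2)|\partial_v\Omega^2|\le C\theta(\lambda)\eta$, $|\partial_vr-\tfrac12|+|\partial_u r+\tfrac12|\le C\theta(\lambda)\eta$, $\varpi\le C\theta(\lambda)\eta$, together with a containment assumption $\pi(\spt f)\cap\tilde{\mathcal R}^{v_0}_\mathrm{aux}\subset\{-\tfrac23 r_1\le u\le \tfrac 16 r_1\}$ and a bound on $N^u$ of the form $N^u\le A e^{B\tau}$ as in \cref{lem:approx-1}. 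Nonemptiness comes from \cref{prop:development-1} and the data estimates in Part 2 of \cref{lem:est-dat-1}; closedness is continuity; openness is the substance.

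To improve the bootstrap assumptions, first I would control the electromagnetic geodesic flow. An electromagnetic geodesic $\gamma\in\Gamma_f$ in this region has $\ell\sim 1$ and (from data) $p^u,p^v\sim 1$; the Lorentz force equations \eqref{eq:SS-Lor-4}--\eqref{eq:SS-Lor-5} then have right-hand sides controlled by $|\partial\log\Omega^2|\cdot\ell^2/r^2 + |Q|/r^2 \les \theta(\lambda)\eta$ plus the geometric $\partial_v r/r\cdot\ell^2/r^2$ terms, which are $O(1)$ but structurally identical to the Minkowski null-geodesic equations. The clean way is to compare with the explicit radial null geodesics of Minkowski space with impact parameter $\sim \Lambda^{-1}$ and turning radius $\sim r_1$ (this is what the choice $\Lambda\ge 20$ and \eqref{eq:Lambda} is for: the turning radius of an angular-momentum-$\Lambda$ null geodesic starting near $r\sim \tfrac12 r_1$ is bounded below by $\tfrac16 r_1$, giving $\inf_{\spt f} r\ge \tfrac16 r_1$), treating the $\theta(\lambda)\eta$-sized charge and metric perturbations as errors to be absorbed by Grönwall over the bounded parameter interval $S\les \breve v$ (finite by the mass-shell bound $p^\tau\gtrsim r^{-1}\ell$ as in \eqref{eq:ptau-AM-GM}). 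This yields: the $u$-deflection of $\gamma$ is $O(1)$ but with the outgoing-ness estimate $p^u/p^v\les r^{-2}$ holding for $v\ge\breve v$ large (centrifugal bounce makes $p^v$ grow), the containment $\pi(\spt f)\subset\{-\tfrac23 r_1\le u\le\tfrac16 r_1\}$, the lower bound $r\ge\tfrac16 r_1$ on $\spt f$, and the phase-space volume $\mathscr V\les \theta(\lambda)\eta$ (here with no $\ve$-degeneracy since the momentum and angular-momentum supports are $O(1)$, so the change-of-variables formula \eqref{eq:V-v} gives a harmless bound). Consequently all moments are $\les \theta(\lambda)\eta$, and one recovers $N^u\le Ae^{B\tau}$ with improved constant exactly as in the proof of \cref{lem:approx-1}.

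With the matter moments controlled by $\theta(\lambda)\eta$, improving the geometric bootstrap assumptions is routine: integrate \eqref{eq:Max-u}--\eqref{eq:Max-v} for $Q$; \eqref{eq:mod-Hawking-u}--\eqref{eq:mod-Hawking-v} for $\varpi$; \eqref{eq:r-wave} and \eqref{eq:Ray-u}--\eqref{eq:Ray-v} for $\partial_u r,\partial_v r$ using the formula \eqref{eq:Omega-formula} / \eqref{eq:solved-for-nu}; \eqref{eq:Omega-wave} for $\Omega^2$ and its first derivatives, where the $(1+u^2)^{-1},(1+v^2)^{-1}$ weights come from the $r^{-3}$ and $\Omega^4 T$ decay in the wave equation (with $r\sim |u|, |v|$ away from the center) and the corresponding weighted data bound \eqref{eq:aux-boundedness}; the vacuum statement for $u\ge\tfrac13 r_1$, $v\le\breve v$ follows from the containment of $\spt f$; and the sharper $\tfrac14\le\partial_v r\le\tfrac34$ on $\underline C_{\breve v}$ and \eqref{eq:aux-final-pv} follow by integrating along $\{v=\breve v\}$ and using the outgoing-ness of the flow for $\breve v$ sufficiently large, exactly paralleling \eqref{eq:lem-1-5}--\eqref{eq:main-final-pv}. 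The extension then follows from \cref{prop:char-IVP-Vlasov} and \cref{prop:ext}, and a continuity argument closes $\mathcal A=[\tfrac13 r_1,\breve v]$. The main obstacle is the geodesic-flow analysis near the centrifugal turning point: one must show that \emph{every} electromagnetic geodesic in $\spt f$ has turning radius $\ge\tfrac16 r_1$ and then becomes strictly outgoing, uniformly as the small parameters vary — this is where the explicit Minkowski null-geodesic comparison and the quantitative choice of $\Lambda$ via \eqref{eq:Lambda} are essential, and where one has to be careful that the $\theta(\lambda)\eta$-sized charged perturbation (which is \emph{repulsive}, hence helps rather than hurts) does not conspire with the mass term $\mathfrak m^2$ (with $\mathfrak m_0\ll\ve\ll\eta$, negligible) to spoil the bounce.
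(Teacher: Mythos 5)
There is a genuine gap, and it sits exactly at the subtle point of this lemma: your bootstrap assumptions $|\Omega^2-1|\le C\theta(\lambda)\eta$, $|\partial_vr-\tfrac12|+|\partial_ur+\tfrac12|\le C\theta(\lambda)\eta$ (and the $\eta$-smallness of $(1+v^2)|\partial_v\Omega^2|$) are not satisfiable, because they already fail on the past boundary of the auxiliary region. The data for $\mathcal R_\mathrm{aux}^{v_0}$ consists of $\{\tau=0\}\cap\{\tfrac13 r_1\le v\le\tfrac23 r_1\}$ \emph{and} the cone $C_{-\frac23 r_1}$ inherited from \cref{lem:approx-1}. Along that cone the spheres enclose only $O(\eta)$ charge and mass, but $\Omega^2$ is gauge-dependent: a point $(-\tfrac23 r_1,v)$ with $v>r_1$ has the main-beam data $\{\tau=0, r_1\le v'\le r_2\}$ in its causal past, and integrating \eqref{eq:Omega-wave} in $u$ from $(-v,v)$ to $(-\tfrac23 r_1,v)$ passes through the region where $m\sim M$ and through data with $\partial_v\log\Omega^2(-v,v)=\check\omega(v)+O(\eta)$ of size $O(1)$. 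Hence $\partial_v\Omega^2$ and $\Omega^2-1$ on $C_{-\frac23 r_1}$ are of size $O(1)$ (bounded, but independent of $\eta$), not $O(\eta)$ — this is true even in the limiting dust solution, whose interior is isometric to Minkowski but \emph{not} in the standard gauge $(\tfrac12(v-u),1)$ anchored to $\{\tau=0\}$. Consequently your bootstrap set is empty, and the explicit Minkowski null-geodesic formulas cannot be invoked directly in the original $(u,v)$ coordinates.

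The paper's proof handles precisely this: instead of exact Minkowski it introduces the reference vacuum solution $(\hat r,\hat\Omega^2)$ with $\hat\Omega^2(u,v)=\beta(v)\doteq\Omega^2(-\tfrac23 r_1,v)$ (glued to the \cref{lem:approx-1} solution along $C_{-\frac23 r_1}$), bootstraps $\eta$-closeness of $(r,\Omega^2,Q,\varpi)$ to $(\hat r,\hat\Omega^2,0,0)$ together with $\eta$-closeness of the electromagnetic geodesics to \emph{null geodesics of the reference metric} \eqref{eq:approx-2-boot-2}, and only then performs the coordinate change $\tilde v=\tfrac13 r_1+\int_{\frac13 r_1}^{v}\beta$ \eqref{eq:coordinate-trafo} to reduce to the explicit Minkowski formulas \eqref{eq:Mink-geo-1}--\eqref{eq:Mink-geo-3} and the choice of $\Lambda$ in \eqref{eq:Lambda}. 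Your qualitative picture (centrifugal turning radius $\ge\tfrac16 r_1$ controlled by $\Lambda$, charge treated as a repulsive $O(\eta)$ error, Gr\"onwall over a parameter interval of length $\les\breve v$, moments $\les\eta$ from $f\les\eta$ and the $O(1)$ momentum support — note $\mathscr V$ itself is $O(1)$, not $O(\eta)$) matches the paper, but without the gauge-adapted comparison solution the argument cannot start.
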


 \begin{figure}
\centering{
\def\svgwidth{11pc}
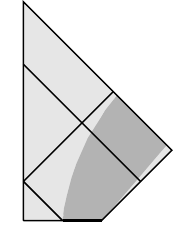}
\caption{Penrose diagram of the bootstrap region $\mathcal R^{v_0}_\mathrm{aux}$ used in the proof of \cref{lem:approx-2}.}\label{fig:R-aux}
\end{figure}

The proof of \cref{lem:approx-2} will be given on \cpageref{approx-2-pf}. We will make use of a bootstrap argument in the regions $\mathcal R^{v_0}_\mathrm{aux}$,
  where $v_0$ ranges over $[\frac 13r_1,\breve v]$. The triangle $\{v\ge u\}\cap\{u\ge \frac 13r_1\}\cap \{v\le \breve v\}$ is Minkowskian and can simply be attached at the very end of the argument, cf.~\cref{lem:Minkowski}. For the basic geometric setup of the lemma and its proof, refer to \cref{fig:R-aux}. 

  For $(u,v)\in \mathcal R_\mathrm{aux}^{\breve v}$, let
  \begin{gather*}
      \hat r(u,v)=\frac{r_1}{6}-\frac{u}{2}+\frac 12 \int_{\frac 13r_1}^v \beta(v')\,dv',\\
       \hat\Omega^2(u,v)=\beta(v),
  \end{gather*}
  where 
\begin{equation*}
    \beta(v)\doteq \begin{cases}
    1 & \text{if }v<\frac 23r_1\\
   \Omega^2(-\tfrac 23r_1,v) & \text{if }v\ge \tfrac 23 r_1  \end{cases}.
\end{equation*}
It is easily verified that $(\hat r,\hat\Omega^2)$ is a solution of the spherically symmetric Einstein vacuum equations and matches smoothly in $v$ with $(r,\Omega^2)$ from $\mathcal R_\mathrm{main}^{\frac 12\breve v-\frac 13r_1}$ along $C_{-\frac 23r_1}$.

The first bootstrap assumption is
\begin{equation}
    |Q|+|\varpi|+|\partial_vr-\partial_v\hat r|+|\partial_ur-\partial_u\hat r|+|\Omega^2-\hat\Omega^2|+|\partial_v\Omega^2-\partial_v\hat\Omega^2|+|\partial_u\Omega^2|\le A\theta(\lambda)\eta e^{B\tau},\label{eq:approx-2-boot-1}
\end{equation}
where $A\ge 1$ and $B\ge 1$ are constants to be determined that may depend on $\breve v$ but not $\eta$. We also make the following assumption on the electromagnetic geodesic flow. For $(v_0',p^u_0,p^v_0)\in\spt( \mathring f_\mathrm{aux}^{\,r_1,\theta(\lambda)\eta})$, let $\gamma$ be an electromagnetic geodesic of mass $\mathfrak m$ for $(r,\Omega^2,Q)$ starting at $(-v_0',v_0',p^u_0,p^v_0)$, and let $\hat\gamma$ be a \emph{null geodesic} for $(\hat r,\hat\Omega^2)$ starting at $(-v_0',v_0',p^u_0,p^v_0)$. Then, assuming both $\gamma$ and $\hat \gamma$ remain within $\mathcal R^{v_0}_\mathrm{aux}$, we assume that
\begin{equation}
    |\Omega^2p^u-\hat\Omega^2\hat p^u|+|\Omega^2p^v-\hat\Omega^2\hat p^v|\le A\theta(\lambda)\eta e^{B(\gamma^v-v_0')}.\label{eq:approx-2-boot-2}
\end{equation}

First, we note the following immediate consequences of the first bootstrap assumption:

\begin{lem}\label{lem:approx-2-1} If \eqref{eq:parameter-smallness} holds, $v_0\in [\tfrac 13r_1,\breve v]$, $\mathcal R^{v_0}_\mathrm{aux}\subset\mathcal U$, the bootstrap assumption \eqref{eq:approx-2-boot-1} holds on $\mathcal R^{v_0}_\mathrm{aux}$, and $\eta$ is sufficiently small depending on $A$ and $B$, then on $\mathcal R_\mathrm{aux}^{v_f}$ it holds that
\begin{gather*}
    0\le\varpi\les \theta(\lambda)\eta, \quad 0\le Q\les \theta(\lambda)\eta,\\
  |{\log\partial_vr}|+|{\log(-\partial_ur)}|+|{\log\Omega^2}|\les 1,\\
    |\partial_v\Omega^2|+|\partial_u\Omega^2|\les 1.
\end{gather*}
\end{lem}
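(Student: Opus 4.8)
\textbf{Proof proposal for \cref{lem:approx-2-1}.}

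The plan is to extract all the stated estimates directly from the single bootstrap assumption \eqref{eq:approx-2-boot-1}, using the explicit formulas for the background quantities $(\hat r,\hat\Omega^2)$ and the smallness hierarchy \eqref{eq:parameter-smallness}. The key point is that on the region $\mathcal R^{v_0}_\mathrm{aux}$ we have $v\le\breve v$, so $e^{B\tau}\le e^{B\breve v}$ is a fixed constant (depending on $\breve v$ and $B$ but not on $\eta$), and hence the right-hand side of \eqref{eq:approx-2-boot-1} is $\les_{A,B,\breve v}\,\theta(\lambda)\eta\les\theta(\lambda)\eta$, which is $\les 1$ for $\eta$ small. Thus every quantity controlled by \eqref{eq:approx-2-boot-1} is within $O(\theta(\lambda)\eta)$ of its background value.

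First I would record the background bounds: from the definition $\hat r(u,v)=\tfrac{r_1}{6}-\tfrac u2+\tfrac12\int_{r_1/3}^v\beta(v')\,dv'$ one has $\partial_v\hat r=\tfrac12\beta(v)$, $\partial_u\hat r=-\tfrac12$, and $\hat\Omega^2=\beta(v)$; by the matching along $C_{-2r_1/3}$ and \cref{lem:approx-1} (in particular $\Omega^2\sim 1$, $\tfrac13\le\partial_vr\le\tfrac23$, $|\partial_v\Omega^2|\les\breve v^{-3}$ along that cone) we get $\beta\sim 1$ and $|\partial_v\beta|\les 1$ on $[\tfrac13r_1,\breve v]$; moreover $\hat r$ is comparable to $v$ (in particular bounded below away from $0$ on $\mathcal R^{v_0}_\mathrm{aux}$), so there is no issue with the center. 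Combining these background bounds with the bootstrap assumption \eqref{eq:approx-2-boot-1} and the triangle inequality immediately yields $|{\log\partial_vr}|+|{\log(-\partial_ur)}|+|{\log\Omega^2}|\les 1$ and $|\partial_v\Omega^2|+|\partial_u\Omega^2|\les 1$, since each of $\partial_vr$, $-\partial_ur$, $\Omega^2$, $\partial_v\Omega^2$, $\partial_u\Omega^2$ differs from a background quantity (which is itself $\sim 1$ or $\les 1$) by at most $O(\theta(\lambda)\eta)$.

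Next I would handle the nonnegativity claims $\varpi\ge 0$ and $Q\ge 0$, which do not follow from \eqref{eq:approx-2-boot-1} alone (that only gives $|Q|+|\varpi|\les\theta(\lambda)\eta$) but from the monotonicity built into the Einstein--Maxwell--Vlasov system together with the initial data. For $Q$: by Maxwell's equations \eqref{eq:Max-u}, \eqref{eq:Max-v} and $N^u,N^v\ge 0$, the charge $Q$ is nonincreasing in $u$ and nondecreasing in $v$; tracing back along these monotonicity directions from any point of $\mathcal R^{v_0}_\mathrm{aux}$ one reaches either the initial slice $\{\tau=0\}$ (where $Q\ge 0$ by \cref{lem:est-dat-1}, e.g.\ \eqref{eq:aux-boundedness}) or the cone $C_{-2r_1/3}$ (where $Q\ge 0$ by \cref{lem:approx-1}, \eqref{eq:lem-1-1} in its $\lambda\le 0$ form $0\le Q\les\theta(\lambda)\eta$, which also holds for $\lambda>0$ on that cone by \eqref{eq:boot-9} giving $Q\gtrsim\eta>0$), so $Q\ge 0$ throughout. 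For $\varpi$: as in the last paragraph of the proof of \cref{lem:approx-1}, either $\partial_ur<0$ at the point in question, in which case monotonicity of $\varpi$ in $v$ via \eqref{eq:mod-Hawking-v} (the term $\tfrac12\mathfrak e r\Omega^2 QN^u\ge 0$ using $Q\ge 0$ just established, and $T^{uv}\partial_vr-T^{uu}\partial_ur\ge 0$ using $\partial_ur<0$ and the dominant energy condition \cref{rk:DEC}) propagates $\varpi\ge 0$ from the past along the outgoing cone back to the initial data or to a sphere with $\partial_ur\le 0$; or $\partial_ur\ge 0$, in which case the mass-shell-type relation \eqref{eq:solved-for-nu} together with $\partial_vr>0$ and $\Omega^2>0$ forces $1-\tfrac{2\varpi}{r}+\tfrac{Q^2}{r^2}\le 0$, hence $\varpi\ge\tfrac{r}{2}(1+Q^2/r^2)>0$. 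The upper bounds $\varpi\les\theta(\lambda)\eta$ and $Q\les\theta(\lambda)\eta$ are then just \eqref{eq:approx-2-boot-1}.

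The main obstacle here is not really in the analytic estimates --- which are all soft, following from \eqref{eq:approx-2-boot-1} plus the explicit background --- but in being careful that the monotonicity arguments for $Q\ge 0$ and $\varpi\ge 0$ correctly ``reach'' favorable boundary data; one must check that the backwards monotonicity directions (decreasing $u$ for $Q$, the outgoing cone for $\varpi$) stay inside the region where the solution is already constructed (namely $\mathcal R^{v_0}_\mathrm{aux}\cup\mathcal R^{\frac12\breve v-\frac13 r_1}_\mathrm{main}\cup\{\tau=0\}$), which is clear from the geometry depicted in \cref{fig:R-aux} and the fact that $\mathcal R^{v_0}_\mathrm{aux}$ abuts $C_{-2r_1/3}$ along which \cref{lem:approx-1} applies. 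Everything else is a one-line triangle-inequality comparison with the background.
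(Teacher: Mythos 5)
Your proposal is correct and is essentially the argument the paper intends: the paper records \cref{lem:approx-2-1} without proof, as an ``immediate consequence'' of \eqref{eq:approx-2-boot-1}, and your writeup supplies exactly the expected details --- comparison with the explicit background $(\hat r,\hat\Omega^2)$ (with $\beta\sim 1$, $|\beta'|\les 1$ from \cref{lem:approx-1}) for the order-one bounds, and the system's monotonicity (Maxwell's equations \eqref{eq:Max-u}--\eqref{eq:Max-v}, the mass equation \eqref{eq:mod-Hawking-v}, and the Hawking-mass formula in the case $\partial_ur\ge 0$) together with nonnegative initial data for the sign conditions, which you correctly observe do not follow from the bootstrap assumption alone. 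One harmless slip: for the lower bound $Q\ge 0$ the favorable monotonicity directions are larger $u$ and smaller $v$, so one cannot ``reach'' the cone $C_{-\frac 23 r_1}$ that way; but your other route --- integrating $\partial_v Q\ge 0$ at fixed $u$ down to $\{\tau=0\}$ or into the Minkowskian corner $\{v\le \tfrac 13 r_1\}$ where $Q=0$ --- applies from every point of $\mathcal R^{v_0}_\mathrm{aux}$, so the conclusion stands.
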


Next, we use the second bootstrap assumption to obtain 

\begin{lem}\label{lem:approx-2-2}
    If \eqref{eq:parameter-smallness} holds, $v_0\in [\tfrac 13r_1,\breve v]$, $\mathcal R^{v_0}_\mathrm{aux}\subset\mathcal U$, the bootstrap assumptions \eqref{eq:approx-2-boot-1} and \eqref{eq:approx-2-boot-2} hold on $\mathcal R^{v_0}_\mathrm{aux}$, $B$ is sufficiently large, and $\eta$ is sufficiently small depending on $A$ and $B$, then the following holds. Let $\gamma:[0,S]\to \mathcal R_\mathrm{aux}^{v_0}$ be a future-directed electromagnetic geodesic starting in $\spt(\mathring f_\mathrm{aux}^{\,r_1,\theta(\lambda)\eta})$, then 
\begin{gather*}
      r \ge\tfrac 16r_1,\\
      u \le \tfrac 13 r_1,\\
     r^2\frac{p^u}{p^v}\les 1,\quad \frac{\ell^2}{p^v}\les 1
\end{gather*}
along $\gamma$.
\end{lem}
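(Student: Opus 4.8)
The statement is a tracking estimate for electromagnetic geodesics emanating from the support of the auxiliary seed data, and the plan is to leverage the second bootstrap assumption \eqref{eq:approx-2-boot-2}, which says that the momenta $(\Omega^2 p^u, \Omega^2 p^v)$ are $A\theta(\lambda)\eta e^{B(\gamma^v - v_0')}$-close to those of the corresponding \emph{null geodesic} on the explicit near-Minkowskian vacuum background $(\hat r, \hat\Omega^2)$. Since the $\hat{}$-background is a spherically symmetric solution of the Einstein vacuum equations which agrees with Minkowski space on $\{v\le \frac 23 r_1\}$ and is a small perturbation of it thereafter (by \cref{lem:approx-2-1}), the behavior of its null geodesics with angular momentum $\ell\approx 1$ (coming from $\spt \mathring f_\mathrm{aux}^{\,r_1,\theta(\lambda)\eta}$, recall \eqref{eq:ell-aux}) is governed by the explicit impact-parameter computation for null geodesics in Minkowski space: a null geodesic with $p^u\approx p^v\approx \Lambda$ and $\ell\approx 1$ starting at $r\approx r_1/2$ is centrifugally repelled and its minimal area-radius $r_\mathrm{min}$ stays bounded below by a constant multiple of $r_1$ (here \eqref{eq:Lambda} is precisely the condition ensuring $r_\mathrm{min}\ge \frac 9{20}\cdot\frac 12 r_1$ or so, comfortably above $\frac 16 r_1$), while its retarded time coordinate $\gamma^u = u$ is bounded above because $-\partial_u r>0$ until the bounce and $\partial_u r<0$ after, so $u$ increases from $-v_0'\le -\frac 13 r_1$ by at most $2 r_\mathrm{min} \lesssim r_1$, landing below $\frac 13 r_1$.

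\textbf{Key steps, in order.} First I would record the structure of null geodesics on the $\hat{}$-background: parametrizing $\hat\gamma$ by $v$ (or by $r$ on each monotone branch), use the mass shell relation $\hat\Omega^2\hat p^u\hat p^v = \ell^2/\hat r^2 + \mathfrak m^2$ and the vacuum equations for $(\hat r, \hat\Omega^2)$ to write down the effective potential and show that on $\{v\ge \frac 23 r_1\}$ the background is $O(\eta)$-close to Minkowski in the relevant norms (this is just \cref{lem:approx-2-1} restricted to the $\hat{}$-quantities, or rather the explicit formula for $\hat r$, $\hat\Omega^2$). Then on Minkowski space do the exact computation: for a null geodesic with $p^u_0\approx p^v_0\approx\Lambda$, $\ell\approx 1$, $r_0\approx \frac 12 r_1$, one has $r_\mathrm{min} = \ell\,(p^u_0 p^v_0)^{-1/2}\cdot(\text{something})$; using \eqref{eq:Lambda} which bounds $p_1 p_2/(p_1+p_2)^2$ from below by $\frac{81}{400}$, get $r_\mathrm{min}\ge \frac 9{20}r_0 \gtrsim r_1$, hence $r_\mathrm{min} > \frac 16 r_1$ with room to spare. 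Second, transfer this to $\gamma$ via \eqref{eq:approx-2-boot-2}: since $|\Omega^2 p^u - \hat\Omega^2\hat p^u| + |\Omega^2 p^v - \hat\Omega^2\hat p^v| \le A\theta(\lambda)\eta e^{B(\gamma^v - v_0')}$ and $\gamma^v - v_0' \le \breve v$ throughout $\mathcal R_\mathrm{aux}^{v_0}$, the momenta of $\gamma$ differ from those of $\hat\gamma$ by at most $A\theta(\lambda)\eta e^{B\breve v}$, which is $\ll 1$ for $\eta$ small (this is where the hierarchy \eqref{eq:parameter-smallness} and the remark that $A,B$ may depend on $\breve v$ but not on $\eta$ matters). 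A Grönwall-type comparison of the position ODEs $d\gamma^u/dv = (\ell^2 r^{-2}+\mathfrak m^2)(\Omega^2 (p^v)^2)^{-1}$ for $\gamma$ versus $\hat\gamma$, using \cref{lem:approx-2-1} to control the background coefficients, then shows $|r(\gamma) - \hat r(\hat\gamma)| \lesssim \eta$ and $|\gamma^u - \hat\gamma^u|\lesssim \eta$ along the shared parameter range. Third, conclude: $r(\gamma) \ge \hat r_\mathrm{min} - C\eta \ge \frac 9{20}r_0 - C\eta \ge \frac 16 r_1$ and $\gamma^u = u \le \hat\gamma^u + C\eta \le \frac 14 r_1 + C\eta \le \frac 13 r_1$, both for $\eta$ sufficiently small. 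Fourth, the bounds $r^2 p^u/p^v \lesssim 1$ and $\ell^2/p^v\lesssim 1$ follow from the mass shell relation $\Omega^2 p^u p^v = \ell^2 r^{-2} + \mathfrak m^2$ together with the lower bound $p^v \gtrsim 1$ (inherited from $\hat p^v\gtrsim 1$ via \eqref{eq:approx-2-boot-2}), the lower bound $r\gtrsim r_1$ just proved, $\ell\approx 1$, $\mathfrak m\le\mathfrak m_0\ll 1$, and $\Omega^2\approx 1$ from \cref{lem:approx-2-1}.

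\textbf{Main obstacle.} The delicate point is the comparison of the geodesic $\gamma$ for the \emph{massive, charged} flow on the true background $(r,\Omega^2,Q)$ with the \emph{massless, uncharged} geodesic $\hat\gamma$ on $(\hat r,\hat\Omega^2)$: the charge term $\mathfrak e Q r^{-2} p^v$ in \eqref{eq:SS-Lor-5} and the mass term $\mathfrak m^2$ in the mass shell relation must both be shown to be negligible perturbations of the centrifugal dynamics. The charge is controlled by $Q\lesssim\theta(\lambda)\eta$ from \cref{lem:approx-2-1}, and the mass by $\mathfrak m\le\mathfrak m_0\ll\ve\ll\eta$ from \eqref{eq:parameter-smallness}; the task is to organize a single Grönwall argument on the shared parameter interval (of length $\le\breve v$) that absorbs both, using that $e^{B\breve v}\theta(\lambda)\eta\to 0$ as $\eta\to 0$ for fixed $\breve v$. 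Everything else is a routine but slightly tedious transcription of the exact Minkowski null-geodesic impact-parameter computation, whose only nontrivial input is the numerical inequality \eqref{eq:Lambda} guaranteeing the bounce radius clears $\frac 16 r_1$.
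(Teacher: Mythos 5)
Your overall route is the same as the paper's: treat the curve from the auxiliary support as a small perturbation of the corresponding null geodesic on the background $(\hat r,\hat\Omega^2)$, use the bootstrap \eqref{eq:approx-2-boot-2} (with $A\theta(\lambda)\eta e^{B\breve v}\ll1$ in the hierarchy \eqref{eq:parameter-smallness}) to control the momenta, integrate $p^u/p^v-\hat p^u/\hat p^v$ to get an $O(\eta)$ deflection in $u$, read off the bounce from the flat-space impact-parameter computation keyed to \eqref{eq:Lambda}, and conclude the momentum-ratio bounds from the mass shell relation together with $Q\lesssim\theta(\lambda)\eta$, $\mathfrak m\le\mathfrak m_0\ll1$ and \cref{lem:approx-2-1}. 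Two points in your write-up need repair, though. First, the intermediate claim that the $\hat{}$-background is ``$O(\eta)$-close to Minkowski in the relevant norms'' on $\{v\ge\tfrac23 r_1\}$ is neither needed nor, for $\lambda>0$, true: $\hat\Omega^2(u,v)=\beta(v)=\Omega^2(-\tfrac23r_1,v)$ inherits order-one (in $M/r_1$, not $\eta$) deviations from $1$ through the main-beam region, so a perturbative comparison of $\hat\gamma$ with a literal Minkowski null geodesic in the unhatted gauge does not come with $O(\eta)$ errors. The correct (and simpler) observation, which you gesture at parenthetically, is that $(\hat r,\hat\Omega^2)$ is \emph{exactly} flat: under \eqref{eq:coordinate-trafo}, with $\tilde u=u$ unchanged, it becomes standard Minkowski, so the geodesic formulas \eqref{eq:Mink-geo-1}--\eqref{eq:Mink-geo-3} hold exactly for $\hat\gamma$ and no second perturbation layer is required; only the single comparison $\gamma$ versus $\hat\gamma$ via \eqref{eq:approx-2-boot-2} (your Gr\"onwall step, which is fine) remains.

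Second, your numerics for the bounce radius are off by a factor of $2$ in a place where the margin matters. With $\hat E=\tfrac12(p^u_0+p^v_0)$ and $\hat\ell^2=\hat r_0^2p^u_0p^v_0$, condition \eqref{eq:Lambda} gives $\hat\ell/\hat E=2\hat r_0\sqrt{p^u_0p^v_0}/(p^u_0+p^v_0)\ge\tfrac9{10}\hat r_0$ (this is \eqref{eq:aux-lem-1}), not $\tfrac9{20}\hat r_0$. With your constant and the worst-case starting radius $\hat r_0=\tfrac13r_1$ (the inner edge of $\spt\mathring f_{\mathrm{aux}}$, not the midpoint $\tfrac12 r_1$ you use), one gets only $0.15\,r_1<\tfrac16r_1$ and the desired lower bound fails; the correct constant gives $\tfrac9{10}\cdot\tfrac13r_1=0.3\,r_1$, which clears $\tfrac16r_1$ after adding the $O(\eta)$ position error. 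Similarly, your heuristic ``$u$ increases by at most $2r_{\min}$'' is too lossy (it only yields $u\le\hat r_0\le\tfrac23r_1$); the precise statement is $\sup_{\hat\gamma}\tilde u\le\tfrac{45}{100}\hat r_0\le0.3\,r_1<\tfrac13r_1$ as in \eqref{eq:aux-lem-2}, which transfers to $\gamma$ because $\tilde u=u$ and the deflection estimate \eqref{eq:approx-2-u-deflection} is $O(\eta)$. With these two corrections your argument coincides with the paper's proof.
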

\begin{proof} Upon making the coordinate transformation \begin{equation}
    \tilde u=u,\quad \tilde v=\tfrac 13r_1+\int_{\frac 13r_1}^v\beta(v')\,dv',\label{eq:coordinate-trafo}
\end{equation}
the metric $(\hat r,\hat\Omega^2)$ is brought into the standard Minkowski form $(\frac 12(\tilde v-\tilde u),1)$. If $\tilde t\doteq \frac 12(\tilde v+\tilde u)$, $\hat\gamma$ is a null geodesic in $\mathcal R^{\breve v}_\mathrm{aux}$ with respect to $(\hat r,\hat\Omega^2)$ intersecting $\{\tau=0\}$ with momentum $(p_0^u,p^v_0)=(p_0^{\tilde u},p^{\tilde v}_0)$ at an area-radius of $\hat r_0$, then it is easy to check that
\begin{gather}
 \label{eq:Mink-geo-1}   \hat r^2=\left(\tilde t+\sign(p_0^v-p^u_0)\sqrt{\hat r_0^2-\frac{\hat\ell^2}{\hat E^2}}\right)^2+\frac{\hat\ell^2}{\hat E^2},\\
 \label{eq:Mink-geo-2}  p^{\tilde u}=\begin{cases}
\hat E+\sqrt{\hat E^2-\hat \ell^2/\hat r^2} & \text{if }\tilde t < - \sign(p_0^v-p^u_0)\sqrt{\hat r_0^2-\hat \ell^2/\hat E^2} \\
\hat E-\sqrt{\hat E^2-\hat \ell^2/\hat r^2} &\text{if }\tilde t \ge - \sign(p_0^v-p^u_0)\sqrt{\hat r_0^2-\hat \ell^2/\hat E^2}
     \end{cases},\\
  \label{eq:Mink-geo-3}      p^{\tilde v}=\begin{cases}
\hat E-\sqrt{\hat E^2-\hat \ell^2/\hat r^2} & \text{if }\tilde t < - \sign(p_0^v-p^u_0)\sqrt{\hat r_0^2-\hat \ell^2/\hat E^2} \\
\hat E+\sqrt{\hat E^2-\hat \ell^2/\hat r^2} &\text{if }\tilde t \ge - \sign(p_0^v-p^u_0)\sqrt{\hat r_0^2-\hat \ell^2/\hat E^2}
     \end{cases}
\end{gather}
along $\hat\gamma$, where $\hat\ell^2\doteq \hat r^2 p^{\tilde u}p^{\tilde v}$ and $\hat E\doteq \frac 12(p^{\tilde v}+p^{\tilde u})$ are conserved quantities.  If $(p^u_0,p^v_0)\in [\Lambda-1,\Lambda+1]^2$ and $\Lambda$ satisfies \eqref{eq:Lambda}, then 
\begin{equation*}
    \tfrac{81}{100}\hat r_0^2\le \frac{\hat\ell^2}{\hat E^2}\le \hat r_0^2.
\end{equation*}
From \eqref{eq:Mink-geo-1} it is apparent that 
\begin{align}
  \label{eq:aux-lem-1}  \min_{\hat \gamma}r&\ge \frac{\hat\ell}{\hat E}\ge \tfrac{9}{10} \hat r_0,\\
 \label{eq:aux-lem-2}   \sup_{\hat \gamma}\tilde u&=-\sign(p^v_0-p^u_0)\sqrt{\hat r_0^2-\frac{\hat\ell^2}{\hat E^2}}\le \tfrac{45}{100}\hat r_0,
\end{align}
and by inspection of \eqref{eq:Mink-geo-2} and \eqref{eq:Mink-geo-3} that 
\begin{equation}
\label{eq:aux-lem-3}  \hat r^2\frac{p^{\tilde u}}{p^{\tilde v}}\les 1,\quad \frac{\hat \ell^2}{p^{\tilde v}}\les 1
\end{equation}
along $\hat\gamma$. 

Let $\gamma$ and $\hat\gamma$ be as defined before \eqref{eq:approx-2-boot-2}. Parametrize $\gamma$ and $\hat\gamma$ by $v$ as in the proof of \cref{lem:pv-spread}. Then 
\begin{equation*}
    \left|\frac{d}{dv}(\gamma^u-\hat\gamma^u)\right|=\left|\frac{p^u}{p^v}-\frac{\hat p^u}{\hat p^v}\right|\les A\theta(\lambda)\eta e^{B(v-v_0')}
\end{equation*}
by \eqref{eq:approx-2-boot-2} and the observation that $p^v\gtrsim 1$ and $p^{\tilde v}\gtrsim 1$, so that 
\begin{equation}
    |\gamma^u-\hat\gamma^u|\le \theta(\lambda) \eta e^{B(v-v_0')}\label{eq:approx-2-u-deflection}
\end{equation}
for $B$ chosen sufficiently large. Therefore, the conclusions of the lemma follow from the estimates  \eqref{eq:aux-lem-1}--\eqref{eq:aux-lem-3} and the fact that $\hat r_0\in [\tfrac 13r_1,\tfrac 23r_1]$ after undoing the coordinate transformation \eqref{eq:coordinate-trafo} and applying the bootstrap assumptions. 
\end{proof}

\begin{proof}[Proof of \cref{lem:approx-2}]\label{approx-2-pf}
    The proof is a bootstrap argument based on the bootstrap assumptions \eqref{eq:approx-2-boot-1} and \eqref{eq:approx-2-boot-2}, the continuation criterion \cref{prop:ext}, and \cref{lem:Minkowski}. Define the bootstrap set 
    \begin{equation*}
        \mathcal A\doteq \{v_0\in[\tfrac 13r_1,\breve v]:\text{the solution extends uniquely to $\mathcal R_\mathrm{aux}^{v_0}$ and \eqref{eq:approx-2-boot-1}, \eqref{eq:approx-2-boot-2} hold on $\mathcal R_\mathrm{aux}^{v_0}$}\}.
    \end{equation*}
    The set $\mathcal A$ is nonempty by \cref{prop:strip,prop:development-1} if $A$ is chosen sufficiently large and is manifestly closed and connected. We now show that if the parameters satisfy \eqref{eq:parameter-smallness}, then $\mathcal A$ is also open.

Let $v_0\in\mathcal A$.    Taking $\mathfrak m_0$ sufficiently small and using the formula \eqref{eq:V-v}, the initial data estimate \cref{lem:est-dat-1}, and \cref{lem:approx-2-1,lem:approx-2-2}, we immediately find 
    \begin{equation}
     \label{eq:aux-lem-4}     N^u+N^v+T^{uu}+T^{uv}+T^{vv}+S\les r^{-2}\theta(\lambda)\eta\mathbf 1_{\{r\ge\frac 16r_1\}}
    \end{equation}
on $\mathcal R^{v_0}_\mathrm{aux}$. By \eqref{eq:aux-boundedness} and the observation that $f=0$ along $C_{-\frac 23r_1}\cap \mathcal R^{v_0}_\mathrm{aux}$, we have
\begin{equation}
  \label{eq:aux-lem-5}   |Q|+|\varpi|+|\partial_vr-\partial_v\hat r|+|\Omega^{-2}\partial_vr-\hat\Omega^{-2}\partial_v\hat r|+|\Omega^2-\hat\Omega^2|+|\partial_v\Omega^2-\partial_v\hat\Omega^2|+|\partial_u\Omega^2|\les \theta(\lambda)\eta
\end{equation}
along $\{\tau=0\}$ and $C_{-\frac 23r_1}$ in $\mathcal R^{v_0}_\mathrm{aux}$. Then, using \eqref{eq:aux-lem-4} and the Einstein--Maxwell--Vlasov system, we see that  \eqref{eq:aux-lem-5} holds on $\mathcal R^{v_0}_\mathrm{aux}$. This improves \eqref{eq:approx-2-boot-1} for an appropriate choice of $A$ (independent of $\eta$). 

We now improve \eqref{eq:approx-2-boot-2}.  Using the Lorentz force equations \eqref{eq:SS-Lor-4} and \eqref{eq:SS-Lor-5}, we have
\begin{align*}
    \frac{d}{dv}(\Omega^2p^u-\hat\Omega^2\hat p^u)&= \left.\left(\partial_v{\log\Omega^2}-\frac{2\partial_vr}{r}\right)\frac{\ell^2}{r^2p^v}\right|_{\gamma}-\left.\left(\partial_v{\log\hat\Omega^2}-\frac{2\partial_v\hat r}{\hat r}\right)\frac{\hat\ell^2}{\hat r^2\hat p^v}\right|_{\hat\gamma} -\mathfrak e \frac{\Omega^2 Q}{r^2}\frac{p^u}{p^v}, \\
    \frac{d}{dv}(\Omega^2p^v-\hat\Omega^2\hat p^v)&=\left.\left(\partial_u{\log\Omega^2}-\frac{2\partial_ur}{r}\right)\frac{\ell^2}{r^2p^v}\right|_{\gamma}+\left.\frac{2\partial_u\hat r}{\hat r}\frac{\hat\ell^2}{\hat r^2\hat p^v} \right|_{\hat\gamma}+\mathfrak e \frac{\Omega^2 Q}{r^2}.
\end{align*}
Using the parameter hierarchy \eqref{eq:parameter-smallness}, the bootstrap assumptions \eqref{eq:approx-2-boot-1} and \eqref{eq:approx-2-boot-2}, \cref{lem:approx-2-1,lem:approx-2-2}, and the bound \eqref{eq:approx-2-u-deflection}, we can estimate
\begin{equation*}
      \left| \frac{d}{dv}(\Omega^2p^u-\hat\Omega^2\hat p^u)\right|+\left| \frac{d}{dv}(\Omega^2p^u-\hat\Omega^2\hat p^u)\right|\les\theta(\lambda) \eta+ A\theta(\lambda)\eta e^{B(v-v_0)}.
\end{equation*}
Integrating and choosing the constants $A$ and $B$ sufficiently large in terms of the implied constants and $\breve v$ improves \eqref{eq:approx-2-boot-2} and shows that the solution extends to $\mathcal R_\mathrm{aux}^{\breve v}$. 

The solution is at once extended to $\tilde{\mathcal R}_\mathrm{aux}^{\breve v}$ by \cref{lem:Minkowski}. The rest of the conclusions of the lemma follow immediately from \cref{lem:Minkowski,lem:approx-1,lem:approx-2-1,lem:approx-2-2} and \eqref{eq:aux-lem-4}. \end{proof}

\subsection{The far region}\label{sec:far-massive}

 \begin{figure}
\centering{
\def\svgwidth{12pc}
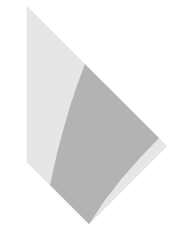}
\caption{Penrose diagram of the bootstrap region $\mathcal R^{\breve v,v_f}_\mathrm{far}$ used in the proof of \cref{lem:beam-far}.}\label{fig:R-far}
\end{figure}

\begin{lem}\label{lem:beam-far}
  For any $\breve v$, $\eta$, $\ve$, and $\mathfrak m_0$ satisfying \eqref{eq:parameter-smallness}, there exists a constant $C_\nu>0$ such that the following holds. The development of $\mathcal S_{\lambda,M',\eta,\ve}$ obtained in \cref{lem:approx-2} can be uniquely extended globally to $\mathcal C_{r_2}$. Moreover, the solution satisfies the estimates
\begin{gather}
  \label{eq:approx-3-1}  0\le m\le 10M,\quad 0\le Q\le 6M,\\
    \label{eq:approx-3-2}  \Omega^2\sim 1,\quad\partial_vr\sim-\partial_ur\sim 1,\\
    \label{eq:approx-3-3} (1+u^2)|\partial_u\Omega^2|\les 1,\quad v^2|\partial_v\Omega^2|\les 1
\end{gather}
on $\mathcal C_{r_2}\cap\{v\ge\breve v\}$, and the distribution function satisfies
    \begin{equation*}
        \pi(\spt f)\cap \{v\ge \breve v\}\subset \{6C_\nu u\le v\}.
    \end{equation*}
\end{lem}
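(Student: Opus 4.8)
\textbf{Proof plan for \cref{lem:beam-far}.}

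The plan is to run a bootstrap argument on the regions $\mathcal R_\mathrm{far}^{\breve v,v_f}\doteq \{v\ge u\}\cap\{\breve v\le v\le v_f\}\cap\{u\ge -\tfrac 23 r_1\}$, with $v_f$ ranging over $[\breve v,\infty)$, using as the continuation criterion the generalized extension principle \cref{prop:ext} together with \cref{prop:char-IVP-Vlasov} (and treating the Minkowskian triangle near the center via \cref{lem:Minkowski} as before). The bootstrap assumptions should assert: $\Omega^2\approx 1$ and $\tfrac 13\le \partial_v r\le 1$, $-\partial_u r\sim 1$ with the precise constants inherited from \cref{lem:approx-2} along $\underline C_{\breve v}$; the Hawking mass bound $0\le m\le 10M$ (hence $0\le Q\le 6M$ by monotonicity of \eqref{eq:Max-u}--\eqref{eq:Max-v}); and the crucial ``outgoingness'' assumption $\pi(\spt f)\cap\{v\ge\breve v\}\subset\{6C_\nu u\le v\}$ for a large constant $C_\nu$ to be fixed. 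Nonemptiness follows from \cref{lem:approx-2} (with $C_\nu$ chosen using \eqref{eq:aux-final-pv} and \eqref{eq:main-final-pv}, which give $p^u/p^v\les\breve v^{-2}$ along $\underline C_{\breve v}$), and closedness is automatic; the work is in improving the assumptions.

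The key mechanism, following Dafermos' stability of Minkowski for spherically symmetric Einstein--massless Vlasov \cite{Dafermos-Vlasov} (cf.\ also \cite[Chapter 4]{TaylorPhD}), is the monotonicity of the Hawking mass: since $\partial_v r>0$ and $\partial_u r<0$ in the bootstrap region, \eqref{eq:Hawking-u}--\eqref{eq:Hawking-v} and \eqref{eq:mod-Hawking-u}--\eqref{eq:mod-Hawking-v} give the a priori energy estimates
\begin{equation*}
    \int r^2\Omega^2 T^{uv}\,\partial_u r\,du'\les 1,\qquad \int r^2\Omega^2 T^{uv}\,\partial_v r\,dv'\les 1,
\end{equation*}
and similarly controlled fluxes of $T^{uu}$, $T^{vv}$, uniformly in $\breve v,\eta,\ve,\mathfrak m$, simply because $\varpi\le 10M$ along $\underline C_{\breve v}$. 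First I would use these, together with $r\sim v$ (which comes from integrating $\partial_v r\sim 1$), to bound the unweighted matter fluxes $\int T^{uv}\,dv'$, $\int T^{uv}\,du'$, etc., and feed these into \eqref{eq:EE-wave-r}, \eqref{eq:Ray-u}, \eqref{eq:Ray-v}, and \eqref{eq:EE-wave-Omega} to recover $\partial_v r\sim 1$, $-\partial_u r\sim 1$, $\Omega^2\approx 1$ and the decay $|\partial_u\Omega^2|\les (1+u^2)^{-1}$, $|\partial_v\Omega^2|\les v^{-2}$; the center boundary condition and \cref{lem:Minkowski} handle the region $\{u\ge \tfrac 13 r_1\}$. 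The boundedness of $Q$ and $m$ is then immediate from the Maxwell and mass equations and the flux bounds. The remaining and most delicate point is improving the support assumption on $f$: along an electromagnetic geodesic $\gamma$ in $\Gamma_f$ parametrized by $v$, $\tfrac{d\gamma^u}{dv}=\tfrac{\ell^2 r^{-2}+\mathfrak m^2}{\Omega^2(p^v)^2}$, so I need $p^v$ to stay bounded below and $\tfrac{\ell^2 r^{-2}+\mathfrak m^2}{(p^v)^2}$ to be $v$-integrable; the mass-shell relation and \eqref{eq:aux-final-pv}, \eqref{eq:main-final-pv} give the needed initial smallness $p^u/p^v\les \breve v^{-2}$, and the Lorentz force equation \eqref{eq:SS-Lor-5} shows $\Omega^2 p^v$ is driven upward by the $+\mathfrak e Q r^{-2}\Omega^2 p^v$ term while the $\ell^2/r^2$ correction is integrable (here one uses $r\sim v\ge \breve v$, $\ell\les 1$ from the near-region estimates, and $\mathfrak m^2\le \mathfrak m_0^2$ tiny). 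Hence $p^v$ is bounded below, $\tfrac{d\gamma^u}{dv}\les v^{-2}(\cdots)$ is integrable, and the total $u$-deflection of $\gamma$ beyond $v=\breve v$ is $\les \breve v^{-1}$, which for $\breve v$ large keeps $\gamma$ within $\{6C_\nu u\le v\}$; this closes the bootstrap.

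I expect the main obstacle to be the $f$-support/outgoingness bootstrap in the \emph{massive} case: unlike $\mathfrak m=0$, where $p^u\les r^{-2}$ is automatic from the mass shell and the beams are literally confined to a null slab, when $\mathfrak m>0$ one has the non-decaying contribution $\mathfrak m^2/(p^v)^2$ to $d\gamma^u/dv$, and one must argue that $p^v$ does not merely stay bounded below by a constant but in fact that $\mathfrak m^2 \int_{\breve v}^\infty (p^v)^{-2}\,dv$ is small — this is where the smallness $\mathfrak m\le\mathfrak m_0$ relative to $\breve v$ in the hierarchy \eqref{eq:parameter-smallness} is essential, and it must be used carefully so that the constant $C_\nu$ (and the claim $\pi(\spt f)\cap\{v\ge\breve v\}\subset\{6C_\nu u\le v\}$) can be fixed \emph{before} the final dispersive analysis of \cref{sec:massive-dispersion} refines the $p^u$ behavior after time $v_\#$. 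A secondary technical point is ensuring the flux estimates and the resulting geometric control are genuinely uniform across $\lambda\in[-1,2]$ and $|M'-M|\le\delta$; this follows since all the relevant bounds reduce to $\varpi\le 10M$ and $r_1\le r\les v$ along $\underline C_{\breve v}$, which hold uniformly by \cref{lem:approx-2}.
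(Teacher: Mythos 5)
Your overall architecture is the paper's: a bootstrap on far regions anchored at $\underline C_{\breve v}$, the Hawking-mass monotonicity giving $u$- and $v$-flux energy estimates for $T^{uv}$ that are uniform in the beam parameters, these feeding into \eqref{eq:r-wave}, \eqref{eq:Omega-wave}, \eqref{eq:Ray-u}, \eqref{eq:Ray-v} to recover $\partial_vr\sim-\partial_ur\sim 1$, $\Omega^2\sim 1$ and the decay of $\partial_u\Omega^2$, $\partial_v\Omega^2$, and then an analysis of the electromagnetic geodesic flow to recover the support condition. That part would go through essentially as you describe (one slip: the bootstrap region must be $\{u\ge -r_2\}$, not $\{u\ge-\tfrac 23 r_1\}$, since the main beam enters the far region at retarded times down to $-r_2$; also the paper fixes $C_\nu$ from $\partial_ur$ on $\underline C_{\breve v}$ rather than from the momentum ratio, but this is cosmetic).

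The genuine gap is in how you close the support bootstrap. You argue that $d\gamma^u/dv=(\ell^2r^{-2}+\mathfrak m^2)/(\Omega^2(p^v)^2)$ is $v$-integrable, so the total $u$-deflection beyond $\breve v$ is $\les\breve v^{-1}$. This is correct for $\mathfrak m=0$ but false for $\mathfrak m>0$: the charge term in \eqref{eq:SS-Lor-5} contributes at most $\int\mathfrak e Qr^{-2}\Omega^2\,dv\les M\breve v^{-1}$ to $\Omega^2p^v$, so $p^v$ stays \emph{bounded} (cf.\ \cref{lem:momentum-very-far}), and therefore $\mathfrak m^2\int_{\breve v}^{\infty}(p^v)^{-2}\,dv$ diverges no matter how small $\mathfrak m\le\mathfrak m_0$ is in the hierarchy \eqref{eq:parameter-smallness}; indeed the massive geodesics genuinely drift linearly, $u\sim_\eta\mathfrak m^2v$ (\cref{lem:dispersion-1}), and leave every strip $\{u\le C\}$. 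So the fix you flag as the "main obstacle" (making that integral small) cannot work. What the wedge inclusion $\{6C_\nu u\le v\}$ actually needs is only a \emph{slope} bound $p^u/p^v\le (7C_\nu)^{-1}$ along $\spt f$, not bounded deflection. The paper gets this by monotonicity: inside the wedge the energy estimate gives $|\partial_v{\log\Omega^2}|\les v^{-2}$ while $\partial_vr/r\gtrsim v^{-1}$, hence $\partial_v{\log\Omega^2}<2\partial_vr/r$, so by \eqref{eq:SS-Lor-4} the quantity $\Omega^2p^u$ is nonincreasing along every electromagnetic geodesic in the support of $f$; combined with $\frac{d}{dv}\log p^v\ge-\partial_v{\log\Omega^2}$ (from \eqref{eq:SS-Lor-2}, using the signs of $\partial_ur$ and $Q$), whose right-hand side has integral $\les\breve v^{-1}$, one obtains $p^u\les p^u(0)$ and $p^v\gtrsim p^v(0)$, hence $p^u/p^v\le(7C_\nu)^{-1}$ by \eqref{eq:main-final-pv} and \eqref{eq:aux-final-pv}, uniformly in $\mathfrak m\in[0,\mathfrak m_0]$. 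Replacing your deflection estimate with this slope argument is what closes the bootstrap in both the massless and massive cases.
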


We will make use of a bootstrap argument in the regions 
  \begin{equation*}
      \mathcal R_\mathrm{far}^{\breve v,v_f}\doteq \{v\ge u\}\cap\{\tau\ge 0\}\cap\{u\ge -r_2\}\cap\{\breve v\le v\le v_f\},
  \end{equation*}
where $v_f\ge \breve v$. Refer to \cref{fig:R-far}. The bootstrap assumptions are
\begin{gather}
\label{eq:boot-3-1}    -C_\nu\le \partial_ur \le -C_\nu^{-1},\\
 \label{eq:boot-3-2}   \tfrac 15 \le \partial_vr\le 1,\\
\label{eq:boot-3-3}    \pi(\spt f)\cap \mathcal R_\mathrm{far}^{\breve v,v_f}\subset \mathcal W,
\end{gather}
where 
\begin{equation*}
    \mathcal W \doteq\{6C_\nu u\le v\}\cap\{v\ge \breve v\}
\end{equation*}
and the constant $10\le C_\nu\les 1$ is chosen so that $-\tfrac 12 C_\nu \le \partial_ur \le -2C_\nu^{-1}$  on $\underline C_{\breve v}$. Such a constant exists by \cref{lem:approx-1,lem:approx-2}.

\begin{lem}\label{lem:approx-3-1} If \eqref{eq:parameter-smallness} holds, $v_f\ge \breve v$, $\mathcal R^{\breve v,v_f}_\mathrm{far}\subset\mathcal U$, and the bootstrap assumptions \eqref{eq:boot-3-1}--\eqref{eq:boot-3-3} hold on $\mathcal R^{\breve v,v_f}_\mathrm{far}$, then
\begin{gather}
    \label{eq:3-lem-1}  0\le  m\le 10M,\quad   0\le   Q\le 6M,\\
\label{eq:3-lem-3}    \Omega^2\sim 1,\\
\label{eq:3-lem-4}    \partial_v\!\log\Omega^2\les v^{-2},\\
\label{eq:3-lem-5}  \tfrac 12\le 1- \frac{2m}{r}\le 1 \end{gather}
on $\mathcal R^{\breve v,v_f}_\mathrm{far}$ and 
\begin{gather}
 \label{eq:3-lem-6}   r\sim v,\\
 \label{eq:3-lem-7}   \partial_v{\log\Omega^2 }<\frac{2\partial_vr}{r}
\end{gather}
on $\mathcal W$. Furthermore, 
\begin{equation}
    m=Q=0\label{eq:far-Mink-center}
\end{equation}
on $\mathcal R_\mathrm{far}^{\breve v,v_f}\setminus\mathcal W$. 
\end{lem}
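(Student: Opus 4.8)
\textbf{Proof plan for \cref{lem:approx-3-1}.} The plan is to run a standard Cauchy-stability-type argument in the far region, using the bootstrap assumptions \eqref{eq:boot-3-1}--\eqref{eq:boot-3-3} together with the key monotonicity structure of the spherically symmetric Einstein--Maxwell--Vlasov system in the region where $\partial_vr>0$ and $\partial_ur<0$. The main input will be the a priori energy estimates coming from the Hawking mass monotonicity, exactly as advertised in \cref{sec:intro-near}, equation \eqref{eq:energy-est-intro}. Since we will assume the bootstrap region lies to the future of $\underline C_{\breve v}$, all the estimates of \cref{lem:approx-1,lem:approx-2} are available on the initial data cones $\underline C_{\breve v}$ and $\{\tau=0\}\cap\{u\le -\frac 23 r_1\}$; in particular $0\le m\le 10M$, $0\le Q\le 6M$, $\Omega^2\sim 1$, $\partial_vr\sim -\partial_ur\sim 1$, and $v^2|\partial_v\Omega^2|\les 1$ there.

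First I would establish \eqref{eq:far-Mink-center}: by the bootstrap assumption \eqref{eq:boot-3-3}, $\pi(\spt f)\cap\mathcal R^{\breve v,v_f}_\mathrm{far}\subset\mathcal W$, so in the complementary region $\mathcal R^{\breve v,v_f}_\mathrm{far}\setminus\mathcal W$ all matter moments and the charge current vanish; combined with the fact that $\mathcal R^{\breve v,v_f}_\mathrm{far}\setminus\mathcal W$ is bounded to the past by portions of $\underline C_{\breve v}$ and of the center where $m=Q=0$ (by \cref{lem:approx-1,lem:approx-2} and the boundary conditions of \cref{def:normalized-defn}), the evolution equations \eqref{eq:Hawking-u}, \eqref{eq:Hawking-v}, \eqref{eq:Max-u}, \eqref{eq:Max-v} immediately propagate $m=Q=0$ throughout. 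Next I would prove \eqref{eq:3-lem-1}: $Q$ is monotone (increasing in $v$, decreasing in $u$) by the sign of $N^u,N^v\ge 0$ and the Maxwell equations, so it is controlled by its values on the initial cones, giving $0\le Q\le 6M$; for the Hawking mass, $m\ge 0$ follows because either $\partial_ur<0$ at the point in question (and then directly from \eqref{eq:Hawking-mass}, using also $\partial_vr>0$ from \eqref{eq:boot-3-2}) — in fact $\partial_ur<0$ everywhere here by \eqref{eq:boot-3-1} — and the upper bound $m\le 10M$ follows by integrating \eqref{eq:mod-Hawking-v} (which has a good sign when $\partial_vr>0$, $\partial_ur<0$, using the dominant energy condition \cref{rk:DEC}) from the initial data, together with $Q^2/(2r)\les 1$; this is where one invokes the boundedness of the energy flux $\int r^2\Omega^2 T^{uv}\partial_vr\,dv'\les 1$, which itself comes from $\partial_v\varpi \ge -\frac 12 \mathfrak e r\Omega^2 |Q| N^v$-type control — more precisely from monotonicity of $\varpi$ up to the electromagnetic correction, which is small because $\int r^{-2}\,dv'\les \breve v^{-1}$.

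With $0\le m\le 10M$ and $0\le Q\le 6M$ in hand, \eqref{eq:3-lem-5} is immediate from the definition $1-\frac{2m}{r}$ once one knows $r$ is bounded below on $\mathcal R^{\breve v,v_f}_\mathrm{far}$; combined with \eqref{eq:boot-3-1}, \eqref{eq:boot-3-2} and \eqref{eq:Omega-formula} this gives $\Omega^2\sim 1$, which is \eqref{eq:3-lem-3}. On $\mathcal W$, integrating \eqref{eq:boot-3-2} from $\underline C_{\breve v}$ (where $r\sim \breve v$) gives $r\sim v$, which is \eqref{eq:3-lem-6} — here one uses that on $\mathcal W$, $u$ is comparable to $v$ up to the constant $6C_\nu$, so the point is far from the center and $r\gtrsim v$. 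For \eqref{eq:3-lem-4}, I would integrate the wave equation \eqref{eq:Omega-wave} in $u$ from the cone $C_{-r_2}$ (or from $\{\tau=0\}$), writing $\partial_v\log\Omega^2(u,v) = \partial_v\log\Omega^2|_{u=u_\mathrm{in}} + \int \partial_u\partial_v\log\Omega^2\,du'$; the initial term is $\les v^{-2}$ by \cref{lem:approx-1,lem:approx-2}, and the bulk term is controlled by $\int (\Omega^2 m r^{-3} + \Omega^2 Q^2 r^{-4} + \Omega^4 T^{uv} + \Omega^2 S)\,du'$, where the geometric terms are $\les v^{-2}$ using $r\sim v$ and $|u_\mathrm{in}-u|\les 1$ (since the matter is confined to $\mathcal W$, which is a thin slab in $u$), and the matter terms are absorbed by the energy estimate $\int r^2\Omega^2 T^{uv}|\partial_ur|\,du'\les 1$ divided by $r^2\sim v^2$. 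Finally \eqref{eq:3-lem-7} follows by comparing the bound $\partial_v\log\Omega^2\les v^{-2}$ from \eqref{eq:3-lem-4} with $\frac{2\partial_vr}{r}\gtrsim \frac{1}{5v}$ from \eqref{eq:boot-3-2} and \eqref{eq:3-lem-6}, so for $\breve v$ sufficiently large the strict inequality holds on all of $\mathcal W$.

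The main obstacle I anticipate is making the energy-estimate step fully rigorous without circular reasoning: the bound $\int r^2\Omega^2 T^{uv}|\partial_ur|\,du'\les 1$ relies on monotonicity of $\varpi$ (equivalently $m$) which in turn uses the sign of $\partial_ur$ and $\partial_vr$ — available from the bootstrap assumptions \eqref{eq:boot-3-1}, \eqref{eq:boot-3-2} — but one must be careful that the electromagnetic correction terms $\pm\frac 12 \mathfrak e r\Omega^2 Q N^{u,v}$ in \eqref{eq:mod-Hawking-u}--\eqref{eq:mod-Hawking-v} do not spoil the monotonicity; these are controlled because $Q\les M$ and $\int r N^u\,du'$, $\int r N^v\,dv'$ are themselves bounded by the mass flux (via the dominant energy condition, $|N|\les$ appropriate components of $T$), and the $r^{-2}$-type falloff makes the total contribution $O(\breve v^{-1})$, hence negligible. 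Everything else is routine integration of the structure equations from the initial cones $\underline C_{\breve v}$ and $\{\tau=0\}\cap\{u\le -\frac 23 r_1\}$, on which all required estimates are supplied by \cref{lem:approx-1} and \cref{lem:approx-2}.
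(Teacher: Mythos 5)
Your overall architecture (use the bootstrap signs, Hawking-mass monotonicity, the $u$-direction energy estimate for \eqref{eq:3-lem-4}, vacuum propagation for \eqref{eq:far-Mink-center}) is the same as the paper's, but your treatment of \eqref{eq:3-lem-1} contains a genuine flaw. First, nonnegativity of $m$ does \emph{not} follow "directly from \eqref{eq:Hawking-mass}" when $\partial_ur<0$ and $\partial_vr>0$: that sign combination only gives $1-\tfrac{2m}{r}=-4\partial_ur\partial_vr/\Omega^2>0$, i.e.\ $m<r/2$ (you have inverted the case split used in the near-region argument, where the "direct" case is $\partial_ur\ge 0$). The correct route, and the one the paper uses, is that \eqref{eq:Hawking-v} gives $\partial_vm\ge 0$ under the bootstrap signs, so one integrates backwards along outgoing cones to $\underline C_{\breve v}$, $\{\tau=0\}$, or the center, where $m\ge 0$ is known from \cref{lem:approx-1,lem:approx-2} and \eqref{eq:boundary-1}. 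Second, your proposed upper bound on $m$ is circular: integrating \eqref{eq:mod-Hawking-v}, which has a \emph{nonnegative} right-hand side, forward from data only yields a lower bound, and the flux bound $\int r^2\Omega^2 T^{uv}\partial_vr\,dv'\les 1$ that you invoke itself presupposes an a priori upper bound on the (renormalized) mass at the future endpoint. The paper instead gets the upper bound purely from monotonicity in the other direction: \eqref{eq:Hawking-u} gives $\partial_um\le 0$, so $m(u,v)\le m(-r_2,v)\le 10M$ by the data estimates \eqref{eq:cone-1}--\eqref{eq:cone-3} on the vacuum cone $C_{-r_2}$; note also that the electromagnetic terms in \eqref{eq:Hawking-u}--\eqref{eq:Hawking-v} and \eqref{eq:mod-Hawking-u}--\eqref{eq:mod-Hawking-v} carry favorable signs (since $\mathfrak e,Q,N^u,N^v\ge 0$), so no smallness argument of the kind you sketch at the end is needed. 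Only \emph{after} these pointwise bounds does one derive the energy estimate $\int r^2\Omega^2T^{uv}(-\partial_ur)\,du'\les 1$ used to prove \eqref{eq:3-lem-4}; your worry about circularity is thus resolved by getting the pointwise mass bounds first from pure monotonicity, not from fluxes.

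Two smaller inaccuracies, both repairable with material you already have: for \eqref{eq:3-lem-5}, $r$ is \emph{not} bounded below on all of $\mathcal R^{\breve v,v_f}_\mathrm{far}$ (it vanishes at the center); the correct statement is $r\gtrsim\breve v$ on $\mathcal W$ (where $m\le 10M$) together with $m=0$ off $\mathcal W$ from \eqref{eq:far-Mink-center}. And in your proof of \eqref{eq:3-lem-4}, the justification "$|u_\mathrm{in}-u|\les 1$ since the matter is confined to a thin slab in $u$" is wrong: $\mathcal W$ has $u$-extent $\sim v/(6C_\nu)$ at advanced time $v$, and the thin-slab support \eqref{eq:spt-f-massless} is an a posteriori statement valid only for $\mathfrak m=0$. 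The desired bound nevertheless holds because the geometric integrands vanish off $\mathcal W$ (by \eqref{eq:far-Mink-center}), while on $\mathcal W$ one has $r\sim v$ and a $u$-integration length $\les v$, so $\int(\Omega^2mr^{-3}+\Omega^2Q^2r^{-4})\,du'\les v^{-2}$; your handling of the matter term via the $u$-direction energy estimate divided by $r^2\sim v^2$ is exactly the paper's. The remaining items (\eqref{eq:far-Mink-center}, \eqref{eq:3-lem-6}, \eqref{eq:3-lem-3}, \eqref{eq:3-lem-7}) are argued essentially as in the paper, up to your choice of integrating $\partial_vr$ from $\underline C_{\breve v}$/the center rather than $\partial_ur$ from $C_{-r_2}$ for \eqref{eq:3-lem-6}, which is equally valid.
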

\begin{proof}
Proof of \eqref{eq:3-lem-1} and \eqref{eq:far-Mink-center}:  The bootstrap assumptions \eqref{eq:boot-3-1} and \eqref{eq:boot-3-2} imply $\partial_ur<0$ and $\partial_vr>0$. Therefore, \eqref{eq:3-lem-1}
follows from the monotonicity properties of the Einstein--Maxwell--Vlasov system, \cref{lem:est-dat-1,lem:approx-1,lem:approx-2}, and the boundary condition \eqref{eq:boundary-1}. 

Proof of \eqref{eq:3-lem-6}: By the bootstrap assumption \eqref{eq:boot-3-1}, 
\begin{equation*}
    r(u,v)-r(-r_2,v)\ge -C_\nu(u+r_2) \ge -\tfrac 16 v-C_\nu r_2
\end{equation*}
for $(u,v)\in\mathcal W$. By \eqref{eq:cone-4}, the lower bound in \eqref{eq:3-lem-6} easily follows if $\breve v$ is taken sufficiently large. Since $\partial_ur<0$, $r(u,v)\le r(-r_2,v)\les v$ for $v\ge \breve v$ and $\breve v$ sufficiently large, which proves the upper bound in \eqref{eq:3-lem-6}.

Proof of \eqref{eq:3-lem-5}: This is immediate for $\breve v$ chosen sufficiently large in light of \eqref{eq:3-lem-1} and the fact that
\begin{equation}
    \inf_{\mathcal W} r \gtrsim \breve v,\label{eq:r-W-lower-bound}
\end{equation}
which follows from \eqref{eq:3-lem-6}.

Proof of \eqref{eq:3-lem-3}: This follows from \eqref{eq:Omega-formula} by combining the bootstrap assumptions \eqref{eq:boot-3-1} and \eqref{eq:boot-3-2} with \eqref{eq:3-lem-5}. 

Proof of \eqref{eq:3-lem-4}: Let $(u,v)\in\mathcal W$. We will show that 
\begin{equation}
\int_{-r_2}^u  T^{uv}(u',v)\,du'\les v^{-2},\label{eq:3-lem-8}
\end{equation}
which together with \eqref{eq:Omega-wave}, \eqref{eq:3-lem-1}, \eqref{eq:3-lem-3}, and \eqref{eq:3-lem-6}, readily implies \eqref{eq:3-lem-4}. To prove \eqref{eq:3-lem-8}, we observe that by the bootstrap assumptions \eqref{eq:boot-3-1} and \eqref{eq:boot-3-2} and the evolution equation \eqref{eq:Hawking-u}, $\partial_um\le 0$. Using that $m\ge 0$ in $\mathcal R^{\breve v,v_f}_\mathrm{aux}$, we therefore infer 
\begin{equation*}
    \int_{-r_2}^u \left.\left[\tfrac 12 r^2\Omega^2( T^{uv}(-\partial_u r) + T^{vv}\partial_vr)+\frac{Q^2}{2r^2}(-\partial_u r)\right]\right|_{(u',v)}du'= m(-r_2,v)-m(u,v)\le 2\varpi_2.
\end{equation*}
Since all three integrands are nonnegative, this energy estimate, taken together with \eqref{eq:boot-3-1}, \eqref{eq:3-lem-3} and \eqref{eq:3-lem-6} imply 
\begin{equation*}
    v^2\int_{-r_2}^u  T^{uv}(u',v)\,du'\les \int_{-r_2}^u \left.r^2\Omega^2 T^{uv}\partial_v r\right|_{(u',v)}\,du'\les 1
\end{equation*} for $(u,v)\in\mathcal W$,
which proves \eqref{eq:3-lem-8}.

Proof of \eqref{eq:3-lem-7}: This follows from the fact that $\partial_vr/r\gtrsim v^{-1}$ in $\mathcal W$ by \eqref{eq:boot-3-2} and \eqref{eq:3-lem-6}.
\end{proof}

\begin{proof}[Proof of \cref{lem:beam-far}]
The proof is a bootstrap argument based on the bootstrap assumptions \eqref{eq:boot-3-1}--\eqref{eq:boot-3-3}. Let 
\begin{equation*}
    \mathcal A\doteq \{v_f\in [\breve v,\infty): \mathcal R^{\breve v,v_f}_\mathrm{aux}\subset\mathcal U\text{ and \eqref{eq:boot-3-1}--\eqref{eq:boot-3-3} hold on $\mathcal R^{\breve v,v_0}_\mathrm{aux}$}\}.
\end{equation*}
The set $\mathcal A$ is nonempty by \cref{prop:development-1}, \cref{lem:approx-1}, and \cref{lem:approx-2}. It is also manifestly closed by continuity of the bootstrap assumptions. We now show that if \eqref{eq:parameter-smallness} holds, then $\mathcal A$ is also open. Let $v_f\in\mathcal A$.

Improving \eqref{eq:boot-3-2}: Let $(u,v)\in\mathcal R^{\breve v,v_f}_\mathrm{aux}$. Integrating the wave equation \eqref{eq:r-wave} in $u$ starting at $u'=-r_2$ and using the estimates of \cref{lem:approx-3-1} yields
 \begin{equation*}
     |\partial_vr(u,v)-\tfrac 12|\le \int_{-r_2}^u \left.\left(\frac{\Omega^2}{2r^2}\left(m+\frac{Q^2}{2r}\right)+\tfrac 14 r\Omega^4 T^{uv}\right)\right|_{(u',v)}du'\les v^{-1}\le \breve v^{-1},
 \end{equation*}
 which improves \eqref{eq:boot-3-2} for $\breve v$ sufficiently large.
 
 Improving \eqref{eq:boot-3-1}: Using \eqref{eq:Hawking-v}, $\partial_vm\ge 0$, and \eqref{eq:r-W-lower-bound}, we obtain (similarly to \eqref{eq:3-lem-8})
 \begin{equation}
     \int_{v_1}^{v_2} \left.rT^{uv}\right|_{(u,v')}\,dv'\les \breve v^{-1}\label{eq:3-lem-9}
 \end{equation}
for any $(u,v_1),(u,v_2)\in \mathcal R^{\breve v,v_f}_\mathrm{aux}$. Let $(u,v)\in\mathcal R^{\breve v,v_f}_\mathrm{aux}$. We integrate the wave equation \eqref{eq:r-wave} in $v$ starting at $\underline C_{\breve v}$ if $u\le \breve v$ and at $(u,u)\in\Gamma$ if $u>\breve v$. In the former case, 
 \begin{equation*}
     |\partial_ur(u,v)-\partial_ur(u,\breve v)|\le \int_{\breve v}^v\left.\left(\frac{\Omega^2}{2r^2}\left(m+\frac{Q^2}{2r}\right)+\tfrac 14 r\Omega^2 T^{uv}\right)\right|_{(u,v')}dv'\les \breve v^{-1},
 \end{equation*}
 which improves \eqref{eq:boot-3-1} for $\breve v$ sufficiently large by the definition of $C_\nu$. In the latter case, the boundary condition \eqref{eq:boundary-1} implies $\partial_ur(u,u)=-\partial_vr(u,u)$, so 
\begin{equation*}
     |\partial_ur(u,v)+\partial_vr(u,u)|\le \int_{u}^v\left.\left(\frac{\Omega^2}{2r^2}\left(m+\frac{Q^2}{2r}\right)+\tfrac 14 r\Omega^2 T^{uv}\right)\right|_{(u,v')}dv'\les \breve v^{-1},
 \end{equation*}
 which by \eqref{eq:boot-3-2} improves \eqref{eq:boot-3-1} for $\breve v$ sufficiently large by the definition of $C_\nu$.

 Improving \eqref{eq:boot-3-3}: Let $\gamma:[0,S)\to \mathcal R^{\breve v,v_f}_\mathrm{aux}$ be an electromagnetic geodesic in the support of $f$ starting at $\underline C_{\breve v}$ at $s=0$. By \eqref{eq:SS-Lor-4} and \eqref{eq:3-lem-7},
 \begin{equation*}
     \frac{d}{ds}(\Omega^2p^u)\le 0,
 \end{equation*}
 so by \eqref{eq:3-lem-3},
 \begin{equation}
     p^u(s)\les p^u(0).\label{eq:p^u-final}
 \end{equation}
Using \eqref{eq:SS-Lor-2}, the signs of $\partial_ur$ and $Q$, and parametrizing $\gamma$ by $v$ yields 
\begin{equation}
    \frac{d}{dv}\log p^v\ge -\partial_v{\log\Omega^2}. \label{eq:3-lem-10}
\end{equation}
By \eqref{eq:3-lem-4}, it is easy to see that 
\begin{equation*}
    \int_{\breve v}^v|\partial_v{\log\Omega^2(\gamma^u(v'),v')}|\,dv' \les \breve v^{-1}
\end{equation*}
and therefore 
\begin{equation*}
     \exp\left(-\int_{\breve v}^v\partial_v{\log\Omega^2(\gamma^u(v'),v')}\,dv'\right)\ge \frac 12
\end{equation*}
for $\breve v$ sufficiently large. It follows from \eqref{eq:3-lem-10} that 
\begin{equation}
    p^v(s)\gtrsim p^v(0).\label{eq:p^v-final}
\end{equation}
Combining \eqref{eq:p^u-final} and \eqref{eq:p^v-final} yields 
\begin{equation*}
    \frac{p^u}{p^v}(s)\le \frac{1}{7C_\nu},
\end{equation*}
for $\breve v$ sufficiently large by \eqref{eq:main-final-pv} and \eqref{eq:aux-final-pv}. It is then easy to show that $\gamma(s)$ stays in $\{7C_\nu u\le v\}$ for every $s\in [0,S)$, which quantitatively improves  \eqref{eq:boot-3-3}. The rest of the existence and uniqueness proof now follows a standard continuity argument using \cref{prop:strip} and \cref{lem:Minkowski}.

To estimate $|\partial_u\Omega^2|$, note that $|\partial_u\Omega^2|\les (1+u^2)^{-1}$ along $\underline C_{\breve v}\cup \Gamma$ by \cref{lem:approx-2} and \eqref{eq:Omega-center}. Observe that
\[\inf_{C_u\cap\mathcal W\cap \mathcal R^{\breve v,\infty}_\mathrm{far}} r\sim \inf_{C_u\cap\mathcal W\cap \mathcal R^{\breve v,\infty}_\mathrm{far}} v=\max\{\breve v,6C_\nu u\}\gtrsim 1+u,\]
so the energy estimate \eqref{eq:3-lem-9} can be improved to 
\begin{equation*}
    \int_{v_1}^{v_2}T^{uv}(u,v')\,dv'\les (1+u)^{-2}
\end{equation*}
for any $u\le v_1\le v_2$. Therefore the desired estimate can be propagated to the interior by integrating the wave equation \eqref{eq:Omega-wave} in $v$. Together with the bootstrap assumptions and \cref{lem:approx-3-1}, this completes the proof of the estimates \eqref{eq:approx-3-1}--\eqref{eq:approx-3-3}. \end{proof}

\subsection{The dispersive estimate in the massive case}\label{sec:massive-dispersion}

Let $\mathfrak m>0$ and consider the solution $(r,\Omega^2,Q,f)$ given by \cref{lem:beam-far}, defined globally on $\mathcal C_{r_2}$. We augment the hierarchy \eqref{eq:parameter-smallness} with a large parameter $v_\#$ satisfying
\begin{equation}
    0<v_\#^{-1}\ll \mathfrak m\le \mathfrak m_0\label{eq:parameter-smallness-2}
\end{equation}
and aim to prove the following
\begin{lem}\label{lem:dispersion-main}
For any $\breve v, \eta,\ve,\mathfrak m_0,\mathfrak m$, and $v_\#$ satisfying \eqref{eq:parameter-smallness} and \eqref{eq:parameter-smallness-2}, we have the decay
\begin{equation*}
    \mathcal M\le Cv^{-3}
\end{equation*} for $v\ge v_\#$ and any $\mathcal M\in \{N^u,N^v,T^{uu},T^{uv},T^{vv},S\}$, where $C$ may depend on $\eta,\ve$, $\mathfrak m$, and $v_\#$.
\end{lem}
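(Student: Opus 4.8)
The plan is to prove the sharp $v^{-3}$ decay of all moments of $f$ in the far region by a change-of-variables argument that converts, at late advanced times $v \ge v_\#$, volume in the $p^u$-variable into physical-space volume of the support of $f$ at the intermediate time $v = v_\#$. Throughout, I work with the global solution from \cref{lem:beam-far}, so all of the geometric estimates \eqref{eq:approx-3-1}--\eqref{eq:approx-3-3} are available, together with the containment $\pi(\spt f) \subset \{6C_\nu u \le v\}$ and hence $r \sim v$ on $\pi(\spt f)$ for $v \ge \breve v$. First I would record that, by the mass shell relation \eqref{eq:mass-shell} and $\Omega^2 \sim 1$, any electromagnetic geodesic $\gamma$ in $\spt f$ satisfies $p^u = (\ell^2 r^{-2} + \mathfrak m^2)/(\Omega^2 p^v)$; since $\ell \les 1$ (conservation of angular momentum plus the initial bounds $\ell \sim \ve$ or $\ell \sim 1$) and $r \sim v$, once $v \ge v_\#$ with $v_\# \gg \mathfrak m^{-1}$ the angular-momentum contribution $\ell^2 r^{-2} \les v_\#^{-2} \ll \mathfrak m^2$ is negligible, so $p^u \sim_{\eta} \mathfrak m^2/p^v$ and in particular $p^u \sim_{\eta,\ve,\mathfrak m} 1$ up to the fluctuations of $p^v$. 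The key point is that $p^v$ is \emph{increasing and bounded} along $\gamma$ for $v \ge \breve v$: this follows from \eqref{eq:SS-Lor-2}, the signs $\partial_u r < 0$, $Q \ge 0$, and the integrated smallness of $\partial_v \log \Omega^2$ from \eqref{eq:3-lem-4}, exactly as in the proof of \eqref{eq:boot-3-3}; dually $p^u$ is essentially decreasing by \eqref{eq:p^u-final}.

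\textbf{The volume estimate.} The main step is to bound the phase-space volume function $\mathscr V(u,v)$ from \eqref{eq:phase-space-volume} by $v^{-3}$ for $v \ge v_\#$. I would use the representation \eqref{eq:V-v},
\[
  \mathscr V(u,v) = \frac{2}{r^2} \int_0^\infty \int_{\{p^v : f(u,v,p^u,p^v) \ne 0\}} \frac{dp^v}{p^v}\,\ell\,d\ell,
\]
which already contributes one factor of $r^{-2} \sim v^{-2}$. It therefore suffices to show that the remaining double integral is $\les v^{-1}$. For this I would flow backwards along the geodesic spray to time $v = v_\#$: the map sending $(p^v, \ell)$ at $(u,v)$ to the corresponding data $(p^v_\#, \ell_\#, u_\#)$ of the geodesic through $(u,v)$ at its intersection with $\{v = v_\#\}$ is a diffeomorphism onto (a subset of) the support of $f|_{v = v_\#}$, with $\ell_\# = \ell$ conserved. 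The Jacobian of the restriction to fixed $\ell$ of the map $p^v \mapsto (u_\#, p^v_\#)$ along $\{v = v_\#\}$ is controlled using the ODEs \eqref{eq:gamma-u}--\eqref{eq:p-u-v} (parametrizing by $v$) together with the already-established bounds: the $u$-spread of the beam between $v_\#$ and $v$ is small, and $p^v$ changes by a bounded multiplicative factor. The upshot, after using that $\spt f|_{v = v_\#}$ has finite measure in $(u_\#, p^v_\#, \ell)$ (from the near-region estimates and \cref{lem:approx-2}, \cref{lem:beam-far}) and that on this set $p^v_\# \sim_{\eta,\ve} \mathfrak m$, is that
\[
  \int \int \frac{dp^v}{p^v}\,\ell\,d\ell \les_{\eta,\ve,\mathfrak m,v_\#} \frac{1}{v},
\]
the extra $v^{-1}$ coming from the fact that as $v$ increases the geodesics through a fixed $(u,v)$ with $f \ne 0$ originate from a set in $(u_\#, \ell)$ of $u_\#$-width $\les v^{-1}$ (the beam at $\{v = v_\#\}$ has bounded $u_\#$-width, but the backward light cone through $(u,v)$ intersects $\{v = v_\#\}$ in a $u_\#$-interval whose relevant portion shrinks like $v^{-1}$ because of the outgoingness bound $p^u/p^v \les v^{-2}$, equivalently $du/dv \les v^{-2}$ along the flow).

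\textbf{From volume to moments.} Once $\mathscr V(u,v) \les v^{-3}$ is established for $v \ge v_\#$, the decay of the moments follows from the transport nature of the Vlasov equation. Since $f$ is constant along the spray and $f \les_{\eta,\ve,\mathfrak m} 1$ by the near-region bounds and \cref{lem:approx-1}, and since on $\spt f$ with $v \ge v_\#$ one has $p^u \les_{\eta,\ve,\mathfrak m} 1$ and $p^v \les_{\eta,\ve,\mathfrak m} 1$ (boundedness of $p^v$ along the flow), the definitions \eqref{eq:Nu}--\eqref{eq:S} give
\[
  \mathcal M(u,v) \les_{\eta,\ve,\mathfrak m} \Omega^2 \sup_{\spt f} \big((p^u)^2 + (p^v)^2\big) \cdot |\{(p^u,p^v) : f \ne 0\}| \les_{\eta,\ve,\mathfrak m} \mathscr V(u,v) \les v^{-3}
\]
for each $\mathcal M \in \{N^u, N^v, T^{uu}, T^{uv}, T^{vv}, S\}$; the bound on $S$ uses additionally $\Omega^2 p^u p^v - \mathfrak m^2 = \ell^2/r^2 \les v^{-2}$, which actually yields faster decay but $v^{-3}$ suffices. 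Finally, for $u$ outside $\{6C_\nu u \le v\}$ the moments vanish by \cref{lem:beam-far}, so the estimate holds on all of $\{v \ge v_\#\}$.

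\textbf{Main obstacle.} The delicate point is the quantitative control of the change of variables in the volume step --- specifically, extracting the extra factor of $v^{-1}$ rather than merely an $O(1)$ bound. This requires simultaneously tracking three things along the backward flow: the conservation of $\ell$, the bounded multiplicative oscillation of $p^v$ (from $\partial_v \log \Omega^2$ being integrable with small integral), and the fact that the $u$-coordinate essentially stalls, $|\gamma^u(v) - \gamma^u(v_\#)| \les \int_{v_\#}^v p^u/p^v \, dv' \les \int_{v_\#}^v v'^{-2}\,dv' \les v_\#^{-1}$, while the relevant slice of initial $u_\#$-values feeding into a \emph{fixed} point $(u,v)$ has width comparable to the $v$-derivative of that stalling, i.e.\ $\les v^{-2} \cdot (\text{length in }v)$. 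Making this Jacobian bookkeeping precise --- carefully distinguishing ``$u$-width of the beam'' from ``$u$-width of the cone through $(u,v)$'' --- is where the real work lies; everything else is a routine consequence of the estimates already proved in \cref{sec:near-main}--\cref{sec:far-massive}.
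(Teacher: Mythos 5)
The high-level slogan of your proposal (convert $p$-volume at a late point $(u,v)$ into physical-space volume of the beam at the intermediate slice $\{v=v_\#\}$) is the right one, but the mechanism you invoke for the crucial extra factor of $v^{-1}$ is false in the massive regime and contradicts your own opening observation. For $v\ge v_\#\gg\mathfrak m^{-1}$ one has $p^u\sim_\eta\mathfrak m^2$ and $p^v\sim_\eta 1$ on $\spt f$ (\cref{lem:momentum-very-far}), so $du/dv=p^u/p^v\sim_\eta\mathfrak m^2$ is essentially \emph{constant}; the bound $p^u/p^v\les v^{-2}$ is the massless-case fact, and its failure for massive particles is precisely why the decay is the isotropic $v^{-3}$ rather than the massless hierarchy. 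Consequently the $u$-coordinate does not ``stall'': $u\sim_\eta\mathfrak m^2 v$ grows linearly (\cref{lem:dispersion-1}), and likewise $p^v_\#\sim_\eta 1$, not $\sim\mathfrak m$. Moreover, even granting your false premise, the backward deflection between $v$ and $v_\#$ would be $\int_{v_\#}^v v'^{-2}\,dv'\les v_\#^{-1}$, which is $O(1)$ in $v$, so the claimed $u_\#$-width $\les v^{-1}$ of the set of origins would not follow; as written your argument only yields $\mathscr V\les v^{-2}$ (from $r^{-2}$), not $v^{-3}$. (Incidentally, whether you use the $p^v$- or $p^u$-representation is immaterial, since $dp^v/p^v=dp^u/p^u$ at fixed $\ell$ on the mass shell; the issue is the mechanism, not the choice of variable.)

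The missing idea is a quantitative \emph{lower} bound on the Jacobian of the backward flow map $p^u_0\mapsto\gamma^u(v_\#)$ at fixed $(u_0,v_0)$ and fixed $\ell$: precisely because $p^u$ does not decay, nearby initial momenta separate linearly in position, $|\partial\gamma^u(v_\#)/\partial p^u_0|\sim v_0-v_\#$. The paper proves this by differentiating the $v$-parametrized Lorentz equations in $p^u_0$ (the variational system \eqref{eq:mathfrak-var-1}--\eqref{eq:mathfrak-var-2}) and closing a bootstrap \eqref{eq:u-bootstrap} that feeds back the very moment decay being proved, together with the $u$-weighted bound $|\partial_u\Omega^2|\les(1+u^2)^{-1}$ from \cref{lem:beam-far}. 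Changing variables $p^u\mapsto\gamma^u(v_\#)$ in \eqref{eq:V-u}, using $dp^u/p^u\sim_\eta\mathfrak m^{-2}dp^u$ and the fact that the backward images lie in a $u$-interval of length $\les_\eta\min\{\mathfrak m^2v_\#,\mathfrak m^2(v_0-v_\#)\}$ (\cref{lem:dispersion-1}), gives \eqref{eq:volume-far}, i.e.\ $\mathscr V\les_{\eta,\ve}v_\#v^{-3}$; the moment bounds then follow as in your last step. Your ``Jacobian bookkeeping'' paragraph gestures at this but never identifies, let alone proves, the linear-in-$(v_0-v_\#)$ expansion of the flow map, which is the substantive content of the lemma; without it the proof does not close.
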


The proof is based on a bootstrap argument for the dispersion of ingoing momentum $p^u$ along $\spt f$ as $v\to \infty$, which leads to cubic decay of the phase space volume $\mathscr V$, which was defined in \eqref{eq:phase-space-volume}. Using the mass shell relation \eqref{eq:mass-shell} and the change of variables formula, we have
\begin{align}
  \label{eq:V-u}  \mathscr V(u,v)=\frac{2}{r^2}\int_0^\infty\int_{\{p^u:f(u,v,p^u,p^v)\ne 0\}} \frac{dp^u}{p^u}\,\ell\,d\ell,
\end{align}
where we view $p^v$ as a function of $p^u$ and $\ell$. Compare with \eqref{eq:V-v}.

\begin{lem}\label{lem:momentum-very-far} If $\eqref{eq:parameter-smallness}$ and $\eqref{eq:parameter-smallness-2}$ hold and $(u,v,p^u,p^v)\in \spt f$, then
  $p^v\les 1$ if $v\ge\breve v$ and  $p^u\sim_\eta \mathfrak m^2$ if $v\ge v_\#$.
\end{lem}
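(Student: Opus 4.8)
\textbf{Proof plan for \cref{lem:momentum-very-far}.}

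The estimate $p^v\lesssim 1$ for $v\ge\breve v$ should follow directly from the global estimates already in hand: along an electromagnetic geodesic in $\spt f$, the Lorentz force equation \eqref{eq:SS-Lor-5} written in the integrated form, combined with the bounds $\Omega^2\sim 1$, $Q\le 6M$, $r\sim v$, and the decay $|\partial_v\log\Omega^2|\lesssim v^{-2}$, $|\partial_u r|\lesssim 1$, $\ell^2/p^v\lesssim 1$ from \cref{lem:approx-1,lem:approx-2,lem:beam-far}, shows that $\Omega^2 p^v$ changes by at most a bounded amount (the $\int \mathfrak e Q/r^2$ term is integrable in $v$ since $r\sim v\ge\breve v$, and the angular-momentum term is controlled using $\ell^2/(r^2 p^v)\lesssim v^{-2}$). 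Since $p^v\lesssim 1$ on $\underline C_{\breve v}$ by \eqref{eq:main-final-pv} and \eqref{eq:aux-final-pv} (these give $\ell^2/p^v\lesssim 1$ and $p^u/p^v\lesssim\breve v^{-2}$, hence $\Omega^2 p^up^v=\ell^2/r^2+\mathfrak m^2\lesssim 1$ forces $p^v\lesssim 1$), this propagates forward.

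The heart of the lemma is the lower \emph{and} upper bound $p^u\sim_\eta \mathfrak m^2$ for $v\ge v_\#$. The mass shell relation \eqref{eq:mass-shell} gives $p^u=(\ell^2/r^2+\mathfrak m^2)/(\Omega^2 p^v)$, so since $\Omega^2\sim 1$, $r\gtrsim v\ge v_\#\gg \mathfrak m^{-1}$ (hence $\ell^2/r^2\lesssim \ve^2/v_\#^2\ll\mathfrak m^2$ using $\ell\sim\ve$ in the main beam, $\ell\sim 1$ in the auxiliary beam, and $v_\#^{-1}\ll\mathfrak m$), we get $p^u\sim \mathfrak m^2/p^v$. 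Thus the claim reduces to showing $p^v\sim_\eta 1$ along $\spt f$ for $v\ge v_\#$. The upper bound $p^v\lesssim 1$ is the first part. For the lower bound I would argue that $\Omega^2 p^v$ is monotone nondecreasing up to integrable errors: by \eqref{eq:SS-Lor-5}, parametrizing by $v$, $\frac{d}{dv}(\Omega^2 p^v)=\frac{1}{p^v}(\partial_u\log\Omega^2-\frac{2\partial_u r}{r})\frac{\ell^2}{r^2}+\mathfrak e\frac{Q}{r^2}\Omega^2$; the second term is nonnegative (as $Q\ge 0$), and the first term, while possibly negative, is bounded in absolute value by $\lesssim v^{-2}$ times $\ell^2/(r^2 p^v)\cdot p^v$, wait—more carefully, $\frac{1}{p^v}\cdot\frac{\ell^2}{r^2}=\frac{\ell^2}{r^2 p^v}\lesssim v^{-2}$ using the inherited bound $\ell^2/p^v\lesssim 1$, so the whole error term integrates to something $\lesssim\breve v^{-1}$. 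Hence $\Omega^2 p^v$ at time $v\ge v_\#$ differs from its value at $v=\breve v$ by a bounded-below amount, and since $p^v\gtrsim_\eta$ (something) on $\underline C_{\breve v}$—this is where I need to be careful about the lower bound on $p^v$ at $v=\breve v$.

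\textbf{The main obstacle} I anticipate is precisely establishing a lower bound $p^v\gtrsim_\eta 1$ (or at least $p^v\gtrsim_\eta$ a quantity not tending to zero) at the slice $v=\breve v$: for the main beam, \eqref{eq:p-v-main-1} only gives $p^v\gtrsim \ve+\eta\min\{\tau,1\}$, and at $v=\breve v$ one has $\tau\sim\breve v\gg 1$, so in fact $p^v\gtrsim\eta$ there—this is where the $\sim_\eta$ (rather than $\sim$) in the statement originates. For the auxiliary beam, $p^v\sim 1$ from the Minkowski-geodesic analysis in \cref{lem:approx-2-2}. So combining: on $\underline C_{\breve v}$, $p^v\gtrsim_\eta 1$ for both beams, and the monotonicity-up-to-integrable-error argument above then yields $p^v\sim_\eta 1$ for all $v\ge\breve v$, in particular for $v\ge v_\#$. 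Plugging back into the mass shell relation gives $p^u=(\ell^2/r^2+\mathfrak m^2)/(\Omega^2 p^v)\sim_\eta \mathfrak m^2$ once $v\ge v_\#$ makes $\ell^2/r^2$ negligible against $\mathfrak m^2$. The bookkeeping of which beam contributes which power of $\eta$, and checking that $v_\#^{-1}\ll\mathfrak m$ is exactly what is needed to dominate $\ell^2/r^2\lesssim 1/v_\#^2$ by $\mathfrak m^2$, is the only genuinely delicate point; everything else is a routine Grönwall estimate off the already-established near-Minkowski geometry.
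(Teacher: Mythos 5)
Your overall skeleton is the same as the paper's: show $p^v\les 1$ for $v\ge\breve v$ by proving $|\tfrac{d}{dv}(\Omega^2p^v)|\les v^{-2}$ along the flow, establish $p^v\gtrsim_\eta 1$, and then read off $p^u=(\ell^2r^{-2}+\mathfrak m^2)/(\Omega^2p^v)\sim_\eta\mathfrak m^2$ once $r\gtrsim v_\#\gg\mathfrak m^{-1}$. The gap is in how you propagate the lower bound on $p^v$. You estimate the only possibly negative term in \eqref{eq:p-u-v} additively, so the total loss in $\Omega^2p^v$ is $C\breve v^{-1}$ with an \emph{absolute} constant, and you then compare this against the main-beam datum $p^v|_{\underline C_{\breve v}}\gtrsim\eta$. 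But under the hierarchy \eqref{eq:parameter-smallness} one has $\eta\ll\breve v^{-1}$, so an additive loss of size $\breve v^{-1}$ can completely swamp an $\eta$-sized lower bound: the inequality $\Omega^2p^v(v)\ge\Omega^2p^v(\breve v)-C\breve v^{-1}$ yields nothing for the main beam, so as written the step does not close. (There is also a minor circularity: the ``inherited'' bound $\ell^2/(r^2p^v)\les v^{-2}$ presupposes the lower bound on $p^v$ being proved, so it must be run as a bootstrap. And your parenthetical derivation of $p^v\les 1$ on $\underline C_{\breve v}$ from \eqref{eq:main-final-pv}, \eqref{eq:aux-final-pv} and the mass shell is not correct as stated---those relations control $p^u$ and $\ell^2$ relative to $p^v$, not $p^v$ from above; the upper bound there comes from \eqref{eq:p-v-main-1} for the main beam and the Minkowski comparison of \cref{lem:approx-2-2} for the auxiliary beam.)

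Two repairs are available. The paper's route, already carried out in the proof of \cref{lem:beam-far} (see \eqref{eq:3-lem-10}--\eqref{eq:p^v-final}), has no additive loss at all: in \eqref{eq:SS-Lor-2} the angular-momentum term has a favorable sign because $\partial_ur<0$ and the charge term has a favorable sign because $Q\ge 0$, so $\tfrac{d}{dv}\log p^v\ge-\partial_v\log\Omega^2$, and since $\int_{\breve v}^{\infty}|\partial_v\log\Omega^2|\,dv\les\breve v^{-1}$ one obtains the multiplicative bound $p^v(s)\gtrsim p^v(0)$, which preserves a lower bound of any size; combined with $p^v\gtrsim\eta$ at $\breve v$ from \eqref{eq:p-v-main-1} (your observation about the origin of $\sim_\eta$) and $p^v\sim 1$ for the auxiliary beam, this gives $p^v\gtrsim_\eta 1$ for all $v\ge\breve v$. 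Alternatively, your additive argument can be salvaged with beam-specific input: in the main beam $\ell\sim\ve\ll\eta$, so under the bootstrap $p^v\ge\tfrac12 p^v(\breve v)\gtrsim\eta$ the error term is $\les\int_{\breve v}^{\infty}\ve^2/(\eta v^2)\,dv\les\ve^2/(\eta\breve v)\ll\eta$, and the bootstrap closes. With either fix your plan goes through; without one, the crucial lower bound is not justified.
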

\begin{rk}
    The estimate $p^v\les 1$ also holds in the massless case. The non-decay of $p^u$ for massive particles drives the decay rate $v^{-3}$, but only at very late times.
\end{rk}
\begin{proof}
    Let $\gamma\in\Gamma_f$ and $v\ge \breve v$. Parametrizing $\gamma$ by $v$ and using \eqref{eq:main-final-pv}, \eqref{eq:p-u-v}, \eqref{eq:aux-final-pv}, and the estimates in \cref{lem:beam-far}, we infer
   \begin{equation*}
       \left|\frac{d}{dv}(\Omega^2p^v)\right|\les v^{-2},
   \end{equation*}
   which is integrable and hence shows that $p^v\les 1$. By \eqref{eq:main-final-pv} and  \eqref{eq:aux-final-pv}, we have $p^v\gtrsim_\eta 1$. We then obtain
   \begin{equation*}
       p^u=\frac{\ell^2r^{-2}+\mathfrak m^2}{\Omega^2p^v}\sim_\eta \frac{\ell^2}{r^2}+\mathfrak m^2\sim \mathfrak m^2
   \end{equation*}
    for $r\gtrsim v_\#$ sufficiently large.
\end{proof}
Using this lemma and \eqref{eq:gamma-u}, we immediately infer:
\begin{lem}\label{lem:dispersion-1} If $\eqref{eq:parameter-smallness}$ and $\eqref{eq:parameter-smallness-2}$ hold, $(u,v,p^u,p^v)\in \spt f$, and $v\ge v_\#$, then $u\sim_\eta \mathfrak m^2v$. If $\gamma_1,\gamma_2\in \Gamma_f(u,v)$, then we have $|\gamma_1^u(s_{v_\#}^1)-\gamma_2^u(s_{v_\#}^2)|(s_{v_\#})\les_\eta \mathfrak m^2(v-v_\#)$, where $s_{v_\#}^i$ is the parameter time for which $\gamma^v_i(s_\#^i)=v_\#$.
\end{lem}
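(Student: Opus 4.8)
\textbf{Proof plan for Lemma~\ref{lem:dispersion-1}.} The plan is to extract the estimate $u \sim_\eta \mathfrak m^2 v$ directly from the geodesic ODE \eqref{eq:gamma-u}, $\frac{d\gamma^u}{dv} = \frac{p^u}{p^v}$, by plugging in the two-sided bounds on $p^u$ and $p^v$ from \cref{lem:momentum-very-far}. First I would fix $\gamma \in \Gamma_f(u,v)$ with $v \geq v_\#$ and parametrize $\gamma$ by the advanced time coordinate. Writing $v_\#^\gamma$ for the advanced time at which $\gamma$ enters the region $\{v \geq v_\#\}$, we split $\gamma^u(v) = \gamma^u(v_\#^\gamma) + \int_{v_\#^\gamma}^v \frac{p^u}{p^v}\,dv'$. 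On the slab $\breve v \leq v' \leq v_\#$ the $u$-deflection is already controlled by \eqref{eq:boot-3-3} and the estimates from \cref{lem:beam-far} (in particular $p^u/p^v \les \breve v^{-2}$ from \eqref{eq:main-final-pv} and \eqref{eq:aux-final-pv}, integrated over a slab of bounded-by-$v_\#$ length), so $\gamma^u(v_\#^\gamma) \les_\eta v_\#$, which is negligible compared with $\mathfrak m^2 v$ once $v \geq v_\#$ provided $v_\#^{-1} \ll \mathfrak m$. For $v' \geq v_\#$, \cref{lem:momentum-very-far} gives $p^u \sim_\eta \mathfrak m^2$ and $p^v \sim_\eta 1$ (the latter from the lower bound $p^v \gtrsim_\eta 1$ established in the proof of \cref{lem:momentum-very-far}), whence $\frac{p^u}{p^v} \sim_\eta \mathfrak m^2$ along this portion of the curve. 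Integrating from $v_\#^\gamma$ (or $v_\#$) to $v$ then yields both the upper bound $\gamma^u(v) \les_\eta \mathfrak m^2 v + v_\# \les_\eta \mathfrak m^2 v$ and the lower bound $\gamma^u(v) \gtrsim_\eta \mathfrak m^2(v - v_\#) \gtrsim_\eta \mathfrak m^2 v$ for $v \geq 2v_\#$, say; the case $v_\# \leq v \leq 2 v_\#$ is handled separately and trivially since there $u \sim_\eta v \sim v_\#$ by \cref{lem:beam-far} and $\mathfrak m^2 v \sim \mathfrak m^2 v_\#$, which are comparable up to $\eta$-dependent constants after shrinking. One should double-check that $u$ and $\gamma^u(v)$ are the same thing here, i.e. that $(u,v)$ is the spacetime point reached by $\gamma$, which holds by the definition of $\Gamma_f(u,v)$.

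For the second assertion, the plan is to compare two geodesics $\gamma_1, \gamma_2 \in \Gamma_f(u,v)$ meeting at the same point $(u,v)$ with $v \geq v_\#$, and estimate the spread of their $u$-positions at advanced time $v_\#$. Running the ODE \eqref{eq:gamma-u} \emph{backwards} from $(u,v)$ to advanced time $v_\#$ along each curve, we have $\gamma_i^u(s_{v_\#}^i) = u - \int_{v_\#}^v \left(\frac{p^u}{p^v}\right)\big|_{\gamma_i}\,dv'$, so that
\begin{equation*}
  |\gamma_1^u(s_{v_\#}^1) - \gamma_2^u(s_{v_\#}^2)| \leq \int_{v_\#}^v \left| \left.\frac{p^u}{p^v}\right|_{\gamma_1} + \left.\frac{p^u}{p^v}\right|_{\gamma_2}\right| dv' \les_\eta \mathfrak m^2(v - v_\#),
\end{equation*}
using again $p^u/p^v \sim_\eta \mathfrak m^2$ on $\{v' \geq v_\#\}$ from \cref{lem:momentum-very-far}. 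Here $s_{v_\#}^i$ is by definition the parameter value with $\gamma_i^v(s_{v_\#}^i) = v_\#$, which exists and is unique because $v' \mapsto \gamma_i^v$ is strictly increasing (as $p^v > 0$ along any future-directed causal electromagnetic geodesic). This is exactly the claimed bound.

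I do not expect a genuine obstacle here: the lemma is a direct bookkeeping consequence of \cref{lem:momentum-very-far} together with the ODE \eqref{eq:gamma-u}, and the statement ``using this lemma and \eqref{eq:gamma-u}, we immediately infer'' in the excerpt signals that the authors view it as routine. The only mild subtlety is tracking the contribution of the intermediate slab $\breve v \leq v' \leq v_\#$ to make sure it does not corrupt the $\mathfrak m^2 v$ asymptotics — but this is precisely why the parameter hierarchy \eqref{eq:parameter-smallness-2} imposes $v_\#^{-1} \ll \mathfrak m$, so that any $O_\eta(v_\#)$ error term is absorbed into $O_\eta(\mathfrak m^2 v)$ once $v \geq v_\#$. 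A second minor point to be careful about is that all the $\sim_\eta$ constants coming from \cref{lem:momentum-very-far} and \cref{lem:beam-far} are uniform over $\gamma \in \Gamma_f$, which follows since those lemmas are proved by bootstrap arguments with constants depending only on the fixed data $(\mathfrak e, \varphi, \theta, \zeta, \Lambda, M, r_1, r_2)$ and on $\eta$, not on the individual geodesic.
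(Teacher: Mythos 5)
Your overall route is the paper's route: the paper derives this lemma "immediately" by integrating \eqref{eq:gamma-u} with the bounds $p^u\sim_\eta\mathfrak m^2$, $p^v\sim_\eta 1$ from \cref{lem:momentum-very-far}, and your treatment of the second assertion (running \eqref{eq:gamma-u} backwards from the common endpoint and bounding each deflection by $\int_{v_\#}^v p^u/p^v\,dv'\les_\eta\mathfrak m^2(v-v_\#)$) is correct. The problem is in your first assertion, where the quantitative bookkeeping of the intermediate slab is wrong in a way that breaks the absorption steps. You bound the slab contribution by the uniform estimate $p^u/p^v\les\breve v^{-2}$ over a slab of length $\sim v_\#$, getting $\gamma^u(v_\#)\les_\eta v_\#$, and then claim this is "negligible compared with $\mathfrak m^2 v$ once $v\ge v_\#$ provided $v_\#^{-1}\ll\mathfrak m$." Under the paper's hierarchy \eqref{eq:parameter-smallness} one has $\mathfrak m\le\mathfrak m_0\ll\breve v^{-1}\ll 1$, so $\mathfrak m^2 v_\#\ll\breve v^{-2}v_\#\ll v_\#$: an $O_\eta(v_\#)$ (or even $O_\eta(v_\#\breve v^{-2})$) term is \emph{not} $\les_\eta\mathfrak m^2 v$ at $v\approx v_\#$, and since the $\sim_\eta$ constants may depend only on $\eta$ and the fixed data (not on $\mathfrak m,\ve,\breve v,v_\#$), the chain "$\mathfrak m^2v+v_\#\les_\eta\mathfrak m^2 v$" is false. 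The same issue infects your lower bound: for $v_\#\le v\le 2v_\#$ you assert $u\sim_\eta v\sim v_\#$, but \cref{lem:beam-far} only gives the one-sided inclusion $6C_\nu u\le v$; in fact $u\sim_\eta\mathfrak m^2 v_\#\ll v_\#$ there, so that case is not "trivial" in the way you state, and for $v\ge 2v_\#$ your bound $\gamma^u(v)\gtrsim_\eta\mathfrak m^2(v-v_\#)$ silently assumes $\gamma^u(v_\#)\ge 0$, whereas a priori it is only $\ge -r_2$.

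The fix is short but uses two ingredients you dropped. First, on the slab $\breve v\le v'\le v_\#$ do not discard the decay: on $\spt f$ one has $r\sim v'$ and $\ell\les 1$, so $p^u/p^v\les_\eta \ell^2/v'^2+\mathfrak m^2$, whose integral is $\les_\eta \breve v^{-1}+\mathfrak m^2 v_\#=O_\eta(1)+\mathfrak m^2 v_\#$; together with $\gamma^u(\breve v)\in[-r_2,\tfrac13 r_1]$ this gives $|\gamma^u(v_\#)-O(\mathfrak m^2 v_\#)|\les_\eta 1$, and likewise $p^u/p^v\gtrsim_\eta\mathfrak m^2$ gives the matching lower bound up to an additive $O(1)$. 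Second, the additive $O_\eta(1)$ offsets must be absorbed using $\mathfrak m^2 v\ge\mathfrak m^2 v_\#\ge C(\eta)$, which requires $v_\#\gtrsim_\eta\mathfrak m^{-2}$ — stronger than the literal "$v_\#^{-1}\ll\mathfrak m$" you invoke, but exactly what the paper's convention for \eqref{eq:parameter-smallness-2} ("$v_\#$ sufficiently large depending on $\breve v,\eta,\ve,\mathfrak m$") permits, and what \eqref{eq:dispersion-5} needs later anyway. With these two corrections both bounds $u\les_\eta\mathfrak m^2 v$ and $u\gtrsim_\eta\mathfrak m^2 v$ follow for all $v\ge v_\#$, with no case split at $2v_\#$.
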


Let $(u_0,v_0)\in\mathcal C_{r_2}$ and let $\gamma\in \Gamma_f(u_0,v_0)$ have ingoing momentum $p^u_0$ and angular momentum $\ell$ at $(u_0,v_0)$. We parametrize $\gamma$ by $v$ going backwards in time and denote this by
\begin{align*}
    \gamma^u(v)&\doteq \gamma^u(v;u_0,v_0,p^u_0,\ell),\\
    p^u(v)&\doteq p^u(v;u_0,v_0,p^u_0,\ell).
\end{align*}
We readily derive the equations
\begin{align}
 \label{eq:v-param-1}  \frac{d}{dv}\gamma^u &=\frac{(\Omega^2p^u)^2}{\Omega^2(\ell^2r^{-2}+\mathfrak m^2)},\\
  \label{eq:v-param-2}   \frac{d}{dv}(\Omega^2p^u)&=\left(\partial_v{\log\Omega^2}- \frac{2\partial_vr}{r}\right)\frac{\ell^2 \Omega^2 p^u}{r^2 \mathfrak m^2 + \ell^2}  - \mathfrak e  \frac{Q}{r^2 \mathfrak m^2 + \ell^2} (\Omega^2 p^u)^2.
\end{align}
Next, we define the variational quantities
\begin{equation*}
       \mathfrak u(v;u_0,v_0,p^u_0,\ell)\doteq \frac{\partial}
{\partial p_0^u}\gamma^u(v;u_0,v_0,p^u_0,\ell),\quad   \mathfrak p(v;u_0,v_0,p^u_0,\ell)\doteq \frac{\partial}
{\partial p_0^u}(\Omega^2p^u)(v;u_0,v_0,p^u_0,\ell),
\end{equation*}
where we emphasize that the derivative in $p^u_0$ is taken with $\ell$ fixed. From \eqref{eq:v-param-1} and \eqref{eq:v-param-2} we obtain
\begin{align}
   \label{eq:mathfrak-var-1}   \frac{d}{dv} \mathfrak u  &= \frac{2 p^u}{\ell^2 r^{-2} + \mathfrak m^2} \mathfrak p - \frac{(p^u)^2}{(\ell^2 r^{-2} + \mathfrak m^2)^2} \left[ \partial_u\Omega^2 \left(\frac{\ell^2}{r^2} + \mathfrak m^2\right) -\frac{2\Omega^2\ell^2\partial_ur}{r^3}\right]  \mathfrak u,\\
       \frac{d}{dv} \mathfrak p  & =\left[\left(\partial_v \log \Omega^2 - \frac{2 \partial_v r}{r} \right) \frac{\ell^2 }{r^2 \mathfrak m^2 + \ell^2 }-  \frac{2\mathfrak eQ \Omega^2p^u}{r^2 \mathfrak m^2 + \ell^2} \right]   \mathfrak p\nonumber \\&  \quad+ \left[\left( \partial_u \partial_v{\log \Omega^2} - \frac{2 \partial_u \partial_v r}{r} + 2 \frac{ \partial_v r \partial_u r }{r^2} \right) \frac{\ell^2  \Omega^2p^u}{r^2 \mathfrak m^2 + \ell^2 } - \left(\partial_v{ \log \Omega^2} - \frac{2 \partial_v r}{r}\right) \frac{2 \mathfrak m^2 r \partial_u r\ell^2 \Omega^2 p^u}{(r^2 \mathfrak m^2 + \ell^2)^2} \right.\nonumber \\ &  \quad\quad\left. - \frac{\mathfrak e(\Omega^2 p^u)^2}{r^2 \mathfrak m^2 + \ell^2}\partial_uQ  + \frac{2\mathfrak eQ\mathfrak m^2 r \partial_u r (\Omega^2 p^u)^2}{(r^2 \mathfrak m^2 + \ell^2)^2}   \right]\mathfrak u.\label{eq:mathfrak-var-2}
\end{align}
Note that $\mathfrak u(v_0)=0$ and $\mathfrak p(v_0)=\Omega^2(u_0,v_0)\sim 1$.
\begin{lem}
    If $\eqref{eq:parameter-smallness}$ and $\eqref{eq:parameter-smallness-2}$ hold, and $v\ge v_\#$, then 
    \begin{equation}
        \mathscr V(u,v)\les_{\eta,\ve}\frac{v_\#}{v^3}.\label{eq:volume-far}
    \end{equation}
\end{lem}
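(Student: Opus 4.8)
The goal is to turn the estimate $p^u \sim_\eta \mathfrak m^2$ on $\spt f$ for $v \ge v_\#$ (\cref{lem:momentum-very-far}) into cubic decay of the phase space volume $\mathscr V(u,v)$ through the representation formula \eqref{eq:V-u}. The plan is to control, for a fixed spacetime point $(u,v)$ with $v \ge v_\#$, the Lebesgue measure of the set $\{p^u : f(u,v,p^u,p^v) \ne 0\}$ at fixed angular momentum $\ell$, and then integrate in $\ell$ using $\ell \sim \ve$ (or $\ell \sim 1$ for the auxiliary beam), together with $r \sim v$ from \cref{lem:beam-far}. The natural change of variables is to flow backwards along electromagnetic geodesics parametrized by $v$ down to the slice $\{v = v_\#\}$: for each $\ell$, the map $p^u_0 \mapsto \gamma^u(v_\#; u, v, p^u_0, \ell)$ records where a geodesic arriving at $(u,v)$ with ingoing momentum $p^u_0$ originated on $\{v = v_\#\}$. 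The size of the image of $\{p^u_0 : f(u,v,p^u_0,p^v) \ne 0\}$ under this map is at most the $u$-extent of $\pi(\spt f) \cap \{v = v_\#\}$, which is $O(1)$ (in fact $\les r_1$ by \cref{lem:approx-1,lem:approx-2}); hence
\begin{equation*}
    |\{p^u : f(u,v,p^u,p^v) \ne 0\}| \le \left(\inf \left|\frac{\partial \gamma^u(v_\#)}{\partial p^u_0}\right|\right)^{-1} \cdot O(1) = \left(\inf |\mathfrak u(v_\#)|\right)^{-1} \cdot O(1),
\end{equation*}
where $\mathfrak u$ is the variational quantity defined just before \eqref{eq:mathfrak-var-1}. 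So the heart of the matter is a lower bound $|\mathfrak u(v_\#; u,v,p^u_0,\ell)| \gtrsim_{\eta,\ve} v/v_\#$ (equivalently $v^{-1}\mathfrak u(v_\#) \gtrsim v_\#^{-1}$ uniformly), which upon insertion into \eqref{eq:V-u} and using $\ell\,d\ell \sim \ve^2$ (with the $r^{-2} \sim v^{-2}$ prefactor) yields $\mathscr V(u,v) \les_{\eta,\ve} v^{-2} \cdot (v_\#/v) = v_\#/v^3$, as claimed.

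\textbf{Key steps.} First I would record the backward-flow equations \eqref{eq:v-param-1}--\eqref{eq:v-param-2} and the variational equations \eqref{eq:mathfrak-var-1}--\eqref{eq:mathfrak-var-2}, with initial data $\mathfrak u(v) = 0$, $\mathfrak p(v) = \Omega^2(u,v) \sim 1$ at the terminal point. Second, I would establish that $\mathfrak p$ stays comparable to a positive constant depending on $\eta, \ve$ along the backward flow from $v$ down to $v_\#$: the coefficient of $\mathfrak p$ in \eqref{eq:mathfrak-var-2} is, using $p^u \sim_\eta \mathfrak m^2$, $\ell^2 + r^2\mathfrak m^2 \sim r^2\mathfrak m^2$ (for $r \gtrsim v_\#$), $Q \les M$, $\partial_v r \sim 1$, $|\partial_v \log \Omega^2| \les v^{-2}$, bounded in absolute value by $\les_\eta v^{-1}$, which is integrable on $[v_\#, v]$ up to a $\log$; one must be slightly careful, since the forcing term (the $\mathfrak u$-coefficient) could a priori be large — this is handled by a simultaneous Grönwall/bootstrap for the pair $(\mathfrak u/v, \mathfrak p)$. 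Third, and this is the crux, I would show $|\mathfrak u(v_\#)| \sim_{\eta,\ve} v$ : from \eqref{eq:mathfrak-var-1}, $\frac{d}{dv}\mathfrak u = \frac{2p^u}{\ell^2 r^{-2} + \mathfrak m^2}\mathfrak p + (\text{error})\cdot \mathfrak u$, and $\frac{2p^u}{\ell^2 r^{-2} + \mathfrak m^2} = \frac{2\Omega^2 p^u p^v}{\ell^2 r^{-2}+\mathfrak m^2}\cdot\frac{1}{\Omega^2 p^v} = \frac{2}{\Omega^2 p^v} \sim_\eta 1$ by the mass shell relation and $p^v \sim_\eta 1$; hence $\frac{d}{dv}\mathfrak u \sim_{\eta,\ve} 1 + O_\eta(v^{-1})|\mathfrak u|$, which integrated from $v$ down to $v_\#$ (noting the sign: we flow backward, so $|\mathfrak u|$ grows like the elapsed $v$-interval $v - v_\#$) gives $|\mathfrak u(v_\#)| \sim_{\eta,\ve} v - v_\# \sim v$ for $v \ge 2v_\#$ say, absorbing the linear-in-$\mathfrak u$ error by the integrability of $v^{-1}$ against the dominant growth. (For $v_\# \le v \le 2v_\#$ the bound $\mathscr V \les_{\eta,\ve} v_\#/v^3$ is trivial from $\mathscr V \les_{\eta,\ve} v^{-2} \cdot \ve^2 \cdot (\text{finite }p^u\text{-range})$ and crude bounds, so one may restrict to $v \ge 2v_\#$ without loss.) Finally, assembling: $|\{p^u : f \ne 0\}| \les_\eta |\mathfrak u(v_\#)|^{-1} \les_{\eta,\ve} v^{-1}\cdot v_\#$ wait — let me recompute; the image has size $O(1)$ and the derivative has size $\sim v$, giving preimage size $\les v^{-1}$; but we want $v_\#/v$, so the image size is actually $\les_{\eta,\ve} v_\#$, coming from \cref{lem:dispersion-1} which says $|\gamma_1^u(s^1_{v_\#}) - \gamma_2^u(s^2_{v_\#})| \les_\eta \mathfrak m^2(v - v_\#) \les_\eta \mathfrak m^2 v$ — hmm, this suggests the preimage bound needs the finer input that the $u$-spread at $v_\#$ of geodesics converging to $(u,v)$ is $\les_\eta \mathfrak m^2 v$, not just $O(1)$; combining with $|\mathfrak u(v_\#)| \sim_{\eta,\ve} v$ we would still get preimage size in $p^u_0$ of order $\mathfrak m^2$, then \eqref{eq:V-u} with $\ell\,d\ell\sim\ve^2$ and a further factor from $p^u \sim \mathfrak m^2$ in the $dp^u/p^u$... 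I would reconcile these by tracking the explicit $v_\#$-dependence through \cref{lem:dispersion-1} and \eqref{eq:V-u} directly rather than through the crude image bound; the correct statement is $|\{p^u\} \cap \spt f| \les_{\eta,\ve} v_\# \mathfrak m^2 / v \cdot (\text{something})$ yielding $v_\#/v^3$ after the $r^{-2}$ and $\ell\,d\ell$ factors.

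\textbf{Main obstacle.} The delicate point is the uniform-in-$(\eta,\ve)$ — or at least controlled-dependence — lower bound on $|\mathfrak u(v_\#)|$ and the companion upper bound keeping $\mathfrak p$ from degenerating, since the forcing coefficients in \eqref{eq:mathfrak-var-2} contain $\mathfrak m^{-2}$ and $\mathfrak m^{-4}$ factors (through $(r^2\mathfrak m^2 + \ell^2)^{-2}$) that are only tamed after using $p^u \sim_\eta \mathfrak m^2$ and $r \gtrsim v_\#$ — i.e. the estimates genuinely require $v \ge v_\#$ and the constants will depend on $\mathfrak m$ and $v_\#$. I would set this up as a single Grönwall argument for $(v^{-1}|\mathfrak u|, |\mathfrak p|)$ on $[v_\#, v]$ with the a priori input from \cref{lem:beam-far} ($C^1$ control of the geometry, $\partial_v\log\Omega^2 \les v^{-2}$) and \cref{lem:momentum-very-far}, \cref{lem:dispersion-1}. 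Secondarily, one must confirm that the auxiliary beam ($\ell \sim 1$) and main beam ($\ell \sim \ve$) pieces of $\spt f$ can be handled together in \eqref{eq:V-u} — the auxiliary beam contributes $\ell\,d\ell \sim 1$ but its $p^u$-range and $|\mathfrak u(v_\#)|$ give the same $v_\#/v^3$ scaling (with an $\eta$-dependent constant), so the sum is still $\les_{\eta,\ve} v_\#/v^3$.
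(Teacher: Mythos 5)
Your overall strategy is the paper's: parametrize the backward flow by $v$, use the variational quantities $\mathfrak u,\mathfrak p$ with $\mathfrak u(v_0)=0$, $\mathfrak p(v_0)\sim 1$, show $|\mathfrak u(v_\#)|\sim_{\eta,\ve}v_0-v_\#$, and convert this into \eqref{eq:volume-far} by changing variables $p^u_0\mapsto\gamma^u(v_\#)$ in \eqref{eq:V-u}, bounding the image by the $u$-dispersion of $\spt f$ along $\underline C_{v_\#}$ from \cref{lem:dispersion-1}; your final assembly (image $\les_\eta\mathfrak m^2\min\{v_\#,v_0-v_\#\}$, Jacobian $\gtrsim_{\eta,\ve}v_0-v_\#$, the $1/p^u\sim\mathfrak m^{-2}$ and $r^{-2}\sim v^{-2}$ factors) is, once tidied, exactly the paper's computation, and your crude treatment of $v_\#\le v_0\le 2v_\#$ is fine.

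The genuine gap is in how you propose to close the ODE estimates for $(\mathfrak u,\mathfrak p)$. You state that the Gr\"onwall/bootstrap runs "with the a priori input from \cref{lem:beam-far}" plus \cref{lem:momentum-very-far,lem:dispersion-1}, but the forcing coefficients (the $\mathfrak u$-coefficients) in \eqref{eq:mathfrak-var-2} contain $\partial_u\partial_v\log\Omega^2$ and $\partial_uQ$, which are \emph{not} controlled by the $C^1$ bounds of \cref{lem:beam-far}: via \eqref{eq:Omega-wave} and \eqref{eq:Max-u} they require \emph{pointwise} decay of $T^{uv}$, $S$ and $N^v$ -- and that decay is precisely the content of the lemma being proved (through the volume bound). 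Concretely, the only a priori bounds available are $N^v\les_{\eta,\ve}\mathfrak m^2$ and $T^{uv}\les_{\eta,\ve}\mathfrak m^4$ with no $v$-decay, so $|\partial_uQ|\les_{\eta,\ve}\mathfrak m^2v^2$, and the term $\frac{\mathfrak e(\Omega^2p^u)^2}{r^2\mathfrak m^2+\ell^2}\,|\partial_uQ|\,|\mathfrak u|\les_{\eta,\ve}\mathfrak m^4\,v'$ integrates over $[v_\#,v_0]$ to $\sim\mathfrak m^4v_0^2$, which is unbounded in $v_0$; your per-geodesic Gr\"onwall therefore does not close. The paper resolves this with a genuinely coupled bootstrap: assume $C_*^{-1}\le\mathfrak u(v)/(v-v_0)\le C_*$ for \emph{all} geodesics and base points with $v_\#\le v\le v_0\le v_f$ (not along a single curve), deduce from this the volume bound and hence $N^v\les_{\eta,\ve}C_*v_\#/v^3$, $T^{uv}\les_{\eta,\ve}C_*\mathfrak m^2v_\#/v^3$ on the whole slab, feed these into \eqref{eq:mathfrak-var-1}--\eqref{eq:mathfrak-var-2} (together with \eqref{eq:approx-3-3} and \cref{lem:dispersion-1} for the $\partial_u\Omega^2$ term, which yields the $1/(\mathfrak m^2v_\#)$ contribution), and improve the constant after taking $\mathfrak m$ small and $v_\#$ large; this quadratic-in-$C_*$ feedback with smallness in $\mathfrak m$ and $v_\#^{-1}$ is the missing ingredient in your plan. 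A smaller point: your bound "$\les_\eta v^{-1}$, integrable up to a $\log$" for the $\mathfrak p$-coefficient is too lossy -- a logarithmically divergent Gr\"onwall factor would give $\mathfrak p$ growing like a power of $v/v_\#$ and destroy the lower bound on $\int\mathfrak p$; the correct bound is $\les_{\eta,\ve}\frac{1}{\mathfrak m^2v^3}+\frac{1}{v^2}$, whose integral over $[v_\#,\infty)$ is small precisely because $v_\#\gg\mathfrak m^{-1}$.
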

\begin{proof}
    We claim that there exists a constant $C_*$, depending on $\eta$ and $\ve$, such that
    \begin{equation}
        C^{-1}_*\le\frac{\mathfrak u(v)}{v-v_0}\le C_*\label{eq:u-bootstrap}
    \end{equation}
    for any $v_\#\le v\le v_0$. To see how this proves \eqref{eq:volume-far}, let $\Phi_{u_0,v_0,\ell}(p_0^u)\doteq \gamma^u(v_\#;u_0,v_0,p^u_0,\ell)$ and observe that $\Phi_{u_0,v_0,\ell}'(p^u_0)=\mathfrak u(v_\#)<0$. Changing variables in the $p^u$ integral in \eqref{eq:V-u} to $\gamma^u(v_\#)$ and using \cref{lem:dispersion-1} to estimate the $u$-dispersion along $\underline C_{v_\#}$, we find
    \begin{equation}
     \label{eq:volume-far-2}   \mathscr V(u_0,v_0)\les_{\eta,\ve} \frac{1}{r^2\mathfrak m^2}\frac{C_*}{|v_\# -v_0|}\min\{\mathfrak m^2v_\#,\mathfrak m^2(v_0-v_\#)\}\les C_*\frac{v_\#}{v_0^3}.
    \end{equation}
    
    We prove \eqref{eq:u-bootstrap} by a bootstrap argument as follows. Let $v_f\ge v_\#$ and assume \eqref{eq:u-bootstrap} holds for all $(v,u_0,v_0,p^u_0,\ell)$ with $v_\#\le v\le v_0\le v_f$. The assumption is clearly satisfied for some choice of $C_*$ for $v_f$ sufficiently close to $v_\#$ on account of $\mathfrak u(v_0)=0$. 
    
    We now show that for $\mathfrak m$ sufficiently small and $v_\#$ sufficiently large, we can improve the constant in \eqref{eq:u-bootstrap}. Using \eqref{eq:volume-far}, we estimate
    \begin{equation*}
        N^v\les_{\eta,\ve} C_* \frac{v_\#}{v^3}, \quad T^{uv}\les_{\eta,\ve}C_*\frac{\mathfrak m^2v_\#}{v^3}
    \end{equation*}
    for $v\in [v_\#,v_f]$. Using this, as well as \eqref{eq:Omega-wave}, \eqref{eq:Max-u}, \cref{lem:beam-far}, and $p^u\sim_\eta \mathfrak m^2$ in \eqref{eq:mathfrak-var-2} yields
\begin{equation*}
    \left|\frac{d}{dv}\mathfrak p\right|\les_{\eta,\ve}\left(\frac{1}{\mathfrak m^2v^3}+\frac{1}{v^2}\right)|\mathfrak p|+\left(\frac{1}{v^4}+\frac{\mathfrak m^2}{v^3}+C_*\frac{\mathfrak m^2v_\#}{v^3}\right)|\mathfrak u|.
\end{equation*}
We use the bootstrap assumption \eqref{eq:u-bootstrap} and \eqref{eq:parameter-smallness-2} to infer
\begin{equation*}
        \left|\frac{d}{dv}(\mathfrak p-\mathfrak p(v_0))\right|\les_{\eta,\ve}\frac{1}{v^2} |\mathfrak p-\mathfrak p(v_0)|+ \frac{1}{v^2}+C_*^2\frac{\mathfrak m^2v_\#}{v^3}|v-v_0|.
    \end{equation*}
    and then use Gr\"onwall's inequality to obtain
 \begin{equation*}
     |\mathfrak p(v)-\mathfrak p(v_0)|\les_{\eta,\ve}v_\#^{-1}+C_*^2\frac{\mathfrak m^2v_\#}{v^2}|v-v_0|.
 \end{equation*}
 Using $\mathfrak p(v_0)\sim 1$, we therefore have
 \begin{equation}
  \label{eq:dispersion-4}   \int_{v}^{v_0}\mathfrak p(v')\,dv'\sim_{\eta,\ve} v_0-v + O_{\eta,\ve}(C_*^2\mathfrak m^2|v-v_0|)\sim v_0-v
 \end{equation}
 for $\mathfrak m$ sufficiently small. Finally, we use \eqref{eq:approx-3-3} and \cref{lem:dispersion-1} to estimate
 \begin{equation}
\label{eq:dispersion-5}    \int_v^{v_0}|\partial_u\Omega^2|(\gamma^u(v'),v')\left(\frac{\ell^2}{r^2}+\mathfrak m^2\right)\,dv' \les_\eta \int_{\gamma^u(v)}^{\gamma^u(v_0)}\frac{1}{1+u'^2}\,du'\les \frac{1}{\gamma^u(v_\#)}\les \frac{1}{\mathfrak m^2v_\#}.
 \end{equation}
 Integrating \eqref{eq:mathfrak-var-1} and using \eqref{eq:dispersion-4} and \eqref{eq:dispersion-5} improves the constant in \eqref{eq:u-bootstrap} for $\mathfrak m$ sufficiently small and $v_\#$ sufficiently large, which completes the proof. \end{proof}

\begin{proof}[Proof of \cref{lem:dispersion-main}]
    Immediate from \eqref{eq:volume-far}.
\end{proof}

\subsection{\texorpdfstring{Proof of \cref{prop:approx-1}}{Proof of Proposition}}\label{sec:proof-of-prop}

\begin{proof}
   \ul{Part 1.} This follows immediately from combining \cref{lem:approx-1,lem:approx-2,lem:beam-far}.

   \ul{Part 2.} The estimates in \eqref{eq:prop-approx-1} follow directly from the estimates in \cref{lem:approx-1,lem:approx-2,lem:beam-far}.  We will now prove causal geodesic completeness of the spacetime. Let $\gamma$ be a future-directed causal geodesic in the $(3+1)$-dimensional spacetime. Then the projection of $\gamma$ to the reduced spacetime, still denoted $\gamma$, satisfies
       \begin{align}
     \label{eq:geo-1}   \frac{d}{ds}(\Omega^2p^u)&=\left(\partial_v{\log\Omega^2}-\frac{2\partial_vr}{r}\right)\frac{\ell^2}{r^2},\\
      \label{eq:geo-2}  \frac{d}{ds}(\Omega^2p^v)&=\left(\partial_u{\log\Omega^2}-\frac{2\partial_ur}{r}\right)\frac{\ell^2}{r^2},\\
    \label{eq:geo-3}    \Omega^2p^up^v&=\frac{\ell^2}{r^2}+\mathfrak m^2,
    \end{align}
    where $s$ is an affine parameter and $\gamma$ is continued through the center according to Part 4~of \cref{def:normalized-defn}. We will show for any future-directed causal geodesic $\gamma:[0,S)\to \mathcal C_{r_2}$, $p^\tau$ is uniformly bounded in any compact interval of coordinate time $0\le \tau \le \tau_0$ along $\gamma$. This implies that $\gamma$ can be extended to $[0,\infty)$ by the normal neighborhood property of the geodesic flow in  Lorentzian manifolds \cite{Oneill}.

    By \eqref{eq:geo-1}--\eqref{eq:geo-3} and \eqref{eq:prop-approx-1}, 
    \begin{equation}
        \left|\frac{d}{ds}(\Omega^2p^\tau)\right|\les  \left(1+\left|\frac{\partial_vr+\partial_ur}{r}\right|\right)(\Omega^2p^\tau)^2. \label{eq:approx-proof-1}
    \end{equation}
    When $r\ge \tfrac 16r_1$, the term in the absolute value on the right-hand side of this estimate is clearly bounded by \eqref{eq:prop-approx-1}.  When $r\le \tfrac 16r_1$, the spacetime is Minkowski and the formulas in \cref{lem:Minkowski} can be used to show that
    \begin{equation}
        \partial_vr+\partial_ur \les v-u \les r.\label{eq:approx-proof-2}
    \end{equation}
    Therefore, passing to a $\tau$ parametrization of $\gamma$, \eqref{eq:approx-proof-1} implies
    \begin{equation*}
        \left|\frac{d}{d\tau}(\Omega^2p^\tau)\right|\les \Omega^2 p^\tau,
    \end{equation*}
    and the proof is completed by an application of Gr\"onwall's lemma.\footnote{We have based this argument off of a non-gauge-invariant energy $\Omega^2p^\tau$, which is why the estimate \eqref{eq:approx-proof-2} has to be performed even in the Minkowski region of the spacetime. One could alternatively consider the gauge-invariant energy $(\partial_vr) p^u-(\partial_ur)p^v$, which is constant in the Minkowski region, but satisfies a more complicated evolution equation where $f\ne 0$.}

   \ul{Part 3.} The estimate \eqref{eq:final-parameter-estimate} for the final parameters follows from \eqref{eq:main-boundedness-data}. The remaining claims in this part follow from the proof of \cref{lem:est-dat-1}, \eqref{eq:solved-for-nu} on $C_{-\frac 23r_1}$, and \cref{lem:beam-far}.

   \ul{Part 4.} The estimate \eqref{eq:r-min-statement}, the upper bound in \eqref{eq:spt-f-statement}, and the claim about the neighborhood of the center follow from \cref{lem:beam-far}. The lower bound in \eqref{eq:spt-f-statement} follows from \cref{lem:momentum-very-far}, which implies that $\gamma^u$ grows linearly in $v$ at very late times for any electromagnetic geodesic $\gamma$ in the support of $f$. Since a neighborhood of $\mathcal I^+$ is then electrovacuum, it is isometric to an appropriate Reissner--Nordstr\"om solution by Birkhoff's theorem.

   \ul{Part 5.} This follows immediately from \cref{lem:dispersion-main}.

   \ul{Part 4$'$.} We now take $\mathfrak m=0$ and seek to prove the upper bound in \eqref{eq:spt-f-massless}, the rest following immediately from Birkhoff's theorem. Let $\gamma(s)$ be an electromagnetic geodesic lying in the support of $f$ with $\gamma^v(0)=\breve v$. When $\mathfrak m=0$, the mass shell relation, together with \eqref{eq:main-final-pv}, \eqref{eq:aux-final-pv}, and \eqref{eq:p^v-final} gives the estimate
   \begin{equation*}
       \frac{p^u}{p^v}(s) = \frac{r^2(0)}{r^2(s)}\frac{\Omega^2(0)}{\Omega^2(s)}\frac{p^u(0)}{p^v(s)}\frac{p^v(0)}{p^v(s)}\les v^{-2},
   \end{equation*}
   which has integral $\les \breve v^{-1}$. Therefore, the upper bound in \eqref{eq:spt-f-massless} follows from \eqref{eq:gamma-u} and taking $\breve v$ sufficiently large. 

   \ul{Part 5$'$.} When $\mathfrak m=0$, the mass shell relation implies $p^u\les v^{-2}$ for any electromagnetic geodesic lying in the support of $f$. The estimates \eqref{eq:massless-decay-1}--\eqref{eq:massless-decay-3} follow from this and the formula \eqref{eq:V-v}.
\end{proof}

\subsection{Patching together the ingoing and outgoing beams}
\label{sec:patching-ingoing-outgoing}
\subsubsection{The maximal time-symmetric doubled spacetime}\label{sec:doubled}

Let $\alpha\in \mathcal P_\Gamma$ or be of the form $(r_1,r_2,0,0,0,0)$. Let $\eta$ and $\ve$ be beam parameters for which the conclusion of \cref{prop:approx-1} holds, recalling also \cref{rk:limiting-case}. Let $\tilde M$ and $\tilde e$ denote the final Reissner--Nordstr\"om parameters of $\mathcal S=\mathcal S_{\alpha,\eta,\ve}$. We say that $\mathcal S$ is \emph{subextremal} if $\tilde e<\tilde M$, \emph{extremal} if $\tilde e=\tilde M$, and \emph{superextremal} if $\tilde e>\tilde M$.\footnote{Recall that we are always taking $\mathfrak e>0$ and hence $\tilde e> 0$.} If $\mathcal S$ is not superextremal, we may define
\begin{equation*}
    r_\pm =\tilde M\pm \sqrt{\tilde M^2-\tilde e^2},
\end{equation*}
which is the formula for the area radii of the outer and inner horizons in Reissner--Nordstr\"om.

The following lemma is an easy consequence of the well-known structure of the maximally extended Reissner--Nordstr\"om solution:
\begin{lem}\label{lem:RN-piece} For any $\tilde M, \tilde e>0$ and $0<r_2<r_-$ (if $\tilde e\le \tilde M$), there exists a relatively open set
\[\mathcal E_{\tilde M,\tilde e,r_2}\subset\{u\le -r_2\}\cap\{v\ge r_2\}\subset\Bbb R^2_{u,v}\] and an analytic spherically symmetric solution $(r,\Omega^2,Q)$ of the Einstein--Maxwell equations on $\mathcal E_{\tilde M,\tilde e,r_2}$ with the following properties: The renormalized Hawking mass $\varpi=\tilde M$ globally, the charge $Q=\tilde e$ globally, $r(-r_2,r_2)=r_2$, $\partial_vr(-r_2,r_2)=-\partial_ur(-r_2,r_2)=\frac 12$, and $\Omega^2$ is constant along the cones $\{u=-r_2\}\cap\{v\ge r_2\}$ and $\{v=r_2\}\cap\{u\le -r_2\}$. 
Moreover, we may choose $(\mathcal E_{\tilde M,\tilde e,r_2},r,\Omega^2,Q)$ to be maximal with these properties, and it will then be unique.
\end{lem}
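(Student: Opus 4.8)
\textbf{Proof plan for \cref{lem:RN-piece}.}

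The plan is to construct $(\mathcal E_{\tilde M,\tilde e,r_2},r,\Omega^2,Q)$ by hand from the explicit maximally extended Reissner--Nordstr\"om solution, using the freedom in choosing double-null coordinates and then cutting out the appropriate piece. First I would recall the Reissner--Nordstr\"om metric in ingoing Eddington--Finkelstein coordinates on the exterior region $\{r>r_+\}$ (or, if $\tilde e = \tilde M$, on $\{r>0\}$), namely $g = -\mathcal D(r)\,dV^2 + 2\,dV\,dr + r^2\gamma$ with $\mathcal D(r) = 1 - 2\tilde M/r + \tilde e^2/r^2$, and pass to double-null coordinates by setting $u = V - 2r_* (r)$, $v = V$, where $r_*(r) = \int \mathcal D(r)^{-1}\,dr$ is the tortoise coordinate, chosen with an additive constant so that $r = r_2$ corresponds to $v - u = r_2$ (i.e. $r_*(r_2) = \tfrac 12 r_2$ up to an overall additive normalization fixed by requiring $u(-r_2) = -r_2$ at $r = r_2$ on the appropriate cone). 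In these coordinates $g = -\mathcal D(r)\,du\,dv + r^2\gamma$ away from horizons, so $\Omega^2 = \mathcal D(r)$; one then checks $\partial_v r = \tfrac 12 \mathcal D(r)$ and $\partial_u r = -\tfrac 12 \mathcal D(r)$ on the slab where this chart is valid, which at $r = r_2 < r_-$ (so $\mathcal D(r_2) > 0$ in the subextremal case, and $\mathcal D(r_2) > 0$ automatically in the extremal/superextremal cases) does \emph{not} give $\partial_v r = \tfrac 12$. To fix the normalization demanded in the statement, I would reparametrize $u \mapsto U(u)$, $v\mapsto V(v)$ with strictly increasing smooth $U, V$ — using the transformation law $\tilde\Omega^2 = (U'V')^{-1}\Omega^2$ recorded at the end of \cref{sec:double-null-gauge} — so that $\Omega^2$ becomes constant (and equal to $\mathcal D(r_2)/(U'V')$, then rescaled to $1$ by a further constant rescaling) along the two cones $\{u = -r_2\}$ and $\{v = r_2\}$, and simultaneously $\partial_v r(-r_2,r_2) = -\partial_u r(-r_2,r_2) = \tfrac 12$. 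Concretely, choosing $U'$ and $V'$ to be the appropriate functions of one variable (determined by the constancy-of-$\Omega^2$ requirement along each boundary cone, with the normalization at the corner $(-r_2,r_2)$ fixing the two multiplicative constants) accomplishes both at once; here I would note that $F$ transforms covariantly, $Q$ being gauge-invariant by \eqref{eq:Q-hodge-star}, so $Q \equiv \tilde e$ is preserved, and $\varpi \equiv \tilde M$ is preserved since $\varpi$ is gauge-invariant and constant in vacuum Reissner--Nordstr\"om.

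Next I would take $\mathcal E_{\tilde M,\tilde e,r_2}$ to be the image, under this coordinate change, of the largest connected portion of the maximally extended Reissner--Nordstr\"om manifold that (i) contains the corner sphere $r = r_2$ at $(-r_2, r_2)$, (ii) is covered by a single global double-null chart of the required form, and (iii) lies in $\{u \le -r_2\}\cap\{v\ge r_2\}$. Since $r_2 < r_-$ in the non-superextremal case (and no inner region constraint in the superextremal case, where one simply uses the global chart on $\{r>0\}$), this portion extends from the corner outward to null infinity along $v$ and inward toward the Cauchy horizon along $u$; analyticity of $(r, \Omega^2, Q)$ on $\mathcal E_{\tilde M,\tilde e,r_2}$ is inherited from analyticity of the Reissner--Nordstr\"om metric functions and of the tortoise coordinate (away from $r = r_\pm$, and the relevant piece stays in $r > r_-$ hence $\mathcal D > 0$). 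Finally, maximality and uniqueness: any two solutions with the listed properties agree on an open neighborhood of $(-r_2, r_2)$ by an ODE/characteristic uniqueness argument for the spherically symmetric Einstein--Maxwell system with the prescribed characteristic/corner data (this is the vacuum specialization of \cref{prop:char-IVP-Vlasov} with $f\equiv 0$, together with Birkhoff's theorem forcing the solution to be Reissner--Nordstr\"om), and then a standard connectedness argument on the set where they agree shows the maximal such $\mathcal E$ is unique.

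The main obstacle I anticipate is purely bookkeeping: pinning down the two reparametrization functions $U, V$ so that $\Omega^2$ is \emph{simultaneously} constant along both boundary cones \emph{and} the derivative normalization $\partial_v r = -\partial_u r = \tfrac 12$ holds at the corner — one must verify these four conditions are consistent and determine $U', V'$ (up to the expected residual constants) without overdetermining the system. This is a short computation using that along $\{v = r_2\}$ one has $r = r(u)$ a function of $u$ alone with $\Omega^2 = \mathcal D(r(u))$, so constancy of the rescaled $\tilde\Omega^2$ along that cone is an ODE for $U'$ in terms of $r_*$, and symmetrically for $V'$ along $\{u = -r_2\}$; the corner normalization then fixes the integration constants. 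Everything else — analyticity, the values of $\varpi$ and $Q$, and uniqueness/maximality — follows from standard properties of Reissner--Nordstr\"om and Birkhoff's theorem, and I would present these steps briskly.
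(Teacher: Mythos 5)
The paper itself offers no written proof of \cref{lem:RN-piece} (it is asserted as an easy consequence of the well-known structure of maximally extended Reissner--Nordstr\"om), and your overall route — pull back the explicit solution, fix the double null gauge by the stated normalizations, get $Q=\tilde e$ and $\varpi=\tilde M$ from gauge invariance, and obtain uniqueness/maximality from Birkhoff plus characteristic uniqueness — is exactly the intended one. However, your write-up has a genuine gap precisely where the lemma has content. You claim ``the relevant piece stays in $r>r_-$ hence $\mathcal D>0$,'' but the corner sphere satisfies $r(-r_2,r_2)=r_2<r_-$ (the hypothesis is $r_2<r_-$!), so in the subextremal and extremal cases the maximal set $\mathcal E_{\tilde M,\tilde e,r_2}$ must cross both $r=r_-$ and $r=r_+$: indeed the whole purpose of the lemma in the paper is that $\mathcal E$ contains the black hole exterior and the event horizon, which \cref{prop:global-structure} locates at the cone $u=r_2-2r_+$ inside $\mathcal E$. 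At $r=r_\pm$ the tortoise-based chart you start from (where $\Omega^2=D(r)\doteq 1-2\tilde M/r+\tilde e^2/r^2$) degenerates, so ``analyticity inherited from the tortoise coordinate away from $r=r_\pm$'' does not establish that the normalized gauge extends analytically, with $\Omega^2>0$, across the horizon crossings; as written your construction only produces the block containing the corner (the region inside the inner horizon), which is not maximal. A Kruskal-type verification is unavoidable here; concretely, the stated normalizations rigidly determine the gauge, and in electrovacuum Raychaudhuri \eqref{eq:Ray-u}--\eqref{eq:Ray-v} give $\partial_vr/\Omega^2=A(u)$, $\partial_ur/\Omega^2=B(v)$ with $r=\tfrac12(r_2+v)$ on $\{u=-r_2\}$, $r=\tfrac12(r_2-u)$ on $\{v=r_2\}$, $A(u)=\tfrac12 D(\tfrac12(r_2-u))$, $B(v)=-\tfrac12 D(\tfrac12(r_2+v))$, whence \eqref{eq:Hawking-mass} forces
\begin{equation*}
\Omega^2(u,v)=\frac{D(r(u,v))}{D\!\left(\tfrac12(r_2-u)\right)D\!\left(\tfrac12(r_2+v)\right)},
\end{equation*}
and one checks $r\equiv r_\pm$ on the cones $\{u=r_2-2r_\pm\}$ and $\{v=2r_\pm-r_2\}$, so the zeros cancel and $\Omega^2$ is positive and analytic across the horizons; the maximal such $\mathcal E$ (a strict subset of the quadrant when $\tilde e\le\tilde M$, as the paper remarks) is then obtained by continuation.

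A smaller but related error: your parenthetical ``then rescaled to $1$ by a further constant rescaling'' is not possible. The corner conditions already force $\Omega^2(-r_2,r_2)=D(r_2)^{-1}$, since \eqref{eq:Hawking-mass} gives $1-2m/r=-4\partial_ur\,\partial_vr/\Omega^2=1/\Omega^2$ there, and a constant rescaling of $\Omega^2$ alone is not a gauge transformation (compensating by reparametrizing $u,v$ would destroy $\partial_vr=-\partial_ur=\tfrac12$ at the corner). So the four stated conditions determine the gauge, and the constant values of $\Omega^2$ on the two cones, uniquely — which in fact answers the consistency worry you raised, but in the opposite direction: there is no residual freedom left, rather than freedom to normalize further.
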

\begin{rk} The $(3+1)$-dimensional lift of $(\mathcal E_{\tilde M,\tilde e,r_2},r,\Omega^2,Q)$ is isometric to a subset of the maximally extended Reissner--Nordstr\"om solution with parameters $\tilde M$ and $\tilde e$. The hypersurface $(\{\tau=0\}\cap \mathcal E_{\tilde M,\tilde e,r_2})\times S^2$ is then totally geodesic.
\end{rk}
\begin{rk}
    If $\tilde M<\tilde e$, then $\mathcal E_{\tilde M,\tilde e,r_2}=\{u\le -r_2\}\cap\{v\ge r_2\}$. In the case $\tilde e\le \tilde M$, $\mathcal E_{\tilde M,\tilde e,r_2}$ is a strict subset of $\{u\le -r_2\}\cap\{v\ge r_2\}$ with this choice of gauge. 
\end{rk}

\begin{defn}\label{def:max-time-symmetric}
    Let $(\mathcal C_{r_2},r,\Omega^2,Q,f)$ be the maximal normalized development of $\mathcal S=\mathcal S_{\alpha,\eta,\ve}$ for particles of mass $\mathfrak m$ with final Reissner--Nordstr\"om parameters $(\tilde M,\tilde e)$ for which the conclusion of \cref{prop:approx-1} holds, recalling also \cref{rk:limiting-case}. Let 
    \[\tilde{\mathcal C}_{r_2}\doteq\{\tau\le 0\}\cap\{v\ge u\}\cap\{v\le r_2\}\] be the time-reflection of $\mathcal C_{r_2}$. For $(u,v)\in\tilde{\mathcal C}_{r_2}$, define
    \begin{align*}
        \tilde r(u,v)&=r(-v,-u),\\
        \tilde\Omega^2(u,v)&=\Omega^2(-v,-u),\\
        \tilde Q(u,v)&=Q(-v,-u),\\
        \tilde f(u,v,p^u,p^v)&=f(-v,-u,p^v,p^u).
    \end{align*}
    Let $\mathscr M\doteq {\mathcal C}_{r_2}\cup \tilde{\mathcal C}_{r_2}\cup\mathcal E_{\tilde M,\tilde e,r_2}$ and define $(r,\Omega^2,Q,f)$ on $\mathscr M$ by simply gluing the corresponding functions across the boundaries of the sets ${\mathcal C}_{r_2}$, $\tilde{\mathcal C}_{r_2}$, and $\mathcal E_{\tilde M,\tilde e,r_2}$. (We therefore now drop the tilde notation on the solution in $\mathcal C_{r_2}$, except for in the proof of \cref{lem:smoothness-across} below.) The tuple $(\mathscr M,r,\Omega^2,Q,f)$ is called the \emph{maximal time-symmetric doubled spacetime} associated to $\mathcal S$.
\end{defn}

\begin{lem}\label{lem:smoothness-across}
    $(\mathscr M,r,\Omega^2,Q,f)$ is a smooth solution of the spherically symmetric Einstein--Maxwell--Vlasov system. The hypersurface $\{\tau=0\}\cap\mathscr M$ is a totally geodesic hypersurface once lifted to the $(3+1)$-dimensional spacetime by \cref{prop:SS-equiv}.
\end{lem}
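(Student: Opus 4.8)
\textbf{Proof proposal for \cref{lem:smoothness-across}.}

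The plan is to verify smoothness of the glued tuple $(r,\Omega^2,Q,f)$ across the two bifurcate null hypersurfaces separating the three pieces $\mathcal C_{r_2}$, $\tilde{\mathcal C}_{r_2}$, and $\mathcal E_{\tilde M,\tilde e,r_2}$, and then to invoke uniqueness of smooth solutions to the characteristic initial value problem (\cref{prop:char-IVP-Vlasov}) to conclude that the result is a genuine solution of the spherically symmetric Einstein--Maxwell--Vlasov system. There are two kinds of interfaces. First, the interface between $\mathcal C_{r_2}$ and $\tilde{\mathcal C}_{r_2}$ along $\{\tau = 0\}$: here the definition of $\tilde r, \tilde\Omega^2, \tilde Q, \tilde f$ is exactly the time-reflection $(u,v)\mapsto(-v,-u)$, and smoothness across $\{\tau=0\}$ is precisely the statement that all odd-order $\tau$-derivatives of $r$, $\Omega^2$, $Q$, and the odd-in-$(p^u-p^v)$ part of $f$ vanish on $\{\tau=0\}$. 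This is the content of the ``time-symmetric'' conditions built into \cref{def:normalized-defn}: equations \eqref{eq:time-symmetry-r}, \eqref{eq:time-symmetry-Omega}, the symmetry $\mathring f(\cdot,p^u,p^v) = \mathring f(\cdot,p^v,p^u)$, together with the fact that the Einstein--Maxwell--Vlasov system is invariant under the reflection isometry $(u,v,p^u,p^v)\mapsto(-v,-u,p^v,p^u)$. I would argue that, because $(\mathcal C_{r_2},r,\Omega^2,Q,f)$ and its reflection both solve the system and agree to infinite order along $\{\tau=0\}$ (which one checks by inducting on the order of $\tau$-derivatives using the evolution equations \eqref{eq:r-wave}--\eqref{eq:SS-MV} to trade normal derivatives for tangential ones, with the first-order data matching by \eqref{eq:time-symmetry-r}, \eqref{eq:Omega-derivative-data}, \eqref{eq:time-symmetry-Omega}), the glued functions are $C^\infty$ across $\{\tau = 0\}$. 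A clean way to package this: the reflected solution $\tilde{\mathcal C}_{r_2}$ is by construction a normalized development of $\mathcal S$ restricted to $\tilde{\mathcal C}_{r_2}$ in the time-reversed gauge, and matching along $\{\tau=0\}$ is a Whitney-type gluing of two smooth solutions with infinite-order contact.

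Second, the interface between the Vlasov region ($\mathcal C_{r_2}\cup\tilde{\mathcal C}_{r_2}$) and the Reissner--Nordstr\"om region $\mathcal E_{\tilde M,\tilde e,r_2}$ along the cones $\{u = -r_2\}\cap\{v\ge r_2\}$ and $\{v = r_2\}\cap\{u\le -r_2\}$ (and their reflections). By Part 4 of \cref{prop:approx-1}, $f\equiv 0$ on a neighborhood of $\mathcal I^+$ and, in particular, $\pi(\spt f)$ does not meet these boundary cones for $\breve v$ large; so on both sides of the interface the matter vanishes and one is gluing two electrovacuum solutions. On the cone $C_{-r_2}$, \cref{prop:approx-1} (via \eqref{eq:cone-1}--\eqref{eq:cone-4} and the construction in \cref{lem:RN-piece}) gives $\varpi = \tilde M$, $Q = \tilde e$, $r(-r_2,v) = \tfrac12 r_2 + \tfrac12 v$, and $\Omega^2$ constant along the cone --- which are exactly the values prescribed for $\mathcal E_{\tilde M,\tilde e,r_2}$ by \cref{lem:RN-piece}. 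Since a spherically symmetric electrovacuum solution in double null gauge is uniquely determined by its characteristic data on a bifurcate cone (by \cref{prop:char-IVP-Vlasov} with $f\equiv 0$, or Birkhoff's theorem), the two descriptions agree to infinite order along the cone, and smoothness of the glued solution follows. The analytic structure of $\mathcal E$ in \cref{lem:RN-piece} even upgrades this to real-analyticity on the electrovacuum side, which is more than we need.

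Once smoothness is established on all of $\mathscr M$, the fact that $(\mathscr M,r,\Omega^2,Q,f)$ solves the spherically symmetric Einstein--Maxwell--Vlasov system is immediate: each of $\mathcal C_{r_2}$, $\tilde{\mathcal C}_{r_2}$, $\mathcal E_{\tilde M,\tilde e,r_2}$ is a solution (the first by \cref{prop:approx-1}, the second by reflection-invariance of the system, the third by \cref{lem:RN-piece}), the equations \eqref{eq:r-wave}--\eqref{eq:SS-MV} are local, and a smooth function satisfying a system of PDEs on each of finitely many closed pieces covering $\mathscr M$ satisfies it on $\mathscr M$. Finally, the totally geodesic claim: the lift of $\{\tau = 0\}\cap\mathscr M$ lies in the Vlasov region where, by the remark following \cref{prop:development-1} and the remark after \cref{def:normalized-defn}, the conditions \eqref{eq:time-symmetry-r}, \eqref{eq:time-symmetry-Omega} together with the Christoffel symbol formulas of \cref{sec:formulas} show that the second fundamental form of $\{\tau=0\}$ vanishes identically; this is a pointwise computation on a hypersurface contained in a region where smoothness has already been verified, so it transfers to $\mathscr M$ without change. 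The main obstacle I anticipate is the bookkeeping in the infinite-order matching along $\{\tau=0\}$ --- one must check that \emph{all} odd $\tau$-derivatives of $(r,\Omega^2,Q)$ and the corresponding odd-parity derivatives of $f$ vanish, which requires an induction using the full system of evolution equations (including the commuted Vlasov equations of \cref{sec:commuted} to handle derivatives of $f$) rather than just the first-order conditions; but this is a routine, if slightly tedious, consequence of the reflection symmetry of the equations and the symmetric choice of seed data in \cref{def:seed}.
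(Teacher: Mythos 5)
Your route is viable but genuinely different from, and heavier than, the paper's. The paper's proof exploits the local well-posedness theory: since \cref{prop:char-IVP-Vlasov} provides existence, uniqueness and propagation of smoothness already at the threshold regularity $C^2\times C^2\times C^1\times C^1$, it suffices to verify matching across $\{\tau=0\}\cap\{v\le r_2\}$ at exactly that finite order. Concretely, the paper checks: $C^1\times C^1\times C^0\times C^0$ matching directly from \cref{def:seed} and \cref{def:normalized-defn}; $r\in C^2$ and $Q\in C^1$ via Raychaudhuri \eqref{eq:Ray-u}--\eqref{eq:Ray-v} and Maxwell \eqref{eq:Max-u}--\eqref{eq:Max-v}; $\partial_u^2\Omega^2=\partial_v^2\Omega^2$ on $\{\tau=0\}$ by differentiating $(\partial_u+\partial_v)\Omega^2(-v,v)=0$ and using \eqref{eq:Omega-wave}; and continuity of $\partial_uf,\partial_vf$ by combining the tangential derivative of \eqref{eq:f-data-defn} with the Vlasov equation \eqref{eq:SS-MV}, which is solvable for the two normal derivatives since $p^u+p^v>0$. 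Smoothness then follows by uniqueness against the smooth solution launched from smooth bifurcate data in the already-smooth region, and the interfaces with $\mathcal E_{\tilde M,\tilde e,r_2}$ are dismissed as immediate. Your proposal replaces this with an infinite-order Whitney/parity matching, which is the classical argument for time-symmetric data; it can be made to work, but be aware that the induction you label ``routine'' is where essentially all of the content lies, and it is precisely what the paper's use of \cref{prop:char-IVP-Vlasov} is designed to avoid. In particular, at each order you must recover the separate normal derivatives of $f$ from the single transport equation \eqref{eq:SS-MV} (or its commuted versions) together with tangential data -- the paper does this explicitly only at first order -- and you must track the momentum-swap parity of $f$ alongside the $(u,v)$ parity of the metric quantities. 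One small imprecision: the claim that $\pi(\spt f)$ avoids the gluing cones ``for $\breve v$ large'' is not quite right, since the support of the main beam touches the corner sphere $(u,v)=(-r_2,r_2)$; what is true (and suffices) is that matter is strictly deflected to $u>-r_2$ for $v>r_2$, so every interior point of the cones has a matter-free neighborhood, while the corner itself is covered by the $\{\tau=0\}\cap\{v\le r_2\}$ analysis.
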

\begin{proof}
    This is immediate except perhaps across $\{\tau=0\}\cap \{v\le r_2\}$. We only have to show that the solution is $C^2\times C^2\times C^1\times C^1$ regular across this interface by the regularity theory of \cref{prop:char-IVP-Vlasov}. By \cref{def:seed} and \cref{def:normalized-defn}, the solution is clearly $C^1\times C^1\times C^0\times C^0$ regular. Using Raychaudhuri's equations \eqref{eq:Ray-u}, \eqref{eq:Ray-v} and Maxwell's equations \eqref{eq:Max-u}, \eqref{eq:Max-v}, it is easy to see that $r$ is $C^2$ and $Q$ is $C^1$ across $\{\tau=0\}$. To check second derivatives of $\Omega^2$, differentiate $(\partial_u+\partial_v)\Omega^2(-v,v)=0$ in $v$ to obtain $\partial_u^2\Omega^2(-v,v)=\partial_v^2\Omega^2(-v,v)$, which together with the wave equation \eqref{eq:Omega-wave} implies $C^2$ matching. The $p^u$ and $p^v$ derivatives of $f$ are also continuous by inspection and continuity of spatial derivatives can be proved as follows: On $\{\tau=0\}$, $\partial_vf$ can be eliminated in terms of $\partial_uf$ and $\frac{d}{dv}\mathring f$ by \eqref{eq:f-data-defn}. Then the Maxwell--Vlasov equation \eqref{eq:SS-MV} can be solved for $\partial_uf$. Performing the same calculation for $\tilde f$ shows that $\partial_uf$ is continuous across $\{\tau=0\}$. The same argument applies for $\partial_vf$ and the proof is complete. 
\end{proof}

\subsubsection{The anchored Cauchy hypersurface}\label{sec:anchored} 

In order to define the family of Cauchy data in \cref{thm:main}, we need to identify an appropriate Cauchy hypersurface in each $\mathscr M$. Let
\begin{equation}\label{eq:v-infty}
    v_\infty\doteq\begin{cases}
2r_--r_2 &\text{if }\tilde e\le \tilde M \\
\infty &\text{if }\tilde e>\tilde M
\end{cases}
\end{equation}
and let $\Sigma^u:[-50\tilde M,v_\infty)\to\Bbb R$ be the smooth function defined by
\begin{equation*}
    \Sigma^u(v)=-100\tilde M-v
\end{equation*} for $v\in [-50\tilde M,r_1]$, by
solving the ODE
\begin{equation}
    \frac{d}{dv}\Sigma^u(v)= \left.\frac{\partial_vr}{\partial_ur}\right|_{(\Sigma^u(v),v)}\label{eq:Sigma-ODE}
\end{equation}
with initial condition $\Sigma^u(r_2)=-100\tilde M-r_2$ for $v\in [r_2,v_\infty)$, and by applying the following easy consequence of Borel's lemma for $v\in[r_1,r_2]$:
\begin{lem}\label{lem:Borel}
    Given $0<r_1<r_2$, $\tilde M>0$, and a sequence of real numbers $a_1,a_2,\dotsc$ with $a_1<0$, there exists a smooth function $\Sigma^u:[r_1,r_2]\to [-100\tilde M-r_2,-100\tilde M-r_1]$ such that $\frac{d}{dv}{\Sigma^u}(v)<0$ for $v\in [r_1,r_2]$, $\Sigma^u$ has Taylor coefficients $(-100\tilde M-r_1,-1,0,0,\dotsc)$ at $r_1$, and Taylor coefficients $(-100\tilde M-r_2,a_1,a_2,\dotsc)$ at $r_2$. Moreover, if $\tilde M$ and each $a_j$ are smooth functions of the parameters $\varpi_2,Q_2,\ve,\eta$, then $f$ can be chosen to depend smoothly on $\varpi_2,Q_2,\ve,\eta$.
\end{lem}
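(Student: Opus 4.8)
\textbf{Proof plan for \cref{lem:Borel}.}
The statement is a standard interpolation result and I would prove it by an explicit partition-of-unity construction combined with Borel's lemma. First, I would fix a smooth bump function $\chi\in C^\infty(\Bbb R)$ with $\chi\equiv 1$ on a neighborhood of $r_1$ and $\chi\equiv 0$ on a neighborhood of $r_2$, together with the complementary cutoff $1-\chi$ supported near $r_2$. Then I would use Borel's lemma twice: once to produce a smooth function $g_1$ defined near $r_1$ with prescribed Taylor coefficients $(-100\tilde M - r_1, -1, 0, 0, \dotsc)$ at $r_1$, and once to produce a smooth function $g_2$ defined near $r_2$ with prescribed Taylor coefficients $(-100\tilde M - r_2, a_1, a_2, \dotsc)$ at $r_2$. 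The naive candidate is $\Sigma^u_0 \doteq \chi g_1 + (1-\chi)g_2$, which by construction already has the correct Taylor coefficients at both endpoints since $\chi$ is locally constant there.

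The remaining issue is to arrange $\frac{d}{dv}\Sigma^u < 0$ on all of $[r_1,r_2]$, since the interpolation $\Sigma^u_0$ above is only guaranteed to be strictly decreasing near the endpoints (where its derivative is near $-1$ at $r_1$ and near $a_1<0$ at $r_2$). To fix this, I would add a correction term of the form $-N\psi(v)$, where $\psi\in C^\infty_c\big((r_1,r_2)\big)$ is a fixed nonnegative bump whose derivative $\psi'$ has a definite sign structure, or more simply I would instead build $\Sigma^u$ directly by specifying its \emph{derivative}: let $h\doteq \frac{d}{dv}\Sigma^u_0$, choose a smooth nonpositive function $\tilde h$ that agrees with $h$ to infinite order at $r_1$ and $r_2$ (again by Borel's lemma, noting $h(r_1)=-1<0$ and $h(r_2)=a_1<0$ so the germs to be matched are those of a strictly negative function), has $\tilde h < 0$ on all of $(r_1,r_2)$, and satisfies $\int_{r_1}^{r_2}\tilde h = -100\tilde M - r_2 - (-100\tilde M - r_1) = r_1 - r_2$; this last normalization can always be achieved by rescaling the free (interior) part of $\tilde h$ by a positive constant. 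Then set $\Sigma^u(v) \doteq -100\tilde M - r_1 + \int_{r_1}^v \tilde h(v')\,dv'$. By the fundamental theorem of calculus and the matching of $\tilde h$ to $h$ to all orders at the endpoints, $\Sigma^u$ has the prescribed Taylor data at $r_1$ and $r_2$, takes the value $-100\tilde M - r_2$ at $r_2$, lands in $[-100\tilde M - r_2, -100\tilde M - r_1]$ by monotonicity, and has $\frac{d}{dv}\Sigma^u = \tilde h < 0$ throughout.

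The main (and only mildly delicate) point is the smooth dependence on the parameters $\varpi_2, Q_2, \ve, \eta$. Here I would invoke the \emph{parametrized} version of Borel's lemma: given a smooth family of target jets, one can construct a smooth family of functions realizing them. Concretely, the Borel summation $\sum_{j} a_j(\varpi_2,Q_2,\ve,\eta) \tfrac{(v-r_2)^j}{j!}\rho\big(\lambda_j(\varpi_2,Q_2,\ve,\eta)(v-r_2)\big)$, with the convergence-forcing scale factors $\lambda_j$ chosen to depend smoothly on the parameters and on $\sup|a_j|$ over parameter-compact sets, yields a function jointly smooth in $v$ and the parameters; the same applies at $r_1$ and to the construction of $\tilde h$. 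The normalization constant rescaling $\int \tilde h = r_1 - r_2$ is a smooth (indeed explicit) function of the parameters since $\tilde M$ is, so no loss of regularity occurs. This is routine once the parametrized Borel lemma is in hand, so I expect no genuine obstacle; the bookkeeping of keeping all endpoint jets, the sign of $\tilde h$, and the integral normalization simultaneously compatible is the only thing requiring care, and it is resolved by the order in which the construction is carried out (endpoint jets first, then a strictly-negative interior interpolant, then rescale to fix the integral).
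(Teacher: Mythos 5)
Your construction is correct, and it realizes exactly what the paper intends: the paper offers no written proof of this lemma, simply calling it an easy consequence of Borel's lemma, and your prescribe-the-derivative argument (parametrized Borel at both endpoints, a strictly negative interior interpolant, and a rescaling of the interior bump to enforce $\int_{r_1}^{r_2}\tilde h = r_1-r_2$) is the natural way to carry that out, including the joint smoothness in $(\varpi_2,Q_2,\ve,\eta)$. No gaps worth flagging beyond making explicit that the baseline interpolant can be arranged with integral in $(r_1-r_2,0)$ so the rescaling constant is positive and depends smoothly on the parameters, which your setup already accommodates.
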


\begin{rk}\label{rk:Kodama}
    For $v\ge r_2$, the curve $\Sigma:v\mapsto(\Sigma^u(v),v)\in \mathcal E_{\tilde M,\tilde e,r_2}$ lies in the domain of outer communication if $\tilde e\le\tilde M$ and is contained in a constant time hypersurface in Schwarzschild coordinates. Indeed, the time-translation Killing vector field in $\mathcal E_{\tilde M,\tilde e,r_2}$ is given by the Kodama vector field
    \begin{equation*}
        K\doteq 2 \Omega^{-2}\partial_vr \,\partial_u-2\Omega^{-2}\partial_ur\,\partial_v,
    \end{equation*}
     which is clearly orthogonal to $\Sigma$.
\end{rk}

\subsubsection{Cauchy data for the Einstein--Maxwell--Vlasov system}

Let $(\mathcal M,g,F,f)$ be a solution of the $(3+1)$-dimensional Einstein--Maxwell--Vlasov system as defined in \cref{app:A}. Let $i:\Bbb R^3\to \tilde\Sigma\subset \mathcal M$ be a spacelike embedding with future-directed unit timelike normal $n$ to $\tilde\Sigma$. As usual, we may consider the induced metric $\bar g\doteq i^*g$ and second fundamental form $k$ (pulled back to $\Bbb R^3$ along $i$) of $\tilde\Sigma$. The electric field is defined by $\bar E\doteq i^*F(\cdot,n)$ and the magnetic field is defined by $\bar B\doteq (\star_{\bar g}i^*F)^\sharp$, where $\sharp$ is taken relative to $\bar g$. Since the domain of $f$ is the spacetime mass shell $P^\mathfrak m$ which is not intrinsic to $\tilde\Sigma$, one first has to define a projection procedure to $T\tilde\Sigma\cong T\Bbb R^3$, after which the restriction of $f$ to $\tilde\Sigma$ can be thought of as a function $\bar f:T\Bbb R^3\to [0,\infty)$. Similarly, the volume forms in the Vlasov energy momentum tensor $T$ and number current $N$ have to be written in terms of $\bar g$, after which $\bar \rho_T \doteq i^\ast T(n,n)$ , $\bar j_T \doteq i^\ast T(n,\cdot)$ and $\bar \rho_{N} \doteq i^* N(n)$ can be evaluated on $\tilde\Sigma$ only in terms of $\bar g$ and $\bar f$. For details of this procedure, we refer to \cite[Section 13.4]{Ringstrom-topology}.

\begin{defn}\label{def:moduli-space-coarse} A \emph{Cauchy data set} for the Einstein--Maxwell--Vlasov system for particles of mass $\mathfrak m$ and fundamental charge $\mathfrak e$ consists of the tuple $\Psi=(\bar g,\bar k,\bar E,\bar B,\bar f)$ on $\Bbb R^3$ satisfying the constraint equations\footnote{The particle mass $\mathfrak m$ is implicitly contained in the formulas for $\bar T$ and $\bar N$.}
    \begin{align*}
      R_{\bar g} - |\bar k|^2_{\bar g} + (\tr_{\bar g}\bar k)^2  &= |\bar E|_{\bar g}^2+|\bar B|_{\bar g}^2+ 2\bar \rho_T [\bar f],\\
      \Div_{\bar g}\bar k -d \tr_{\bar g}\bar k  &=-(\star_{\bar g}(\bar E^\flat\wedge\bar B^\flat))^\sharp+\bar j_T [\bar f] \\
       \Div_{\bar g} \bar E &=\mathfrak e \bar \rho_N [\bar f] ,\\
        \Div_{\bar g}\bar B&=0.
    \end{align*}
    We denote by $\mathfrak M_\infty(\Bbb R^3,\mathfrak m,\mathfrak e)$ the set of solutions $\Psi$ of the Einstein--Maxwell--Vlasov constraint system on $\Bbb R^3$ with the $C^\infty_\mathrm{loc}$ subspace topology. 
\end{defn}

\subsubsection{The globally hyperbolic region}

By \cref{prop:approx-1}, \cref{rk:limiting-case}, their time-reversed ($u\mapsto -v$, $v\mapsto -u$) versions, \cref{rk:Kodama}, and the structure of the Reissner--Nordstr\"om family, we have:

\begin{prop}\label{prop:global-structure}
 Let $\mathcal S$ and $(\mathscr M,r,\Omega^2,Q,f)$ be as in \cref{def:max-time-symmetric} and let $\Sigma^u(v)$ be the function defined in \cref{sec:anchored}. Let
    \begin{equation*}
        \mathcal X\doteq \{v\ge u\}\cap\{v<v_\infty\}\subset\mathscr M
        \end{equation*}
    and let $\Sigma=\{(\Sigma^u(v),v):v\in[-50\tilde M,v_\infty)\}$. Then the following holds:
    \begin{enumerate}
        \item The manifold $\mathcal M\doteq ((\mathcal X\setminus\Gamma)\times S^2)\cup\Gamma$ with metric $g=-\Omega^2\,dudv+r^2\gamma$, electromagnetic field, and Vlasov field lifted according to \cref{prop:SS-equiv} is a globally hyperbolic, asymptotically flat spacetime, free of antitrapped surfaces, with Cauchy hypersurface $\tilde\Sigma\doteq ((\Sigma\setminus\Gamma)\times S^2)\cup (\Sigma \cap \Gamma)$. 
        \item $(\mathcal M,g)$ possesses complete null infinities $\mathcal I^\pm$ and is past causally geodesically complete. 
        \item If $\tilde e> \tilde M$, then $(\mathcal M,g)$ is future causally geodesically complete. 
        \item If $\tilde e\le \tilde M$, there are two options. (If $\mathcal S_{\alpha,\eta,\ve}$ is to be untrapped, then necessarily $r_2\notin [r_-,r_+]$.)
        \begin{enumerate}
            \item If $r_2<r_-$, then $(\mathcal M,g)$ is future causally geodesically incomplete. The spacetime contains a nonempty black hole region, i.e., $\mathcal{BH}\doteq \mathcal M\setminus J^-(\mathcal I^+)\ne\emptyset$. The Cauchy hypersurface $\tilde\Sigma$ is disjoint from $\mathcal{BH}$. The event horizon $\mathcal H^+=\partial(\mathcal{BH})$ is located at $u=r_2-2r_+$.

            \item If $r_2>r_+$,  then $(\mathcal M,g)$ is future causally geodesically complete.
        \end{enumerate}
    \end{enumerate}
\end{prop}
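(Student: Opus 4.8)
The statement is essentially a compilation of results already established (most importantly \cref{prop:approx-1} together with its time-reversed version), combined with classical facts about the maximally extended Reissner--Nordstr\"om solution and a discussion of geodesic (in)completeness. The plan is to work on the reduced spacetime $(\mathcal X, r, \Omega^2, Q, f)$ and verify each assertion in turn, always reducing to the three pieces $\mathcal C_{r_2}$, $\tilde{\mathcal C}_{r_2}$, $\mathcal E_{\tilde M, \tilde e, r_2}$ that make up $\mathscr M$. First I would set up notation: by \cref{lem:smoothness-across}, $(\mathscr M, r, \Omega^2, Q, f)$ is already a smooth solution of the spherically symmetric Einstein--Maxwell--Vlasov system, so the only new analytic content is (i) identifying $\tilde\Sigma$ as a Cauchy hypersurface, (ii) establishing completeness/incompleteness of causal geodesics, and (iii) locating the event horizon when $r_2 < r_-$.

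\textbf{Part 1 (Cauchy hypersurface and asymptotic flatness).} The curve $\Sigma$ is spacelike by construction: for $v \le r_1$ it is the line $u = -100\tilde M - v$ which is spacelike in the Minkowski region (by \cref{lem:est-dat-1}), for $v \in [r_1, r_2]$ \cref{lem:Borel} guarantees $\frac{d}{dv}\Sigma^u(v) < 0$ and hence spacelikeness, and for $v \ge r_2$ the ODE \eqref{eq:Sigma-ODE} together with $\partial_u r < 0$, $\partial_v r > 0$ in $\mathcal E_{\tilde M, \tilde e, r_2}$ (which holds in the domain of outer communication, cf.~\cref{rk:Kodama}) gives $\frac{d}{dv}\Sigma^u < 0$. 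To see that $\tilde\Sigma$ is a Cauchy hypersurface for $\mathcal M = ((\mathcal X \setminus \Gamma)\times S^2)\cup\Gamma$, I would argue that every inextendible causal curve in $\mathcal X$ crosses $\Sigma$ exactly once. The past of $\Sigma$ is exhausted by the time-reflected beam together with the ``lower half'' of $\mathcal E_{\tilde M,\tilde e,r_2}$, where the solution is a subset of Reissner--Nordstr\"om and global hyperbolicity is classical; the future of $\Sigma$ inside $\mathcal X$ is exhausted by $\mathcal C_{r_2}$ and the ``upper half'' of $\mathcal E_{\tilde M,\tilde e,r_2}$ (restricted to $v < v_\infty$). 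A causal curve cannot escape to the right through $\{v = v_\infty\}$ (this is a Cauchy horizon of $\mathcal E$ when $\tilde e \le \tilde M$, or $v_\infty = \infty$ when superextremal) nor to the left through $\Gamma$ (which is a timelike geodesic); and by the estimates $\partial_v r \sim 1$, $|\partial_u r| \les 1$ from \cref{prop:approx-1}, $r \to \infty$ along any future-inextendible causal curve not in $J^-(\mathcal I^+)$, so completeness in the far region follows as in the proof of Part 2 of \cref{prop:approx-1}. Absence of antitrapped surfaces: by Part 3 of \cref{prop:approx-1}, antitrapped surfaces occur only when $\tilde e \le \tilde M$ and $r_2 < r_-$, and then only in the slab $\{2r_- - r_2 \le v \le 2r_+ - r_2\}$; but this slab lies in the past of $\tilde\Sigma$ since $\Sigma^u(v) < -v$ forces $v < v_\infty = 2r_- - r_2$ on the anchored hypersurface — wait, one must be slightly careful: I would instead observe that $\mathcal X = \{v < v_\infty\}$ already excludes the region $v \ge 2r_- - r_2 > 2r_- - r_2$ only when $r_2 < r_-$... actually $v_\infty = 2r_- - r_2 > r_-$, so the slab of antitrapped surfaces $\{v \le 2r_+ - r_2\}$ with $2r_+ - r_2 > 2r_- - r_2 = v_\infty$ is only partially in $\mathcal X$, and I would need to check directly using \eqref{eq:solved-for-nu} and the sign of $1 - 2\tilde M/r + \tilde e^2/r^2$ on $\mathcal E_{\tilde M,\tilde e,r_2}$ that no antitrapped spheres lie in $\mathcal X$. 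This is a short computation. Asymptotic flatness and completeness of $\mathcal I^\pm$ (Part 2) follow from the precise decay rates in \eqref{eq:prop-approx-1} and the structure of Reissner--Nordstr\"om near $i^0$, $i^\pm$, exactly as in the Minkowski-stability-type arguments referenced after \cref{prop:approx-1}.

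\textbf{Parts 3 and 4 (geodesic completeness and the event horizon).} Past completeness (Part 2) is immediate from Part 2 of \cref{prop:approx-1} applied to the time-reflected spacetime. For the superextremal case $\tilde e > \tilde M$ (Part 3), $\mathcal E_{\tilde M, \tilde e, r_2} = \{u \le -r_2\} \cap \{v \ge r_2\}$ is the full right wedge and $(\mathcal M, g)$ is globally isometric outside the beam to superextremal Reissner--Nordstr\"om, which is future causally geodesically complete (it contains no horizon and the only singularity is the timelike $r = 0$, which is however not in $\mathcal M$ since we glue to Minkowski near $\Gamma$); combined with future completeness of the beam region from \cref{prop:approx-1}, this gives Part 3. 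For Part 4, when $r_2 \le r_-$ or $r_2 \ge r_+$ the relevant piece of $\mathcal E_{\tilde M, \tilde e, r_2}$ is, respectively, the interior (below the inner horizon) or the exterior of (sub/extremal) Reissner--Nordstr\"om; in case (a) the black hole region $\mathcal M \setminus J^-(\mathcal I^+)$ is nonempty because the future-incomplete ingoing null geodesics that fall through $r = r_+$ in $\mathcal E$ cannot reach $\mathcal I^+$, and the event horizon is the ingoing null cone $\{r = r_+\}$ in the Reissner--Nordstr\"om region, which in our gauge is $\{u = r_2 - 2r_+\}$ (this last identity is a direct computation from $r(-r_2, r_2) = r_2$, $\partial_v r = \frac 12$ along $C_{-r_2}$, which forces $r(-r_2, v) = \tfrac12 r_2 + \tfrac12 v$, and then the cone $v = 2r_+ - r_2$ reaches $r = r_+$ at retarded time $u = r_2 - 2r_+$ by the symmetry of the gauge). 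That $\tilde\Sigma \cap \mathcal{BH} = \emptyset$ follows from $\Sigma^u(v) = -100\tilde M - v$ near $v = r_1$ lying well to the left, and $\frac{d}{dv}\Sigma^u < 0$ keeping $\Sigma^u(v) \le -100\tilde M - r_1 < r_2 - 2r_+$ for the relevant $v$ (using $r_+ \le 2\tilde M$ and $\tilde M$ bounded). In case (b), $r_2 > r_+$ means the beam and the whole of $\mathcal X$ stay in the exterior $\{r > r_+\}$, so there is no black hole and future completeness again follows from \cref{prop:approx-1} plus completeness of the Reissner--Nordstr\"om exterior.

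\textbf{Main obstacle.} The genuinely delicate point is verifying that $\tilde\Sigma$ is a Cauchy hypersurface, i.e., that its future domain of dependence exhausts $\mathcal X$, particularly across the three interfaces and near the ``corner'' where the anchored hypersurface enters $\mathcal E_{\tilde M, \tilde e, r_2}$ at $v = r_2$. One has to rule out inextendible causal curves that could ``escape'' by asymptoting to $\{v = v_\infty\}$ without crossing $\Sigma$; this requires knowing that $\Sigma$ is itself inextendible as a spacelike hypersurface reaching all the way to $v = v_\infty$ (equivalently, that the ODE \eqref{eq:Sigma-ODE} has a solution on all of $[r_2, v_\infty)$ staying in the domain of outer communication), which is where \cref{rk:Kodama} — the identification of $\Sigma$ with a constant-Schwarzschild-time slice via orthogonality to the Kodama field — does the real work. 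Everything else is bookkeeping over the explicit pieces plus citation of \cref{prop:approx-1} and the known global structure of Reissner--Nordstr\"om.
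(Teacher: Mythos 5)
Your proposal takes essentially the same route as the paper: the paper states this proposition as a direct compilation of \cref{prop:approx-1} and \cref{rk:limiting-case}, their time-reversed versions, \cref{rk:Kodama}, and the classical global structure of the Reissner--Nordstr\"om family, with no further written argument, and that is exactly the bookkeeping you carry out. Two local slips should be repaired. First, in the antitrapped-surface discussion you drop the lower bound of the slab in Part 3 of \cref{prop:approx-1}: the antitrapped spheres of the outgoing beam lie in $\{2r_- - r_2\le v\le 2r_+-r_2\}$, whose \emph{lower} endpoint is exactly $v_\infty=2r_--r_2$, so the entire slab is excised from $\mathcal X=\{v<v_\infty\}$ --- there is no partial overlap and no extra computation is needed for the beam piece. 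For the time-reflected piece one has $\partial_u\tilde r(u,v)=-\partial_vr(-v,-u)<0$ identically by \eqref{eq:prop-approx-1}, and in $\mathcal E_{\tilde M,\tilde e,r_2}$ the antitrapped spheres are the time reflections of the (region II type) trapped ones, which by the computation below also sit at $v>v_\infty$. Second, your identification of $\mathcal H^+$ is run along the outgoing cone $u=-r_2$, but within the block that cone only reaches $r=r_-$ at $v=v_\infty<2r_+-r_2$; the clean computation is along the ingoing cone $v=r_2$, where the gauge of \cref{lem:RN-piece} makes $\Omega^2$ constant and Raychaudhuri gives $\partial_ur\equiv-\tfrac12$, so $r(u,r_2)=\tfrac12(r_2-u)=r_+$ exactly at $u=r_2-2r_+$, and $\mathcal H^+$ is the \emph{outgoing} (not ingoing) cone through that sphere; your value is correct, but this is the computation that justifies it, and it also pins down the $u$-extent of the trapped region used above.

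Two smaller remarks: spacelikeness of $\Sigma$ is automatic wherever $\tfrac{d}{dv}\Sigma^u<0$ in a double null gauge, so the appeal to the Minkowski corner for $v\le r_1$ is unnecessary (and in fact that early segment crosses the far, dispersive part of the ingoing beam rather than the vacuum region near the center --- harmless, but the justification should be the sign of the slope, not \cref{lem:est-dat-1}). Finally, the delicate point you single out --- that the anchored hypersurface is achronal and inextendible up to $v=v_\infty$, i.e.\ that the ODE \eqref{eq:Sigma-ODE} stays in the domain of outer communication so that $\tilde\Sigma$ is genuinely Cauchy --- is indeed where the real content lies, and your use of \cref{rk:Kodama} together with the completeness estimates of \cref{prop:approx-1} for this is consistent with what the paper intends.
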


\subsubsection{Proof of the main theorem}\label{sec:proof-main}

\begin{proof}[Proof of \cref{thm:main}] Fix a fundamental charge $\mathfrak e>0$, cutoff functions $\varphi$, $\theta$, and $\zeta$ as in \cref{sec:parameters}, and a number $\Lambda\ge 1$ satisfying \eqref{eq:Lambda}. Fix the extremal black hole target mass $M>0$ and let $0<r_1<r_2$ and $\delta>0$ be as in \cref{thm:ECC-Ori}. Let $\eta_0>0$ be such that \cref{prop:approx-1} applies to the multiparameter family of seed data $\mathcal S_{\lambda,M',\eta,\ve}$ (which was defined in \cref{def:S-interp}) for $\lambda\in[-1,2]$, $|M'-M|\le \delta$, $0<\eta\le \eta_0$, $\ve>0$ sufficiently small depending on $\eta$, and particle mass $0\le \mathfrak m\le \mathfrak m_0$, where $\mathfrak m_0$ is sufficiently small depending on $\eta$ and $\ve$.

Let $\mathcal F:[0,\infty)^2\to [0,\infty)^2$ be defined by $\mathcal F(\lambda,M')=(\lambda^2 M',\lambda M')$, which is easily verified to be smoothly invertible on $(0,\infty)^2$. Define the function $F_{\eta,\ve}(\lambda,M')=(\tilde M,\tilde e)$, the final Reissner--Nordstr\"om parameters of $\mathcal S_{\lambda,M',\eta,\ve}$. By \eqref{eq:final-parameter-estimate}, we find 
\begin{equation}
  \label{eq:parameter-estimate-3}  |F_{\eta,\ve}(\lambda,M')-\mathcal F(\zeta(\lambda),M')|\les \eta
\end{equation}
for $\lambda\in [-1,2]$, $|M'-M|\le \delta$, and $0<\eta\le \eta_0$. There exists a constant $0<\lambda_0\ll 1$ depending on $\eta_0$ such that if $\lambda\in [-1,\lambda_0]$, then $|F_{\eta,\ve}(\lambda,M')|< \frac{1}{2} r_1$. Also, \eqref{eq:parameter-estimate-3} implies
\begin{equation}
   \sup_{\lambda_0\le\lambda\le 2,|M'-M|\le \delta} \left|\frac{\tilde e}{\tilde M}-\frac{1}{\zeta(\lambda)}\right|\les_{\lambda_0} \eta.\label{eq:parameter-ratio-4}
\end{equation}
From smooth convergence of $F_{\eta,\ve}$ to $\mathcal F$ as $\eta\to 0$, we obtain
\begin{equation}
   \sup_{\lambda_0\le\lambda \le 2,|M'-M|\le \delta}\left|\frac{d}{d\lambda}\left(\frac{\tilde e}{\tilde M}\right)+\frac{\zeta'(\lambda)}{\zeta(\lambda)^2}\right|\les_{\lambda_0} \eta.\label{eq:parameter-ratio-5}
\end{equation}
It follows that the charge to mass ratio $\tilde e/\tilde M$ is \emph{strictly decreasing} as a function of $\lambda$, for $\lambda\ge \lambda_0$.

On any fixed neighborhood of $(1,M)\in\Bbb R^2$, $\mathcal F^{-1}\circ F_{\eta,\ve}$ converges uniformly to the identity map by \eqref{eq:parameter-estimate-3}. Therefore, by a simple degree argument\footnote{Specifically, we use the following statement: Let $f:\overline B_1\to\Bbb R^d$ be a continuous map, where $\overline B_1$ is the closed unit ball in $\Bbb R^d$. If $\sup_{\overline B_1}|f-\mathrm{id}|<\tfrac 12$, then the image of $f$ contains $B_{1/2}$.} we find an assignment $(\eta,\ve)\to (\lambda(\eta,\ve),M'(\eta,\ve))$ such that $F_{\eta,\ve}(\lambda(\eta,\ve),M'(\eta,\ve))=(M,M)$. We claim that for $\eta$ sufficiently small, the family of seed data $\lambda\mapsto \mathcal S_\lambda\doteq \mathcal S_{\lambda,M'(\eta,\ve),\eta,\ve}$ gives rise to the desired family of spacetimes and Cauchy data: 

\begin{enumerate}
    \item For $\lambda\in[-1,\lambda_0]$, the final Reissner--Nordstr\"om parameters of $\mathcal S_\lambda$ are $< \frac 12  r_1$ by definition of $\lambda_0$ and hence the globally hyperbolic spacetime $\mathcal D_\lambda$ associated to $\mathcal S_\lambda$ by \cref{prop:global-structure} is future causally geodesically complete and dispersive. At $\lambda=-1$, the seed data is trivial and hence the development is isometric to Minkowski space. 

    \item For $\lambda\in[\lambda_0,\lambda(\eta,\ve))$, the final charge to mass ratio $\tilde e/\tilde M$ is strictly decreasing towards $1$ by \eqref{eq:parameter-ratio-5}. Therefore, $\mathcal D_\lambda$ is future causally geodesically complete and dispersive. A neighborhood of spatial infinity $i^0$ is isometric to a superextremal Reissner--Nordstr\"om solution. 

    \item $\lambda= \lambda(\eta,\ve)$ is, by construction, critical, with parameter ratio $\tilde e/\tilde M=1$. $\mathcal D_\lambda$ contains a nonempty black hole region and for sufficiently large advanced time, the domain of outer communication and event horizon are isometric to an appropriate portion of an extremal Reissner--Nordstr\"om black hole.

    \item For $\lambda\in (\lambda(\eta,\ve),2]$, $\tilde e/\tilde M$ decreases away from $1$ by \eqref{eq:parameter-ratio-5}. By definition of $r_2$, $r_->r_2$ for $\eta$ sufficiently small and hence $\mathcal D_\lambda$ contains a nonempty black hole region and for sufficiently large advanced time, the domain of outer communication and event horizon are isometric to an appropriate portion of a subextremal Reissner--Nordstr\"om black hole. By \eqref{eq:parameter-ratio-4}, the charge to mass ratio at $\lambda=2$ can be made arbitrarily close to $\frac 12$. 
\end{enumerate}

To complete the proof, we assign a smooth family of Cauchy data to $\mathcal D_\lambda$. Let $i_\lambda:[0,\infty)\to\Sigma_\lambda$ be the arc length parametrization (with respect to the metric $g_\lambda$) of the Cauchy surface associated to $\mathcal D_\lambda$ by \cref{prop:global-structure}. Then we may define the embedding $\tilde i_\lambda=i_\lambda\times \mathrm{id}_{S^2}/{\sim}:\Bbb R^3\to \tilde\Sigma_\lambda\subset\mathcal M_\lambda$ (where the central sphere is collapsed to a point). The natural map $\lambda\mapsto \Psi_\lambda$, where $\Psi_\lambda$ is the Cauchy data induced on $\tilde \Sigma_\lambda$ by pullback along $\tilde i_\lambda$, is smooth. This completes the proof of the theorem.\end{proof}

\subsection{Weak* convergence to dust}\label{sec:weak-proof}

In this section, we show that the spacetimes constructed in \cref{prop:global-structure} weak* converge in an appropriate sense to the bouncing charged null dust spacetimes given by \cref{prop:radial-parametrization}. First, we show convergence of an outgoing Vlasov beam to the underlying outgoing formal dust beam:

\begin{prop}\label{prop:weak-1} Let $\alpha\in\mathcal P_\Gamma$, let $\{\eta_i\}$ and $\{\ve_j\}$ be decreasing sequences of positive numbers tending to zero, let $(r,\Omega^2,Q,f)$ be the solution of the Einstein--Maxwell--Vlasov system associated to $(\alpha,\eta_i,\ve_j)$ by \cref{prop:approx-1} for $j\gg i$ and arbitrary allowed mass, and let $(r_\d,\Omega^2_\d,Q_\d,N^v_\d,T^{vv}_\d)$ be the outgoing formal dust solution from \cref{sec:outgoing-Cauchy} on $\mathcal C_{r_2}$. Then the following holds for any relatively compact and relatively open set $U\subset\mathcal C_{r_2}$: 
\begin{enumerate}
    \item We have the following strong convergence:
    \begin{equation}
         \lim_{\substack{i\to\infty \\ j\gg i}}\left(\|r-r_\d\|_{C^1(U)}+\|\Omega^2-\Omega^2_\d\|_{C^1(U)}+\|Q-Q_\d\|_{C^0(U)}+\|T^{uu}\|_{C^0(U)}+\|T^{uv}\|_{C^0(U)}\right)=0,\label{eq:weak-1}
    \end{equation}
    where the limit is to be understood as taking $i\to \infty$ while keeping $j$ sufficiently large for each $i$ such that the conclusion of \cref{prop:approx-1} applies for $\mathcal S_{\alpha,\eta_i,\ve_j}$.

    \item If $U'\subset U$ is disjoint from a neighborhood of $\{\tau=0\}$, then 
    \begin{equation}
        \lim_{\substack{i\to\infty \\ j\gg i}}\|N^u\|_{C^0(U')}=0.\label{eq:weak-3}
    \end{equation}
   
    \item We have the following weak* convergence: For any $\varphi\in C^1_c(\Bbb R^2)$ with $\spt\varphi\cap\mathcal C_{r_2}\subset U$, 
    \begin{equation}
     \lim_{\substack{i\to\infty \\ j\gg i}}   \int_U(N^u,N^v,T^{vv}) \varphi\,dudv= \int_U (0,N^v_\d,T^{vv}_\d)\varphi\,dudv.\label{eq:weak-2}
    \end{equation}
    \item We have the following weak* convergence: For any $\varphi\in L^1(U)$,
    \begin{equation}
        \lim_{i\to\infty}\lim_{j\to\infty}\int_U(N^u,N^v,T^{vv}) \varphi\,dudv= \int_U (0,N^v_\d,T^{vv}_\d)\varphi\,dudv.\label{eq:weak-6}
    \end{equation}
\end{enumerate}
\end{prop}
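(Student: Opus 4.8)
\textbf{Proof proposal for \cref{prop:weak-1}.}

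The plan is to establish the four convergence statements in increasing order of strength, using the quantitative estimates of \cref{prop:approx-1,lem:approx-1,lem:approx-2,lem:beam-far} together with the explicit hierarchy \eqref{eq:parameter-smallness} relating $\eta$, $\ve$, and $\mathfrak m_0$. The core strategy is that as $\eta\to 0$ the total seed data $\mathcal S_{\alpha,\eta,\ve}$ (with $\ve$ taken much smaller) degenerates, in the sense of its moments, exactly to the outgoing dust seed $\mathcal S_{\d,\alpha}$ defined via \eqref{eq:weak-7}: by \cref{lem:est-dat-1}, the renormalized number current satisfies $\mathring{\mathcal N}^u+\mathring{\mathcal N}^v\to\mathring{\mathcal N}^v_\d$, the energy-momentum data $\mathring{\mathcal T}^{uu},\mathring{\mathcal T}^{uv},\mathring{\mathcal T}^{vv}=O(\ve)\to 0$, and $(\mathring m,\mathring Q)\to(\check m,\check Q)$. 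Since both the Vlasov and the formal dust systems are solved from data posed on the same $\{\tau=0\}$ hypersurface via the same evolution equations for $(r,\Omega^2,Q)$ modulo matter source terms, the comparison reduces to controlling how the matter sources $T^{uu},T^{uv},T^{vv},N^u,N^v$ of the Vlasov solution differ from those of dust ($0,0,T^{vv}_\d,0,N^v_\d$).

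First I would prove (1). From \cref{lem:approx-1} (estimates \eqref{eq:lem-1-total}, \eqref{eq:Nu-est}--\eqref{eq:Tuu-est}), \cref{lem:approx-2}, and \cref{lem:beam-far} one has $r^2 T^{uv}+r^2 T^{uu}\les\ve^{1/2}$ on the near region and decay estimates in the far region, so $\|T^{uu}\|_{C^0(U)}+\|T^{uv}\|_{C^0(U)}\to 0$ on any relatively compact $U$; combined with the nonnegativity of $T^{uu},T^{uv}$ and the dominant energy condition this forces $N$ to become purely outgoing. The key technical point is a Grönwall/continuity argument in the spirit of \cref{lem:approx-1}: write down the evolution equations \eqref{eq:r-wave}--\eqref{eq:Omega-wave}, \eqref{eq:Max-u}, \eqref{eq:Ray-v} for the differences $(r-r_\d,\Omega^2-\Omega^2_\d,Q-Q_\d)$ and their first derivatives, and observe the source terms are $O(\ve^{1/2})$ plus the data discrepancy $O(\eta)$; Grönwall over the (bounded, by relative compactness of $U$) region yields the $C^1$ convergence of $r,\Omega^2$ and $C^0$ convergence of $Q$ in \eqref{eq:weak-1}. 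For (2), note $N^u$ involves only $p^u$, and by \eqref{eq:p-v-main-1}, \eqref{eq:Nu-est} one has $N^u\les \ve^2\eta^{-1}(\ve+\eta\min\{\tau,1\})^{-2}Ae^{B\tau}$, which is uniformly small \emph{away from} $\{\tau=0\}$ (where $\min\{\tau,1\}$ is bounded below), together with the corresponding far-region estimate; this gives \eqref{eq:weak-3}.

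For the weak* statements (3) and (4), the point is that $N^u$ and the ``excess'' in $N^v$, $T^{vv}$ relative to dust need not converge to zero \emph{pointwise near} $\{\tau=0\}$ — indeed $N^u(-v,v)=\mathring\Omega^2\mathring{\mathcal N}^u\approx\tfrac12\mathring{\mathcal N}^v_\d$ does not vanish on the initial hypersurface — but they \emph{do} converge weakly after integration against test functions, because the conservation law \eqref{eq:SS-N-div} (equivalently $\partial_u(r^2\Omega^2 N^u)+\partial_v(r^2\Omega^2 N^v)=0$) together with the near-pointwise bounds confines the ``spurious ingoing'' part of $N$ to an $O(\eta^{-1}\ve)$-thin neighborhood of $\{\tau=0\}$ in $u$, whose contribution to $\int_U N^u\varphi\,dudv$ is $O(\eta^{-1}\ve)\to 0$ as $j\gg i\to\infty$. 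For $N^v$ and $T^{vv}$, integrate the conservation law \eqref{eq:dust-conservation-1} against $\varphi$ and use the established $C^0$ convergence of $r,\Omega^2,Q$ to pass to the dust limit, matching $\int r^2\Omega^2 N^v$ along initial data to $\int r^2\Omega^2_\d N^v_\d$; the flux-conservation structure then transfers this to all of $U$. Statement (3) follows for $\varphi\in C^1_c$ from these flux identities plus an integration by parts; statement (4) upgrades this to $\varphi\in L^1$ by the uniform $L^\infty_{\loc}$ bounds on $(N^u,N^v,T^{vv})$ for each fixed $i$ as $j\to\infty$ (here the iterated limit is essential — for fixed $\eta_i$ one first lets $\ve_j\to 0$, killing the $T^{uu},T^{uv}$ terms and the $O(\ve)$ energy corrections, then lets $\eta_i\to 0$), combined with density of $C^1_c$ in $L^1$ and a standard $3\epsilon$-argument.

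The main obstacle I anticipate is the behavior near $\{\tau=0\}$ in part (3): there $N^u$ is $O(1)$ rather than $o(1)$, so \eqref{eq:weak-2} genuinely requires exploiting the conservation law \eqref{eq:SS-N-div} to show that the non-vanishing ingoing flux is supported in a shrinking strip, and one must be careful that the constants $A,B$ in \cref{lem:approx-1} depend on $\eta$ — so the strip width $\eta^{-1}\ve$ shrinks only because $\ve$ is taken $\ll\eta$ in the hierarchy \eqref{eq:parameter-smallness}, which is exactly why the double limit ``$j\gg i$'' (and in part (4) the iterated limit) is the right formulation and cannot be replaced by a joint limit without care.
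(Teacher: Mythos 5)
Your treatment of Parts (1)--(3) is essentially the paper's: Part (1) is the same Gr\"onwall comparison (the paper reduces to $\mathcal R_\mathrm{main}$, uses \cref{lem:est-dat-1}, \eqref{eq:Max-v} and \eqref{eq:lem-1-total} to get $|Q-\check Q|\les\eta$, then Gr\"onwall on the wave equations), Part (2) is exactly the observation that \eqref{eq:Nu-est} is small once $\tau$ is bounded below, and Part (3) rests, as in the paper, on the constraint structure plus the strong convergence of the geometry. One difference worth flagging in Part (3): the paper does not argue through the conservation law \eqref{eq:SS-N-div} but writes $N^v$ and $T^{vv}$ as exact $u$-derivatives, $N^v=-\tfrac{2}{\mathfrak e r^2\Omega^2}\partial_uQ$ via Maxwell \eqref{eq:Max-u} and $T^{vv}=-\tfrac{4}{r\Omega^2}\partial_u(\partial_ur/\Omega^2)$ via Raychaudhuri \eqref{eq:Ray-u}, integrates by parts onto $\varphi$, and then undoes the integration by parts for the dust solution; weak* convergence then follows purely from Part (1). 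Your route can be made to work for $N^v$, but for $T^{vv}$ the dust law \eqref{eq:dust-conservation-2} (and the Vlasov Bianchi identity \eqref{eq:Bianchi-2}) carries source terms such as $\mathfrak e\Omega^4QN^v$, which your sketch does not address; Raychaudhuri is the clean substitute. Also, the ``spurious ingoing'' region where $N^u$ is $O(1)$ is thin in $\tau$, not in $u$, and in any case the integrated bound \eqref{eq:lem-1-total} you already cite gives $|\int_U N^u\varphi|\les\ve^{1/2}$ directly, which is how the paper argues.

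The genuine gap is in Part (4). Your plan is ``uniform $L^\infty_\loc$ bounds at fixed $i$, density of $C^1_c$ in $L^1$, and a $3\epsilon$-argument,'' but this presupposes that you know what $\int_U(N^v,T^{vv})\varphi\,dudv$ converges \emph{to} as $j\to\infty$ at fixed $i$ — and it is not $\int_U(N^v_\d,T^{vv}_\d)\varphi\,dudv$. For fixed $\eta_i$ the auxiliary beam does not disappear: the data at $r=\tfrac23r_1$ retain an $O(\eta_i)$ mass and charge, so the inner limit is a different, $\eta_i$-dependent dust solution. Part (3) only provides the double limit $i\to\infty$, $j\gg i$, so it cannot serve as the ``convergent piece'' of the $3\epsilon$-argument for the inner limit. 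The paper closes exactly this hole by constructing the intermediate formal outgoing dust solution $(r_{\mathrm d,i},\Omega^2_{\mathrm d,i},Q_{\mathrm d,i},N^v_{\mathrm d,i},T^{vv}_{\mathrm d,i})$: one solves \eqref{eq:dust-constraint-1}--\eqref{eq:dust-constraint-2} with the $i$-dependent (but $j$-independent) initial values $\mathring m_{\mathrm V,i},\mathring Q_{\mathrm V,i}$ at $r=\tfrac 23 r_1$ and current \eqref{eq:weak-7}, repeats the Part (1) comparison at fixed $i$ as $j\to\infty$ to identify the inner weak* limit as this intermediate solution for $\varphi\in C^1_c$, upgrades to $\varphi\in L^1$ using the $L^\infty$ bounds that are uniform in $j$ (they are \emph{not} uniform in $i$, since they degenerate like $\eta_i^{-2}$, which is why the $L^1$ upgrade must be done at fixed $i$), and only then uses $|N^v_{\mathrm d,i}-N^v_\d|+|T^{vv}_{\mathrm d,i}-T^{vv}_\d|\les\eta_i$ to take $i\to\infty$. (For $N^u$ your compactness idea is fine and matches the paper: Banach--Alaoglu at fixed $i$ plus \eqref{eq:weak-3} identifies the limit as $0$.) Without the intermediate dust construction, or an equivalent argument producing and identifying the inner limit, Part (4) does not follow from Parts (1)--(3).
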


\begin{proof}
It is clear from the estimates used in the proof of \cref{lem:approx-2} that the Vlasov solutions converge strongly to Minkowski space in the regions $\mathcal R_\mathrm{aux}^{\breve v}$. It therefore suffices to prove the proposition only for the case $U=\mathcal R^{\frac 12\breve v-\frac 13r_1}_\mathrm{main}$, where we use the estimates of \cref{lem:approx-1}. 

   \ul{Part 1.}  Using \cref{lem:est-dat-1}, \eqref{eq:Max-v}, and \eqref{eq:lem-1-total}, we already obtain
     \begin{equation*}
         |Q(u,v)-\check Q(u)|\les \eta \label{eq:weak-4}
     \end{equation*}
     for any $(u,v)\in U\doteq\mathcal R^{\frac 12\breve v-\frac 13r_1}_\mathrm{main}$ and note that $\check Q(u)=Q_\d(u,v)$ by \eqref{eq:Max-v-dust}. Using this estimate, \cref{lem:est-dat-1}, \eqref{eq:lem-1-total}, and Gr\"onwall's inequality on the differences of \eqref{eq:r-wave}, \eqref{eq:Omega-wave} and \eqref{eq:r-wave-dust}, \eqref{eq:Omega-wave-dust}, we readily infer
     \begin{equation*}
         |r-r_\d|+|\Omega^2-\Omega^2_\d|\les \eta
     \end{equation*}
     on $U$, which completes the proof of \eqref{eq:weak-1}.

     \ul{Part 2.} This follows immediately from \eqref{eq:Nu-est} since $\tau$ is bounded below on $U'$.

     \ul{Part 3.} Since $\varphi$ is bounded,
     \begin{equation*}
         \left|\int_{U}N^u\varphi\,dudv\right|\les \int_{-r_2}^{-\frac 23r_1} \int_{-u}^{\breve v}N^u\,dvdu\les \ve^{1/2}
     \end{equation*}
     by \eqref{eq:lem-1-total}. Next, let $\tilde\varphi\doteq \varphi/(-\mathfrak er^2\Omega^2)$, use Maxwell's equation \eqref{eq:Max-u}, and integrate by parts:
     \begin{equation}
         \int_{U} N^v\varphi\,dudv = \int_{U} \partial_uQ\tilde\varphi\,dudv=-\int_{U}Q\partial_u\tilde\varphi\,dudv+\int_{\partial U}Q\tilde\varphi\label{eq:weak-5}
     \end{equation}
     By Part 1, $Q\partial_u\tilde\varphi\to Q_\d\partial_u\tilde\varphi_\d$ and $Q\tilde\varphi\to Q_\d\tilde\varphi_\d$ uniformly as $i\to \infty$ and $j\ll i$, where $\tilde\varphi_\d\doteq\varphi/(-\mathfrak er^2_\d\Omega^2_\d)$. Therefore, passing to the limit, we have
     \begin{equation*}
        \lim_{\substack{i\to\infty \\ j\gg i}}  \int_{U} N^v\varphi\,dudv = -\int_{U}Q_\d\partial_u\tilde\varphi_\d\,dudv+\int_{U}Q_\d\tilde\varphi_\d= \int_{U}\partial_uQ_\d \tilde\varphi_\d= \int_U N^v_\d\,dudv,
     \end{equation*}
     where we have again integrated by parts and used \eqref{eq:Max-u-dust}. A similar argument applies for the convergence of $T^{vv}$ using the Raychaudhuri equations \eqref{eq:Ray-u} and \eqref{eq:Ray-u-dust}. This completes the proof of \eqref{eq:weak-2}.

     \ul{Part 4.} We first fix $i$ and take $j\to \infty$. By \eqref{eq:Nu-est}, $N^u\les_i1$, for every $j\ll i$, where we use the notation $A\les_iB$ to denote $A\le CB$, where $C$ may depend on $i$. By the Banach--Alaoglu theorem, there exists a subsequence $j_n$ and an $L^\infty(U)$ function $h$ such that $N^u\stackrel{\ast}{\rightharpoonup} h$ in $L^\infty(U)$. However, by \eqref{eq:weak-3}, it is clear that $h=0$ almost everywhere. Since the subsequential limit is unique, $N^u\stackrel{\ast}{\rightharpoonup}0$ as $j\to\infty$. This completes the proof of \eqref{eq:weak-6} for $N^u$. 

     Let $\mathring m_{\mathrm V,i}$ and $\mathring Q_{\mathrm V,i}$ be the values of $\mathring m$ and $\mathring Q$ at $r=\tfrac 23r_1$ for the Vlasov seed $\mathcal S_{\alpha,\eta_i,\ve_j}$. Note that these numbers do not depend on $j$. Let $\mathring m_i$ and $\mathring Q_i$ be the solutions of the system \eqref{eq:dust-constraint-1} and \eqref{eq:dust-constraint-2} on $[\tfrac 23r_1,r_2]$ with initial conditions $\mathring m_{\mathrm V,i}$ and $\mathring Q_{\mathrm V,i}$ at $r=\tfrac 23r_1$, current $\mathring{\mathcal N}^v$ given by \eqref{eq:weak-7}, and identically vanishing energy-momentum tensor $\mathring{\mathcal T}^{vv}$. Following the proof of \cref{prop:outgoing-dust-existence}, we obtain a unique smooth solution $(r_{\d,i},\Omega^2_{\d,i},Q_{\d,i},N^v_{\d,i},T^{vv}_{\d,i})$ of the formal outgoing charged null dust system on $U$ which attains the initial data just described. 

     By repeating the proof of Part 1 of the present proposition, we see that
     \begin{equation*}
            \lim_{j\to\infty}\left(\|r-r_{\d,i}\|_{C^1(U)}+\|\Omega^2-\Omega^2_{\d,i}\|_{C^1(U)}+\|Q-Q_{\d,i}\|_{C^0(U)}\right)=0
     \end{equation*}
     for fixed $i$. Arguing as in Part 2, we then see that
     \begin{equation}
         \lim_{j\to\infty}\int_{U}(N^v,T^{vv})\varphi\,dudv=\int_{U}(N^v_{\d,i},T^{vv}_{\d,i})\varphi\,dudv\label{eq:weak-8}
     \end{equation}
     for any $\varphi\in C^1_c(\Bbb R^2)$. Since $N^v,T^{vv}\les_i1$, a standard triangle inequality argument shows that $\varphi$ can be replaced by any $L^1$ function in \eqref{eq:weak-8}. Now it follows by construction that 
     \begin{equation*}
         |N^v_{\d,i}-N^v_\d|+|T^{vv}_{\d,i}-T^{vv}_\d|\les \eta_i
     \end{equation*}
     on $U$, so we can safely take $i\to\infty$ in \eqref{eq:weak-8}, which completes the proof of \eqref{eq:weak-6}.
\end{proof}

In order to globalize this, we must first define bouncing charged null dust in the formal system in double null gauge. So consider again the outgoing solution $(r_\d,\Omega^2_\d,Q_\d,N^v_\d,T^{vv}_\d)$ on $\mathcal C_{r_2}$ with seed data $(\mathring{\mathcal N}^v_\d,0,r_2,\mathfrak e)$ given by \eqref{eq:weak-7}. As in \cref{def:max-time-symmetric}, we extend $r_\d$, $\Omega^2_\d$, and $Q_\d$ to $\tilde{\mathcal C}_{r_2}$ by reflection, and set
\begin{align*}
    N^u_\d(u,v)&=N^v_\d(-v,-u),\\
    T^{uu}_\d(u,v)&=T^{vv}_\d(-v,-u)
\end{align*}
for $(u,v)\in \tilde{\mathcal C}_{r_2}\setminus\{\tau=0\}$. We extend $N^v_\d$ and $T^{vv}_\d$ to zero in $\tilde{\mathcal C}_{r_2}\setminus\{\tau=0\}$ and similarly extend $N^u_\d$ and $T^{uu}_\d$ to zero in $\mathcal C_{r_2}$. Using \cref{lem:RN-piece}, we attach a maximal piece of Reissner--Nordstr\"om with parameters $\varpi_2$ and $Q_2$ to $\mathcal C_{r_2}\cup \tilde{\mathcal C}_{r_2}$. Let $v_\infty$ be as defined in \eqref{eq:v-infty}.

\begin{defn}\label{def:bouncing-double-null}
    The \emph{globally hyperbolic bouncing formal charged null dust spacetime} associated to a set of parameters $\alpha\in\mathcal P_\Gamma$ is the tuple $(\mathcal X_\d,r_\d,\Omega^2_\d,Q_\d,N^u_\d,N^v_\d,T^{uu}_\d,T^{vv}_\d)$, where $\mathcal X_\d\doteq \{v\ge u\}\cap\{v<v_\infty\}$. 
\end{defn}

For $\tau\ge 0$, $(r_d,\Omega^2_\d, N^v_\d,T^{vv}_\d)$ solves the outgoing formal dust system and for $\tau<0$, $(r_d,\Omega^2_\d, N^u_\d,T^{uu}_\d)$ solves the ingoing formal dust system. The functions $r_\d$ and $\Omega^2_\d$ are $C^1$ across $\{\tau=0\}$ and $Q_\d$, $T^{uu}_\d$, and $T^{vv}_\d$ are $C^0$ across $\{\tau=0\}$. The currents $N^u_\d$ and $N^v_\d$ are \emph{discontinuous} across $\{\tau=0\}$ (since we extended by zero), but of course $N^u_\d=N^v_\d$ at $\{\tau=0\}$.

Using this definition and \cref{prop:weak-1}, we immediately obtain the following

\begin{thm}
    Let $\alpha\in\mathcal P_\Gamma$, let $\{\eta_i\}$ and $\{\ve_j\}$ be decreasing sequences of positive numbers tending to zero, let $(\mathcal X,r,\Omega^2,Q,f)$ be the globally hyperbolic solution of the Einstein--Maxwell--Vlasov system associated to $(\alpha,\eta_i,\ve_j)$ by \cref{prop:global-structure} for $j\ll i$ and arbitrary allowed mass, and let $(\mathcal X_\d,r_\d,\Omega^2_\d,Q_\d,N^u_\d,N^v_\d,T^{uu}_\d,T^{vv}_\d)$ be the globally hyperbolic bouncing formal charged null dust spacetime associated to $\alpha$ by \cref{def:bouncing-double-null}. Then the following holds for any relatively compact open set $U\subset\mathcal X_\d$:
    \begin{enumerate}
        \item We have the following strong convergence:
\begin{equation*}
    \lim_{\substack{i\to\infty \\ j\gg i}}\left(\|r-r_\d\|_{C^1(U)}+\|\Omega^2-\Omega^2_\d\|_{C^1(U)}+\|Q-Q_\d\|_{C^0(U)}+\|T^{uv}\|_{C^0(U)}\right)=0.
\end{equation*}  
\item We have the following weak* convergence: For any $\varphi\in C^1_c(U)$,
\begin{equation*}
     \lim_{\substack{i\to\infty \\ j\gg i}}\int_U(N^u,N^v,T^{uu},T^{vv})\varphi\,dudv=\lim_{\substack{i\to\infty \\ j\ll i}}\int_U(N^u_\d,N^v_\d,T^{uu}_\d,T^{vv}_\d)\varphi\,dudv.
\end{equation*}
\item We have the following weak* convergence: For any $\varphi\in L^1(\Bbb R^2)$ with $\spt\varphi\subset U$, \begin{equation*}
\lim_{i\to\infty}\lim_{j\to\infty}\int_U(N^u,N^v,T^{uu},T^{vv})\varphi\,dudv=\lim_{\substack{i\to\infty \\ j\ll i}}\int_U(N^u_\d,N^v_\d,T^{uu}_\d,T^{vv}_\d)\varphi\,dudv.
\end{equation*}
    \end{enumerate}
\end{thm}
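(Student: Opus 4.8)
The plan is to bootstrap the local (outgoing) convergence result of \cref{prop:weak-1} to a global statement over $\mathcal X_\d$ by exploiting the time-symmetric structure of both the Vlasov spacetimes (from \cref{def:max-time-symmetric} and \cref{prop:global-structure}) and the bouncing formal dust spacetimes (from \cref{def:bouncing-double-null}). The key observation is that everything in sight is built by doubling an outgoing object across the totally geodesic hypersurface $\{\tau=0\}$ and then gluing on a fixed exact Reissner--Nordstr\"om piece $\mathcal E_{\tilde M,\tilde e,r_2}$ (respectively $\mathcal E_{\varpi_2,Q_2,r_2}$). First I would decompose the relatively compact set $U\subset\mathcal X_\d$ as $U = U_{\mathrm{out}}\cup U_{\mathrm{in}}\cup U_{\mathrm{RN}}$, where $U_{\mathrm{out}} = U\cap\mathcal C_{r_2}$, $U_{\mathrm{in}} = U\cap\tilde{\mathcal C}_{r_2}$, and $U_{\mathrm{RN}} = U\cap\mathcal E_{\varpi_2,Q_2,r_2}$ (up to the measure-zero overlaps on their common boundaries, which do not affect any of the integral statements). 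On $U_{\mathrm{out}}$ the conclusions are precisely \cref{prop:weak-1}, with the caveat that $T^{uv}$, not $T^{vv}$, is the component that vanishes in the outgoing limit — but this is exactly what the first bullet of \cref{prop:weak-1} gives, and $T^{uu}=N^u=0$ identically in the outgoing formal dust model so the weak* statements for those components reduce to $\|T^{uu}\|_{C^0}, \|N^u\|_{C^0}\to 0$, which follow from part 2 of \cref{prop:weak-1} away from $\{\tau=0\}$ combined with the energy estimate $\int r^2 N^u\,dv'\les\ve^{1/2}$ from \eqref{eq:lem-1-total} near $\{\tau=0\}$.

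For $U_{\mathrm{in}}$, the crucial point is that the reflection map $(u,v)\mapsto(-v,-u)$ is a smooth isometry intertwining the Vlasov solution on $\tilde{\mathcal C}_{r_2}$ with the Vlasov solution on $\mathcal C_{r_2}$ (\cref{lem:smoothness-across}) and the formal ingoing dust solution on $\tilde{\mathcal C}_{r_2}$ with the formal outgoing dust solution on $\mathcal C_{r_2}$. Hence all the estimates from \cref{prop:weak-1} transfer verbatim to $U_{\mathrm{in}}$ after applying this change of variables, with the roles of $u$ and $v$ (and of $N^u\leftrightarrow N^v$, $T^{uu}\leftrightarrow T^{vv}$) interchanged; in particular $\|T^{uv}\|_{C^0(U_{\mathrm{in}})}\to 0$ and $T^{vv}=N^v=0$ identically in the ingoing formal dust model so the corresponding weak* statements again follow. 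Since the reflection is a diffeomorphism preserving Lebesgue measure, the change of variables in the integrals $\int_{U_{\mathrm{in}}}(\cdots)\varphi\,dudv$ is harmless, and any $\varphi\in C^1_c$ or $L^1$ pushes forward to a function of the same class. For $U_{\mathrm{RN}}$, the Vlasov solution is \emph{exactly} Reissner--Nordstr\"om with parameters $(\tilde M,\tilde e) = F_{\eta_i,\ve_j}(\cdots)$, and by \eqref{eq:final-parameter-estimate} these converge to $(\varpi_2,Q_2)$, the parameters of the dust's Reissner--Nordstr\"om piece, as $\eta_i\to 0$; by \cref{lem:RN-piece} the solution depends smoothly (indeed analytically) on the parameters, so $r\to r_\d$, $\Omega^2\to\Omega^2_\d$, $Q\to Q_\d$ in $C^1$ (or better) on $U_{\mathrm{RN}}$, while all matter components $N^u,N^v,T^{uu},T^{uv},T^{vv}$ are identically zero on both sides there. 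Stitching the three pieces together with the triangle inequality gives all three conclusions; the iterated limit in part 3 is handled by running the $j\to\infty$ step (keeping $i$ fixed) exactly as in part 4 of \cref{prop:weak-1}, using the $i$-dependent but $j$-uniform bounds $N^u,N^v,T^{uu},T^{vv}\les_i 1$ from \eqref{eq:Nu-est} together with Banach--Alaoglu to upgrade $C^1_c$ test functions to $L^1$ ones.

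The main obstacle is bookkeeping rather than analysis: one must be careful that the double null gauge on $\mathcal X_\d$ used for the Vlasov spacetimes (anchored via \cref{prop:global-structure}) and the gauge on $\mathcal X_\d$ for the formal dust spacetime (\cref{def:bouncing-double-null}) genuinely agree, so that the same coordinate set $U$ makes sense on both sides and the convergence statements are not corrupted by a drifting diffeomorphism. This is ensured because both constructions use the \emph{normalized} development of \cref{def:normalized-defn} on $\mathcal C_{r_2}$ (so the gauges coincide there), the reflection is fixed, and the Reissner--Nordstr\"om piece is attached by the same gauge-fixing in \cref{lem:RN-piece} in both cases; the only genuine subtlety is that the extent $v_\infty$ of $\mathcal X_\d$ for the Vlasov spacetime depends on $(\tilde M,\tilde e)$ and hence on $(\eta_i,\ve_j)$, whereas for the dust it is fixed — but since $U$ is relatively compact in the \emph{dust} spacetime $\mathcal X_\d$ and $v_\infty(\tilde M,\tilde e)\to v_\infty(\varpi_2,Q_2)$, for all sufficiently large $i$ (and $j\gg i$) the set $U$ is also relatively compact in the Vlasov $\mathcal X$, so the statements are well-posed. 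A secondary technical point is the treatment of the overlap sets (the boundary cones $\{\tau=0\}$, $\{v=r_2\}$, $\{u=-r_2\}$), where the matter of the dust is discontinuous ($N^u_\d$ versus $N^v_\d$); since these are Lebesgue-null and the $C^1$/$C^0$ convergences are away from them (or hold up to the boundary for the geometric quantities), none of the four conclusions is affected.
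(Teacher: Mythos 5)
Your proposal is correct and follows essentially the same route as the paper, which deduces the theorem directly from \cref{prop:weak-1} together with the doubled-spacetime construction: the decomposition into the outgoing region, its time-reflection, and the Reissner--Nordstr\"om piece (with parameter convergence via \eqref{eq:final-parameter-estimate} and the domain subtlety about $v_\infty$ handled exactly as in the paper's remark) is precisely the intended argument. The only cosmetic point is that on the outgoing piece the weak* convergence of $N^u$ is already part 3 of \cref{prop:weak-1} (sup-norm smallness fails near $\{\tau=0\}$), but your substitute argument via part 2 plus the flux bound from \eqref{eq:lem-1-total} is the same mechanism the paper uses there.
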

\begin{rk}
    The globally hyperbolic region $\mathcal X$ depends on $\eta_i$ and $\ve_j$, but it always holds that $U\subset\mathcal X$ for $i$ sufficiently large and $j\gg i$.
\end{rk}

\section{The third law and event horizon jumping at extremality}
\label{sec:third-law-and-h-jumping}

\begin{figure}
\centering{
\def\svgwidth{23pc}
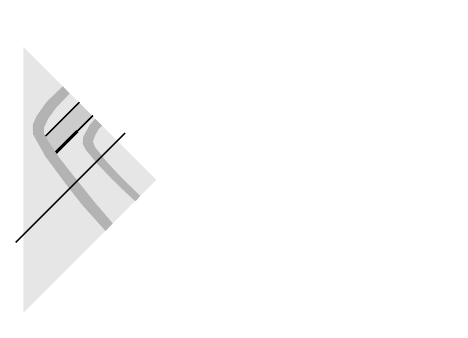}
\caption{Penrose diagrams of counterexamples to the third law of black hole thermodynamics in the Einstein--Maxwell--Vlasov model. The disconnected thick black curve denotes the outermost apparent horizon $\mathcal A'$, which jumps outward as the black hole becomes extremal. This behavior is necessary in third law violating spacetimes which obey the weak energy condition, see \cite[Proposition 1.1]{KU22} and \cite{Israel-third-law}.}
\label{fig:third-law-Vlasov-proof}
\end{figure}

Using the technology of bouncing Vlasov beams developed in the proof of \cref{thm:main}, we are now able to quickly prove \cref{thm:third-law-Vlasov,thm:it-jumps}. As complete proofs would require more lengthy setup, we only sketch the proofs of these results. We refer the reader back to \cref{sec:EMV-third-law} for the theorem statements and discussion.

\subsection{Counterexamples to the third law}\label{sec:Vlasov-third-law}

Refer to \cref{fig:third-law-Vlasov-proof} for global Penrose diagrams.

\begin{proof}[Proof of \cref{thm:third-law-Vlasov}]

Let $\alpha\in\mathcal P$ be third law violating dust parameters as in \cref{thm:third-law-dust}. We desingularize this dust beam as in the proof of \cref{thm:main}, noting that the charge on the inner edge of the beam is bounded below and hence no auxiliary beam is required. In order to achieve extremality we must modulate $\alpha$ slightly as in \cref{thm:main}, but all required inequalities for this construction are strict, so this can be done. By this procedure we obtain the time-symmetric solution $\mathcal D_\mathrm{ext}$ depicted in \cref{fig:third-law-Vlasov-2} below.

\begin{figure}[ht]
\centering{
\def\svgwidth{10pc}
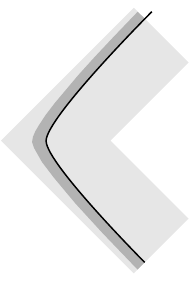}
\caption{Penrose diagram of the time symmetric Einstein--Maxwell--Vlasov solution $\mathcal D_\mathrm{ext}$ interpolating between subextremality and extremality. This diagram is valid for both massive and massless particles.}
\label{fig:third-law-Vlasov-2}
\end{figure}

Let $(\varpi_1,Q_1)$ be the initial Reissner--Nordstr\"om parameters of $\alpha$, which will also be the initial parameters of $\mathcal D_\mathrm{ext}$.  Using the same methods as \cref{thm:main} and \cref{lem:existence-P}, we can construct a solution $\mathcal D_\mathrm{sub}$ of Einstein--Maxwell--Vlasov collapsing to a subextremal Reissner--Nordstr\"om black hole with parameters $\varpi_1$ and $Q_1$. In the case of massless particles, the desired third law violating spacetime is then obtained by deleting an appropriate double null rectangle from $\mathcal D_\mathrm{sub}$ and gluing in an appropriate piece of $\mathcal D_\mathrm{ext}$. In the case of massive particles, the beams from $\mathcal D_\mathrm{ext}$ and $\mathcal D_\mathrm{sub}$ will possibly interact in the early past, but as is clear from the proof of \cref{prop:approx-1}, this happens in the dispersive region and the proof of \cref{lem:beam-far} can be repeated to show global existence and causal geodesic completeness in the past.
\end{proof}

\subsection{(Semi)continuity of the location of the event horizon} 
\label{sec:horizon-jumping}

\subsubsection{General results for spherically symmetric event horizons}\label{sec:general-continuity}

We now consider general \emph{weakly tame} spherically symmetric Einstein-matter systems, i.e., those satisfying the dominant energy condition and the weak extension principle. We refer to \cite{dafermos-trapped-surface,Kommemi13} for the precise definition of the weak extension principle, but note that it is a strictly weaker condition than the generalized extension principle as formulated in \cref{prop:ext} and therefore holds for the Einstein--Maxwell--Klein--Gordon, Einstein--Higgs, and Einstein--Maxwell--Vlasov systems \cite{TheBVPaper,dafermos2005naked,Kommemi13,DR16}.

Let $\Psi=(\Sigma,\bar g,\bar k,\dotsc)$ be a spherically symmetric, asymptotically flat Cauchy data set on $\Bbb R^3$ with a regular center in the given matter model. We will assume that $\Psi$ contains no spherically symmetric antitrapped surfaces, i.e., that $\partial_ur<0$ on $\Sigma$. This assumption is physically motivated by the observation that if the maximal Cauchy development $\mathcal D=(\mathcal Q,r,\Omega^2,\dotsc)$ does not contain a white hole, then there are no antitrapped surfaces in the spacetime. Furthermore, by Raychaudhuri's equation \eqref{eq:EE-Ray-u} and the dominant energy condition, $\partial_ur<0$ is propagated to the future of the Cauchy hypersurface $\Sigma$.

Under these assumptions, a very general a priori characterization of the boundary of $(\mathcal Q,r,\Omega^2)$ is available due to the work of Dafermos \cite{dafermos-trapped-surface} and we refer to Kommemi \cite{Kommemi13} for a detailed account. We will utilize the following two facts:

\vspace{1mm}

\textbf{Fact 1.} If $(\mathcal Q,r,\Omega^2)$ contains a trapped or marginally trapped surface, i.e., $\partial_vr(u_0,v_0)\le 0$ for some $(u_0,v_0)\in \mathcal Q$, then the black hole region is nonempty ($\mathcal{BH}\doteq\mathcal Q\setminus J^-(\mathcal I^+)\ne\emptyset$), future null infinity is complete in the sense of Christodoulou \cite{christodoulou1999global}, and $(u_0,v_0)\in \mathcal{BH}$. 

\vspace{1mm}

\textbf{Fact 2.} The Hawking mass $m$ extends to a (not necessarily continuous) nonincreasing and nonnegative function on future null infinity $\mathcal I^+$, called the \emph{Bondi mass}. If the Hawking mass of $\Sigma$ is bounded,\footnote{This can be regarded as a part of the definition of asymptotic flatness.}  then the Bondi mass is also bounded and the \emph{final Bondi mass} $M_f\doteq \inf_{\mathcal I^+}m$ is finite. Then the ``event horizon Penrose inequality'' $\sup_{\mathcal H^+}r\le 2M_f$ holds and by the no antitrapped surfaces condition we also obtain
\begin{equation}
    \sup_{\mathcal{BH}}r \le 2M_f.\label{eq:BH-r}
\end{equation}

\vspace{1mm}

We wish to consider sequences of initial data and their developments. In order to compare them, we have to ensure that the double null gauges are synchronized in an appropriate sense. We consider only developments $\mathcal D=(\mathcal Q,r,\Omega^2,\dotsc)$ for which the center $\Gamma$ is a subset of $\{u=v\}\subset\Bbb R^2_{u,v}$ and if $i:[0,\infty)\to\Sigma$ denotes the embedding map of the Cauchy hypersurface into $\mathcal Q$, we demand that $i(0)\in\Gamma$. Clearly, these conditions can always be enforced by an appropriate transformation of the double null gauge.

\begin{ass}\label{def:sync}
    Let $\{\Psi_j\}$ be a sequence of spherically symmetric asymptotically flat Cauchy data for a weakly tame Einstein-matter system. Let $\mathcal D_j=(\mathcal Q_j,r_j,\Omega_j^2,\dotsc)$ denote the maximal development of $\Psi_j$ with Cauchy hypersurface $\Sigma_j\subset\mathcal Q_j$ and embedding map $i_j:[0,\infty)\to \Sigma_j$ normalized as above. We assume that $\Psi_j$ converges to another data set $\Psi_\infty$ and the developments converge in the following sense:
    \begin{enumerate}
        \item \ul{Gauge condition:} Let $\mathcal D_\infty$ denote the maximal development of $\Psi_\infty$ with Cauchy hypersurface and embedding map $i_\infty:[0,\infty)\to \Sigma_\infty$. We assume that $\mathcal D_j$ and $\mathcal D_\infty$ have \emph{continuously synchronized gauges} in the sense that $(u,v)\circ i_j:[0,\infty)\to\Bbb R^2$ converges uniformly on compact sets as $j\to\infty$.

        \item \ul{Cauchy stability of the area-radius:} If $U\subset\mathcal Q_\infty$ is a relatively compact open set, then $U\subset\mathcal Q_j$ for $j$ sufficiently large and $r_j\to r_\infty$ in $C^1(U)$.
    \end{enumerate}
\end{ass}

\begin{rk}\label{def:sync-fam}
    The notion of continuous synchronization also makes sense for continuous one-parameter families of Cauchy data $\lambda\mapsto \Psi_\lambda$. In this case we require the maps $(u,v)\circ i_\lambda(x)$ to be jointly continuous in $\lambda$ and $x\in [0,\infty)$.
\end{rk}

\begin{rk}
    The initial data and developments given by \cref{prop:global-structure} are continuously synchronized as functions of the beam parameters $(\alpha,\eta,\ve,\mathfrak m)$.
\end{rk}

\begin{prop}\label{prop:continuity}
    Let $\Psi_j\to \Psi_\infty$ be a convergent sequence of one-ended asymptotically flat Cauchy data for a weakly tame spherically symmetric Einstein-matter system, containing no spherically symmetric antitrapped surfaces, and satisfying \cref{def:sync}. Assume that the sequence has uniformly bounded Bondi mass and that the development $\mathcal D_j$ of $\Psi_j$ contains a black hole for each $j\in\Bbb N$. Let $r_{j,\mathcal H^+}$ denote the limiting area-radius of the event horizon $\mathcal H^+_j$ of $\mathcal D_j$ and $u_{j,\mathcal H^+}$ its retarded time coordinate (also for $j=\infty$ if $\mathcal D_\infty$ contains a black hole). Then the following holds:
    \begin{enumerate}
        \item If future timelike infinity $i^+_\infty$ is a limit point of the center $\Gamma$ in $\mathcal D_\infty$ (in particular, $\mathcal D_\infty$ does not contain a black hole), then
        \begin{equation}\label{eq:cont-1}
            \lim_{j\to\infty} u_{j,\mathcal H^+}=\sup_{\mathcal Q_\infty}u.
        \end{equation}
        \item If $\mathcal D_\infty$ contains a black hole, then
\begin{enumerate}
    \item The retarded time of the event horizon is lower semicontinuous: \begin{align}
       \label{eq:cont-2}     \liminf_{j\to\infty}u_{j,\mathcal H^+}\ge u_{\infty,\mathcal H^+}.
        \end{align}
    \item Assume further that there are trapped surfaces asymptoting to $i_\infty^+$  in the following sense: Let $(u_{\infty,i^+},v_{\infty,i^+})$ denote the coordinates\footnote{$u_{\infty,i^+}$ is necessarily finite and equals $u_{\infty,\mathcal H^+}$, but $v_{\infty,i^+}$ could be $+\infty$.} of $i^+_\infty$ and suppose there exist sequences $u_1>u_2>\cdots \to u_{\infty,i^+}$ and $v_1<v_2<\cdots\to v_{\infty,i^+}$ such that $\partial_vr_\infty(u_i,v_i)<0$ for every $i\ge 1$. Then \eqref{eq:cont-2} is upgraded to \begin{equation}
         \label{eq:cont-4}   \lim_{j\to\infty}u_{j,\mathcal H^+}= u_{\infty,\mathcal H^+}
    \end{equation}
    and it additionally holds that 
    \begin{equation}
        \liminf_{j\to\infty}r_{j,\mathcal H^+}\ge r_{\infty,\mathcal H^+}.\label{eq:cont-5}
    \end{equation}
\end{enumerate}
    \end{enumerate}
\end{prop}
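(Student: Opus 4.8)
The plan is to prove each of the three parts of \cref{prop:continuity} separately, using the a priori structure theory of weakly tame spherically symmetric Einstein-matter systems (Facts 1 and 2 above) together with the Cauchy stability assumptions in \cref{def:sync}. The common thread is that the event horizon location is controlled from one side by Cauchy stability of the area-radius on compact sets (which gives that trapped surfaces persist under perturbation, hence an upper bound on the reachable retarded time, i.e.\ semicontinuity), and from the other side by the Penrose-type inequality \eqref{eq:BH-r} together with the gauge synchronization (which controls how far ``out'' the event horizon can jump in terms of the bounded final Bondi mass).

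For Part 1, suppose $i^+_\infty$ is a limit point of $\Gamma$ in $\mathcal D_\infty$, so that $\sup_{\mathcal Q_\infty}u = u_{\infty,i^+}$ is attained along the center and $\mathcal D_\infty$ is free of trapped surfaces. First I would show $\limsup_{j\to\infty} u_{j,\mathcal H^+}\le \sup_{\mathcal Q_\infty}u$: if not, then along a subsequence $u_{j,\mathcal H^+}\ge u_* > \sup_{\mathcal Q_\infty}u$ for some fixed $u_*$, which would force a point $(u',v')$ with $u' < \sup_{\mathcal Q_\infty}u$ and $\partial_v r_j(u',v')\le 0$ (a marginally trapped surface just inside the horizon). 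Choosing such a point in a relatively compact subset of $\mathcal Q_\infty$ and applying Cauchy stability of $r$ (part 2 of \cref{def:sync}) together with continuous synchronization of the gauge (part 1), we would get $\partial_v r_\infty \le 0$ at a limiting point of $\mathcal Q_\infty$ — contradicting the absence of trapped surfaces. The reverse inequality $\liminf_{j\to\infty}u_{j,\mathcal H^+}\ge \sup_{\mathcal Q_\infty}u$ uses \eqref{eq:BH-r}: since $\sup_{\mathcal{BH}_j}r\le 2M_{f,j}\le C$ uniformly and the $r_j\to r_\infty$ on compact sets with $r_\infty$ increasing to its spatial-infinity value along $\Sigma_\infty$ and along ingoing cones, any region with $u < \sup_{\mathcal Q_\infty}u - \delta$ has $r_j$ bounded below by a constant larger than... here one has to be a little careful, and instead argue that for $u$ strictly less than $u_{\infty,i^+}$ the point $(u, v)$ for $v$ large lies in $J^-(\mathcal I^+_j)$ because $r_j(u,v)$ stays comparable to $r_\infty(u,v)$ which grows without bound along the outgoing cone $C_u$ of $\mathcal D_\infty$ — in particular exceeds $2M_{f,j}$ for $j$ large — so such a point cannot be in $\mathcal{BH}_j$ by \eqref{eq:BH-r}. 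Combining the two bounds gives \eqref{eq:cont-1}.

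For Part 2(a), the lower semicontinuity \eqref{eq:cont-2} is the same ``reverse inequality'' argument as above but now with $\mathcal D_\infty$ itself containing a black hole: for any $u < u_{\infty,\mathcal H^+}$, the point $(u,v)$ lies in $J^-(\mathcal I^+_\infty)$, so the outgoing cone $C_u^\infty$ reaches $\mathcal I^+_\infty$ and $r_\infty\to\infty$ along it; by Cauchy stability $r_j$ is large on a compact portion of $C_u$, and combined with \eqref{eq:BH-r} and uniform boundedness of $M_{f,j}$, the point $(u,v)$ (for suitable large $v$) is outside $\mathcal{BH}_j$, whence $u_{j,\mathcal H^+}\ge u$ for $j$ large; taking $u\uparrow u_{\infty,\mathcal H^+}$ gives \eqref{eq:cont-2}. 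For Part 2(b), the extra hypothesis — trapped surfaces $(u_i,v_i)$ accumulating at $i^+_\infty$ with $u_i\downarrow u_{\infty,\mathcal H^+}$ — upgrades this to continuity: each trapped surface $(u_i,v_i)$ of $\mathcal D_\infty$ persists for $j$ large by Cauchy stability of $r$ (again using continuous synchronization to locate it), so by Fact 1 $(u_i,v_i)\in\mathcal{BH}_j$ and hence $u_{j,\mathcal H^+}\le u_i$; letting $i\to\infty$ yields $\limsup_j u_{j,\mathcal H^+}\le u_{\infty,\mathcal H^+}$, which with \eqref{eq:cont-2} gives \eqref{eq:cont-4}. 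Finally, for \eqref{eq:cont-5}: since $(u_i,v_i)\in\mathcal{BH}_j$ for $j$ large, $r_j(u_i,v_i)\le r_{j,\mathcal H^+}$, and $r_j(u_i,v_i)\to r_\infty(u_i,v_i)$; but $r_\infty(u_i,v_i)\to r_{\infty,\mathcal H^+}$ as $i\to\infty$ because $(u_i,v_i)$ approaches the event horizon of $\mathcal D_\infty$ from inside. A diagonal argument in $i$ and $j$ then produces $\liminf_j r_{j,\mathcal H^+}\ge r_{\infty,\mathcal H^+}$.

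The main obstacle I anticipate is making the ``persistence of trapped surfaces'' and the complementary ``persistence of points in $J^-(\mathcal I^+)$'' arguments fully rigorous given only $C^1_\mathrm{loc}$ convergence of $r$ and uniform-on-compacts convergence of the gauge: one must ensure that the relevant points of $\mathcal D_\infty$ actually lie in relatively compact open subsets of $\mathcal Q_\infty$ so that Cauchy stability applies, and that the coordinate identification between $\mathcal Q_j$ and $\mathcal Q_\infty$ induced by synchronizing along $\Sigma_j$ really does transport these points correctly — this requires propagating the synchronization off of the Cauchy surface, which one gets for free from the fact that the gauge is globally fixed by its values on $\Sigma$ (the null coordinates are constant along null cones) but should be stated carefully. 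A secondary subtlety is the quantitative use of \eqref{eq:BH-r}: one needs that along an outgoing cone $C_u$ reaching $\mathcal I^+$, $r_\infty\to\infty$, and that this blow-up is ``seen'' on a compact subset in the sense that $r_\infty$ already exceeds the uniform bound $2\sup_j M_{f,j}$ at a \emph{finite} advanced time $v$, which follows from $\partial_v r_\infty \ge 0$ not degenerating to zero too fast along $C_u$ — a consequence of Raychaudhuri and completeness of $\mathcal I^+$ à la Christodoulou, and this is exactly where Fact 1 and Fact 2 are doing the real work.
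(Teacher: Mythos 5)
Your liminf-type arguments coincide with the paper's and are fine: for any $u_0$ whose outgoing cone reaches $\mathcal I^+$ in $\mathcal D_\infty$, choose $v_0$ with $r_\infty(u_0,v_0)>2M$, invoke Cauchy stability and \eqref{eq:BH-r} to conclude $(u_0,v_0)\notin\mathcal{BH}_j$, giving the $\geq$ halves of \eqref{eq:cont-1} and \eqref{eq:cont-2}; and persistence of the strictly trapped spheres $(u_i,v_i)$ under $C^1$ convergence gives \eqref{eq:cont-4} exactly as in the paper. However, your argument for the $\limsup\le$ half of \eqref{eq:cont-1} does not work as stated. You claim that $u_{j,\mathcal H^+}\ge u_*>\sup_{\mathcal Q_\infty}u$ along a subsequence ``forces'' a marginally trapped sphere at some $u'<\sup_{\mathcal Q_\infty}u$. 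Nothing forces this: by Fact 1 every (marginally) trapped sphere of $\mathcal D_j$ lies in $\mathcal{BH}_j$, i.e.\ at retarded times $u\ge u_{j,\mathcal H^+}\ge u_*$, which is precisely the region \emph{not} covered by $\mathcal Q_\infty$ and hence invisible to the Cauchy stability hypothesis; moreover, a black hole in this setting need not contain any strictly trapped sphere at all (the extremal examples of this very paper), so no contradiction can be manufactured this way. The paper instead bounds $u_{j,\mathcal H^+}\le\sup_{\mathcal Q_j}u$ and uses Cauchy stability to assert $\limsup_{j\to\infty}\sup_{\mathcal Q_j}u\le\sup_{\mathcal Q_\infty}u$; some such control on the $u$-extent of the perturbed developments is what is actually needed here, and your local trapped-surface argument cannot substitute for it.

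The second gap is in \eqref{eq:cont-5}. Your chain $r_{j,\mathcal H^+}\ge r_j(u_i,v_i)\to r_\infty(u_i,v_i)$ is correct (using $\partial_u r_j<0$ and $\partial_v r_j\ge0$ along $\mathcal H^+_j$), but the final claim that $r_\infty(u_i,v_i)\to r_{\infty,\mathcal H^+}$ as $i\to\infty$ does not follow from the hypothesis: the witnessing spheres are only required to satisfy $u_i\downarrow u_{\infty,i^+}$, $v_i\uparrow v_{\infty,i^+}$ and $\partial_vr_\infty(u_i,v_i)<0$, and their radii may remain far below $r_{\infty,\mathcal H^+}$ (for instance, take them near a spacelike $r=0$ singularity deep inside a Schwarzschild-like interior, where $r_\infty(u_i,v_i)$ can be made arbitrarily small while $u_i\downarrow u_{\infty,\mathcal H^+}$ and $v_i\uparrow\infty$). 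Your argument therefore only yields $\liminf_{j\to\infty}r_{j,\mathcal H^+}\ge\sup_i r_\infty(u_i,v_i)$, which can be strictly smaller than $r_{\infty,\mathcal H^+}$. The repair is the paper's route, which uses \eqref{eq:cont-4} rather than the trapped spheres themselves: for any fixed $v_0<v_{\infty,i^+}$, monotonicity of $r_j$ along $\mathcal H^+_j$ gives $r_{j,\mathcal H^+}\ge r_j(u_{j,\mathcal H^+},v_0)$, and by Cauchy stability together with $u_{j,\mathcal H^+}\to u_{\infty,\mathcal H^+}$ this converges to $r_\infty(u_{\infty,\mathcal H^+},v_0)$; letting $v_0\to v_{\infty,i^+}$ then gives \eqref{eq:cont-5}.
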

\begin{proof}
\ul{Part 1:} The inequality $\le$ in \eqref{eq:cont-1} follows directly from Cauchy stability, which implies the stronger statement \[\limsup_{j\to\infty}\sup_{\mathcal Q_j}u\le \sup_{\mathcal Q_\infty}u.\] 
We now prove the inequality $\ge$. Let $u_0<\sup_{\mathcal Q_\infty}u$ and let $M$ be an upper bound for the Bondi mass of the sequence $\{\mathcal D_j\}$. Since the cone $C_{u_0}$ in $\mathcal Q_\infty$ reaches future null infinity, we can choose $v_0$ such that $r_\infty(u_0,v_0)>2M$. By Cauchy stability, $r_j(u_0,v_0)>2M$ for $j$ sufficiently large, so $(u_0,v_0)\notin \mathcal{BH}_j$ by \eqref{eq:BH-r}. This implies $u_{j,\mathcal H^+}\ge u_0$ which completes the proof. 

\ul{Part 2 (a):} The argument to prove \eqref{eq:cont-2} is the same as the proof of Part 1, since $C_{u_0}$ reaches future null infinity in $\mathcal Q_\infty$ for any $u_0< u_{\infty,\mathcal H^+}$. 

\ul{Part 2 (b):} By \eqref{eq:cont-2} we must now show that for $u_0> u_{\infty,\mathcal H^+}$, $u_{j,\mathcal H^+}\le u_0$ for $j$ sufficiently large. Let $(u_i,v_i)$ be a trapped sphere for $\mathcal D_\infty$ as in the statement, with $u_0>u_i> u_{\infty,\mathcal H^+}$. By Cauchy stability, $(u_i,v_i)$ is then trapped in $\mathcal D_j$ for $j$ sufficiently large and therefore $u_i> u_{j,\mathcal H^+}$, which completes the proof of \eqref{eq:cont-4}. Using this, we now have by Cauchy stability and monotonicity of $r_j$ along $\mathcal H^+_j$ that
\begin{equation*}
\liminf_{j\to\infty}r_{j,\mathcal H^+}\ge\lim_{j\to\infty}r_j(u_{j,\mathcal H^+},v_0)=r_\infty(u_{\infty,\mathcal H^+},v_0)
\end{equation*} for any $v_0<v_{\infty,i^+}$.
Letting $v_0\to v_{\infty,i^+}$ completes the proof. \end{proof}

\begin{rk}
It is natural to ask if the ``reverse'' of \eqref{eq:cont-5}, i.e.,   
\begin{equation}
     \label{eq:cont-3}   \limsup_{j\to\infty}r_{j,\mathcal H^+}\le r_{\infty,\mathcal H^+},
\end{equation} holds at this level of generality (even assuming trapped surfaces asymptoting towards $i^+_\infty$). It turns out that \eqref{eq:cont-3} is \emph{false} without additional assumptions. On the one hand, assuming additionally a strict inequality in \eqref{eq:cont-2}, a minor modification of the arguments used to show \eqref{eq:cont-2} can be used to show \eqref{eq:cont-3}.\footnote{Note that in this case, there cannot be trapped surfaces asymptoting towards $i^+_\infty$!} On the other hand, without the assumption of a strict inequality in \eqref{eq:cont-2}, using the ingoing (uncharged) Vaidya metric, one can construct a counterexample to \eqref{eq:cont-3} which moreover satisfies the asymptotic trapped surface assumption of Part 2 (b). One is to imagine inflating the event horizon of a Schwarzschild black hole by injecting a fixed dust packet at later and later advanced times $v\sim j$. In the limit $j\to\infty$, the dust disappears, and $\limsup r_{j,\mathcal H^+}>r_{\infty,\mathcal H^+}$. Curiously, since black holes in the Vaidya model always have trapped surfaces behind the horizon, Part 2 (b) of the proposition implies that $u_{j,\mathcal H^+}$ is actually continuous in this process. This is because injecting a fixed amount of matter at later and later times causes the horizon to move outwards less and less (in $u$), causing it to converge back to the original Schwarzschild horizon as $j\to\infty$. Therefore, in order for \eqref{eq:cont-3} to hold, one must assume that $\Psi_j$ converges to $\Psi_\infty$ in a norm that sufficiently respects the asymptotically flat structure. We emphasize at this point that the conclusions of \cref{prop:continuity} hold only under an assumption of \emph{local Cauchy stability}---no asymptotic stability or weighted convergence is required.
\end{rk}

\subsubsection{Proof of event horizon jumping in the Einstein--Maxwell--Vlasov model}

\begin{proof}[Proof of \cref{thm:it-jumps}]
    This is proved by following the proof of \cref{thm:third-law-Vlasov} and varying the final parameters of $\mathcal D_\mathrm{ext}$ as in the proof of \cref{thm:main}.
\end{proof}

This theorem shows that it is not always possible to have equality in \eqref{eq:cont-2} when $\mathcal D_\infty$ is extremal: the event horizon can and does jump as a function of the initial data.

\appendix
\section[The characteristic initial value problem for spherically symmetric nonlinear wave-transport systems]{The characteristic initial value problem for spherically \\ symmetric nonlinear wave-transport systems}\label{app:B}

In this appendix, we prove local well-posedness for the spherically symmetric Einstein--Maxwell--Vlasov system in small characteristic rectangles away from the center. In fact, we consider the general system of equations
\begin{align}
\label{eq:B-1}\partial_u\partial_v\Psi&=F(\Psi,\partial\Psi,Q,M[f],M^{uv}[f]),\\
\label{eq:B-2}\partial_u Q &= K(\Psi,M^v[f]),\\
\label{eq:B-3}X(p^u,p^v,\Psi,\partial\Psi,Q)f&=0,
\end{align}
where $\Psi:\Bbb R^2_{u,v}\to\Bbb R^N $ is a vector-valued function taking the role of the ``wave-type'' variables $r$ and $\Omega^2$, $Q:\Bbb R^2_{u,v}\to\Bbb R$ is the charge, $f:T\Bbb R^2_{u,v}\to \Bbb R_{\ge 0}$ is the distribution function,
\begin{equation*}
     M[f]\doteq \int_0^\infty\int_0^\infty f\,dp^udp^v,\quad
    M^v[f]\doteq \int_0^\infty\int_0^\infty p^vf\,dp^udp^v,\quad
    M^{uv}[f]\doteq \int_0^\infty\int_0^\infty p^up^vf\,dp^udp^v
\end{equation*}
are moments of $f$, $F=(F_1,\dotsc, F_N)$ and $K$ are smooth functions of their variables, and $X$ is a vector field on $\Bbb R^2_{u,v}$ of the form
\begin{equation*}
X(p^u,p^v,\Psi,\partial\Psi,Q)=p^u\partial_u+p^v\partial_v+\xi^u(p^u,p^v,\Psi,\partial\Psi,Q)\partial_{p^u}+\xi^v(p^u,p^v,\Psi,\partial\Psi,Q)\partial_{p^v},
\end{equation*}
where $\xi^u$ and $\xi^v$ are smooth functions of their variables. Letting $a\in\{u,v\}$, we can write $X$ using Einstein notation as
\begin{equation*}
    X=p^a\partial_a+\xi^a\partial_{p^a}.
\end{equation*}

We assume that there exist functions $G_{k}:\Bbb R_{\ge 0}\to \Bbb R_{\ge 0}$ for $k\ge 0$ such that 
\begin{equation}
     \label{eq:B-assumption-on-xi}
|D^{i_1}_{\Psi,\partial\Psi,Q}\partial_p^{i_2}\xi^u(p^u,p^v,\Psi,\partial\Psi,Q)|+|D^{i_1}_{\Psi,\partial\Psi,Q}\partial_p^{i_2}\xi^v(p^u,p^v,\Psi,\partial\Psi,Q)|\le G_{k}(M)\langle p^\tau\rangle^{2-i_2}
\end{equation} if $|\Psi|+|\partial\Psi|+|Q|\le M$ and $i_1+i_2=k$,
where $\langle s\rangle \doteq \sqrt{1+s^2}$ and $p^\tau\doteq \frac 12(p^u+p^v)$. Here $D^{i_1}_{\Psi,\partial\Psi,Q}\partial_p^{i_2}$ denotes any expression involving $i_1$ derivatives in the $(\Psi,\partial\Psi,Q)$-variables and $i_2$ derivatives in the $(p^u,p^v)$-variables. We also assume that there exists a constant $\mathfrak m\ge 0$ such that 
\begin{equation}
    \xi^up^v+\xi^vp^u=0\label{eq:tangency}
\end{equation}
whenever $p^up^v=\mathfrak m^2$. These structural assumptions are verified for a renormalized version of the spherically symmetric Einstein--Maxwell--Vlasov system.

Given $U_0<U_1$ and $V_0<V_1$, let
\begin{align*}
  \mathcal C(U_0,U_1,V_0,V_1)  &\doteq (\{U_0\}\times[V_0,V_1])\cup([U_0,U_1]\times\{V_0\}), \\
   \mathcal R(U_0,U_1,V_0,V_1) &\doteq [U_0,U_1]\times [V_0,V_1].
\end{align*} We will consistently omit $(U_0,U_1,V_0,V_1)$ from the notation for these sets.
We also define
\begin{align*}
    P^\mathfrak m&\doteq \{(u,v,p^u,p^v)\in T\Bbb R^2:p^u\ge0,p^v\ge 0, p^\tau>0, p^up^v\ge \mathfrak m^2\},\\
     H^\kappa&\doteq \{(u,v,p^u,p^v)\in T\Bbb R^2:p^u\ge0,p^v\ge 0,p^\tau> \kappa\}
\end{align*}
 and set $P\doteq P^0$.
A function $\phi:\mathcal C\to \Bbb R$ is said to be \emph{smooth} if it is continuous and $\phi|_{\{U_0\}\times [V_0,V_1]}$ and $\phi|_{[U_0,U_1]\times\{V_0\}}$ are $C^\infty$ single-variable functions. This definition extends naturally to functions $f:P^\mathfrak m|_\mathcal C\to \Bbb R$. A smooth \emph{characteristic initial data set} for the system \eqref{eq:B-1}--\eqref{eq:B-3} consists of a triple $(\mathring\Psi,\mathring Q,\mathring f)$ and numbers $\kappa>0$, $\sigma>4$, where $\mathring \Psi:\mathcal C\to\Bbb R^N$, $\mathring Q:\{U_0\}\times[V_0,V_1]\to\Bbb R$, and $\mathring f:P^\mathfrak m|_\mathcal C\to \Bbb R_{\ge 0}$ are smooth. We additionally assume that $\spt(\mathring f)\subset H^\kappa|_\mathcal C$ for some $\kappa>0$ (which is only an extra assumption when $\mathfrak m=0$) and that
\begin{equation}\label{eq:B-f-data-bounded}
  \|\mathring f\|_{C^k_\sigma(P|_\mathcal C)}\doteq \sum_{0\le i_1+i_2\le k}  \left(\sup_{P^\mathfrak m|_{\{U_0\}\times[V_0,V_1]}}\langle p^\tau\rangle^{\sigma+i_2}|\partial_v^{i_1}\partial_{p}^{i_2}\mathring f|+\sup_{P^\mathfrak m|_{[U_0,U_1]\times\{V_0\}}}\langle p^\tau\rangle^{\sigma+i_2} |\partial_u^{i_1}\partial_{p}^{i_2}\mathring f|\right)
\end{equation}
is finite for every $k\ge 0$. For $f:P^\mathfrak m|_\mathcal R\to \Bbb R_{\ge 0}$ and $k\ge 0$, we define the norms
\begin{equation*}
    \|f\|_{C^k_\sigma(P|_\mathcal R)}\doteq \sum_{0\le i_1+i_2\le k}\sup_{P^\mathfrak m|_\mathcal R}\langle p^\tau\rangle^{\sigma+i_2} |\partial^{i_1}_x\partial^{i_2}_p f|,
\end{equation*}
where $\partial_x^{i_1}$ denotes $i_1$ derivatives in the $(u,v)$-variables. 

\begin{prop}\label{prop:B-main}
    For any $B>0$, $\kappa>0$, and $\sigma>4$ there exists a constant $\ve>0$ (depending also on $F$, $K$, and $X$) with the following property. Let $(\mathring \Psi,\mathring Q,\mathring f)$ be a smooth characteristic initial data set for the system \eqref{eq:B-1}--\eqref{eq:B-3} on $\mathcal C(U_0,U_1,V_0,V_1)$. If $U_1-U_0<\ve$, $V_1-V_0<\ve$, and 
    \begin{equation}\label{B:data-1}
        \|\mathring \Psi\|_{C^2(\mathcal C)}+\|\mathring Q\|_{C^1(\mathcal C)}+\|\mathring f\|_{C^1_\sigma(P|_\mathcal C)}\le B,
    \end{equation}
    then there exists a unique smooth solution $(\Psi,Q,f)$ of \eqref{eq:B-1}--\eqref{eq:B-3} on $\mathcal R(U_0,U_1,V_0,V_1)$ which extends the initial data. Moreover, the distribution function $f$ is supported in $H^{\kappa/2}$ and for any $k\ge 0$, the norm 
    \begin{equation*}
        \|\Psi\|_{C^k(\mathcal R)}+\|Q\|_{C^k(\mathcal R)}+\|f\|_{C^k_\sigma(P|_\mathcal R)}
    \end{equation*} is finite and 
    can be bounded in terms of initial data norms. 
\end{prop}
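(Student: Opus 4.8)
The plan is to solve \eqref{eq:B-1}--\eqref{eq:B-3} by a Picard-type iteration that decouples the three equations at each step. Starting from $(\Psi^{(0)},Q^{(0)},f^{(0)})$ equal to a fixed smooth extension of the data to $\mathcal R$, I would define $(\Psi^{(n+1)},Q^{(n+1)},f^{(n+1)})$ by: (i) solving the semilinear wave equation $\partial_u\partial_v\Psi^{(n+1)}=F(\Psi^{(n)},\partial\Psi^{(n)},Q^{(n)},M[f^{(n)}],M^{uv}[f^{(n)}])$ with the prescribed characteristic data, i.e.\ by double integration from $\mathcal C$; (ii) solving $\partial_uQ^{(n+1)}=K(\Psi^{(n)},M^v[f^{(n)}])$ with data on $\{U_0\}\times[V_0,V_1]$ by a single integration in $u$; and (iii) solving $X(p^u,p^v,\Psi^{(n)},\partial\Psi^{(n)},Q^{(n)})f^{(n+1)}=0$ by the method of characteristics, declaring $f^{(n+1)}$ to be $\mathring f$ pushed forward along the flow of the spray $X^{(n)}$ built from the $n$-th iterate. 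The point of this decoupling is that within one iteration step there is no circular dependence: $\partial_uF^{(n)}$ and $\partial_vF^{(n)}$ involve only the already-bounded quantities $\partial^2\Psi^{(n)}$, $\partial_{u,v}f^{(n)}$, $\partial_{u,v}Q^{(n)}$, and the moments $M[f^{(n)}],M^{uv}[f^{(n)}]$ together with their first derivatives are finite precisely because $\sigma>4$ makes $\int\langle p^\tau\rangle^{2-\sigma}\,dp^udp^v$ converge (the $p^up^v$ weight in $M^{uv}$ being the worst case).

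The first main step is an a priori bound. Fixing the closed set $\mathfrak X_B\doteq\{(\Psi,Q,f):\|\Psi\|_{C^2(\mathcal R)}+\|Q\|_{C^1(\mathcal R)}+\|f\|_{C^1_\sigma(P|_\mathcal R)}\le C_0B,\ \spt f\subset H^{\kappa/2}|_\mathcal R\cap P^\mathfrak m\}$ for a fixed absolute constant $C_0$, I would show that for $\varepsilon$ small the iteration map carries $\mathfrak X_B$ into itself. The wave and charge bounds follow from the integral formulas, since each source term is bounded by a constant depending only on $B$ and is integrated over a region of size $\lesssim\varepsilon$. For the transport part, the structural hypothesis \eqref{eq:B-assumption-on-xi} and a Gr\"onwall argument along the flow of $X^{(n)}$ show that $\langle p^\tau\rangle$ changes by at most a factor $e^{C(B)\varepsilon}\le 2$ across the rectangle, so the support stays inside $H^{\kappa/2}$, while the tangency condition \eqref{eq:tangency} confines it to the mass shell $\{p^up^v\ge\mathfrak m^2\}$ (including the cases $p^u\to 0$ or $p^v\to 0$ when $\mathfrak m=0$, since $\{p^u=0\}$ and $\{p^v=0\}$ are then invariant); transporting the weighted derivative bounds for $\mathring f$ along this flow, again via \eqref{eq:B-assumption-on-xi}, yields $\|f^{(n+1)}\|_{C^1_\sigma(P|_\mathcal R)}\lesssim\|\mathring f\|_{C^1_\sigma(P|_\mathcal C)}$.

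The second main step is contraction in the weaker norm $\|\Psi\|_{C^1(\mathcal R)}+\|Q\|_{C^0(\mathcal R)}+\|f\|_{C^0_\sigma(P|_\mathcal R)}$. The difference of two wave iterates solves a wave equation with source $F^{(n)}-F^{(n-1)}$, controlled Lipschitz-wise by the weak norm of the difference of the previous iterates (the $C^0$ bound on $\mathfrak X_B$ turning $F$-Lipschitz constants into absolute constants); integrating once in $u$ and once in $v$ produces a factor $\varepsilon$. For the Vlasov part, $X^{(n)}-X^{(n-1)}=O\big((\|\Psi^{(n)}-\Psi^{(n-1)}\|_{C^1}+\|Q^{(n)}-Q^{(n-1)}\|_{C^0})\langle p^\tau\rangle^2\big)$, so the flows differ by $\lesssim\varepsilon$ times that by Gr\"onwall, and since $f^{(n+1)}_i$ is $\mathring f$ composed with the inverse flow, their difference is controlled by $\|\mathring f\|_{C^1_\sigma}$ times the flow difference. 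Hence for $\varepsilon$ small the sequence is Cauchy in the weak norm and converges to some $(\Psi,Q,f)$; the uniform strong bound together with Arzel\`a--Ascoli and lower semicontinuity of the weighted sup-norms places the limit in $\mathfrak X_B$, and passing to the limit in the integral formulations of \eqref{eq:B-1}--\eqref{eq:B-3} shows it is a genuine solution. Uniqueness follows from the same weak-norm estimate applied to the difference of two solutions, and propagation of compact support in $p$ is immediate from the flow estimate.

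Finally, higher regularity is obtained by a bootstrap: commuting \eqref{eq:B-1}--\eqref{eq:B-3} with $\partial_u,\partial_v,\partial_{p^u},\partial_{p^v}$ and inducting on the order, one controls $(\Psi,Q,f)$ in $C^{k+1}\times C^{k+1}\times C^k_\sigma$ given control in $C^k\times C^k\times C^{k-1}_\sigma$, the moment integrals again converging because $\sigma>4$. I expect the genuinely delicate point here to be the term obtained by differentiating the spray coefficient $\xi^a$ (which depends on $\partial\Psi$) along $X$: this produces a highest-order factor $\partial^2\Psi$ multiplying a momentum derivative of $f$, exactly the derivative-loss phenomenon discussed in \cref{sec:commuted}. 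The main technical obstacle is to see that, because this term appears under an $s$-integral of length $\lesssim\varepsilon$ along each characteristic, it is absorbed by the smallness of the rectangle rather than requiring the horizontal-lift reformulation of \cref{sec:commuted}; this is the one place where the smallness of $\varepsilon$ is essential beyond the contraction argument. This completes the proof of \cref{prop:B-main}, and specializing to the renormalized spherically symmetric Einstein--Maxwell--Vlasov system gives \cref{prop:char-IVP-Vlasov}.
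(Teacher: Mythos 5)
Your overall scheme is the same as the paper's: a decoupled Picard iteration in which the wave part is solved by double integration from $\mathcal C$, the charge by a $u$-integration, and the Vlasov part by the method of characteristics, with the structural hypotheses \eqref{eq:B-assumption-on-xi} and \eqref{eq:tangency} giving the factor-two momentum estimate and support propagation, and with contraction for small $\ve$ (the paper contracts directly in $C^2\times C^1\times C^1_\sigma$, cf.\ the estimate culminating in the proof of \cref{prop:B-main}, rather than in a weak norm plus Arzel\`a--Ascoli, but that difference is cosmetic). Two steps, however, do not go through as written. First, your higher-regularity bootstrap on the limiting solution: at order $k$ the commuted Vlasov equation produces the term $\partial^{k+1}\Psi\,\partial_p f$, while the pure top derivatives $\partial_u^{k+1}\Psi$, $\partial_v^{k+1}\Psi$ are only recovered by integrating $\partial^k F$, which contains $\partial^k$ of the moments of $f$; so at every order the new top-order quantities form a genuinely coupled system, and closing it ``by smallness of the rectangle'' fails for large $k$, because $\ve$ was fixed once and for all in terms of $B$, $\kappa$, $\sigma$, whereas the constant multiplying $\ve$ at order $k$ involves $G_k$ and the already-established lower-order solution norms and grows with $k$. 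The correct closing device is a Gr\"onwall argument in the time function $\tau$ (every dangerous top-order term sits under an integral emanating from the data), which is precisely what the paper implements through the $\tilde C_k e^{C_k\tau}$ weights in \cref{lem:B-iteration}: all-order bounds are propagated uniformly along the staggered iteration, so the limit is automatically smooth and $\ve$ never re-enters. Relatedly, your ladder $C^{k+1}\times C^{k+1}\times C^k_\sigma$ over-claims on $Q$: $\partial_v^{k+1}Q$ requires $\partial^{k+1}$ of $M^v[f]$, so the consistent hierarchy is $\Psi\in C^{k+1}$, $Q\in C^k$, $f\in C^k_\sigma$.

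Second, declaring $f^{(n+1)}$ to be ``$\mathring f$ pushed forward along the flow'' does not by itself yield a smooth iterate: the initial hypersurface $P^\mathfrak m|_{\mathcal C}$ has a corner over $(U_0,V_0)$, and smoothness of the transported function across the flow-out of the corner fiber is not automatic for a transport equation (generically only the value, not the transversal derivatives, matches there). The paper spends a nontrivial part of Step 3 of \cref{lem:B-iteration} on exactly this point: it extends the data to $\mathcal E\mathring f$, poses it on the smooth hypersurfaces $\{uv=2^{-j}\}$, invokes the flowout theorem to get smooth approximate iterates, proves $j$-uniform weighted $C^k$ bounds, and passes to the limit. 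Without some such device (or a separate argument at the corner), your iterates are not known to lie in the smooth class in which you run the scheme, so this is a concrete missing step rather than a harmless simplification.
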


\begin{rk}
    While we assume that the initial data $(\mathring\Psi,\mathring Q,\mathring f)$ are smooth (and that $\mathring f$ satisfies the nontrivial bound \eqref{eq:B-f-data-bounded} at any order), the existence time $\ve$ in the proposition depends only on the estimate \eqref{B:data-1}.
\end{rk}

\begin{rk}
    If we assume that $\mathring f$ has compact support in the momentum variables, then \eqref{eq:B-f-data-bounded} is automatic by smoothness.
\end{rk}

\subsection{Proof of \texorpdfstring{\cref{prop:B-main}}{Proposition B.1}}

In this section, we assume the hypotheses and setup of \cref{prop:B-main}. We also set $U_0=V_0=0$ and define $\tau\doteq \frac 12(v+u)$. Therefore, $0\le \tau\le \ve$ on $\mathcal R$.

We will construct the solution $(\Psi,Q,f)$ as the limit of an iteration scheme.

\begin{lem}\label{lem:B-iteration} There exist sequences of constants $\{\tilde C_k\}$ and $\{C_k\}$ such that the following holds. For any $\ve$ sufficiently small and every $n\ge 1$, there exist functions $(\Psi_n,Q_n,f_n)\in C^\infty(\mathcal R)\times C^\infty(\mathcal R)\times C^\infty(P^\mathfrak m|_\mathcal R)$ solving the iterative system 
\begin{align}
  \label{iterate-1}  \partial_u\partial_v\Psi_n&= F(\Psi_{n-1},\partial\Psi_{n-1},Q_{n-1},M[f_{n-1}],M^{uv}[f_{n-1}]),\\
 \label{iterate-2}   \partial_uQ_n&= K(\Psi_{n-1},M^v[f_{n-1}]),\\
 \label{iterate-3}   X(p^u,p^v,\Psi_{n-1},\partial\Psi_{n-1},Q_{n-1})f_n&=0,
\end{align}
where we set $(\Psi_0,Q_0,f_0)$ to be identically zero, with initial conditions
\begin{equation}\label{B-initial-conditions-1}
    \Psi_n|_\mathcal C=\mathring \Psi,\quad Q_n|_\mathcal C= \mathring Q,\quad f_n|_{P|_\mathcal C}=\mathring f.
\end{equation}
Moreover, $\spt(f_n)\subset H^{\kappa/2}$ and these functions satisfy the bounds
\begin{align}
\label{eq:B-Psi-main-est}    \|\Psi\|_{C^k(\mathcal R)} &\le\tilde C_ke^{C_k\tau}, \\
 \label{eq:B-Q-main-est}    \|Q_n\|_{C^{k}(\mathcal R)} &\le\tilde C_{k+1}e^{C_{k+1}\tau}, \\
  \label{eq:B-f-main-est}   \|f_n\|_{C^{k}_\sigma(P|_\mathcal R)} &\le \tilde C_{k+1}e^{C_{k+1}\tau}.
\end{align}
\end{lem}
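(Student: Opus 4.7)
The plan is to construct the iterates by induction on $n$ and then prove the bounds \eqref{eq:B-Psi-main-est}--\eqref{eq:B-f-main-est} by a simultaneous induction on $k$ and $n$. Given $(\Psi_{n-1}, Q_{n-1}, f_{n-1})$, I define $\Psi_n$ via the double integral representation of \eqref{iterate-1}, namely
\begin{equation*}
    \Psi_n(u,v)=\mathring\Psi(u,0)+\mathring\Psi(0,v)-\mathring\Psi(0,0)+\iint_{[0,u]\times[0,v]} F(\Psi_{n-1},\partial\Psi_{n-1},Q_{n-1},M[f_{n-1}],M^{uv}[f_{n-1}])\,du'dv',
\end{equation*}
and $Q_n$ by integrating \eqref{iterate-2} in $u$ from the seed $\mathring Q$ on $\{u=0\}$. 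For $f_n$, I solve the ODE flow of the smooth vector field $X(p^u,p^v,\Psi_{n-1},\partial\Psi_{n-1},Q_{n-1})$ on $P^\mathfrak m$---note that the tangency assumption \eqref{eq:tangency} makes $X$ tangent to $P^\mathfrak m$---and define $f_n$ by transporting $\mathring f$ along these characteristics. The assumption \eqref{eq:B-assumption-on-xi} with $k=0$ guarantees that the flow is complete on $\mathcal R$ for $\ve$ small relative to the previously established bounds.

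Before higher derivative estimates, I need the support property. Along any integral curve of $X$, parametrized by coordinate time $\tau$, one computes from \eqref{eq:B-assumption-on-xi} that $|d p^\tau / d\tau| \lesssim G_0(\|\Psi_{n-1}\|_{C^1}+\|Q_{n-1}\|_{C^0})\langle p^\tau\rangle$, so choosing $\ve$ small depending on the preceding iterate prevents $p^\tau$ from dropping from the initial value $\ge \kappa$ to below $\kappa/2$, establishing $\spt(f_n)\subset H^{\kappa/2}$. For the $C^0_\sigma$ bound, conservation of $f_n$ along characteristics together with the same flow estimate produces a uniformly bounded ratio $\langle p^\tau(\text{initial})\rangle^\sigma/\langle p^\tau(\text{current})\rangle^\sigma$, which, combined with \eqref{eq:B-f-data-bounded}, yields $\|f_n\|_{C^0_\sigma(P|_\mathcal R)}\le \tilde C_1 e^{C_1 \tau}$. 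The low-order wave and charge bounds $\|\Psi_n\|_{C^2(\mathcal R)}\le \tilde C_2 e^{C_2\tau}$ and $\|Q_n\|_{C^1}\le \tilde C_2 e^{C_2\tau}$ then follow by inserting these into the representation formulas: $\sigma>4$ ensures that $M[f_{n-1}]$, $M^v[f_{n-1}]$, $M^{uv}[f_{n-1}]$ and their first spatial derivatives are controlled pointwise by appropriate norms of $f_{n-1}$.

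For $k\ge 1$, the central difficulty---and the main obstacle---is that naively commuting \eqref{iterate-3} with $\partial_u$ or $\partial_v$ produces terms like $\partial_u \xi^u \,\partial_{p^u} f_n$, which through the dependence of $\xi^u$ on $\partial\Psi$ depend on $\partial^2 \Psi_{n-1}$. The resolution is to pass to the \emph{horizontal lifts} $\hat\partial_u f_n \doteq \partial_u f_n - p^u (\partial_u \log \Omega^2) \partial_{p^u} f_n$ and $\hat\partial_v f_n$ as in Section 3.4.1 (adapted in the obvious way to the general vector field $X$, using only that $\xi^a$ depend smoothly on $\partial \Psi$), together with the momentum derivatives $\partial_{p^u} f_n, \partial_{p^v} f_n$. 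The resulting closed system of four transport equations along the characteristics of $X$ has right-hand sides that depend on $\partial^2 \Psi_{n-1}$ only through the expression $\partial_u\partial_v \Psi_{n-1} = F(\Psi_{n-2},\partial\Psi_{n-2},Q_{n-2},\ldots)$ from the wave equation at the previous step; crucially, no second derivatives of $\Psi_{n-1}$ in a single direction appear. Hence Gr\"onwall's inequality along characteristics, combined with the inductive bounds on $\Psi_{n-1}, Q_{n-1}$ at order $k+1$ and on $f_{n-2}$ at order $k-1$, closes the estimate for $\|f_n\|_{C^k_\sigma}\le \tilde C_{k+1}e^{C_{k+1}\tau}$.

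The induction is then completed by using these $f_n$ bounds in the integral formulas for $Q_n$ and $\Psi_n$ at the next level: differentiating $\partial_u Q_n = K(\Psi_{n-1}, M^v[f_{n-1}])$ up to $k$ times in spatial directions and $\partial_u\partial_v \Psi_n = F(\ldots)$ up to $k-1$ times, then integrating, gives \eqref{eq:B-Q-main-est} and \eqref{eq:B-Psi-main-est} with the stated index shifts. At each stage, the constants $\tilde C_k$ are chosen first, large enough to accommodate the $k$-th order data bound (finite for every $k$ by smoothness of the seed) and the $\tilde C_{k-1}$-dependent constants from the Gr\"onwall iteration; the constants $C_k$ are chosen to absorb linear-in-$f_n$ error terms; then $\ve$ is shrunk (depending only on the initial-data bound $B$ and on $\kappa, \sigma$) so that every area and line integral contributes a small prefactor and closes the bootstrap uniformly in $n$.
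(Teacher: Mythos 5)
Your overall architecture (integral formulas for $\Psi_n$ and $Q_n$, transport along the characteristics of $X_{n-1}$ for $f_n$, Gr\"onwall along characteristics, induction in $k$ with the index shift, shrinking $\ve$ at the end) is the paper's scheme, but there are two places where your write-up diverges, one of which is a genuine gap. The gap is the construction and smoothness of $f_n$. You define $f_n$ by transporting $\mathring f$ directly off the bifurcate cone $\mathcal C$, but $P^\mathfrak m|_\mathcal C$ has a corner at $(U_0,V_0)$, and in the massless case characteristics with $p^u=0$ (or $p^v=0$) are tangent to the initial null segments, so the backward exit map onto $\mathcal C$ is not obviously smooth and the flowout theorem does not apply as stated. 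The paper's proof spends most of Step~3 on exactly this point: it extends the data by $\mathcal E\mathring f$, solves the transport equation with data on the smooth spacelike hyperbolas $\mathcal S_j=\{uv=2^{-j}\}$ (to which $X_{n-1}$ is genuinely transverse, so the flowout theorem gives $f_{n,j}\in C^\infty$), proves the $C^k_\sigma$ bounds uniformly in $j$, and obtains $f_n$ as the limit. Without some such device your claim $f_n\in C^\infty(P^\mathfrak m|_\mathcal R)$ is unproved, and since all the higher-order estimates are run on $f_n$ itself, this is not a removable footnote.

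The second divergence is your treatment of the commuted transport equation. The loss-of-derivative you flag ($\partial_x^{k}\xi^a_{n-1}\,\partial_p f_n$ bringing in $\partial_x^{k+1}\Psi_{n-1}$) is not an obstacle in the iteration: $\Psi_{n-1}$ is a known function bounded at \emph{every} order by the inductive hypothesis \eqref{eq:B-Psi-main-est} with $n$-independent constants, and the index shift in \eqref{eq:B-f-main-est} is there precisely so that the order-$k$ bound on $f_n$ may consume the order-$(k+1)$ bound on $\Psi_{n-1}$. The paper therefore commutes naively with $\partial_x^{i_1}\partial_p^{i_2}$, estimates $\partial_x^{j_1}\partial_p^{j_2}\xi^a_{n-1}$ via \eqref{eq:B-assumption-on-xi} and the inductive bounds, and absorbs the single top-order term ($j_1=k$) against the already-uniform first-order quantity $\mathcal F_{n,j,1}$; horizontal lifts appear only in Section~3.4.1, where they are needed for the a priori (extension-principle) estimate in which no such inductive control is available. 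Moreover, the cancellation achieved by $\hat\partial_u f=\partial_u f-p^u\partial_u{\log\Omega^2}\,\partial_{p^u}f$ exploits the specific structure of the spherically symmetric spray; for a general $\xi^a(p,\Psi,\partial\Psi,Q)$ satisfying only \eqref{eq:B-assumption-on-xi}--\eqref{eq:tangency} there is no ``obvious'' analogue, so your claim that the lifted system contains no single-direction second derivatives of $\Psi_{n-1}$ is unjustified at this level of generality. Since you in any case invoke the order-$(k+1)$ bound on $\Psi_{n-1}$, your estimate would close without the lift; but as written the lift step is both unnecessary and unsubstantiated, and the appeal to $f_{n-2}$ via substituting the wave equation is an avoidable complication.
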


It is convenient to set \begin{align*}
    F_{n}&\doteq F(\Psi_{n},\partial\Psi_{n},Q_{n},M[f_{n}],M^{uv}[f_{n}]),\\
    K_{n}&\doteq K(\Psi_{n},M^v[f_{n}]),\\
    X_{n}&\doteq X(p^u,p^v,\Psi_{n},\partial\Psi_{n},Q_{n})
\end{align*} for $n\ge 0$.
 We first require a preliminary lemma about integral curves of the vector field $X_{n}$.

\begin{lem}\label{lem:B-geodesics}
For $n\ge 0$, let $\Gamma_{n}^{\mathfrak m,\kappa}$ denote the set of maximal integral curves $\tilde\gamma=(\gamma,p):I\to T\mathcal R$ (where $I$ is a closed interval containing $0$) of the vector field $X_{n}$ subject to the condition that $\tilde \gamma(0)\in H^\kappa\cap P^\mathfrak m$. Assume that $\Psi_n$ satisfies \eqref{eq:B-Psi-main-est} for $k=1$ and $Q_n$ satisfies \eqref{eq:B-Q-main-est} for $k=0$. Then for $\ve$ sufficiently small (depending in particular on $\kappa$) and any $\tilde\gamma\in \Gamma_{n}$,  $\gamma$ is a future-directed causal curve in $\mathcal R$ connecting $\mathcal C$ with the future boundary of $\mathcal R$, $\tilde\gamma(s)\in P^\mathfrak m\cap H^{\kappa/2}$ for every $s\in I$, and 
\begin{equation}
    \frac 12 p^\tau(0)\le p^\tau(s)\le 2p^\tau(0)\label{eq:B-basic-p-estimate}
\end{equation}
for every $s\in I$.
\end{lem}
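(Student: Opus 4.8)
The plan is to prove \cref{lem:B-geodesics} by a straightforward continuity/bootstrap argument for the integral curves of $X_n$, exploiting the structural bound \eqref{eq:B-assumption-on-xi} on $\xi^u,\xi^v$ together with the smallness of $\ve$. Fix $\tilde\gamma=(\gamma,p)\in\Gamma_n^{\mathfrak m,\kappa}$ with $\tilde\gamma(0)=(u_0,v_0,p_0^u,p_0^v)\in H^\kappa\cap P^\mathfrak m$, and parametrize so that $\tilde\gamma(0)$ lies on $\mathcal C$ (by the assumption that $\Psi_n$ and $Q_n$ satisfy \eqref{eq:B-Psi-main-est}, \eqref{eq:B-Q-main-est}, the coefficients of $X_n$ are bounded on $\mathcal R$ in terms of $\tilde C_1 e^{C_1\ve}$, hence in terms of a constant depending only on $B,\kappa,\sigma$, once $\ve$ is small). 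First I would set up the bootstrap assumption $\tfrac12 p^\tau(0)\le p^\tau(s)\le 2p^\tau(0)$ on a maximal subinterval $I'\subset I$ containing $0$. On $I'$ the momentum is comparable to $p^\tau(0)\ge\kappa$, so in particular $p^\tau$ is bounded below by $\kappa/2$; combined with $|\Psi_n|+|\partial\Psi_n|+|Q_n|\les 1$ this lets me invoke \eqref{eq:B-assumption-on-xi} with $k=0$ to get $|\xi^u|+|\xi^v|\les G_0(C)\langle p^\tau\rangle^2\les (p^\tau(0))^2$ along $\tilde\gamma|_{I'}$.

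Next I would estimate the variation of $p^\tau$. Since $\tfrac{d}{ds}p^\tau=\tfrac12(\xi^u+\xi^v)$, we have $|\tfrac{d}{ds}p^\tau|\les (p^\tau(0))^2$, but we must convert the affine parameter $s$ into coordinate time $\tau$ to exploit smallness. From $\tfrac{d\tau}{ds}=p^\tau\ge \tfrac12 p^\tau(0)>0$ (valid on the bootstrap interval and using $p^\tau(0)\ge\kappa>0$), $\tau$ is a legitimate reparametrization, and $\tfrac{d}{d\tau}p^\tau=\tfrac{\xi^u+\xi^v}{2p^\tau}$, whose absolute value is $\les p^\tau(0)\le 2p^\tau$. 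Gr\"onwall's inequality in the $\tau$-variable then gives $p^\tau(\tau)=p^\tau(0)\,e^{O(\tau)}=p^\tau(0)\,e^{O(\ve)}$ along $\tilde\gamma$, which for $\ve$ sufficiently small strictly improves the bootstrap bounds $\tfrac12 p^\tau(0)\le p^\tau\le 2p^\tau(0)$; a standard continuity argument then extends the estimate to all of $I$, and in particular shows $p^\tau\ge\kappa/2$ throughout, i.e.\ $\tilde\gamma(I)\subset H^{\kappa/2}$. This also establishes \eqref{eq:B-basic-p-estimate}.

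Then I would record that $\tilde\gamma(I)\subset P^\mathfrak m$: the vector field $X_n$ is tangent to $P^\mathfrak m$ because of the tangency condition \eqref{eq:tangency}, which forces $\tfrac{d}{ds}(p^up^v)=\xi^u p^v+\xi^v p^u=0$ on the set $\{p^up^v=\mathfrak m^2\}$, so the relation $p^up^v\ge\mathfrak m^2$ is preserved forward and backward; combined with $p^u,p^v\ge0$ (inherited from $p^\tau>0$ and the causal-cone structure, which I would check is preserved since the flow cannot cross $\{p^u=0\}$ or $\{p^v=0\}$ once $p^\tau$ stays positive — on $\{p^u=0\}$ one has $p^\tau=p^v/2$ and the $\xi$ bounds keep $p^v$ from vanishing) this gives $\tilde\gamma(s)\in P^\mathfrak m$. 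For the causal-curve claim: $\tfrac{d\gamma^u}{ds}=p^u\ge0$ and $\tfrac{d\gamma^v}{ds}=p^v\ge0$ with $p^\tau>0$, so $\gamma$ is future-directed causal in $\mathcal R$; since $\tfrac{d\tau}{ds}=p^\tau\ge\kappa/4>0$ the curve traverses the full coordinate-time slab $[0,\ve]$ in finite affine time, hence connects $\mathcal C$ to the future boundary of $\mathcal R$. The main obstacle is essentially bookkeeping: making sure the reparametrization from $s$ to $\tau$ is uniformly nondegenerate (this is exactly where $\kappa>0$, i.e.\ the uniform lower bound on initial momentum, is indispensable — without it the massless flow could stall) and that the constant in the $\xi$-bound depends only on $B$, $\kappa$, $\sigma$ and not on $n$, so that a single $\ve$ works for the whole iteration. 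Everything else is a routine Gr\"onwall plus open/closed continuity argument.
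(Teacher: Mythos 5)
Your proposal is correct and follows essentially the same route as the paper: tangency of $X_n$ to the boundary $\{p^up^v=\mathfrak m^2\}$ via \eqref{eq:tangency} keeps the curve in $P^\mathfrak m$, and reparametrizing by $\tau$ together with the structural bound \eqref{eq:B-assumption-on-xi} and Gr\"onwall over the $\tau$-interval of length $O(\ve)$ yields \eqref{eq:B-basic-p-estimate}, with $\ve$ small depending on $\kappa$. The only difference is cosmetic: the paper Gr\"onwalls $(p^\tau)^2$ directly (using $\frac{d}{d\tau}(p^\tau)^2=2(\xi^u_n+\xi^v_n)$), which makes your extra bootstrap/continuity layer on $p^\tau$ unnecessary, though harmless.
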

\begin{proof} Let $\tilde\gamma=(\gamma,p)\in \Gamma_n$ and set $\tau_0\doteq \gamma^\tau(0)$. By definition, 
\begin{align*}
    \frac{d\gamma^a}{ds}=p^a,\quad
    \frac{dp^a}{ds} = \xi^a_n
\end{align*}
for $a\in\{u,v\}$, where $\xi^a_n\doteq \xi^a(p^u,p^v,\Psi_n,\partial\Psi_n,Q_n)$. Observe that $X_n$ is tangent to the the boundary of $P^\mathfrak m$, $\{(u,v,p^u,p^v)\in T\mathcal R:p^up^v=\mathfrak m^2\}$, by \eqref{eq:tangency}, so $\tilde\gamma$ remains within $P^\mathfrak m$. Reparametrizing $\tilde\gamma$ by $\tau$ gives 
\begin{equation*}
    \frac{d}{d\tau}(p^\tau)^2 = 2(\xi^u_n+\xi^v_n).
\end{equation*}
 Using \eqref{eq:B-assumption-on-xi}, the assumptions on $\Psi_n$ and $Q_n$, and Gr\"onwall's inequality, we have
    \begin{equation*}
        p^\tau(\tau)^2\le e^{O(\ve)}\left(p^\tau (\tau_0)^2+O(\ve)\right)\le 2p^\tau(\tau_0)^2
    \end{equation*}
    for $\tau$ in the domain of $\tilde\gamma$ and for $\ve$ sufficiently small. This proves the second inequality in \eqref{eq:B-basic-p-estimate}. To prove the first inequality, we observe that by the estimate we have just proved, 
    \begin{equation*}
        |(p^\tau(\tau))^2-(p^\tau(\tau_0))^2|\les \ve\sup_{\tilde\gamma}|\xi^u_n+\xi^v_n|\les \ve p^\tau(\tau_0)^2
    \end{equation*}
    Choosing $\ve$ perhaps even smaller proves \eqref{eq:B-basic-p-estimate} and completes the proof of the lemma.
\end{proof}

\begin{proof}[Proof of \cref{lem:B-iteration}] The proof is by induction on $n$ and induction on $k$ for each fixed $n$. As the existence and estimates for the base case $n=0$ are trivial, we assume the existence of $(\Psi_{n-1},Q_{n-1},f_{n-1})$ satisfying \eqref{eq:B-Psi-main-est}--\eqref{eq:B-f-main-est}, where the constants are still to be determined. We will choose the constants to satisfy $\tilde C_k\le C_k\le \tilde C_{k+1}$, which we use without comment in the sequel. By \eqref{eq:B-Psi-main-est}--\eqref{eq:B-f-main-est} for $(\Psi_{n-1},Q_{n-1},f_{n-1})$ and iterating the chain rule, it is easy to see that 
\begin{align}
\label{eq:B-coeff-1}|\partial^k\mathcal M_{n-1}|&\le C(\tilde C_{k+1})e^{C_{k+1}\tau}\quad (k\ge 0),\\
\label{eq:B-coeff-2}    |F_{n-1}|&\le C(C_1),\\
 \label{eq:B-coeff-3}   |\partial^kF_{n-1}|&\le C(\tilde C_{k+1})e^{C_{k+1}\tau}\quad (k\ge 1),\\
  \label{eq:B-coeff-4}  |K_{n-1}|&\le  C(C_1),\\
  \label{eq:B-coeff-5}  |\partial^k K_{n-1}|& \le C(\tilde C_{k+1})e^{C_{k+1}\tau} \quad (k\ge 1),
\end{align}
where $\mathcal M_{n-1}\in\{M[f_{n-1}],M^v[f_{n-1}],M^{uv}[f_{n-1}]\}$. We also define the number 
\begin{equation*}
        B'\doteq |F(\mathring \Psi,\partial\mathring\Psi,\mathring Q,M[\mathring f],M^{uv}[\mathring f])|_{(0,0)}+|K(\mathring \Psi,M^v[\mathring f])|_{(0,0)}.
\end{equation*}

\textbf{Step 1.} The function $\Psi_n$ is defined by the explicit formula 
    \begin{equation}
        \Psi_n(u,v)\doteq \int_{0}^u\int_{0}^vF_{n-1}(u',v')\,dv'du'+\mathring \Psi(u,0)+\mathring\Psi(0,v)-\mathring\Psi(0,0).\label{eq:Psi-n-defn}
    \end{equation}    
It follows by inspection of this representation formula and \eqref{eq:B-coeff-2} that 
\begin{equation}
    \|\Psi_n\|_{C^1(\mathcal R)}\le 10B\label{eq:B-Psi-low-est}
\end{equation}
if $\ve$ is sufficiently small depending on $C_1$. We estimate $k$-th order derivatives $(k\ge 2)$ of the form $\partial^k_u\Psi_n$, $\partial^{k-2}_x\partial_u\partial_v\Psi_n$, and $\partial_v^k\Psi_n$ separately. For the first type, we simply differentiate the representation formula \eqref{eq:Psi-n-defn} $k$ times to obtain
\begin{equation*}
    |\partial^k_u\Psi_n|\le (\mathrm{data})+\int_{0}^v |\partial_u^{k-1}F_{n-1}|\,dv'\le C+\frac{C(\tilde C_k)}{C_k}e^{C_k\tau}\le \tilde C_k e^{C_k\tau}
\end{equation*}
for appropriate choices of $\tilde C_k$ and $C_k$. For mixed derivatives, we differentiate the wave equation to obtain 
\begin{equation*}
    |\partial^{k-2}_x\partial_u\partial_v\Psi_{n-1}|\le \sup_{\{0\}\times[0,V_0]}|\partial^{k-2}_xF_{n-1}|+\int_{0}^{u}|\partial_u\partial^{k-2}F_{n-1}|\,du'\le C(C_{k-1})+\frac{C(\tilde C_k)}{C_k}e^{C_k\tau}\le \tilde C_ke^{C_k\tau} 
\end{equation*}
for appropriate choices of $\tilde C_k$ and $C_k$. The estimate for $\partial_v^k\Psi_n$ is similar to $\partial_u^k\Psi_n$. For later use, we derive a slightly improved estimate for $\partial_x^2\Psi_n$. Using the mean value theorem and \eqref{eq:B-coeff-3}, we have
\begin{equation*}
    |F_{n-1}-B'|\le C(C_2)\tau.
\end{equation*}
We also have 
\begin{equation*}
        |\partial_u^2\Psi_n-(\mathrm{data})|+ |\partial_v^2\Psi_n-(\mathrm{data})|\le C(C_2)\tau.
\end{equation*}
Therefore, for $\ve$ sufficiently small depending on $C_2$, combined with \eqref{eq:B-Psi-low-est}, we infer
\begin{equation}\label{eq:B-Psi-low-est-2}
    \|\Psi_n\|_{C^2(\mathcal R)}\le 20(B+B').
\end{equation}
This completes Step 1.

\textbf{Step 2.} The function $Q_n$ is defined by the explicit formula
    \begin{equation*}
        Q_n(u,v)=\int_{0}^uK_{n-1}(u',v)\,du'+\mathring Q(0,v).
    \end{equation*}
    It follows by inspection of this representation formula and \eqref{eq:B-coeff-4} that 
    \begin{equation*}
        \|Q\|_{C^0(\mathcal R)}\le 10B 
    \end{equation*}
for $\ve$ sufficiently small depending on $C_1$. For $k\ge 1$, we estimate 
\begin{equation*}
    |\partial_v^kQ_n|\le (\mathrm{data})+ \int_0^u |\partial_v^kK_{n-1}|\,du'\le C+\frac{C(\tilde C_{k+1})}{C_{k+1}}e^{C_{k+1}\tau}\le \tilde C_{k+1}e^{C_{k+1}\tau},
\end{equation*}
\begin{equation*}
    | \partial^{k-1}\partial_uQ_n|\le \sup_{\{0\}\times[0,V_0]}|\partial^{k-1}K_{n-1}|+\int_0^u|\partial^kK_{n-1}|\,du'\le C(C_{k})+\frac{C(\tilde C_{k+1})}{C_{k+1}}e^{C_{k+1}\tau}\le \tilde C_{k+1}e^{C_{k+1}\tau}
\end{equation*}
for appropriate chices of $\tilde C_{k+1}$ and $C_{k+1}$. Arguing as in Step 1, for $\ve$ sufficiently small depending on $C_2$, we also infer
\begin{equation}
    \|Q_n\|_{C^1(\mathcal R)}\le 20(B+B').\label{eq:B-Q-low-est-2}
\end{equation}
This completes Step 2. 

    \textbf{Step 3.} For $f_n$ we do not have an explicit representation formula and must instead infer its existence from general properties of flows of vector fields. A slight technical issue is that the ``initial data hypersurface'' $P|_\mathcal C$ is not smooth because of the corner in $\mathcal C$. Therefore, in order to prove the existence of a smooth $f_n$, we will construct it as a smooth limit of smooth solutions to the $X_{n-1}$ transport equation, corresponding to initial data where we smooth out the corner in $\mathcal C$. To carry out this idea, we first extend $\mathring f$ to $C^\infty_\sigma(P^\mathfrak m|_\mathcal R)$ according to 
\begin{equation*}
    \mathcal E\mathring f(u,v,p^u,p^v)\doteq \mathring f(u,0,p^u,p^v)+\mathring f(0,v,p^u,p^v)- \mathring f(0,0,p^u,p^v)
\end{equation*} 
and set, for $j\ge 1$,
\begin{align*}
    \mathcal S_j&\doteq\{(u,v)\in\mathcal R:uv=2^{-j}\},\\
   \mathcal R_j &\doteq \{(u,v)\in\mathcal R:uv\ge 2^{-j}.\}
\end{align*}
For any $j\ge j_0$ sufficiently large that $\mathcal S_j\ne\emptyset$, there exists a unique function $f_{n,j}\in C^\infty(P^\mathfrak m|_{\mathcal R_j})$ such that 
\begin{equation}
     X_{n-1}f_{n,j}=0\label{eq:B-approx-transport}
\end{equation} in $\mathcal R_j$, with initial data $f_{n,j}=\mathcal E\mathring f$ on $P^\mathfrak m|_{\mathcal S_j}$.
The existence follows immediately from the flowout theorem (see \cite[Theorem 9.20]{Lee-smooth}), the fact that $X_{n-1}$ is transverse to $P^\mathfrak m|_{\mathcal S_j}$, and \cref{lem:B-geodesics}. It also follows from \cref{lem:B-geodesics} that $\spt(f_{n,j})\subset H^{\kappa/2}$. We will use the fact that $p^\tau\ge \kappa/2$ for any $\tilde\gamma\in \Gamma_{n-1}^{\mathfrak m,\kappa}$ often and without further comment in the sequel.

We now claim that we can choose $\tilde C_k$ and $C_k$ such that 
\begin{equation}\label{eq:B-f-n-j-est}
    \|f_{n,j}\|_{C^{k}_\sigma(P|_{\mathcal R_j})} \le \tilde C_{k+1}e^{C_{k+1}\tau}
\end{equation}
for every $n\ge 0$, $j\ge j_0$, and $k\ge 0$. Let $\mathcal F_{n,j,k}$ denote the vector with $\binom{k+3}{k}$ components of the form 
\begin{equation}
    (p^\tau)^{\sigma+i_2}\partial_x^{i_1}\partial_p^{i_2}f_{n,j},\label{eq:mathcalF-component}
\end{equation}
where $i_1+i_2=k$.  We will show inductively that $\tilde C_{k+1}$ and $C_{k+1}$ can be chosen so that
\begin{equation}\label{eq:B-F-est-1}
    \sup_{P^\mathfrak m|_{\mathcal R_j}} |\mathcal F_{n,j,k}|\le \tilde C_{k+1}e^{C_{k+1}\tau},
\end{equation} which would imply \eqref{eq:B-f-n-j-est}.

Orders $k=0$ and $k=1$ are slightly anomalous in our scheme and we handle them first. We require the estimate
\begin{equation}
   \langle p^\tau\rangle^{-2} |\partial_x^{\le 1}\xi^a_{n-1}|+\langle p^\tau\rangle^{-1} |\partial_p\xi^a_{n-1}|\les 1,\label{eq:B-xi-low-est}
\end{equation}
which follows from \eqref{eq:B-assumption-on-xi}, \eqref{eq:B-Psi-low-est-2}, and \eqref{eq:B-Q-low-est-2}. Using \eqref{eq:B-approx-transport}, we compute
\begin{equation*}
    X_{n-1}((p^\tau)^\sigma f_{n,j})= \frac \sigma 2 (p^\tau)^{\sigma-1}(\xi^u_{n-1}+\xi^v_{n-1})f_{n,j}.
\end{equation*}
From \eqref{eq:B-xi-low-est} we then infer
\begin{equation*}
    |X_{n-1}((p^\tau)^\sigma f_{n,j})|\les (p^\tau)^{\sigma+1}f_{n,j}.
\end{equation*}
Let $\tilde\gamma\in\Gamma_{n-1}^{\mathfrak m,\kappa}$ be parametrized by coordinate time $\tau$. Then along $\tilde\gamma$ we have
\begin{equation*}
  \left|  \frac{d}{d\tau}\mathcal F_{n,j,0}\right|\les |\mathcal F_{n,j,0}|, 
\end{equation*}
whence by Gr\"onwall's inequality
\begin{equation*}
    |\mathcal F_{n,j,0}|\les \|\mathcal E\mathring f\|_{C^0_\sigma(P|_{\mathcal R})}\les \|\mathring f\|_{C^0_\sigma(P|_\mathcal{C})}.
\end{equation*}
Next, we compute
\begin{equation*}
    |X_{n-1}((p^\tau)^\sigma\partial_x f_{n,j})|\les (p^\tau)^{\sigma+1}|\partial_x f_{n,j}|+(p^\tau)^\sigma |[X_{n-1},\partial_x]f_{n,j}|.
\end{equation*}
The commutator is estimated using \eqref{eq:B-xi-low-est} by
\begin{equation*}
     |[X_{n-1},\partial_x]f_{n,j}|\les |\partial_x \xi_{n-1}^a||\partial_p f_{n,j}|\les (p^\tau)^2|\partial_p f_{n,j}|.
\end{equation*}
The $p$ derivative of $f_{n,j}$ satisfies
\begin{equation*}
    |X_{n-1}((p^\tau)^{\sigma+1}\partial_p f_{n,j})|\les (p^\tau)^{\sigma+2}|\partial_p f_{n,j}|+(p^\tau)^{\sigma+1} |[X_{n-1},\partial_p]f_{n,j}|,
\end{equation*}
where the commutator is now estimated by
\begin{equation*}
     |[X_{n-1},\partial_p]f_{n,j}|\les |\partial_xf_{n,j}| +|\partial_p \xi_{n-1}^a||\partial_x f_{n,j}|\les |\partial_xf_{n,j}|+ p^\tau|\partial_p f_{n,j}|.
\end{equation*}
Putting these estimates together, we find that 
\begin{equation*}
    \left|\frac{d}{d\tau}\mathcal F_{n,j,1}\right|\les |\mathcal F_{n,j,1}|, 
\end{equation*}
whence again by Gr\"onwall's inequality we conclude a uniform bound
\begin{equation}
    |\mathcal F_{n,j,1}|\les \|\mathring f\|_{C^1_\sigma(P|_\mathcal C)}.\label{eq:B-F-first-order}
\end{equation}

Having now established cases $k=0$ and $k=1$ of \eqref{eq:B-F-est-1}, we now assume \eqref{eq:B-F-est-1} up to order $k-1$. Let $\varphi_{i_1,i_2}$ be a component of $\mathcal F_{n,j,k}$. We adopt the convention that if either $i_1$ or $i_2$ are negative, then $\varphi_{i_1,i_2}$ is interpreted as identically zero. Using \eqref{eq:B-assumption-on-xi} and \eqref{eq:B-xi-low-est}, we estimate
\begin{align}
 \nonumber   \left|X_{n-1}\left(\varphi_{i_1,i_2}\right)\right|&\les (p^\tau)^{\sigma+i_2-1}|\xi^u_{n-1}+\xi^v_{n-1}||\partial_x^{i_1}\partial_p^{i_2}f_{n,j}|+(p^\tau)^{\sigma+i_2} |[X_{n-1},\partial^k]f_{n,j}|\\
    &\les p^\tau |\varphi_{i_1,i_2}|+ (p^\tau)^{\sigma+i_2} |[p^a\partial_a,\partial^{i_1}_x\partial^{i_2}_p]f_{n,j}|+(p^\tau)^{\sigma+i_2} |[\xi^a_{n-1}\partial_{p^a},\partial^{i_1}_x\partial^{i_2}_p]f_{n,j}|.\label{eq:B-intermediate-1}
\end{align}
The first commutator, $[p^a\partial_a,\partial^{i_1}_x\partial^{i_2}_p]f_{n,j}$, vanishes unless $i_1\ge 1$ and therefore consists of terms of the form $\partial_x^{i_1+1}\partial_p^{i_2-1}f_{n,j}$, which implies\begin{equation}
    (p^\tau)^{\sigma+i_2} |[p^a\partial_a,\partial^{i_1}_x\partial^{i_2}_p]f_{n,j}|\les p^\tau |\varphi_{i_1+1,i_2-1}|\les p^\tau |\mathcal F_{n,j,k}|.\label{eq:B-intermediate-2}
\end{equation}
The second commutator can be estimated by 
\begin{equation*}
    |[\xi^a_{n-1}\partial_{p^a},\partial_x^{i_1}\partial_p^{i_2}]f|\les \sum_{\substack{ 1\le j_1+j_2 \\ j_1\le i_1,j_2\le i_2}}|\partial_x^{j_1}\partial_p^{j_2}\xi^a_{n-1}|| \partial^{i_1-j_1}_x\partial_p^{i_2+1-j_2}f_{n,j}|.
\end{equation*}
By inspection, $\partial_x^{j_1}\partial_p^{j_2}\xi^a_{n-1}$ is linear in $\partial_x^{j_1+1}\Psi_{n-1}$, which is the worst behaved term in our inductive hierarchy. Therefore, using again \eqref{eq:B-assumption-on-xi}, we may estimate
\begin{equation*}
    |\partial_x^{j_1}\partial_p^{j_2}\xi^a_{n-1}|\les \left(C(C_k)+\tilde C_{j_1+1}e^{C_{j_1+1}\tau}\right)(p^\tau)^{2-j_2}
\end{equation*}
and therefore infer
\begin{equation*}
    (p^\tau)^{\sigma+i_2} |[\xi^a_{n-1}\partial_{p^a},\partial^{i_1}_x\partial^{i_2}_p]f_{n,j}|\les p^\tau \sum_{\substack{1\le j_1+j_2\\j_1\le i_1,j_2\le i_2}} \left(C(C_k)+\tilde C_{j_1+1}e^{C_{j_1+1}\tau}\right)|\mathcal F_{n,j,k+1-j_1-j_2}|.\label{eq:B-F-intermediate-1}
\end{equation*}
If $j_1<k$, then $\tilde C_{j_1+1}e^{C_{j_1+1}\tau}\le C(C_k)$. If $j_1=k$, then $j_2=0$ and 
\begin{equation}
    \left(C(C_k)+\tilde C_{j_1+1}e^{C_{j_1+1}\tau}\right)|\mathcal F_{n,j,k+1-j_1-j_2}|=\left(C(C_k)+\tilde C_{k+1}e^{C_{k+1}\tau}\right)|\mathcal F_{n,j,1}|\les C(C_k)\tilde C_{k+1}e^{C_{k+1}\tau}, \label{eq:B-intermediate-3}
\end{equation} where we have used \eqref{eq:B-F-first-order}.
Therefore, the sum in \eqref{eq:B-F-intermediate-1} can be estimated by
\begin{equation*}
    \les C(C_k)\left(|\mathcal F_{n,j,k}|+\tilde C_{k+1}e^{C_{k+1}\tau}\right).
\end{equation*}
Putting \eqref{eq:B-intermediate-1}, \eqref{eq:B-intermediate-2}, \eqref{eq:B-F-intermediate-1}, and \eqref{eq:B-intermediate-3} together, we arrive at 
\begin{equation*}
    |X_{n-1}\mathcal F_{n,j,k}|\les C(C_k)p^\tau\left(|\mathcal F_{n,j,k}|+\tilde C_{k+1}e^{C_{k+1}\tau}\right).
\end{equation*}
As before, a simple Gr\"onwall argument now establishes \eqref{eq:B-f-n-j-est} for appropriate choices of $\tilde C_{k+1}$ and $C_{k+1}$. 

Having now established the boundedness of the sequence $f_{n,j}$, we may take the limit $j\to \infty$ (after perhaps passing to a subsequence). This shows the existence of a function $f_n\in C^\infty(P^\mathfrak m|_\mathcal R)$ with $\spt(f_n)\subset H^{\kappa/2}$, satisfying the estimates \eqref{eq:B-f-main-est}, and attaining $\mathring f$ on $P^\mathfrak m|_\mathcal C$. Finally, uniqueness of $f_n$ is immediate since it is constant along the integral curves of $X_{n-1}$.
\end{proof}

\begin{proof}[Proof of \cref{prop:B-main}] We prove first that the sequence iterative sequence $(\Psi_n,Q_n,f_n)$ constructed in \cref{lem:B-iteration} is Cauchy in $C^2\times C^1\times C^1_\sigma$. We claim that if $\ve$ is sufficiently small, then the following estimate holds for every $n\ge 2$:
\vspace{-2mm}\begin{multline}\label{eq:B-Cauchy-main}
       \|\Psi_n-\Psi_{n-1}\|_{C^2(\mathcal R)} +\|Q_n-Q_{n-1}\|_{C^1(\mathcal R)}+  \|f_n-f_{n-1}\|_{C^1_{\sigma}(P|_\mathcal R)}\\ \le \frac 12\left(\|\Psi_{n-1}-\Psi_{n-2}\|_{C^2(\mathcal R)}+\|Q_{n-1}-Q_{n-2}\|_{C^1(\mathcal R)}+\|f_{n-1}-f_{n-2}\|_{C^1_{\sigma}(P|_\mathcal R)}\right).
\end{multline}
    
 Using the mean value theorem and the boundedness of the iterative sequence, we immediately estimate
\begin{equation}
    \|F_{n-1}-F_{n-2}\|_{C^1(\mathcal R)}\les \|\Psi_{n-1}-\Psi_{n-2}\|_{C^2(\mathcal R)}+\|Q_{n-1}-Q_{n-2}\|_{C^1(\mathcal R)}+\|f_{n-1}-f_{n-2}\|_{C^1_\sigma(P|_\mathcal R)}.\label{eq:iterate-Cauchy}
\end{equation}
Using the formula
\begin{equation*}
    (\Psi_n-\Psi_{n-1})(u,v)=\int_0^u\int_0^v (F_{n-1}-F_{n-2})\,dv'du'
\end{equation*}
we readily infer that for $\ve$ sufficiently small, $\Psi_n-\Psi_{n-1}$ and $\partial_a^i(\Psi_n-\Psi_{n-1})$ for $a\in\{u,v\}$ and $i\in\{1,2\}$ are bounded by an arbitrarily small multiple of the right-hand side of \eqref{eq:B-Cauchy-main}. To estimate the mixed second partial derivative, we simply use the fundamental theorem of calculus to bound
\begin{equation*}
    \|\Psi_{n-1}-\Psi_{n-2}\|_{C^1(\mathcal R)}+\|Q_{n-1}-Q_{n-2}\|_{C^0(\mathcal R)}+\|\mathcal M_{n-1}-\mathcal M_{n-2}\|_{C^0(\mathcal R)}
\end{equation*}
by an arbitrarily small multiple of the right-hand side of \eqref{eq:B-Cauchy-main} and then use the mean value theorem to estimate 
\begin{align*}
      |\partial_u\partial_v(\Psi_{n}-\Psi_{n-1})|&\le \|F_{n-1}-F_{n-2}\|_{C^0(\mathcal R)}\\
      &\les  \|\Psi_{n-1}-\Psi_{n-2}\|_{C^1(\mathcal R)}+\|Q_{n-1}-Q_{n-2}\|_{C^0(\mathcal R)}+\|\mathcal M_{n-1}-\mathcal M_{n-2}\|_{C^0(\mathcal R)}.
\end{align*}
The argument for bounding $\|Q_n-Q_{n-1}\|_{C^1(\mathcal R)}$ is essentially the same and is omitted.

To estimate $f_n-f_{n-1}$, we examine the quantity
\begin{equation*}
    \mathcal F_n\doteq (p^\tau)^\sigma \left(  f_n-f_{n-1},\partial_x(f_n-f_{n-1}),p^\tau\partial_p(f_n-f_{n-1})\right)
\end{equation*}
along integral curves of $X_{n-1}$. First, note that $\mathcal F_n\ne 0$ only along curves in $\Gamma^{\mathfrak m,\kappa}_{n-1}\cup\Gamma^{\mathfrak m,\kappa}_{n-2}$. By \cref{lem:B-geodesics} (and choosing $\ve$ perhaps smaller), any value $\tilde\gamma_{n-2}(s)$ for a curve $\tilde\gamma_{n-2}\in\Gamma^{\mathfrak m,\kappa}_{n-2}$ can be realized as an initial value $\tilde\gamma_{n-1}(0)$ for a curve $\tilde\gamma_{n-1}\in\Gamma^{\mathfrak m,\kappa/2}_{n-1}$. Therefore, it suffices to observe that the following estimate holds along any curve $\tilde\gamma_{n-1}\in\Gamma^{\mathfrak m,\kappa/2}_{n-1}$:
\begin{equation*}
    \left|\frac{d}{d\tau}\mathcal F_n\right|\les |\mathcal F_n|+\|\Psi_{n-1}-\Psi_{n-2}\|_{C^2(\mathcal R)}+\|Q_{n-1}-Q_{n-2}\|_{C^1(\mathcal R)}+\|f_{n-1}-f_{n-2}\|_{C^1_{\sigma}(P|_\mathcal R)},
\end{equation*}
which is obtained by simply differentiating $\mathcal F_n$ and using the estimates proved in \cref{lem:B-iteration}. Therefore, since $\mathcal F_n$ vanishes along $P^\mathfrak m|_\mathcal C$, Gr\"onwall's inequality implies 
\begin{equation*}
    |\mathcal F_n|\les \ve \left(\|\Psi_{n-1}-\Psi_{n-2}\|_{C^2(\mathcal R)}+\|Q_{n-1}-Q_{n-2}\|_{C^1(\mathcal R)}+\|f_{n-1}-f_{n-2}\|_{C^1_{\sigma}(P|_\mathcal R)}\right).
\end{equation*}
After choosing $\ve$ sufficiently small, the proof of \eqref{eq:iterate-Cauchy} is complete. 

Therefore, $(\Psi_n,Q_n,f_n)$ converges to a solution $(\Psi,Q,f)$ of the system \eqref{eq:B-1}--\eqref{eq:B-3} in $C^2\times C^1\times C^1$, which is moreover $C^\infty$ smooth by the higher order estimates proved in \cref{lem:B-iteration}. Uniqueness of the solution can be proved along the same lines as the proof of the estimate \eqref{eq:B-Cauchy-main} and is omitted.  
\end{proof}

\subsection{Proof of local well posedness for the Einstein--Maxwell--Vlasov system}\label{sec:proof-of-LWP}

In this section, we prove \cref{prop:char-IVP-Vlasov}, local well-posedness for the Einstein--Maxwell--Vlasov system in small characteristic rectangles. The proof has 3 steps: In the first step, we solve the wave equations \eqref{eq:r-wave} and \eqref{eq:Omega-wave}, the \emph{ingoing} Maxwell equation \eqref{eq:Max-u}, and the Maxwell--Vlasov equation \eqref{eq:SS-MV} using \cref{prop:B-main}. In order to directly quote \cref{prop:B-main}, we in fact consider a \emph{renormalized} system that fixes the location of the mass shell to the fixed family of hyperbolas $\tilde p^u\tilde p^v=\mathfrak m^2$. In step 2, we show that the \emph{outgoing} Maxwell equation holds as a result of conservation law \eqref{eq:SS-N-div}, which follows from the Maxwell--Vlasov equation \eqref{eq:SS-MV}. Finally, in step 3, we show that Raychaudhuri's equations \eqref{eq:Ray-u} and \eqref{eq:Ray-v} hold, using now the Bianchi identities \eqref{eq:Bianchi-1} and \eqref{eq:Bianchi-2}, which again follow from \eqref{eq:SS-MV}. Steps 2 and 3 may be thought of as \emph{propagation of constraints}, as they require the relevant equations to hold on initial data.

\begin{proof}[Proof of \cref{prop:char-IVP-Vlasov}]

\ul{Step 1.} Consider the wave-transport system
\begin{align}
 \label{eq:renorm-1}   \partial_u\partial_v r &= -\frac{\Omega^2}{4r}-\frac{\partial_ur \partial_vr}{r}+\frac{\pi r\Omega^2}{4}\int_{\tilde p^u\tilde p^v\ge \mathfrak m^2}\tilde p^u\tilde p^v \tilde f(u,v,\tilde p^u,\tilde p^v)\,d\tilde p^ud\tilde p^v,\\
    \nonumber \partial_u\partial_v{\log\Omega^2}&=\frac{\Omega^2}{2r^2}+\frac{2\partial_ur\partial_vr}{r^2}+\frac{\pi\Omega^2}{2} \int_{\tilde p^u\tilde p^v\ge \mathfrak m^2}\tilde p^u\tilde p^v \tilde f(u,v,\tilde p^u,\tilde p^v)\,d\tilde p^ud\tilde p^v\\
     &\quad -\frac{\pi\Omega^2}{2}\int_{\tilde p^u\tilde p^v\ge \mathfrak m^2}(\tilde p^u\tilde p^v-\mathfrak m^2)\tilde f(u,v,\tilde p^u,\tilde p^v)\,d\tilde p^ud\tilde p^v,\\
         \partial_u Q&= -\frac{\pi}{2}\mathfrak e r^2\Omega \int_{\tilde p^u\tilde p^v\ge \mathfrak m^2}\tilde p^u \tilde f(u,v,\tilde p^u,\tilde p^v)\,d\tilde p^ud\tilde p^v,\\
     \tilde X\tilde f&=0, \label{eq:renorm-4} 
\end{align}
where
\begin{multline} \nonumber
    \tilde X \doteq \tilde p^u\partial_u+\tilde p^v\partial_v-\left(\partial_u{\log\Omega} (\tilde p^u)^2-\partial_v{\log\Omega} \tilde p^u\tilde p^v+\frac{2\partial_vr}{r}(\tilde p^u\tilde p^v-\mathfrak m^2) + \mathfrak e\frac{\Omega Q}{r^2}\tilde p^u\right)\partial_{\tilde p^u}\\-\left(\partial_v{\log\Omega} (\tilde p^v)^2-\partial_u{\log\Omega} \tilde p^u\tilde p^v+\frac{2\partial_ur}{r}(\tilde p^u\tilde p^v-\mathfrak m^2) - \mathfrak e\frac{\Omega Q}{r^2}\tilde p^v\right)\partial_{\tilde p^v},
\end{multline}
with initial data $\mathring \Psi=(\log\mathring r,\log\mathring\Omega^2)$, $\mathring Q$, and $\mathring{\tilde f}(u,v,\tilde p^u,\tilde p^v)=\mathring f(u,v,\mathring\Omega^{-1}\tilde p^u,\mathring\Omega^{-1}\tilde p^v)$. Since $\log\mathring\Omega^2$ is bounded on $\mathcal C$ and either $\mathfrak m>0$ or $\ell\ge c_\ell$ on $\spt(\mathring f)$, there exists a $\kappa>0$ such that $\spt(\mathring{\tilde f})\subset H^\kappa$. Furthermore, the structural conditions \eqref{eq:B-assumption-on-xi} and \eqref{eq:tangency} are easily verified for $\tilde X$, so \cref{prop:B-main} produces a unique local smooth solution to \eqref{eq:renorm-1}--\eqref{eq:renorm-4} if $\ve_\loc$ is chosen sufficiently small. Making the change of variables $\tilde p^u\mapsto \Omega p^u$ and $\tilde p^v\mapsto \Omega p^v$, defining $f(u,v,p^u,p^v)=\tilde f(u,v,\Omega p^u,\Omega p^v)$, and observing that 
\begin{equation*}
    Xf=\Omega^{-1} \tilde X\tilde f=0,
\end{equation*}
we have obtained a unique local smooth solution $(r,\Omega^2,Q,f)$ for the system \eqref{eq:r-wave}, \eqref{eq:Omega-wave}, \eqref{eq:Max-u}, and \eqref{eq:SS-MV} on $\mathcal R$ which extends the initial data. 

\ul{Step 2.} We first aim to prove \eqref{eq:SS-N-div} using only \eqref{eq:SS-MV}. At this point, one could apply \cref{prop:SS-equiv} and \eqref{eq:Div-SS} to derive \eqref{eq:SS-N-div}, but we give now a direct proof.

First, using the definitions \eqref{eq:Nu} and \eqref{eq:Nv}, we have
\begin{align}
\nonumber\partial_u\left(\frac{r^2\Omega^2}{\pi}N^u\right)+\partial_v\left(\frac{r^2\Omega^2}{\pi}N^v\right)&=\partial_u\int_0^\infty \int_{\mathfrak m^2/(\Omega^2p^v)}^\infty r^2\Omega^4 p^uf\,dp^udp^v+\partial_v\int_0^\infty \int_{\mathfrak m^2/(\Omega^2p^u)}^\infty r^2\Omega^4 p^vf\,dp^vdp^u\\
\nonumber&= \int_{\Omega^2p^up^v\ge \mathfrak m^2}\left(2r\partial_ur\Omega^4p^uf+2r^2\Omega^4\partial_u{\log\Omega^2}p^uf+r^2\Omega^4p^u\partial_uf\right) dp^udp^v\\
\nonumber&\quad + \int_{\Omega^2p^up^v\ge \mathfrak m^2}\left(2r\partial_vr\Omega^4p^vf+2r^2\Omega^4\partial_v{\log\Omega^2}p^vf+r^2\Omega^4p^v\partial_vf\right) dp^udp^v\\
    &\quad +\int_0^\infty  \frac{\mathfrak m^4 r^2}{(p^v)^2}\partial_u{\log\Omega^2}f \,dp^v+\int_0^\infty  \frac{\mathfrak m^4 r^2}{(p^u)^2}\partial_v{\log\Omega^2}f \,dp^u.\label{eq:B-2.9-1}
\end{align}
 Adding both terms involving spatial derivatives of $f$ and using \eqref{eq:SS-MV} yields
\begin{align}
     \nonumber\int_{\Omega^2p^up^v\ge \mathfrak m^2}r^2\Omega^4\left(p^u\partial_uf+p^v\partial_v f\right)\,dp^udp^v&=\int_0^\infty\int^\infty_{\mathfrak m^2/(\Omega^2p^v)}r^2\Omega^4 \Xi^u\partial_{p^u}f\,dp^udp^v\\
     &\quad+\int_0^\infty\int_{\mathfrak m^2/(\Omega^2p^u)}^\infty r^2\Omega^4\Xi^v\partial_{p^v}f\,dp^vdp^u,\label{eq:B-f-integral-1}
\end{align}
where
\begin{align*}
    \Xi^u&\doteq\partial_u{\log\Omega^2}(p^u)^2+\frac{2\partial_vr}{r\Omega^2}(\Omega^2 p^up^v-\mathfrak m^2)+\mathfrak e\frac{Q}{r^2}p^u, \\
    \Xi^v&\doteq \partial_v{\log\Omega^2}(p^v)^2+\frac{2\partial_ur}{r\Omega^2}(\Omega^2 p^up^v-\mathfrak m^2)-\mathfrak e\frac{Q}{r^2}p^v.
\end{align*}
Integrating the first term on the right-hand side of \eqref{eq:B-f-integral-1} by parts, we find
\begin{align}
   \int_0^\infty\int^\infty_{\mathfrak m^2/(\Omega^2p^v)}r^2\Omega^4 \Xi^u\partial_{p^u}f\,dp^udp^v &=-\int_{\Omega^2 p^up^v\ge \mathfrak m^2}\left(2r^2\Omega^4\partial_u{\log\Omega^2}p^u+2r\partial_vr \Omega^4p^v+\mathfrak e\Omega^4 Q\right) f\,dp^udp^v\nonumber\\
    &\quad -\int_0^\infty \left(\frac{\mathfrak m^4r^2}{(p^v)^2}\partial_u{\log\Omega^2}+\mathfrak e \frac{\mathfrak m^2\Omega^2}{p^v}Q\right) f\,dp^v,\label{eq:B-2.9-2}
\end{align} where $f$ is evaluated at $p^u=\mathfrak m^2/(\Omega^2p^v)$,
and for the second term,
\begin{align}
   \int_0^\infty\int^\infty_{\mathfrak m^2/(\Omega^2p^u)}r^2\Omega^4 \Xi^v\partial_{p^v}f\,dp^vdp^u &=-\int_{\Omega^2 p^up^v\ge \mathfrak m^2}\left(2r^2\Omega^4\partial_v{\log\Omega^2}p^v+2r\partial_ur \Omega^4p^u-\mathfrak e\Omega^4 Q\right) f\,dp^udp^v\nonumber\\
    &\quad -\int_0^\infty \left(\frac{\mathfrak m^4r^2}{(p^u)^2}\partial_v{\log\Omega^2}-\mathfrak e \frac{\mathfrak m^2\Omega^2}{p^u}Q\right) f\,dp^u,\label{eq:B-2.9-3}
\end{align}
where $f$ is evaluated at $p^v=\mathfrak m^2/(\Omega^2p^u)$. Combining \eqref{eq:B-2.9-1}--\eqref{eq:B-2.9-3} yields \eqref{eq:SS-N-div} after noting that 
\begin{equation*}
    \int_0^\infty \mathfrak e\frac{\mathfrak m^2\Omega^2}{p^v}Qf\,dp^v= \int_0^\infty \mathfrak e\frac{\mathfrak m^2\Omega^2}{p^u}Qf\,dp^u.
\end{equation*}

We can now derive the ingoing Maxwell equation \eqref{eq:Max-v}. By \eqref{eq:Max-u}, we have
\begin{equation*}
    Q(u,v)=Q(U_0,v)-\int_{U_0}^u\tfrac 12\mathfrak er^2\Omega^2 N^v\,du'
\end{equation*}
on $\mathcal R$. We then derive
\begin{equation*}
     \partial_vQ(u,v)=\partial_vQ(U_0,v)-\int_{U_0}^u\partial_v( \tfrac 12\mathfrak er^2\Omega^2N^v)\,du'= \partial_vQ(U_0,v)+\int_{U_0}^u\partial_u(\tfrac 12\mathfrak er^2\Omega^2N^u)\,du'= \tfrac 12\mathfrak er^2\Omega^2N^v(u,v),
\end{equation*}
where in the final equality we used the fundamental theorem of calculus and the assumption that \eqref{eq:Max-v} holds on $\{U_0\}\times[V_0,V_1]$.

\ul{Step 3.} By a lengthy calculation which is very similar to the one performed in step 2, one may use \eqref{eq:SS-MV} to derive the Bianchi identities in the form \eqref{eq:Bianchi-1} and \eqref{eq:Bianchi-2}. Using the Maxwell equations \eqref{eq:Max-u} and \eqref{eq:Max-v}, this implies the Bianchi identities in the form \eqref{eq:Bianchi-general-1} and \eqref{eq:Bianchi-general-2}, where
\begin{equation*}
    \mathbf T^{uu}=T^{uu},\quad
    \mathbf T^{uv}=T^{uv}+\frac{Q^2}{\Omega^2 r^4},\quad
    \mathbf T^{vv}= T^{vv},\quad
\mathbf S=S+\frac{Q^2}{2r^4}.
\end{equation*}
By another lengthy calculation, using now also the wave equations \eqref{eq:r-wave} and \eqref{eq:Omega-wave}, one can derive the pair of identities
\begin{align*}
    \partial_v\left(r\partial_u^2r-r\partial_ur\partial_u{\log\Omega^2}+\frac 14 r^2\Omega^4 \mathbf T^{vv}\right)&=0,\\
    \partial_u\left(r\partial_v^2r-r\partial_vr\partial_v{\log\Omega^2}+\frac 14 r^2\Omega^4 \mathbf T^{uu}\right)&=0.
\end{align*}
These identities, together with the assumption that \eqref{eq:Ray-u} holds on $[U_0,U_1]\times\{V_0\}$ and \eqref{eq:Ray-v} holds on $\{U_0\}\times[V_0,V_1]$, prove that \eqref{eq:Ray-u}  and \eqref{eq:Ray-v} hold throughout $\mathcal R$. \end{proof}

\printbibliography[heading=bibintoc] 
\end{document}